\let\PLAINthebibliography\thebibliography
\renewcommand\thebibliography[1]{
  \PLAINthebibliography{#1}
  \setlength{\parskip}{0.5pt}
  \setlength{\itemsep}{0.5pt plus .3ex}
}
\newcommand{\CartesianSpaces}{\mathrm{CartSp}}
\newcommand{\FrechetManifolds}{\mathrm{FrMfd}}
\newcommand{\DiffeologicalSpaces}{\mathrm{DflSp}}
\newcommand{\SmoothManifolds}{\mathrm{SmthMfd}}
\newcommand{\SmoothSets}{\mathrm{SmthSet}}
\newcommand{\SuperSmoothSets}{\mathrm{Sup}\SmoothSets}
\newcommand{\FormalSmoothSets}{\mathrm{Th}\SmoothSets}
\newcommand{\ThickenedCartesianSpaces}{\mathrm{Th}\CartesianSpaces}
\newcommand{\SuperCartesianSpaces}{\mathrm{SupCartSp}}
\newcommand{\ev}
{\mathrm{ev}}
\newcommand{\loc}
{\mathrm{loc}}
\newcommand{\var}
{\mathrm{var}}
\setlist{nosep} 
\DeclareMathAlphabet{\mathpzc}{OT1}{pzc}{m}{it} 
\newcommand\mathscr[1]{\scalebox{1.1}{$\mathpzc{#1}$}}
\definecolor{darkblue}{rgb}{0.05,0.25,0.65}
\definecolor{darkgreen}{RGB}{20,140,10}
\definecolor{lightgray}{rgb}{0.9,0.9,0.9}
\definecolor{darkorange}{RGB}{200,100,5}
\definecolor{darkyellow}{rgb}{.91,.91,0}
\definecolor{lightolive}{RGB}{189,183,107}
\definecolor{greenii}{RGB}{20,140,10}
\definecolor{orangeii}{RGB}{200,100,5}
\newcommand\makebig[2]{%
  \@xp\newcommand\@xp*\csname#1\endcsname{\bBigg@{#2}}%
  \@xp\newcommand\@xp*\csname#1l\endcsname{\@xp\mathopen\csname#1\endcsname}%
  \@xp\newcommand\@xp*\csname#1r\endcsname{\@xp\mathclose\csname#1\endcsname}%
}
\DeclareRobustCommand{\coprod}{\mathop{\text{\fakecoprod}}}
\newcommand{\fakecoprod}{%
  \sbox0{$\prod$}%
  \smash{\raisebox{\dimexpr.9625\depth-\dp0}{\scalebox{1}[-1]{$\prod$}}}%
  \vphantom{$\prod$}%
}
\DeclareFontFamily{OMX}{MnSymbolE}{}
\DeclareSymbolFont{mnomx}{OMX}{MnSymbolE}{m}{n}
\DeclareFontShape{OMX}{MnSymbolE}{m}{n}{
    <-6>  MnSymbolE5
   <6-7>  MnSymbolE6
   <7-8>  MnSymbolE7
   <8-9>  MnSymbolE8
   <9-10> MnSymbolE9
  <10-12> MnSymbolE10
  <12->   MnSymbolE12}{}
\theoremstyle{italics}
\newtheorem{theorem}{Theorem}[section]
\newtheorem{lemma}[theorem]{Lemma}
\newtheorem{proposition}[theorem]{Proposition}
\newtheorem{corollary}[theorem]{Corollary}
\theoremstyle{definition}
\newtheorem{definition}[theorem]{Definition}
\newtheorem{example}[theorem]{Example}
\newtheorem{remark}[theorem]{Remark}
\newtheorem{literature}[theorem]{Literature}
\renewcommand{\emph}{\textit}
\newcommand{\id}{\mathrm{id}}   			
\newcommand{\CB}{\mathcal{B}}
\newcommand{\CC}{\mathcal{C}}
\newcommand{\CD}{\mathcal{D}}
\newcommand{\CF}{\mathcal{F}}
\newcommand{\CG}{\mathcal{G}}
\newcommand{\CH}{\mathcal{H}}
\newcommand{\CI}{\mathcal{I}}
\newcommand{\CJ}{\mathcal{J}}
\newcommand{\CK}{\mathcal{K}}
\newcommand{\CL}{\mathcal{L}}
\newcommand{\CN}{\mathcal{N}}
\newcommand{\CO}{\mathcal{O}}
\newcommand{\CP}{\mathcal{P}}
\newcommand{\CR}{\mathcal{R}}
\newcommand{\CT}{\mathcal{T}}
\newcommand{\CX}{\mathcal{X}}
\newcommand{\CZ}{\mathcal{Z}}
\newcommand{\CE}{\mathcal{E}}
\newcommand{\frg}{\mathfrak{g}}				
\newcommand{\FR}{\mathbbm{R}}     			
\newcommand{\NN}{\mathbbm{N}}     			
\newcommand{\DD}{\mathbbm{D}}     			
\newcommand{\dd}{\mathrm{d}}     			
\newcommand{\pr}{\mathsf{pr}}     			
\newcommand{\comment}[1]{}     				
\def\tyng(#1){\hbox{\tiny$\yng(#1)$}}			
\def\tyoung(#1){\hbox{\tiny$\young(#1)$}}			
\newcommand{\beq}{\begin{eqnarray}}
	\newcommand{\eeq}{\end{eqnarray}}
\definecolor{outrageousorange}{rgb}{1.0, 0.43, 0.29}
\newcommand{\om}{\omega}
\newcommand{\epsi}{\epsilon}
\newcommand{\nn}{\nonumber}
\def\R{{R}}
\begin{document}

\title{\vspace{-1cm} \Large\bf 
Field Theory via Higher 
Geometry I:  \,
Smooth Sets of Fields
}
\author{
Grigorios Giotopoulos${}^\ast$  
\qquad 
Hisham Sati${}^{\ast \, \dagger}$  
}
\date{}

\maketitle

\begin{abstract}
Most modern theoretical considerations of the physical world suggest that nature is at a minimum:
(1) field-theoretic,
(2) smooth,
(3) local, 
(4) gauged,
(5) containing fermions,
and last but not least:
(6) non-perturbative.
Tautologous as this may sound to experts of the field, it is remarkable that the mathematical notion of geometry which reflects {\it all} of these aspects -- namely, 
as we will explain: ``{\it supergeometric homotopy theory}'' -- has received little attention even by mathematicians and remains unknown to most 
physicists. Elaborate algebraic machinery is known for {\it perturbative} field theories both at the classical and quantum level, but in order to tackle the deep open questions of the subject, 
these will need to be lifted to a global geometry of physics. Prior to considering any notion of non-perturbative quantization procedure, by necessity, this must first be accomplished at the classical and pre-quantum level.

\smallskip 
Our aim in this series is, first, to introduce inclined physicists to this theory, second to fill mathematical gaps in the existing literature, 
and finally to rigorously develop the full power of supergeometric homotopy theory and apply it to the analysis of fermionic (not {\it necessarily} 
super-symmetric) field theories. Secondarily, this will also lead to a streamlined and rigorous perspective of the type that we hope would also be desirable to mathematicians.

\smallskip 
In this first part, we explain how classical bosonic Lagrangian field theory (variational Euler-Lagrange theory) finds a natural home
in the ``topos of smooth sets'', thereby neatly setting the scene for the higher supergeometry discussed in later parts of the series.
This introductory material will be largely known to a few experts but has never been comprehensively laid out before. 
A key technical point we make is to regard jet bundle geometry systematically in smooth sets instead of just its subcategories 
of diffeological spaces or even Fr{\'e}chet manifolds -- or worse simply as a formal object. Besides being more transparent and powerful, it is only on this backdrop that 
a reasonable supergeometric jet geometry exists, needed for satisfactory discussion of any field theory with fermions.

\end{abstract}

\vspace{-.7cm}

 \begin{center}
 \begin{minipage}{12cm}
 {\small 
   \tableofcontents
 }
 \end{minipage}
\end{center}

\vfill

\hrule
\vspace{2pt}

{
\footnotesize
\noindent
\def\arraystretch{1}
\tabcolsep=0pt
\begin{tabular}{ll}
${}^*$\,
&
Mathematics, Division of Science; and
\\
&
Center for Quantum and Topological Systems,
\\
&
NYUAD Research Institute,
\\
&
New York University Abu Dhabi, UAE.  
\end{tabular}
\hfill
\raisebox{-10pt}{
\includegraphics[width=3cm]{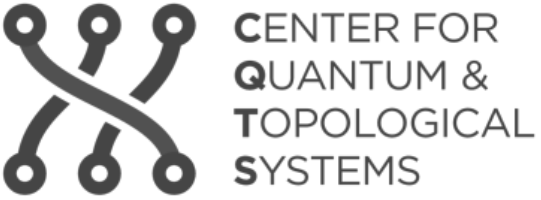}
}

\vspace{1mm} 
\noindent ${}^\dagger$The Courant Institute for Mathematical Sciences, NYU, NY

\vspace{.1cm}

\noindent
The authors acknowledge the support by {\it Tamkeen} under the 
{\it NYU Abu Dhabi Research Institute grant} {\tt CG008}.
}

\newpage

\section{Introduction} 

\noindent
{\bf The open problem.} The big open problem of contemporary fundamental physics is the formulation of {\it non-}perturbative 
strongly coupled systems (e.g. \cite{BakulevShirkov10}\cite[\S 4.1]{HollandsWald15}), which at the classical and pre-quantum level largely means the 
formulation of {\it global} or {\it integrated} and in any case {\it topologically nontrivial} structures, 
but still equipped with analytic/differential hence
{\it smooth structure}.

\smallskip 
However, much or even most of the existing literature tends to focus on perturbative, namely infinitesimal (aka ``formal'') approximations 
to the full global geometries, where everything can still be described by {\it algebra} alone. At this algebraic level, it is 
fairly straightforward to add in all the relevant bells and whistles, which leads to the consideration of ``BV-BRST complexes'' 
given by differential super/graded algebras (see \cite{FR12}\cite{Li17}\cite{Mnev17}\cite{CSCS23}). Even in their sophisticated 
formulations, invoking homotopy Lie (``$L_\infty$'') and algebr{\it oid}-structure (see e.g. \cite{CostelloWilliam1}\cite{CostelloWilliam2}\cite{JRSW19}), 
these constructions see just an infinitesimal slice in the full configuration space of physical 
fields, enough to do perturbation theory but generally failing to see the core of the physical system. 

\smallskip 
The core example here 
is quantum chromodynamics (QCD, Yang-Mills theory coupled to fermionic quark matter), where perturbation theory is 
of no value for deriving its ordinary hadronic bound states that constitute the tangible world around us. Due to the lack of a comprehensive 
mathematical formulation for non-perturbative QCD -- pronounced as a mathematical `'Millennium Problem'' by the 
Clay Math Institute (cf. \cite[\S 9.1]{Shnir05}\cite{Chatterjee19}\cite{RobertsSchmidt20}), particle physicists currently mostly abandon any attempt to 
understand the physics analytically and 
instead resort to computer experiments (``lattice QCD''). However, a non-perturbative formulation 
of quantum QCD ought to exist, involving the {\it global geometric} and topologically non-trivial structures of hadrodynamics, such as
its WZW-terms, instantons, and Skyrmions.  Assuming that this ought to be achieved by a rigorous quantization process, it is necessary to first construct a rigorous and non-perturbative formulation of the classical theory.

\smallskip 
Here we are concerned with laying out missing modern mathematical foundations towards this goal.
Throughout this text and the rest of the series, we will intend to develop and formalize the theory rigorously at the abstract level 
and yet, in parallel, show in detail how this implies the usual formulas and manipulations of the physics literature.
While our main goal is not to dwell on a long list of examples, we have chosen to illustrate the various 
concepts via particularly simple examples.

\medskip

\noindent
{\bf The geometry of physics.}
The modern mathematical machinery that serves to reflect the above notions -- even while it may 
superficially seem unfamiliar or even esoteric -- is actually quite close to physical intuition. We promise that the inclined physicist
who bears with us just a little will discover with us the most gratifying calculus for modern physics.

\medskip 
To start with, while it is {\it topos theory} (e.g. \cite{MacLaneMoerdijk94}\cite{Johnstone02}\cite{Nesin}\footnote{ We only assume that the reader has basic knowledge of the basic concepts of category theory -- categories, functors, (co)limits --, such as provided in the exposition \cite[\S 2]{Geroch85} which is aimed at mathematical physicists. An exposition of further details may be found in lecture notes such as \cite{Awodey06}\cite{Nesin}. }) that underlies the conceptualization of all of the above aspects of physical geometry,
we rush to share a well-kept secret: The core idea of topos-theoretic geometry is profoundly {\it physical} and secretly known in many 
aspects to physicists, even if under different names.  
Namely, the core idea of topos-theoretic geometry is that in order {\it to know a geometric space} -- such as to know the spacetime of the 
observable universe  -- is ({\it not} to define a set of points equipped with a long list of extra structures, etc. but) 
{\it to probe it} --  quite in the sense of experimental physics, and even quite in the sense of string theory where the 
universe appears as whatever it is that {\it probe-branes} see that traverse it.
In particular, the {\it smooth sets} in our title are just the spaces probe-able by smooth coordinate charts $\FR^k$; we start 
in \cref{IntuitiveSmoothSetsSection} with an exposition of this powerful perspective. 

\smallskip 
In subsequent parts of the series, it will be pleasing to see how a simple variation of the collection of probes (such as from Cartesian
spaces $\FR^k$ to Cartesian {\it super}spaces $\FR^{k\vert q}$) bootstraps with ease the richest notions of physical geometry for us.

\medskip

\noindent
{\bf The power of smooth sets.}
The point here is that the geometry of {\it field theory} is all about {\it smooth mapping spaces} -- the field spaces -- which are 
outside the scope of traditional differential geometry. Under the (awkward) assumption that spacetime is compact, these have been 
traditionally been discussed with their structure of infinite-dimensional {\it Fr{\'e}chet manifolds}; As will be made clear in this text, this approach carries heavy analytical buggage which is practically unnecessary for most field theoretic purposes. More recently their structure 
as {\it diffeological spaces} has gained more attention, but many constructions remain 
cumbersome with these restricted tools, and most importantly do not naturally generalize to the fermionic setting.
Nevertheless, all of these approaches are {\it faithfully} subsumed in the topos of smooth sets (see Lit. \ref{LiteratureSmoothSpaces}), where field-theoretic constructions 
find a more natural home, as we shall see:
\vspace{-2mm}
$$
  \begin{tikzcd}[
    row sep=1pt,
    column sep=20pt
  ]
    \mathclap{
    \scalebox{.7}{
      \bf
      \color{darkblue}
      \bf
      \def\arraystretch{.9}
      \begin{tabular}{c}
        Cartesian 
        \\
        spaces
      \end{tabular}
    }
    }
    &&
    \mathclap{
    \scalebox{.7}{
      \bf
      \color{darkblue}
      \bf
      \def\arraystretch{.9}
      \begin{tabular}{c}
        smooth
        \\
        manifolds
      \end{tabular}
    }
    }
    &&
    \mathclap{
    \scalebox{.7}{
      \bf
      \color{darkblue}
      \bf
      \def\arraystretch{.9}
      \begin{tabular}{c}
        Fr{\'e}chet
        \\
        manifolds
      \end{tabular}
    }
    }
    &&
    \mathclap{
    \scalebox{.7}{
      \bf
      \color{darkblue}
      \bf
      \def\arraystretch{.9}
      \begin{tabular}{c}
        diffeological
        \\
        spaces
      \end{tabular}
    }
    }
    &&
    \mathclap{
    \scalebox{.7}{
      \bf
      \color{darkorange}
      \bf
      \def\arraystretch{.9}
      \begin{tabular}{c}
        smooth
        \\
        sets
      \end{tabular}
    }
    }
    \\
    \CartesianSpaces
    \ar[
      rr,
      hook
    ]
    &&
    \SmoothManifolds
    \ar[
      rr,
      hook
    ]
    &&
    \FrechetManifolds
    \ar[
      rr,
      hook
    ]
    &&
    \DiffeologicalSpaces
    \ar[
      rr,
      hook
    ]
    &&
    \SmoothSets
    \\[+4pt]
    \ar[
      rrrrrrrr,
      hook,
      "{
      \scalebox{.7}{\color{gray}
        fully faithful inclusions
        of ever more general 
        categories of smooth spaces 
      }
      }"{description}
    ]
    &&&&&&&&
    {}
  \end{tikzcd}
$$

\newpage

\noindent
{\bf Smooth sets for Lagrangian field theory.}
A good geometry of field theory should make the differential geometry of the field space come out
formally just the way one would expect if it {\it were} a finite-dimensional smooth manifold. For example,

\begin{itemize}[leftmargin=.5cm]
\item  the (off-shell) {\it field space} (Def. \ref{SectionsSmoothSet}) should be a {\it smooth space} of smooth sections of a smooth {\it field bundle} $F$ over spacetime $M$:
\vspace{-1mm} 
\begin{equation}
  \label{FieldSpaceInIntroduction}
  \scalebox{.7}{
    \color{darkblue}
    \bf
    Field space
  }
  \;\;
  \mathcal{F}
  \;=\;
  \Gamma_M( F )
  \;\;
  \scalebox{.7}{
    \color{darkblue}
    \bf
    \def\arraystretch{.85}
    \begin{tabular}{c}
      Smooth space of
      \\
      smooth sections of
      \\
      smooth field bundle
    \end{tabular}
  }
\end{equation}
\vspace{-1.5mm}

\item a {\it Lagrangian density} $\CL$ should be a smooth map \eqref{NonLocalSmoothLagrangian} from fields to smooth differential forms on spacetime:
\vspace{-7mm} 
\begin{equation}
  \label{LagrangianDensityInIntroduction}
  \begin{tikzcd}[
    row sep=-4pt, column sep=large
  ]
    \ar[
      rr,
      phantom,
      "\raisebox{-20pt}{{
        \scalebox{.7}{
          \color{darkgreen}
          \bf
          Lagrangian density \phantom{aaaa}
        }
      }}"
    ]
    &&
    {}
    \\
    \mathllap{
      \scalebox{.7}{
        \color{darkblue}
        \bf
        Field space
      }    
      \;\;
    }
    \mathcal{F}
    \ar[
      rr,
      "{ \mathcal{L} }"
      {swap}
    ]
    &&
   \Omega^{d}(M)
    \mathrlap{
      \scalebox{.7}{
        \color{darkblue}
        \bf
        \def\arraystretch{.9}
        \begin{tabular}{c}
          Differential forms
          \\
          on spacetime
        \end{tabular}
      }    
      \;\;
      }
    \\
    \phi &\longmapsto& \mathcal{L}(\phi)
  \end{tikzcd}
\end{equation}

\item
and a {\it local Lagrangian density} should smoothly factor (Lem. \ref{LocalLagrangianisSmooth}) through the smooth jet bundle via 
a smooth {\it jet prolongation} $j^\infty$ \eqref{smoothjetprolongation}
\vspace{-4mm} 
\begin{equation}
  \begin{tikzcd}[column sep=huge]
    \mathllap{
      \scalebox{.7}{
        \color{darkblue}
        \bf
        Field space
      }
      \;\;\;
    }
    \mathcal{F}
    \ar[
      rr,
      "{
        j^\infty
      }",
      "{
        \scalebox{.7}{
          \color{darkgreen}
          \bf
          extract field derivatives
        }
      }"{yshift=9pt}
    ]
    \ar[  
      rrrr,
      bend right=12pt,
      "{ \mathcal{L} }"
      {description}
    ]
    &&
    \Gamma_M
    \big(
      J^\infty_M
      F
    \big)
    \ar[
      rr,
      "{
        \scalebox{.7}{
          \color{darkgreen}
          \bf
        Lagrangian density bundle map
        }
      }"{yshift=8pt}
    ]
    &&
    \Omega^{d}(M)
    \mathrlap{
      \scalebox{.7}{
        \color{darkblue}
        \bf
        \def\arraystretch{.9}
        \begin{tabular}{c}
          Differential forms
          \\
          on spacetime
        \end{tabular}
      }
    }
  \end{tikzcd}
\end{equation}

\vspace{-1mm} 
\item
The integral\footnote{This assumes that $M$ is compact, or that the fields have suitable support. This need not be the case, 
and the general criticality condition will be described in \cref{OnShellFieldsCriticalSetSubsection}.} of the Lagrangian
density should give a smooth function \eqref{LocalActionisSmooth} on the smooth field space:
\vspace{-4mm}
\begin{equation}
  \label{ActionFunctionalInIntroduction}
  \begin{tikzcd}[
    column sep=60pt,
    row sep=4pt
  ]
    &&
    {}
    \\[-3pt]
    \mathllap{
      \scalebox{.7}{
        \color{darkblue}
        \bf
        Field space
      }
      \;\;\;\;
    }
    \mathcal{F}
    \ar[
      rr,
      "{
        \scalebox{.7}{
          \color{darkgreen}
          \bf
          action functional
        }
      }",
      "{S:\;\;
        \phi 
        \;\;\mapsto\;\;
        \int_M 
          \mathcal{L}(\phi)
      }"{swap}
    ]
    &&
    \FR
    \mathrlap{
      \scalebox{.7}{
        \color{darkblue}
        \bf
        \def\arraystretch{.9}
        \begin{tabular}{c}
          Smooth space of
          \\
          real numbers
        \end{tabular}
      }
      \;\;\;\;
    }
  \end{tikzcd}
\end{equation}

\vspace{-2mm} 
\noindent so that this {\it action functional} should just be a smooth {\it action function} now!

\item Now for every smooth 1-parameter family of fields
\vspace{-2mm} 
$$
  \phi_t \;\colon\, 
  \begin{tikzcd}[
    column sep=50pt
  ]
    \FR^1 
    \ar[
      rr,
      "{
        \scalebox{.7}{
          \color{darkgreen}
          \bf
          smooth curve in field space
        }
      }"{yshift=0pt}
    ]
    &&
    \mathcal{F} \;,
  \end{tikzcd}
$$

\vspace{-2mm} 
\noindent there should be the corresponding 
naive differentiation of the action function(al) which represents its classical {\it variation}, which in the case of a \textit{local} action is proportional 
to the smooth {\it Euler-Lagrange differential operator} applied to the fields and evaluated on the corresponding \textit{tangent} vector of the fields
(see \cref{OnShellFieldsCriticalSetSubsection}, especially Prop. \ref{Crit(S)functoriality}):
\vspace{.6cm}
\begin{equation}
  \label{VariationalPrincipleInIntroduction}
  \overset{
    \mathclap{
      \hspace{-9pt}
      \rotatebox{+45}{
        \rlap{
          \hspace{-8pt}
          \scalebox{.7}{
            \color{darkblue}
            \bf
            derivative of
          }  
        }
      }
    }
  }{
    \,\partial_t\,
  }
  \overset{
    \mathclap{
      \hspace{-1.5pt}
      \rotatebox{+45}{
        \rlap{
          \hspace{-8pt}
          \scalebox{.7}{
            \color{darkblue}
            \bf
            action functional
          }
        }
      }
    }
  }{
  \,S\,
  }
  (
  \overset{
    \mathclap{
      \hspace{-1.5pt}
      \rotatebox{+45}{
        \rlap{
          \hspace{-9pt}
          \scalebox{.7}{
            \color{darkblue}
            \bf
            along field family
          }
        }
      }
    }
  }{
  \phi_t
  }
  )|_{ t = 0 }
  \overset{
    \raisebox{-10pt}{
      \scalebox{.7}{
        \color{darkgreen}
        \bf
        \def\arraystretch{.9}
        \begin{tabular}{c}
          smooth 
          \\
          variational
          \\
          principle
        \end{tabular}
      }
    }
  }{
    \;\;=\;\;
  }
  \int_M
  \big\langle
  \overset{
    \mathclap{
      \hspace{+8pt}
      \rotatebox{+45}{
        \rlap{
          \hspace{-18pt}
          \scalebox{.7}{
            \color{darkblue}
            \bf
            \def\arraystretch{.9}
            \begin{tabular}{c}
              Euler-Lagrange 
              \\
              equations
            \end{tabular}
          }
        }
      }
    }
  }{
    \mathcal{EL}
  }
  (\phi_0)
    ,\,
  \overset{
    \mathclap{
      \hspace{+5pt}
      \rotatebox{+45}{
        \rlap{
          \hspace{-18pt}
          \scalebox{.7}{
            \color{darkblue}
            \bf
            \def\arraystretch{.9}
            \begin{tabular}{c}
              evaluated on the
              \\
              field variation
            \end{tabular}
          }
        }
      }
    }
  }{
    \partial_t \phi_t 
  }
   |_{t = 0}
  \big\rangle \, ,
\end{equation}

\vspace{-1mm} 
\noindent where the Euler-Lagrange operator $\mathcal{EL}$ is a smooth section of the {\it variational cotangent bundle}
(see Def. \ref{LocalCotangentBundleToFieldSpace}):
\vspace{-2mm} 
\begin{equation}
  \label{VariationalCotangentBundleInIntroduction}
  \begin{tikzcd}[
    column sep=huge,
    row sep=20pt
  ]
    & 
   T_\mathrm{var}^*\CF
    \ar[
      d,
      "{
        \scalebox{.7}{
          \color{darkblue}
          \bf
          \def\arraystretch{.9}
          \begin{tabular}{c}
            variational 
            \\
            cotangent bundle
            \\
            of field space
          \end{tabular}
        }
      }"
    ]
    \\
    \mathcal{F}
    \ar[r, equals]
    \ar[
      ur,
      dashed,
      "{
        \mathcal{EL}
      }"
    ]
    &
    \mathcal{F}
    \mathrlap{\, ,}
  \end{tikzcd}
\end{equation}

\vspace{-2mm} 
\noindent where  $T_\mathrm{var}^*\CF :=\Gamma_M\big(
      \wedge^d T^* M \otimes V^* F
    \big)$. This is the smooth geometric incarnation of the {\it equations of motion} of the field theory.

\item
Finally, the {\it solutions to the equations of motion} should be a smooth space of {\it on-shell fields}, obtained simply as the (``critical'') locus 
in field space where the Euler-Lagrange operator vanishes, hence it should be a fiber product / pullback in smooth spaces of this form
(Cor. \ref{CriticalSmoothSetPullback}):
\vspace{-4mm} 
\begin{equation}
  \label{CriticalLocusInIntroduction}
  \begin{tikzcd}[row sep=small,
    column sep=80pt
  ]
    &
    \overset{
      \mathclap{
        \raisebox{1pt}{
          \scalebox{.7}{
            \color{darkblue}
            \bf
            on-shell field space
          }
        }
      }
    }{
      \mathrm{Crit}(S)
    }
    \ar[dl]
    \ar[dr]
    \ar[
      d,
      phantom,
      "{
        \scalebox{.7}{
          \color{darkgreen}
          \bf
          inside field space
        }
      }"{pos=.65}
    ]
    \\    
    \CF
    \ar[
      dr,
      "{
        \mathcal{EL}
      }"{pos=.4},
      "{
        {
          \scalebox{.7}{
            \color{darkgreen}
            \bf
            \def\arraystretch{.9}
            \begin{tabular}{c}
              where the 
              \\
              Euler-Lagrange
              \\
              operator
            \end{tabular}
          }
        }      
      }"{swap, pos=.8, yshift=3pt}
    ]
    &
    {}
    &
    \CF   \mathrlap{\,,}
    \ar[
      dl,
      "{
      0_\CF
      }"{swap, pos=0.4},
      "{
      {
              \;\;
          \scalebox{.7}{
            \color{darkgreen}
            \bf
            vanishes
          }
        }
      }"{pos=.7}
    ]
    \\
    &
    T^*_\mathrm{var} \CF
  \end{tikzcd}
\end{equation}

\end{itemize}

\vspace{-1mm} 
Our {\bf main result} in this mainly expository first part is the general demonstration that -- when carried 
 out in $\SmoothSets$ --  these desiderata are naturally, conveniently, and rigorously achieved. In the same vein, we explain in detail how a much 
 larger class of field theoretic concepts find their natural home in $\SmoothSets$. In subsequent articles, 
 we enhance this same story to richer notions of generalized smooth spaces, such as including (anticommuting) fermionic fields.

\medskip

\newpage 

\noindent
{\bf Summary and results.} Alongside presenting a comprehensive account of bosonic classical field theory, 
we fill several technical gaps that have remained in the literature (precise citations will be given throughout the text).
Throughout, we will illustrate many of the following concepts using the simple running examples of vector-valued field theory, 
pure electromagnetism, and particle mechanics, although
of course our formulation applies in much greater generality to accommodate a vast array of examples, as appearing in Literature \ref{lit-cft} below. 
\begin{itemize}
\setlength{\itemsep}{.5mm}
\item In \cref{IntuitiveSmoothSetsSection}, we incrementally build up intuition towards the definition of smooth sets and indicate how the Yoneda 
Lemma comes into play as a consistency requirement. Then we indicate how this naturally allows for a precise  formulation of the geometry of 
field theory as described above, while leaving the full details for the following sections. 

\item In \cref{SmoothSetsSection},
we describe the canonical smooth set structure on any mapping space between two smooth sets, that is via the internal hom functor of smooth sets. 
We then show how the (extended) Yoneda embedding preserves the Fr\'{e}chet mapping space between two manifolds, by combining 
it with the Exponential Law property of the latter. We employ the smooth mapping space construction to define the smooth space of sections of 
a fiber bundle -- and hence the full non-perturbative space of fields -- as a smooth set, and use this to motivate the 
definition of tangent vectors, smooth tangent bundle and smooth vector fields on the field space. We then describe how vector 
fields on field space are interpreted as ``smooth infinitesimal symmetries" via smooth 1-parameter groups of diffeomorphisms of the field space. 

\item 
We close off \cref{Sec-bos} in \cref{ClassifyingSpaceSection} by defining a non-concrete smooth set, the moduli space of smooth $n$-forms, 
and show how it can be used to define an alternative notion 
of forms on an arbitrary smooth set. Although we do not use it explicitly here (apart from Rem. \ref{CriticalityViaModuliSpaceOf1-forms}), 
we point out some of its uses in the mathematics and physics literature. Its relation to the differential forms as maps out of the (synthetic) tangent 
bundle, and in particular that of field spaces as introduced in \cref{SmoothSetsSection}, will be fully elaborated in \cite{GSS-2}).

\item We set up local Lagrangians in \cref{Sec-locLag}. 
We start in \cref{JetBundleSection} by recalling the description of the infinite jet bundle as an infinite-dimensional Fr\'{e}chet manifold -- a 
projective limit of finite-dimensional manifolds in the category $\FrechetManifolds$ (a ``locally pro-manifold"). 
We extend previous results on the description of smooth functions 
out of the infinite jet bundle to values in any finite-dimensional manifold (and any other locally pro-manifold). We then use the embedding 
of Fr\'{e}chet manifolds into smooth sets to explain how the infinite jet prolongation map is naturally viewed as a \textit{smooth} map within 
this category. This allows for the reformulation of the standard description of Lagrangian densities, currents, and charges on the field space, 
as pullbacks of smooth (bundle) maps defined on the infinite jet bundle, via the infinite jet prolongation, and hence as smooth maps out of
the field space. 

\item We close the section by defining in \cref{Sec-Currents} the appropriate notion of a (smooth) symmetry of a local Lagrangian field theory.
This, and the following sections, will make clear that the only infinite-dimensional manifold/analytic properties necessary 
for the description of (off-shell) smooth, local field theory are:

\begin{itemize} 
\item[{\bf (i)}] the \textit{paracompact} manifold structure on $J^\infty_M F$ and 

\item[{\bf (ii)}] the characterization of smooth maps out of $J^\infty_M F$ as maps that \textit{locally factor} through 
finite-order jet bundles. 
\end{itemize} 
\noindent The remaining properties employed are purely universal categorical properties of the smooth set incarnation of $J^\infty_M F$.  

\item  \cref{DifferentialGeometryOnTheInfiniteJetBundle} is essentially a recasting of results on the differential geometry of the infinite 
jet bundle, but now comprehensively spelled out in smooth sets. Having embedded the infinite jet bundle in smooth sets, in \cref{InfinityJetTangentBundleSection} 
we define its tangent bundle as a limit of the finite order tangent bundles -- computed directly in smooth sets. We describe how this recovers 
the (several equivalent) definitions and coordinate descriptions of tangent vectors and vector fields in the 
existing literature, by intuitively unraveling their definitions internal to smooth sets. 

\item In  \cref{HorizontalSplittingSection}, we define the smooth vertical subbundle of $J^\infty_M F$ --  and hence smooth vertical 
vector fields --  directly in smooth sets, recovering the standard notions used in the literature. We then recall the description 
of the point-set horizontal splitting of the tangent bundle of $J^\infty_M F$ over $M$, and prove how this is 
actually a \textit{smooth} splitting when correctly interpreted in the category of smooth sets.

\item In \cref{DifferentialFormsOnJetBundleSection}, we define forms on the infinite jet bundle as smooth $\FR$-valued, 
fiberwise-linear maps out of its tangent bundle and detail explicitly how this recovers the traditional definitions and coordinate descriptions. 
We close the section by showing how differential forms of \textit{globally} finite order may be equivalently viewed as de Rham forms, i.e., 
as maps from $J^\infty_M F$ into the classifying space of de Rham forms introduced in \cref{ClassifyingSpaceSection}.

\item In \cref{VariationalBicomplexSection}, we use the smooth splitting of \cref{HorizontalSplittingSection} to define a bi-complex 
structure on differential forms on $J^\infty_M F$, and hence recover the variational bicomplex. We then recall the standard definitions 
of source forms, interior Euler operator, functional forms, and their properties along with well-known results on the cohomology of the 
(augmented) variational bicomplex and the Euler--Lagrange complex. We describe how the vertical differential and the Euler operator encode
the `integration by parts' algorithm (at the level of the jet bundle) via the Euler--Lagrange source form of a Lagrangian density.

\item  In \cref{{sec-onshell}}, we define the ``shell" of a Lagrangian density on $J^\infty_M F$ as the (smooth) subset where 
the Euler--Lagrange source form vanishes. Similarly, having explicitly shown how every $(d,1)$-source form defines a (smooth) 
map out of the field space, we define the ``on-shell"  space of fields as the (smooth) subset of fields whose plots vanish under 
the induced Euler--Lagrange operator. We then define the prolongated version of the shell, as the appropriate incarnation of 
the on-shell fields inside the jet bundle.
We describe how one may pull back forms on $J^\infty_M F$ to spacetime forms via the infinite jet prolongation of any field, and show 
how this is compatible with the horizontal differential on $J^\infty_M F$ and the usual de Rham differential on the spacetime $M$. 
This allows for a version of Stokes' Theorem on field space, by integrating pullbacks of horizontal $p$-forms along compact oriented 
submanifolds, leading to the notion of (off-shell and on-shell) conserved $p$-form currents and charges on field space. 
We close the section by showing how a (finite) local symmetry of a Lagrangian field theory preserves the smooth 
subspace of on-shell fields.

\item In \cref{OnShellFieldsCriticalSetSubsection}, we rigorously define the set of critical points of
an arbitrary real-valued smooth map on a smooth set, and furthermore show how this naturally generalizes to a definition 
of arbitrary critical 
$\FR^k$-plots. We then prove that this assignment indeed defines a smooth set in the case of local Lagrangian field theories 
over compact spacetimes, which furthermore coincides with the smooth space of on-shell fields. Subsequently, we extend this 
description to field theories over non-compact spacetimes, whereby the action functional is ill-defined over the full field space, 
with the criticality condition being replaced by an appropriate criticality condition for the Lagrangian itself. 
After remarking on the special treatment required for field theories over spacetimes with \textit{boundary}, 
we explain in detail the relation of the criticality condition with the moduli space of 1-forms of \cref{SmoothSetsSection}.
We conclude by employing the criticality description to show that an arbitrary spacetime covariant symmetry of 
a local Lagrangian field theory preserves the smooth space of on-shell fields.

\item In \cref{EvolutionaryVectorFieldsAndNoetherTheoremsSection}, we define (smooth) \textit{local} vector fields on field space, 
along with their corresponding evolutionary vector fields and prolongated counterparts on the infinite jet bundle. 
After recalling known properties of (prolongated) evolutionary vector fields, we define \textit{infinitesimal} local symmetries of 
a local Lagrangian field theory, and show how every such symmetry induces a smooth on-shell conserved current (Noether's First Theorem). 
Along the way, we describe in detail how these notions recover the standard formulas appearing in the physics literature.

\item
We proceed in \cref{Sec-gauge}  by motivating the notion of infinitesimal local gauge symmetries by considering a particular simple 
but illustrative case, before defining the notion in full 
generality. 
We then show  how every such symmetry induces interrelations between the components of the Euler--Lagrange
operator (Noether's Second Theorem), 
and hence smooth conserved currents of a special kind -- which in particular (cohomological) situations are trivial on-shell. 

\item Lastly, in \cref{Sec-Cauchy}  we describe how one may rigorously and concisely define the notion of initial data for a local 
Lagrangian field theory on a codimension-1 submanifold of spacetime using smooth sets, together with the corresponding notion of 
a Cauchy surface. We close the section by recasting and proving the well-known fact that (finite) gauge symmetries prevent 
the existence of Cauchy surfaces.

\item In \cref{TheBicomplexOfLocalFormsSection} we study the presymplectic structure of local field theories. 
 In \cref{sec-cartan}, we define the appropriate notion of a tangent bundle and differential forms on the product of (off-shell)
 fields and the underlying spacetime, $\CF\times M$. After recalling the smooth (prolongated) evaluation map valued in the jet bundle,
 $\CF\times M\rightarrow J^\infty_M F$ we intuitively motivate and rigorously define its pushforward bundle map between the
 corresponding tangent bundle. We use the pushforward to pull back the variational bicomplex on $J^\infty_M F$ and define 
 the local bicomplex of differential forms on $\CF\times M$. We carefully show how these abstract and globally defined 
 objects recover the standard notation of local forms on $\CF\times M$, along with the actions of the corresponding differentials. 
 We then describe the local Cartan calculus satisfied by insertions of local vector fields, which lifts the evolutionary 
 Cartan calculus of the jet bundle. Finally, we show how local differential forms of \textit{globally} finite order may
 be equivalently viewed as de Rham forms, i.e., as maps from $\CF\times M$ into the classifying space of de Rham forms from \cref{ClassifyingSpaceSection}.

\item In \cref{SecPresymplecticCurrentPropertiesAndBrackets},  we describe the presymplectic current on $\CF\times M$, 
and rigorously prove several of its properties, and similarly for its restriction to the on-shell product $\CF_{\CE \CL}\times M$. 
To that end, an appropriate definition of a smooth tangent bundle to $\CF_{\CE \CL}$ is required, which we motivate intuitively 
and rigorously define. Among other results, we show that the presymplectic current is on-shell conserved and that
(infinitesimal) gauge symmetries imply its degeneracy. We use the presymplectic current to define a notion of (off-shell) 
Hamiltonian currents and define an induced `Poisson-like' bracket structure. We relate this to the bracket of Noether 
currents and hint at its higher Lie algebraic nature. Finally, we explain that the corresponding on-shell notions of Hamiltonian 
currents and their brackets follow similarly (Rem. \ref{OnshellBracketsCaveats}), but are only well-defined if the local
Cartan calculus descends to $\CF_{\CE \CL}\times M$. Indeed, we stress that the question of whether the local Cartan calculus descends
to the on-shell fields is \textit{precisely} where infinite dimensional/analytic details are again relevant, via the prolongated shell
$S_L^\infty \hookrightarrow J^\infty_M $ of Euler-Lagrange source form (Rem. \ref{OnShellCartanCalculusCaveats}).

\item 
In the final part \cref{Sec-cov}, we describe how one may transgress (`integrate') local differential 
forms on $\CF\times M$ over submanifolds to produce smooth differential forms on the actual field space $\CF$. In particular, 
we show how the presymplectic $(d-1,2)$-current transgresses to a presymplectic $2$-form on field space $\CF$, whose on-shell restriction 
depends only on the cobordism class of the transgressing submanifold, hence 
defining the covariant phase space $(\CF_{\CE \CL}, \Omega_\CL)$ as a presymplectic smooth set. By the degeneracy result of the
presymplectic current from \cref{SecPresymplecticCurrentPropertiesAndBrackets} it follows that, in the presence of gauge symmetries, 
the covariant phase space cannot be symplectic. If a Cauchy surface exists, we describe how to use the induced 
isomorphism to its initial data smooth set to produce the associated (non-covariant) phase space. Next, 
we define the algebra of (off-shell) Hamiltonian functionals on $\CF$ with respect to the symplectic $2$-form, 
describe its Poisson algebra structure,
and show how the transgression intertwines the brackets currents and functionals. Finally, we explain how the corresponding 
on-shell notion of Hamiltonian functionals and their Poisson structure follows similarly, at least in the case where the local 
Cartan calculus descends to $\CF_{\CE\CL}\times M$, and hence further transgresses to $\CF_{\CE \CL}$.  

\item We provide an outlook in \cref{outlook}  on how to carry our 
approach further to include the rigorous description of infinitesimal, fermionic, and (higher) gauge theoretic aspects of field theory. 
These will be fleshed out in full detail in the upcoming installments of this series. 

\end{itemize}

\vspace{-1mm} 
\begin{literature}[{\bf Smooth spaces}]
\label{LiteratureSmoothSpaces}
Different approaches to generalized smooth spaces have been explored in the past (see \cite{Stacey} for a survey).
More traditional definitions based on underlying topological spaces include Smith \cite{Smith}, Sikorski \cite{Sik}, and 
Mostow \cite{Mostow}. 
The definition of {\it diffeological spaces} has grown out of the work of  Chen \cite{Chen}\cite{Chen82}, 
Fr\"olicher \cite{Fro81}\cite{Fro82}, and then notably Souriau \cite{Sour} (see \cite{BH11} for exposition). 
A comprehensive discussion of differential geometry with diffeological spaces 
was developed by Iglesias-Zemmour \cite{IZ85}\cite{IZ13}. 
The more general definition of {\it smooth sets} which we use here, as general sheaves on the site $\CartesianSpaces$,
is from Schreiber \cite[Def. 1.2.197]{dcct} (their smooth mapping spaces were discussed in \cite[\S 1.2.2.4]{dcct}), 
elementary exposition is in the lecture notes \cite[\S 2]{Nesin} and further discussion in the broader context 
of higher topos theory is in \cite[Ex. 3.18]{Orbifold}\cite[(3.143)]{Bundles}; see also \cite{Schreiber24}.
\end{literature}

\vspace{-2.5mm} 
\begin{literature}[{\bf Variational bicomplex}]
Early geometric perspectives on the calculus of variations for field theory purposes are given in \cite{Trautman}\cite{Herman}\cite{Kom68}\cite{Kom69}\cite{Kuper}\cite{AA78}. 
The variational bicomplex plays an important role in this context. Detailed expositions are given in \cite{An91}\cite{Vitolo}. 
The cohomological properties of the variational bicomplex, including local exactness, are established within various 
approaches, for instance in \cite{Tul77}, then \cite{Vin84-spec} (Spencer spectral sequence), \cite{Takens79}\cite{AD80}\cite{Tul80} 
(explicitly and constructively), \cite{Tsu82} (Koszul complex), \cite{BDK}\cite{Wald}\cite{DVHTV}\cite{Dickey92} (algebraic techniques), 
\cite{BG} (characteristic cohomology), and \cite{AF97} (Lie algebra cohomology). 
\end{literature}

\vspace{-2.5mm} 
\begin{literature}[{\bf Classical Lagrangian field theory}]\label{lit-cft}
Clearly, the literature here is vast and we will not attempt to do it justice. Constructions and surveys taking various approaches 
within the differential geometric and/or mathematical physics context 
include 
\cite{Trautman}\cite{Sn}\cite{Thirring}\cite{AA80}\cite{Bleecker}\cite{Vin84}\cite{Zuckerman}\cite{BSF}\cite{Morandi}\cite{HenneauxTeitelboim92}\cite{GMM}\cite{ACDSR}\cite{GMS97}
\newline \cite{DF99}\cite{Freed}\cite{FF03}\cite{Sardanashvily09}\cite{GMS09}\cite{DR11}\cite{Frankel}\cite{CMR12}\cite{Urs}\cite{DSV15}\cite{Kr15}\cite{MH16}\cite{BFR19}\cite{Mar}.
The approach to field theory in terms toposes of sheaves over various
kinds of probes takes inspiration from the work of Schreiber (see \cite{dcct} \cite{Urs}\cite{Schreiber24}
and the early live lecture notes \cite{Schreiber}, written with a
viewpoint towards perturbative quantum field theory).
 Another approach close to ours as a rigorous formalization of field theories seems to be the unpublished (live) lecture notes by Blohmann  \cite{Blohmann23b},
which we discovered at a late stage of writing this manuscript. 
Naturally, in terms of content, there is some overlap on how certain field-theoretic concepts are being formalized.
  However, our development takes a different route, which does not emphasize an underlying set of points 
(a viewpoint which, in particular, will be necessary for the fermionic case), and hence our approach is more amenable 
to (further) physically desirable generalizations. 
Indeed, in  \cite{Blohmann23b} (and also in Ref. \cite{Del}) the perspective is to view 
\textit{bosonic} Lagrangian field theory as taking place in the 
full subcategory  (Pro-) $\DiffeologicalSpaces$ of (Pro-) smooth sets. While this is sufficient in the bosonic setting, our more general
point of view of sheaves over Cartesian spaces naturally generalizes to include infinitesimal structure and fermionic fields  
(see \cref{outlook} and \cite{GSS-2}). Another crucial technical difference 
is their treatment of the infinite jet bundle as a pro-manifold, which forces them to work in Pro$\DiffeologicalSpaces$ compared to 
ours as a Fr\'{e}chet manifold -- which thus naturally embeds 
into actual smooth sets (see \cref{JetBundleSection} and Rem. \ref{ProVsLocProJetBundle}).

A further difference is in terms of approach: Our focus and perspective throughout are on the actual smooth space of field $\CF$ 
-- most results are developed and stated directly on the field space $\CF$   without explicitly using the local bicomplex structure 
on $\CF \times M$, until strictly necessary, but rather using the variational bicomplex on $J^\infty_M F$. The perspective of 
\cite{Blohmann23b} is to first develop the local bicomplex (albeit in pro-diffeological spaces, rather than smooth sets as we do), 
and then express the corresponding field-theoretic results in terms of the corresponding cohomology. Our approach allows, for instance,
a conceptually clearer description of (finite) symmetries (Def. \ref{FiniteSymmetryofLagrangianFieldTheory},
Lem. \ref{InfinitesimalSpacetimeCovariantSymmetries}), their action on the on-shell field space  
(Prop. \ref{LocalSymmetryPreservesOnshellSpace}, Prop. \ref{SymmetryPreservesOnshellSpace}), its description as a critical locus \cref{OnShellFieldsCriticalSetSubsection}, 
and the discussion of initial data and Cauchy surfaces \cref{Sec-Cauchy}. Moreover, part of the (non-expert) mathematical 
physics community is not familiar with the hands-on geometrical 
manipulations on  $J^\infty_M F$, and further their all-important interplay with structures on $\CF$. We hope our treatment to 
be useful in bridging this gap.

We only delve into the local bicomplex and rigorously recover the description (and notation) and several results 
along the lines of \cite{Zuckerman}\cite{DF99} in the final section, whereby the local bicomplex is indeed conceptually 
(and notationally) useful for defining and discussing the presymplectic structure of local field theories. 
Indeed, compared to \cite{Blohmann23b}, our presentation moreover includes a detailed discussion of the 
off-shell presymplectic current of field theories and its induced bracket structures of Noether and Hamiltonian local currents, 
which (in good situations) restrict to the on-shell product $\CF_{\CE \CL}\times M$ -- and similarly for the corresponding 
transgressed structures on the actual field space $\CF$ and $\CF_{\CE \CL}$ upon integration along an appropriate submanifold.
Further specific technical discrepancies in terms of definitions and proofs will be noted throughout.

\end{literature}



\addtocontents{toc}{\protect\vspace{-10pt}}
\section{Bosonic field spaces as smooth sets} \label{Sec-bos}
\subsection{Intuitive approach to smooth sets}\label{IntuitiveSmoothSetsSection}

In this section, we gently motivate and then discuss the definition of {\it smooth sets} -- generalized smooth spaces that subsume smooth manifolds, 
but also for instance their smooth mapping spaces and hence the field spaces of field theories. This will pave the way to discuss these topics
more rigorously in \cref{SmoothSetsSection}.
These examples of smooth sets are also known as {\it diffeological spaces}, but beyond that also the {\it moduli spaces of differential forms} 
exist as smooth sets.

\smallskip 
Despite this powerful generality of smooth sets, their definition is actually much {\it simpler} than the typical tools of infinite-dimensional 
analysis (nevertheless they faithfully subsume for instance infinite-dimensional Fr{\'e}chet manifolds, see Prop. \ref{FrManRestrEmbedding}) 
and we want to offer the physicist reader a neat perspective on smooth sets which, while fully precise (see \cref{SmoothSetsSection}), 
is quite close to the operational idea of ``spaces'' actually used in physics.

\medskip

\noindent
{\bf Intuition for smooth sets.}
The basic idea of generalized smooth spaces
$\CG$ (like smooth sets) is {\it not} to declare them as sets of points with extra structure (as with so many traditional definitions), 
but instead to provide an operational meaning of how to explore or to {\it probe} $\CG$. This is not unlike the intuition in string theory 
that spacetime is whatever brane {\it probes} detect as they traverse it. Indeed, if we write 
\begin{itemize}
\item 
$\Sigma$ for a probe brane's worldvolume (a finite-dimensional smooth manifold),
\item $\CG$ for the would-be target space to be explored (a generalized smooth space),
\end{itemize}
then the probe brane's trajectory should be a smooth map (for some definition of ``smooth map'' to be determined)
\vspace{-2mm} 
\begin{equation}
  \label{SchematicsOfProbes}
  \begin{tikzcd}[column sep=large]
    \Sigma
    \ar[
      rr,
      "{
        \scalebox{.7}{
          \color{darkgreen}
          \bf
          trajectory
        }
      }",
      "{
        \scalebox{.7}{
          (``plot'')
        }
      }"{swap, yshift=-2pt}
    ]
    &&
    \CG
    \mathrlap{\,.}
  \end{tikzcd}
\end{equation}

\vspace{-2mm} 
\noindent The idea is that knowing about $\CG$ is equivalent to knowing about the system of these plots of $\CG$.
There is an evident minimum of structure that such a system of plots carries:
\begin{itemize}
  \item[{\bf (i)}]
    For each probe manifold $\Sigma$ there should be a set of plots
    \vspace{-2mm} 
    $$
      \Sigma \,\in\, \SmoothManifolds
      \;\;\;\;\;\;
        \Rightarrow
      \;\;\;\;\;\;
      \mathrm{Plots}(\Sigma, \CG)
      \;\in\;
      \mathrm{Set}\;.
    $$

    \vspace{-2mm}
    For instance, for $\Sigma =\{*\}$ a point the set $\mathrm{Plots}(*,\CG)$ is to be interpreted as the `set of points' in $\CG$, 
    while for $\Sigma =\FR^1$ the real line the set $\mathrm{Plots}(\FR^1,\CG)$ is to be interpreted as `the set of smooth lines' in $\CG$. 
    More generally, $\mathrm{Plots}(\FR^k,\CG)$ is `set of $\FR^k$-shaped smooth plots' in $\CG$ and analogously for any 
    probe manifold $\Sigma\in \SmoothManifolds$.
  \item[{\bf (ii)}]
    The precomposition of a plot with a smooth map between probe manifolds should again be a plot
    \vspace{-2mm} 
    $$
      \begin{tikzcd}[column sep=large]
        \Sigma'
        \ar[
          rrr,
          "{ f }",
          "{
            \scalebox{.7}{
              ordinary smooth map
            }
          }"{swap, yshift=-2pt}
        ]
        \ar[
          rrrrr,
          rounded corners,
          to path={
               ([yshift=-00pt]\tikztostart.south)
            -- ([yshift=-17pt]\tikztostart.south)
            -- node[yshift=-7pt] {
                  \scalebox{.7}{
                    composite plot
                  }  
               }
               node[yshift=6pt]{
                 \scalebox{.7}{
                   $ f^\ast p$
                 }
               }
               ([yshift=-17pt]\tikztotarget.south)
            -- ([yshift=-00pt]\tikztotarget.south)
          }
        ]
        &&&
        \Sigma
        \ar[
          rr,
          "{
            p
          }",
          "{
            \scalebox{.7}{
              plot
            }
          }"{swap, yshift=-2pt}
        ]
        &&
        \CG
      \end{tikzcd}
    $$
    
    \vspace{-2mm} 
\noindent    such that 
    \begin{itemize}
      \item[({\bf a)}]
        precomposition with an identity map is the identity operation on the set of plots,
      \item[{\bf (b)}] 
      precomposition with two successive maps is the same as precomposing with their composites
    \end{itemize}
    \begin{equation}
      \label{PreasheafConditionForSmoothSets}
      \begin{tikzcd}[row sep=small]
        \SmoothManifolds^{\mathrm{op}}
        \ar[
          rr,
          "{
            \mathrm{Plots}(-, \CG)
          }"
        ]
        &&
        \mathrm{Set}
        \\[-12pt]
        \Sigma
        \ar[
          from=dd,
          "{
            f
          }"
        ]
        &\mapsto&
        \mathrm{Plots}(\Sigma, \CG) \phantom{.}
        \ar[
          dd,
          "{
            f^\ast
          }"
        ]
        \ar[
          dddd,
          rounded corners,
          to path={
               ([xshift=+00pt]\tikztostart.east)
            -- ([xshift=+10pt]\tikztostart.east)
            -- node[sloped, rotate=180] {
                 \scalebox{.7}{\colorbox{white}{$
                   (f \circ g)^\ast
                 $}}
               }
               ([xshift=+10pt]\tikztotarget.east)
            -- ([xshift=+00pt]\tikztotarget.east)
          }
        ]
        \\
        \\
        \Sigma'\!\!
        \ar[
          from=dd,
          "{
            g
          }"
        ]
        &\mapsto&
        \!\!\!\mathrm{Plots}(\Sigma', \CG)
        \ar[
          dd,
          "{
            g^\ast
          }"
        ]
        \\
        \\
        \Sigma''\!\!\!\!\!
        \ar[
          uuuu,
          rounded corners,
          to path={
               ([xshift=-00pt]\tikztostart.west)
            -- ([xshift=-10pt]\tikztostart.west)
            -- node[sloped, rotate=180]{
                 \scalebox{.7}{\colorbox{white}{$
                   f \circ g
                 $}}
               }
               ([xshift=-10pt]\tikztotarget.west)
            -- ([xshift=-00pt]\tikztotarget.west)
          }
        ]
        &\mapsto&
        \!\!\!\!\!\!\mathrm{Plots}(\Sigma'', \CG)
      \end{tikzcd}
    \end{equation}
    Of course, this means to say that the system 
    $\mathrm{Plots}(-,\CG)$
    of plots of $\CG$ constitutes a {\it contravariant functor} from smooth manifolds to sets, also called 
    a {\it presheaf} on $\SmoothManifolds$, as indicated above.

    Note that -- at this point of bootstrapping the generalized space $\CG$ into existence --  the arrow 
    notation \eqref{SchematicsOfProbes} for a $\Sigma$-plot of $\CG$ is only schematic, since $\CG$ is only in the 
    process of being defined by these very plots. Even once it is defined, it nominally lives in a different category than 
    the probes $\Sigma$. However, the magic of this bootstrap process is that in the end the would-be-plots
    \eqref{SchematicsOfProbes} of a generalized space end up being the actual plots! This is the content of nothing but 
    the {\it Yoneda Lemma}, Prop. \ref{YonedaLemmaForSmoothSets} below.

    \item[{\bf (iii)}] If a probe manifold $\Sigma$ is {\it covered} by open subsets 
    $$
      \big\{
        \Sigma_i 
          \xhookrightarrow{\;\; \iota_i \;\;} 
        \Sigma 
      \big\}_{i \in I}
    $$ 
    then plots by $\Sigma$ should be the same as $I$-tuples of plots by the $\Sigma_i$ which match on their overlaps,
    expresses an evident locality condition of plots:   

    \vspace{-2mm} 
    \begin{equation}
    \label{SheafConditionFormSmoothSets}
    \hspace{-4cm}
      \begin{tikzcd}[
        row sep=-7pt, column sep=45pt
      ]
        \scalebox{.7}{
          \color{darkblue}
          \bf
          any ``large'' plot
        }
        \ar[
          rr,
          phantom,
          "{\hspace{-10mm}
            \scalebox{.7}{
              \color{darkgreen}
              \bf
              equivalently decomposes into
            }
          }"
        ]
        &&
        \scalebox{.7}{
          \color{darkblue}
          \bf
           ``small'' plots
        }
        \mathrlap{
          \scalebox{.7}{
            \color{darkblue}
            \bf
             that match where they overlap
          }          
        }
        \\
        \mathrm{Plots}(\Sigma,\, \CG)
        \ar[
          rr,
          "{
            \sim
          }"
        ]
        &&
        \Big\{
          \big(
            p_i 
              \in          
            \mathrm{Plots}(\Sigma_i,\, \CG)
          \big)_{i \in I}
        \mathrlap{
          \;\Big\vert\;
          \underset{
            i,j \in I
          }{\forall}
          \mbox{
            $p_i = p_j$ on $\Sigma_i \cap \Sigma_j$
          }
        \Big\}.
        }
        \\
        \big(
          \Sigma 
          \xrightarrow{p}
          \CG
        \big)
        &\mapsto&
        \big(
          \Sigma_i 
            \xrightarrow{ \iota_i }
          \Sigma 
          \xrightarrow{p}
          \CG
        \big)_{i \in I}        
      \end{tikzcd}
    \end{equation}

    \vspace{-2mm} 
\noindent    Of course, that these decomposition maps are bijections is known as the {\it sheaf condition}
which characterizes the presheaf $\mathrm{Plots}(-,\CG)$  on $\SmoothManifolds$ \eqref{PreasheafConditionForSmoothSets}
    as sheaves with respect to the ``Grothendieck topology'' of open covers:
    \vspace{-3mm} 
    $$
      \begin{tikzcd}[row sep=-2pt]
        \scalebox{.7}{
            \color{darkblue}
            \bf
            sheaves
        }
        \ar[
          rr,
          phantom,
          "{
            \scalebox{.7}{
              \color{darkgreen}
              \bf
              among
            }
          }"
        ]
        &&
        \scalebox{.7}{
            \color{darkblue}
            \bf
            pre-sheaves
        }
        \\
        \mathrm{Sh}(\SmoothManifolds)
        \ar[
          rr,
          hook
        ]
        &&
        \mathrm{PSh}(\SmoothManifolds)
      \end{tikzcd}
    $$
\end{itemize}

  \vspace{-2mm} 
\noindent  And that is already our definition of smooth sets!
A {\it smooth set} is conceived as a sheaf -- namely: the sheaf ``of its probes'' -- on the category $\SmoothManifolds$ 
of smooth manifolds with respect to the open covers:
  \vspace{-2mm} 
  $$
    \SmoothSets
    \;\;
    :=
    \;\;
    \mathrm{Sh}(\SmoothManifolds)
    \,.
  $$

   \vspace{-2mm} 
\noindent   In fact, the definition can be made simpler still: Since smooth manifolds, by definition, are always covered 
by ``Cartesian spaces'' $\FR^n$ (for some $n \in \mathbb{N}$, being their dimension), the locality/sheaf 
condition \eqref{SheafConditionFormSmoothSets} says that to know a smooth set it is sufficient to probe it by Cartesian spaces
  \vspace{-2mm} 
$$
  \big\{
    \FR^k
    \xrightarrow{\;}
    \FR^{k'}
  \big\}
  \;\;
    =:
  \;\;
  \mathrm{CartSp}
  \xhookrightarrow{\;\; \iota \;\;}
  \SmoothManifolds  
  \,.
$$

  \vspace{-2mm} 
\noindent  in that the restriction map is an equivalence of categories
  \vspace{-2mm} 
$$
  \begin{tikzcd}
  \mathrm{Sh}(\SmoothManifolds)
  \ar[
    rr,
    "{
      \sim
    }"{swap},
    "{
      \iota^\ast
    }"
  ]
  &&
  \mathrm{Sh}(\mathrm{CartSp})
  \end{tikzcd}
$$

  \vspace{-2mm} 
\noindent  This means that we may equivalently define smooth sets operationally just as above, 
but with the probes $\Sigma$ ranging just over the Cartesian spaces:

\begin{definition}[\bf  Smooth sets]
\label{SmoothSets}
The category of \textit{smooth sets}
is the category of sheaves over the site of Cartesian spaces
  \vspace{-2mm} 
\begin{equation}
  \label{SmoothSetsDefinedAsSheaves}
  \begin{tikzcd}[sep=0pt]
	\SmoothSets
    &:=&
    \mathrm{Sh}(\mathrm{CartSp})
    \\
    \mathrm{Plots}(-,\CG)
    &\leftrightarrow&
    \CG(-)
  \end{tikzcd}
\end{equation}

  \vspace{-2mm} 
\noindent  with respect to the ({\it differentiably-good}) open covers.
\end{definition}

\begin{remark}[\bf Jargon and perspective: ``Gros'' vs. ``petit'' sheaves]\label{GrosVsPetitSheaves}
$\,$

\noindent {\bf (i)}  The reader with previous exposition to the notion of {\it sheaves} has probably seen them in 
a seemingly different context:
  Typically the first examples of sheaves considered are defined not on a category like $\SmoothManifolds$ which contains 
  {\it all} manifolds (or $\mathrm{Top}$ containing {\it all} topological spaces) but instead on the category of open 
  subsets of a {\it single} manifold (or a {\it single} topological space). 

\noindent {\bf (ii)}  While these are two special cases of the same mathematical definition of ``sheaf'', and while 
close relations between both cases certainly exist,\footnote{For instance the site sheaf condition described in \eqref{SheafConditionFormSmoothSets}
may be equivalently read as: A presheaf $\CG:\text{CartSp}\rightarrow \text{Set}$ 
is a smooth set if and only if $\CG(\FR^k)$ defines a sheaf on $\FR^k$, in the ``petit'' sense of topological spaces, 
for each $\FR^k\in \text{CartSp}$. Here by $\CG(\FR^k)$ we mean the presheaf defined via 
$\CG|_{\text{Open}(\FR^k)}$ by restricting on the full subcategory open subsets of $\FR^k$ - identified 
via (smooth) open embeddings $\FR^k \hookrightarrow \FR^k$.  
}
 there is quite a qualitative difference between these two cases. To bring out this difference, some authors 
 refer to the category of sheaves on a fixed space as 
 its {\it petit topos} and to the category of sheaves on all spaces as a {\it gros topos}.

\noindent {\bf (iii)}  The perspective of {\it gros} toposes is less widely appreciated, but this is decidedly 
the perspective that we need and use here. The above motivation 
 is meant to show that, despite all the jargon, gros toposes of generalized spaces have a quite transparent definition closely reflecting the operational notion of 
 spaces explored via their plots by smaller probe spaces. 
\end{remark}

\begin{remark}[\bf Sheafification]\label{Sheafification}
It is a fact, as with any sheaf category, that
there exists an (essentially) unique functor  (see e.g. \cite[p. 128]{MacLaneMoerdijk94})
$$
L \, : \, \mathrm{PSh}(\CartesianSpaces) \longrightarrow \SmoothSets
$$
which constructs a smooth set (a sheaf) out of any presheaf on Cartesian spaces.  The ``\textit{Sheafification}'' functor L is fully characterized as being left adjoint to the canonical inclusion
functor 
$$
j\, : \, \SmoothSets \longhookrightarrow \mathrm{PSh}(\CartesianSpaces)
$$
which views smooth sets as generic presheaves on Cartesian spaces. We will not need the explicit form of the sheafification construction in this manuscript, but it is worthwhile noting a couple of its properties. Namely, by its explicit construction it preserves finite limits, and being a left adjoint it also further preserves all colimits. Lastly, it acts trivially on the (essential) image of the inclusion functor, namely $L(\CG) \cong \CG$ if $\CG$ already satisfies the sheaf condition \eqref{SheafConditionFormSmoothSets}.
\end{remark}

To get our foot on the ground, it is instructive to look at the following basic example of smooth sets.

\begin{example}[{\bf Manifolds as smooth sets}]
\label{ManRestrYonedaEmbedding}
Every ordinary smooth manifold $M \in \SmoothManifolds$ (such as a Cartesian space $\FR^n$) immediately 
has the structure of a smooth set simply by taking its 
system of plots to be given by the ordinary smooth functions into it:
\begin{equation}
  \label{PlotsOfSmoothManifolds}
  \mathrm{Plots}(\Sigma, M)
  \;\;
  :=
  \;\;
  C^\infty(\Sigma, M)
  \,.
\end{equation}
For any smooth map of probes $f:\Sigma^{'}  \rightarrow \Sigma$, the corresponding pullback of plots  $f^*:\mathrm{Plots}(\Sigma,M)\rightarrow \mathrm{Plots}(\Sigma^{'} ,M)$ is given by the \textit{actual} precomposition $(-)\circ f$ of smooth maps in $\SmoothManifolds$.
\end{example}
For emphasis, for any smooth manifold $M\in \SmoothManifolds$ let us write 
\vspace{-2mm} 
\begin{equation}
  \label{SmoothManifoldAsASmoothSet}
  y(M) \,\in\, \SmoothSets
\end{equation}

\vspace{-2mm} 
\noindent to indicate contexts in which we think of a smooth manifold as a smooth set, via \eqref{PlotsOfSmoothManifolds}.

\medskip

\noindent
{\bf Consistency of smooth sets: The Yoneda lemma.}
This identification \eqref{SmoothManifoldAsASmoothSet} now leads to a potential consistency issue: Above we bootstrapped 
smooth sets $\CG$ out of existence by declaring
that for each manifold $M$ there is a set $\mathrm{Plots}(M,\,\CG)$ of would-be plots of $\CG$ by $M$. 
But notice that before this bootstrap, there is no actual notion yet of what it would mean to have an actual map 
$M \xrightarrow{\;} \CG$, as highlighted above.
But now that we see that manifolds themselves may be regarded as smooth sets $y(M)$, there {\it is} such a notion, 
namely given by the smooth maps of smooth sets 
$y(M) \xrightarrow{\;} \CG$, being elements in the {\it hom-set} $\mathrm{Hom}_{\SmoothSets} \big(y(M), \CG\big)$.
Continuing in the spirit of understanding generalized spaces via their probes, a smooth map 
$\varphi \,:\, \CG \to \CH$ between smooth sets should 
take the $\Sigma$-plots $p$ of $\CG$ to $\Sigma$-plots  $\varphi^\Sigma_\ast p$ of $\CH$, and it 
should be {\it defined} by doing so consistently, namely such that all the following diagrams commute:
\vspace{-2mm} 
\begin{equation}
  \label{HomsOfSmoothsets}
  \begin{tikzcd}[column sep=large, row sep=small]
    &&
    \CG
    \ar[
      rr,
      "{
        \varphi
      }"{description},
      "{
        \scalebox{.7}{
          \color{darkgreen}
          \bf
            a smooth map between smooth sets
        }
      }"{yshift=7pt}
    ]
    &&
    \CH
    \ar[
      r,
      phantom,
      shift left=13pt,
      "{
        \hspace{-1cm}
        \scalebox{.7}{
          \rlap{
            hence an
            element of their Hom-set
          } 
        }
      }"
    ]
    &
    \varphi
      \,\in\,
    \mathrm{Hom}_{\SmoothSets}(\CG, \CH)
    \\
    \Sigma
    \ar[
      from=dd,
      "{
        f
      }"
    ]
    &\mapsto&
    \mathrm{Plots}(\Sigma, \CG)
    \ar[
      dd,
      "{
        f^\ast_{\CG}
      }"
    ]
    \ar[
      rr,
      "{
        \varphi^{\Sigma}_\ast
      }"{description},
      "{
        \scalebox{.7}{
          \color{darkgreen}
          \bf
          takes plots to plots
        }
      }"{yshift=5pt}
    ]
    &&
    \mathrm{Plots}(\Sigma, \CH)
    \ar[
      dd,
      "{
        f^\ast_{\CH}
      }"
    ]
    \\
    &&
    \ar[
      rr,
      phantom,
      "{
        \scalebox{.7}{
          \color{darkorange}
          \bf
          \def\arraystretch{.9}
          \begin{tabular}{c}
            compatibly with 
            \\
            precomposition
          \end{tabular}
        }
      }"
    ]
    &&
    {}
    \\
    \Sigma'
    &\mapsto&
    \mathrm{Plots}(\Sigma', \CG)
    \ar[
      rr,
      "{
        \varphi^{\Sigma'}_\ast
      }"{description}
    ]
    &&
    \mathrm{Plots}(\Sigma', \CH)
  \end{tikzcd}
\end{equation}

\vspace{-2mm} 
\noindent In categorical language, such an assignment $\varphi : \CG \rightarrow \CH$ defines a natural transformation of contravariant functors, and so
\vspace{-2mm}
$$
\text{Hom}_{\SmoothSets}(\CG,\CH)\equiv \text{Nat}(\CG,\CH)\, .
$$ 

\vspace{-2mm} 
\noindent
Hence in order for our theory of general smooth spaces to be self-consistent, it must be the case that the {\it a priori} notion of
plots agrees with the {\it a posteriori} notion. That this is indeed the case is exactly the statement of the famous Yoneda Lemma:

\begin{proposition}[\bf Yoneda Lemma for smooth sets]\label{YonedaLemmaForSmoothSets}
For $\CG \,\in\, \SmoothSets$ and $M \,\in\, \SmoothManifolds$, there is a natural bijection between the $M$-plots
  of $\CG$ \eqref{SmoothSetsDefinedAsSheaves}
  and the smooth maps of smooth sets \eqref{HomsOfSmoothsets} 
  from $y(M)$ \eqref{SmoothManifoldAsASmoothSet} 
  to $\CG$:
  \vspace{-2mm} 
  $$
    \begin{tikzcd}[
      row sep=-3pt, 
      column sep=30pt]
     \mathclap{
       \scalebox{.7}{
         \color{darkblue}
         \bf
         \def\arraystretch{.9}
         \begin{tabular}{c}
           the defining plots
           \\
           of a smooth set by
           \\
           a smooth manifold
         \end{tabular}
       }
     }
     \ar[
       rr,
       phantom,
       "{
         \scalebox{.7}{
           \color{darkgreen}
           \bf
           \def\arraystretch{.9}
           \begin{tabular}{c}
             \color{darkgreen}
             \bf
             end up being
             \\
             equivalent
           \end{tabular}
         }
       }"
     ]
     &&
     \mathclap{
       \scalebox{.7}{
         \color{darkblue}
         \bf
         \def\arraystretch{.9}
         \begin{tabular}{c}
           the actual plots of
           \\
           the smooth manifold
           \\
           regarded as a smooth set
         \end{tabular}
       }
     }
      \\
      \CG(M) \equiv \mathrm{Plots}(M,\CG)
      \ar[
        rr,
        "{
          \sim
        }"
      ]
      &&
      \mathrm{Hom}_{\SmoothSets}\big(y(M), \CG\big)
      \\
      \hspace{1.5cm} \varphi^{M}_\ast(
        \mathrm{id}_M
      )
      &\longmapsfrom&
      \varphi
    \end{tikzcd}
  $$
\end{proposition}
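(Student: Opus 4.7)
The plan is to exhibit an explicit inverse to the evaluation map $\varphi \mapsto \varphi^M_\ast(\mathrm{id}_M)$ and verify that both composites are identities. Given a plot $p \in \CG(M) = \mathrm{Plots}(M,\CG)$, I would define a natural transformation $\tilde p : y(M) \Rightarrow \CG$ by prescribing, for each probe $\Sigma$ and each $\Sigma$-plot $f \in y(M)(\Sigma) = C^\infty(\Sigma,M)$, the element $\tilde p^{\,\Sigma}(f) := f^\ast_\CG(p) \in \CG(\Sigma)$. This makes use of the fact that, although $\CG$ is formally a sheaf on $\CartesianSpaces$, it extends canonically to a sheaf on $\SmoothManifolds$ via the equivalence $\iota^\ast : \mathrm{Sh}(\SmoothManifolds) \xrightarrow{\sim} \mathrm{Sh}(\CartesianSpaces)$ recalled in the excerpt, so that the pullback $f^\ast_\CG : \CG(M) \to \CG(\Sigma)$ is defined along any smooth $f$ from a Cartesian space to a manifold.

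Next, I would check two things. First, that $\tilde p$ is natural: for $g : \Sigma' \to \Sigma$ and $f \in C^\infty(\Sigma,M)$, one has $\tilde p^{\,\Sigma'}(f \circ g) = (f \circ g)^\ast_\CG (p) = g^\ast_\CG\,f^\ast_\CG (p) = g^\ast_\CG \tilde p^{\,\Sigma}(f)$, which is exactly the commutativity demanded in \eqref{HomsOfSmoothsets} and follows from functoriality of $\CG$. Second, that the two composites of the putative bijection are identities. On the one hand, $p \mapsto \tilde p \mapsto \tilde p^M_\ast(\mathrm{id}_M) = (\mathrm{id}_M)^\ast_\CG(p) = p$ by functoriality. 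On the other hand, for any probe $\Sigma$ and plot $f \in C^\infty(\Sigma,M)$ one unravels
\[
    \widetilde{\varphi^M_\ast(\mathrm{id}_M)}^{\,\Sigma}(f)
    \;=\; f^\ast_\CG\!\bigl(\varphi^M_\ast(\mathrm{id}_M)\bigr)
    \;=\; \varphi^\Sigma_\ast\!\bigl(f^\ast_{y(M)}(\mathrm{id}_M)\bigr)
    \;=\; \varphi^\Sigma_\ast(f) \,,
\]
where the middle equality is the naturality square of $\varphi$ applied to the morphism $f : \Sigma \to M$, and the last uses $f^\ast_{y(M)}(\mathrm{id}_M) = \mathrm{id}_M \circ f = f$. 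Hence $\widetilde{\varphi^M_\ast(\mathrm{id}_M)} = \varphi$ as natural transformations.

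The one step that demands care — more a conceptual point than a real obstacle — is the extension of $\CG$ from $\CartesianSpaces$ to $\SmoothManifolds$, needed both to make sense of $\CG(M)$ and of the pullback along arbitrary smooth $f : \Sigma \to M$. This is provided by the topos equivalence $\mathrm{Sh}(\CartesianSpaces) \simeq \mathrm{Sh}(\SmoothManifolds)$ from the excerpt: concretely, $\CG(M)$ is computed as the set of matching families of $\FR^k$-plots over any differentiably-good open cover of $M$ by Cartesian charts, and this assignment is automatically functorial in smooth maps of manifolds by the sheaf condition \eqref{SheafConditionFormSmoothSets}. Once this identification is in hand, the argument above is completely formal and reproduces the classical Yoneda lemma, the naturality of the bijection in both $M$ and $\CG$ following from functoriality of evaluation at $\mathrm{id}_M$.
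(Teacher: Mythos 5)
Your proof is correct: the paper states this proposition without proof, simply invoking it as an instance of the classical Yoneda Lemma, and your argument is precisely the standard proof of that lemma, with the one genuinely setting-specific point (extending $\CG$ from $\CartesianSpaces$ to $\SmoothManifolds$ via the equivalence $\mathrm{Sh}(\CartesianSpaces)\simeq\mathrm{Sh}(\SmoothManifolds)$) correctly identified and handled. Nothing further is needed.
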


\vspace{-2mm} 
\noindent Choosing $\CG=y(N) \in \SmoothSets$ for an arbitrary manifold $N$, the above immediately implies that the ``\textit{ embedding}'' functor 
\vspace{-2mm}
\begin{align*}
   y:  \SmoothManifolds &\longhookrightarrow \SmoothSets \\
        M&\longmapsto y(M):= \mathrm{Hom}_{\SmoothManifolds}(-, M)
\end{align*}

\vspace{-2mm}
\noindent is \textit{fully faithful}. Hence any results and constructions on finite-dimensional smooth manifolds may
equivalently be phrased in terms of their smooth set incarnation (and vice-versa).





\medskip 
Our running motivating example of smooth sets
is the following:
\begin{example}[\bf Field spaces as smooth sets]
Let 
\begin{itemize}[leftmargin=.5cm]
\item $M$ be a smooth manifold modeling {\it spacetime} (possibly with boundary, such as for a closed temporal interval), 
\item $F\rightarrow M$ be a smooth fiber bundle modeling the nature of the {\it dynamical fields} (the {\it field bundle}).
\end{itemize}
Then the (``off-shell'') {\it field configurations} or {\it field histories} should be the smooth sections $\phi \,\in\, \Gamma_{M}(F)$ of the field bundle.
Now for a  variational field theory, the dynamics of the fields should be determined a `smooth' map -- the {\it action functional} -- of the form
\vspace{-2mm} 
\begin{equation}
  \label{ActionFunctionalInMotivation}
  \mathllap{
    \scalebox{.7}{
      \color{darkgreen}
      \bf
      \def\arraystretch{.9}
      \begin{tabular}{c}
        Action 
        \\
        functional
      \end{tabular}
    }
  }
  S
   \;:\;
  \overset{
    \mathclap{
      \raisebox{7pt}{
        \scalebox{.7}{
          \color{darkblue}
          \bf
          \def\arraystretch{.9}
          \begin{tabular}{c}
            off-shell
            \\
            field space
          \end{tabular}
        }
      }
    }
  }{
    \Gamma_M(F) 
  }
  \;\;
    \xrightarrow{
      \quad \mathrm{smooth}\quad 
    }
  \FR
\end{equation}

\vspace{-2mm} 
\noindent in that its critical locus, where
its `\textit{variation vanishes} $\delta S = 0$', consists of the physical (``on-shell'') field histories which satisfy their equations of motion:
\vspace{-2mm} 
\begin{equation}
  \label{CriticalocusInMotivation}
  \overset{
    \mathclap{
      \raisebox{7pt}{
        \scalebox{.7}{
          \color{darkblue}
          \bf
          \def\arraystretch{.9}
          \begin{tabular}{c}
            critical
            \\
            locus
          \end{tabular}
        }
      }
    }
  }{
    \mathrm{Crit}(S)
  }
    \;:=\;
  \overset{
    \mathclap{
      \raisebox{7pt}{
        \scalebox{.7}{
          \color{darkblue}
          \bf
          \def\arraystretch{.9}
          \begin{tabular}{c}
            on-shell
            \\
            field space
          \end{tabular}
        }
      }
    }
  }{
    \Gamma_M(F)_{{}_{\delta S = 0}}
  }
    \;\subset\; 
  \Gamma_M(F)
  \,.
\end{equation}
The problem now is to make precise sense of \eqref{ActionFunctionalInMotivation} and \eqref{CriticalocusInMotivation}. If $\Gamma_M(F)$ were a {\it finite-dimensional manifold} 
then $S$ would be an ordinary smooth function and $\mathrm{Crit}(S)$ its ordinary critical locus of vanishing derivatives. Indeed, many physics textbooks behave as 
if this were the case, but the reality is far from it: If $M$ is not compact -- and for realistic field theories it is not -- then $\Gamma_M(F)$ is not even 
a Fr{\'e}chet manifold! (See around \eqref{FrechetMappingSpaces} below.) Hence what one needs is a good notion of such generalized smooth spaces for which 
\eqref{ActionFunctionalInMotivation} makes rigorous sense while
the naive geometric intuition 
for obtaining \eqref{CriticalocusInMotivation} 
still essentially makes sense.

This is very naturally accomplished with smooth sets:
First, consider the case where the field bundle is a trivial bundle $M \times N$ with fiber $N$. (For instance, for real scalar field theory we would have $N = \FR$.) 
Then the space \eqref{ActionFunctionalInMotivation} of off-shell fields must be just the space of smooth maps $M \to N$. 
Our task is hence to define the $\Sigma$-shaped plots into such a space of maps. But, for each point $\sigma \in \Sigma$,
such a plot should be a smooth map $\phi_\sigma : M \to V$ and the dependence of these maps on $\Sigma$ should better be smooth, too, in that the combined map
\vspace{-2mm} 
$$
  \begin{tikzcd}[row sep=-4pt]
    \mathllap{
        \phi_{(-)}(-)
        \;\;:\;\;
    }
    \Sigma
      \times 
    M 
    \ar[
      rr
    ]
    &&
    N
    \\
    (\sigma,m)
    &\longmapsto&
    \phi_\sigma(m)
  \end{tikzcd}
$$

\vspace{-2mm} 
\noindent is smooth. This indeed defines a smooth set, which we denote by angular brackets:
  \vspace{-2mm} 
\begin{equation}
  \label{SmoothMappingSet}
  \mathllap{
    \scalebox{.7}{
      \color{darkblue}
      \bf
      \def\arraystretch{.9}
      \begin{tabular}{c}
        smooth 
        \\
        mapping space
      \end{tabular}
    }
  }
  \def\arraystretch{1.3}
  \begin{array}{l}
    \mathbold{\Gamma}_M(M \times N)
    \,=\,
    [M,\,N]
    \;\;\;\in\;
    \SmoothSets
    \\
    \mathrm{Plots}\big(
      \Sigma
      ,\,
      [M,\,N]
    \big)
    \;:=\;
    C^\infty\big(
      \Sigma \times M
      ,\,
      N
    \big)
    \mathrlap{\,.}
  \end{array}
\end{equation}

  \vspace{-2mm} 
\noindent  Notice how for $M = \ast$ the point, this recovers \eqref{PlotsOfSmoothManifolds}.

In \eqref{SmoothMappingSet} we may naturally think of a plot $\phi_\Sigma \,\in\, \mathrm{Plots}\big(\Sigma,\, [M,N] \big)$ 
as a {\it (smoothly) $\Sigma$-parametrized family of fields}.
For instance, if $\Sigma = \FR^1_t$ then $\phi_t \equiv \phi_{\FR^1_t}:\FR^1_t\times M\rightarrow N$ is a smooth {\it 1-parameter family of fields}.
Hence for an action functional \eqref{ActionFunctionalInMotivation} to be a smooth map \eqref{HomsOfSmoothsets}, 
it in particular must take such 1-parameter families of fields to smooth 1-parameter families of real numbers:
\vspace{-2mm} 
$$
  \begin{tikzcd}[row sep=-4pt]
    \Gamma_M(M \times N)
    \ar[
      rr,
      "{
        S
      }"
    ]
    &&
    \FR
    \\
    \mathllap{
      S^{\FR^1}_\ast
      \;:\;\;\;
    }
    C^\infty\big(
      \FR_t \times M
      ,\,
      N
    \big)
    \ar[
      rr
    ]
    &&
    C^\infty(\FR_t, \FR)
    \\
    \phi_{t}
    &
    \longmapsto
    &
    \big(
      t_0
      \,\mapsto\,
      S(\phi_{t=t_0})
    \big)
    \mathrlap{\,.}
  \end{tikzcd}
$$

  \vspace{-2mm} 
\noindent  But this now allows to easily define the variation $\delta S$ (see  \eqref{CriticalocusInMotivation})
at $\phi_0$ along the variation encoded in the family $\phi_{t}$ by reducing to the usual derivative of smooth 
functions on $\FR$:
  \vspace{-2mm} 
$$
  \delta_{\phi_{t}} S
  \,=\,
  \frac{\dd}{ \dd t }
  S\big(
    \phi_{t}
  \big)\big\vert_{t = 0}
  \,.
$$

  \vspace{-1mm} 
\noindent The critical locus is then the set

\vspace{-2mm}
$$
\mathrm{Crit}(S)=\Big\{\phi\in C^\infty(M,N) \; \big{|}\; \delta_{\phi_t} S = \frac{\dd}{\dd t} S (\phi_t)|_{t=0}=0 , \hspace{0.3cm}
  \forall \, \phi_t\in  C^\infty(\FR^1_t\times M,\, N) \, \, \, \mathrm{s.t.} \, \, \, \phi_{t=0}=\phi \Big\} \, .
$$ 

  \vspace{0mm} 
\noindent Notice here how the notion of smooth sets allows to perform standard operations of finite-dimensional 
differential geometry {\it plot-wise} on ordinary manifolds, and thereby extend these notions to smooth sets like 
$\Gamma_M(F)$ which by no means have a (finite-dimensional) manifold structure. Moreover, even in the finite-dimensional 
setting, it is often the case that the critical locus cannot be supplied with a smooth submanifold structure. In contrast, 
we shall see that both in the finite and infinite-dimensional (local) field theory setting, the critical locus has 
a natural smooth subset structure (for more details on this see \cref{OnShellFieldsCriticalSetSubsection}).

%
\end{example}

\begin{remark}[\bf Traditional Fr{\'e}chet and jet bundle technology]
The traditional approach to formalizing field spaces \eqref{ActionFunctionalInMotivation} models them as infinite-dimensional Fr{\'e}chet manifolds, which works (only) 
in the (physically often unjustified)
\footnote{For instance, relevant here is compact support in spatial directions vs. compact support in time direction.}
special case that the spacetime manifold $M$ is {\it compact}. In that case (only), the set of smooth functions $C^{\infty}(M,N)$
carries the structure of a Fr\'echet manifold with local charts taking values in certain well-behaved infinite-dimensional vector spaces:
\vspace{-1.5mm} 
\begin{equation}
  \label{FrechetMappingSpaces}
  \mbox{$M$ compact}
  \quad
    \Rightarrow
  \quad
  C^{\infty}(M,N)
  \,\in\,
  \FrechetManifolds
\end{equation}

\vspace{-1.5mm} 
\noindent
 The same is in fact true for 
any set of sections $\Gamma_M(F)$ of a fiber bundle over a compact $M\in \SmoothManifolds$ \cite{KrieglMichor}.

\medskip 
 If $M$ is noncompact, the manifold description via infinite-dimensional charts in this approach becomes very subtle and somewhat unnatural due to the many choices appearing and extra assumptions needed;
 see \cite[\S 10.10]{Michor}\cite[Ch. IX]{KrieglMichor}.

\medskip 
Given such an infinite-dimensional manifold structure, one can give meaning to the smoothness of the functional $S$, and its variation  $\delta S$. Naturally, this approach carries heavy functional analytical baggage, which from the point of 
view of practicing physics is more than often redundant. Furthermore, it is not canonically generalizable to non-compact spacetimes - which are 
the norm in physics.

\medskip 
A partial alternative, technically simpler resolution is achieved by noticing that functionals in fundamental field theories are \textit{local}, 
that is they only depend on the value of a field $\phi \in \Gamma_M(F)$ and its jets at a point $x\in M$. The mathematical 
incarnation of this is that the action may be written as
\vspace{-2mm} 
\begin{align*}
	S: \Gamma_M(F) &\longrightarrow \FR \\
	\phi &\longmapsto \int_{M} L(j^{\infty}\phi), 
\end{align*}

\vspace{-2mm} 
\noindent where $L:J^\infty_M F\rightarrow \Omega^d(M)$ is a `smooth' function on the infinite jet bundle $J^{\infty}_M F$ with values in densities 
$\Omega^{d}(M)$, and $j^{\infty}:\Gamma_M(F) \rightarrow \Gamma_M (J^\infty F)$ being the infinite jet prolongation. 
The idea is that, even though the infinite jet bundle is necessarily an infinite-dimensional space, it is much easier 
to describe the actual field space $\Gamma_M(F)$.
\footnote{There are essentially two distinct ways to consider $J^\infty_M F$ as a smooth space, either as a genuine Fr\'{e}chet manifold or as a purely formal object. 
We will recall the former description before embedding in our picture of smooth sets, and will comment on the relation with the latter.}
The usual manipulations of variational calculus are then delegated to analogous operations in the ``variational bicomplex" (see \cref{VariationalBicomplexSection})  
of differential forms on $J^\infty_M F$, with the end products `pulled-back' 
to the field space $\Gamma_M(F)$. This bypasses most of usual the difficulties of the smooth structure on $\Gamma_M(F)$, 
essentially by avoiding defining it altogether.\end{remark}

In the present text, we wish to advocate an even more convenient setting to make sense of the smooth structure of infinite-dimensional 
field spaces entirely \textit{operationally} by consistently answering the question:
\begin{quote}
 \textit{``What are the ways we can smoothly probe the would-be space with the simple probe Cartesian spaces?"}
 \end{quote} 
 This is naturally achieved by considering field spaces as objects in the sheaf topos of smooth sets over the site of Cartesian spaces.
This avoids most of the functional analysis alluded to, faithfully subsumes and combines the above two approaches (along with the 
corresponding variational calculus), has many positive categorical
properties and, furthermore, naturally generalizes to include fermionic fields and fields with non-trivial internal symmetries
such as higher gauge fields (see \cref{outlook}).

\subsection{Field spaces, tangent bundles and diffeomorphisms}
\label{SmoothSetsSection}

Having intuitively motivated the relevance and definition of smooth sets, here we transition to a more precise and rigorous description. 
To start off, we recall how Fr\'{e}chet manifolds (see \cite{Hamilton}\cite{DGV}) embed into smooth sets. Recalling further 
the Exponential Law property of Fr\'{e}chet mapping spaces between manifolds (with compact domain) and its embedding to 
smooth sets, we show how one can arrive at the same object by working directly in smooth sets. This will allow for a general
definition of a smooth mapping space between any two smooth sets, and in particular for a smooth set structure on sections of 
bundles (over potentially non-compact spacetimes). Using the smooth structure on the space of fields, we proceed to rigorously 
define tangent vectors, smooth vector fields, diffeomorphisms, and furthermore  differential forms on the field space as smooth
fiber-wise linear maps out of its tangent bundle. The discussion will fit into a more general framework \cite{GSS-2}.

\begin{proposition}
[\bf Fr\'{e}chet manifolds as smooth sets {\cite{Losik92}\cite{Fro81}}]
\label{FrManRestrEmbedding}
Consider the category $\FrechetManifolds$ of (infinite-dimensional) Fr\'echet manifolds.
The restricted  embedding along 
\vspace{-1mm} 
$$
{\rm CartSp}\longhookrightarrow {\rm Man}\longhookrightarrow {\rm FrMan}
$$

\vspace{-2mm} 
\noindent
defines a fully faithful embedding
\vspace{-2mm} 
\begin{align}
y: {\rm FrMan}&\longhookrightarrow {\rm SmoothSet}
\\[-2pt]
G &\longmapsto {\rm Hom}_{{\rm FrMan}}(-,G)|_{{\rm CartSp}}\, , \nn
\end{align}

\vspace{-2mm} 
\noindent where on the right-hand side we consider \textit{smooth Fr\'echet} maps.
\end{proposition}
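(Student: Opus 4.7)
\medskip

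\noindent
\textbf{Proof plan.} The statement splits into three parts: (a) for each Fr{\'e}chet manifold $G$ the presheaf $y(G)$ is a sheaf, i.e. lies in $\SmoothSets$; (b) $y$ is faithful; (c) $y$ is full. I would organize the argument as follows.

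\medskip

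\noindent
First I would verify the sheaf condition. Given a differentiably good open cover $\{U_i \hookrightarrow \FR^k\}_{i \in I}$ and a compatible family of Fr{\'e}chet-smooth maps $p_i : U_i \to G$, one must produce a unique Fr{\'e}chet-smooth map $p : \FR^k \to G$ restricting to each $p_i$. As a set map this is determined by gluing, and smoothness of a map into a Fr{\'e}chet manifold is a local property of the \emph{domain} (one checks it in charts around each point of $\FR^k$, each of which sits in some $U_i$). Hence $y(G) \in \SmoothSets$.

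\medskip

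\noindent
For faithfulness, suppose $f, g : G \to G'$ are Fr{\'e}chet-smooth with $y(f) = y(g)$. Evaluating the natural transformation on $\FR^0 \in \CartesianSpaces$ gives $\mathrm{Hom}_{\FrechetManifolds}(\FR^0, G) = \mathrm{pt}(G)$, the underlying set of $G$; naturality forces $f$ and $g$ to agree on all points of $G$, hence as set maps, hence as Fr{\'e}chet-smooth maps.

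\medskip

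\noindent
The main work lies in fullness. Given a natural transformation $\eta : y(G) \Rightarrow y(G')$, define the underlying set map $f := \eta_{\FR^0} : G \to G'$. For any Fr{\'e}chet-smooth plot $p : \FR^k \to G$ and any point $t \in \FR^k$, naturality applied to the inclusion $\iota_t : \FR^0 \hookrightarrow \FR^k$ gives $\eta_{\FR^k}(p)(t) = f\bigl(p(t)\bigr)$, so $\eta_{\FR^k}(p) = f \circ p$ as set maps. Since $\eta_{\FR^k}(p) \in \mathrm{Hom}_{\FrechetManifolds}(\FR^k, G')$ by hypothesis, the composite $f \circ p$ is Fr{\'e}chet-smooth for every Fr{\'e}chet-smooth map $p : \FR^k \to G$ and every $k \geq 0$. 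The remaining (and main) step is to deduce that $f$ itself is Fr{\'e}chet-smooth; this is the Boman-type characterization of smoothness for Fr{\'e}chet manifolds (cf. the convenient calculus of Kriegl--Michor, or the original argument of Losik/Fr{\"o}licher cited with the statement): a set map between Fr{\'e}chet manifolds is smooth iff its composition with every smooth map out of a Cartesian space is smooth. In fact it suffices to test against smooth \emph{curves} $\FR \to G$, and this curve-detection property is the technical heart of the proposition.

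\medskip

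\noindent
The main obstacle is precisely this last deduction: reducing Fr{\'e}chet-smoothness of $f$ to the ``takes smooth plots to smooth plots'' condition. Once that Boman-type lemma is invoked, the equality $y(f) = \eta$ is automatic by construction of $f$, and the proposition follows.
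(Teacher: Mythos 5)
The paper does not prove this proposition at all: it is stated as a cited result from Losik and Fr\"olicher, so there is no in-text argument to compare against. Your outline is the standard proof those references give, and you have correctly isolated the one genuinely nontrivial ingredient, namely the Boman/Fr\"olicher-type detection of smoothness by smooth plots (equivalently smooth curves) for maps between Fr\'echet manifolds --- the fact that conveniently-smooth coincides with Fr\'echet-smooth is special to Fr\'echet (as opposed to general locally convex) model spaces, which is exactly why the proposition is stated for $\FrechetManifolds$ and not more broadly. The routine parts (sheaf condition by locality of smoothness in the domain, faithfulness via evaluation on $\FR^0$, and recovering $\eta$ as postcomposition with $f$ via naturality under point inclusions) are all handled correctly.
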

This result says that the infinite-dimensional field spaces appearing in physics, at least with the domain being
compact and hence equipped with a Fr\'{e}chet manifold structure, may naturally be viewed as smooth sets. Moreover, 
given two finite-dimensional manifolds $M,N\in \SmoothManifolds$, with $M$ 
compact, the Fr\'echet manifold mapping space $C^{\infty}(M,N)_{\FrechetManifolds}$ satisfies the crucial property 
for the purposes of variational calculus, known as the ``exponential law" or ``internal hom" property 
\cite[Cor. 11.9]{Michor}\cite[Thm 42.14]{KrieglMichor}:
\footnote{However, it does not define an internal hom functor on finite-dimensional manifolds, since it takes 
values inside the larger Fr\'echet category instead. Furthermore, it is not defined on arbitrary Fr\'echet 
manifolds either.}


\begin{proposition}[\bf Exponential Law for Fr\'echet mapping space]
Let $M,N$ be finite-dimensional manifolds, with $M$ compact. Then there is a canonical bijection 
\vspace{-2mm} 
$$
{\rm Hom}_{{\rm FrMan}}\big(S,C^{\infty}(M,N)_{{\rm FrMan}}\big) 
\cong_{\text{Set}} {\rm Hom}_{{\rm Man}}(S\times M, N)\, ,
$$

\vspace{-2mm} 
\noindent
for any smooth manifold $S\in \SmoothManifolds$, and moreover naturally in $S$.
\end{proposition}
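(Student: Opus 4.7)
The plan is to construct the bijection by the evident ``currying'' and ``uncurrying'' operations, show they land in the asserted smoothness classes, verify that they are mutually inverse on underlying sets, and then check naturality formally.

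First, I would define the two candidate maps. Given a Fr\'echet-smooth $f : S \to C^{\infty}(M,N)_{\FrechetManifolds}$, set
\[
  \Phi(f)(s,m) \;:=\; f(s)(m), \qquad (s,m) \in S \times M,
\]
and given an ordinary smooth $g : S \times M \to N$, set
\[
  \Psi(g)(s) \;:=\; g(s,-) \;\in\; C^{\infty}(M,N).
\]
The set-theoretic identities $\Phi \circ \Psi = \mathrm{id}$ and $\Psi \circ \Phi = \mathrm{id}$ are immediate from the definitions, so the whole content is to check that $\Phi(f)$ is jointly smooth on $S \times M$ whenever $f$ is Fr\'echet-smooth, and conversely that $\Psi(g)$ is Fr\'echet-smooth whenever $g$ is smooth.

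For $\Phi$, I would argue locally: pick charts around $s_0 \in S$ and around the compact image $f(s_0)(M) \subset N$, use a tubular neighborhood / exponential-map trivialization to present the Fr\'echet chart of $C^{\infty}(M,N)$ at $f(s_0)$ as an open subset of the Fr\'echet space $\Gamma(f(s_0)^{\ast}TN)$, whose topology is given by the standard $C^k$-seminorms on the compact manifold $M$. In these charts $f$ becomes a smooth Fr\'echet map into a space of sections, and $\Phi(f)$ becomes evaluation of this section at $m \in M$. Smoothness of evaluation and of the chart change (both in $s$ and in $m$) then yields joint smoothness of $\Phi(f)$; the key analytic input is that on a compact base, convergence in the Fr\'echet topology controls all partial derivatives in $m$ uniformly, so derivatives in $s$ and $m$ interchange and combine as expected.

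For $\Psi$, I would reduce again to charts via the same tubular-neighborhood trivialization; the claim then becomes that if $g : S \times M \to V$ is smooth into a Fr\'echet space of sections, then $s \mapsto g(s,-)$ is Fr\'echet-smooth into $C^{\infty}(M,V)$. This is the classical Cartesian-closedness of the convenient/Fr\'echet calculus for compact domain, proved by estimating the relevant $C^k$-seminorms of difference quotients using compactness of $M$ and uniform continuity of all partial derivatives of $g$ on compact subsets of $S \times M$. This is the step I expect to be the main obstacle, as it is precisely where the ``analytic baggage'' of the Fr\'echet setting intrudes; fortunately it is standard and the cited references \cite[Cor.~11.9]{Michor}, \cite[Thm.~42.14]{KrieglMichor} provide the full estimates.

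Finally, naturality in $S$ is formal: for a smooth $h : S' \to S$, both $\Phi$ and $\Psi$ intertwine precomposition with $h$ and precomposition with $h \times \mathrm{id}_M$, since $\Phi(f \circ h)(s',m) = f(h(s'))(m) = \Phi(f)(h(s'),m)$, and dually for $\Psi$. This gives the asserted natural bijection.
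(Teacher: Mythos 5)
The paper does not actually prove this proposition; it states it as a known result and cites \cite[Cor.~11.9]{Michor} and \cite[Thm.~42.14]{KrieglMichor}, which is where the analytic content lives. Your outline is the standard argument for exactly that result: the currying/uncurrying maps and the naturality check are correct and formal, and you have correctly isolated the only nontrivial step (joint smoothness of the uncurried map, and Fr\'echet-smoothness of the curried map, via tubular-neighborhood charts and $C^k$-seminorm estimates on the compact $M$) as the part that must be imported from the cited references. Since you defer precisely that step to the same sources the paper cites rather than reproving it, your proposal is consistent with, and no less complete than, what the paper itself provides.
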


 Explicitly, $f\in \text{Hom}_{\FrechetManifolds}\big(S,C^{\infty}(M,N)_{\FrechetManifolds}\big)$ is mapped to 
 $\hat{f} \in \text{Hom}_{\SmoothManifolds}(S\times M, N)$ where 
 $\hat{f}(s,m):= f(s) (m)$, and the theorem guarantees this assignment is in fact a bijection. Intuitively, this says that smooth maps 
$S\rightarrow C^{\infty}(M,N)_{\FrechetManifolds}$ into the mapping space are the same as smoothly $S$-parametrized maps 
from $M$ to $N$. In particular, it is true for all $S\in \text{CartSp}\subset \SmoothManifolds$, 
and so combined with Prop. \ref{FrManRestrEmbedding} translates to the statement that 
\vspace{-2mm} 
\begin{align}\label{FrechetMappingEmbedded}
	y\big( C^{\infty}(M,N)_{\FrechetManifolds}\big)  \cong_{\SmoothSets} \text{Hom}_{\SmoothManifolds}(-\times M,N)
\end{align}

\vspace{-2mm} 
\noindent as smooth sets.	
On the other hand, as with any (pre)sheaf category, $\SmoothSets$ has an honest internal hom functor 
\vspace{-2mm} 
$$
[-,-]:\SmoothSets^{op}\times \SmoothSets \longrightarrow \SmoothSets\, .
$$ 

\begin{definition}[\bf  \bf Smooth set internal hom]
\label{SmoothSetinternalHom}
Let $\CG,\CH \in \SmoothSets$, the \textit{smooth mapping set} $[\CG,\CH] \in \SmoothSets$ is defined by 
\vspace{-2mm} 
\begin{align}[\CG,\CH](\FR^k):= \mathrm{Hom}_{\SmoothSets}\big(y(\FR^k)\times \CG, \CH\big) \, .
\end{align}
\end{definition}
The internal hom property of $[-,-]$ on representable sheaves is simply the Yoneda Lemma \ref{YonedaLemmaForSmoothSets}
\vspace{-2mm} 
$$
\mathrm{Hom}_{\SmoothSets}\big(y(\FR^k),[\CG,\CH]\big)\cong_{\mathrm{Set}}[\CG,\CH](\FR^k):=  \mathrm{Hom}_{\SmoothSets}\big(y(\FR^k)\times \CG, \CH\big)\, . 
$$

\vspace{-2mm} 
\noindent On general non-representable smooth sets, it follows since any such is the colimit of representables, 
and that $\mathrm{Hom}$ functors preserve (co)limits. Arguments of this form are extremely useful and standard in categories of 
(pre)sheaves \cite{MacLaneMoerdijk94}, and will be used throughout these series. For the convenience of the reader, we include the 
detailed calculation in this instance. Since any smooth set $\CX$ may be written as a colimit
$\CX\cong \mathrm{colim}_{i}^{\SmoothSets}\,y(\FR^{k_i})$ (e.g. \cite[p. 42]{MacLaneMoerdijk94}), we have 
\vspace{-2mm} 
\begin{align*}
\mathrm{Hom}_{\SmoothSets}\big(\CX,[\CG,\CH]\big)&\cong\mathrm{Hom}_{\SmoothSets}
\big(\mathrm{colim}_{i}^{\SmoothSets}\,y(\FR^{k_i}),[\CG,\CH]\big) \cong \mathrm{lim}_i^{\mathrm{Set}} \, \mathrm{Hom}_{\SmoothSets}\big(y(\FR^{k^i}), [\CG, \CH] \big) 
\\ 
&\cong \mathrm{lim}_i^{\mathrm{Set}} \, \mathrm{Hom}_{\SmoothSets}\big(y(\FR^{k^i})\times \CG, \CH\big) \cong 
\mathrm{Hom}_{\SmoothSets}\big(\mathrm{colim}_i^{\SmoothSets} \,y(\FR^{k^i})\times \CG, \CH\big) \\  &\cong \mathrm{Hom}_{\SmoothSets}\big(\CX\times\CG,\CH\big) 
\end{align*}

\vspace{-2mm} 
\noindent where in the first line we used the fact that $\mathrm{Hom}$ functors preserve (co)limits, 
while in the second line we used the internal hom property on representables and that colimits commute with products in (pre)sheaf categories.
\footnote{Strictly speaking, the canonical identification
$\mathrm{colim}_i^{\SmoothSets} \big(\,y(\FR^{k^i})\times \CG \big) \cong \mathrm{colim}_i^{\SmoothSets} \,y(\FR^{k^i})\times \CG$
requires justification. Namely, it is easy to see it holds if the colimit is taken in the presheaf category, since therein it is computed probe-wise as a colimit in Set, where commuting colimits with products is obvious. However, colimits in sheaf categories are given the sheafification of the corresponding presheaf colimit (Rem. \ref{LimitsAndColimitsExist}). The result then follows by recalling that the sheafication functor $L$ commutes with products and that $L \CG \cong \CG $ for $\CG$ a sheaf  (Rem. \ref{Sheafification}).}

\medskip 
Restricting our attention to the case where $\CG=y(M)$ and $\CH=y(N)$ are representable by the 
finite-dimensional manifolds above, this gives another notion of a smooth structure $\big[y(M),y(N)\big]$
on the smooth mapping set between two manifolds, which exactly coincides with that of the intuitive
description \eqref{SmoothMappingSet}. 
In the case where $M$ is compact, it also recovers that of 
$C^\infty(M,N)_\FrechetManifolds$ of \eqref{FrechetMappingSpaces} under the Yoneda embedding.

\begin{lemma}[{\bf Yoneda preserves Fr\'echet mapping space}]
Let $y(M),y(N)\in\SmoothSets$ for $M,N\in \SmoothManifolds$, with $M$ compact. The embedding of the
Fr\'echet manifold $C^{\infty}(M,N)_\FrechetManifolds$ in smooth sets is isomorphic to the mapping 
smooth set $\big[y(M),y(N)\big]$, 
\vspace{-5mm} 
\begin{align}
y\big(C^{\infty}(M,N)_{{\rm FrMan}}\big)\cong_{{\SmoothSets}} \big[y(M),y(N)\big]\, .
\end{align}
\begin{proof} We have the sequence of isomorphisms (of sets)
\vspace{-2mm} 
\begin{align*}\mathrm{Hom}_{\SmoothSets}\Big(y(\FR^k),\,\big[y(M),y(N)\big]\Big)&
\cong \mathrm{Hom}_{\SmoothSets}\Big(y(\FR^k)\times y(M), y(N)\Big) \\ 
	&\cong \mathrm{Hom}_{\SmoothSets}\Big(y(\FR^k\times M), y(N)\Big)\\
	&\cong \mathrm{Hom}_{\SmoothManifolds}\big(\FR^k\times M, N\big)\, ,
\end{align*}

\vspace{-2mm} 
\noindent naturally in $\FR^k$, where in the first line we use the Yoneda Lemma \ref{YonedaLemmaForSmoothSets} / internal hom property, 
in the second that the Yoneda embedding preserves finite products, and in the third 
by the faithfulness of the Yoneda embedding. The statement follows by property \eqref{FrechetMappingEmbedded} of 
 the Fr\'echet mapping space 
 \vspace{-2mm} 
 $$	
 y\big( C^{\infty}(M,N)_{\FrechetManifolds}\big)  \cong_{\SmoothSets} \mathrm{Hom}_{\SmoothManifolds}(-\times M,N)\, .
 $$ 

\vspace{-6mm} 
\end{proof}


\end{lemma}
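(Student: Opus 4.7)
The strategy is to compare the two smooth sets plot-by-plot and invoke the Yoneda Lemma (Prop. \ref{YonedaLemmaForSmoothSets}) to promote a natural bijection of plot sets to an isomorphism in $\SmoothSets$. Concretely, it suffices to exhibit, for every Cartesian space $\FR^k \in \CartesianSpaces$, a bijection
\[
  \mathrm{Hom}_{\SmoothSets}\!\big(y(\FR^k),\,[y(M),y(N)]\big)
  \;\cong\;
  \mathrm{Hom}_{\SmoothSets}\!\big(y(\FR^k),\,y(C^{\infty}(M,N)_{\FrechetManifolds})\big),
\]
natural in $\FR^k$.

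The plan for the left-hand side is a three-step unfolding using tools already established in the excerpt. First, I would use the internal hom property (Def. \ref{SmoothSetinternalHom}), which by the Yoneda Lemma gives $\mathrm{Hom}_{\SmoothSets}(y(\FR^k),[y(M),y(N)]) \cong \mathrm{Hom}_{\SmoothSets}(y(\FR^k)\times y(M),\,y(N))$. Second, since the Yoneda embedding $y\colon \SmoothManifolds \hookrightarrow \SmoothSets$ preserves finite products, $y(\FR^k)\times y(M) \cong y(\FR^k \times M)$. Third, full faithfulness of $y$ (a consequence of Prop. \ref{YonedaLemmaForSmoothSets}) converts this to $\mathrm{Hom}_{\SmoothManifolds}(\FR^k\times M,\,N)$.

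For the right-hand side, I would invoke the characterization \eqref{FrechetMappingEmbedded} of the embedded Fr\'echet mapping space, namely $y\big(C^{\infty}(M,N)_{\FrechetManifolds}\big)\cong \mathrm{Hom}_{\SmoothManifolds}(-\times M, N)$ as a smooth set. Evaluated on $\FR^k$, this is exactly $\mathrm{Hom}_{\SmoothManifolds}(\FR^k\times M,\,N)$. Chaining with the three steps above yields the desired bijection, with each step being manifestly natural in $\FR^k$ (the functoriality in $\FR^k$ comes from precomposition with smooth maps $\FR^{k'} \to \FR^k$, which is preserved by each of the three isomorphisms). A final appeal to the Yoneda Lemma for smooth sets upgrades this natural bijection of plot sets to the asserted isomorphism in $\SmoothSets$.

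The main conceptual point -- which is perhaps the only place one has to be careful -- is that the statement \eqref{FrechetMappingEmbedded} already packages the content of the Fr\'echet exponential law into the form needed here; without it, one would be comparing Fr\'echet-smooth maps $\FR^k \to C^\infty(M,N)_{\FrechetManifolds}$ to ordinary smooth maps $\FR^k \times M \to N$ and would need to reprove that identification. Given \eqref{FrechetMappingEmbedded}, however, the rest of the argument is a pure diagram chase through universal properties of the Yoneda embedding and the internal hom, with no analytic input required. Compactness of $M$ enters only through its role in guaranteeing that $C^{\infty}(M,N)_{\FrechetManifolds}$ is a Fr\'echet manifold and that the exponential law holds in the form used to derive \eqref{FrechetMappingEmbedded}.
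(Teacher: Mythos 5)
Your proposal is correct and follows essentially the same route as the paper's proof: unfold the plots of $[y(M),y(N)]$ via the internal hom property, product preservation, and full faithfulness of the Yoneda embedding, then match against \eqref{FrechetMappingEmbedded} and conclude by naturality in $\FR^k$. No gaps.
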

This result (also proved in \cite[Lem. A.1.7]{Waldorf} by more explicit means) exhibits one of the powers of the topos of smooth sets. 
It means that we can arrive at the appropriate smooth structure on a mapping set of manifolds by the simple 
abstract formula \eqref{SmoothSetinternalHom}, bypassing all the functional analytical technology 
involving Fr\'echet manifold theory. Furthermore, the smooth set $[y(M),y(N)]$ exists and satisfies the
internal hom / Exponential Law property even if $M$ is not compact, in which case $C^{\infty}(M,N)$ can no longer be equipped with 
a Fr\'echet structure. 

\begin{example}[\bf Vector-valued field space]
\label{VectorFieldTheoryFieldSpace}
In this case, the field bundle is $F=M\times W$ where $W$ a finite-dimensional real vector space (e.g., $W=\FR$ for scalar field theory). 
The set of field configurations is given by smooth $W$-valued functions, $\Gamma_M(M\times W)= C^\infty(M,W)$. 
It follows that the $\FR^k$-plots of the smooth set of fields $\CF=[y(M),y(W)]$
are given by smooth maps of manifolds
\vspace{-1mm} 
$$
\phi^k: \FR^k\times M \longrightarrow W\, .
$$
\end{example}

To stress the point further, a smooth structure is naturally defined on more general 
field spaces consisting of sections of an arbitrary fiber bundle over a potentially noncompact base. 

\begin{definition}[\bf  Smooth sets of smooth sections]
\label{SectionsSmoothSet}
Let $\pi: F\rightarrow M$ be a fiber bundle of smooth manifolds, with set of smooth sections $\Gamma_{M}(F)$. 
The smooth set of sections $\CF=\mathbold{\Gamma}_M(F)\in \SmoothSets$ is defined by
\vspace{-1mm} 
\begin{align}
	\CF(\FR^k)\equiv \mathbold{\Gamma}_M(F) (\FR^k):=\{\phi^k:\FR^k\times M \rightarrow F \; | \; \pi\circ \phi^k = \pr_2 \}\, ,
\end{align}

\vspace{-1mm} 
\noindent where $\FR^k\in \mathrm{CartSp}$ and $\pr_2 :\FR^k\times M\rightarrow M$ is the projection onto M. That is,
$\phi^k: \FR^k\times M \rightarrow F$ such that
\vspace{-2mm} 
	\[ 
\xymatrix@=1.2em{ &&  F \ar[d]^{\pi}
	\\ 
	\FR^k\times M \ar[rru]^-{\phi^k} \ar[rr]^-{\pr_2} && M
}   
\]

\vspace{-2mm} 
\noindent commutes, and so equivalently $\CF(\FR^k)\cong \Gamma_{M\times \FR^k}(\pr_2^* F)$.
\end{definition}

In simple words, $\CF(\FR^k)$ is the set of smoothly $\FR^k$-parametrized sections of $F\rightarrow M$. Note that 
set theoretically, there is a natural injection \footnote{This is essentially the internal hom of the category Set of sets.}
\vspace{-2mm} 
\begin{align*}
\CF(\FR^k)&\longhookrightarrow \mathrm{Hom}_{\mathrm{Set}}\big(\FR^k, \Gamma_M(F)\big)
\\[-2pt]
\phi^k &\longmapsto \hat{\phi}^k
\end{align*}

\vspace{-2mm} 
\noindent  where $\hat{\phi}^k(x)=\phi^k(x,-)\in \Gamma_M(F)$. Hence $\CF(\FR^k)$ \textit{defines} the subset of `smooth maps' 
from the set $\FR^k$ into the set $\Gamma_M(F)$. Keep in mind that this latter interpretation only holds for smooth
sets that have an underlying notion of set of points, i.e., ``diffeological spaces"\footnote{Equivalently, these may be characterized as the ``concrete'' sheaves among all smooth sets \cite{BH11}.}, while the definition can in fact be applied 
to sections of arbitrary `bundle' $\CG\rightarrow \CH$ of smooth sets.

\medskip 
A more conceptual way to arrive at the above 
definition is the following. Note that the \textit{set} of smooth sections $\Gamma_M(F)\in \mathrm{Set}$ may be identified 
with the fiber product 
$\Gamma_M(F) \cong C^\infty(M,F)\times_{C^{\infty}(M,M)} \{\mathrm{id}_M\}$ in Set, i.e., the pullback set 
\vspace{-2mm} 
\[
\xymatrix@=1.6em  {\Gamma_M(F) \ar[d] \ar[rr] &&   C^{\infty}(M,F) \ar[d]^{\pi_*} 
	\\ 
\{\mathrm{id}_M\} \, \ar@{^{(}->}[rr]  && C^{\infty}(M,M)
\, . } 
\]

\vspace{-2mm} 
\noindent Instead, we may remember the smooth structure of all the objects appearing in the diagram and compute the pullback in 
smooth sets, as $[M,F]\times_{[M,M]} y(\{\mathrm{id}_M\})$.  Pullbacks in any (pre)sheaf category are computed objectwise 
in Set, i.e.,
\vspace{-2mm} 
\[
\xymatrix@=1.6em  {[M,F]\times_{[M,M]} y(\{\mathrm{id}_X\}) \big(\FR^k\big) \ar[d] \ar[rr] &&   [M,F](\FR^k) \ar[d]^{\pi_*} 
	\\ 
	\{\mathrm{id}_M\}(\FR^k) \ar@{^{(}->}[rr]  && [M,M](\FR^k)
	 } 
\]

\vspace{-2mm} 
\noindent for any $\FR^k\in \mathrm{CartSp}$. By calculating the fiber product above in Set, for each probe $\FR^k$, it is 
immediate that the result coincides with 
$\mathbold{\Gamma}_M(F)$ of Def. \ref{SectionsSmoothSet}. The top arrow exhibits the smooth set of sections $\CF$ as a smooth subspace of
the full mapping space 
\vspace{-2mm}
$$
\CF\longhookrightarrow [M,F]\, ,
$$ 

\vspace{0mm}
\noindent as intuitively expected from the point-set inclusion $\Gamma_M(F)\hookrightarrow C^\infty(M,F)$.

\begin{remark}[\bf Limit and colimit constructions exist]\label{LimitsAndColimitsExist} In fact all (small) limits and colimits exist in any sheaf category, and in 
particular smooth sets. Limits are computed plot-wise in Set, as in the above example, while for colimits one has to further
sheafify the resulting presheaf (Rem. \ref{Sheafification}). This means that constructions that do not exist in 
finite- or infinite-dimensional manifolds that appear in physics, such as intersections and quotients, exist after 
embedding fully faithfully to this larger, better-behaved category.
\end{remark}

\begin{remark}[\bf Field space as a sheaf of sheaves]\label{FieldSpaceAsASheafOfSheaves}
As is well known, an arbitrary fiber bundle $F\rightarrow M$ might have no \textit{global} sections (and hence also no global $\FR^k$-parametrized sections). 
Thus, strictly speaking,  Def. \ref{SectionsSmoothSet} as stated is potentially null. In reality, what one should keep in mind is the following: 
The assignment of (local) sections of $F$ over $M$ forms a sheaf on $M$ in the petit sense of Rem. \ref{GrosVsPetitSheaves},
\vspace{-2mm} 
\begin{align*}
\Gamma_{(-)}(F): \mathrm{Open}(M)&\longrightarrow \mathrm{Set} \\
U\subset M &\longmapsto \Gamma_U(F)\, .
\end{align*}

\vspace{-2mm} 
\noindent Def. \ref{SectionsSmoothSet} may then be applied locally on $M$, i.e., on each set of local sections $\Gamma_U(F)$, defining 
a sheaf \footnote{That is, a sheaf with respect to the product coverage on the site $\mathrm{Open}(M)\times \CartesianSpaces$.}
\vspace{-2mm}
\begin{align*}
\mathbold{\Gamma}_{(-)}(F): \mathrm{Open}(M)\times \CartesianSpaces &\longrightarrow \mathrm{Set} \\
U\times \FR^k &\longmapsto \mathbold{\Gamma}_U (F) (\FR^k)\cong \Gamma_{U\times\FR^k}(\pr_2^*F) \, .
\end{align*}

\vspace{-1mm}
\noindent This functor may be equivalently thought of as a \textit{petit} sheaf on $M$ valued in the \textit{gros} sheaves of  $\SmoothSets$, 
or as a (gros) sheaf on $\CartesianSpaces$ valued in (petit) sheaves on $M$. The takeaway for the rest of the manuscript is that when we talk 
about the `smooth set of sections' of a field bundle, we will implicitly mean local sections over an arbitrary open set $U\subset M$. 
Indeed, all statements and definitions regarding the smooth set of fields $\CF=\mathbold{\Gamma}_M(F)$ we make apply (functorially) for each 
open set $U\subset M$. For this reason, and to avoid excess confusion for readers new to these concepts, we will be tacitly suppressing the
mentioning of local sections and the corresponding petit sheaf aspect -- recalling it only when strictly necessary.
\end{remark}

The smooth structure on the field space can be used to define many smooth geometrical constructions of interest. 
For instance, the natural evaluation map $\mathrm{ev}: \Gamma_M(F) \times M \rightarrow F $, 
$(\phi,x) \mapsto \phi(x)$ extends (uniquely) to a smooth map\footnote{In fact, such an evaluation map $\mathbold{\Gamma}_{\CH}(\CG) \times \CH \rightarrow \CG$ exists for any `bundle' $\CG\rightarrow \CH$ of smooth sets and its corresponding smooth set of sections. }
\vspace{-1mm} 
\begin{align}\label{SmoothEvaluationMap}
\mathrm{ev}\,:\, \CF\times y(M) &\longrightarrow y(F) 
\\[-2pt]
(\phi^k, x^k) &\longmapsto \phi^k \circ (\id_{\FR^k}, x^k) \in y(F)(\FR^k) \, \nn 
\end{align}

\vspace{-1mm} 
\noindent
for any pair of plots $\phi^k \in \CF(\FR^k)$ and $x^k\in y(M)(\FR^k)=C^\infty(\FR^k, M)$. Moreover, smooth 
real-valued functions on the field space $\CF$ are defined as maps of smooth sets
\vspace{-2mm} 
\begin{align}\label{SmoothFunctionsOnFieldSpace}
C^\infty(\CF):= \mathrm{Hom}_{\SmoothSets}\big(\CF,y(\FR)\big) \, .
\end{align}

\vspace{-2mm} 
\noindent In other words, a smooth function $f\in C^\infty(\CF)$ defines a map of sets 
\vspace{-2mm} 
\begin{align*}
f \,:\, \Gamma_M(F) &\longrightarrow \FR
\\[-2pt]
\phi &\longmapsto f(\phi)
\end{align*}

\vspace{-2mm} 
\noindent which furthermore maps \textit{smooth} $\FR^k$-plots of fields to \textit{smooth} functions in 
$C^\infty(\FR^k, \FR)$, naturally with respect to pullbacks by maps of probes. 
The algebra structure of $C^\infty(\CF)$ follows plot-wise from that of the target $y(\FR)$. Furthermore, since 
for any $\FR^1$-plot 
of fields\footnote{We will denote generic $\FR^k$-plots by $\phi^k$. In the case of $k=1$, we will denote $\FR^1$-plots 
by $\phi_t=\phi^1$ since these will play a special role throughout. This is a slight abuse of notation, common in the 
physics literature, whereby the symbol $t$ does \textit{not} denote evaluation of the function $\phi^1$ at $t\in \FR$, 
but instead is simply a placeholder. Evaluation at $t_0\in \FR$ in this notation will be denoted by 
$\phi_{t=t_0}:=\phi^1(t_0,-)$.} $\phi_t\in \mathbold{\Gamma}_M(F)(\FR^1)\cong \mathrm{Hom}_{\SmoothSets}\big(y(\FR^1), \mathbold{\Gamma}_M(F)\big)$ 
the composition $f\circ \phi_t $ defines a smooth map $\FR \rightarrow \FR$, there is an induced derivation
\vspace{-2mm} 
\begin{align}\label{SmoothPathInducedDerivation}
    C^\infty(\CF) &\longrightarrow \FR \\
    f &\longmapsto \partial_t(f\circ \phi_t) \vert_{t=0} \, . \nn 
\end{align} 

\vspace{-2mm} 
\noindent This is suggestive of a notion of tangent vectors on field spaces via the use of $\FR^1$-plots of fields, e.g.,
as motivated in \cite{DF99}.

\begin{example}[\bf Tangent vectors on field space]\label{TangentVectorsOnFieldSpace} 
Let $\phi_t\in \mathbold{\Gamma}_M(F)(\FR^1)$ be a $\FR^1$-plot of fields, i.e.,
an $\FR^1$-parametrized section $\phi_t:\FR^1\times M \rightarrow F$. For each $x\in M$, by the section condition, we have a smooth curve $\phi_t(x):\FR^1\rightarrow F$ 
whose image is contained in the fiber over $x$. Thus,
$$
\partial_t\phi_t(x) |_{t=0} \;\; \in V_{\phi_0(x)} F
$$
defines a \textit{vertical} tangent vector at $\phi_0(x)\in F$, where $V_{\phi_0(x)}(F)\subset T_{\phi_0(x)} F$ 
is the subspace of vertical tangent vectors. Varying over $x\in M$, we get a \textit{smooth} section 
$\partial_t\phi_t |_{t=0}:M \rightarrow VF$ covering $\phi_0$
\vspace{-2mm} 
	\[ 
\xymatrix@=1.7em{ &&  VF \ar[d]
	\\ 
	M\ar[rru]^{\partial_t\phi_t |_{t=0}~~} \ar[rr]^{~~~{\phi}_0} && F \, ,
}   
\]

\vspace{-2mm} 
\noindent where $VF\hookrightarrow TF \rightarrow F \rightarrow M$ is the vertical sub-bundle of the tangent bundle of $F$. 
Equivalently, the above diagram defines a section of the pullback bundle $\phi_0^*VF$, i.e.,
$\partial_t\phi_t |_{t=0} \in \Gamma_M(\phi_0^*VF)$. Obviously, any two $\FR^1$-plots 
$\phi_t, \phi'_t$ over $\phi_0$ define the same section of $\phi^*_0 VF$ if and only if 
they agree up to first derivatives in $t\in \FR$, at $t=0$ for each $x\in M$.

\medskip 
Following the intuition of tangent vectors as `first order infinitesimal smooth curves' in the space of fields, 
the set $\Gamma_M(\phi^*_0 VF)$ is interpreted as $T_{\phi_0}\big(\Gamma_M(F)\big)$, the set of tangent vectors 
at the field configuration $\phi_0$. The full set of tangent vectors  is 
\vspace{-2mm} 
\begin{align}
T\big(\Gamma_M(F)\big) := \bigcup_{\phi_0 \in \Gamma_M(F)} \Gamma_M(\phi_0^* VF) \cong_{\mathrm{Set}} \Gamma_M(VF) \, .
\end{align}

\vspace{-2mm} 
\noindent In  \cite{GSS-2}, we will enrich our smooth spaces with infinitesimal structure, and the 
intuition of tangent vectors as first-order infinitesimal smooth curves will become a rigorous definition, 
recovering the above set of tangent vectors - along with its natural (infinitesimally thickened) smooth structure (Def. \ref{KinematicalTangentBundleToFieldSpace}) - in a more 
direct manner.
In fact, this will be a special case of a 
general construction that applies to any `infinitesimally thickened smooth set'.  
\end{example}

\begin{example}[\bf Vector-valued field space tangent vectors]\label{ScalarFieldTheoryTangetVectors}
Since the field bundle is a \textit{trivial vector} bundle $F=M\times W$, it follows that 
\vspace{-2mm} 
\begin{align*}
T\big(\Gamma_M(F)\big) &:=\Gamma_M\big( V(M\times W) \big) \cong \Gamma_M(M\times W\times W) \\& =C^\infty(M,W\times W) \cong C^\infty(M,W)\times C^\infty(M,W) \, .
\end{align*}

\vspace{-2mm} 
\noindent Hence a tangent vector at a point in field space $\phi \in C^\infty(M,W)$ is determined by a pair of smooth W-valued maps
$(\phi,\psi)\in T_\phi \CF \subset C^\infty(M,W)\times C^\infty(M,W)$. In terms of an $\FR^1$-plot of fields, since the target $W$ 
is a vector space, this may be represented by the linear plot
$$
\phi + t\cdot \psi \;:\; \FR^1 \times M \longrightarrow W \, . 
$$
\end{example}

We motivated the definition of tangent vectors by differentiating $\FR^1$-plots in field space. However, unlike tangent vectors on 
a finite-dimensional manifold, it is not immediately clear that every tangent vector $\CZ_\phi\in T_\phi \CF = \Gamma_M(VF)$ covering 
$\phi\in \CF$ may be identified as the `derivative' of some $\FR^1$-plot $\phi_t\in \CF(\FR^1)$, in the case of an arbitrary field bundle $F\rightarrow M$.
This is indeed the case, but as the proof of the following lemma will show, this rests on delicate geometrical and topological results 
regarding smooth fiber bundles, their sections and their vector fields. We include the full details for completeness 
(a sketch of a proof also appears in \cite{Blohmann23b}).

\vspace{1mm} 
\begin{lemma}[{\bf Line-plots represent tangent vectors}]\label{LinePlotsRepresentTangentVectors}
For any $\CZ_\phi \in \Gamma_M(VF)$ covering $\phi=\pi_F \circ \CZ_\phi$
\vspace{-2mm} 
	\[ 
\xymatrix@R=1.3em{ && VF\ar[d]^{\pi_F} \\&&  F \ar[d]
	\\ 
	M\ar[rruu]^{\CZ_{\phi} } \ar[rru]^>>>>>>>{{\phi}} \ar[rr]&& M \, ,  
}   
\]

\vspace{-2mm} 
\noindent there exists a $\phi_t:M\times \FR\rightarrow F$ such that $\phi_0=\phi$ and $\partial_t \phi_t |_{t=0} = \CZ_\phi$. That is, the map
\vspace{-2mm} 
\begin{align*} \mathbold{\Gamma}_M(F)(\FR^1) &\longrightarrow T\big( \Gamma_M(F)\big)
\\[-2pt]
\phi_t &\longmapsto \partial_t \phi_t |_{t=0} 
\end{align*}

\vspace{-2mm} 
\noindent of Ex. \ref{TangentVectorsOnFieldSpace} is surjective. 
\end{lemma}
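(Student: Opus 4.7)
The strategy is to realize $\phi_t$ as a ``vertical flow'' emanating from $\phi$ in the direction of $\CZ_\phi$. The core geometric input is a fiberwise tubular neighborhood of the section $\phi$, and the main subtlety is ensuring global definition in $t\in\FR$ when $M$ is noncompact.

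\textbf{Step 1 (Fiberwise tubular neighborhood).} First I would establish the existence of an open neighborhood $U\subset \phi^{\ast} VF$ of the zero section together with a fiber-preserving smooth embedding
\[
\psi \,:\, U \,\longhookrightarrow\, F,
\qquad \pi\circ\psi \,=\, \mathrm{pr}_M ,
\]
satisfying $\psi(0_x)=\phi(x)$ and fiberwise derivative $d\psi|_{0_x} \,=\, \mathrm{id}:(\phi^{\ast}VF)_x \xrightarrow{\sim} V_{\phi(x)}F$. This is a fiber-bundle version of the tubular neighborhood theorem and can be built by equipping $VF\to F$ with a smooth fiberwise Riemannian metric (via a partition of unity on $F$) and taking the corresponding fiberwise exponential, or more elementarily by patching local trivializations of $\pi:F\to M$ using a partition of unity on $M$.

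\textbf{Step 2 (Compactification in the $t$-direction).} Because $M$ is generally noncompact and $\CZ_\phi$ unbounded, the naive candidate $\psi(t\cdot\CZ_\phi)$ is not globally defined on $\FR\times M$. Fix a smooth fiberwise norm $|\cdot|$ on $\phi^{\ast}VF$; then, using paracompactness of $M$ and a partition of unity subordinate to a cover of $M$ by relatively compact opens, I would construct a smooth function $\nu:M\to\FR_{>0}$ such that the fiberwise open ball
\[
\bigl\{ v\in(\phi^{\ast}VF)_x \;\bigl|\bigr.\; |v| \,<\, \nu(x)\bigl(1+|\CZ_\phi(x)|\bigr) \bigr\}
\]
is contained in $U$ for every $x\in M$.

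\textbf{Step 3 (Construction and verification).} Define
\[
\phi_t(x) \;:=\; \psi\!\Bigl( \nu(x)\,\tanh\!\bigl(t/\nu(x)\bigr)\cdot \CZ_\phi(x) \Bigr),
\qquad (t,x)\in \FR\times M .
\]
By the choice of $\nu$, the argument of $\psi$ lies in $U$ for all $(t,x)$, so $\phi_t$ is a globally defined smooth map $\FR\times M\to F$ with $\pi\circ\phi_t=\mathrm{pr}_M$, i.e.\ an $\FR^1$-plot in $\mathbold{\Gamma}_M(F)$. The equalities $\phi_0=\phi$ and $\partial_t\phi_t|_{t=0}=\CZ_\phi$ then follow from $\psi(0_x)=\phi(x)$, $\partial_t[\nu(x)\tanh(t/\nu(x))]|_{t=0}=1$, and the fact that $d\psi|_{0_x}$ is the canonical identification $(\phi^{\ast}VF)_x \cong V_{\phi(x)}F$.

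\textbf{Main obstacle.} The delicate point is Step 1: the global existence of a fiber-preserving tubular neighborhood of a section over a paracompact but possibly noncompact base. This, together with the partition-of-unity construction of $\nu$ in Step 2, is exactly where ``delicate geometrical and topological results regarding smooth fiber bundles and their sections'' enter; without them one cannot escape the fact that completeness of arbitrary vertical vector fields on $F$ fails in general. Once Step 1 is granted, the $\tanh$-reparametrization of Step 3 converts the local flow into a genuine globally-defined $\FR^1$-plot, and the remaining verification is a one-line chain rule computation.
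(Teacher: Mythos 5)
Your proof is correct, but it takes a genuinely different route from the paper's. The paper observes that $\phi(M)\subset F$ is a closed embedded submanifold, extends $\CZ_\phi$ (viewed as a section of $VF|_{\phi(M)}$) to a global vertical vector field $\psi$ on all of $F$ via the extension lemma for sections of vector bundles over closed submanifolds, arranges completeness of $\psi$ so that it integrates to a global flow $\Psi_t:\FR\times F\to F$, and then defines $\phi_t$ by restricting this flow along $\phi$. You instead avoid flows altogether: you build a fiber-preserving tubular neighborhood $\psi: U\subset \phi^{\ast}VF \hookrightarrow F$ of the section (which exists because the normal bundle of $\phi(M)$ in $F$ is canonically $\phi^{\ast}VF$, the vertical complement of $d\phi(TM)$) and push an explicit, $\tanh$-reparametrized straight-line path in the fibers of $\phi^{\ast}VF$ through it. What your approach buys is a clean treatment of noncompactness: there is no completeness question at all, since the reparametrization keeps the path inside $U$ for all $t$ by construction. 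This is a genuine advantage — the paper's completeness argument rests on choosing the extension to have compact support by multiplying with a bump function equal to $1$ on $\phi(M)$, which cannot literally work when $M$ (hence $\phi(M)$) is noncompact, and so requires repair (e.g.\ rescaling $\psi$ against a complete fiberwise metric). What the paper's route buys is a lighter geometric input — the extension lemma for vector bundle sections rather than the fiberwise tubular neighborhood theorem — and the flow construction generalizes verbatim to $\FR^k$-plots of the tangent bundle, as the paper later needs. Your construction also generalizes to that setting, so either proof is acceptable.
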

\begin{proof}
Let $\CZ_\phi \in \Gamma_M(VF)$ and $\phi=\pi_F \circ \CZ_\phi \in \Gamma_M(F)$ as above. By construction, $\phi$ is a smooth 
section of the bundle $F\rightarrow M$, and hence its image $\phi(M)\subset F$ is an embedded submanifold of $F$. Moreover, since manifolds 
are assumed to be Hausdorff, it is also \textit{closed} in $F$. Now notice that $\CZ_\phi$ is equivalently a section of $\phi^*F$ over $M$, 
and yet equivalently a section $\hat{\CZ}_\phi:\phi(M)\rightarrow VF|_{\phi(M)}$ of the vertical subbundle restricted to the (embedded) 
submanifold $\phi(M)$, since $\phi:M\xrightarrow {~} \phi(M)\subset F$ is a diffeomorphism.

Since $\phi(M)$ is a closed (and hence properly) embedded submanifold, the section $\hat{\CZ}_\phi$ can be extended to a smooth section of 
the vector bundle $VF$ over $F$ (see e.g. \cite[Ex. 10.9]{Lee}). That is, there exists a section $\psi:F\rightarrow VF$ such that
\vspace{-2mm} 
	\[ 
\xymatrix@=1.6em{& &&  VF \ar[d]^{\pi_F}
	\\ 
	\phi(M)\ar[rrru]^{\hat{\CZ}_\phi} \ar[r]& F\ar[rru]_>>>>>>{\psi}  \ar[rr]&& F \, ,  
}   
\]

\vspace{-2mm} 
\noindent commutes. Crucially, since $\phi(M)$ is closed in F, this extension can be chosen to have compact support - by potentially 
multiplying with a bump function whose value is $1$ on $\phi(M)$. Thus $\psi$ is a vector field on $F$ with \textit{compact support}, 
and hence integrates to a global flow \cite[Thm 9.16]{Lee}. 
That is, there exists a smooth 1-parameter group of diffeomorphisms 
\vspace{-2mm} 
$$
\Psi_t : \FR^1 \times F\longrightarrow F \, ,
$$

\vspace{-2mm} 
\noindent  that differentiates to $\psi$  
$$
\partial_t \Psi_t |_{t=0}= \psi \; \in \Gamma_F(VF) \, .
$$
Restricting the map on $M\times \FR^1$ via the embedding $M\times \FR^1 \xrightarrow{(\phi,\id)} \phi(M)\times \FR^1\hookrightarrow F\times \FR^1$ 
yields a section
\vspace{-2mm} 
	\[ 
\xymatrix@=1.2em{ &&&  F \ar[d]^{\pi}
	\\ 
	\FR^1\times M \ar[rrru]^-{\phi_t} \ar[rrr]^-{p_2} &&& M
}   
\]

\vspace{-2mm} 
\noindent over M, and so a line plot $\phi_t\in \CF(\FR^1)$ that differentiates to $\CZ_\phi$. 
\end{proof}
\begin{remark}[\bf Tangent vectors, paths, and derivations]
\label{TangentVectorsPathsOfFieldsAndDerivations}
Let $\phi_t\in \mathbold{\Gamma}_M(F)(\FR^1)\cong \mathrm{Hom}_{\SmoothSets}\big(y(\FR^1), \mathbold{\Gamma}_M(F)\big)$ 
be a $\FR^1$-plot 
of fields with $\phi_{t=0}=\phi \in \Gamma_M(F)$ with induced derivation
\vspace{-2mm} 
\begin{align}
    C^\infty(\CF) &\longrightarrow \FR 
    \\[-2pt]
    f &\longmapsto {\partial_t}(f\circ \phi_t) |_{t=0} \, . \nn 
\end{align}

\vspace{-2mm} 
\noindent It is obvious that the induced derivation actually depends only on the germ $\FR^1$-plots around $0\in \FR^1$. 
In other words, any two plots $\phi_t, \phi'_t: \FR^1\times M\rightarrow F$ which agree on some open $(-\epsi,\, +\epsi)\times M\subset \FR^1\times M$ 
define the same derivation. However, unlike the finite-dimensional manifold case, we see no obvious reason why this derivation 
depends only on the corresponding tangent vector at $\partial_t \phi_t|_{t=0} \in T_{\phi}(\CF)=\Gamma_M(\phi^* VF)$. 
Hence it might not necessarily descend to an action of tangent vectors at $\phi$.\footnote{Ref. \cite[\S 6.54]{IZ13} claims something similar, 
but the dependence proved is in fact between germs and tangent vectors of $\FR^1$-plots.} However, 
this \textit{does hold} for the subset of \textit{local} function(al)s $C^\infty_{\mathrm{loc}}(\CF)\subset C^\infty(\CF)$, 
which correspond to functions that factor through the infinite jet bundle, as we will make precise in the next section 
with Definition \ref{LocalFunctionsOnFieldSpace} and later on with Remark \ref{TangentVectorsAndDerivationsofLocalFunctions}. 
This potential `pathology' of tangent vectors and smooth functions
is automatically cured if we work in the better category of infinitesimally thickened smooth sets, where smooth maps preserve 
the infinitesimal structure -- by definition (see \cite{GSS-2}).  
\end{remark}

As it is, the would-be tangent bundle above is simply a set. Since it is naturally identified with sections of 
a bundle, we may apply Def. \ref{SectionsSmoothSet} to naturally view it as a smooth set.

\begin{definition}[\bf  Smooth tangent bundle to field space]\label{KinematicalTangentBundleToFieldSpace}
 The smooth tangent bundle to a field space $\CF=\mathbold{\Gamma}_M(F)$ is defined by
 \vspace{-2mm} 
\begin{align}
T\CF:= \mathbold{\Gamma}_M(VF)\, \, ,
\end{align}

\vspace{-2mm} 
\noindent as the smooth set of sections of $\Gamma_M(VF)$ via Def. \ref{SectionsSmoothSet}.
\end{definition}
In particular, an $\FR^k$-plot of the tangent bundle $T\big(\mathbold{\Gamma}_M(F)\big)$ corresponds to a pair $(\CZ_{\phi^k_0},\phi^k_0)$ 
of $\FR^k$-parametrized sections over $M$ such that 
\vspace{-2mm} 
	\[ 
\xymatrix@=1.6em{ &&  VF \ar[d]
	\\ 
	\FR^k \times M\ar[rru]^{\CZ_{\phi^k_0} } \ar[rr]^>>>>>>>{{\phi}^k_0} && F \, 
}   
\]

\vspace{-2mm} 
\noindent commutes. Note that this immediately shows that the fiber-wise  $\FR$-linear structure of the plain set bundle
$\Gamma_M(VF) \rightarrow \Gamma_M(F)$ extends to a smooth $y(\FR)$-linear map 
\vspace{-2mm}
\begin{align}\label{FiberWiseLinearStructure}
+ \;:\; T\CF \times T\CF &\longrightarrow T\CF \\
\Big(\CZ^1_{\phi_0^k}, \, \CZ^2_{\phi^k_0}\Big)&\longmapsto \CZ^1_{\phi_0^k}+ \CZ^2_{\phi^k_0} \nn
\end{align}
\vspace{-2mm}

By arguing as in Lem. \ref{LinePlotsRepresentTangentVectors}, any such map $\CZ_{\phi^k_0}:\FR^k\times M \rightarrow VF$ 
may be (non-uniquely) represented\footnote{An $\FR^k$-plot of the tangent bundle $T\CF$ is equivalently a section $\hat{\CZ}_{\phi^k_0}$ 
of the pullback bundle $V(\pr_2^* F) \cong \pr_2^*(VF) \rightarrow \FR^k\times M$. The proof of Lem. \ref{LinePlotsRepresentTangentVectors}
then applies verbatim to produce a map $\Psi^k_t: \FR^1\times \pr_2^* F\rightarrow \pr_2^* F$ which differentiates to $\hat{\CZ}_\phi$ 
along the image of $\phi^k_0$.} by an $\FR^1\times \FR^k$-plot $\phi^k_t$ of $\mathbold{\Gamma}_M(F)$ such that 
$\phi^k_{t=0}= \phi^k_0$, and so we might often write 
\vspace{-2mm} 
\begin{align}\label{PlotsOfTangentBundleViaCurvesofPlots}
\CZ_{\phi^k_0} = \partial_t \phi^k_t |_{t=0} \, .
\end{align}

\vspace{-2mm} 
\noindent As expected, the evident projection $\Gamma_M(VF)\rightarrow \Gamma_M(F)$ extends to  smooth projection map 
\vspace{-2mm} 
\begin{align*}\pi_\CF \;:\; T\CF&\longrightarrow
\CF \\
\big(\CZ_{\phi^k_0},\phi^k_0\big)&\longmapsto \phi^k_0
	\, . 
\end{align*}

\vspace{-2mm} 
\noindent Hence, we may define smooth vector fields in this infinite-dimensional setting as actual geometrical smooth sections of 
the tangent bundle.
\begin{definition}[\bf  Vector fields on field space]\label{VectorFieldsOnFieldSpace}
The set of smooth vector fields on the field space $\CF=\mathbold{\Gamma}_M(F)$ is defined as smooth sections of 
its tangent bundle
\vspace{-2mm} 
\begin{align}
\CX(\CF):= \big\{\CZ:\CF\rightarrow T\CF \, |\, \pi_\CF \circ \CZ  = \id_\CF \big\} \, .
\end{align}
That is, smooth maps $\CZ: \CF\rightarrow T\CF $ such that the diagram of smooth sets
\vspace{-2mm} 
	\[ 
\xymatrix@=1.6em{ &&  T\CF \ar[d]^{\pi_\CF}
	\\ 
	\CF \ar[rru]^{\CZ } \ar[rr]^>>>>>>{\id_\CF} && \CF \, 
}   
\]

\vspace{-2mm} 
\noindent commutes.
\end{definition}
Let us unwind the definition slightly. On $*$-plots, such a section defines a map of sets
\vspace{-2mm} 
\begin{align*}
\CZ \,:\, \Gamma_M(
F)&\longrightarrow \Gamma_M(VF) 
\\[-2pt]
\phi&\longmapsto \CZ_\phi
\end{align*}

\vspace{-2mm} 
\noindent which assigns a tangent vector $\CZ_\phi\in T_\phi\big(\Gamma_M(F)\big)=\Gamma_M(\phi^*VF)$ to every field 
configuration $\phi\in \Gamma_M(F)$, and so a vector field in the intuitive sense. The smoothness condition is 
then the further requirement that under this point-wise assignment\footnote{This interpretation of smoothness only 
holds for concrete smooth sets, i.e., diffeological spaces with an underlying notion of points \cite{BH11}. 
In particular, it does not generalize to non-concrete smooth sets as stated, and hence will also not generalize 
when working with smooth fermionic spaces \cite{GSS-2}.}, \textit{smooth} $\FR^k$-plots of field configurations, 
i.e., smoothly $\FR^k$-parametrized sections $\phi^k:\FR^k\times M\rightarrow F$, are mapped to \textit{smooth} 
$\FR^k$-plots of tangent vectors, i.e., smoothly $\FR^k$-parametrized sections $\CZ_{\phi^k}: \FR^k\times M \rightarrow VF$. 
The most common examples of smooth vector fields on field spaces arising in physics are \textit{local} vector fields, 
that is vector fields that factor through the infinite jet bundle of $J^\infty_M F$ - in a sense that we will make fully 
precise in  \cref{EvolutionaryVectorFieldsAndNoetherTheoremsSection}.

 \medskip 
Vector fields on field space $\CF$ should be interpreted as `infinitesimal smooth diffeomorphisms', in direct analogy with the finite-dimensional case, 
as will become clear in the next examples. Firstly, the definition of diffeomorphisms in smooth sets is immediate.

\begin{definition}[\bf  Diffeomorphisms of the field space]
\label{SmoothAutomorphisms}
The set of \textit{diffeomorphisms} of $\CF$, i.e., smooth automorphisms of $\CF$,  is defined as the subset invertible self-morphisms 
of smooth sets,
\vspace{-1mm} 
$$
\mathrm{Diff}(\CF):= \big\{\CP:\CF\rightarrow \CF \; |\; \exists\,  \CP^{-1}:\CF \rightarrow \CF  \big\}\,\subset \mathrm{Hom}_{\SmoothSets}(\CF,\CF)  \, .
$$
\end{definition}
In finite-dimensional smooth manifolds, vector fields can be obtained by differentiating smooth 1-parameter families of diffeomorphisms, i.e.,
smooth paths of diffeomorphisms. This analogously carries over to the case of the field space $\CF$. A useful intermediate step to deduce this differentiation process is via the smooth set of paths in the field space, constructed as an application of Def. \ref{SmoothSetinternalHom}.

\begin{definition}[\bf  Path space of fields]\label{PathSpaceOfFields}
The \textit{smooth path space} of fields is defined as 
\vspace{0mm}
$$P(\CF):=\big[y(\FR^1), \CF \big] \, .
$$

\vspace{-2mm}
\noindent
That is, 
the smooth set with $\FR^k$-plots\footnote{In this description, maps of probes $f:\FR^{k'}\rightarrow \FR^k$ act on plots via the pullback $(\id_{\FR^1}\times f)^*:\mathbold{\Gamma}_M(F)(\FR^1\times \FR^k)\rightarrow \mathbold{\Gamma}_M(F)(\FR^1\times \FR^{k'})$.}
\vspace{-1mm} 
$$
\big[y(\FR^1), \mathbold{\Gamma}_M(F)\big](\FR^k):= \mathrm{Hom}_{\SmoothSets}\big(y(\FR^1\times \FR^k)\, ,
\, \mathbold{\Gamma}_M(F)\big)\cong \mathbold{\Gamma}_M(F)(\FR^1\times \FR^k) \, .
$$
\end{definition}

\vspace{-1mm}
For each $t_0\in \FR^1_t$ there is a corresponding smooth projection 
\vspace{-2mm} 
\begin{align*}\mathrm{ev}_{t_0} \,:\, P(\CF)&\longrightarrow \CF
\\[-2pt]
\phi^k_t &\longmapsto \phi^k_{t=t_0}
\end{align*}

\vspace{-2mm} 
\noindent which evaluates each path (of $\FR^k$-plots) of fields at $t_0\in \FR^1$. Furthermore, the projection at $t_0=0$ obviously 
factors through the smooth tangent bundle of $\mathbold{\Gamma}_M(F)$ via
\vspace{-3mm} 
\begin{align*}\partial_t|_{t=0} \;:\; P(\CF)&\longrightarrow T(\CF)
\\[-2pt]
\phi^k_t&\longmapsto \partial_t \phi^k_{t}|_{t=0} \, ,
\end{align*}
which by Lem. \ref{LinePlotsRepresentTangentVectors} (and its application to $\FR^k$-plots) is an epimorphism (see Eq. \eqref{PlotsOfTangentBundleViaCurvesofPlots} and its footnote).

\begin{example}[\bf Vector fields as infinitesimal diffeomorphisms]
\label{VectorFieldsAsInfinitesimalDiffeomorphisms}
There is a natural notion of a smooth 1-parameter family of diffeomorphisms on $\CF$, making use of the canonical smooth mapping 
space $[\CF,\CF]$ of Def. \ref{SmoothSetinternalHom}. This is a $\FR^1$-plot of self-morphisms
$$
\CP_t \in [\CF,\CF](\FR^1):=\mathrm{Hom}_{\SmoothSets}\big(y(\FR^1)\times \CF,\CF\big) 
\cong\mathrm{Hom}_{\SmoothSets}\big(\CF \,, \,  P(\CF)  \big) 
$$ 
that `is a diffeomorphism for each $t_0\in \FR^1$', where the latter isomorphism is the internal hom property. \footnote{Alternatively (and equivalently), 
there is a canonical way to supply $\mathrm{Diff}(\CF)$ with a smooth set structure and then consider its line plots instead. 
In fact, there exists a sub-object of automorphisms  $\mathbold{Aut}(\CF)\hookrightarrow [\CF,\CF] $ of the self-mapping space
in any sheaf category, whose plots are given by
\vspace{-1mm}
$$\mathbold{Aut}(\CF)(U):=\big\{\tilde{f}=(\pr_U\times f) :y(U)\times \CF \rightarrow y(U)\times \CF \, \, |\,\, \exists \,
\tilde{f}^{-1}=(\pr_U\times g): y(U)\times \CF \rightarrow y(U)\times \CF \big\}.$$}
 Explicitly, it is a smooth map 
 \vspace{-3mm} 
\begin{align*}
\CP_t \,:\, \CF  &\longrightarrow \mathbold{P}(\CF) \\[-2pt]
\phi^k&\longmapsto \phi^k_t
\end{align*}

\vspace{-2mm} 
\noindent such that the composition along the projections
$\mathrm{ev}_{t_0}: P(\mathbold{\Gamma}_M(F))\rightarrow \mathbold{\Gamma}_M(F)$

\vspace{-3mm} 
\begin{align*}
\CP_{t=t_0}\,:\, \CF  &\longrightarrow \CF\\ 
\phi^k&\longmapsto \phi^k_{t=t_0}
\end{align*}

\vspace{-2mm} 
\noindent is invertible for any $t_0\in \FR^1$. Consider now any such 1-parameter family $\CP_t\in [\CF,\CF](\FR^1)$ that 
starts at the identity map, i.e., $\CP_{t=0}=\id_\CF$. Equivalently, this is a section of the bundle 
$\mathrm{ev}_{t=0}:\mathbf{P}(\CF)\rightarrow \CF$ and hence composing along the projection
$\partial_t|_{t=0}:P(\CF)\rightarrow T(\CF) $ yields a vector field
\vspace{-3mm} 
\begin{align*}
\partial_t|_{t=0}\circ \CP_t \;:\; \CF&\longrightarrow T(\CF) \\
\phi^k &\longmapsto \partial_t(\phi^k_t)|_{t=0}
\end{align*}

\vspace{-2mm} 
\noindent on $\CF$. Hence the `infinitesimal part' of a 1-parameter diffeomorphism connected to the identity is indeed a vector 
field on the field space $\CF$. On the other hand, unlike the finite-dimensional manifold case, one should not in general expect 
to be able to integrate vector fields to 1-parameter families of diffeomorphisms. \footnote{In finite dimensions, integral 
curves along vector fields always exists - at least for a small `time' interval $(-\epsi,\epsi)\subset \FR^1$. This is not necessarily true in
the infinite-dimensional smooth set setting.}
\end{example}

Note that any smooth 1-parameter family of diffeomorphisms $\CP_t:\CF \rightarrow \mathbold{P}(\CF)$ connected to the identity defines a derivation
\vspace{-4mm} 
\begin{align}\label{PathSectionDerivation}
    C^\infty(\CF) &\longrightarrow C^\infty(\CF)      
\end{align}

\vspace{-2mm} 
\noindent where a function $\phi^k\mapsto g\circ \phi^k$ is mapped to the smooth function $\phi^k\mapsto {\partial_t}(g\circ \phi^k_t)|_{t=0}$, extrapolating 
Eq. \eqref{SmoothPathInducedDerivation}. As with Remark \eqref{TangentVectorsPathsOfFieldsAndDerivations}, this derivation is not necessarily determined 
by the induced vector field $\partial_t|_{t=0}\circ \CP_t \in \CX(\CF)$. However, this \textit{does hold} on the subset of local functions 
on $\CF$, where every \textit{local} vector field defines a derivation of $C^\infty_{\mathrm{loc}}(\CF)$, as will be detailed in Rem. 
\ref{TangentVectorsAndDerivationsofLocalFunctions}. More generally, this potential pathology of vector fields will be automatically cured 
when we consider our spaces as thickened smooth sets \cite{GSS-2}, where every vector field will necessarily define a derivation. 

\begin{example}[\bf Diffeomorphisms for vector-valued field space via the target]\label{DiffeomorphismForVector-valuedFieldTheoryViaTarget}
In the case of vector-valued field space with $F=M\times W$ (Ex. \ref{VectorFieldTheoryFieldSpace}), bundle automorphisms correspond
to diffeomorphisms of the target $g:W\rightarrow W$.
\footnote{This holds for any $\sigma$-model field space, i.e., with target any manifold $N\in \SmoothManifolds$.}
For instance, general linear transformations constitute a particular example, whereby choosing a basis $\{e_a\}_{a=1,\cdots, N}$ for $W\cong \FR^N$, 
\vspace{-2mm} 
\begin{align*}
g:W&\longrightarrow W \, 
\\[-2pt]
 e_a &\longmapsto g^{b}_{\, \, a} \cdot e_b
\end{align*}

\vspace{-2mm} 
\noindent for $\big[g^b_{\,\,a}\big]\in \mathrm{GL}(n,\FR)$. There is an induced diffeomorphism on field space given by postcomposition 
of (plots of) fields
\vspace{-2mm} 
\begin{align*}
g_*: \CF &\longrightarrow \CF 
\\[-2pt]
 \phi^k &\longmapsto g\circ \phi^k \, .
\end{align*}

\vspace{-2mm} 
\noindent  It follows that a 1-parameter family of such transformations, i.e., 
\vspace{-2mm} 
\begin{align*}
g_t\; :\; \FR^1\times W&\longrightarrow W \, 
\\[-1pt]
 (t, w^a \cdot e_a ) &\longmapsto w^a \cdot g^{b}_{\, \, a}(t) \cdot e_b
\end{align*}

\vspace{-2mm} 
\noindent 
for some $\big[g^b_{\,\, a}(t)\big]:\FR^1 \rightarrow \mathrm{GL}(n,\FR)$, defines a $1$-parameter family of field space diffeomorphisms
\vspace{-2mm} 
\begin{align*}
\CP_{t}=(g_t)_* \;:\; \CF&\longrightarrow \mathbold{P}(\CF) \, \\
\phi^k =\phi^{k,a} \cdot e_a &\longmapsto \phi^k_t = \phi^{k,a} \cdot g^{b}_{\, \, a}(t) \cdot e_b\, . 
\end{align*}

\vspace{-1mm} 
\noindent 
Assuming $g_{t=0}=\id_W$ is the identity map on $W$, with $\big[A^{b}_{\, \, a}\big]=\big[\dot{g}^b_{\, \, a}(0)\big] \in \mathfrak{gl}(n,\FR)$ 
the induced Lie algebra element, the differentiation of the previous example (Ex. \ref{VectorFieldsAsInfinitesimalDiffeomorphisms}) 
explicitly yields the smooth vector field $\CZ^A= \partial_t|_{t=0}\circ(g_t)_*$ on field space,
\vspace{-2.5mm} 
\begin{align*}
\CZ^A \;:\; \CF\cong [M,W] &\longrightarrow T(\CF)\cong [M,W\times W] \, \\
\phi^k = \phi^{k,a} \cdot e_a &\longmapsto \big(\phi^k \, ,\, \phi^{k,a} \cdot A^b_{\, \, a} \cdot e_b\big)\, . 
\end{align*}

\vspace{-2mm} 
\noindent This is a rigorous way to make sense of the smooth vector field which in the physics literature is often denoted 
abusively\footnote{Via an (a priori) unjustified analogy to the coordinate formulas of smooth vector fields in the 
finite-dimensional setting.} by 
\vspace{-2mm} 
$$ 
\CZ^A(\phi)= \phi^a \cdot A^{b}_{\, \, a} \cdot  \frac{\delta }{\delta \phi^b}\, ,
$$

\vspace{-1mm} 
\noindent  and more often as an `infinitesimal transformation of the field'
\vspace{-1mm} 
$$
\delta_A \phi^b = A^b_{\,\, a} \cdot \phi^a \, .
$$ 
  
\noindent 
Completely analogously, any translation on $W$ acting as $w\mapsto w+c$ for some $c=c^a \cdot e_a \in W$, defines a diffeomorphism 
on field space by postcomposition. The induced vector field of such (1-parameter families of) translations is often denoted as
\vspace{-2mm}
$$
\CZ^c(\phi)=c^a \cdot \frac{\delta}{\delta \phi^a}\,  \hspace{1.5cm} \mathrm{or} \hspace{1.5cm} \delta_{c} \phi^a = c^a \, ,
$$

\vspace{-2mm} 
\noindent and is in particular `constant' in field space.
\end{example}

\begin{example}[\bf Diffeomorphisms for vector-valued field space via the base]\label{DiffeomorphismForVector-valuedFieldTheoryViaBase}
Consider the case of a trivial field bundle $F=M\times W$ and let $f:M\rightarrow M$ diffeomorphism of the base spacetime. 
There is an induced diffeomorphism on field space acting by precomposition of (plots of) fields
\vspace{-2mm} 
\begin{align*}
f^*:[M,W] &\longrightarrow [M,W]\\
     \phi^k &\longmapsto \phi^k \circ \big(\id_{\FR^k}, f\big)\, .
\end{align*}

\vspace{-2mm} 
\noindent It follows that a smooth $1$-parameter family of spacetime diffeomorphisms 
$
f_t \,:\, \FR^1 \times M\longrightarrow M
$
induces a $1$-parameter family of field space diffeomorphisms
\vspace{-3mm} 
\begin{align*}
(f_t)^* \;:\; \CF&\longrightarrow \mathbold{P}(\CF) \, \\
\phi^k &\longmapsto \phi^k_t = \phi^k \circ \big(\id_{\FR^k}, f_t\big)\, . 
\end{align*}

\vspace{-2mm} 
\noindent Assuming $f_{t=0}=\id_M$ is the identity map on $M$, with induced vector field 
$v= \partial_t f_t |_{t=0} =v^\mu \cdot \frac{\partial}{\partial x^\mu}\in \Gamma(TM)$ on spacetime, the differentiation 
of Ex. \ref{VectorFieldsAsInfinitesimalDiffeomorphisms} immediately yields the vector field $\CZ^v= \partial_t |_{t=0} \circ (f_t)^*$,
\vspace{-2mm} 
\begin{align*}
\CZ^v \;:\; [M,W] &\longrightarrow [M,W\times W] \, \\
\phi^k  &\longmapsto \big(\phi^k \, ,\, \dd_M \phi^k\circ (\id_{\FR^k},v) \big)\, , 
\end{align*}
where $\dd_M \phi^k : \FR^k\times TM\rightarrow TW\cong W\times W$ is the de Rham differential on $M$. In local coordinates for $M$ 
and a basis for $W$,  $\dd_M \phi^k = \partial_\mu \phi^{k,a} \cdot \dd x^\mu \cdot e^a$, and so 
$\dd_M \phi \circ v = \mathbb{L}_\nu(\phi^a) \cdot e_a =  \partial_\mu \phi^a \cdot v^\mu \cdot e_a $, where $\mathbb{L}_\nu$ is the Lie derivative 
along $\nu\in \Gamma(TM)$. In the physics literature, such vector fields are denoted by
\vspace{-2mm} 
$$
\CZ^v(\phi) = \mathbb{L}_v(\phi^a) \cdot \frac{\delta}{\delta \phi^a}= \partial_\mu \phi^a \cdot v^\mu \cdot \frac{\delta}{\delta \phi^a}\, , 
$$

\vspace{-2mm} 
\noindent  or even by
$$
\delta_\nu \phi^a = \mathbb{L}_\nu (\phi)^a = \nu^\mu \cdot \partial_\mu \phi^a \, ,
$$
and one says ``the field $\phi$ transforms as a scalar". More than often in physics, the spacetime $M$ is supplied with a metric $g$, 
and the vector field $v\in \Gamma(TM)$ a Killing vector field. A particular instance is Minkowski space $(M,g)=(\FR^4,\eta)$, 
with Killing vector fields $v\in \mathfrak{iso}(1,3)$ being elements of the Poincar\'e Lie algebra. 
\end{example}

\begin{remark}[\bf Diffeomorphisms via field bundle automorphisms]\label{DiffeomorphismsViaFieldBundleAutomorphisms}
Generally, any field bundle automorphism $\tilde{f}:F\rightarrow F$
\vspace{-2mm} 
\[
\xymatrix@R=1em@C=1.6em  {F \ar[d] \ar[rr]^{\tilde{f}}  &&   F \ar[d]
	\\ 
M\, \ar[rr]^{f}  && M
} 
\]

\vspace{-1mm} 
\noindent
covering a diffeomorphism $f:M\rightarrow M$ (not necessarily the identity), yields a field space diffeomorphism
\vspace{-2mm} 
\begin{align*}
\CF &\longrightarrow \CF
\\[-2pt]
 \phi^k &\longmapsto \tilde{f} \circ \phi^k \circ (\id_{\FR^k}, f^{-1}) \, .
\end{align*}

\vspace{-1mm} 
\noindent 
One may consider 1-parameter families of such, and differentiate to obtain smooth vector fields on $\CF$. In this picture, 
Ex. \ref{DiffeomorphismForVector-valuedFieldTheoryViaTarget} corresponds to the case of the trivial bundle $F=M\times W$ with $\tilde{f}=(\id_M, g)$ 
covering the identity on $M$, while Ex. \ref{DiffeomorphismForVector-valuedFieldTheoryViaBase} corresponds to 
\vspace{-2mm} 
\[
\xymatrix@R=1em@C=1.6em  {M\times W \ar[d] \ar[rr]^{(f,\, \id_W)}  &&   M\times W \ar[d]
	\\ 
M\, \ar[rr]^{f}  && M
\, . } 
\]

\noindent We note, however, that for a general field fiber bundle $F\rightarrow M$ and a given diffeomorphism $f:M\rightarrow M$, the pullback $f^*\phi$ of 
a field is a section of $f^*F\rightarrow M$ and \textit{not} of $F\rightarrow M$. That is, to act on the actual space of fields via $\mathrm{Diff}(M)$, 
it is indeed necessary to lift the given spacetime diffeomorphism to a bundle morphism $\tilde{f}:F\rightarrow F$. However, this is not always possible, 
contrary to the case of the trivial field bundle. Bundles with such a lifting property are called ``gauge natural" bundles\footnote{An archetypical example 
is the tangent/cotangent bundles and tensor powers thereof. The prime example of a field space employing natural bundles is pure gravity in its 
$(0,2)$-tensor metric formulation, whereby the field space consists of (symmetric, non-degenerate) sections of $\wedge^2 T^*M$, i.e., 
$\mathrm(Met)_M:= \Gamma^{\mathrm{sym. nondeg.}}_M(\wedge^2 T^*M)$. Any diffeomorphism $f: M\rightarrow M$ lifts to a bundle morphism which allows 
to pull back metrics, defining a smooth diffeomorphism $\CP:= f^*: \mathrm(Met)_M \rightarrow \mathrm(Met)_M$. Strictly speaking, this is only valid 
for pure gravity, or at most coupled to fields that are sections of trivial or natural bundles. In particular, it breaks down when (fermionic) 
spinors  enter the picture.} (see \cite{KMS}). 
Nevertheless, for many purposes of field theory 
(e.g., for computing conserved currents, see \cref{EvolutionaryVectorFieldsAndNoetherTheoremsSection}) it suffices to lift only the infinitesimal version 
of spacetime diffeomorphisms, i.e., vector fields on $M$. That is, there exists a short exact sequence
\vspace{-2mm} 
\begin{align*}
0\longrightarrow \Gamma_F(VF) \longrightarrow \Gamma_F(TF) \longrightarrow \Gamma_F(\pi^*TM)\longrightarrow 0 \, ,
\end{align*}

\vspace{-2mm} 
\noindent 
a splitting of which can be used to lift infinitesimal diffeomorphisms on spacetime, i.e., elements of $\Gamma_M(TM)$.\footnote{Splitting 
as vector spaces is sufficient. 
Indeed, such a lift given by a connection on $F\rightarrow M$ is a Lie algebra homomorphism, if and only if the connection is flat.} Such a splitting is a \textit{connection}
on $F\rightarrow M$, which always exists. In the relevant field bundles of physics, the corresponding connection is often fixed as a background structure in the choice of 
the Lagrangian density (Def. \ref{LocalLagrangianDensity}), or is otherwise a dynamical field itself. The data in this situation is enough to define an induced vector 
field on $\CF$, from that on the spacetime $M$ (see \cite[\S 5.3]{Gregory}
for a particular instance in the case of the first-order formulation of general relativity, and sources therein). 
We will not expand on this intricacy further in this manuscript.
\end{remark}
Having defined a notion of a smooth tangent bundle $T\CF$ on a field space $\CF$, which does have a natural fiber-wise linear structure (Eq. \eqref{FiberWiseLinearStructure}), there is a natural corresponding notion of differential forms, as fiber-wise linear and antisymmetric maps out of $T\CF$. 
\begin{definition}[\bf  Differential forms on field space]\label{DifferentialFormsOnFieldSpace} 
The set of differential m-forms on $\CF=\mathbold{\Gamma}_M(F)$ is defined as
\begin{align}\Omega^m (\CF) := \mathrm{Hom}^{\mathrm{fib.lin. an.}}_{\SmoothSets}
\big(T^{\times m}\CF\,,\, \FR \big) \, , 
\end{align} 

\vspace{-1mm} 
\noindent i.e., smooth real-valued, fiber-wise linear antisymmetric maps with respect to the fiber-wise linear structure (Eq. \eqref{FiberWiseLinearStructure}) on the $m$-fold fiber product
\vspace{0mm} 
$$
T^{\times m}\CF:= T\CF\times_{\CF}\cdots \times_{\CF} T\CF
$$ 

\vspace{0mm} 
\noindent of the tangent bundle over the $\CF$.  
\end{definition}
Exactly as with finite-dimensional manifolds, the collection of differential forms of all degrees forms a graded $\FR$-vector space
\vspace{-2mm}
$$
\Omega^\bullet(\CF):= \bigoplus_{m\in \NN} \Omega^{m}(\CF )\, ,
$$

\vspace{-1mm}
\noindent with a well-defined notion of a wedge product $\wedge$ and contraction $\iota_X$ for any vector field $\CZ \in \CX(\CF)$. What is not at all obvious is the existence of a differential along $\dd_\CF: \Omega^\bullet(\CF) \rightarrow \Omega^{\bullet+1}(\CF)$ (see Rem. \ref{CaveatsWithTheBicomplexOfProduct}). Nevertheless, we will see in \cref{TheBicomplexOfLocalFormsSection} that this is precisely the correct notion of differential forms to host the subspace of  smooth and \textit{local} forms on the field 
space, whereby such a differential exists and satisfies a Cartan calculus with respect to \textit{local} vector fields. This will encode in a rigorous manner 
the kind of differential forms and their manipulations exactly as they are performed in the physics literature.

\begin{remark}[\bf Notions of tangent bundles]The construction of a kinematical tangent bundle from Ex. \ref{TangentVectorsOnFieldSpace} 
and Def. \ref{KinematicalTangentBundleToFieldSpace} can be generalized to any diffeological space \cite{Hector}, i.e., 
any smooth set with underlying set of `enough' points. 
However, in general, there are different choices for the smooth set structure on the resulting tangent bundle set 
\cite{CSW}\cite{ChristensenWu}.
 Even worse, one may also define tangent vectors extrinsically as derivations of the corresponding smooth function algebra \cite{CSW}\cite{ChristensenWu}, since as we 
 alluded to the two notions are not necessarily equivalent. As this text will make clear, the correct and useful notion of 
 the tangent bundle to a field space is that of Def. \ref{KinematicalTangentBundleToFieldSpace}. In fact, will show how we may define the tangent bundle 
 to the field space, and in fact any (thickened) smooth set, 
 as a particular mapping space  smooth set in a canonical manner via synthetic differential geometry in \cite{GSS-2}, 
 bypassing these ambiguities and recovering the above discussion. 
 \end{remark} 

\subsection{Classifying space of de Rham forms}\label{ClassifyingSpaceSection}
All the previous examples of smooth sets have a notion of underlying \textit{sets of points}, in that
$\CG(\FR^k)\subset \mathrm{Hom}_{\mathrm{Set}}(\FR^k,G_{s})$ for some plain set $G_{s}\in \mathrm{Set}$, for each probe space 
$\FR^k\in \mathrm{CartSp}$ (cf. Def. \ref{SmoothSetinternalHom}). We say such smooth sets are 
\textit{concrete}. The full subcategory of concrete smooth sets is naturally identified with \textit{diffeological spaces} 
 \cite{IZ13} \cite{BH11}. However, not all smooth sets are concrete and in fact 
there are useful non-concrete generalized smooth spaces.\footnote{This is, in fact, generically the case for fermionic field spaces 
which constitute \textit{super} smooth sets \cite{GSS-2}.} One such smooth set is the ``moduli space of de Rham n-forms'', which in particular
allows for an alternative definition of forms on an arbitrary smooth set.
\begin{definition}[\bf  $n$-forms as smooth set]
\label{nFormsSmoothSet}
For each $n\in \NN$, we define the \textit{moduli space of de Rham $n$-forms} $\mathbold{\Omega}_{\mathrm{dR}}^{n}\in \SmoothSets$ by
\vspace{-4mm} 
\begin{align}
 \mathbold{\Omega}_{\mathrm{dR}}^{n}(\FR^k):= \Omega^{n}(\FR^k) \, .
\end{align}

\vspace{-2mm} 
\noindent 
That is, the assignment of $n$-forms on each $\FR^k\in\mathrm{CartSp}$.
\end{definition} 
This is a presheaf, since forms pull back along maps of manifolds, and further a sheaf since locally defined forms glue.  
Naturally, the equivalent sheaf $\mathbold{\Omega}_{\mathrm{dR}}^n$ over the category of all manifolds is given by the same formula. 
Note that for $n=0$ this sheaf is the same as the embedding of $\FR$ into smooth sets, since
\vspace{-1mm} 
$$
\mathbold{\Omega}_{\mathrm{dR}}^{0}(\FR^k)=\Omega^0(\FR^k)=C^{\infty}(\FR^k)=y(\FR)(\FR^k)\, .
$$

\vspace{-2mm} 
\noindent In particular, $\mathbold{\Omega}_{\mathrm{dR}}^0$ is a concrete smooth set. For $n>0$, $\mathbold{\Omega}_{\mathrm{dR}}^{n}$ is no longer 
concrete, and in fact quite unintuitive from the point-set perspective. Indeed, for a fixed $n \in \NN$, 
$$
\mathbold{\Omega}_{\mathrm{dR}}^{n}(\FR^k)=\Omega^{n}(\FR^k)=\{*\}
$$

\vspace{0mm} 
\noindent for all $k<n$ and so there is a single point in $\mathbold{\Omega}_{\mathrm{dR}}^{n}$, with all $p$-dimensional `plots' 
degenerating to the same point for $k<n$. However, for $k\geq n$
$$
\mathbold{\Omega}_{\mathrm{dR}}^{n}(\FR^k)= \Omega^{n}(\FR^k)\, ,
$$ 

\vspace{-0mm} 
\noindent is an infinite set, and so there exists an infinite number of $k$-dimensional plots in 
$\mathbold{\Omega}_{\mathrm{dR}}^{n}$ for $k\geq n$. 

\smallskip 
We now consider operations on the moduli spaces. For any $n\geq 0$, there exists a smooth map incarnation of the de Rham differential
\vspace{-3mm} 
\begin{align}\label{smoothdeRham}
\dd_{\mathrm{d R}}: \mathbold{\Omega}_{\mathrm{dR}}^{n}&\longrightarrow \mathbold{\Omega}_{\mathrm{dR}}^{n+1}
\end{align}

\vspace{-2mm} 
\noindent defined plot-wise by
\vspace{-2mm} 
\begin{align*}
	\Omega^n(\FR^k)&\longrightarrow \Omega^{n+1}(\FR^k) \\
	\om_{\, \FR^k}&\longmapsto \dd_{\FR^k} \om_{\, \FR^k}
\end{align*}

\vspace{-2mm} 
\noindent for each $\FR^k\in \mathrm{CartSp}$ and $\om_{\FR^k}\in \Omega^n(\FR^k)$, with $\dd_{\FR^k}$ 
being the usual
de Rham differential on $\FR^k$. This constitutes a smooth map, since the usual de Rham differential 
commutes with pullbacks of manifolds. Thus, the collection of the moduli spaces of $n$-forms for $n\geq 0$ 
inherits a cochain complex structure, now in smooth sets $
\big(\mathbold{\Omega}_{\mathrm{dR}}^{\bullet}, \dd\big) \in \mathrm{Ch}(\SmoothSets)$,
with the module structure being that of the smooth set real numbers $y(\FR)\in \SmoothSets$.
Similarly, for any $n,m\geq 0$ there exists a smooth map incarnation of the wedge product
\vspace{-2mm} 
\begin{align}\label{SmoothClassifyingWedgeProduct}
 \wedge \;:\; \mathbold{\Omega}_{\mathrm{dR}}^{n}\times \mathbold{\Omega}_{\mathrm{dR}}^{m} \longrightarrow \mathbold{\Omega}_{\mathrm{dR}}^{n+m}   
\end{align}

\vspace{-2mm} 
\noindent defined plot-wise by
\vspace{-2mm} 
\begin{align*}
\Omega^{n}(\FR^k)\times \Omega^{m}(\FR^k)&\longrightarrow \Omega^{n+m}(\FR^k) 
\\
\big(\omega_{\FR^k}, \omega'_{\FR^k}\big)&\longmapsto \omega_{\FR^k} \wedge_{\FR^k} \omega'_{\FR^k} 
\end{align*}

\vspace{-1mm} 
\noindent  for each $\FR^k\in \mathrm{CartSp}$, with $\omega_{\FR^k} \in \Omega^{n}(\FR^k)$ and 
$\omega'_{\FR^k}\in \Omega^{m}(\FR^k)$, respectively. It follows that $\dd$ is a (graded) differential with respect to $\wedge$, 
due to the corresponding plot-wise property. The above natural 
maps of smooth sets can be summarized in the following lemma.

\begin{lemma}[{\bf Induced structure on the moduli space}]\label{InducedStructureOnTheModuliSpace}
The differential graded commutative $\FR$-algebra (DGCA) structure of forms $\Omega^\bullet(\FR^k)$ on each 
Cartesian space $\FR^k\in \mathrm{Cart}$  induces a DGCA structure on the moduli space of forms 
$\mathbold{\Omega}_{\mathrm{dR}}^\bullet\in \SmoothSets$. That is, the triple 
$\big(\mathbold{\Omega}_{\mathrm{dR}}^\bullet, \dd_{\mathrm{d R}}, \wedge\big)$ forms a DGCA  \footnote{Explicitly, this means that the maps of smooth sets $\dd_{\mathrm{dR}},\, \wedge, \,+, \, $ the unit $1_{\FR}: y(\FR) \longrightarrow \Omega^0_{\mathrm{dR}}\hookrightarrow \Omega^{\bullet}_{\mathrm{dR}}$, and 
the corresponding counit satisfy precisely the same structural equations as those of a DGCA of real vector spaces.} to
smooth sets,
\vspace{-2mm} 
\begin{align}\label{ClassifyingFormsDGCAstructure}
\big(\mathbold{\Omega}_{\mathrm{dR}}^\bullet, \dd_{\mathrm{d R}} , \wedge \big)  \in \mathrm{DGCA}(\SmoothSets) \, ,
\end{align}

\vspace{-2mm} 
\noindent as a module over $y(\FR)\in \SmoothSets$.
\end{lemma}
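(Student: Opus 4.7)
The plan is to exploit the fact that in the sheaf topos $\SmoothSets = \mathrm{Sh}(\CartesianSpaces)$, both morphisms and equalities between morphisms can be checked plot-wise on the generating Cartesian spaces. Since $\dd_{\mathrm{dR}}$ and $\wedge$ are already defined plot-wise from the standard de Rham operations on each $\Omega^\bullet(\FR^k)$, the verification of the DGCA axioms in $\SmoothSets$ will reduce entirely to the familiar DGCA structure of differential forms on finite-dimensional Cartesian spaces.

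First, I would confirm that $\dd_{\mathrm{dR}}$ of \eqref{smoothdeRham} and $\wedge$ of \eqref{SmoothClassifyingWedgeProduct} are bona fide morphisms of sheaves, i.e.\ natural in the probe $\FR^k$. Naturality of $\dd_{\mathrm{dR}}$ amounts to $f^\ast \circ \dd_{\FR^k} = \dd_{\FR^{k'}} \circ f^\ast$ for every smooth map $f : \FR^{k'} \to \FR^k$, which is precisely the naturality of the classical de Rham differential under pullback. Likewise, naturality of $\wedge$ reduces to the statement that $f^\ast$ is a morphism of graded algebras, $f^\ast(\omega \wedge \omega') = f^\ast \omega \wedge f^\ast \omega'$. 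Together with the evident plot-wise $\FR$-linearity, this makes $\wedge$ a $y(\FR)$-bilinear morphism of smooth sets, so that the $y(\FR)$-module structure on each $\mathbold{\Omega}_{\mathrm{dR}}^n$ is compatible with the algebra operation.

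Second, I would verify the remaining DGCA axioms: $\dd_{\mathrm{dR}} \circ \dd_{\mathrm{dR}} = 0$, graded commutativity and associativity of $\wedge$, and the graded Leibniz rule. Each is an equality between certain composite morphisms in $\SmoothSets$, i.e.\ between natural transformations of sheaves on $\CartesianSpaces$; such an equality is checked componentwise on representables, where it becomes the corresponding identity in the classical DGCA $\bigl(\Omega^\bullet(\FR^k),\, \dd_{\FR^k},\, \wedge_{\FR^k}\bigr)$ --- which is standard. The only mildly nontrivial step is the first one (the naturality check), which ensures that the plot-wise formulas actually assemble into a diagram in $\SmoothSets$; once that is in place, every DGCA identity follows by unwinding definitions and appealing to the classical calculus on each $\FR^k$.
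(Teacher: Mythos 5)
Your proposal is correct and matches the paper's argument: the paper likewise observes that $\dd_{\mathrm{dR}}$ and $\wedge$ are natural transformations because the classical de Rham differential and wedge product commute with pullback along smooth maps of Cartesian spaces, and then notes that all DGCA identities hold plot-wise and hence in $\SmoothSets$. No gaps.
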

The name ``moduli space" is justified, for the moment only partially, for the following reason. 
For any smooth manifold $M\in \SmoothManifolds$ viewed as a smooth set, we have
\vspace{-2mm} 
\begin{align}\label{ClassifyingFormsClassifyManifoldForms}
\mathrm{Hom}_{\SmoothSets}(y(M),\,   
 \mathbold{\Omega}_{\mathrm{dR}}^{n})\cong 
\Omega^{n}(M)\cong \mathrm{Hom}_{\SmoothManifolds}^{\mathrm{fib.lin.}}( T^{\times n}M,\,  \FR)
\end{align}

\vspace{-1mm} 
\noindent where the first isomorphism follows by Yoneda Lemma \ref{YonedaLemmaForSmoothSets} and the equivalence $\mathrm{Sh(Cart)}\cong \mathrm{Sh(Man)}$. 
That is, smooth maps from a manifold $M$ to the smooth set $\mathbold{\Omega}_{\mathrm{dR}}^{n}$ coincide with the set of 
$n$-forms on $M$, hence $\mathbold{\Omega}_{\mathrm{dR}}^{n}$ indeed serves as a space that `modulates $n$-forms on manifolds'. 
We may extend the above identification to a definition of $n$-forms which applies to any smooth set 
$\CF\in \SmoothSets$. 

\medskip 
The following is essentially the definition of forms in the restricted case of Diffeological spaces \cite{IZ13},
now extended to non-concrete smooth sets. This definition will generalize to the synthetic differential context of 
(infinitesimally) thickened smooth sets (see \cite{GSS-2}), where as far as we know its original conception along with its full 
classifying nature arose.

\begin{definition}[\bf  De Rham forms on smooth sets]
\label{nformsonSmoothSet}
	The set of \textit{smooth de Rham $n$-forms} on a generalized smooth space 
 $\CF\in\SmoothSets$ is defined by
 \vspace{-2mm} 
	\begin{align}
		\Omega^{n}_{\mathrm{dR}}(\CF):= \mathrm{Hom}_{\SmoothSets}(\CF,  \mathbold{\Omega}_{\mathrm{dR}}^{n})\,.
	\end{align}
\end{definition}
That is, since morphisms of smooth sets are given 
by natural transformations, 
 an $n$-form $\om \in \Omega^{n}(\CF)$ is an (functorial) assignment 
 \vspace{-3mm} 
 \begin{align*}
 	\CF(\FR^k) &\longrightarrow \Omega^{n}(\FR^k) 
  \\[-2pt]
 	    \phi_{\FR^k}&\longmapsto \omega_{f_{\FR^k}} \, ,
 \end{align*}

 \vspace{-2mm} 
\noindent
mapping each `$\FR^k$-plot' in $\CF$ to an n-form on $\FR^k$. The de Rham differential and wedge product of forms on a smooth space are defined 
naturally by composing with their universal incarnations \eqref{smoothdeRham} and \eqref{SmoothClassifyingWedgeProduct}. 

\begin{definition}[\bf  Differentials and wedge products of forms]
\label{DifferentialofSmoothMap}
$\,$

\noindent {\bf (i)} The de Rham 1-form $\dd S\in \Omega^1_{\mathrm{dR}}(\CF)$  of a smooth map $S:\CF\rightarrow y(\FR)$ 
is defined as the composition
\vspace{-2mm} 
\begin{align}
	\dd_{\mathrm{d R}} S \;:\; \CF \xlongrightarrow{S} y(\FR)\cong \mathbold{\Omega}_{\mathrm{dR}}^0 \xlongrightarrow{\dd_{\mathrm{d R}}} \mathbold{\Omega}_{\mathrm{dR}}^1 \, .
\end{align}

\vspace{-1mm} 
\noindent {\bf (ii)} Similarly, the differential $\dd_{\mathrm{d R}}\om \in \Omega^{n+1}_{\mathrm{dR}}(\CF)$ of an $n$-form $\om \in \Omega^n(\CF)$ 
is defined as the composition 
\vspace{-2mm} 
$$
\dd_{\mathrm{d R}} \om \;:\; \CF \xlongrightarrow{\;\;\om \;\;} \mathbold{\Omega}_{\mathrm{dR}}^n \xlongrightarrow{\; \dd_{\mathrm{d R}}\;} \mathbold{\Omega}_{\mathrm{dR}}^{n+1}\, .
$$

\vspace{-1mm} 
\noindent {\bf (iii)} Finally, the wedge product of two forms $\om \in \Omega^{n}_{\mathrm{dR}}(\CF)$, $\om'\in \Omega^{m}_{\mathrm{dR}}(\CF)$ 
is defined as the composition
\vspace{-1mm} 
\begin{align*}
\om\wedge \om' \;:\; \CF \xlongrightarrow{\;(\om,\om')\;}\mathbold{\Omega}_{\mathrm{dR}}^{n}\times \mathbold{\Omega}_{\mathrm{dR}}^{m} 
\xlongrightarrow{\; \wedge \;}\mathbold{\Omega}_{\mathrm{dR}}^{n+m} \, .
\end{align*}
\end{definition}

By Lem. \ref{InducedStructureOnTheModuliSpace}, it follows that the collection of differential forms on 
any smooth space $\CF$ inherits the structure of a DGCA over the real numbers
\vspace{-1mm}
$$
(\Omega^{\bullet}_\mathrm{dR}(\CF), \dd_{\mathrm{d R}}, \wedge)\, \in \mathrm{DGCA}_\FR\, .
$$

\vspace{-1mm}
\noindent However, this notion of forms on a smooth set, say a field space $\CF= \mathbold{\Gamma}_M(F)$, does not have an obvious `contraction' 
operation with the corresponding vector fields. In particular, this is in contrast to the differential forms of Def. \ref{DifferentialFormsOnFieldSpace}. 
\begin{remark}[\bf Classifying nature of $\mathbold{\Omega}_{\mathrm{dR}}^\bullet$]
\label{ClassifyingFormsDoNotClassifyInSmoothSet}
Although we have defined smooth differential forms as maps into a `classifying/moduli space', having all the standard algebraic 
properties one would expect, it is not clear what exactly they `classify/modulate'.

\vspace{1mm} 
\noindent {\bf (i)} For the case of representable objects, i.e.,
finite-dimensional manifolds, they do classify the familiar differential forms as fiber-wise linear maps out of the 
tangent bundle \eqref{ClassifyingFormsClassifyManifoldForms}. 
However, as we have noted before, 
there are several notions of `a tangent bundle' for a general smooth set, each of which having its own drawbacks. 

\vspace{1mm} 
\noindent {\bf (ii)} We will come back to the fully-fledged classifying nature of this sheaf once we enrich our site with infinitesimal spaces. 
In that setting, there will be a natural notion of a tangent bundle for any generalized smooth space; 
see \cite{GSS-2}. 
For those generalized spaces that have a fiber-wise linear structure on their tangent bundle, this sheaf will 
classify forms in the traditional sense of fiber-wise linear maps out of the tangent bundle, as in the particular case of 
Def. \ref{DifferentialFormsOnFieldSpace} for field spaces; 
see \cite{GSS-2}. In the current restricted setting of smooth sets, one can still show that a certain subset of \textit{local} differential 
forms (Def. \ref{BicomplexOfLocalForms}) be viewed equivalently as maps into the classifying space $\Omega^{\bullet}_{\mathrm{d R}}$
(Lem. \ref{LocalDiffFormsAsDeRhamForms}).

\vspace{1mm} 
\noindent {\bf (iii)} Nevertheless, we note that the moduli space of $n$-forms does find important applications even in the current restricted setting 
of sheaves over Cartesian spaces. For example: 

\begin{itemize}[leftmargin=.8cm]
\item[{\bf (a)}] In \cite{IZ13} it is shown that forms on quotients of finite-dimensional manifolds 
\footnote{Such quotients do not necessarily result in smooth manifolds (e.g., the irrational torus), but they may be naturally viewed 
as smooth sets since arbitrary colimits exist in $\SmoothSets$.} in the above sense define a sensible de Rham cohomology.

\item[{\bf (b)}] For $M^d$ a smooth spacetime manifold of dimension $d$, the mapping space 
$\big[M^d, \mathbold{\Omega}_{\mathrm{dR}}^{d+2}\big]$ may be seen as a valid home for ``anomaly polynomials'' of Green-Schwarz type. 
Namely, these are $(d+2)$-forms $I_{d+2}(\phi^k)$ associated naturally with $\FR^k$-plots of field configurations $\phi^k$, 
and hence in total form a natural system of differential forms on $\FR^k \times M$,
which is just a single smooth map from $\CF$ into the mapping space $[M^d, \mathbold{\Omega}_{\mathrm{dR}}^{d+2}]$. 
Crucially, such a map is necessarily trivial on fields and $\FR^1$-plots of fields, since 
$[M^d, \mathbold{\Omega}^{d+2}_{\mathrm{dR}}](\FR^k)\cong \Omega^{d+2}(\FR^k\times M^d)$ -- which vanishes for 
$k\leq 1$ (i.e., the mapping space is also not concrete). Nevertheless, there is non-trivial data in the anomaly polynomial,
encoded on higher $\FR^k$-plots of fields. If we assume that the integral of such forms over $M^d$ exists 
(for instance, assuming that $M^d$ is compact) then this construction serves to produce a smooth de Rham 2-form on the field space, as follows:
\vspace{-2mm} 
\begin{equation}
  \label{AnomalyPolynomials}
  \begin{tikzcd}[
    column sep=45pt,
    row sep=0pt
  ]
    {\CF}
    \ar[
      rr,
      "{
        I_{{}_{d+2}}
      }"
    ]
    \ar[
      rrrr,
      rounded corners,
      to path={
           ([xshift=-2pt,yshift=+00pt]\tikztostart.north)          
        -- ([xshift=-2pt,yshift=+9pt]\tikztostart.north)          
        -- node {
          \scalebox{.7}{
            \colorbox{white}{\bf 
              \color{darkgreen}
              curvature of anomaly line bundle on field space
            }
          }
        }
           ([yshift=+08pt]\tikztotarget.north)          
        -- ([yshift=+00pt]\tikztotarget.north)          
      }
    ]
    &&
    {
      \big[
        M^d,
        \boldsymbol{\Omega}_{\mathrm{dR}}^{d+2}
      \big]
    }
    \ar[
      rr,
      "{ \int_{M^d} }"
    ]
    &&
    \boldsymbol{\Omega}^2_{\mathrm{dR}} \;.
    \\[-7pt]
    (
      \overset{
        \mathclap{
          \raisebox{1pt}{
            \scalebox{.7}{\bf 
              \color{darkblue}
              probe
            }
          }
        }
      }{
        \FR^k
      }
      \times 
      M^d 
        \xrightarrow{\phi^k} F
    )
    \ar[
      rr, 
      phantom, 
      "{\longmapsto}"
    ]
    &&
    I_{d+2}(\phi^k)
    \ar[
      rr, 
      phantom, 
      "{\longmapsto}"
    ]
    &&
    \int_{M^d}
    I_{{}_{d+2}}(\phi^k)
    \\[-6pt]
    \mathclap{
    \scalebox{.7}{\bf 
      \color{darkblue}
      \def\arraystretch{.9}
      \begin{tabular}{c}
        $\FR^k$-parametrized
        \\
        family of fields
      \end{tabular}
    }
    }
    &&
    \mathclap{
    \scalebox{.7}{\bf
      \color{darkblue}
      \def\arraystretch{.9}
      \begin{tabular}{c}
        anomaly polynomial
        \\
        in
        $\Omega^{d+2}(\FR^k\times M^d)$
      \end{tabular}
    }
    }
    &&
    \mathclap{
    \scalebox{.7}{\bf 
      \color{darkblue} 
      \def\arraystretch{.9}
      \begin{tabular}{c}
        local anomaly
        \\
        in
        $\Omega^{2}(\FR^k)$
      \end{tabular}
    }
    }
  \end{tikzcd}
\end{equation}

\vspace{-2mm} 
\noindent As indicated in the diagram, if the $I_{d+2}$ are indeed the anomaly polynomials of a Green-Schwarz type anomaly, 
then the 2-form on field space exhibited by the composite arrow is a precise incarnation of what is commonly referred to 
as the curvature of the ``anomaly line bundle'' (cf. \cite[p. 21]{Freed02}, where our ``$\FR^k$'' appears as ``$T$''). 

Of course, in the entire anomaly line bundle
corresponding to \eqref{AnomalyPolynomials} appears similarly as a map $\CF \to \mathbf{B}\mathrm{U}(1)_{\mathrm{conn}}$ 
to the smooth stack of circle bundles with connection \eqref{BGConn}. Alternatively, for the usual case of \textit{local}
anomaly polynomials, these may be equivalently interpreted as $(d,2)$-local forms on $\CF\times M$ (Def. \ref{BicomplexOfLocalForms}),
with the relation of the two interpretations being via Lem. \ref{LocalDiffFormsAsDeRhamForms}.

\item[{\bf (c)}] 
Similarly, {\it invariant} polynomials of degree $n$ (such as underlying higher Chern-Simons forms) are naturally interpreted as maps out 
of a moduli stack $\mathbf{B}G_{\mathrm{conn}}$ \eqref{BGConn} into the classifying sheaf of differential forms
(see \cite[p. 65]{FSS12}\cite[p. 2]{FSS13}\cite{FH13}\cite[\S 5.4.3]{dcct}\cite[\S 3.4]{FSS14}):
\begin{equation}
  \label{CharPolynomials}
  \mathbf{B}G_{\mathrm{conn}}
  \longrightarrow
  \mathbold{\Omega}_{\mathrm{dR}}^{2n}
  \,.
\end{equation}

\end{itemize}
\end{remark}

\paragraph{\bf Smooth structure on de Rham forms of a smooth set.}
By construction, the set of smooth de Rham forms on a smooth space $\Omega^{n}_{\mathrm{dR}}(\CF)$ may be promoted to 
a smooth set using the internal hom \eqref{SmoothSetinternalHom}
\vspace{-1mm} 
$$
{\mathbold{\Omega}}_{\mathrm{dR}}^n(\CF)(\FR^k) 
= \mathrm{Hom}_{\SmoothSets}\big(y(\FR^k)\times \CF, \, \mathbold{\Omega}_{\mathrm{dR}}^{n}\big) \, .
$$

\vspace{-1mm} 
\noindent This does contain what one would interpret as `smoothly $\FR^k$-parametrized n-forms on $\CF$', but 
is much larger.  To see this, consider the case when $\CF=y(M)$ is a manifold where, by the Yoneda Lemma,
$$
\mathbold{\Omega}_{\mathrm{dR}}^n \big(y(M)\big)(\FR^k) = \mathrm{Hom}_{\SmoothSets}\big(y(\FR^k)\times y(M),
\mathbold{\Omega}_{\mathrm{dR}}^{n}\big)\cong \Omega^{n}(\FR^k\times M)\, .
$$ 
Instead, we would like to consider $\FR^k$-parametrized forms on a manifold $M$ as the subalgebra of vertical 
$n$-forms along $\FR^k\times M\rightarrow M$, i.e., 
$$
\Omega^{n}_{\mathrm{Vert}}\big(y(M)\big)(\FR^k)= \Omega^n(\FR^k\times M) \big/ \Omega^{n\geq 1} (\FR^k)
\cong C^\infty(\FR^k)\hat{\otimes} \Omega^{n}(M)\cong {\mathbold{\Gamma}}_M(\wedge^n T^*M)(\FR^k)\, ,
$$
where $ \hat{\otimes}$ denotes the completed projective tensor product of $\FR$-vector spaces.
Analogously working internally to smooth sets, we instead promote $\Omega^{n}_{\mathrm{dR}}(\CF)$ to a smooth set vertically by
\begin{align}
\label{SmoothFormsVerticalStructure}
\Omega^{n}_\mathrm{dR, Vert}(\CF)(\FR^k):=C^{\infty}(\FR^k)\hat{\otimes} \Omega^{n}(\CF)= 
\mathrm{Hom}_{\SmoothSets}\big(\CF,\mathbold{\Omega}_{\mathrm{dR}}^{n}\hat{\otimes} C^\infty(\FR^k)\big) \, ,
\end{align}
where the tensor product on the right-hand side is meant to be computed object-wise. 
\footnote{It can be checked that this smooth set structure corresponds exactly to the so-called ``concretification'' 
of the mapping space $[\CF,\mathbold{\Omega}^n_{\mathrm{dR}}]$ (see \cite[\S 1.2.3.3]{dcct}). } 
In this case, the de Rham differential also extends vertically to a smooth map
\vspace{-2mm} 
\begin{align}\label{SmoothVerticaldeRham}
\dd_\mathrm{Vert} \;:\; \Omega^{n}_{\mathrm{dR, Vert}}(\CF)
\longrightarrow \Omega^{n+1}_\mathrm{dR, Vert}(\CF) 
\end{align}

\vspace{-2mm} 
\noindent as $C^\infty(\FR^k)\hat{\otimes} \Omega^n_\mathrm{dR} (\CF) \xlongrightarrow{\id \otimes \dd}
C^\infty(\FR^k)\hat{\otimes} \Omega^{n+1}_\mathrm{dR}(\CF)$,
for each $\FR^k\in \mathrm{Cart}$. Similarly, the wedge product may be extended vertically to a smooth map 
$$
\wedge_\mathrm{Vert} \;:\; \Omega^n_{\mathrm{dR, Vert}}(\CF)\times \Omega^m_\mathrm{dR, Vert}(\CF) \longrightarrow \Omega^{n+m}_\mathrm{dR, Vert}(\CF).
$$

\newpage 
 
\begin{remark}[\bf Cotangent bundles of smooth sets]
For the case of a manifold $M\in \SmoothManifolds$, $1$-forms are naturally identified with sections of the cotangent bundle 
$\Omega^{1}(M)\cong \Gamma_M(T^*M)$, and analogously for $n$-forms. The vertical smooth structure on $n$-forms defined 
above coincides with that on sections of a vector bundle, as in Def. \ref{SectionsSmoothSet}. However, for 
a general smooth space $\CF$, there is no natural notion of a cotangent bundle over it -- but we still want to
think of de Rham forms on $\CF$ as sections of a would-be bundle, 
hence the vertical smooth structure. \footnote{For the case when $\CF$ is a diffeological space, there is a (somewhat indirect) 
way to construct a bundle  whose sections coincide with $\Omega^n_\mathrm{d R}(\CF)$; see \cite[Ch.6]{IZ13}.}
\end{remark}

To recap, there is a consistent and very general definition of smooth forms on any generalized 
smooth space, finite or infinite-dimensional, concrete or not, which evades the use of any 
functional analysis. In \cref{OnShellFieldsCriticalSetSubsection}, we will describe how this notion of forms on the field space 
may be used to encode the traditional variation of an action functional (albeit not in the most straightforward manner, 
see Rem. \ref{CriticalityViaModuliSpaceOf1-forms}). To that end, we now move into the description of the infinite jet bundle,
as it forms a crucial component of (local) variational field theory.

\newpage 

\addtocontents{toc}{\protect\vspace{-10pt}}
\section{Local Lagrangians} \label{Sec-locLag}

We first recall the traditional infinite-dimensional Fr\'{e}chet, locally pro-manifold structure 
 on $J^{\infty}_M F$, following the description of \cite{Takens79}\cite{Saunders89}\cite{KS17}. Then, viewing the infinite
 jet bundle as a smooth set, we describe how to employ smooth maps out of it to define (smooth) \textit{local} Lagrangians, \textit{local} 
 currents and charges, and \textit{local} functionals on the corresponding smooth field space $\CF$. 
 We conclude by defining the appropriate notion of a (finite and smooth) symmetry of a local Lagrangian field theory. In the subsequent sections, 
 we will use the smooth set description of the infinite jet bundle to naturally construct its tangent bundle, vector fields, and differential 
 forms internally within SmoothSet while, in parallel, showing these constructions recover all the corresponding standard 
 notions employed in the classical literature.

 \medskip 
In \cite{GSS-2}, we will detail how one may equivalently detect (or define) the jet 
 bundle directly within (formal) smooth sets, therefore bypassing
 the infinite-dimensional technicalities mentioned below, and further setting the scene to define the appropriate notion of jet 
 bundles in the fermionic setting. In particular, an (infinite) jet \eqref{inftyJetTraditional} of a fiber bundle $F\rightarrow M$
 at a point $p$ in the base will be equivalently a section of $F$ over the \textit{`infinitesimal neighborhood'} of $p\in M$.

\subsection{Infinite jet bundles as locally pro-manifolds} 
\label{JetBundleSection}

 Let $\pi_M: F\rightarrow M$ be a fiber bundle and $p\in M$
a point in the base. For any $k\in\NN$, the set of $k^{th}$-order jets of sections at $p\in M$ 
 is traditionally defined as the equivalence classes of (local) sections such that their partial derivatives agree, on some 
 (and hence any) local chart, up to order $k$ at $p$:
 \vspace{-2mm} 
\begin{align}\label{kjetTraditional} 
 J^{k}_p(F):= \Big\{j^k_p\phi=[\phi] \, \big{|}\, \phi \sim \phi' \in \Gamma_M(F) \iff \partial_{I} \phi^a (p) 
 = \partial_{I} {\phi'}^a (p) \hspace{0.3cm} \forall\, \,  0\leq |I| \leq k \Big\} \, ,
 \end{align}

 \vspace{-2mm} 
\noindent where $I$ denotes (symmetric) multi-indices. The collection of $k$-jets forms  a finite-dimensional manifold 
\vspace{-2mm} 
 $$
 J^k_M(F)=\underset{p\in M}{\bigcup} J^k_p(F) \quad  \in \quad  \SmoothManifolds \;, 
 $$ 

  \vspace{-2mm} 
\noindent  and in particular a fiber bundle  over $M$ for each $k\in \NN$. For instance, a compatible coordinate chart $\{x^\mu, u^a\}$ 
 on a trivialization of $F\rightarrow M$ induces a natural coordinate chart on $J^1_M(F)$ denoted by 
 $\{x^\mu,u^a, u^a_\mu \}$. Similarly, there are induced coordinate charts 
 $\big\{x^\mu,\{u_I^a\}_{|I|\leq k} \big\}:= \big\{x^\mu,u^a, u^a_\mu, u^{a}_{\mu_1 \mu_2}, \cdots , u^a_{\mu_1\cdots \mu_k}\big\}$
 on each $J^k_M F$, with extra coordinates corresponding 
 to the higher partial derivatives appearing in the definition of a $k$-jet (and hence symmetric in the base indices). 
 There is a natural sequence of smooth maps $\pi^k_{k-1}: J^{k}_M(F)\rightarrow J^{k-1}_M(F)$, in particular surjective submersions, 
 which `forget' the highest derivatives, and so a diagram of smooth manifolds  
 \vspace{-2mm} 
 \begin{align*}
 \longrightarrow J^k_M(F) \longrightarrow J^{k-1}_M(F) \longrightarrow
 \cdots \longrightarrow  J^1_M(F)\longrightarrow J^0_M(F)\cong F \, .    
 \end{align*}

 \vspace{-2mm} 
\noindent Furthermore, there is a canonical map of smooth sections, the $k^{th}$-order \textit{jet prolongation}
\vspace{-2mm} 
\begin{align}\label{kjetprolongation}
j^k \;:\; \Gamma_M(F) \longrightarrow \Gamma_M\big (J^k_M (F) \big) 
\end{align}

\vspace{-2mm} 
\noindent
where $j^k\phi (p) := j^k_p \phi $, for each $k\in \NN$. 

\medskip 
The infinite jet bundle $J^\infty_M F$ is supposed to be the projective limit of the above diagram. It exists when 
computed in the category Set of sets (or topological spaces), and so each fiber $J^\infty_p(F)$ will consist of 
equivalence classes of sections whose partial derivatives agree to arbitrary order, i.e.,
\vspace{-2mm} 
\begin{align}\label{inftyJetTraditional} 
 J^{\infty}_p(F)= \Big\{j^\infty_p\phi=[\phi] \; \big{|}\; \phi \sim \phi' \in \Gamma_M(F) 
 \iff \partial_{I} \phi^a (p) = \partial_{I} {\phi'}^a (p) \hspace{0.3cm} \forall \, \, 0 \leq |I| \Big\} \, .
 \end{align}

 \vspace{-2mm} 
\noindent Set-theoretically, this guarantees the existence of a projection map 
\vspace{-2mm} 
$$
\pi_k\equiv \pi^\infty_k \;:\;  J^\infty_M(F)\longrightarrow J^k_{M}(F)
$$

\vspace{-2mm}
\noindent for each $k\in \NN$ and a map of sections, the \textit{infinite jet prolongation} 
\vspace{-2mm} 
\begin{align}\label{inftyjetprolongation}
   j^\infty \;:\; \Gamma_M(F)\longrightarrow \Gamma_M\big(J^\infty_M(F)\big)\, , 
\end{align}

\vspace{-2mm} 
\noindent where $j^\infty \phi (x):= j^\infty_{x} \phi$. However, the limit does not exist in $\SmoothManifolds$ since it is an 
infinite limit of manifolds of increasing dimension, hence necessarily infinite-dimensional.

\medskip 
A natural way to evaluate
the limit is via the fully faithful embedding $\SmoothManifolds\hookrightarrow \FrechetManifolds$ of finite-dimensional manifolds 
into Fr\'echet manifolds, whereby the limit $J^\infty_M F := \mathrm{lim}_k^{\FrechetManifolds}J^k_M F$ exists
as an infinite-dimensional, paracompact manifold by virtue of the maps in the diagram being surjective submersions 
\cite{Takens79}\cite{Saunders89}\cite{KS17}.

\begin{definition}[\bf  Fr{\'e}chet jet bundles]
The \textit{infinite jet bundle} $J^\infty_M F$ is the (para-compact and Hausdorff) Fr\'{e}chet manifold defined by the limit 
\vspace{-1mm} 
$$
  J^\infty_M(F)
  :=
  \mathrm{lim}^{\FrechetManifolds}_{k}
  \, \, J^k_M(F) \quad \in \quad  \FrechetManifolds \, 
$$ 

\vspace{0mm} 
\noindent whose local model is 
$\FR^\infty
  =
\mathrm{lim}^{\FrechetManifolds}_k
\, \, \FR^k \, \, \in \, \, \FrechetManifolds $, with 
$\big\{x^\mu, \{u_I^a\}_{0\leq |I|}\big\} := \big\{x^\mu, u^a,  u^a_{\mu_1}, u^a_{\mu_1 \mu_2}, \cdots\big\}$
being the local coordinate charts. 
\end{definition}

From our point of view, the 
manifold and explicit chart description of the infinite jet bundle will mostly play an auxiliary role, so as to make 
contact with the existing literature. The universal properties will play a more central role. For instance, the projections
$\pi^\infty_{k}:J^\infty_M(F)\rightarrow J^k_M(F)$ are identified with the universal cone projections, and hence become 
smooth Fr\'echet maps. More generally, such infinite-dimensional limits of finite-dimensional manifolds form a subcategory 
of Fr\'{e}chet manifolds \cite{KS17}.

\begin{definition}[\bf  Locally pro-manifolds]\label{LocProMan}
We define the category of \textit{locally pro-manifolds}
\vspace{-2mm} 
$$
\mathrm{LocProMan}\longhookrightarrow \FrechetManifolds\, ,
$$ 

\vspace{-2mm} 
\noindent to be the full subcategory of Fr\'{e}chet manifolds consisting of projective limits of finite-dimensional manifolds. 
\end{definition}
Note that, by Prop. \ref{FrManRestrEmbedding}, locally pro-manifolds also embed fully faithfully into smooth sets via
\vspace{-2mm} 
\begin{align*}
    \mathrm{LocProMan}\longhookrightarrow \FrechetManifolds\xhookrightarrow{\quad y \quad} \SmoothSets\, .
\end{align*} 

\vspace{-1mm} 
\noindent The defining properties of such infinite-dimensional manifolds, and in particular of the resulting infinite-dimensional manifold 
\vspace{-4.5mm} 
\begin{align}\label{JetBundleLocPro}
J^\infty_M(F):=\mathrm{lim}_k^{\FrechetManifolds}\, \, J^k_M(F) \quad  \in \quad \mathrm{LocProMan}
\end{align}

 \vspace{-2mm} 
\noindent with which we are concerned is the characterization of smooth Fr\'echet maps, 
$\mathrm{Hom}_{\FrechetManifolds}\big(J^{\infty}_M(F), \Sigma  \big)$ 
and $\mathrm{Hom}_{\FrechetManifolds}\big(\Sigma, J^\infty_M(F)\big)$, with codomain and domain being a finite-dimensional manifold $\Sigma \in \SmoothManifolds$,
respectively.
The characterization of smooth maps into $J_M^\infty(F)$ is straightforward.

\begin{lemma}[{\bf Smooth maps into the jet bundle}]
\label{FcnsIntoInftyJet}
Let $\Sigma \in \SmoothManifolds$ be a finite-dimensional manifold. A map of sets $f:\Sigma \rightarrow J^\infty_M(F)$ is smooth if 
and only if  
\vspace{-1mm}
$$
\pi^{\infty}_k\circ f \;:\; \Sigma \longrightarrow J^k_M(F)
$$

\vspace{-1mm} 
\noindent is smooth for each $k\in \NN$. Furthermore,  smooth maps $f:\Sigma \rightarrow J^{\infty}_M(F)$ are in
1-1 correspondence with families of smooth maps $\{f_k: \Sigma \rightarrow J^{k}_M(F) \, | \, k \in \NN \}$ such that
\vspace{-1mm} 
	\[ 
\xymatrix@R=1.6em@C=4em{ \Sigma \ar[rrd]^{f_{k_1}} \ar[d]_{f_{k_2}}&&  
	\\ J^{k_2}_M(F) 
	\ar[rr]^<<<<<<<{\pi^{k_2}_{k_1}} && J^{k_1}_M(F) 
}   
\]

\vspace{-1mm} 
\noindent commutes for each pair $k_2\geq k_1$.
\begin{proof}
This is an immediate consequence of the universal cone property of the limit 
\vspace{-2mm} 
\begin{align*}J^\infty_M(F)\longrightarrow \cdots
 \longrightarrow J^k_M(F) \longrightarrow J^{k-1}_M(F) \longrightarrow \cdots \longrightarrow 
 J^1_M(F)\longrightarrow J^0_M(F)\cong F \, .    
 \end{align*}

 \vspace{-7mm} 
\end{proof}
\end{lemma}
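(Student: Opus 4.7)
\medskip

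\noindent\textbf{Proof plan.} The entire statement is a direct translation of the universal property of the limit \eqref{JetBundleLocPro} in the category $\FrechetManifolds$ into the desired concrete characterization, using that $\Sigma \in \SmoothManifolds \hookrightarrow \FrechetManifolds$ is a valid test object via the fully faithful embedding. My plan is to establish the bijective correspondence with compatible families first, and then deduce the smoothness criterion as a corollary.

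First I would prove the ``only if'' direction of the smoothness criterion: if $f: \Sigma \to J^\infty_M(F)$ is a smooth Fr\'echet map, then since each projection $\pi^\infty_k: J^\infty_M(F) \to J^k_M(F)$ is itself a smooth Fr\'echet map (being the universal cone leg of the limit), the composite $f_k := \pi^\infty_k \circ f: \Sigma \to J^k_M(F)$ is smooth by functoriality of composition in $\FrechetManifolds$. The compatibility $\pi^{k_2}_{k_1} \circ f_{k_2} = f_{k_1}$ for $k_2 \geq k_1$ follows from $\pi^{k_2}_{k_1} \circ \pi^\infty_{k_2} = \pi^\infty_{k_1}$, which is a defining property of the cone.

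For the converse, suppose $\{f_k: \Sigma \to J^k_M(F)\}_{k\in \NN}$ is a compatible family of smooth maps of finite-dimensional manifolds. Because $\SmoothManifolds \hookrightarrow \FrechetManifolds$ is a full embedding, this family constitutes a cone in $\FrechetManifolds$ with apex $\Sigma$ over the defining diagram of $J^\infty_M(F)$. The universal property of the limit \eqref{JetBundleLocPro} then produces a unique smooth Fr\'echet map $f: \Sigma \to J^\infty_M(F)$ with $\pi^\infty_k \circ f = f_k$ for all $k$. Uniqueness shows the assignment $f \mapsto \{\pi^\infty_k \circ f\}_k$ is a bijection onto compatible families, yielding the second statement. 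The first statement then follows: given any set map $f: \Sigma \to J^\infty_M(F)$ all of whose projections $\pi^\infty_k \circ f$ are smooth, the associated compatible family determines (by the bijection just established) a smooth Fr\'echet map $\tilde{f}: \Sigma \to J^\infty_M(F)$; a pointwise comparison using that the family of projections is jointly injective on the underlying set-theoretic limit (by construction of \eqref{inftyJetTraditional}) shows $\tilde{f} = f$, so $f$ itself is smooth.

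The only step that requires a bit of care is the final set-theoretic identification $\tilde{f} = f$: it relies on the fact that the Fr\'echet-manifold limit has the projective set-theoretic limit as its underlying point set, so that a map is determined by its projections. This is built into the construction of $J^\infty_M(F)$ in \cite{Takens79}\cite{Saunders89}\cite{KS17} and may be invoked directly. No analytical subtleties beyond that arise, since everything else is pure limit diagram chasing.
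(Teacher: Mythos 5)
Your proof is correct and takes essentially the same route as the paper, which simply invokes the universal cone property of the Fr\'echet limit $J^\infty_M(F) = \mathrm{lim}_k\, J^k_M(F)$; you have merely spelled out the diagram chase and the set-theoretic identification in more detail.
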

In particular, this result holds for all $\Sigma \in \mathrm{CartSp}\hookrightarrow \SmoothManifolds$, and so gives 
a more explicit description of the smooth set
\vspace{-1mm} 
$$
y(J^\infty_M F ):=\mathrm{Hom}_{\FrechetManifolds}(-,J^\infty_M F )|_{\mathrm{CartSp}} \quad \in \quad  \SmoothSets
$$ 

\vspace{-1mm} 
\noindent incarnation of the infinite jet bundle along the embedding of Prop. \ref{FrManRestrEmbedding}. 
Note that limits in SmoothSet are computed objectwise, which means that the smooth set above is equivalently the limit formed in smooth sets 
\vspace{-2mm} 
\begin{align}\label{JetBundleSmoothSetlimit}y(J^\infty_M F) \cong \mathrm{lim}_k^{\SmoothSets}\,  y( J^k_M F) \, .
\end{align}

\vspace{-2mm} 
\noindent by first embedding each finite-dimensional jet bundle. Concretely, we may consider $\FR^n$-plots of $y(J^\infty_M F)$ as families
of compatible plots of each $J^k_ M F$,
\vspace{-3mm} 
$$
y(J^\infty_M F)(\FR^n)\cong\Big\{ \big\{s^n_k:\FR^n\rightarrow J^k_M F\;\;  \big|  \;\;   \pi^{k}_{k-1}\circ  s^n_{k}
= s^n_{k-1}\big\}_{k\in \NN} \Big\}\, .
$$

\vspace{-2mm} 
\noindent
In particular, an infinity jet $j^\infty_p \phi\in J^\infty_M F$, i.e., a $*$-plot in $y(J^\infty_M F)$ is equivalently represented by the 
compatible family of $k$-jets $\{j^k_p \phi=\pi_k(j^\infty_p)\}_{k\in \NN}$, as expected by the underlying set-theoretic limit \eqref{inftyJetTraditional}.

\begin{remark}[\bf Locally pro-manifolds as smooth set limits]
\label{LocProMantoSmoothSetReflectsLimits}
The same statements hold for any locally pro-manifold 
$
  G^\infty
  =
  \mathrm{lim}_k^{\FrechetManifolds}G^k\in \mathrm{LocProMan}
$. 
That is, 
\vspace{-2mm} 
$$
  y(G^\infty)
  :=
  y\big(\mathrm{lim}_k^{\FrechetManifolds} G^k\big)
  \cong 
  \mathrm{lim}_k^{\SmoothSets}\big( y(G^k)\big)
  \, ,
$$ 

\vspace{-1mm} 
\noindent i.e., the embedding $y:\FrechetManifolds\hookrightarrow \SmoothSets$ reflects projective limits of finite-dimensional manifolds.
In fact, $y:\FrechetManifolds\hookrightarrow \SmoothSets$ reflects all limits of finite-dimensional manifolds, and hence preserves 
all limits in $\mathrm{LocProMan}$, by definition of the latter.
\end{remark}

Moreover, the limit property implies that the infinite jet prolongation \eqref{inftyjetprolongation} maps smooth sections of $F$ to smooth sections of 
$J^\infty_M(F)$, since composition with each of $\pi^\infty_k$ gives a smooth section of $J^k_M(F)$. Smooth sections of the infinite jet bundle 
naturally form a smooth set $\mathbold{\Gamma}_M(J^\infty F)$ as in Def. \ref{SectionsSmoothSet}, i.e., by considering smoothly 
$\FR^k$-parametrized sections of $J^\infty_M(F)$. Moreover, the infinite jet prolongation canonically extends (uniquely) to a \textit{smooth} map
\vspace{-1mm} 
\begin{align}\label{smoothjetprolongation}
    j^{\infty} \;:\; \mathbold{\Gamma}_M(F) &\longrightarrow \mathbold{\Gamma}_M(J^\infty F) 
    \\[-1pt]
    \phi^k &\longmapsto j^\infty \phi^k \nn 
\end{align}

\vspace{-1mm} 
\noindent  vertically. Explicitly, for $\phi^k\in \mathbold{\Gamma}_M(F)(\FR^k)$ a smoothly $\FR^k$-parametrized section (Def. \ref{SectionsSmoothSet}), 
$j^\infty\phi^k \in \mathbold{\Gamma}_M(J^\infty F)(\FR^k)$ is defined by $j^\infty \phi^k (x,u):= j^\infty\big(\iota^*_{u\rightarrow \FR^k}\phi^k \big) (x)$, 
i.e., the prolongation is applied point-wise with respect to the probe space. It is immediate to see that the assignment is functorial 
with respect to pullbacks of maps $f:\FR^m\rightarrow \FR^k$ of Cartesian probe spaces, hence defining a map of smooth sets.

\medskip 
The characterization of smooth maps out of $J^\infty_M(F)$ is more delicate and, for that purpose, we use the following \cite{Takens79}\cite{KS17}.

\begin{proposition}[\bf Smooth maps out of the jet bundle]
\label{FunctionsOnInftyJetBundle}
Let $\pi_k : J^\infty_M(F) \rightarrow J^k_M(F)$ be the canonical projection for each $k\in \NN$. 
A function $f:J^{\infty}_M(F)\rightarrow \FR $ (a map of sets) is smooth if and only if locally around every point $x\in J^{\infty}_M(F)$ 
it factors through $\pi_k : J^{\infty}_M(F) \rightarrow J^{k}_M(F)$ for some $k\in \NN$. That is, for each $x\in J^\infty_M(F)$ 
there exists a $k\in \NN$, a neighborhood $U\subset J^k_M(F)$ around $\pi_k(x)\in J^k_M(F)$ and a smooth function of (finite-dimensional)
manifolds $\tilde{f}^k_U: J^k_M(F)|_{U}\rightarrow \FR$ such that the following diagram commutes
\vspace{-1mm} 
	\[ 
\xymatrix@R=1.3em@C=4em{ J^\infty_M(F)|_{\pi_k^{-1}(U)} \ar[rrd]^-{f} \ar[d]_{\pi_k}&&  
	\\ 
	J^{k}_M(F)|_U \ar[rr]^<<<<<<<{\tilde{f}^{\, k}_U} && \FR\;.
}   
\]
\end{proposition}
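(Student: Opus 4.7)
The ($\Leftarrow$) direction is immediate: the canonical projection $\pi_k : J^\infty_M(F) \to J^k_M(F)$ realizes the universal cone of the Fr\'echet limit \eqref{JetBundleLocPro} and is therefore smooth, so any locally factored $f = \tilde{f}^k_U \circ \pi_k$ on $\pi_k^{-1}(U)$ is a composition of smooth maps between Fr\'echet manifolds. Since smoothness is a local property, the existence of such factorizations at every point of $J^\infty_M(F)$ gives global smoothness of $f$.

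For the harder ($\Rightarrow$) direction, the plan is to work in a Fr\'echet chart around a given point $x \in J^\infty_M(F)$, thereby reducing to the model statement: every Fr\'echet-smooth $f : W \to \FR$ on an open subset $W$ of the local model $\FR^\infty = \mathrm{lim}_k^{\FrechetManifolds}\FR^k$ locally factors through some finite projection $\FR^\infty \to \FR^k$. The key structural input is that the topology on $\FR^\infty$ is generated by the seminorms pulled back from the finite truncations; consequently every continuous linear functional on $\FR^\infty$ must itself factor through some finite $\FR^k$ (the strong dual of a countable projective limit of finite-dimensional spaces is the corresponding colimit of their duals).

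Applied to the Fr\'echet differential at a point $x_0 \in W$, this forces the continuous linear map $Df_{x_0} : \FR^\infty \to \FR$ to depend only on finitely many coordinates, say the first $k_0$ of them, and the analogous statement holds for all higher Fr\'echet derivatives. The plan is then to upgrade these pointwise orders of dependence to a uniform order valid on an entire Fr\'echet neighborhood of $x_0$, and then to conclude by a Hadamard / Taylor-type argument: once all tail-directional derivatives of $f$ vanish on a basic open set of the form $\pi_k^{-1}(V_k)$, the tail-independence of $f$ there is immediate, since the tail slice through any point of $\pi_k^{-1}(V_k)$ is fully contained in the neighborhood, and a function with vanishing derivatives on a connected Fr\'echet affine subspace is constant on it; this yields the desired factorization $f = \tilde{f}^{\,k}_U \circ \pi_k$ locally.

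The main obstacle is precisely the uniformization step above: a priori, as one moves along $W$ or passes to higher-order derivatives, the controlling order $k$ might grow without bound, in which case no single finite factorization exists. Resolving this relies on the specific projective-limit structure of $\FR^\infty$, together with continuity of the full multijet of $f$ as a map into spaces of continuous multilinear forms on $\FR^\infty$. The classical implementations of this argument by \cite{Takens79}, \cite{Saunders89}, and \cite{KS17} either proceed via the convenient-calculus characterization of smoothness and construct an explicit smooth curve oscillating in the tail to derive a contradiction when no finite factorization exists, or manipulate partial derivatives directly in Fr\'echet charts via careful seminorm estimates; either route provides the uniform $k$ needed to close the argument.
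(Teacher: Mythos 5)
The paper does not actually prove this proposition: it is stated as imported from the literature, with the surrounding text explicitly flagging it as ``the only place where analytical details regarding Fr\'echet theory contribute'' and citing Takens and Khavkine--Schreiber for it. So there is no in-paper argument to match yours against; the relevant question is whether your proposal is a self-contained proof, and it is not quite one. Your easy direction is fine, and your reduction to the local model $\FR^\infty=\mathrm{lim}_k\,\FR^k$, together with the observation that continuous (multi)linear functionals on $\FR^\infty$ are finitely supported, correctly identifies where the content lies. But the decisive step --- passing from ``at each point the derivative(s) depend on finitely many coordinates'' to ``there is a \emph{single} $k$ and a neighborhood on which $f$ factors through $\pi_k$'' --- is precisely the nontrivial content of Takens' lemma, and you name it as ``the main obstacle'' and then defer it wholesale to the cited sources (``either route provides the uniform $k$''). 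A pointwise-finite order that is unbounded along a sequence converging to $x_0$ is exactly the failure mode the proposition rules out, and ruling it out requires the actual argument (e.g.\ constructing a smooth curve threading such a sequence and oscillating in the ever-higher coordinates, so that $f$ composed with it fails to be smooth); without it the proposal is a roadmap, not a proof.

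Two smaller points. First, your concluding Hadamard step should be stated on a basic cylinder $\pi_k^{-1}(V)$: you need the tail-affine slice through each point to be connected and entirely contained in the set where the tail derivatives vanish, which holds for cylinders in the product topology but not for an arbitrary ``Fr\'echet neighborhood'' as phrased. Second, it is worth fixing which notion of smoothness you are using (Michor--Hamilton directional-derivative smoothness versus convenient smoothness); on $\FR^\infty$ they agree, but the curve-based contradiction argument you allude to lives most naturally in the convenient setting, so the equivalence should at least be invoked. If you intend the proposition to be used as a citation, as the paper does, your write-up is an acceptable gloss; as a proof it has one named, unresolved gap.
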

We note that the proof of this proposition, and later the discussion of Rem. \ref{1formInfinityJetAsFrechetMap}, 
are the only place where analytical details regarding Fr\'{e}chet theory contribute
to our discussion. 
The rest of the presentation is mainly based on universal properties of smooth sets along with the above result, 
where necessary. The result is readily extended if the target manifold $\FR$ is replaced by $\FR^n$, 
and consequently by any finite-dimensional manifold $N\in \SmoothManifolds$.

\begin{corollary}[\bf  Smooth maps valued in manifolds]
\label{FunctionsInfJetToManifold}
Let $\Sigma \in \SmoothManifolds$ be a finite-dimensional manifold. A function $f:J^{\infty}_M(F)\rightarrow \Sigma $ (of sets) 
is smooth if and only if, locally around every point $x\in J^{\infty}_M(F)$, it factors through 
$\pi_k : J^{\infty}_M(F) \rightarrow J^{k}_M(F)$ for 
some $k\in \NN$.
\end{corollary}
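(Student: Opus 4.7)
The plan is to bootstrap the corollary from Proposition \ref{FunctionsOnInftyJetBundle} in two stages: first upgrade the target from $\FR$ to $\FR^n$, then upgrade further to an arbitrary finite-dimensional $\Sigma$ by a local chart argument. The converse direction (local factorization implies smoothness) is immediate in both stages, since the projection $\pi_k : J^\infty_M F \to J^k_M F$ is a smooth Fr\'echet map (being a universal cone leg) and the composition of smooth Fr\'echet maps is smooth.

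For the forward direction with target $\FR^n$, I would argue that a map $f : J^\infty_M F \to \FR^n$ is smooth if and only if each component $f^i := \pr_i \circ f : J^\infty_M F \to \FR$ is smooth, which is a standard property of smooth maps into product manifolds (and in particular of smooth Fr\'echet maps). Applying Prop. \ref{FunctionsOnInftyJetBundle} to each $f^i$, around any point $x \in J^\infty_M F$ there is an order $k_i \in \NN$, an open $U_i \subset J^{k_i}_M F$ containing $\pi_{k_i}(x)$, and a smooth $\tilde{f}^{k_i}_{U_i} : J^{k_i}_M F|_{U_i} \to \FR$ with $\tilde{f}^{k_i}_{U_i} \circ \pi_{k_i} = f^i$ on $\pi_{k_i}^{-1}(U_i)$. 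Taking $k := \max\{k_1,\ldots,k_n\}$ and pulling each $U_i$ back along $\pi^k_{k_i}$ to an open set in $J^k_M F$, their intersection $U$ is an open neighborhood of $\pi_k(x)$, and the smooth maps $\tilde{f}^{k_i}_{U_i} \circ \pi^k_{k_i}$ assemble into a single smooth $\tilde{f}^k_U : J^k_M F|_U \to \FR^n$ factoring $f$ as required.

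For the forward direction with target an arbitrary manifold $\Sigma$, I would use that smoothness is local in the target. Fix $x \in J^\infty_M F$, set $\sigma := f(x) \in \Sigma$, and pick a chart $(V,\psi)$ of $\Sigma$ with $\sigma \in V$ and $\psi : V \xrightarrow{\sim} \psi(V) \subset \FR^n$ a diffeomorphism onto an open subset. The preimage $f^{-1}(V) \subset J^\infty_M F$ is open by continuity of $f$ (any smooth Fr\'echet map is continuous), so $x$ admits an open neighborhood on which $\psi \circ f$ is a smooth map valued in $\FR^n$. By the preceding paragraph there exist $k \in \NN$, an open $U \subset J^k_M F$ containing $\pi_k(x)$, and a smooth $\tilde{g}^k_U : J^k_M F|_U \to \psi(V)$ with $\tilde{g}^k_U \circ \pi_k = \psi \circ f$ on $\pi_k^{-1}(U)$. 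Post-composing with $\psi^{-1} : \psi(V) \to V \hookrightarrow \Sigma$ yields a smooth $\tilde{f}^k_U := \psi^{-1}\circ \tilde{g}^k_U : J^k_M F|_U \to \Sigma$ with $\tilde{f}^k_U \circ \pi_k = f$ on $\pi_k^{-1}(U)$, as desired.

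The only subtle step is the assembly of component factorizations into a common order $k$: one must verify that the pulled-back factorizations along $\pi^k_{k_i}$ remain smooth and agree with $f^i$ on a common open neighborhood of $x$, but this is straightforward since the projections $\pi^k_{k_i}$ are smooth surjective submersions with $\pi_{k_i} = \pi^k_{k_i} \circ \pi_k$. No genuine obstacle arises, so I expect the proof to be essentially a bookkeeping exercise on top of Prop. \ref{FunctionsOnInftyJetBundle}.
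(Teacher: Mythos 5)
Your proposal is correct and follows essentially the same route as the paper's proof: reduce to $\FR^n$ by componentwise application of Prop.~\ref{FunctionsOnInftyJetBundle} and passage to a common (maximal) jet order, then handle a general $\Sigma$ via local charts; the paper merely does the $\FR^2$ case explicitly and invokes induction where you take the maximum over all $n$ components at once. No gap.
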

\begin{proof}
Consider first the case where $\Sigma=\FR^2$, with $f=J^\infty_M (F)\rightarrow \FR^2$ a map of sets. It is a smooth map 
of Fr\'echet manifolds if and only if $f_i:=\pr_i\circ f: J^\infty_M (F)\rightarrow \FR $ is smooth for both $i=1,2$. 
By Prop. \ref{FunctionsOnInftyJetBundle}, this is the case if and only if around each $x\in J^\infty_M(F)$ there 
exist $k_1,k_2\in \NN$, neighborhoods $U_1 \subset J^{k_1}_M(F),\, U_2\subset J^{k_2}_M(F)$ and smooth functions 
$\tilde{f}^{\, k_i}_{U_i}:J^{k_i}_M(F)|{U_{i}}\rightarrow \FR$ such that the diagrams
\vspace{-2mm} 
	\[ 
\xymatrix@R=1.5em@C=4em{ J^\infty_M(F)|_{\pi_{k_i}^{-1}(U_i)} \ar[rrd]^-{f_i} \ar[d]_{\pi_{k_i}}&&  
	\\ 
	J^{k_{i}}_M(F)|_{U_i} \ar[rr]^<<<<<<<<{\tilde{f}^{\, k_i}_{U_i}} && \FR
}   
\]

\vspace{-1mm} 
\noindent 
commute for $i=1,2$. Assume (without loss of generality) that $k_2\geq k_1$ and let $\pi^{k_2}_{k_1}:J^{k_2}_M(F) \rightarrow J^{k_1}_M(F)$
denote the canonical projection.
Define $\bar{f}^{\, k_1}_{U}:= (\pi^{k_2}_{k_1})^*\tilde{f}^{\, k_1}_{U_1}: J^{k_2}_{M}(F)|_{U}\longrightarrow \FR$, 
where $U=U_2 \cap \pi^{-1}(U_1) $, then the diagram  
\vspace{-2mm} 
	\[ 
\xymatrix@R=1.6em@C=5em{ J^\infty_M(F)|_{\pi_{k_2}^{-1}(U)} \ar[rrd]^-{(f_{1},f_{2})} \ar[d]_{\pi_{k_2}}&&  
	\\ 
	J^{k_{2}}_M(F)|_{U} \ar[rr]^<<<<<<<<<{\big(\bar{f}^{\; k_1}_{U},\; \tilde{f}^{\; k_2}_{U_2}\big)} && \FR^2
}   
\]

\vspace{-2mm} 
\noindent commutes since each of its compositions with the two projections $\pr_i:\FR^2\rightarrow \FR$ commute. 
Hence, a map $f:J^\infty_M(F) \rightarrow \FR^2$ is smooth if and only if it locally factors through finite order jet bundles. 

The case for the codomain being $\FR^n$ follows similarly by induction of the above argument. 
Lastly, let $\Sigma$ be an n-dimensional manifold, then a map $f:J^\infty_M(F) \rightarrow \Sigma$ is smooth if and only if 
$\psi \circ f:J^\infty_M(F)|_{f^{-1}V} \rightarrow \FR^n$ is smooth for every local chart 
$\psi:V\subset \Sigma \xrightarrow{\sim} \FR^n$. By the previous argument, this is true if and only if for each
$x\in J^\infty_M (F) $ there exists some $k\in \NN$, a neighborhood $U\subset J^k_M(F)$ of $\pi^k(x)$ and a 
smooth map $\tilde{f}_U^k:J^k_M(F)\rightarrow V\subset \Sigma$ such that the following diagram commutes
\vspace{-2mm} 
	\[ 
\xymatrix@R=1.4em@C=1.8em{ J^\infty_M(F)|_{\pi_k^{-1}(U)} \ar[rrd]^-{f} \ar[d]_{\pi_k}&& & &
	\\ 
	J^{k}_M(F)|_U \ar[rr]^<<<<<<{\tilde{f}^{\, k}_U} && V\subset \Sigma \ar[rr]^{\psi} && \FR^n 
}   
\]

\vspace{-1mm} 
\noindent For each local chart $\psi$, the corresponding diagram commutes if and only if its left triangle commutes, 
 giving the result.
\end{proof}

\newpage 

We may combine the above two results to describe the set of smooth maps 
$\mathrm{Hom}_{\mathrm{LocProMan}}(J^\infty_M F, G^\infty)$ from the infinite jet bundle to any other 
projective limit of finite-dimensional manifolds $G^\infty=\mathrm{lim}_j^{\FrechetManifolds}G^j$.

\begin{proposition}[\bf Smooth maps of locally pro-manifolds]\label{SmoothMapsofLocProMan}
Let $G^\infty=\underset{{\rm FrMan}}{{\rm lim}}G^j$ be any locally pro-manifold.

\noindent {\bf (i)} A map of sets 
$f:J^\infty_M F \rightarrow G^\infty $ is smooth if and only if 
\vspace{-3mm}
$$
p^{\infty}_j \circ f \;:\; J^\infty_M F \longrightarrow G^\infty \xlongrightarrow{\;\; p^\infty_j\;\;} G^j 
$$

\vspace{-1mm} 
\noindent is smooth for each $k\in \NN$, and hence if each $p^\infty_j\circ f$ locally around every $x\in J^\infty_M F$ 
locally factors through $\pi_k:J^\infty_M (F) \rightarrow J^k_M (F)$ for some $k\in \NN$, where 
$p^\infty_j: G^\infty \rightarrow G^j$ denotes the universal cone projections of $G^\infty$.

\noindent {\bf (ii)} Furthermore,  
smooth maps $f:J^{\infty}_M(F) \rightarrow G^\infty$ are in 1-1 correspondence with compatible families of 
smooth maps \newline 
$\{f_j: J^\infty_M F \rightarrow G^j  \, 
| \, j \in \NN \}$ such that
\vspace{-2mm} 
	\[ 
\xymatrix@R=1.7em@C=4em{ J^\infty_M (F)\ar[rrd]^{f_{j_1}} \ar[d]_{f_{j_2}}&&  
	\\ G^{j_2} 
	\ar[rr]^<<<<<<<<{p^{j_2}_{j_1}} && G^{j_1} 
}   
\]

\vspace{-2mm} 
\noindent commutes for each pair $k_2\geq k_1$, and hence, such that furthermore each $f_j$ locally factors through 
$\pi_k:J^\infty_M (F) \rightarrow J^k_M (F)$ for some $k\in \NN$.
\end{proposition}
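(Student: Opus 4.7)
The plan is to derive both statements by combining the universal property of the projective limit defining $G^\infty$ with the locally-factoring characterization of smooth maps out of $J^\infty_M F$ already available (Prop. \ref{FunctionsOnInftyJetBundle} and Cor. \ref{FunctionsInfJetToManifold}).

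First, I would handle part (ii) entirely from the limit property. By Def. \ref{LocProMan} and Rem. \ref{LocProMantoSmoothSetReflectsLimits}, $G^\infty = \mathrm{lim}^{\FrechetManifolds}_j G^j$ represents the limit in $\FrechetManifolds$, so the universal cone property yields a natural bijection between smooth Fr\'echet maps $f : J^\infty_M F \to G^\infty$ and compatible families $\{f_j : J^\infty_M F \to G^j\}_{j\in\NN}$ of smooth maps satisfying $p^{j_2}_{j_1}\circ f_{j_2} = f_{j_1}$ for $j_2 \geq j_1$. Under this bijection $f_j = p^\infty_j \circ f$, which immediately gives both the 1-1 correspondence and the ``only if'' direction of the first equivalence in (i): if $f$ is smooth, each $p^\infty_j \circ f$ is a composition of smooth maps, hence smooth.

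Next, for the ``if'' direction of the first equivalence in (i), I would again invoke the universal property: a map of sets $f : J^\infty_M F \to G^\infty$ that has all its projections $p^\infty_j \circ f$ smooth, and automatically compatible (since $p^{j_2}_{j_1}\circ p^\infty_{j_2} = p^\infty_{j_1}$), induces by the universal property a smooth map into the limit which must coincide with $f$ as a map of sets. Thus $f$ itself is smooth.

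For the second equivalence in (i), I would apply Cor. \ref{FunctionsInfJetToManifold} separately to each composite $p^\infty_j \circ f : J^\infty_M F \to G^j$: since each $G^j$ is a finite-dimensional manifold, $p^\infty_j \circ f$ is smooth if and only if it locally factors through $\pi_k : J^\infty_M F \to J^k_M F$ for some $k$ (depending on $j$ and on the base point). Combining with the first equivalence gives the claim; the analogous statement in (ii) then follows by imposing the same local-factorization condition on each member of the compatible family. The main subtle point — and the only place one might worry — is that the $k = k(j,x)$ in the local factorization depends on $j$ and on the base point $x \in J^\infty_M F$; but this is already the content of the pointwise formulation in Cor. \ref{FunctionsInfJetToManifold}, so no further argument is required.
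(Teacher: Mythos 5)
Your proposal is correct and follows essentially the same route as the paper: the universal/limit property of $G^\infty$ in $\FrechetManifolds$ reduces smoothness of $f$ to smoothness of the compatible family $p^\infty_j\circ f$, and Cor.~\ref{FunctionsInfJetToManifold} then gives the local factorization of each component through a finite-order jet bundle. Your explicit remark that $k=k(j,x)$ varies with both $j$ and the base point is a worthwhile clarification, but no new argument beyond the paper's is needed.
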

\begin{proof}
By the Fr\'{e}chet limit property of $G^\infty$, a map of sets $f:J^\infty_M F \rightarrow G^\infty$ is smooth if and only if each 
$p^\infty_j \circ f :J^\infty_M F\rightarrow G^j$, and furthermore each such compatible family of smooth maps 
$\{f_j: J^\infty_M \rightarrow G^j  \, | \, j \in \NN \}$ defines a smooth map $f:J^\infty_M F\rightarrow G^\infty$. 
Each of the terms $G^j$ in the limit are  by assumption finite-dimensional manifolds $G^j\in \SmoothManifolds$, and hence by Cor. \ref{FunctionsInfJetToManifold} each $f_j$ locally factors through a finite order jet bundle .
\end{proof}

\begin{remark}[\bf Pro-manifold vs Fr\'{e}chet jet bundle]\label{ProVsLocProJetBundle} 
$\,$

\noindent {\bf (i)} The limit over jet bundles may alternatively be computed formally as a pro-object in the categorical sense. 
In this case, the resulting object is 
not a Fr\'echet manifold, but simply a \textit{formal}
limit of finite-dimensional manifolds instead. A smooth function on such a formal object is \textit{by definition} 
one that \textit{globally} factors through some $J^k_M(F)$. Said otherwise, the smooth functions $C^\infty_\mathrm{glb}(J^\infty_M F)$ 
on the formal limit are the union over $k\in \NN$ of those on $J^k_M(F)$. \footnote{Namely, the colimit of the algebras 
of functions on each finite order jet bundle. This is, in particular, precisely the definition that \cite{Anderson89} uses.} 
Note that these naturally sit inside the algebra of locally finite order functions
$C^\infty_\mathrm{glb}(J^\infty_M F)\hookrightarrow C^\infty(J^\infty_M F)$, as a sub-algebra.

\vspace{1mm} 
\noindent {\bf (ii)} Crucially, however, 
globally finite order functions on $J^\infty_M F$ do not form a (petit) sheaf on the underlying topological space, in contrast to those of 
locally finite order, and hence the study of the two is of different flavor. 
The formal approach is implicit in 
\cite{Anderson84}\cite{Anderson89}\cite{DF99}, while  \cite{Takens79}\cite{Saunders89}\cite{KS17} employ the Fr\'echet limit described above. 

\vspace{1mm} 
\noindent {\bf (iii)} Our choice of the latter is based upon the simple structure of maps into the Fr\'echet manifold $J^\infty_M F$ of Lem. \ref{FcnsIntoInftyJet}, 
and the natural embedding into smooth sets of Prop. \ref{FrManRestrEmbedding} -- thus viewing it as a proper geometrical space on the same footing with field 
spaces. Further support for this choice is given by the observation that the resulting smooth set is equivalently the limit over jet bundles computed directly
in smooth sets,\footnote{This does not render the Fr\'{e}chet picture redundant, for it is this incarnation that provides the intuitive point set
description of smooth maps out of $J^\infty_M F$.} as in  \eqref{JetBundleSmoothSetlimit}. 

\vspace{1mm} 
\noindent {\bf (iv)} We note that the texts \cite{Blohmann23b}\cite{Del}  attempt to view (bosonic) Lagrangian field theory as taking place in the 
full subcategory  $\DiffeologicalSpaces $ of smooth sets, while also requiring that the infinite jet bundle is only a formal pro-object.
The drawbacks in that approach are twofold:

\begin{itemize} 
\item[{\bf (a)}] restricting to concrete sheaves, while sufficient for most aspects of bosonic field theory, does not naturally generalize to 
include fermionic fields, or infinitesimal structure;

\item[{\bf (b)}] insisting on viewing the infinite jet bundle as a pro-object requires the introduction of considerably extra categorical machinery, 
while still not embedding into $\SmoothSets$ (or $\DiffeologicalSpaces$), and hence not treating all objects appearing on the same categorical and 
geometrical footing. The resolution therein is to develop the rest of the field-theoretic concepts in Pro$\SmoothSets$ (or Pro$ \DiffeologicalSpaces$). 
This is a somewhat heavy conceptual step, and seems unnecessary from our perspective. 
\end{itemize} 
\end{remark} 

\newpage 
\subsection{Local Lagrangians, currents and symmetries}
\label{Sec-Currents}

In the case of interest to field theory, the above formulation of the infinite jet bundle correctly encodes the explicit 
form of local Lagrangian densities. 

\begin{definition}[\bf  Local Lagrangian density]
\label{LocalLagrangianDensity}
A local Lagrangian density is a map of smooth sections $\CL:\Gamma_M(F) \rightarrow \Omega^d(M)$ of the form 
\vspace{-3mm} 
\begin{align}
\CL \;:\; \Gamma_M(F) &\longrightarrow \Omega^d(M)\\
\phi &\longmapsto  L \circ j^\infty \phi \nn
\end{align}

\vspace{-2mm} 
\noindent  where $j^\infty$ is the jet prolongation $j^{\infty}:\Gamma_M(F) \rightarrow \Gamma_M(J^\infty F)$ and $L$ is
a \textit{smooth bundle} map
\vspace{-2mm} 
\[ 
\xymatrix@C=1.8em@R=.2em  {J^\infty_M F \ar[rd] \ar[rr]^L &   & \wedge^d T^*M \ar[ld]
	\\ 
& M & 
}.  
\]
\end{definition}
This definition does indeed reflect the formulas written in the physics literature. Locally, the Lagrangian density $L$
may be written as  $L=\bar{L}\big(x^\mu,\{u_I^a\}_{|I|\leq k}\big) \cdot \dd x^1\cdots \dd x^d $ for some smooth function $\bar{L}\in C^\infty(J^\infty_M F)$, 
which by Prop. \ref{FunctionsOnInftyJetBundle} depends (locally) on jets up to degree $|I|=k$. Hence, abusing notation 
slightly, the value of the local Lagrangian density $\CL(\phi)$ on a field $\phi\in \Gamma_M(F)$ 
may be represented by
\vspace{-2mm} 
$$
\CL(\phi) = L\circ j^\infty\phi = L\big(x^\mu, \phi^a,\{\partial_I\phi^a\}_{|I|\leq k}\big) 
= \bar{L}\big(x^\mu, \phi^a, \{\partial_I\phi^a\}_{|I|\leq k} \big) \cdot \dd x^1\cdots \dd x^d = \bar{\CL}(\phi) \cdot \dd x^1\cdots \dd x^d\, ,
$$

\vspace{-2mm} 
\noindent
as is common in the physics literature (see e.g. \cite{HenneauxTeitelboim92}\cite{DF99}).

\begin{example}[\bf O($n$)-model Lagrangian]\label{VectorValuedFieldTheoryLagrangian}
Let $M\times W\rightarrow M$ be the trivial vector bundle over spacetime M equipped with a Lorentzian (or Riemannian) metric $g$ and $W\cong \FR^N$ 
equipped with a Euclidean inner product $\langle-,-\rangle$. The ``${\rm O}(n)$-model'' Lagrangian\footnote{Higher polynomial terms, 
i.e., ``interactions'', are also considered. Generally, this is accomplished by adding a term $V(\langle\phi,\phi \rangle)$ for $V:\FR \rightarrow \FR$
some polynomial function.} is given by 
\vspace{-1mm} 
\begin{align*}
\CL(\phi)&=\tfrac{1}{2} \langle \dd_M\phi\, ,\,  \star \dd_M\phi \rangle + \tfrac{1}{2}\big(  c_2 \cdot \langle \phi , \phi \rangle 
+ \tfrac{1}{2} c_4 \cdot (\langle \phi,\phi \rangle)^2 \big) \cdot \dd \mathrm{vol}_g \\ 
&=\tfrac{1}{2}\big(\langle \dd_M\phi\, ,\,  \dd_M\phi \rangle_g +  c_2 \cdot \langle \phi , \phi \rangle + \tfrac{1}{2} c_4 \cdot 
(\langle \phi,\phi \rangle)^2 \big) \cdot \dd \mathrm{vol}_g \\ 
&= \tfrac{1}{2}\big(g^{\mu \nu} \cdot \partial_\mu \phi^a \cdot \partial_\nu \phi_a  +  c_2 \cdot \phi^a \phi_a + \tfrac{1}{2} c_4 \cdot 
\phi^a \phi_a \cdot \phi^b \phi_b \big) \cdot \sqrt{|g|}\cdot \dd x^1 \cdots \dd x^d
\end{align*}

\vspace{-1mm} 
\noindent for some ``coupling constants'' $c_2,c_4\in \FR$, where $*_g: \Omega^{k}\rightarrow \Omega^{d-k}(M)$ is the Hodge operator
of $g$, $\phi^a \cdot \phi_a := \langle \phi , \phi \rangle =
\delta^a_{\, \, b} \cdot \phi^a \cdot \phi^b$
and similarly for $\langle -, -\rangle_g:= g \otimes \langle -, - \rangle $ as pairing on $\Omega^{1}(M)\otimes W$. By the local
coordinate expression, the Lagrangian density is local $\CL(\phi) = L\circ j^\infty(\phi)$, for the Lagrangian bundle map
$L:J^\infty_{M}(M\times W)\rightarrow \wedge^d(T^*M)$ over M given locally by
\vspace{-1mm} 
$$
L=  \tfrac{1}{2}\big(g^{\mu \nu} \cdot u^a_\mu\cdot u^b_\nu \cdot \delta^a_{\, \, b} +  
c_2 \cdot u^a  u^b\cdot \delta^a_{\, \, b}+ \tfrac{1}{2}c_4 \cdot  u^a  u^c\cdot \delta^a_{\, \, c}\cdot u^b  u^d\cdot \delta^b_{\, \, d} \big)
\cdot\sqrt{|g|} \cdot \dd x^1 \cdots \dd x^d\, .
$$

\vspace{0mm} 
\noindent In the case of Minkowski (or Euclidean) spacetime $(M,g)=(\FR^d,\eta)$, the expression simplifies further with the 
determinant being $\sqrt{|\eta|}=1$. 
Taking further the fiber to be one dimensional $W\cong \FR$, the Lagrangian reduces to that of ``$\phi^4$ scalar field theory''. 
Choosing instead the spacetime $(M,g)=(\FR^1_t, \delta)$ to be the $1$-dimensional `time' line with the canonical metric, 
the Lagrangian reduces to that of particle mechanics coupled to a background potential.
\end{example}

Collecting the above results and observations, we see that any local Lagrangian density constitutes a smooth map.

\begin{lemma}[{\bf Local Lagrangian is smooth}]
\label{LocalLagrangianisSmooth}
A local Lagrangian density $\CL:\Gamma_M(F) \rightarrow \Omega^d(M)$ canonically extends to a smooth map of smooth sets
\vspace{-3.5mm} 
\begin{align*}
\CL \;:\; \mathbold{\Gamma}_{M}(F) \; \longrightarrow \;
\Omega^{d}_{\mathrm{Vert}}(M)\cong \Omega^{d}(M)\hat{\otimes} y(\FR)\, ,
\end{align*}
where $\hat{\otimes}$ denotes the (plot-wise)  completed projective tensor product.
\begin{proof}
For any $\tilde{\phi}^k \in \mathbold{\Gamma}_M(J^\infty F)(\FR^k)$ smoothly $\FR^k$-parametrized section of $J^\infty_M(F)$,  
the Lagrangian density smooth bundle map $L:J^\infty F \rightarrow \wedge^d T^*M$ underlying $\CL$ induces the following  diagram
\vspace{-2mm} 
\[ 
\xymatrix@C=1.8em@R=.4em  {& J^\infty_M F \ar[rd] \ar[rr]^L &   & \wedge^d T^*M \ar[ld]
	\\ 
 \FR^k\times M \ar[ru]^{\tilde{\phi}^k} \ar[rr] & & M & 
} \, ,  
\]

\vspace{-2mm} 
\noindent where the outer part commutes since the two inner triangles commute. Hence the composition $L\circ \tilde{\phi}^k$
defines a smoothly $\FR^k$-parametrized $n$-form on $M$, i.e., $L\circ \tilde{\phi}^k \in \Omega^{d}_{\mathrm{Vert}}(M)(\FR^k)$.
Moreover, the assignment $\tilde{\phi}^k \mapsto L\circ \tilde{\phi}^k$ is manifestly functorial with respect to pullbacks 
of maps $\FR^m\rightarrow \FR^k$, and so $L$ naturally defines a map of smooth sets 
\vspace{-2mm} 
$$
L \;:\; \mathbold{\Gamma}_M(J^\infty F) \; \longrightarrow \; \Omega^{d}_{\mathrm{Vert}}(M)\, .
$$
\newpage 
\vspace{-2mm} 
\noindent The smooth Lagrangian $\CL:= L\circ j^\infty$ is defined by pre-composing with the smooth infinite prolongation map 
$j^\infty:\mathbold{\Gamma}_M(F)\rightarrow \mathbold{\Gamma}_M(J^\infty F)$ from \eqref{smoothjetprolongation}, that is,
\vspace{-3mm} 
\begin{align}
 \CL \;:\; \mathbold{\Gamma}_M(F) & \; \longrightarrow \; \Omega^{d}_{\mathrm{Vert}}(M)\\
 \phi^k & \; \longmapsto \; L(j^\infty \phi^k) \, , \nn
\end{align}

\vspace{-2mm} 
\noindent where $\phi^k\in \mathbold{\Gamma}_M(F)(\FR^k)$. Finally, it is a smooth map as a composition of smooth maps.
\end{proof}
\end{lemma}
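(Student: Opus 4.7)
The plan is to exhibit $\CL$ as the composition of two smooth maps of smooth sets: first the smooth jet prolongation $j^\infty: \mathbold{\Gamma}_M(F) \to \mathbold{\Gamma}_M(J^\infty F)$ from \eqref{smoothjetprolongation}, and second a map induced by the Lagrangian bundle map $L: J^\infty_M F \to \wedge^d T^*M$ that sends $\FR^k$-parametrized sections of $J^\infty_M F$ to $\FR^k$-parametrized top forms on $M$. Since the former is already established as a smooth map of smooth sets in \eqref{smoothjetprolongation}, the main work is to construct and verify smoothness of the latter.

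First I would set up the action on plots. Given a smoothly $\FR^k$-parametrized section $\tilde{\phi}^k \in \mathbold{\Gamma}_M(J^\infty F)(\FR^k)$, i.e., a map $\tilde{\phi}^k: \FR^k \times M \to J^\infty_M F$ covering $\pr_2: \FR^k \times M \to M$, I would post-compose with $L$ to obtain $L \circ \tilde{\phi}^k : \FR^k \times M \to \wedge^d T^* M$. The commuting triangles
\[
\xymatrix@C=1.8em@R=.4em{& J^\infty_M F \ar[rd] \ar[rr]^L & & \wedge^d T^*M \ar[ld] \\ \FR^k\times M \ar[ru]^{\tilde{\phi}^k} \ar[rr]_{\pr_2} & & M &}
\]
show that the outer composite is a bundle morphism covering $\pr_2$, i.e., a smoothly $\FR^k$-parametrized $d$-form on $M$, which is exactly an element of $\Omega^d_{\mathrm{Vert}}(M)(\FR^k)$ under the vertical smooth structure \eqref{SmoothFormsVerticalStructure}.

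Next I would verify naturality in the probe: for any smooth map $f: \FR^m \to \FR^k$ of Cartesian spaces, the pullback $(\id_M \times f)^*$ acts on parametrized sections by precomposition, and since $L$ is fixed, the assignment $\tilde{\phi}^k \mapsto L \circ \tilde{\phi}^k$ strictly commutes with such precomposition. This upgrades $L$ to a morphism of smooth sets $L_*: \mathbold{\Gamma}_M(J^\infty F) \to \Omega^d_{\mathrm{Vert}}(M)$.

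Finally, defining $\CL := L_* \circ j^\infty$ gives a composition of two morphisms of smooth sets, hence a morphism of smooth sets. On $\ast$-plots this reproduces the set-theoretic formula $\phi \mapsto L \circ j^\infty \phi$ of Def.~\ref{LocalLagrangianDensity}, proving the claim. I do not anticipate any serious obstacles: the argument is essentially bookkeeping built on the universal property of $\mathbold{\Gamma}_M(-)$ from Def.~\ref{SectionsSmoothSet}, the smoothness of $j^\infty$ already established in \eqref{smoothjetprolongation}, and the commutativity of the bundle triangles. The only subtle point is confirming that the target smooth set $\Omega^d_{\mathrm{Vert}}(M)$ is indeed the right object on the receiving end, which follows by identifying $\FR^k$-parametrized $d$-forms on $M$ (vertical along $\pr_2$) with $\Omega^d(M) \hat{\otimes} C^\infty(\FR^k)$ via \eqref{SmoothFormsVerticalStructure}.
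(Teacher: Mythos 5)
Your proposal is correct and follows essentially the same route as the paper's own proof: post-composition with $L$ on $\FR^k$-parametrized sections, the commuting-triangles argument to land in $\Omega^d_{\mathrm{Vert}}(M)(\FR^k)$, naturality in the probe, and then precomposition with the smooth jet prolongation \eqref{smoothjetprolongation}. No gaps.
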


At this point, we have described all the ingredients needed to define a bosonic, smooth and local field theory via objects and maps in the 
category of smooth sets.
\begin{definition}[\bf  Local Lagrangian Field Theory]\label{LocalLagrangianFieldTheoryDefinition}
A (bosonic) \textit{smooth, local Lagrangian field theory} is defined to be a pair
\vspace{-2mm}
$$
(\CF,\CL)
$$

\vspace{-2mm} 
\noindent where $\CF=\mathbold{\Gamma}_M(F)\in \SmoothSets$ is a smooth field space of sections 
of a (finite-dimensional) fiber bundle $F$ over the space-time $M$, and 
$\CL= L\circ j^\infty : \CF\rightarrow \Omega^{d}_\mathrm{Vert}(M) $ is the smooth Lagrangian density defined by a smooth bundle map
$L:J^\infty_M F\rightarrow \wedge^d T^*M$. 
\end{definition}

\begin{remark} [\bf Finite- vs. infinite-degree jet bundles in physics] 
$\,$

\noindent {\bf (i)} Each of the fundamental Lagrangians appearing in theoretical physics, e.g. General Relativity, Yang--Mills \footnote{Strictly speaking, 
our discussion applies verbatim for Yang--Mills and Chern--Simons \textit{only if} the field space is taken to be $\CF=\Omega^1(M,\frg)=\Gamma_M(T^*M\times_M \frg)$, 
i.e., connections on \textit{the trivial $G$-bundle} $M\times G\rightarrow M$. Generally, however, the field space consists of connections $A\in \mathrm{Conn}(P)$ 
on an arbitrary principal $G$-bundle over $M$, which are not (canonically) the set of sections of a fiber bundle over $M$. For a fixed topological 
sector $P\rightarrow M$, one can choose a reference connection $A_0$ to yield a bijection $\mathrm{Conn}(P)\cong_{A_0}\Gamma_M\big(\mathrm{ad}(P)\big)$ 
where $\mathrm{ad}(P)=P\times_G \frg $ is the adjoint bundle. The jet bundle formalism then applies on the corresponding smooth set of the right-hand side, 
therefore all diffeomorphisms and symmetries defined necessarily preserve $A_0$. Namely, such an approach treats the \textit{perturbation} theory around 
a fixed connection $A_0$. 
Consistently accounting for \textit{all} topological sectors nonperturbatively \textit{and} furthermore (local) gauge symmetries simultaneously 
forces a further natural generalization of smooth sets (Rem. \ref{OnGaugeEquivalencyandRedundancy}).} 
and Chern--Simons theories, factor \textit{globally} through a finite degree jet bundle $J^k_M F$.  \footnote{In particular, these fundamental Lagrangians all depend
(at most) on second-order jets. However, higher-order jet dependence corrections are often suggested to support deviations on experimental data.}

\noindent {\bf (ii)} Generic algebraic operations on fixed order Lagrangians, such as integration by parts (and further operations in the 
variational bicomplex; see \cref{VariationalBicomplexSection}), result in objects that necessarily factor through higher jet bundles, 
and so it is only natural to not fix an order in the jet bundle and consider the infinite limit instead. 

\noindent {\bf (iii)} To our knowledge, there are no explicit examples of theories that only \textit{locally} factor through finite order jet bundles, as in the
locally pro-manifold picture. Working in this slightly more general setting is nevertheless natural in order to place ourselves within the
convenient setting of smooth sets. On the other hand, to our knowledge, there is no fundamental physical principle that excludes the possibility of 
locally finite order Lagrangians, and it would be interesting to find explicit examples and properties of such theories.
 \end{remark} 

\begin{remark}[\bf Non-local Lagrangian Field Theory]\label{NonLocalSmoothLagrangian}
Although we will not pursue this here, the setting of smooth sets naturally accommodates a more general notion of a smooth Lagrangian field theory, 
given by  a field space $\CF=\mathbold{\Gamma}_M (F)$ and a (possibly non-local) smooth Lagrangian density
\vspace{-2mm} 
\begin{align*}
\CL \;:\; \CF \longrightarrow \Omega^{d}_{\mathrm{Vert}}(M)\, .
\end{align*}

\vspace{-2mm} 
\noindent Common examples of such Lagrangians include (effective) field theories which factor through direct products of $J^\infty_M(F)$ 
(also termed ``multi-local"). Nevertheless, as will be made clear throughout this manuscript, the \textit{locality} property is a crucial 
ingredient for the majority of well-known constructions and results of classical field theory. 
\end{remark}
 
The statement (and proof) of Lem. \ref{LocalLagrangianisSmooth} holds in more generality, for differential operators
viewed as maps of sections of bundles that factor through the infinite jet bundle.

\begin{lemma}[{\bf Differential operators as smooth maps}]\label{DiffOpsAsSmoothMaps}
Let $F$ and $G$ be two (finite-dimensional) fiber bundles over $M$, and let $P$ be a smooth bundle map covering a diffeomorphism $f:M\rightarrow M$
\vspace{-2mm} 
\[ 
\xymatrix@C=4em@R=.5em  {J^\infty_M F \ar[dd] \ar[rr]^P &   &  G \ar[dd]
	\\  \\ 
M \ar[rr]^f &  &  M \, .
}    
\]

\vspace{-1mm} 
\noindent {\bf (i)} $P$ naturally defines a smooth map 
\vspace{-3mm} 
\begin{align*}
P \;:\; \mathbold{\Gamma}_M(J^\infty F)& \; \longrightarrow \; \mathbold{\Gamma}_M(G)
\\
\tilde{\phi}^k& \;\longmapsto \; P\circ \tilde{\phi}^k\circ (\id_{\FR^k}, f^{-1})\, , 
\end{align*}

\vspace{-2mm} 
\noindent where $\tilde{\phi}^k \in \mathbold{\Gamma}_M(J^\infty F)(\FR^k)$ is any $\FR^k$-parametrized section
of $J^\infty_M(F)$. 
\newpage 
\noindent {\bf (ii)} Furthermore, it defines a differential operator
\vspace{-2mm} 
\begin{align*}
\CP \;:\; \Gamma_M(F)& \;\longrightarrow \; \Gamma_M(G)\\[-1pt]
\phi&\;\longmapsto \;  P\circ j^\infty (\phi) \circ f^{-1}
\end{align*}

\vspace{-1mm} 
\noindent which extends to a smooth map $\CP:= P\circ j^\infty$, where 
$j^\infty:\mathbold{\Gamma}_M(F)\rightarrow \mathbold{\Gamma}_M(J^\infty F)$ is the 
smooth infinite jet prolongation \eqref{smoothjetprolongation}
\vspace{-1mm}
\begin{align}
\CP \; : \; \mathbold{\Gamma}_M(F)&\longrightarrow \mathbold{\Gamma}_M(G)
\\[-2pt]
\phi^k&\; \longmapsto\; P\circ j^\infty (\phi^k)  \circ (\id_{\FR^k}, f^{-1})\;. \nn 
\end{align}

\begin{proof}
This follows as in Lem. \ref{LocalLagrangianisSmooth}.
\end{proof}
\end{lemma}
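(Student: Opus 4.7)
The plan is to follow almost verbatim the strategy of Lem.~\ref{LocalLagrangianisSmooth}, observing only that the codomain is now the smooth set of sections of $G$ rather than $\Omega^d_\mathrm{Vert}(M)$, and that the covered diffeomorphism $f:M\to M$ must be inverted on the source side of a plot so that the resulting map still covers the projection $\pr_2$.

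First I would prove (i). Given an $\FR^k$-plot $\tilde{\phi}^k \in \mathbold{\Gamma}_M(J^\infty F)(\FR^k)$, i.e.\ a smooth map $\tilde{\phi}^k : \FR^k \times M \to J^\infty_M F$ satisfying $\pi_F^\infty \circ \tilde{\phi}^k = \pr_2$, I would form the outer rectangle of the evident diagram
\[
\xymatrix@C=1.6em@R=.5em{
& J^\infty_M F \ar[rr]^{P} \ar[d] & & G \ar[d] \\
\FR^k\times M \ar[ur]^{\tilde{\phi}^k} \ar[rr]^{\pr_2} & M \ar[rr]^{f} & & M \, ,
}
\]
and precompose with $(\id_{\FR^k}, f^{-1}) : \FR^k \times M \to \FR^k \times M$ to restore the projection property. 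This produces a smooth map $\FR^k \times M \to G$ whose composition with $G \to M$ is precisely $\pr_2$, hence an element of $\mathbold{\Gamma}_M(G)(\FR^k)$. Smoothness of the assignment as a whole is automatic from smoothness of $P$, $f^{-1}$, and the plot $\tilde{\phi}^k$ in the finite-dimensional/Fr\'echet sense.

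Next I would check naturality in the probe, i.e.\ that this assignment is a morphism of presheaves. For $g:\FR^m \to \FR^k$ a smooth map of Cartesian probes, the pullback $g^\ast \tilde{\phi}^k := \tilde{\phi}^k \circ (g \times \id_M)$ is mapped to $P \circ \tilde{\phi}^k \circ (g \times \id_M) \circ (\id_{\FR^m}, f^{-1}) = P \circ \tilde{\phi}^k \circ (\id_{\FR^k}, f^{-1}) \circ (g \times \id_M)$, which is the pullback along $g$ of the image of $\tilde{\phi}^k$. Hence the assignment defines a map of smooth sets $P : \mathbold{\Gamma}_M(J^\infty F) \to \mathbold{\Gamma}_M(G)$.

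Finally, for (ii), the differential operator $\CP$ on the underlying set of sections agrees by construction with the composite $P \circ j^\infty$ of smooth maps in $\SmoothSets$, where the smoothness of $j^\infty : \mathbold{\Gamma}_M(F) \to \mathbold{\Gamma}_M(J^\infty F)$ was established in \eqref{smoothjetprolongation}. Composition of smooth maps of smooth sets is smooth, so $\CP$ is smooth. No genuine obstacle is expected here: the only subtle point is the bookkeeping with the covered diffeomorphism $f$, which forces the insertion of $(\id_{\FR^k}, f^{-1})$ to maintain the section condition over $M$; once this is noted, the rest of the verification is diagrammatic and formally identical to the Lagrangian case of Lem.~\ref{LocalLagrangianisSmooth}.
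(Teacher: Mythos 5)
Your proposal is correct and takes essentially the same route as the paper, which simply defers to the argument of Lem.~\ref{LocalLagrangianisSmooth}: you form the commuting diagram, insert $(\id_{\FR^k},f^{-1})$ to restore the section condition over $M$, verify naturality in the probe, and conclude smoothness of $\CP=P\circ j^\infty$ as a composite of smooth maps. Nothing is missing.
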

For future reference, let us note that any map smooth $P:J^\infty_M F \rightarrow G$ covering a diffeomorphism as in Lem. \ref{DiffOpsAsSmoothMaps}
canonically induces a \textit{smooth} `prolongated' map of infinite jet bundles $\pr P :J^\infty_M F \rightarrow J^\infty_M G$ 
covering $P$. \footnote{A special case of this definition, where it is restricted to bundle maps $J^\infty_M F\rightarrow G$ arising
as pullbacks of bundle maps $F\rightarrow G$, appears in \cite[Def. 7.2.10]{Saunders89}  in the Fr\'{e}chet setting
and \cite[Def 1.2]{Anderson89}  in the pro-manifold setting.}

\begin{definition}[\bf  Prolongation of jet bundle map]\label{ProlongationOfJetBundleMap}
Let $P:J^\infty_M F \rightarrow G$ be a smooth bundle map covering a diffeomorphism $f:M\rightarrow M$. The \textit{prolongation} 
of $P$ is the bundle map defined by
\vspace{-1mm} 
\begin{align*}
\pr P \;:\; J^\infty_M F &\longrightarrow J^\infty_M G \\
j^\infty_p \phi &\longmapsto j^\infty_{f(p)} \big(P \circ j^\infty \phi \circ f^{-1}\big) \, ,
\end{align*}

\vspace{-1mm} 
\noindent where $\phi:U\rightarrow F$ is any representative local section. By construction, it covers the original bundle map  
\vspace{-2mm} 
	\[ 
\xymatrix@C=1.6em@R=1em{ &&  J^\infty_M G \ar[d]
	\\ 
	J^\infty_M F \ar[rru]^{\pr P } \ar[rr]^>>>>>>>{P} && G \, ,
}   
\]

\vspace{-2mm} 
\noindent and hence also the diffeomorphism $f:M\rightarrow M$.
\end{definition}
The prolongation $\pr P:J^\infty_M F \rightarrow J^\infty_M G$ is: 

\noindent {\bf (a)} well-defined, i.e., independent of the choice representative section $\phi: U\rightarrow F$ since the right-hand 
side manifestly depends only on the derivatives of $\phi$ at $p\in M$, all of which are encoded in $j^\infty_p \phi$,

\noindent {\bf (b)} smooth as an application of Prop. \ref{SmoothMapsofLocProMan} by expanding in local coordinates.

\noindent By construction, it follows that the induced smooth map  on sections 
$\pr P \circ j^\infty : \mathbold{\Gamma}_M(F) \rightarrow \mathbold{\Gamma}_{M}(J^\infty G) $ (Lem. \ref{DiffOpsAsSmoothMaps}) 
is related to that of $\CP:= P\circ j^\infty (-) \circ(\id,f^{-1}):\mathbold{\Gamma}_M(F)\rightarrow \mathbold{\Gamma}_M(G) $ by 
\vspace{-2mm}
\begin{align}\label{ProlongatedActionOnSections}
 \pr P \circ j^\infty = j^\infty \circ \CP  \, . 
\end{align}

Let us return to the field-theoretic setting where the \textit{smooth} differential operator maps of Lem. \ref{DiffOpsAsSmoothMaps} 
naturally appear. As explained, of course, one such case is a smooth Lagrangian $\CL: \CF \rightarrow \Omega^{d}_{\mathrm{Vert}}(M)$ 
induced by a bundle map $J^\infty_M F \rightarrow \wedge^d T^*M$ over $M$. Completely analogously, we may define \textit{local} 
$p$-form valued maps on the field space, for any $0\leq p\leq d$. Indeed, consider Lem. \ref{DiffOpsAsSmoothMaps} in the case 
where $G=\wedge^p T^*M$ as a bundle over $M$. Then $\mathbold{\Gamma}_M(\wedge^p T^*M)\cong \Omega^{p}_{\mathrm{Vert}}(M)$, 
with its $*$-plots being usual smooth $p$-forms on the base $M$. 
In this case, precomposing a bundle map $J^\infty_M F \rightarrow \wedge^p T^*M$ 
by the jet prolongation map gives a smooth map
$\CF \rightarrow \Omega^{p}_{\mathrm{Vert}}(M) \, .$

\begin{definition}[\bf  Local currents on field space]\label{CurrentOnFieldSpace}
A local (smooth) \textit{$p$-form current} on field space is a smooth map
\vspace{-1mm} 
$$
\CP:= P \circ j^\infty  \;:\; \CF \; \longrightarrow \; \Omega^{p}_{\mathrm{Vert}}(M) 
$$

\vspace{-1mm} 
\noindent  induced by a smooth bundle map $P:J^\infty_M F \rightarrow \wedge^p T^*M$.  
\end{definition}

\begin{example}[\bf Vector-valued field space currents]\label{VectorValuedFieldSpaceCurrents} 
Constructing local currents for trivial vector field bundles as in Ex. \ref{VectorFieldTheoryFieldSpace} is straightforward. For instance:

\noindent {\bf (i)} Let $\phi = \phi^a\cdot e_a$ be some basis decomposition of $\CF=[M,W]$. Then 
\vspace{-1mm}
$$
\CP_1(\phi) = \dd_M \phi^a = \partial_\mu \phi^a \cdot \dd x^\mu \quad \in \quad \Omega^1(M)
$$

\vspace{-1mm}
\noindent
defines a local $1$-form current, for $P_1: J^\infty_M F \rightarrow T^*M$ given locally by $u^a_\mu \cdot \dd x^\mu$.

\noindent {\bf (ii)} Let $B= (B_{[ab]} ):W\times W\rightarrow \FR$ be any antisymmetric bilinear map. Then 
\vspace{-2mm}
$$
\CP_2(\phi)= B\circ(\dd_M\phi, \dd_M\phi) =  B_{[ab]} \cdot \dd_M \phi^a \wedge \dd_M \phi^b \quad \in \quad \Omega^2(M)
$$ 

\vspace{-1mm}
\noindent
defines a local $2$-form current, for $P_2: J^\infty_M F \rightarrow \wedge^2 T^*M $ given locally by $B_{[ab]}\cdot u^a_\mu u^b_\nu \cdot \dd x^\mu\wedge \dd x^\nu  $. 

\vspace{1mm} 
\noindent {\bf (iii)} Recall the field space vector fields $\CZ^A(\phi) = A^a_{\, b} \cdot \phi^b \cdot \frac{\delta}{\delta \phi^b}$ and 
$\CZ^\nu (\phi) = \partial_\mu \phi^a \cdot \nu^\mu \cdot \frac{\delta}{\delta \phi^a}$ from Ex. \ref{DiffeomorphismForVector-valuedFieldTheoryViaTarget} 
and Ex. \ref{DiffeomorphismForVector-valuedFieldTheoryViaBase}, respectively. Then 
\vspace{-2mm}
\begin{align*}
\CP_{\CZ^A}(\phi) 
&= - \langle \CZ^A(\phi),\, \star \dd_M \phi \rangle := - \CZ^A(\phi)^a \wedge \star \dd_M\phi 
= -A^a_{\, b} \cdot \phi^b \wedge \star \dd_M \phi_a   \quad \in \quad \Omega^{d-1}(M) 
\\  \CP_{\CZ^\nu}(\phi)& = \iota_\nu \big(  \CL(\phi)\big) - \langle \CZ^\nu(\phi), \, \star \dd_M \phi \rangle = \iota_\nu \big(  \CL(\phi)\big)  - \nu^\mu \cdot \partial_\mu \phi^a \wedge \star \dd_M \phi_a  \quad \in \quad \Omega^{d-1}(M) 
\end{align*}

\vspace{-1mm}
\noindent 
define local $(d-1)$-form currents. The corresponding jet bundle maps can be read off by expanding in coordinates.
\end{example}

\newpage 
Integration defines another natural map of smooth sets. More precisely, if the base spacetime $M$ is compact 
and oriented,\footnote{Throughout this text we shall assume the spacetime $M$ is orientable, with a fixed choice of orientation.
Treating non-orientable spacetimes requires (minor) technical modifications (e.g., Lagrangians and currents as valued 
in forms twisted by the orientation bundle) which we will make explicit here. Such modifications are implicit in \cite{DF99}, 
and technical details may be found in \cite{Del}.}
then integration along $M$ defines a smooth map 
\vspace{-2mm} 
\begin{align}\label{SmoothIntegrationMap}
\int_M : \;\; \Omega^{d}_{\mathrm{Vert}}(M) \longrightarrow y(\FR)\, .
\end{align}

\vspace{-2mm} 
\noindent
Explicitly, for any $\FR^k$-parametrized top-form $\om_{\FR^k}\in \Omega^{d}_{\mathrm{Vert}}(M)(\FR^k)$, the value 
of the function $\int_M \om_{\FR^k} \in y(\FR)(\FR^k)\cong C^\infty(\FR^k, \FR) $ is given by  
$$
\bigg(\int_M \om_{\FR^k}\! \bigg)(x) := \int_M \iota_{x}^* \om_{\FR^k}\, ,
$$
where $\iota_x$  is the inclusion $M\xrightarrow{\sim}\{x\}\times M\subset \FR^k\times M $. 
In plain words, one integrates along $M$ while keeping the $\FR^k$-dependence fixed. If $M$ is not compact, 
the smooth integration map is only defined on the smooth subset of compactly supported 
forms $\Omega^{d}_{\mathrm{Vert},c}(M)\hookrightarrow \Omega^d_\mathrm{Vert}(M)$.

\medskip 
Composition of a smooth Lagrangian $\CL$ with the smooth integration map defines the action functional as 
a map of smooth sets 
\vspace{-2mm} 
\begin{align}\label{LocalActionisSmooth}
S=\int_M \circ \;\; \CL \; :\;  \mathbold{\Gamma}_M(F) \longrightarrow y(\FR)\, .
\end{align}
If $\CL$ is interpreted as a $d$-form current, then the action may be thought of as its charge over $M$. 
If the spacetime $M$ is not compact, the integration map is only defined on the smooth subset of compactly supported fields
$\mathbold{\Gamma}_{M,c}(F)\hookrightarrow\mathbold{\Gamma}_M(F)$, 
on which the action functional is still defined as a smooth map. \footnote{Or more generally smooth fields with 
appropriate decay behavior. Alternatively, one integrates against test bump functions instead.} 

\medskip  

Analogously to the case of the action functional, we may integrate a $p$-form 
current along any $p$-dimensional (oriented, compact) sub-manifold to obtain a smooth, `local'
real-valued function on field space.

\begin{definition}[\bf  Charges on field space]\label{ChargesOnFieldSpace}
The \textit{charge} of a local $p$-form current $\CP$ along a $p$-dimensional oriented, compact submanifold 
$\Sigma^p \hookrightarrow M$ is the smooth map
\begin{align*}
\CP_{\Sigma^p}:= \int_{\Sigma^p} \CP = \int_{\Sigma^p} P\circ j^\infty 
\quad 
: \quad
\CF \; \longrightarrow \; 
\Omega^{d}_{\mathrm{Vert}}(M) \; \longrightarrow \; y(\FR)\,.
\end{align*}

\end{definition}
The physical interpretation is the traditional one. For any field $\phi \in \CF(*)$, a point in field space, the charge map 
assigns the real number  $\int_{\Sigma^p}P \circ j^\infty \phi \in \FR$, i.e., the total flux through the submanifold 
$\Sigma^p$ of the current $\CP(\phi)$ corresponding to the chosen field configuration. 
The upshot is that such charges are in fact smooth maps within the category of smooth sets, which allows one to
study them rigorously. 
In particular, we may employ currents and their charges as the 
basic ingredients for general \textit{local} functionals on field space, which are a particular subalgebra of all smooth real-valued maps.

\begin{definition}[\bf  Local functionals on field space]\label{LocalFunctionsOnFieldSpace}
The algebra of (smooth) \textit{local real-valued functionals} 
\vspace{-1mm} 
$$
C^\infty_\mathrm{loc}(\CF)\longhookrightarrow C^\infty(\CF)
$$ 

\vspace{-1mm} 
\noindent on the field space $\CF$ is the (minimal) sub-algebra generated by charges of currents, and more generally by integrating currents
$\CP:=P \circ j^\infty : \CF\rightarrow \Omega^{p}_{\mathrm{Vert}}(M)$ against bump functions along non-compact oriented 
submanifolds. \footnote{This can further be enlarged to include limits of charges against bump functions, and hence observables 
of distributional nature. This extension will not be needed for our discussion.} 

\end{definition}

\begin{remark}[\bf Algebras of currents and local functionals]\label{AlgebrasOfCurrentsAndLocalFunctions}
 The set of currents inherits a (graded) algebra structure from the target $\Omega^{\bullet}_{\mathrm{Vert}}(M)$, i.e.,
 the wedge product of the currents induced by 
 $P:J^\infty_M F \rightarrow \wedge^p T^*M$ and $P':J^\infty_M F \rightarrow \wedge^{p'} T^*M$ is given by
\vspace{-2mm} 
 $$
 \CP \wedge \CP' := P\wedge P' \circ j^\infty \;\; : \;\; \CF \longrightarrow \Omega^{p+p'}_\mathrm{Vert}(M)\, .
 $$

 \vspace{0mm} 
\noindent The resulting current induces charges $(\CP\wedge \CP')_{\Sigma^{p+p'}}\in C^\infty_{\mathrm{loc}}(\CF)$ by 
integrating over submanifolds 
$\Sigma^{p+p'}$ of dimension $p+p'$. Note that this is not directly related to the product of the corresponding charges,
 \vspace{-2mm} 
 $$
 \CP_{\Sigma^p}\cdot \CP'_{\Sigma^{p'}}= \bigg(\int_{\Sigma^p} P\circ j^\infty\bigg)\cdot \bigg(\int_{\Sigma^{p'}} P'\circ j^\infty\bigg) \in C^\infty_{\mathrm{loc}}(\CF)   
 $$

  \vspace{-2mm} 
\noindent induced by first integrating against submanifolds $\Sigma^p$, $\Sigma^{p'}$ of dimension $p$ and $p'$, respectively.
\end{remark}
In the case of $p=0$, where $\Omega^{0}_{\mathrm{Vert}}(M)= [M,\FR] $ is the smooth space of smooth $\FR$-valued functions on $M$. 
By integration over $0$-dimensional manifolds,  i.e.,
points in $M$, we mean evaluation at the given point.

\begin{example}[\bf Field point amplitude]\label{FieldPointAmplitudeExample}
For any chart $\big\{x^\mu, \{u_I^a\}_{0\leq |I|}\big\}$ of $J^\infty_M F$ above a point $x\in M$, we have the smooth, local point evaluation maps 
$$
\hat{\phi}^a_{I}(x):= \mathrm{ev}_x \circ u^a_I \circ j^\infty 
\quad : \quad 
\CF \longrightarrow \Omega^{0}_{\mathrm{Vert}}(M)\cong [y(M),y(\FR)] \longrightarrow y(\FR) \, .
$$
On $*$-plots $\CF$, i.e., on field configurations $\Gamma_M(F)$, this extracts the amplitude (of the derivatives) of the 
 field at the point $x$, in the given chart,
 $$
 \hat{\phi}^a_I(x) (\phi) = \partial_I \phi^a (x) \, ,
 $$
 and similarly for $\FR^k$-plots of fields. In the physics literature, these `point observables' are a common source of confusion, 
 since they are usually denoted by exactly the same symbol as the field itself.
\end{example}

We close off this section by defining the appropriate notion of a symmetry of a Lagrangian field theory. With the notion of diffeomorphisms $\CF\rightarrow \CF$ as in Def. \ref{SmoothAutomorphisms} at hand, one can expect a (smooth) symmetry of
a local Lagrangian field theory $(\CF,\CL)$ to be any diffeomorphism $\CD:\CF\rightarrow \CF$ that preserves the Lagrangian, $\CL\circ \CD = \CL$. 
However, since `exact local Lagrangians' induce trivial dynamics (see Eq. \eqref{ExactLagrangianTrivialELequation} onwards), it is 
natural to relax the condition to preserve the Lagrangian up to an exact Lagrangian. Indeed, it is this notion of symmetry (in its local version) that preserves
the corresponding (smooth) space of on-shell fields (Prop. \ref{LocalSymmetryPreservesOnshellSpace}), and (in its infinitesimal version) induces (smooth) conserved
currents via Noether's 1st theorem (Prop. \ref{Noether1st}). In the current setting of \textit{finite} symmetries, as we will see, physical examples dictate that
a Lagrangian may be further only preserved up to a pullback by a spacetime diffeomorphism, in which case the (smooth) space of on-shell fields is again preserved (Prop. \ref{SymmetryPreservesOnshellSpace}).

\begin{definition}[\bf  Symmetry of Lagrangian field theory]\label{FiniteSymmetryofLagrangianFieldTheory}
$\,$

\noindent {\bf (i)} A \textit{local symmetry} of a local Lagrangian field theory $(\CF,\CL)$ is a \textit{local} diffeomorphism $\CD :\CF\rightarrow \CF$ 
such that there exists a local current (d-1)-form $\CK$ that makes  diagram 
\vspace{-2mm} 
\[ 
\xymatrix@C=6em@R=.2em  {\CF \ar[rd]_-{\CL + \dd_M  \CK ~~} \ar[rr]^{\CD} &   & \CF \ar[ld]^-{\CL}
	\\ 
& \Omega^{d}_{\mathrm{Vert}}(M) & 
} 
\]

\vspace{-2mm} 
\noindent commute. That is, $\CL\circ \CD = \CL + \dd_M \CK $ where 
$\CK:= K\circ j^\infty : \CF\rightarrow \Omega^{d-1}_{\mathrm{Vert}}(M)$ for some bundle map 
$K:J^\infty_M F \rightarrow \wedge^{d-1} T^*M$ and $\CD:= D\circ j^\infty: \CF\rightarrow \CF$ for some bundle map 
\vspace{-2mm} 
\[ 
\xymatrix@C=2.8em@R=.2em  {J^\infty_M F \ar[rd] \ar[rr]^D &   &  F \ar[ld]
	\\ 
& M & 
}   
\]

\vspace{-2mm} 
\noindent as per Lem. \ref{DiffOpsAsSmoothMaps}.

\noindent {\bf (ii)} A \textit{spacetime covariant symmetry} of a local Lagrangian field theory $(\CF,\CL)$ is a diffeomorphism $\CD:\CF\rightarrow \CF$, 
the differential operator induced by a bundle map $D:J^\infty_M F \rightarrow F$ covering a diffeomorphism $f:M\rightarrow M$, such that there exist 
local current (d-1)-form $\CK$ that makes the diagram
\vspace{-2mm} 
\[
\xymatrix@C=6em@R=2em  {\CF \ar[d]^{\CL+\dd_M \CK} \ar[rr]^{\CD} &&   \CF\ar[d]^{\CL} 
	\\ 
 \Omega^{d}_{\mathrm{Vert}}(M)\ar[rr]^-{f^*}  && 
\Omega^{d}_{\mathrm{Vert}}(M)\, . } 
\]

\vspace{-2mm}
\noindent commute. In other words, $\CL\circ \CD = f^* \circ \CL + f^* \circ \dd_M \CK $ where 
$\CK:= K\circ j^\infty : \CF\rightarrow \Omega^{d-1}_{\mathrm{Vert}}(M)$ for some bundle map 
$K:J^\infty_M F \rightarrow \wedge^{d-1} T^*M$ and $\CD:= 
D\circ j^\infty (-) \circ (\id, f^{-1}): \CF\rightarrow \CF$ for some bundle map
\vspace{-2mm} 
\[ 
\xymatrix@C=4em@R=.8em  {J^\infty_M F \ar[dd] \ar[rr]^P &   &  F \ar[dd]
	\\  \\ 
M \ar[rr]^f &  &  M \, .
}    
\]

\vspace{-2mm}
\noindent as per Lem. \ref{DiffOpsAsSmoothMaps}. Colloquially, one may say $\CD$ is a ``local symmetry of $(\CF,\CL)$ up to a spacetime diffeomorphism''.
\end{definition}
Note that the invertibility of a local diffeomorphism $\CD:\CF\rightarrow \CF$ implies that the induced\footnote{By dimensionality, it is obvious that 
$D=\pi^\infty_0 \circ \tilde{D}:J^\infty_M F\rightarrow F$ itself cannot be invertible.} prolongated bundle map $\pr P:J^\infty_MF \rightarrow J^\infty_M F$
(Def. \ref{ProlongationOfJetBundleMap}) is necessarily an automorphism, with the inverse of $\CD$ given by $\CD^{-1}=(\pi^\infty_0 \circ (\pr D)^{-1}) \circ j^\infty$.
A similar statement holds for the diffeomorphism of a spacetime covariant symmetry. The diffeomorphisms of Ex. \ref{DiffeomorphismsViaFieldBundleAutomorphisms} 
arising from bundle automorphisms maps $\tilde{f}: F\rightarrow F$ covering diffeomorphisms are a special case of the above, by precomposing via 
$\pi^\infty_0 : J^\infty_M F\rightarrow F$.

\newpage 
We have already introduced the ingredients of an explicit example of such symmetries.
\begin{example}[\bf Symmetries of ${\rm O}(N)$-model]\label{LocalSymmetriesOfO(n)Model}
Consider the ${\rm O}(N)$-model Lagrangian $\CL:[M,W]\rightarrow \Omega^{d}_{\mathrm{Vert}}(M)$ of Ex. \ref{VectorValuedFieldTheoryLagrangian}, 
\vspace{-1mm} 
\begin{align*}
\CL(\phi)&=\tfrac{1}{2}\big(\langle \dd_M\phi\, ,\,  \dd_M\phi \rangle_g +  c_2 \cdot \langle \phi , \phi \rangle + \tfrac{1}{2} c_4 \cdot 
(\langle \phi,\phi \rangle)^2 \big) \cdot \dd \mathrm{vol}_g \, . 
\end{align*}

\vspace{-1mm}
\noindent {\bf (i)} Recall the postcomposition induced diffeomorphisms $\CD=g_*:[M,W]\rightarrow [M,W]$ of Ex. \ref{DiffeomorphismForVector-valuedFieldTheoryViaTarget}.
Choosing $g:W\rightarrow W$ to be an orthogonal transformation, $e_a\mapsto g^{b}_{\, \, a}\cdot e_b$ for $[g^{b}_{\, \, a}]\in \mathrm{O}(N,\FR)$, 
which (by definition) preserves the inner product on $W$, we obtain a local symmetry of the Lagrangian field theory:
\vspace{-1mm}
\begin{align*}
 \CL \circ \CD(\phi):&= \CL(g\circ \phi)= \tfrac{1}{2}\big(\langle \dd_M(g\circ \phi)\, ,\, 
 \dd_M(g\circ \phi) \rangle_g +  c_2 \cdot \langle g\circ \phi , g\circ \phi \rangle + \tfrac{1}{2} c_4 \cdot 
(\langle g\circ \phi, g\circ \phi \rangle)^2 \big) \cdot \dd \mathrm{vol}_g \\
&=
\tfrac{1}{2}\big(\langle g \circ \dd_M\phi\, ,\,  g\circ \dd_M\phi \rangle_g +  c_2 \cdot \langle \phi , \phi \rangle + \tfrac{1}{2} c_4 \cdot 
(\langle \phi,\phi \rangle)^2 \big) \cdot \dd \mathrm{vol}_g \\
&=
\tfrac{1}{2}\big(\langle \dd_M\phi\, ,\,  \dd_M\phi \rangle_g +  c_2 \cdot \langle \phi , \phi \rangle + \tfrac{1}{2} c_4 \cdot 
(\langle \phi,\phi \rangle)^2 \big) \cdot \dd \mathrm{vol}_g
\\
&= \CL(\phi) \, ,
\end{align*}

\vspace{-1mm}
\noindent where $\dd_M(g\circ \phi) = \dd_M (g^a_{\, \, b} \cdot \phi^b \cdot e_a )= g^{a}_{\, \, b} \cdot \dd_M \phi^b \cdot e_a = g\circ \dd_M \phi$, since 
$g^{a}_{\,\, b} \in \FR$ for each $a,b$. The calculation carries through identically for $\FR^k$-plots of fields. This is a symmetry of the theory with
$\dd_M \CK=0$ and $f=\id_M$.\footnote{Any choice of $(d-1)$-form current $\CK+\dd_M \CT$ for some `trivial' current $\dd_M \CT$ will serve to fill the diagram.}  
The symmetry is local, with $\CD=g_*$ 
factoring through $J^\infty_M F$ via $J^0 F= F$ as in Rem. \ref{DiffeomorphismsViaFieldBundleAutomorphisms}.

\noindent {\bf (ii)} Recall the precomposition induced diffeomorphisms $\CD:=f^*:[M,W]\rightarrow [M,W] $ of Ex. \ref{DiffeomorphismForVector-valuedFieldTheoryViaBase}.
Choosing $f:M\rightarrow M$ to be an isometry of the background metric $g$ \footnote{For instance, a Poincar\'e transformation in the case of Minkowski 
space $(M,g)=(\FR^d,\eta)$ .}, i.e., $f^*g=g$ and so further $f^* \dd\mathrm{vol}_g= \dd \mathrm{vol}_g$, we obtain a spacetime covariant symmetry of 
the Lagrangian field theory:
\vspace{-1mm}
\begin{align*}
 \CL \circ \CD(\phi):&= \CL(f^*\phi)= \tfrac{1}{2}\big(\langle \dd_M(f^* \phi)\, ,\, 
 \dd_M(f^* \phi) \rangle_g +  c_2 \cdot \langle  f^*\phi , f^*\phi \rangle + \tfrac{1}{2} c_4 \cdot 
(\langle f^*\phi,  f^*\phi \rangle)^2 \big) \cdot \dd \mathrm{vol}_g \\
&=
\tfrac{1}{2}\big(\langle f^*\dd_M \phi\, ,\,  f^*\dd_M \phi \rangle_{f^*g} +  c_2 \cdot f^* \langle  \phi , \phi \rangle + \tfrac{1}{2} c_4 \cdot 
f^*(\langle \phi,  \phi \rangle)^2 \big) \cdot f^*\dd \mathrm{vol}_g \\
&=
\tfrac{1}{2}\big(f^*\langle \dd_M \phi\, ,\,  \dd_M \phi \rangle_{g} +  c_2 \cdot f^* \langle  \phi , \phi \rangle + \tfrac{1}{2} c_4 \cdot 
f^*(\langle \phi,  \phi \rangle)^2 \big) \cdot f^*\dd \mathrm{vol}_g
\\
&= f^*\circ \CL(\phi) \, ,
\end{align*}

\vspace{-1mm}
\noindent where we used standard properties of pullback map on forms, metrics and functions on the spacetime $M$. Evidently, this is a spacetime 
covariant symmetry of the theory with $\dd_M \CK=0$ which is furthermore \textit{not} local. Nevertheless, by Ex.
\ref{DiffeomorphismForVector-valuedFieldTheoryViaBase} the corresponding vector field on $[M,W]$ is ``local'' (Def. \ref{LocalVectorFields})
factoring through $J^\infty_M F$ via $J^1_M F$, and the above finite spacetime covariant symmetry induces an \textit{infinitesimal local} 
symmetry of the theory (see Lem. \ref{InfinitesimalSpacetimeCovariantSymmetries} and Ex. \ref{InfinitesimalSpacetimeSymmetryOfO(N)-model}).
\end{example}

\begin{remark}[\bf On spacetime covariant symmetries]\label{OnSpacetimeCovariantSymmetries}
The physics literature usually focuses on \textit{local} symmetries of Def. \ref{FiniteSymmetryofLagrangianFieldTheory} as a symmetry 
of a local Lagrangian field theory, and more often only in its infinitesimal version (see Def. \ref{InfinitesimalLocalSymmetryOfLagrangian}) 
with the all-important application in Noether's Theorems (see Prop. \ref{Noether1st},  Prop. \ref{Noether2nd}). 

\noindent {\bf (a)} We generalize slightly with Def. \ref{FiniteSymmetryofLagrangianFieldTheory}\,{(ii)} to include, in particular,
cases where one may lift a spacetime diffeomorphism to a diffeomorphism on field space (Rem. \ref{DiffeomorphismsViaFieldBundleAutomorphisms}). 
This is necessary to accommodate important examples appearing in physics, with Ex. \ref{LocalSymmetriesOfO(n)Model} being a particular instance. 
Other important field theories with such symmetries include the (2nd order) metric formulation of General Relativity, Yang-Mills theories, 
and Chern-Simons theories.\footnote{As emphasized in Rem. \ref{DiffeomorphismsViaFieldBundleAutomorphisms}, finite spacetime diffeomorphisms 
apply for pure general relativity, and at most when it is coupled to fields with trivial field bundles. Similarly for Yang-Mills 
and Chern-Simons theories, it strictly applies only to the trivial topological sector.}

\noindent {\bf (b)} Crucially, the infinitesimal version of any spacetime covariant symmetries is, in fact, an infinitesimal \textit{local} symmetry (Lem. \ref{InfinitesimalSpacetimeCovariantSymmetries}), which we believe is the reason their finite `non-local' aspect is often bypassed in the literature. 

\noindent {\bf (c)} One can check that \textit{local} symmetries $\mathrm{Diff}_{\mathrm{loc}}^{\CL}(\CF) \hookrightarrow \mathrm{Diff}_{\mathrm{loc}}(\CF)$
a local Lagrangian field theory form a (smooth) subgroup of all local diffeomorphisms.\footnote{The action of a composition of two \textit{local} 
symmetries gives 
\vspace{-2mm} 
$$
\CL\circ(\CD_1\circ \CD_2) =(\CL \circ \CD_1)\circ \CD_2 = (\CL+ \dd_M \circ \CK_1 ) \circ \CD_2 = \CL + \dd_M \circ \CK_2 + \dd_M \circ  (\CK_1 \circ \CD_2)\,.
$$

\vspace{-2mm} 
\noindent The latter term is also local since $\CK_1 \circ \CD_2= K_1 \circ j^\infty \circ D_2 \circ j^\infty = (\CK_1 \circ \pr D_2) \circ j^\infty $ where
$\pr D_2:J^\infty_M F\xrightarrow{\sim} J^\infty_M F$ is the prolongated bundle automorphism (Def. \ref{ProlongationOfJetBundleMap}). 
Thus the composed local diffeomorphism preserves the Lagrangian up to the exact local Lagrangian $\dd_M \circ (K_2 + K_1 \circ \pr D_2) \circ j^\infty$.}
 From a strict mathematical point of view, however, for a given local Lagrangian
$\CL=L\circ j^\infty$, the pullback smooth Lagrangian $f^*\circ\CL:\CF \rightarrow \Omega^d_{\mathrm{Vert}}(M)$ 
is \textit{not} local in the sense of Def. \ref{LocalLagrangianDensity}. In principle, one could expand the definition to include the orbits 
of local Lagrangians under $\mathrm{Diff}(M)$, such that symmetries of Lagrangian field theories may genuinely form a \textit{groupoid}
(and not a group). We do not follow this route in the current manuscript, as to avoid unnecessary confusion with the standard nomenclature.
\end{remark}

To rigorously phrase the variational calculus and infinitesimal local symmetries of local Langrangians, the notions of vector fields 
and differential forms on the infinite jet bundle become indispensable. We now move to the description of these concepts within the 
category of smooth sets.

\newpage 

\addtocontents{toc}{\protect\vspace{-10pt}}
\section{Differential geometry on the infinite jet bundle}\label{DifferentialGeometryOnTheInfiniteJetBundle}

\subsection{Tangent bundle and vector fields}
\label{InfinityJetTangentBundleSection}
The original manuscripts on the infinite jet bundle \cite{Takens79}\cite{Saunders89}\cite{Anderson89} define vector fields algebraically as
derivations on the algebra of functions $C^\infty(J^\infty_M F)$. We define the tangent bundle to the infinite jet bundle directly as a smooth set, 
and show how vector fields in the above sense are recovered as its geometrical smooth sections inside the category of smooth sets. When we enrich 
our spaces with infinitesimal structure, we will show in \cite{GSS-2} how this is naturally recovered as the `synthetic tangent bundle' of the infinite jet bundle. 

\medskip 
Recall by \eqref{JetBundleSmoothSetlimit} that the infinite jet bundle smooth set may be equivalently identified as 
$y(J^\infty_M F) \cong \mathrm{lim}_k^{\SmoothSets} \, y(J^k_M F)\, .$ Similarly, there is an induced diagram of finite-dimensional tangent bundles
\vspace{-2mm}
 \begin{align*}
 \longrightarrow T(J^k_M F) \xlongrightarrow{\dd{\pi^k_{k-1}}} T(J^{k-1}_M F) 
 \longrightarrow \cdots \longrightarrow  T(J^1_M F)\xlongrightarrow{\dd \pi^1_0} T(J^0_M F)\cong TF \, . 
 \end{align*}

 \vspace{-2mm}
\noindent with the maps being the pushforwards of the projections $\{\pi^k_{k-1}:J^k_M F \rightarrow J^{k-1}_M F\}_{k\in \NN}$. 
Embedding the diagram along $\SmoothManifolds\xhookrightarrow{y} \SmoothSets$, we define the tangent bundle of $y(J^\infty_M F)$ 
as the corresponding limit.

\begin{definition}[\bf  Infinite jet tangent bundle]
\label{InfinityJetTangentBundle}
The tangent bundle smooth set $T\big(y(J^\infty_M F)\big)\in \SmoothSets$ of the infinite jet bundle $y(J^\infty_M F) $ is defined as
\vspace{-3mm}
\begin{align}
  T\big(y(J^\infty_M F) \big)
  := 
  \underset{k \in \mathbb{N}}{\mathrm{lim}}^{\SmoothSets} 
  \, 
  y\big(T (J^k_M F)
  \big)\, . 
  \end{align}
\end{definition}
The points of this space, i.e., the set of tangent vectors of $J^\infty_M F$ is given by
 \vspace{-1mm}
$$
  T\big(y(J^\infty_M F)\big)(*)
  = 
  \underset{k \in \mathbb{N}}{\mathrm{lim}}^{\SmoothSets}\, 
  y\big(T (J^k_M F)\big)(*)
  :=
  \underset{k \in \mathbb{N}}{\mathrm{lim}}^{\mathrm{Set}}\big(T(J^k_M F)\big)\, ,  
$$

 \vspace{-2mm}
\noindent which is represented by  
 \vspace{-2mm}
$$
\bigcup_{s\in J^\infty_M F} T_s(J^\infty_M F) := \bigcup_{s\in J^\infty_M F}\big\{ \{X_{s}^k\in T_{\pi_k(s)} (J^k_M F)\,  |  \, 
\dd \pi^{k}_{k-1} X^{k}_s= X^{k-1}_s\}_{k\in \NN} \, \big\}\ . 
$$

 \vspace{-2mm}
\noindent That is, a tangent vector $X_s\in T_s(J^\infty_M F)$ at $s=j^\infty_p \phi \in J^\infty_M F$ is represented by an (infinite) family 
of tangent vectors $\{X^k_s \in T_{\pi_k(s)} (J^k_M  F)\}_{k\in \NN} $ on each finite order tangent bundle at 
$\pi_k(s)=\pi_k(j^\infty_p \phi) = j^k_p \phi \in J^k_M F$, compatible along the pushforward projections. 
In a local coordinate chart $\big\{x^\mu, \{u_I^a\}_{0\leq |I|}\big\}$ of $J^\infty_M F\in \mathrm{LocProMan}$ around $s$, such a family
may be represented by an infinite (formal) sum
\vspace{-3mm}
\begin{align}\label{InfinityJetTangentFormalSum}
X_s = X^\mu \, \frac{\partial}{\partial x^\mu}\Big\vert_s + \sum_{|I|=0}^\infty Y^a_I \frac{\partial}{\partial u_I^a} \Big\vert_s \, ,
\end{align}

 \vspace{-2mm}
\noindent for an infinite list of real numbers $\{X^\mu,Y_I^a\}\subset \FR$, with each $X^k$ corresponding to the case where the 
sum is terminated at order $|I|=k$.

\begin{lemma}[{\bf Infinite jet tangent vectors as infinitesimal curves}]
\label{InfinityJetTangentVectorsAsCurves} 
The set of tangent vectors $T\big(y(J^\infty_M F)\big)(*)$ is in bijection with equivalence classes 
of curves in $ J^\infty_M F\in \mathrm{LocProMan}$
\vspace{-1mm} 
\begin{align}
T\big(y(J^\infty_M F)\big)(*)\cong {\rm Hom}_{{\rm FrMan}} (\FR^1, J^\infty_M F) /\sim_{\CO(t^1)} 
\cong  y(J^\infty F) (\FR^1)/\sim_{\CO(t^1)} \, ,
\end{align}

 \vspace{-2mm}
\noindent
where the equivalence relation is agreement up to first order derivatives at $0\in \FR^1$.
\begin{proof}

By the limit property of $T\big(y(J^\infty F)\big)$, a tangent vector $X_s$ at a point $s\in J^\infty_M F$ corresponds uniquely 
to a compatible family $\{X_s^k \in T_{\pi_k(s)}(J^k_M F)\}_{k\in \NN}$ of tangent vectors on each finite order jet bundle. 
Since each $J^k_M F$ is a finite-dimensional manifold, each tangent vector $X^k_s$ in the family is represented by some curve
$\gamma^{k}_s : \FR^1 \rightarrow J^k_M F$ through $\pi_k(s)\in J^k_M F$, i.e.,
$X^k_s = \dot{\gamma}^{k}_s(0) = [\gamma]$, with the equivalence relation being that the derivatives agree up to first-order derivatives. 
The compatibility relation $\dd \pi^{k}_{k-1} X_s^{k+1}= X_s^k$ implies that the family of curves is compatible in the sense that
$[\pi^k_{k-1} \circ \gamma^{k}_s] = [\gamma^{k-1}_s]$.

By restricting on a compatible family of charts around each $\pi_k(s)\in J^k_M F$, a representative family of curves may be chosen such 
$$
\pi^k_{k-1} \circ \gamma^{\, k}_s = \gamma^{k-1}_s : \FR^1 \longrightarrow J^k_M F \, ,
$$
for each $k\in \NN$. By the limit property of $J^\infty_M F$, such a family uniquely corresponds to a curve $\gamma_s: \FR^1 \rightarrow J^\infty_M F$. 
A different choice of a representative family $\{\gamma_s^{k'} :\FR^1 \rightarrow J^k_M F\}_{k\in \NN}$ determines another curve 
$\gamma_s':\FR^1\rightarrow J^\infty_M F$, which however is equivalent to $\gamma_s$ under the induced relation. The induced equivalence on 
$\mathrm{Hom}_{\FrechetManifolds}(\FR^1, J^\infty_M F)$ is as expected: Any two curves through $s\in J^\infty_M F$ are equivalent
$\gamma_s\sim \gamma_s'$ if and only if 
\vspace{-2mm} 
$$
\frac{\dd}{\dd t}\Big\vert_{t=0} (\phi \circ \gamma_s) =\frac{\dd}{\dd t}\Big\vert_{t=0} (\phi \circ \gamma_s')
$$
for any local chart $\phi$ valued in $\FR^\infty$ around $s\in J^\infty_M$.
\end{proof}
\end{lemma}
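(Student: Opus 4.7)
The plan is to construct a bijection between the two sides by leveraging the limit structure of $J^\infty_M F$ simultaneously in $\SmoothSets$ and in $\FrechetManifolds$. The starting point is to unpack both sides. Since limits of sheaves are computed objectwise, the definition of $T\big(y(J^\infty_M F)\big)$ as a limit in $\SmoothSets$ evaluated at the point gives
$$T\big(y(J^\infty_M F)\big)(*) \;=\; \underset{k\in\NN}{\mathrm{lim}}{}^{\mathrm{Set}}\, T(J^k_M F),$$
so a tangent vector at $s \in J^\infty_M F$ is a compatible family $\{X^k_s \in T_{\pi_k(s)}(J^k_M F)\}$ under the pushforwards $d\pi^k_{k-1}$. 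On the curve side, Prop. \ref{FrManRestrEmbedding} identifies $y(J^\infty_M F)(\FR^1)$ with $\mathrm{Hom}_{\FrechetManifolds}(\FR^1, J^\infty_M F)$, and Lem. \ref{FcnsIntoInftyJet} identifies such a map with a compatible family of smooth curves $\{\gamma^k : \FR^1 \to J^k_M F\}$ satisfying $\pi^k_{k-1}\circ \gamma^k = \gamma^{k-1}$.

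With these unpackings in place, I would define a candidate map $\Phi$ sending $\gamma\leftrightarrow\{\gamma^k\}$ to the family of derivatives at the origin, namely $\{\dot\gamma^k(0)\}$. Compatibility of the image is automatic from the chain rule applied to $\pi^k_{k-1}\circ \gamma^k = \gamma^{k-1}$, and $\Phi$ factors through the quotient by $\sim_{\CO(t^1)}$ since only the first-order data at $t=0$ enter the derivative.

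The main obstacle is surjectivity. The finite-dimensional theory produces at each level a representative curve $\gamma^k$ with $\dot\gamma^k(0)= X^k_s$, but independent level-wise choices are only compatible \emph{up to first-order equivalence}, which is not enough to invoke Lem. \ref{FcnsIntoInftyJet} to assemble them into a single Fréchet-smooth curve. I would resolve this by working in a compatible coordinate chart $\{x^\mu,(u^a_I)_{|I|\geq 0}\}$ of $J^\infty_M F$ around $s$: the compatible family $\{X^k_s\}$ corresponds to a well-defined formal sum \eqref{InfinityJetTangentFormalSum} with coefficients $\{X^\mu, Y^a_I\}_{|I|\geq 0}$, and the linear curve $t\mapsto (x^\mu_0+tX^\mu,\, u^a_{I,0}+tY^a_I)$ defines a bona fide smooth Fréchet map $\gamma: \FR^1 \to J^\infty_M F$, since the chart takes values in $\FR^\infty = \mathrm{lim}_k^{\FrechetManifolds}\FR^k$ and any continuous linear map into such a countable product of lines is smooth. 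The truncations $\pi_k\circ\gamma$ are then strictly compatible and realize $X^k_s$ at the origin. Injectivity modulo $\sim_{\CO(t^1)}$ follows immediately, since equality of all level-wise first derivatives is, in any chart of $J^\infty_M F$, precisely the defining condition of first-order equivalence. The final identification $\mathrm{Hom}_{\FrechetManifolds}(\FR^1, J^\infty_M F)\cong y(J^\infty_M F)(\FR^1)$ is then just the defining formula of the embedding in Prop. \ref{FrManRestrEmbedding}.
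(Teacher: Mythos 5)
Your proof is correct and follows essentially the same route as the paper's: unpack both sides via the limit property, send a curve to its family of level-wise derivatives at $t=0$, and handle surjectivity by choosing strictly compatible representative curves in a compatible chart around $s$. Your explicit linear curve $t\mapsto (x^\mu_0+tX^\mu,\,u^a_{I,0}+tY^a_I)$ is just a concrete instantiation of the paper's (less explicit) choice of compatible representatives, which if anything makes that step cleaner.
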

The viewpoint of equivalence classes of curves is taken as a definition of tangent vectors in \cite{Saunders89}. 
On the other hand, the definition as a smooth set allows us to consider more than bare points in the tangent 
bundle of $J^\infty_M F$. The $\FR^n$-plots are given by
\vspace{-2mm} 
\begin{align}\label{TangentInfinityJetPlots}
T\big(y(J^\infty_M F)\big)(\FR^n)= {\rm lim}_{k}^{{\SmoothSets}}\, y\big(T (J^k_M F)\big)(\FR^n):=
{\rm lim}_{k}^{\mathrm{Set}}\, {\rm Hom}_{{\rm Man}}\big(\FR^n,T(J^k_M F) \big)\, ,
\end{align}
which is represented by
\vspace{-3mm} 
$$
\bigcup_{s^n\in y(J^\infty_M F)(\FR^n)}
\Big\{ \{X_{s^n}^k:\FR^n\rightarrow T(J^k_M F)\;  \big\vert  \;
\dd \pi^{k}_{k-1}\circ  X^{k}_{s^n}= X^{k-1}_{s^n}\}_{k\in \NN} \, \Big\}\ , 
$$

\vspace{-2mm} 
\noindent where $X^k_{s^n}(x)\in T_{\pi_k\circ s^n(x)}J^k_M F$ for each $x \in \FR^n$ being implicit in the notation. 
In local charts, one may further represent such families with infinite formal sums as done with $*$-plots
\vspace{-2mm} 
\begin{align}\label{InfinityJetTangentPlotsFormalSum}
X_{s^n} = X^\mu_{s^n} \, \frac{\partial}{\partial x^\mu}\Big\vert_{s^n} + 
\sum_{|I|=0}^\infty Y^a_{I,s^n} \frac{\partial}{\partial u_I^a} \Big\vert_{s^n} \, ,
\end{align}

\vspace{-2mm} 
\noindent where now $\big\{X^\mu_{s^n}, \{Y^a_{I,s^n}\}_{0\leq |I|}\big\}$ denote an infinite list of smooth functions, with $X^k_{s^n}$ 
corresponding to the case where the sum is terminated at order $|I|=k$.

On the fiber $\FR^n$-plots over each $\FR^n$-plot of $J^\infty_M F$, there exists a $C^\infty(\FR^n)$-linear structure 
induced by the linear structures on each $T(J^k_M F)$, natural in $\FR^n$. Thus, there is a map of smooth sets
\vspace{-2mm} 
\begin{align}\label{InfinityJetTangentBundleLinearStructure}
 +\;\;:\;\; T\big(y(J^\infty_M F)\big) \times_{y(J^\infty_M F)} T\big(y(J^\infty_M F)\big)&
 \longrightarrow T\big(y(J^\infty_M F)\big)   
 \\
 \big(\{X^k_{s^n}\}_{k\in \NN}\, ,\,  \{\tilde{X}^k_{s^n}\}_{k\in \NN} \big) &
 \longmapsto \{X^k_{s^n}+\tilde{X}^k_{s^n}\}_{k\in \NN} \, , \nn 
\end{align}

\vspace{-2mm} 
\noindent which respects the $y(\FR)$ fiber multiplication smooth map
\vspace{-2mm} 
\begin{align}\label{InfinityJetTangentBundleScalarMultiplcation}
\cdot \;\; : \;\; y(\FR)\times T\big(y(J^\infty_M F)\big) &\longrightarrow T\big(y(J^\infty_M F)\big) \\
\big(f_n,\{X^k_{s^n}\}_{k\in \NN}\big) &\longmapsto \{f_n\cdot X_{s^n}^k\}_{k\in \NN}
\nn \, ,
\end{align}

\vspace{-2mm} 
\noindent where $f_n:\FR^n\rightarrow \FR$ is a plot of $\FR$. In terms of their coordinate formal sum representations, 
the smooth addition map corresponds to the intuitive formal addition 
\vspace{-2mm} 
$$
\!\!\! \bigg(\!\! X^\mu_{s^n} \, \frac{\partial}{\partial x^\mu}\Big\vert_{s^n} + \sum_{|I|=0}^\infty 
\!\!\! Y^a_{I,s^n} \frac{\partial}{\partial u_I^a} \Big\vert_{s^n} \!\!\bigg) 
+ \bigg(\!\! \tilde{X}^\mu_{s^n} \, \frac{\partial}{\partial x^\mu}\Big\vert_{s^n} + 
\sum_{|I|=0}^\infty \!\!\! \tilde{Y}^a_{I,s^n} \frac{\partial}{\partial u_I^a} \Big\vert_{s^n}\!\! \bigg)
:= 
\big(X^\mu_{s^n}+\tilde{X}^\mu_{s^n}\big) \, 
\frac{\partial}{\partial x^\mu}\Big\vert_{s^n} + 
\sum_{|I|=0}^\infty \!\!\!\! \big(Y^a_{I,s^n}+\tilde{Y}^a_{I,s^n}\big) \frac{\partial}{\partial u_I^a} \Big\vert_{s^n}  .
$$

\vspace{-2mm} 
\noindent and similarly for the fiber scalar multiplication.

\begin{remark}[\bf Infinite jet tangent bundle as a Fr\'{e}chet manifold]
\label{TangentInfinityJetBundleFrechetManifold}
Another option would be to show the limit exists as an object of LocProMan, $T(J^\infty_M F):= \mathrm{lim}_k^{\FrechetManifolds} T(J^k_M F)$ 
which is suggested in \cite{Saunders89}. Indeed, this is the case and the proof is identical to that for 
$J^\infty_M F:=\mathrm{lim}_k^{\FrechetManifolds} J^k_M F$ of \cite{Takens79}\cite{Saunders89}. The local model is $\FR^\infty\times \FR^\infty\in \FrechetManifolds$, 
with local charts on $T(J^\infty_M F)$ taking the form $\big\{x^\mu, \{u^a_{I}\}_{0\leq |I|},\dot{x}^\mu , \{\dot{u}^a_{I}\}_{0\leq |I|} \big\}$. 
However, upon further embedding the resulting Fr\'{e}chet manifold into $\SmoothSets$, this will necessarily coincide with our definition 
above as per Rem. \ref{LocProMantoSmoothSetReflectsLimits}. 
The following discussion could be carried within $\FrechetManifolds$ and then embedded in smooth sets. We choose to do this directly in $\SmoothSets$, so as to stress 
that most of the constructions and definitions appearing depend mostly on universal properties, and rarely on analytical details.
\end{remark}

From the plot descriptions, there is an evident smooth projection map
\vspace{-2mm} 
$$
p: T\big(y(J^\infty_M F)\big)\longrightarrow y(J^\infty_M F)\, , 
$$

\vspace{-2mm} 
\noindent as expected intuitively. Naturally, we define vector fields on $J^\infty_M F $ as sections of the projection map, 
internally to SmoothSet.
\begin{definition}[\bf  Vector fields on the jet bundle]\label{VectorFieldsOnInftyJetBundle}
The set of smooth vector fields on the infinite jet bundle is defined as smooth sections of its tangent bundle
\vspace{-2mm} 
\begin{align}
\CX(J^\infty_M F):= \Gamma_{J^\infty_M F}\big(T(J^\infty_M F)\big) =
\Big\{X:y(J^\infty_M F)\rightarrow T\big(y(J^\infty_M F)\big) 
\; |\; p\circ X = \id_{y(J^\infty_M 
 F)}\Big\}\, .
\end{align}
\end{definition}

Let us unwind the definition in more concrete terms. By the limit property for $T\big(y(J^\infty_M F)\big)$ 
a map $X:y(J^\infty_M F)\rightarrow T\big(y(J^\infty_M F)\big)$ corresponds to a family
$\{X^k:y( J^\infty_M F) \rightarrow y\big(T(J^k_M F)\big)\}_{k\in\NN}$ such that the diagram
\vspace{-2mm} 
	\[ 
\xymatrix@R=1.3em@C=3em{ &&  y\big(T(J^k_M F)\big) \ar[d]^{\dd \pi^k_{k-1}}
	\\ 
	y(J^\infty_M F) \ar[rru]^{X^k} \ar[rr]^{\;\;\; X^{k-1}} && y\big(T(J^{k-1}_M F)\big)
}    
\]

\vspace{-2mm} 
\noindent commutes for all $k\in \NN$. Since $\mathrm{LocProMan}\xhookrightarrow{y}\SmoothSets$ is fully faithfully, 
this is equivalently a family of Fr\'{e}chet  manifold maps $\{X^k: J^\infty_M F \rightarrow T(J^k_M F)\}_{k\in\NN}$ 
such that the corresponding diagram commutes, i.e., $\dd \pi^k_{k-1} \, X^k = X^{k-1} : J^\infty_M F \rightarrow T (J^{k-1}_M F)$. 
Moreover, the section condition of X over $y(J^\infty_M F)$ interpreted in terms of Fr\'{e}chet manifolds as above,
corresponds to the condition $$X^k(s)\in T_{\pi^\infty_k (s)}J^{k}_M F$$ for each $s\in J^\infty_M F$ and every
$k\in \NN$. By Lem. \ref{FunctionsInfJetToManifold}, each $X^k:J^\infty_M F \rightarrow T(J^k_M F)$ is locally
of finite order, i.e., locally around each $s\in J^\infty_M F$, 
$$
X^k|_{\pi^{-1}_{j_s}(U_{\pi_{j_s}(s)})}= \pi^*_{j_s} (X^k_{j_s})\, ,
$$ 
for some $X^k_{j_s}:J^{j_s}_M F |_{U_{\pi_{j_s}(s)}} \rightarrow T(J^k_M F) $. In particular, such a family 
$\{X^k: J^\infty_M F\rightarrow T(J^k_M F)\}_{k\in \NN} $ 
may be represented in a local chart of $J^\infty_M F$ by an infinite (formal) sum
\vspace{-3mm} 
\begin{align}\label{InftyJetBundleVectorFieldCoordinates}
X= X^\mu \frac{\partial}{\partial x^\mu} + \sum_{|I|=0}^\infty Y^a_I \frac{\partial}{\partial u_I^a} \, , 
\end{align} 

\vspace{-2mm} 
\noindent for an infinite list of (locally defined) smooth functions $\big\{X^\mu,\{Y_I^a\}_{0\leq |I|}\big\}\subset C^\infty (J^\infty_M F), 
$\footnote{Even though these are defined in some chart $V\subset J^\infty_M F$, it may be that they only locally 
factor through finite order jet bundles around each $x\in V\subset J^\infty_M F$.}
with each $X^k$ corresponding to the case where the sum is terminated at order $|I|=k$.

\medskip 
Of course, it might be that every smooth map $X^k:J^\infty_M F \rightarrow T(J^k_M F)$ in a compatible family is \textit{globally} 
of finite order on $J^\infty_M F$, so that  $X^k= \pi^*_{j_k} X^k_{j_k}$ for some $X^k_{j_k}:J^{j_k}_M F\rightarrow T(J^k_M F)$. 
In such a case, the infinite sum representation above has coefficients of fixed orders, i.e., of order up to $j_k$ when the sum 
is terminated at $|I|=k$. We denote the vector subspace of globally finite order vector fields by
$$
\CX^{\mathrm{glb}}(J^\infty_M F)\subset \CX(J^\infty_M F)\, .
$$
These are the vector fields that usually appear in the field-theoretic examples, and so \cite{Anderson89} focuses 
on this vector subspace. However, this subset cannot be identified with the full set of smooth sections of some bundle. 
For this reason, we allow for locally finite order vector fields as in \cite{Takens79}\cite{Saunders89}. 

\medskip 
The fiber-wise $y(\FR)$-linear structure on $T\big(y(J^\infty_M F)\big)$ induces a $C^\infty(J^\infty_M F)$-module 
structure on the set of vector fields on $\CX(J^\infty_M F) := \Gamma_{J^\infty_M F}\big(T(J^\infty_M F)\big)$, 
simply by composition along \eqref{InfinityJetTangentBundleLinearStructure} and 
\eqref{InfinityJetTangentBundleScalarMultiplcation}. It is easy to see that in the formal coordinate representation above, 
this corresponds to the usual formal addition and multiplication, e.g. 
\vspace{-2mm} 
\begin{align}
\label{InfinityJetVectorFieldsModuleStructure}
f\cdot X = f\cdot X^\mu \frac{\partial}{\partial x^\mu} +
\sum_{|I|=0}^\infty f\cdot Y^a_I \frac{\partial}{\partial u_I^a} \, ,
\end{align}

\vspace{-2mm} 
\noindent 
for any $f\in C^\infty(J^\infty_M F)$. Note, however, the subset of globally finite order vector fields is not 
a module of $C^\infty(J^\infty_M F)$ but only of $C^\infty_{\mathrm{glb}}(J^\infty_M F)$.

\medskip 
The following is stated in \cite[\S 2.5]{Takens79} with ``proof left as trivial". We provide a proof for completeness. 

\begin{lemma}[{\bf Vector fields and derivations on the jet bundle}]
\label{InfinityJetVectorFieldsDerivations} 
Vector fields on $J^\infty_M F$ are in 1-1 correspondence with derivations 
$C^\infty(J^\infty_M F)\rightarrow C^\infty(J^\infty_M F)$,
$$
\CX(J^\infty_M F)\cong {\rm Der}\big(C^\infty(J^\infty_M F)\big)\, .
$$
\begin{proof}
Let $X$ be a vector field  on $J^\infty _M F$ represented by a family $\{X^k: J^\infty_M F\rightarrow T(J^k_M F)\}_{k\in \NN} $, 
as above. Recall any function $f\in C^\infty(J^\infty_M F)$ is, locally around $s\in J^\infty_M F$, the pullback 
$\pi^*_{k_s} f_{k_s} = f|_{\pi^{-1}_{k_s}(U_{\pi_{k_s}(s)})}$ for some $f_{k_s} \in C^\infty\big(U_{\pi_{k_s}(s)} \subset J^{k_s}_M F\big)$. 
Define $X(f)\in C^\infty(J^\infty_M F)$ by
\vspace{-3mm} 
\begin{align*}
X(f)(s):&= X^{k_s}(s)\big(f_{k_s}\big), 
\end{align*}

\vspace{-2mm} 
\noindent which is well defined, since for any other local representative $f_{k'_s}$, without loss of generality $k'_s\geq k_s$, we have
\vspace{-2mm} 
$$
X^{k_s}(s) (f_{k_s})= \dd\pi^{k'_s}_{k_s} X^{k
_s}(s) (f_{k_s})= X^{k'_s}(s) \big({\pi^{k'_s}_{k_s}}^* f_{k_s}\big)=X^{k'_s}(s) (f_{k'_s})\, .
$$

\vspace{-2mm} 
\noindent The derivation property on a product $f\cdot g\in C^\infty(J^\infty_M F)$ follows since it holds locally for any 
finite order representative. It remains to show that $X(f):J^\infty M\rightarrow \FR$ is indeed smooth. 
Since $X^k$ is also locally of finite order 
\vspace{-2mm} 
$$
X^{k_s}|_{\pi^{-1}_{j_s}(U_{\pi_{j_s}(s)})}= \pi^*_{j_s} X^{k_s}_{j_s}\, ,
$$ 

\vspace{-2mm} 
\noindent for some $X^{k_s}_{j_s}:J^{j_s}_M F |_{U_{\pi_{j_s}(s)}} \rightarrow T(J^{k_s}_M F) $, without loss of generality
$j_s\geq k_s$, we get that
\vspace{-2mm} 
$$
X^{k_s}_{j_s}(f_{k_s}) \in C^\infty \big(U_{\pi_{j_s}(s)}\subset J^{j_s}_M F\big) 
$$

\vspace{-2mm} 
\noindent given by $X^{k_s}_{j_s}\big(f_{k_s}\big)(q)= X^{k_s}_{j_s}(q) \big(f_{k_s}\big).$ It follows that
\vspace{-2mm} 
\begin{align*}
X\big(f\big)(s)&=X^{k_s}(s)\big(f_{k_s}\big)= \big(\pi^*_{j_s} X^{k_s}_{j_s} \big)(s) \big(f_{k_s}\big)\\  
&= X^{k_s}_{j_s}\big(\pi_{j_s}(s)\big) \big(f_{k_s}\big) = X^{k_s}_{j_s}\big(f_{k_s}\big) \big(\pi_{j_s}(s)\big) \\
&= \pi^*_{j_s}\big(X^{k_{s}}_{j_s} (f_{k_s})\big) (s)
\end{align*}

\vspace{-2mm} 
\noindent that is, $X(f)|_{\pi^{-1}_{j_s}(U_{\pi_{j_s}(s)})}=\pi^*_{j_s}\big(X^{k_s}_{j_s}(f_{k_s})\big)$ around each $s\in J^\infty_M F$. 
Being locally of finite order, $X(f):J^\infty_M F\rightarrow \FR$ is smooth.

It follows similarly that any derivation $\tilde{X}:C^\infty(J^\infty_M F)\rightarrow C^\infty(J^\infty_M F)$ acts on local finite order representatives as above, 
and hence defines a vector field on $J^\infty_M F$.
\end{proof}
\end{lemma}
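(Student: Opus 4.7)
The plan is to construct the two maps of the bijection explicitly and check they are mutually inverse, leveraging the locally-finite-order characterization of both smooth functions (Prop.~\ref{FunctionsOnInftyJetBundle}) and smooth maps out of $J^\infty_M F$ (Cor.~\ref{FunctionsInfJetToManifold}), together with the limit description of $T\big(y(J^\infty_M F)\big)$ in Def.~\ref{InfinityJetTangentBundle}.

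First, from a vector field $X\in \CX(J^\infty_M F)$ to a derivation $D_X$. Writing $X$ as the compatible family $\{X^k : J^\infty_M F \to T(J^k_M F)\}_{k\in\NN}$ with $\dd\pi^k_{k-1}\circ X^k = X^{k-1}$, I would define, for $f\in C^\infty(J^\infty_M F)$ and $s\in J^\infty_M F$,
\[
D_X(f)(s) \,:=\, X^{k_s}(s)\big(f_{k_s}\big)
\]
where $f|_{\pi_{k_s}^{-1}(U)} = \pi_{k_s}^* f_{k_s}$ is any local finite-order representative around $s$. Well-definedness with respect to the choice of order $k_s$ is immediate from the compatibility relation and the chain rule $X^{k_s}(s)(f_{k_s}) = X^{k'_s}(s)\big((\pi^{k'_s}_{k_s})^* f_{k_s}\big)$ for $k'_s \geq k_s$. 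Smoothness of $D_X(f)$ follows by combining the local finite-order factorization of $X^{k_s}$ on some neighborhood with that of $f$, yielding a local representative of $D_X(f)$ as a smooth function on a finite jet bundle; the derivation property on products is inherited from the fact that each tangent vector $X^{k_s}(s)$ is itself a derivation at $\pi_{k_s}(s)$.

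Second, for the inverse direction, given a derivation $D: C^\infty(J^\infty_M F) \to C^\infty(J^\infty_M F)$, I would construct a family $\{X^k\}$ by specifying, for each $s \in J^\infty_M F$ and each germ $g$ of smooth function at $\pi_k(s) \in J^k_M F$, the value
\[
X^k(s)(g) \,:=\, D\big(\pi_k^* \tilde{g}\big)(s)\,,
\]
where $\tilde{g}$ is any extension of $g$ to a globally defined smooth function on $J^k_M F$ (obtained via a bump function, using paracompactness of $J^k_M F$). The key step is to show that this is independent of the extension: if $\tilde{g}$ vanishes on a neighborhood of $\pi_k(s)$, then $\pi_k^*\tilde{g}$ vanishes on a neighborhood of $s$, and the standard locality argument for algebraic derivations ($\pi_k^*\tilde{g} = \chi \cdot \pi_k^* \tilde{g}$ for a bump $\chi$ vanishing at $s$, combined with Leibniz) forces $D(\pi_k^* \tilde{g})(s) = 0$. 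The Leibniz property of $D$ then shows $X^k(s)$ is a derivation at $\pi_k(s)$, hence a tangent vector in $T_{\pi_k(s)} J^k_M F$; compatibility $\dd \pi^k_{k-1} X^k(s) = X^{k-1}(s)$ follows from $\pi_{k-1}^* = \pi_k^* \circ (\pi^k_{k-1})^*$.

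Finally, I need to verify that each $X^k : J^\infty_M F \to T(J^k_M F)$ is smooth, not merely set-theoretic. Since $T(J^k_M F)$ is finite-dimensional, by Cor.~\ref{FunctionsInfJetToManifold} it suffices to check that $X^k$ is a smooth Fr\'echet map, which in turn reduces to smoothness of $s \mapsto X^k(s)(g)$ for each $g \in C^\infty(J^k_M F)$ — but this is just $D(\pi_k^* g)$ restricted to the locus, which is a smooth function on $J^\infty_M F$ by hypothesis. The main obstacle I anticipate is the bookkeeping around local versus global extensions in defining $X^k(s)$, which requires combining the petit-sheaf behavior of $C^\infty(J^\infty_M F)$ (Prop.~\ref{FunctionsOnInftyJetBundle}) with the automatic locality of algebraic derivations; everything else reduces either to the finite-dimensional bijection between tangent vectors and point-derivations or to the compatibility constraints built into the limit defining $T\big(y(J^\infty_M F)\big)$. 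Mutual inversion $X \mapsto D_X \mapsto X$ and $D \mapsto X_D \mapsto D$ is then a direct unwinding of the two constructions on local finite-order representatives.
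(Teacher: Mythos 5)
Your proposal is correct and follows essentially the same route as the paper: the forward direction (defining $X(f)(s):=X^{k_s}(s)(f_{k_s})$ via local finite-order representatives, with well-definedness from the compatibility relations and smoothness from local factorization through finite-order jet bundles) is the paper's argument verbatim. For the converse, which the paper dispatches with "it follows similarly," you supply the standard details — locality of algebraic derivations via bump functions, recovery of the compatible family $\{X^k\}$ from the action on pulled-back coordinate functions, and smoothness via Cor.~\ref{FunctionsInfJetToManifold} — and these all go through as you describe.
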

The statement and proof restrict to the case of globally finite order vector fields and derivations of globally finite order functions,
$$
\CX^{\mathrm{glb}}(J^\infty_M F)\cong \mathrm{Der}\big(C^\infty_{\mathrm{glb}}(J^\infty_M F)\big)\, .
$$ 
Similarly, it further restricts to tangent vectors 
and germs of smooth functions at any point $s\in J^\infty _M F$,
$$
T_s(J^\infty_M F)\cong \mathrm{Der}\big(C^\infty_s(J^\infty_M F), \FR \big)\, ,
$$
which coincides with the definition of tangent vectors of \cite{Takens79}.

\medskip 
In a local chart for $J^\infty_M F$, using the formal sum representative 
$X= X^\mu \frac{\partial}{\partial x^\mu} + \sum_{|I|=0}^\infty Y^a_I \frac{\partial}{\partial u_I^a} \, , $ 
the action on a function $f\in C^\infty (J^\infty_M F)$ corresponds to 
\vspace{-3mm} 
$$
X(f)|_{\pi^{-1}_{k_s}(U_{\pi_{k_s}(s)})}:=X^\mu \frac{\partial f}{\partial x^\mu} + \sum_{|I|=0}^{k_s} Y^a_I \frac{\partial f}{\partial u^a_I}\, ,
$$

\vspace{-1mm} 
\noindent  where the formal sum necessarily terminates to the (local) finite order of $f=\pi_{k_s}^* f_{k_s}$ around each point $s\in J^\infty_M F$ 
`since the higher derivatives act trivially'. The derivation point of view allows to define a Lie algebra structure on $\CX(J^\infty_M F)$, as usual by 
\vspace{-1mm} 
\begin{align}\label{LieBracketInfinityJetBundle}
[X,\tilde{X}](f):= X(\tilde{X}(f))- \tilde{X}\big(X(f)\big) \, ,
\end{align}

\vspace{-1mm} 
\noindent  which locally may be represented via the usual coordinate formula, now involving an infinite formal sum. 
Naturally, it restricts to a Lie algebra structure on the subspace of globally finite order vector fields.

 \subsection{Horizontal splitting}
 \label{HorizontalSplittingSection}
In finite-dimensional manifolds, the tangent bundle $TE\rightarrow E$ of every smooth fiber bundle $E\rightarrow M$ splits \textit{noncanonically} into 
a vertical and horizontal subbundle $VE\oplus HE\rightarrow E$, with each splitting corresponding to a \textit{choice} of a connection on $E$. 
In the context of infinite-dimensional manifolds and smooth sets, such a splitting is not guaranteed to exist.\footnote{For \textit{any} 
chosen notion of a tangent bundle, when it exists.}
Nevertheless, in the case of the infinite jet bundle, the tangent bundle
$T(J^\infty_M F)\rightarrow J^\infty_M F$ does have a \textit{smooth} horizontal splitting, which is in fact \textit{canonical}.

 \medskip 
The total projection $\pi^{\infty}_M:y(J^\infty_M F)\xrightarrow{\pi^\infty_0} y(F) \xrightarrow{\pi_M} y(M)$ to the base $M$ induces 
a `pushforward projection' between the tangent bundle smooth sets over $M$
\vspace{-2mm} 
\begin{align}\dd \pi^\infty_M \;:\; T \big( y(J^\infty_M F)\big) &\longrightarrow y(TM) 
\end{align}

\vspace{-2mm} 
\noindent 
which acts on $\FR^n$-plots of \eqref{TangentInfinityJetPlots} as
\begin{align*}
\big\{X_{s^n}^k:\FR^n\rightarrow T(J^k_M F)\;  \big\vert  \;  \dd \pi^{k}_{k-1}\circ  X^{k}_{s^n}= X^{k-1}_{s^n}\big\}_{k\in \NN} 
\;\; \longmapsto \;\; 
\dd \pi^k_M \circ X^k_{s^n} \in y(TM)(\FR^n) \, , 
\end{align*}
for any $X^k_{s^n}$ in the representing set of the plot, and $\dd \pi^k_M:T(J^k_M F)\rightarrow TM$ the pushforward of 
$\pi^k_M:J^k_M F\rightarrow M$. This is well-defined since 
$\dd \pi^{k+1}_M \circ X^{k+1}_{s^n} = \dd\pi^k_{M} \circ \dd \pi^{k+1}_k \circ X^{k+1}_{s^n}= \dd \pi^k_M \circ X^k_{s^n}$. 
At the level of points and in local coordinates, the map acts on a tangent vector $X_s\in T\big(y(J^\infty_M)\big) (*)$ at $s\in J^\infty_M F$ by
\vspace{-2mm}
$$
X_s = X^\mu \, \frac{\partial}{\partial x^\mu}\Big\vert_s + \sum_{|I|=0}^\infty Y^a_I \frac{\partial}{\partial u_I^a} \Big\vert_s \longmapsto X^\mu \, \frac{\partial}{\partial x^\mu}\Big\vert_{\pi_M^\infty(s)}\, . 
$$

\vspace{-2mm}

\begin{definition}[\bf  Vertical subbundle]
\label{VerticalJetBundle}
The (smooth) vertical subbundle of $T\big(y(J^\infty_M F)\big)$ is defined as the equalizer of $\dd \pi^\infty_M $ and the canonical map 
$0_M:T\big(y(J^\infty_M F)\big) \rightarrow y(TM)$,
\vspace{-4mm}
\begin{align}
\xymatrix@=1.6em  {
VJ^\infty_M F:= \mathrm{eq}\Big( \! T\big(y(J^\infty_M F)\big) \ar@<-.6ex>[rr]_-{0_M} \ar@<.6ex>[rr]^-{\dd \pi^\infty_M } && y(TM) \! \Big).}
\end{align}
\end{definition}

Concretely, the vertical sub-bundle is the sub-object $VJ^\infty_M F \longhookrightarrow TJ^\infty_M F$
whose $\FR^n$-plots are represented by families 
\vspace{-1mm} 
\begin{align}\label{VerticalSubbundlePlots}
\Big\{X_{s^n}^k:\FR^n\rightarrow T(J^k_M F)\;  \big|  \;  \dd \pi^{k}_{k-1}\circ  X^{k}_{s^n}= X^{k-1}_{s^n}\, , \, 
\dd \pi^k_M \circ X^{k}_{s^n}=(\pi^\infty_M\circ s^n, 0)\Big\}_{k\in \NN}\, .
\end{align}

\vspace{-1mm} 
\noindent For instance, a vertical tangent vector at $s\in J^\infty_M F$, $X_s \in VJ^\infty_M(F)(*)$ is represented by 
a family of compatible vertical vectors $\{X^k_s\in V_{\pi^\infty_k(s)} J^k_M F\}$ in the usual finite-dimensional sense. 
In a local coordinate chart for $J^\infty_M F$, such a family is represented by an infinite formal sum 
\vspace{-3mm} 
\begin{align}\label{InftyJetBundleVerticalTangentCoordinate}
X_s = 0 + \sum_{|I|=0}^\infty Y_I^a \frac{\partial}{\partial u^a_I} \Big\vert_s\, .
 \end{align}
 
\begin{example}[\bf Prolongated tangent vectors to field space]
\label{InfinityJetVerticalVectorFromProlongationofPlot}
Let $\phi_t:\FR^1\times M \rightarrow F$ be a $\FR^1$-parametrized section of $F\rightarrow M$, i.e.,
an $\FR^1$-plot of $\mathbold{\Gamma}_M(F)$ as in Def. \ref{SectionsSmoothSet}. 
The jet prolongation \eqref{smoothjetprolongation} of the plot $j^\infty \phi_t : \FR^1\times M \rightarrow J^\infty_M F$ 
defines an $\FR^1$-parametrized section of $J^\infty_M F\rightarrow M$. For each $x\in M$ we get a smooth curve 
$j^\infty \phi_t(x): \FR^1 \rightarrow J^\infty_M F$, through $j^\infty \phi_0 (x) \in J^\infty _M F$, which by 
Lem. \ref{InfinityJetTangentVectorsAsCurves} defines a \textit{vertical} tangent vector denoted suggestively by 
\vspace{-1mm} 
$$
\partial_t j^\infty \phi_t(x) |_{t=0} \;\; \in  V_{j^\infty \phi_0 (x)} J^\infty_M F \, .
$$

\vspace{-1mm} 
\noindent 
The verticality follows immediately in terms of the compatible family representation
\vspace{-1mm} 
$$
\big\{ \partial_tj^k \phi_t(x) |_{t=0} \;\; \in V_{j^k \phi_0 (x)} J^k_M F\big\}_{k\in \NN} \, ,
$$

\vspace{-1mm} 
\noindent   where each tangent vector is vertical in the usual sense, since each curve $j^k \phi_t (x): \FR^1 \rightarrow J^k_M F$ is
contained in the fiber above $x\in M$. Yet equivalently, the action on a smooth function $f\in C^\infty (J^\infty_M F)$
via Lem. \ref{InfinityJetVectorFieldsDerivations} is given by 
\vspace{-1mm} 
$$
\partial_t j^\infty \phi_t(x) \big\vert_{t=0} (f) := 
\frac{\partial }{\partial t}\Big\vert_{t=0}\big( (j^{\infty}\phi_t(x))^* f\big) 
=\frac{\partial }{\partial t}\Big\vert_{t=0} \big(f\circ j^\infty\phi_t (x)\big)\, ,
$$
whereby in a local chart for $J^\infty_M F$ and using the chain rule \footnote{The chain rule applies as in 
the finite-dimensional manifold case, since around the point $j^\infty \phi_0(x)\in J^\infty_M F$, 
$f$ is necessarily of finite order. In particular, the sum in the 
formula below terminates.} 
\vspace{-2mm} 
$$
\partial_t j^\infty \phi_t(x) \big\vert_{t=0}  (f)=
\sum_{|I|=0}^\infty \frac{\partial}{\partial t}\Big\vert_{t=0}\big(\partial_I \phi^a_t(x)\big) 
\cdot \frac{\partial}{\partial u^a_I} f\big(j^\infty \phi_0(x)\big)\, .
$$

\vspace{-2mm} 
\noindent  Thus in local coordinates, 
\vspace{-2mm} 
$$
\partial_t j^\infty \phi_t(x) \big\vert_{t=0}  = \sum_{|I|=0}^\infty \frac{\partial}{\partial t}\Big\vert_{t=0}\big(\partial_I \phi^a_t(x)\big) 
\cdot \frac{\partial}{\partial u^a_I} \Big\vert_{j^\infty \phi_0(x)}= 
\sum_{|I|=0}^\infty \frac{\partial}{\partial x^I}\big(\partial_t \phi^a_t|_{t=0}\big)(x) 
\cdot \frac{\partial}{\partial u^a_I} \Big\vert_{j^\infty \phi_0(x)}\, .
$$  

\vspace{-1mm} 
\noindent  Varying over $x\in M$, we get a section $\partial_t j^\infty\phi_t |_{t=0}:M \rightarrow VJ^\infty F$ 
covering $j^\infty \phi_0$
\vspace{-1mm} 
	\[ 
\xymatrix@R=1.5em@C=4em{ &&  VJ^\infty F \ar[d]
	\\ 
	M\ar[rru]^{\partial_t j^\infty\phi_t |_{t=0}\phantom{aa}} \ar[rr]^>>>>>>>>{{j^\infty \phi}_0} && J^\infty F \, .
}   
\]

\vspace{-2mm} 
\noindent By the local coordinate formula, this section depends only on the induced tangent vector $\partial_{t}\phi_{t}|_{t=0}\in T \CF(*)$ on field space. 
Thus, since any tangent vector $\CZ_\phi \in  T \CF=\mathbold{\Gamma}_M(VF)(*)$ is represented by such line plots (Lem. \ref{LinePlotsRepresentTangentVectors}), 
the above defines a `jet prolongation' map which we suggestively\footnote{This is not only useful notation. Indeed, it is not hard to see that 
$VJ^\infty_M F\cong  J^\infty_M (VF)$ as bundles (of smooth sets) over $M$, making the notation consistent with the jet prolongation of a section.} denote by the same symbol
\vspace{-1mm}
\begin{align*}
j^\infty \;:\; T\CF = \mathbold{\Gamma}(VF)& \; \longrightarrow \; \mathbold{\Gamma}( V J^\infty_M F) 
\\[-1pt]
\CZ_\phi= \CZ^a_\phi \cdot \frac{\partial}{\partial u^a} &\; \longmapsto \; j^\infty \CZ_\phi = \sum_{|I|=0}^\infty \frac{\partial \CZ_\phi^a}{\partial x^I}
\cdot \frac{\partial}{\partial u^a_I}\, ,
\end{align*}

\vspace{-1mm}
\noindent with a similar form on $\FR^k$-plots of $\CF$. For future reference, we note that for each choice of $x\in M$,
composition with evaluation at $x$ may be viewed as a (smooth) `pushforward' map
\vspace{-2mm}
\begin{align}\label{pushforwardsoftangentfieldvectors}
T\CF &\longrightarrow VJ^\infty_M F \\
\CZ_\phi &\longmapsto (j^\infty  \CZ_\phi)(x) \nn 
\end{align}

\vspace{-1mm}
\noindent of tangent vectors on the field space $\CF$, to (vertical) tangent vectors on the infinite jet bundle $J^\infty_M F$.
\end{example}
Naturally, smooth sections of the vertical sub-bundle $VJ^\infty F \rightarrow J^\infty F$ as in
Def. \ref{VectorFieldsOnInftyJetBundle}, define the 
distribution of \textit{vertical} vector fields on $J^\infty_M F$
$$
\CX_V(J^\infty_M F):= \Gamma_{J^\infty_M F} (VJ^\infty_M F) \, .
$$
From the coordinate representation of general vector fields \eqref{InftyJetBundleVectorFieldCoordinates} and 
vertical tangent vectors \eqref{InftyJetBundleVerticalTangentCoordinate}, it follows that any vertical vector 
field $X\in \CX_V(J^\infty_M F)$ is locally represented by an infinite sum of the form
\vspace{-2mm} 
$$
X=0 + \sum_{|I|=0}^\infty Y^a_I \frac{\partial}{\partial u_I^a}\, ,
$$

\vspace{-2mm} 
\noindent for arbitrary (local) functions $\{Y^a_I\}\subset C^\infty(J^\infty_M F)$. 
It follows immediately that $\CX_V(J^\infty_M F)$ is closed under the Lie bracket of $\CX(J^\infty_M F)$.

\medskip 
As with finite-dimensional fibrations, the vertical sub-bundle fits into a natural short exact sequence of smooth sets 
\footnote{By which 
we mean a short exact sequence of vector spaces of fiber $\FR^n$-plots over each $\FR^n$-plot of $J^\infty_M F$.}
over $y(J^\infty_M F)$ 
\vspace{-2mm} 
\begin{align}\label{TangentInfinityJetExactSequence} 0_{J^\infty_M F}\longrightarrow V J^\infty_M F\longhookrightarrow T\big(y(J^\infty_M F)\big) 
\longrightarrow y(J^\infty_M F) \times_{y(M)} y(TM) \longrightarrow 0_{J^\infty_M F}\, ,
\end{align}
where the fibered product on the right is defined now in SmoothSet, with $\FR^n$-plots being pairs of plots that project to the same plot in $M$.
Equivalently, this may be computed as the limit of the finite-dimensional fibered products $J^k_M F \times_M TM$ in LocProMan, 
$J^\infty_M F \times_M TM := \mathrm{lim}_k^{\FrechetManifolds}(J^k_M F \times_M TM)$, \footnote{The coordinate charts on which takes the form 
$\{x^\mu, \dot{x}^\mu, u^a_{\mu} , u^{a}_{\mu_1 \mu_2}, \cdots \}$  with $\{\dot{x}^\mu\}$ denoting the fiber coordinates of $TM$.}
and then embedded via $y:\mathrm{LocProMan}\hookrightarrow \SmoothSets$, so that
$$
y(J^\infty_M F \times_M TM) \cong y(J^\infty_M F)\times_{y(M)}y(TM)\, .
$$
The third map is naturally given on $\FR^n$-plots by
\begin{align*}
\{X_{s^n}^k:\FR^n\rightarrow T(J^k_M F)\,  |  \,  \dd \pi^{k}_{k-1}\circ  X^{k}_{s^n}= X^{k-1}_{s^n}\}_{k\in \NN} 
\longmapsto (s^n, \dd \pi^k_M \circ X^k_{s^n})  \, , 
\end{align*} extending the usual point-set surjection $T(J^\infty_M F) \rightarrow J^\infty_M F \times_M TM$.
The crucial property of the infinite jet bundle is that the above sequence has a \textit{canonical splitting}
\vspace{-1mm} 
$$
H \;:\; y(J^\infty_M F)\times_{y(M)} y(TM) \longrightarrow T\big(y(J^\infty_M F)\big)\, ,
$$

\vspace{-2mm} 
\noindent  in contrast to the corresponding sequence of any finite order jet bundle $J^k_M F$.
That is, there is a canonical connection on $J^\infty_M F$, usually referred to as the \textit{Cartan connection}. 
We recall the description of the splitting at the point set level as described in \cite{Takens79}\cite{Saunders89}, 
and we show how this extends to a smooth map of bundles over $J^\infty_M F$ in SmoothSet.

\medskip 
Let $j^k_p\phi \in J^k_M F$ and choose a representative (local) section $\tilde{\phi}:U\subset M\rightarrow F$ such that
$j^k \tilde{\phi} (p)= j^k_p \phi$. The induced pushforward map
$$ 
\dd (j^k \tilde{\phi})_p :T_p M \longrightarrow T_{j^k_p \phi} (J^k_M F) \, ,
$$
 is given in local coordinates by
 \vspace{-3mm} 
\begin{align*}
X^\mu\frac{\partial}{\partial x^\mu} \Big\vert_p \; \longmapsto \; & \;\;
X^\mu \bigg(\frac{\partial}{\partial x^\mu} \Big\vert_{j^k_p \phi} + \sum_{|I|=0}^{k} \frac{\partial}{\partial x^\mu} 
\frac{\partial}{\partial x^I } \tilde{\phi}^a (p) \cdot    \frac{\partial}{\partial u^a_I} \Big\vert_{j^k_p \phi}\bigg)
\\[-2pt]
 &= X^\mu \bigg(\frac{\partial}{\partial x^\mu} \Big\vert_{j^k_p \phi} + \sum_{|I|=0}^{k} u^a_{I+\mu}(j^{k+1}_p \tilde{\phi}) 
 \cdot    \frac{\partial}{\partial u^a_I} \Big\vert_{j^k_p \phi}\bigg)\, .
\end{align*}

\vspace{-2mm} 
\noindent The map depends only on the $(k+1)$-jet at $p\in M$ of the chosen representative section $\tilde{\phi}$, 
and evidently \textit{smoothly} so. Hence, for each $k\in \NN$ the assignment defines a smooth map of bundles over $J^{k}_M F$
\vspace{-1mm} 
\begin{align}\label{TangentkJetbundleSplitting}
H^k \;:\; J^{k+1}_M F \times_M TM &\longrightarrow T(J^k_M F) \\
(j^{k+1}_p \phi, X_p) &\longmapsto \dd(j^k \phi)_p (X_p) \nn
\end{align}

\vspace{-1mm} 
\noindent where, by slight abuse of notation, $\phi:U\subset  M\rightarrow F $ on the right-hand side is any representative local section of
$j^{k+1}_p \phi$. It is easy to see that these fit into the commutative diagram
\vspace{-2mm} 
\begin{align}\label{CartanDistribCompatibility}
\begin{gathered}
\xymatrix@R=.8em@C=1.6em  {J^{k+1}_M F\times_M TM \ar[dd]_{\pi^{k+1}_k\times \id} \ar[rr]^-{H^k} &&  T(J^k_M F) \ar[dd]^{\dd \pi^{k}_{k-1} } 
	\\ \\
J^k_M F\times_M TM \ar[rr]^-{H^{k-1}}  && T(J^{k-1}_M F) 
\, . }
\end{gathered}
\end{align}

\vspace{-1mm} 
\noindent  
The intuition behind the smooth horizontal splitting of $T\big(y(J^\infty_M F)\big)$ is that the above smooth 
bundle maps `stabilize at the $k\rightarrow \infty$ \, limit'. Indeed, there is an injective \textit{point-set} map
\vspace{-1mm} 
\begin{align}\label{PointPlotInfinityJetSplitting}
J^\infty_M F \times_M TM &\longrightarrow T\big(y(J^\infty_M F)\big) (*) \\
(j^{\infty}_p \phi \, ,\,  X_p ) &\longmapsto \dd(j^\infty \phi)_p(X_p):= \big\{\dd(j^k \phi)_p (X_p) \big\}_{k\in \NN} \, \nn ,
\end{align}
which splits the $*$-plot sequence of \eqref{TangentInfinityJetExactSequence}, where $\phi:U\subset M\rightarrow F$ is a  representative 
local section of $j^\infty_p \phi$. In the coordinate representation \eqref{InfinityJetTangentFormalSum} of tangent vectors 
in $T_{j^\infty_p\phi} J^\infty_M F$, the map takes the form
\vspace{-2mm} 
\begin{align}\label{InftyJetBundleHorizontalTangentCoordinate}
\bigg(j^{\infty}_p \phi \, ,\,  X^\mu \frac{\partial}{\partial x^\mu}\Big\vert_p \bigg) \; \longmapsto \;
X^\mu \bigg(\frac{\partial}{\partial x^\mu} \Big\vert_{j^\infty_p \phi} + \sum_{|I|=0}^{\infty} u^a_{I+\mu}(j^{\infty}_p \phi) 
\cdot    \frac{\partial}{\partial u^a_I} \Big\vert_{j^\infty_p \phi}\bigg)\, ,
\end{align}

\vspace{-2mm} 
\noindent as usually written in the existing literature, e.g. \cite{Saunders89}. The tangent vectors on $J^\infty_M F$ in the image are
interpreted as horizontal lifts of tangent vectors on $M$.

For future reference, note that by construction, for any choice of field configuration $\phi \in \CF(*)$, this defines a \textit{smooth} pushforward map
\vspace{-2mm}
\begin{align}\label{PushforwardAlongProlongationOfField}
\dd (j^\infty \phi) \;:\; TM &\longrightarrow T\big(y(J^\infty_M F)\big) \\
X_p &\longmapsto \dd(j^\infty \phi)_p (X_p) := X^\mu \bigg(\frac{\partial}{\partial x^\mu} \Big\vert_{j^\infty_p \phi} + \sum_{|I|=0}^{\infty}
\frac{\partial \phi^a} {\partial x^{I+\mu}} (p)
\cdot    \frac{\partial}{\partial u^a_I} \Big\vert_{j^\infty_p \phi}\bigg)\, , \nn
\end{align}

\vspace{-2mm}
\noindent with a similar action on higher plots of tangent vectors on $M$. Smoothness here follows by the limit property of $T\big(y(J^\infty_M F)\big)$,
and since each of the finite jet order maps are by construction smooth.

In fact, more generally, the map of Eq. \eqref{PointPlotInfinityJetSplitting} is naturally identified as part of a smooth 
splitting of the corresponding 
smooth sets.

\begin{proposition}[\bf Smooth splitting]
\label{SmoothSplittingProp}
The family of smooth bundle maps $\{H^k:J^{k+1}_M F \times_M TM\longrightarrow  T(J^k_M F)\}_{k\in \NN}$ determines a map of smooth sets  
\vspace{-2mm} 
\begin{align}
H \;:\; y(J^\infty_M F)\times_{y(M)} y(TM) \longrightarrow T\big(y(J^\infty_M F)\big)\, ,
\end{align}
which splits the corresponding exact sequence \eqref{TangentInfinityJetExactSequence}.
\begin{proof}
This is essentially an application of (a slight variation of) Prop. \ref{SmoothMapsofLocProMan}. More explicitly, by the limit property 
of $T\big(y(J^\infty_M F)\big)=\mathrm{lim}_k^{\SmoothSets} y\big(T(J^k_M F)\big)$ and the fully faithful embedding 
$y:\mathrm{LocProMan}\hookrightarrow \FrechetManifolds\hookrightarrow \SmoothSets$,
\vspace{-1mm} 
\begin{align*}
\mathrm{Hom}_{\SmoothSets}\Big(y(J^\infty_M F)\times_{y(M)} y(TM) \, , \, T\big(y(J^\infty_M F)\big)  \Big) 
&\cong \mathrm{Hom}_{\SmoothSets}\Big(y(J^\infty_M F\times_M TM)\, , \, T\big(y(J^\infty_M F)\big)  \Big)
\\[-2pt] 
&\cong \mathrm{lim}_k^{\mathrm{Set}}\, 
\mathrm{Hom}_{\SmoothSets}\Big(y(J^\infty_M F\times_M TM)\, , \, y\big(T(J^k_M F)\big)  \Big)\\
&\cong \mathrm{lim}_k^{\mathrm{Set}}\,  \mathrm{Hom}_{\FrechetManifolds}\big(J^\infty_M F\times_M TM\, , \,T(J^k_M F)  \big)\, . 
\end{align*}
Hence a smooth map $f:y(J^\infty_M F\times_M TM) \rightarrow T\big(y(J^\infty_M F\big) $ corresponds to a family of smooth Fr\'{e}chet maps
$\{f^k: J^\infty_M F\times_M TM \rightarrow T(J^k_M F)\}_k\in \NN$ such that $\dd \pi^k_{k-1} \circ f^k = f^{k-1}$, and vice-versa. 

\noindent By Lem. \ref{FunctionsInfJetToManifold}, the (set theoretic) maps
\vspace{-1mm} 
$$
(\pi_{k+1}\times \id)^* H^k \;:\; J^\infty_M F\times_M TM \; \longrightarrow \; J^{k+1}_M F\times_M TM \longrightarrow T(J^k_M F)
$$ 
are smooth Fr\'{e}chet maps for each $k\in \NN$, \footnote{Lemma \ref{FunctionsInfJetToManifold} applies verbatim for any limit in LocProMan, for instance in the case of
$J^\infty_M F\times_M TM\in \mathrm{LocProMan}$.} being the pullback of finite order maps (globally in this case). Furthermore, by the 
commutativity of diagram \eqref{CartanDistribCompatibility} they satisfy
\vspace{-1mm} 
\begin{align*}
\dd\pi^k_{k-1}\circ \big(\pi_{k+1}\times \id)^*H^k&= \dd \pi^{k}_{k-1}\circ H^k \circ (\pi_{k+1}\circ \id) 
\\[-1pt] 
&= H^{k-1}\circ (\pi^{k+1}_k \times \id ) \circ (\pi_{k+1} \times \id) 
\\[-1pt]
&= H^{k-1}\circ (\pi_{k}\times \id)
\\[-1pt]
&= (\pi_k \times \id)^* H^{k-1} \, .
\end{align*}

\vspace{-1mm} 
\noindent Thus the family $\big\{(\pi_{k+1}\times \id)^*H^k:J^\infty_M F\times_M TM \rightarrow T(J^k_M F)\big\}_{k\in \NN}$ uniquely corresponds to a 
map $H:y(J^\infty_M F \times_M TM) \rightarrow T\big(y(J^\infty_M F)\big)\, .$ The underlying point set map is that of \eqref{PointPlotInfinityJetSplitting},
and since this splits the $*$-plot sequence, it follows that the $\FR^n$-plot sequences split too.
\end{proof}
\end{proposition}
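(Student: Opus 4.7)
My plan is to exploit the universal property defining the target $T\big(y(J^\infty_M F)\big) = \lim_k^{\SmoothSets} y\big(T(J^k_M F)\big)$ to reduce the construction of $H$ to the assembly of a compatible family of maps to the finite-order tangent bundles. Specifically, I would first observe that a map of smooth sets out of $y(J^\infty_M F)\times_{y(M)} y(TM)$ into the limit $T\big(y(J^\infty_M F)\big)$ is the same datum as a compatible cone, i.e.\ a family of smooth maps $H_k : y(J^\infty_M F)\times_{y(M)} y(TM) \to y\big(T(J^k_M F)\big)$ satisfying $\dd\pi^k_{k-1}\circ H_k = H_{k-1}$ for every $k\in \NN$. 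This step is purely formal and uses only the $\mathrm{Hom}$–(co)limit duality together with the fact that the smooth set structure on the source is likewise the image under $y$ of a locally pro-manifold fiber product (so Rem.~\ref{LocProMantoSmoothSetReflectsLimits} and Prop.~\ref{SmoothMapsofLocProMan} apply).

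Next I would produce the required cone by the natural candidate: take $H_k$ to be the pullback of the finite-order splitting $H^k : J^{k+1}_M F\times_M TM \to T(J^k_M F)$ along $\pi^\infty_{k+1}\times \id_{TM}$. Each such composite is smooth because it globally factors through a finite-order jet bundle, hence is covered by Cor.~\ref{FunctionsInfJetToManifold} (applied in the $TM$-parametrized form, which is immediate since the $TM$-factor is a finite-dimensional manifold). Compatibility $\dd\pi^k_{k-1}\circ H_k = H_{k-1}$ then follows directly from the commutative square \eqref{CartanDistribCompatibility} together with the tower relation $\pi^{k+1}_k\circ \pi^\infty_{k+1} = \pi^\infty_k$. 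This yields the desired smooth map $H$.

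The remaining task is to verify that $H$ splits the exact sequence \eqref{TangentInfinityJetExactSequence}. Here I would argue in two stages: first, on underlying points, the composite $(\dd\pi^\infty_M)\circ H$ sends $(j^\infty_p\phi, X_p)$ to $\dd\pi^k_M\circ \dd(j^k\phi)_p(X_p) = \dd(\pi^k_M\circ j^k\phi)_p(X_p) = X_p$, since $\pi^k_M\circ j^k\phi = \id_U$ on the domain of any representative local section; this recovers precisely the point-set splitting \eqref{PointPlotInfinityJetSplitting}. Second, to upgrade to an identity of morphisms of smooth sets, I would appeal once more to the limit characterization: both $(\dd\pi^\infty_M)\circ H$ and the canonical map to $y(J^\infty_M F)\times_{y(M)} y(TM)$ are morphisms between locally pro-manifolds, and their equality on each $\FR^n$-plot reduces, via the cone description, to the corresponding equality $\dd\pi^k_M\circ H^k = \pr_{TM}\circ (\pi^{k+1}_k\times \id)$ at finite order, which is just the defining property of $H^k$.

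The main obstacle I anticipate is a purely bookkeeping one, namely making precise the claim that the fiber product $y(J^\infty_M F)\times_{y(M)}y(TM)$ is itself the image under $y$ of the locally pro-manifold $\lim_k^{\FrechetManifolds}(J^k_M F \times_M TM)$, so that one may freely apply Prop.~\ref{SmoothMapsofLocProMan} to characterize morphisms out of it; this rests on the fact that $y$ preserves the relevant limits (Rem.~\ref{LocProMantoSmoothSetReflectsLimits}) and that $TM$ is representable. Once this identification is in place, every other step is a routine unwinding of universal properties, and no further Fr\'echet-analytic input is required beyond what is already packaged into Lem.~\ref{FcnsIntoInftyJet} and Cor.~\ref{FunctionsInfJetToManifold}.
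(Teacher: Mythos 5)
Your proposal is correct and follows essentially the same route as the paper: identify maps into the limit $T\big(y(J^\infty_M F)\big)$ with compatible families $\{(\pi_{k+1}\times\id)^*H^k\}$, verify smoothness via Cor.~\ref{FunctionsInfJetToManifold} and compatibility via \eqref{CartanDistribCompatibility}, and check the splitting against \eqref{PointPlotInfinityJetSplitting}. Your finite-order verification of the splitting identity $\dd\pi^k_M\circ H^k = \pr_{TM}\circ(\pi^{k+1}_k\times\id)$ is in fact slightly more explicit than the paper's closing remark, but the argument is the same.
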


In a local coordinate chart for $J^\infty_M F$, the explicit action on $\FR^n$-plots may be seen as
\begin{align*}
    y(J^\infty_M F \times_M TM)(\FR^n) & \; \longrightarrow \; T\big(y(J^\infty_M  F)\big) (\FR^n)
    \\
\bigg(s^n \, ,\,  X^\mu \frac{\partial}{\partial x^\mu}\Big\vert_{\pi_M\circ s^n} \bigg) & \; \longmapsto \;
X^\mu \bigg(\frac{\partial}{\partial x^\mu} \Big\vert_{ s^n} + \sum_{|I|=0}^{\infty} u^a_{I+\mu}\circ s^n 
\cdot    \frac{\partial}{\partial u^a_I} \Big\vert_{s^n}\bigg)\, ,
\end{align*}

\vspace{-2mm} 
\noindent where $\{X^\mu:\FR^n\rightarrow M\}$ denote the components of the corresponding plot in $TM$.

\begin{corollary}[\bf  Canonical horizontal splitting]
\label{CanonicalSmoothSplitting}
There is a canonical isomorphism of smooth sets
\vspace{-2mm} 
\begin{align}VJ^\infty_M F \times_{y(J^\infty_M F)}\big( y(J^\infty_M F) \times_{y(M)} y(TM) \big) \xlongrightarrow{\sim} T\big(y(J^\infty_M F)\big)
\end{align}

\vspace{-1mm} 
\noindent
over $y(J^\infty_M F)$. We denote the induced splitting by
\vspace{-1mm} 
$$
T\big(y(J^\infty_M F)\big) \cong VJ^\infty_M F \oplus H J^\infty_M F\, ,
$$

\vspace{-1mm} 
\noindent
where the plots of $H J^\infty_M F$ are given by the image of the map of Prop. \ref{SmoothSplittingProp}.
\end{corollary}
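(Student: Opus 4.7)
The plan is to deduce the corollary as a straightforward consequence of Prop. \ref{SmoothSplittingProp} together with the fiber-wise $y(\FR)$-linear structure on $T\big(y(J^\infty_M F)\big)$ from \eqref{InfinityJetTangentBundleLinearStructure}. Concretely, writing $\iota_V : VJ^\infty_M F \hookhookrightarrow T\big(y(J^\infty_M F)\big)$ for the inclusion from Def. \ref{VerticalJetBundle} and $H : y(J^\infty_M F)\times_{y(M)} y(TM) \to T\big(y(J^\infty_M F)\big)$ for the canonical splitting of Prop. \ref{SmoothSplittingProp}, I would define the forward map
\[
\Phi \;:\; VJ^\infty_M F \times_{y(J^\infty_M F)}\big( y(J^\infty_M F) \times_{y(M)} y(TM) \big) \longrightarrow T\big(y(J^\infty_M F)\big)
\]
as the composite of $\iota_V \times H$ (both lying over $y(J^\infty_M F)$) with the fiber-wise addition $+$ of \eqref{InfinityJetTangentBundleLinearStructure}. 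Since all constituent maps are morphisms of smooth sets over $y(J^\infty_M F)$, and fiber products in $\SmoothSets$ are computed plot-wise in $\mathrm{Set}$, $\Phi$ is a well-defined smooth map of bundles over $y(J^\infty_M F)$.

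Next I would construct the inverse. Given $Y \in T\big(y(J^\infty_M F)\big)(\FR^n)$ lying over $s^n \in y(J^\infty_M F)(\FR^n)$, set
\[
\Psi(Y) \;:=\; \Big( Y - H\big(s^n, d\pi^\infty_M(Y)\big),\; \big(s^n, d\pi^\infty_M(Y)\big) \Big).
\]
The second component is obviously a plot of $y(J^\infty_M F)\times_{y(M)} y(TM)$; the first component lands in $VJ^\infty_M F$ because $d\pi^\infty_M$ applied to it vanishes, as $d\pi^\infty_M\circ H = \id$ on the horizontal factor (this latter identity is exactly the defining splitting property from Prop. \ref{SmoothSplittingProp}). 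Smoothness of $\Psi$ follows from smoothness of $d\pi^\infty_M$, $H$ and the fiber-wise subtraction inherited from \eqref{InfinityJetTangentBundleLinearStructure}.

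It remains to verify $\Phi\circ\Psi = \id$ and $\Psi\circ\Phi = \id$. Both identities may be checked plot-wise, and on $\FR^n$-plots they reduce to the elementary linear-algebraic fact that a split short exact sequence of $C^\infty(\FR^n)$-modules splits the middle term as a direct sum of the kernel with the image of the splitting: on each fiber we are literally computing $Y = (Y - H\,d\pi^\infty_M Y) + H\,d\pi^\infty_M Y$, respectively $(v_s, (s,X_p)) = (v_s + H(s,X_p) - H(s,X_p),\, (s,X_p))$, using $d\pi^\infty_M\iota_V = 0$ and $d\pi^\infty_M\circ H = \id$.

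The hard part has already been done in Prop. \ref{SmoothSplittingProp}, namely the promotion of the point-set splitting \eqref{PointPlotInfinityJetSplitting} to a map of smooth sets compatible with the locally-pro-manifold limit structure of $T\big(y(J^\infty_M F)\big)$. Once that splitting is in hand, the remaining obstacle is essentially bookkeeping: making sure the fiber-wise additive structure of \eqref{InfinityJetTangentBundleLinearStructure} is used coherently in the fibered-product description and that the inverse $\Psi$ is assembled from morphisms already available in $\SmoothSets$. The induced notation $T\big(y(J^\infty_M F)\big)\cong VJ^\infty_M F \oplus HJ^\infty_M F$ then follows, with $HJ^\infty_M F$ identified as the image of $H$ and concretely described in coordinates by the formula \eqref{InftyJetBundleHorizontalTangentCoordinate}.
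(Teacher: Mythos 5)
Your proof is correct and follows essentially the same route as the paper, which deduces the corollary directly from the splitting $H$ of the short exact sequence \eqref{TangentInfinityJetExactSequence} established in Prop.~\ref{SmoothSplittingProp}, using the fiber-wise linear structure \eqref{InfinityJetTangentBundleLinearStructure} to decompose the middle term as kernel plus image of the splitting. Your explicit construction of $\Phi$ and $\Psi$ and the plot-wise verification is exactly the standard split-exact-sequence bookkeeping the paper leaves implicit.
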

The smooth sub-bundle $HJ^\infty_M F$ is called the \textit{Cartan Distribution} on the infinite jet bundle.
The direct sum here means the fibered product as above, computed in SmoothSet, with the linear structure being fiberwise 
for each $\FR^n$-plot over the induced plot of the base $y(J^\infty_M F)$. This furthermore implies the splitting of 
\textit{vector spaces}\footnote{It may naturally be extended to an isomorphism of the corresponding smooth sets of sections.},
in fact of $C^\infty(J^\infty_M F)$-modules, of smooth vector fields  
\vspace{-2mm} 
\begin{align} \label{InftyJetBundleVectorFieldSplitting}
\CX(J^\infty_M F) \cong \CX_V(J^\infty_M F)\oplus \CX_H(J^\infty_M F )
\end{align}

\vspace{-1mm} 
\noindent  where the two components denote smooth sections of the corresponding bundles, as in Def. \ref{VectorFieldsOnInftyJetBundle}.
Hence, any vector field $X$ on $J^\infty_M F$ may be written as $X=X_V+ X_H$, with $X_V$ and $X_H$ denoting sections of the 
vertical and horizontal smooth sub-bundles, respectively. Taking into account the local coordinate representations for 
a general vector field \eqref{InftyJetBundleVectorFieldCoordinates}
\vspace{-2mm} 
$$
X= X^\mu \frac{\partial}{\partial x^\mu} + \sum_{|I|=0}^\infty Y^a_I \frac{\partial}{\partial u_I^a}\, ,
$$

\vspace{-2mm} 
\noindent and those of vertical \eqref{InftyJetBundleVerticalTangentCoordinate} and horizontal \eqref{InftyJetBundleHorizontalTangentCoordinate}
tangent vectors,  it follows that the vertical and horizontal component  vector fields may locally be represented by
\vspace{-4mm} 
\begin{align}\label{HorizontalVerticalVectorFieldCoordinates}
X_V&=\sum_{|I|=0}^{\infty}\big(Y_I^a - X^\mu \cdot u^a_{I+\mu}\big) \cdot \frac{\partial}{\partial u^a_I} 
\\[-3pt] 
\nn
X_H&= X^\mu \bigg( \frac{\partial}{\partial x^\mu} + \sum_{|I|=0}^{\infty} u^{a}_{I+\mu} \frac{\partial}{\partial u^a_I} \bigg) \, , 
\end{align}

\vspace{-1mm} 
\noindent 
reproducing the usual formulas \cite{Anderson89}\cite{Saunders89}. In particular, if $\hat{X}^\mu \frac{\partial}{\partial x^\mu} \in \Gamma_M(TM)$ 
is a vector field on the base, then its horizontal lift is given by $(\pi^{\infty}_M)^*\hat{X}^\mu \big( \frac{\partial}{\partial x^\mu} 
+ \sum_{|I|=0}^{\infty} u^{a}_{I+\mu} \frac{\partial}{\partial u^a_I} \big) \in \CX_H(J^\infty_M F)$. 

\begin{example}[\bf Horizontal lift of coordinate basis]
It is customary to denote the local basis for 
horizontal vector fields, i.e., the horizontal lift of the local basis of coordinate vector fields $\{\frac{\partial}{\partial x^\mu}\}$,  by
\vspace{-2mm} 
\begin{align}\label{HorizontalVectorFieldBasisNotation}
D_\mu :=\frac{\partial}{\partial x^\mu} + \sum_{|I|=0}^{\infty} u^{a}_{I+\mu} \frac{\partial}{\partial u^a_I}\, . 
\end{align}

\vspace{-2mm} 
\noindent  If $f:J^\infty_M F \rightarrow \FR$ is a smooth function then for any smooth section $\phi\in \Gamma_M(F)$, we have 
$f\circ j^\infty \phi \in C^\infty(M)$. The lifts $\{D_\mu\}$ encode the action of $\{\frac{\partial}{\partial x^\mu}\}$ on $f\circ j^\infty \phi$ 
via the chain rule, i.e.,
\vspace{-1mm} 
\begin{align}\label{HorizontalVectorFieldBasisAction}
     D_\mu (f) \circ j^\infty \phi & =  \bigg(\frac{\partial f}{\partial x^\mu} + \sum_{|I|=0}^{\infty}   
     u^a_{I+\mu} \frac{\partial f}{\partial u_{I}^a} \bigg)\circ j^\infty \phi=\frac{\partial f}{\partial x^\mu}\circ j^{\infty}\phi + \sum_{|I|=0}^{\infty}  
     \frac{\partial}{\partial x^\mu} \frac{\partial \phi^a}{\partial x^I}\cdot  \Big(\frac{\partial f}{\partial u_{I}^a} \circ j^{\infty}\phi\Big)  
     \nn \\[-2pt]
     &= \frac{\partial}{\partial x^\mu} \big(f\circ j^\infty \phi\big) \, . 
\end{align}
\end{example}

Finally, we note that the vertical vector fields $\CX_V(J^\infty_M F)$ are closed under the Lie bracket \eqref{LieBracketInfinityJetBundle}, by construction. 
A crucial property of the Cartan connection is that the horizontal vector fields are also closed under the Lie bracket,
\vspace{-1mm} 
\begin{align}\label{CartanDistributionInvolutive}
[X^1_H, X^2_H] \, \in \CX_H(J^\infty_M F) \, ,
\end{align}

\vspace{-1mm} 
\noindent  for all $X^1_H, X^2_H \in \CX_H(J^\infty_M F)$, as can be easily checked in local coordinates. Namely, the Cartan connection is \textit{flat}. 
Naturally, the splitting descends on the subspace of global finite order vector fields 
\vspace{-1mm} 
$$
\CX^{\mathrm{glb}}(J^\infty_M F)\cong \CX^{\mathrm{glb}}_V(J^\infty_M F)\oplus \CX_H^{\mathrm{glb}}(J^\infty_M F)\, ,
$$

\vspace{-1mm} 
\noindent with the same local representation formulas and involutive properties.

\subsection{Differential forms}\label{DifferentialFormsOnJetBundleSection}

Given that the tangent bundle of $J^\infty_M F$ has a natural fiber-wise linear structure, we will 
define differential forms as smooth $y(\FR)$-linear maps $T\big(y(J^\infty_M F)\big) \rightarrow y(\FR)$. 
This will allow us to recover the differential forms as defined in \cite{Takens79}\cite{Anderson89}\cite{Saunders89}. 
Furthermore, as we will explain in \cite{GSS-2}, this definition will be in line with the classifying nature
$\mathbold{\Omega}_{\mathrm{dR}}^1$ in the extended topos of thickened smooth sets, by abstract arguments; see Rem. \ref{ClassifyingFormsDoNotClassifyInSmoothSet}.
This identification is not completely clear in the current setting of smooth sets, and so we will use a different symbol for definite forms 
as fiber-wise linear maps out of the tangent bundle. Nevertheless, we will prove that forms of `globally finite order' 
on $J^\infty_M F$, in the traditional sense, may be naturally identified as a subalgebra of de Rham forms defined via the classifying space (Lem. \ref{JetBundleDiffFormsAsDeRhamForms}).

\begin{definition}[\bf  1-forms infinity jet bundle]
\label{1formsInftyJetBundle}
The set of differential 1-forms on the infinite jet bundle is defined as
\vspace{-2mm} 
\begin{align}\Omega^1(J^\infty_M F) := \mathrm{Hom}^{\mathrm{fib.lin.}}_{\SmoothSets}\big(T(J^\infty_M F)\,,\, y(\FR) \big)
\end{align} 

\vspace{-1mm} 
\noindent with respect to the fiber-wise linear structure of \eqref{InfinityJetTangentBundleLinearStructure} and \eqref{InfinityJetTangentBundleScalarMultiplcation}. 
\end{definition}
As with finite-dimensional manifolds, any fiberwise linear map $\om:T(J^\infty_M F)\rightarrow y(\FR)$ defines a map of $C^\infty(J^\infty_M F)$-modules  
\vspace{-2mm} 
$$
\om \;:\; \Gamma\big(T(J^\infty_M F)\big) \longrightarrow C^\infty(J^\infty_M F)
$$

\vspace{-2mm} 
\noindent  by pre-composing as
$$
\big(X:y(J^\infty_M F)\rightarrow T(J^\infty_M F) \big) \longmapsto \big(\om \circ X : y(J^\infty_M F) \rightarrow y(\FR) \big) \, ,
$$
where on the right-hand side we identify $\om\circ X \in C^\infty(J^\infty_M F)= \mathrm{Hom}_{\FrechetManifolds}(J^\infty_M F, \FR)$ since 
$y:\FrechetManifolds\hookrightarrow \SmoothSets$ is fully faithful. Tracing through the identifications, in a local coordinate chart around 
$s\in J^\infty_M F$, such a map $\om$ preserving the $C^\infty(J^\infty_M F)$-module structure \eqref{InfinityJetVectorFieldsModuleStructure}
must take the form
\vspace{-2mm} 
$$
X^\mu \frac{\partial}{\partial x^\mu} + \sum_{|I|=0}^\infty Y^a_I \frac{\partial}{\partial u_I^a} \;\;\longmapsto  \;\;
X^\mu\cdot \om\Big(\frac{\partial}{\partial x^\mu}\Big) + \sum_{|I|=0}^\infty Y^a_I 
\cdot \om \Big(\frac{\partial}{\partial u^a_I}\Big) \, \in C^\infty(J^\infty_M F)\,  .
$$

\vspace{-2mm} 
\noindent  Since the sum on the right-hand side is assumed to be a well-defined smooth function on $J^\infty_M F$ for all vector fields $X$, 
it must necessarily terminate at some finite $|I|=k_{s}\in \NN$ and so $\omega(\frac{\partial}{\partial u_I^a})=0$ for $|I|>k_s$. 
Hence, on the chart around $s\in J^\infty_M F$, we may represent $\om$ by a \textit{finite sum}
\vspace{-4mm} 
\begin{align}
\label{1formInfinityJetLocalCoordinates}
\om=\om_\mu \dd x^\mu + \sum_{|I|=0}^{k_s} \om_a^I \dd u^a_I\, ,  
\end{align}

\vspace{-2mm} 
\noindent where $\big\{\om_\mu, \{\om_a^I\}_{|I|\leq k_s}\big\}\subset C^\infty(J^\infty_M F)$ are some locally defined smooth functions. 
By further restricting to a smaller neighborhood $U_s \subset J^\infty_M F$ around $s$, we may assume these functions are also of 
finite order and, without loss of generality, of the same order $k_s$. Given this local description, we see that the map 
$\om: \CX(J^\infty_M F)\rightarrow C^\infty(J^\infty_M F)$ is necessarily of the form
\vspace{-1mm} 
\begin{align}\label{InfinityJet1formLocallyPullback}
\om(X)(s') = \om_{k_{s}}\big(\pi_{k_s}(s')\big) \left(\dd \pi_{k_s}\left( X(s')\right)\right) 
\end{align}

\vspace{-1mm} 
\noindent  locally for any $s'\in U_s\subset J^\infty_M$ for some local form $\om_{k_s}\in \Omega^1(J^{k_s}_M F)$ and $k_s\in \NN$. 
Hence, we may think of 1-forms on $J^\infty_M F$ as being `locally the pullback' of finite order forms.

\medskip 
More precisely, we see that every 1-form on $\om$ on $J^\infty_M F$ is represented by a family 
\vspace{-2mm} 
\begin{align}\label{1formInfinityJetViaLocalFiniteOrderFamilies}
\Big\{\om^{k_s}\in \Omega^1(U_{\pi_{k_s}(s)})\; \big\vert \; U_{\pi_{k_s}(s)}\subset J^{k_s}_M F \Big\}_{s\in J^\infty_M F}
\end{align}

\vspace{-2mm} 
\noindent of compatible locally defined 1-forms on finite order jet bundles. Compatible here means that for any two 
$s,s' \in J^\infty_M F$ with $k_{s'}\geq k_s$ such that $(\pi_{k_s}^{k_{s'}})^{-1}(U_{k_s}) \cap U_{k_s'} \neq \emptyset\subset J^{k_s}_M F$, 
we have $(\pi_{k_s}^{k_{s'}})^*\om_{k_s}=\om_{k_{s'}}$.\footnote{Two such families $\{\om^{k_s}\}_{s\in J^\infty_M F}, \{\tilde{\om}^{k_s}\}_{s\in J^\infty_M F}$ 
determine the same 1-form on $J^\infty_M F$ 
if and only if each pair $(\om^{k_s},\tilde{\om}^{k_s})$ of finite degree local forms agree on their common overlap in $J^{k_s}_M F$. 
Formally, such families represent 
elements in the limit of the algebras of \textit{germs of $1$-forms} on $J^k_M F$; see e.g. \cite{GMS00}.}
Formula \eqref{InfinityJet1formLocallyPullback} also determines the form of the corresponding map out of the tangent bundle 
(Def. \ref{1formsInftyJetBundle}) on $*$-plots 
\vspace{-1mm} 
\begin{align*}
\om \;:\; T(J^\infty_M F)(*)&\longrightarrow \FR 
\\[-2pt]
X_s &\longmapsto \om_{k_s}\big(\pi_{k_s}(s)\big)\big(\dd \pi_{k_s} (X_s)\big)
\end{align*}

\vspace{-1mm} 
\noindent  for some locally defined form $\om_{k_s}\in \Omega^1(J^{k_s}_M F)$ around $\pi_{k_s}(s)\in J^{k_s}_M F$,
and consequently on any $\FR^n$-plot. 

\medskip 
Note that any finite order globally defined form $\om^k \in \Omega^1(J^k_M F)$ fits the above description and defines a 
$1$-form on $J^\infty_M F$ by the same formulas. We denote the vector subspace of globally finite order 1-forms by
\vspace{-1mm} 
\begin{align}\label{GloballyFinOrder1formsInfinityJet}\Omega^1_{\mathrm{glb}}(J^\infty_M F)\subset \Omega^1(J^\infty_M F)\, ,
\end{align}

\vspace{-1mm} 
\noindent in analogy to the case of smooth functions (or 0-forms). The local finite order viewpoint agrees with \cite{Takens79}, 
while both \cite{Anderson89}\cite{Saunders89} define forms as being globally the pullback of finite order forms. 
By the discussion above, the former has a natural $C^\infty ( J^\infty_M F)$-module structure and corresponds to the 
full set of smooth fiber-wise linear maps out of the actual tangent bundle, while it also defines a (petit) sheaf on the topological 
space $|J^\infty_M F|$. The latter is only is a module over $C^\infty_{\mathrm{glb}}(J^\infty_M F)\subset C^\infty(J^\infty_M F)$ 
and corresponds to a subset of linear maps out of the tangent bundle, and it does \textit{not} define a sheaf on the underlying topological space.

\begin{remark}[\bf Infinity jet 1-form as Fr\'echet map]
\label{1formInfinityJetAsFrechetMap}
Alternatively, if we consider $T(J^\infty_M F)$ as a locally pro-manifold as per Rem. \ref{TangentInfinityJetBundleFrechetManifold}, then by the 
fully faithful embedding $y:\mathrm{LocProMan}\hookrightarrow \SmoothSets$ a 1-form is equivalently a fiber-wise linear Fr\'{e}chet map
$\om :T(J^\infty_M F)\rightarrow \FR$. By Lem. \ref{FunctionsOnInftyJetBundle} applied to the projective limit $T(J^\infty_M F)$, such a 
smooth map is locally around any\footnote{Strictly speaking, this holds around an open neighborhood of any 
$X_s \in T(J^\infty_M F)$. The topology of $T(J^\infty_M F)$ and the linearity assumption of $\om$ are crucial to descend to an open 
neighborhood $s\in J^\infty_M F $.}
$s\in J^\infty_M F$
\vspace{-1mm} 
$$
\om=(\dd \pi_{k_s})^* \om_{k_s}\, ,
$$

\vspace{-1mm} 
\noindent where the pullback is by $\dd \pi_{k_s}:T(J^\infty_M F)\rightarrow T(J^{k_s}_M F)$. When interpreted as acting on vector fields,
this reproduces the formula above. Furthermore, the locally pro-manifold $J^\infty_M F$ is paracompact and has partitions of unity \cite{Takens79}. 
By extending tangent vectors to vector fields, via a partition of unity, it follows that any $C^\infty(J^\infty_M F)$-linear map 
$\CX(J^\infty_M F)\rightarrow C^\infty(J^\infty_M F)$ defines a fiberwise linear smooth map $T(J^\infty_M F)\rightarrow \FR$. 
As with finite-dimensional manifolds, this is another way to witness the bijection 
\vspace{-2mm} 
$$
\Omega^1(J^\infty_M F)= \mathrm{Hom}_{\SmoothSets}^{\mathrm{fib.lin.}}\big(T(J^\infty_M F )\, , \, \FR) 
\cong \mathrm{Hom}_{C^\infty(J^\infty_M F)\mathrm{-Mod}}\Big(\Gamma\big(\CX(J^\infty_M F) \, , \, C^\infty(J^\infty_M F) \Big) \, .
$$ 
\end{remark}

Completely analogously, we may define $m$-forms on $J^\infty_M F$, with the analogous identifications following verbatim.

\begin{definition}[\bf  Forms on infinite jet bundle]
\label{mformsInftyJetBundle}
The set of differential m-forms on the infinite jet bundle is defined as
\vspace{-1mm} 
\begin{align}\Omega^m(J^\infty_M F) := \mathrm{Hom}^{\mathrm{fib.lin. an.}}_{\SmoothSets}
\big(T^{\times m}(J^\infty_M F)\,,\, y(\FR) \big) \, , 
\end{align} 

\vspace{-1mm} 
\noindent i.e., smooth real-valued, fiber-wise linear antisymmetric maps with respect to the fiber-wise linear structure induced by
\eqref{InfinityJetTangentBundleLinearStructure} and \eqref{InfinityJetTangentBundleScalarMultiplcation}, on the $m$-fold fiber product
\vspace{-2mm} 
$$
T^{\times m}(J^\infty_M F):= T(J^\infty_M F)\times_{y(J^\infty_M F)}\cdots \times_{y(J^\infty_M F)} T(J^\infty_M F)
$$ 

\vspace{-2mm} 
\noindent of the tangent bundle over the infinite jet bundle.
\end{definition}

Concretely, the $m$-fold fiber product of the tangent bundle over $J^\infty_M F$ is the smooth set with $\FR^n$-plots given by $m$-tuples of 
plots covering the same plot in $y(J^\infty_M F)$,
\vspace{-2mm} 
\begin{align}\label{mfoldTangentInfinityJetPlots}
 T^{\times m}(J^\infty_M F)(\FR^n)
 =\bigcup_{s^n\in y(J^\infty_M F)(\FR^n)}\big\{(X^1_{s^n},\cdots, X^m_{s^n})\in T(J^\infty_M F)(\FR^n)\times \cdots \times T(J^\infty_M F)(\FR^n)\big\} \, ,
\end{align}

\vspace{-2mm} 
\noindent
where $ p \circ X^1_{s^n}=\cdots =p\circ X^m_{s^n} = s^n \in y(J^\infty_M F)(\FR^n)$ being implicit, with each $X^i_{s^n}\in T(J^\infty_M F)(\FR^n)$ as in
\eqref{TangentInfinityJetPlots}. Equivalently, this may be seen as the limit computed directly in smooth sets
$$
 T^{\times m}(J^\infty_M F) \cong 
\mathrm{lim}_k^{\SmoothSets}\, y\big(T(J^k_M F) \times_{J^k_M F}\cdots \times_{J^k_M F} T(J^k_M F)\big) \, . 
$$
Repeating the discussion for the case of a single vector field (Def. \ref{VectorFieldsOnInftyJetBundle}), it follows that sections of the $m$-fold fibered product 
above correspond to $m$-tuples of vector fields. Thus, any $m$-form $\om \in \Omega^m(J^\infty_M F)$ defines an antisymmetric map of $C^\infty(J^\infty_M F)$-modules  
$$
\om \;:\; \CX(J^\infty_M F)\times\cdots\times \CX(J^\infty_M F)  \longrightarrow C^\infty(J^\infty_M F)
$$
by pre-composing as
\vspace{-1mm} 
$$
\big((X^1,\cdots, X^m):y(J^\infty_M F)\rightarrow  T^{\times m}(J^\infty_M F) \big) \longmapsto
\big(\om \circ (X^1,\cdots, X^m) : y(J^\infty_M F) \rightarrow y(\FR) \big) \, .
$$

\vspace{-1mm} 
\noindent
As with 1-forms, in local coordinates in a neighborhood $U_s$ around a point $s\in J^\infty_M F$ we may represent an $m$-form by
\vspace{-2mm} 
\begin{align}\label{mformInfinityJetLocalCoordinates}
\om = \sum_{p+q=m} \, \sum_{I_1,\cdots, I_p=0}^{k_s} \om_{\mu_1\cdots \mu_p a_1 \cdots a_q}^{I_1\dots I_q} 
\dd x^{\mu_1}\wedge \cdots \wedge \dd x^{\mu_p}\wedge \dd u^{a_1}_{I_1}\wedge\cdots \wedge \dd u^{a_q}_{I_q}\, ,
\end{align}

\vspace{-2mm} 
\noindent where the sum terminates at some finite order $k_s$, and the coefficients are locally defined functions on $J^{k_s}_M F$. 
That is, the map $\om: \CX(J^\infty_M F)\times \cdots \times \CX(J^\infty_M F)  \rightarrow C^\infty(J^\infty_M F)$ 
is necessarily of the form
\vspace{-2mm} 
\begin{align}\label{InfinityJetmformLocallyPullback}
\om(X^1,\cdots, X^m)(s') = \om_{k_{s}}\big(\pi_{k_s}(s')\big) \Big(\dd \pi_{k_s}\big( X^1(s')\big),\cdots, \dd \pi_{k_s}\big( X^m(s')\big)\Big)  ,
\end{align}

\vspace{-1mm} 
\noindent locally for any $s'\in U_s\subset J^\infty_M$ for some local form $\om_{k_s}\in \Omega^m(J^{k_s}_M F)$ and $k_s\in \NN$. 
Hence, we may think of m-forms on $J^\infty_M F$ as being `locally the pullback' of finite order forms, and represent them by 
compatible families of locally defined finite order $m$-forms 
as in \eqref{1formInfinityJetViaLocalFiniteOrderFamilies}. The previous formula determines the form of the corresponding
map out of the tangent bundle 
(Def. \ref{mformsInftyJetBundle}) on $*$-plots 
\vspace{-3mm} 
\begin{align*}
\om \;:\; T(J^\infty_M F)\times_{J^\infty_M F} \cdots \times_{J^\infty_M F} T(J^\infty_M F)(*)&\longrightarrow \FR \\
(X^1_s,\cdots, X^m_s) &\longmapsto \om_{k_s}\big(\pi_{k_s}(s)\big)\big(\dd \pi_{k_s} (X^1_s),\cdots, \dd \pi_{k_s} (X^m_s) \big)
\end{align*}

\vspace{-2mm} 
\noindent for some locally defined form $\om_{k_s}\in \Omega^1(J^{k_s}_M F)$ around $\pi_{k_s}(s)\in J^{k_s}_M F$, and consequently on any $\FR^n$-plot. 
Thus, as in Rem. \ref{1formInfinityJetAsFrechetMap}, there is in fact a bijection
\begin{align}
\label{FormsInfinityJetBundleAsVectorFieldMaps}
\Omega^m(J^\infty_M F)&= \mathrm{Hom}_{\SmoothSets}^{\mathrm{fib.lin. an.}}\big(T^{\times m}(J^\infty_M F )\, , \, \FR) \nn 
\\ &\cong \mathrm{Hom}_{C^\infty(J^\infty_M F)-\mathrm{Mod}}^{\mathrm{antis.}}\big(\CX(J^\infty_M F)\times \cdots\times \CX(J^\infty_M F) \, 
, \, C^\infty(J^\infty_M F) \big) \, . \  
\\
&\cong \bigwedge^{m}_{C^\infty(J^\infty_M F)} \Omega^1(J^\infty_M F) \, . \nn
\end{align}

\vspace{-2mm} 
\noindent We denote the vector subspace of globally finite order m-forms, i.e., those determined `as a pullback' of
a single globally defined m-form 
$\om^{k}\in \Omega^m(J^k_M F)$, by
\vspace{-2mm} 
\begin{align}\label{GloballyFinOrdermformsInfinityJet}\Omega^m_{\mathrm{glb}}(J^\infty_M F)\subset \Omega^m(J^\infty_M F)\, ,
\end{align}
with the same comments applying as in $1$-forms $\eqref{GloballyFinOrder1formsInfinityJet}$. 

It is straightforward to algebraically define the usual Cartan calculus on $\Omega^\bullet(J^\infty_M F):= \bigoplus_{m\in \NN}\Omega^m(J^\infty_M F)$, 
with all the maps naturally descending to the subspace $\Omega^\bullet_{\mathrm{glb}}(J^\infty_M F)$. For instance,
the de Rham differential is given by
\vspace{-2mm} 
\begin{align}\label{deRhamInfinityJet}
\dd \;:\; \Omega^m(J^\infty_M F)&\longrightarrow \Omega^{m+1}(J^\infty_M F)
\end{align}

\vspace{-2mm} 
\noindent where 
\vspace{-2mm} 
\begin{align*}
\dd \om (X^0,\cdots, X^{m}):= \sum_{i=0}^{m} (-1)^i\,  X^i\big(\om(X^0,\cdots,\hat{X}^i,\cdots, X^m\big)\big) + \sum_{i<j}(-1)^{i+j}\, 
\om \big([X_i,X_j],X_0,\cdots, \hat{X}^i, \cdots, \hat{X}^j,\cdots, X^m \big) \, .
\end{align*}

\vspace{-2mm} 
\noindent Equivalently, in terms of the local finite order representatives \eqref{InfinityJetmformLocallyPullback}
around some $s\in J^\infty_M F$
\vspace{-2mm} 
\begin{align*}
\dd\om(X^0,\cdots, X^m)(s') = \dd\om_{k_{s}}\big(\pi_{k_s}(s')\big) \Big(\dd \pi_{k_s}\big( X^0(s')\big),\cdots,
\dd \pi_{k_s}\big( X^m(s')\big)\Big) \, ,
\end{align*}

\vspace{-2mm} 
\noindent or in terms of the local coordinate representation \eqref{mformInfinityJetLocalCoordinates}, via the usual formula
\vspace{-2mm} 
$$
\dd \om = \bigg(\dd x^\mu \wedge  \frac{\partial}{\partial x^\mu}+
\sum_{|I|=0}^{\infty} \dd u^a_I \wedge  \frac{\partial}{\partial u^a_I}\bigg) \om  \, , 
$$

\vspace{-2mm} 
\noindent with the sum necessarily terminating at some finite order $k_s$ around each $s\in J^\infty_M F$, since the coefficients of $\om$ are locally of finite order. 
From either representation above, it follows that $\dd^2=0$ and so $\big(\Omega^\bullet(J^\infty_M F), \dd\big)$ defines a cochain complex. 
It turns out both the locally \cite{Takens79}\cite{GMS00} and globally \cite{Anderson89} finite order de Rham cohomologies agree with the
cohomology the fiber bundle $F\rightarrow M$.

\begin{proposition}[\bf Total de Rham cohomology jet bundle]
\label{TotaldeRhamCohomologyInfinityJet}
The projection map $\pi^\infty_0 :J^\infty_M F\rightarrow F$ induces a quasi-isomorphism 
\vspace{-1mm} 
$$
(\pi^\infty_0)^* \;:\; \Omega^{\bullet}(F) \longrightarrow 
\Omega^{\bullet}(J^\infty_M F)_{\mathrm{glb}}\hookrightarrow \Omega^{\bullet}(J^\infty_M F)\, ,
$$

\vspace{-2mm} 
\noindent
and so isomorphisms on cohomology
\vspace{-2mm} 
\begin{align}\label{}
H^\bullet_{\mathrm{dR}}(J^\infty_M F) \cong H^\bullet_{\mathrm{dR,glb}}(J^\infty_M F)\cong H^\bullet_{\mathrm{dR}}(F)\, .
\end{align}

\vspace{-2mm} 
\noindent In particular, any closed $m$-form $\om$ on $J^\infty_M F$ decomposes as $\om= (\pi_0^\infty)^*\tilde{\om} + \dd k$ for some 
$\tilde{\om}\in \Omega^m(F)$ and $k\in \Omega^{m-1}(J^\infty_M F)$.
\end{proposition}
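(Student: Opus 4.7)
The plan is to reduce the infinite-dimensional statement to a chain of finite-dimensional quasi-isomorphisms for finite-order jet bundles, then pass to the limit using the local-finite-order description of forms on $J^\infty_M F$ developed above.

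\textbf{Step 1 (finite-order affine bundle structure).} First, I would recall that for each $k\geq 1$, the projection $\pi^k_{k-1}: J^k_M F \to J^{k-1}_M F$ is a surjective submersion whose fibers are (non-canonically) affine spaces modeled on the vector bundle $S^k T^*M \otimes_{J^{k-1}_M F} V_{\pi^{k-1}_0} F$. In particular, $\pi^k_{k-1}$ is (the underlying map of) an affine bundle and hence a homotopy equivalence in $\SmoothManifolds$. By the standard Poincar{\'e}-type lemma for affine bundles (choose any section -- locally, a zero section in affine charts -- and homotope along fibers), the pullback
\[
(\pi^k_{k-1})^\ast : \Omega^\bullet(J^{k-1}_M F) \longrightarrow \Omega^\bullet(J^k_M F)
\]
is a quasi-isomorphism. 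Moreover these fibrewise homotopy operators can be chosen compatibly across $k$ on any coordinate chart of $J^\infty_M F$, since the coordinate splitting $\{x^\mu,\{u^a_I\}_{|I|\le k}\}$ exhibits $J^k_M F$ as a trivial bundle over $J^{k-1}_M F$ with fiber $\FR^{N_k}$. Composing, $(\pi^k_0)^\ast : \Omega^\bullet(F) \to \Omega^\bullet(J^k_M F)$ is a quasi-isomorphism for every $k\in\NN$.

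\textbf{Step 2 (globally finite order cohomology).} By definition of $\Omega^\bullet_{\mathrm{glb}}(J^\infty_M F)$, together with the limit presentation $y(J^\infty_M F)\cong \mathrm{lim}_k^{\SmoothSets}y(J^k_M F)$ from \eqref{JetBundleSmoothSetlimit} and Prop.~\ref{FunctionsOnInftyJetBundle}, one has a filtered colimit of cochain complexes
\[
\Omega^\bullet_{\mathrm{glb}}(J^\infty_M F)\;\cong\; \mathrm{colim}_{k\in\NN}\,\Omega^\bullet(J^k_M F)
\]
with structure maps the pullbacks $(\pi^k_{k-1})^\ast$. Filtered colimits are exact and commute with cohomology in $\mathrm{Ch}(\FR\mathrm{-Vect})$, so combining with Step 1 yields
\[
H^\bullet_{\mathrm{dR,glb}}(J^\infty_M F)\;\cong\; \mathrm{colim}_k\, H^\bullet_{\mathrm{dR}}(J^k_M F)\;\cong\; H^\bullet_{\mathrm{dR}}(F)\, ,
\]
with the isomorphism realised by $(\pi^\infty_0)^\ast$.

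\textbf{Step 3 (from globally to locally finite order).} This is the delicate part. I would show that the inclusion $\Omega^\bullet_{\mathrm{glb}}(J^\infty_M F)\hookrightarrow \Omega^\bullet(J^\infty_M F)$ is itself a quasi-isomorphism. The key input is that, by the local-finite-order characterisation \eqref{InfinityJetmformLocallyPullback} and \eqref{1formInfinityJetViaLocalFiniteOrderFamilies}, $\Omega^\bullet(J^\infty_M F)$ is the space of global sections of the (petit) sheaf on $|J^\infty_M F|$ assigning to each open $U$ the forms of locally finite order. Using that $J^\infty_M F$ is a paracompact Fr{\'e}chet manifold admitting smooth partitions of unity (Takens, Saunders), I would argue by a \v{C}ech-to-derived-functor argument (or equivalently by a standard partition-of-unity patching): given a closed locally-finite-order form $\om$, cover $|J^\infty_M F|$ by opens $U_\alpha$ on which $\om|_{U_\alpha}$ is the pullback of a closed finite-order form $\om_\alpha^{k_\alpha}$ on some $J^{k_\alpha}_M F$; apply Step 1 locally to write $\om_\alpha^{k_\alpha}=(\pi^{k_\alpha}_0)^\ast\tilde{\om}_\alpha+\dd k_\alpha$; then glue the local primitives via a partition of unity subordinate to $\{U_\alpha\}$, the failure of cocycle compatibility being killed inductively by the exactness of the \v{C}ech complex of the constant sheaf $\FR$ (whose cohomology computes $H^\bullet_{\mathrm{dR}}(F)$ via Step 2).

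\textbf{Main obstacle.} Step 3 is where I expect the real technical difficulty: writing down an explicit, globally defined homotopy operator on $\Omega^\bullet(J^\infty_M F)$ requires control over how the local finite-order representatives $\om^{k_s}$ of \eqref{1formInfinityJetViaLocalFiniteOrderFamilies} interact under change of chart, because the order $k_s$ may jump. A cleaner alternative, which I would pursue in parallel, is to avoid patching and instead construct a global fibrewise-affine contraction $h_t : J^\infty_M F \to J^\infty_M F$ over $F$ by scaling the jet coordinates $u^a_I\mapsto t^{|I|}u^a_I$ for $|I|\geq 1$, verify that $h_t$ is a \emph{smooth} map of $\SmoothSets$ (using Prop.~\ref{SmoothMapsofLocProMan} applied fibre-by-fibre in each coordinate chart), and then extract a standard cylinder-type chain homotopy $K:\Omega^\bullet(J^\infty_M F)\to\Omega^{\bullet-1}(J^\infty_M F)$ with $\dd K+K\dd=\id-(\pi^\infty_0)^\ast\circ (\iota_0)^\ast$, where $\iota_0:F\hookrightarrow J^\infty_M F$ is the zero-section of the affine tower. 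The smoothness of $K$ as a locally-finite-order operator must be checked carefully; once established, both the quasi-isomorphism statement and the explicit normal form $\om=(\pi^\infty_0)^\ast\tilde{\om}+\dd k$ for closed $\om$ drop out immediately.
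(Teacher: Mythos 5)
The paper does not actually prove this proposition: it is quoted from the literature, with the globally-finite-order case attributed to Anderson and the locally-finite-order case to Takens and to Giachetta--Mangiarotti--Sardanashvily, so there is no in-text argument to compare yours against line by line. That said, your Steps 1 and 2 are exactly the standard argument for the globally finite order statement: each $\pi^k_{k-1}$ is an affine bundle, hence $(\pi^k_0)^*$ is a quasi-isomorphism, and $\Omega^\bullet_{\mathrm{glb}}(J^\infty_M F)$ is the filtered colimit of the $\Omega^\bullet(J^k_M F)$ along the injective pullbacks, so cohomology commutes with the colimit. This part is correct and complete. Your Step 3(a) is likewise the route taken in the cited sources: the locally-finite-order forms constitute a fine sheaf on the paracompact space $\vert J^\infty_M F\vert$ (partitions of unity exist by Takens), the local Poincar\'e lemma reduces to the finite-order case on each chart, and one then identifies the resulting sheaf cohomology of $\vert J^\infty_M F\vert$ with that of $\vert F\vert$ using that the tower is a locally trivial bundle with contractible (Fr\'echet) fiber over a paracompact base. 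That last identification is a genuine step you should not elide: it is where the point-set topology of the projective limit enters, and it is handled in Takens but not automatic.

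The one place where your proposal as written would fail is the ``cleaner alternative'' in Step 3: the fibrewise scaling $u^a_I\mapsto t^{\vert I\vert}u^a_I$ is \emph{not} coordinate-invariant, because the jet transition functions mix orders (already $\tilde u^a_\mu = \partial_\mu\tilde u^a + (\partial\tilde u^a/\partial u^b)\,u^b_\mu$), so the chartwise formulas do not glue to a global map $h_t: J^\infty_M F\to J^\infty_M F$. To make this contraction global you must first choose a compatible system of linear splittings of the affine bundles $J^k_M F\to J^{k-1}_M F$ together with a section $F\to J^\infty_M F$ onto which to retract (there is no canonical ``zero section''); such data exists, e.g.\ induced by a connection on $F\to M$ and a torsion-free connection on $M$, but it is an extra choice that must be made before the homotopy operator $K$ can even be written down. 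Once that choice is made, your argument that $h_t$ is smooth (via Prop.~\ref{SmoothMapsofLocProMan}, since $h_t$ commutes with all $\pi_k$) and that $K$ preserves local finite order does go through, and it yields the explicit decomposition $\om=(\pi^\infty_0)^*\tilde\om+\dd k$ claimed in the proposition.
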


The Lie derivative of a function $f\in C^\infty(J^\infty_M F)$ along any vector field $X$ is defined by
$\mathbb{L}_X(f):= X(f)\in C^\infty(J^\infty_M F)$. It extends to $m$-forms by $\mathbb{L}_X:=[\dd, \iota_X]$, 
\vspace{-3mm} 
\begin{align}
\label{mformInfinityJetLieDerivative}
\mathbb{L}_X \;:\; \Omega^m(J^\infty_M F)&\longrightarrow \Omega^{m}(J^\infty_M F) 
\\[-2pt]
\om &\longmapsto \dd(\iota_X \om) + \iota_X (\dd \om) \, ,\nn
\end{align}

\vspace{-2mm} 
\noindent where $\iota_X\om:= \om(X,-,\cdots, -) \in \Omega^{m-1}(J^\infty_M F)$ denotes the contraction map. 
As with the de Rham differential, this takes the usual coordinate form around any point $s\in J^\infty_M F$, 
whereby $\om$ and $X$ are of finite order. The usual Cartan calculus identities between the maps $\dd,\, \mathbb{L}_X$ 
and $\iota_X$ also follow, since they hold locally around every point 
$s\in J^\infty_M F$ for the corresponding finite order representatives.

 \paragraph{\bf Relation to de Rham forms on $J^\infty_M F$.} Since the infinite jet bundle is a smooth set $y(J^\infty_M F)\in \SmoothSets$, 
 the notion of de Rham $m$-forms (Def. \ref{nformsonSmoothSet}) as maps into the classifying space $\mathbold{\Omega}^m_\mathrm{d R}$ applies.
 In this case\footnote{Notice, such a simple argument cannot be applied to the case of arbitrary differential forms (Def. \ref{DifferentialFormsOnFieldSpace}) 
 and de Rham forms (Def. \ref{nformsonSmoothSet}) on the actual field space $\CF=\mathbold{\Gamma}_M(F)$. Nevertheless, the above result can be 
 used to view \textit{local} differential forms on $\CF\times M$ (Def. \ref{BicomplexOfLocalForms}) as de Rham forms (Lem. \ref{LocalDiffFormsAsDeRhamForms}).},
 the relation to the traditional notion of differential $m$-forms as maps out of the tangent bundle $T(J^\infty_M F)$ (Def. \ref{mformsInftyJetBundle}) 
 is immediate, at least for those forms that are of globally finite order (cf. Rem. \ref{ClassifyingFormsDoNotClassifyInSmoothSet}).
\begin{lemma}[Differential forms on $J^\infty_M F$ as de Rham forms]\label{JetBundleDiffFormsAsDeRhamForms}
The subalgebra $\Omega^\bullet_\mathrm{glb} (J^\infty_M F)\hookrightarrow \Omega^\bullet (J^\infty_M F)$ of globally finite order differential forms 
is canonically identified with a subalgebra of de Rham forms on the infinite jet bundle. That is, there is a canonical DGCA injection 
\vspace{-3mm} 
\begin{align*}
 \Omega^\bullet_\mathrm{glb} (J^\infty_M F)\longhookrightarrow \Omega^\bullet_\mathrm{d R} (J^\infty_M F)\, .
\end{align*}
\end{lemma}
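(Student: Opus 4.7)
The plan is to exhibit each globally finite order form $\omega \in \Omega^m_\mathrm{glb}(J^\infty_M F)$, represented as $\omega = \pi_k^* \omega^k$ for some $\omega^k \in \Omega^m(J^k_M F)$ on a finite-dimensional jet bundle, as a natural transformation $y(J^\infty_M F) \Rightarrow \boldsymbol{\Omega}^m_\mathrm{dR}$ by invoking the classifying property \eqref{ClassifyingFormsClassifyManifoldForms} for the finite-dimensional manifold $J^k_M F$ and then precomposing with the smooth projection $y(\pi_k)$.

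First I would define the assignment explicitly. Given $\omega = \pi_k^* \omega^k$, set
$$\tilde{\omega} \;:=\; \tilde{\omega}^k \circ y(\pi_k) \;:\; y(J^\infty_M F) \longrightarrow y(J^k_M F) \longrightarrow \boldsymbol{\Omega}^m_\mathrm{dR} \, ,$$
where $\tilde{\omega}^k$ is the de Rham form on the manifold $J^k_M F$ that corresponds to $\omega^k$ under \eqref{ClassifyingFormsClassifyManifoldForms}. Unwinding, this sends an $\FR^n$-plot $s^n \in y(J^\infty_M F)(\FR^n)$ to the ordinary pullback $(\pi_k \circ s^n)^* \omega^k \in \Omega^m(\FR^n) = \boldsymbol{\Omega}^m_\mathrm{dR}(\FR^n)$. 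Naturality in $\FR^n$ is immediate from functoriality of pullbacks of forms along smooth maps of manifolds. Independence of the choice of finite-order representative follows since, if $k' \geq k$ and $\omega = \pi_{k'}^*\omega^{k'} = \pi_k^* \omega^k$ with necessarily $\omega^{k'} = (\pi^{k'}_k)^*\omega^k$, then $(\pi_{k'} \circ s^n)^* \omega^{k'} = (\pi_k \circ s^n)^* \omega^k$ for every plot.

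Injectivity of the assignment $\omega \mapsto \tilde{\omega}$ follows by recovering $\omega$ from its action on plots: evaluating $\tilde{\omega}$ on $\FR^n$-plots of $J^k_M F$ recovers $\omega^k$ via the isomorphism \eqref{ClassifyingFormsClassifyManifoldForms} on the manifold $J^k_M F$, which in turn determines $\omega = \pi_k^* \omega^k$. Compatibility with the DGCA structure splits into two checks. For the differential, on any plot $s^n$ one has
\begin{align*}
\dd_\mathrm{dR}(\tilde{\omega})(s^n) \;=\; \dd_{\FR^n}\bigl((\pi_k \circ s^n)^*\omega^k\bigr) \;=\; (\pi_k \circ s^n)^* \dd \omega^k \;=\; \widetilde{(\dd \omega)}(s^n)\, ,
\end{align*}
using that the de Rham differential commutes with smooth pullbacks and that $\dd \omega$ is again globally of finite order with representative $\dd \omega^k$. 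For the wedge product, $(\pi_k \circ s^n)^*(\omega^k \wedge \omega^{\prime k}) = (\pi_k \circ s^n)^* \omega^k \wedge (\pi_k \circ s^n)^* \omega^{\prime k}$ at the level of forms on $\FR^n$, which matches the plot-wise definition \eqref{SmoothClassifyingWedgeProduct} of $\wedge$ on $\boldsymbol{\Omega}^\bullet_\mathrm{dR}$.

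The main obstacle is conceptual rather than computational, namely the careful bookkeeping of different global finite-order representatives of the same form and the verification that the two algebraic structures on $\Omega^\bullet(J^\infty_M F)$ (as fiber-wise linear antisymmetric maps out of $T(J^\infty_M F)$ per Def. \ref{mformsInftyJetBundle}, versus as maps into $\boldsymbol{\Omega}^\bullet_\mathrm{dR}$) are intertwined by this assignment. Both reduce to the finite-dimensional case on each $J^k_M F$ via the projective limit structure \eqref{JetBundleSmoothSetlimit} and Prop. \ref{SmoothMapsofLocProMan}, which is exactly why the argument only closes for \emph{globally} finite-order forms -- for merely locally finite-order forms the compatible family \eqref{1formInfinityJetViaLocalFiniteOrderFamilies} need not glue into a single map into $\boldsymbol{\Omega}^m_\mathrm{dR}$, consistent with Rem. \ref{ClassifyingFormsDoNotClassifyInSmoothSet}.
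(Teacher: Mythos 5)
Your proposal is correct and takes essentially the same route as the paper's proof: both exhibit $\omega=\pi_k^*\omega^k$ as the composite $\tilde{\omega}^k\circ y(\pi_k)$ using the classifying property \eqref{ClassifyingFormsClassifyManifoldForms} on the finite-dimensional manifold $J^k_M F$, and both check DGCA compatibility plot-wise. The only difference is bookkeeping — the paper fixes the minimal order $k$ to make the representative unique, whereas you verify independence of the representative directly — and you spell out the differential/wedge checks that the paper leaves as "follows similarly."
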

\begin{proof}
This follows by the finite dimensional manifold identification of Eq. \eqref{ClassifyingFormsClassifyManifoldForms}. More explicitly, 
let $\om \in \Omega^m_\mathrm{glb} (J^\infty_M F)\hookrightarrow \Omega^m (J^\infty_M F)$ be a differential form of globally finite order, i.e.,
\vspace{-2mm}
$$
\om = \pi^*_k \om_k = \om_k \circ \dd \pi_k \;\; : \;\; T^{\times m}(J^\infty_M F) \longrightarrow T^{\times m}(J^k_M F) \xlongrightarrow{\om_k} \FR 
$$ 

\vspace{-2mm}
\noindent
for a unique $\om_k \in \Omega^m(J^k_M F)$, where $k$ is the minimal such order. In particular, $\om_k$ is a differential form on a 
finite-dimensional manifold, and so by the Yoneda Lemma \ref{YonedaLemmaForSmoothSets} it corresponds uniquely to a map
(see Eq. \eqref{ClassifyingFormsClassifyManifoldForms}) 
\vspace{-2mm}
$$
\tilde{\om}_k \;:\; y(J^k_M F) \longrightarrow \mathbold{\Omega}^m_{\mathrm{dR}} \, 
$$ 

\vspace{-2mm}
\noindent
into the classifying space, i.e., a de Rham differential form on $y(J^k_M F)$. Precomposing with the projection 
$y(\pi_k) : y(J^\infty_M F)\rightarrow y(J^k_M F)$ we get 
\vspace{0mm}
$$
\tilde{\om}:= \tilde{\om}_k\circ y(\pi_k) \;\;  :  \;\;  J^\infty_M F \longrightarrow \mathbold{\Omega}^m_{\mathrm{dR}} \, ,
$$

\vspace{-1mm}
\noindent
which is the (unique) de Rham $m$-form corresponding to the traditional pullback form $\om= \om_k \circ \dd \pi_k$. 

It follows similarly that under this identification, the classifying the de Rham differential $\dd_{\mathrm{d R}}$ corresponds to the
traditional differential $\dd$ of (globally finite order) forms on $J^\infty_M F$, and similarly for the corresponding wedge products. 
Thus the DGCA of globally finite order forms on $J^\infty_M F$ embeds into the de Rham forms (Def. \ref{ClassifyingFormsDGCAstructure}) 
on $J^\infty_M F$ defined via the classifying space $\mathbold{\Omega}_{\mathrm{d R}}^\bullet$.
\end{proof}
We expect, but do not prove here, that de Rham forms actually exhaust all differential forms on $J^\infty_M F$. In other words, the above 
inclusion should extend to a canonical bijection of all differential forms on $J^\infty_M F$ and those defined via the classifying space
\footnote{We note that this does not follow from a generic categorical argument, in that $J^\infty_M F$ is a \textit{limit} of finite
order jet bundles and not a \textit{colimit}.} 
\vspace{-3mm}
\begin{align*}
 \Omega^\bullet (J^\infty_M F)\longhookrightarrow \Omega^\bullet_\mathrm{d R} (J^\infty_M F)\, .
\end{align*}

\vspace{-2mm}
\noindent On the other hand, there is a canonical and all-important splitting of differential forms on $J^\infty_M F$, into horizontal and 
vertical components, which is naturally explained via the traditional tangent bundle picture, and is not at all obvious if one uses the 
classifying space picture. This will become clear in the following section.

\newpage 

\addtocontents{toc}{\protect\vspace{-10pt}}
\section{Euler-Lagrange dynamics via the infinite jet bundle}

There is a natural bicomplex structure $\Omega^{\bullet,\bullet} (J^\infty_M F)$ induced on the differential forms on the infinite jet bundle, 
the so-called \textit{variational bicomplex}. The name ``variational" is justified, among other reasons, as the vertical differential may be
used to rigorously encode the integration-by-parts algorithm and the explicit algebraic form of the Euler--Lagrange equations via the
``Euler--Lagrange source form'' of a Lagrangian. Using the latter, we may express the space of on-shell fields as a smooth set. 

\medskip 
Furthermore, the horizontal forms and horizontal differential may be used to prove a version of Stokes' Theorem on field space, and hence
inducing local (conserved) currents and charges on $\CF$ by `pulling back' (horizontally) closed forms on $J^\infty_M F$. 
The latter pullback viewpoint on currents on field space is furthermore useful in showing that local symmetries of a field theory 
preserve the on-shell space of fields.   

\subsection{The variational bicomplex and EL equations}
\label{VariationalBicomplexSection}

The splitting of the tangent bundle (Cor. \ref{CanonicalSmoothSplitting}) induces a splitting on the set of 1-forms 
$\Omega^1(J^\infty_M F)$.

\begin{definition}[\bf  Horizontal and vertical 1-forms]
\label{HorizontalVertical1formDefinition}
Let   $\om:T(J^\infty_M F)\rightarrow y(\FR)$ be a differential 1-form on the infinite jet bundle.

\noindent {\bf (i)} $\omega$ is called \textit{horizontal} if the following composite is the zero map
\vspace{-2mm} 
$$
\om|_{{}_{VJ^\infty_M F}} \;:\; VJ^\infty_M F\longhookrightarrow T(J^\infty_M F) \longrightarrow y(\FR) \;.
$$

\noindent {\bf (ii)} Similarly, it is called \textit{vertical} if the following composite is the zero map
\vspace{-2mm} 
$$
\om|_{{}_{HJ^\infty_M F}} \;:\; HJ^\infty_M F\longhookrightarrow T(J^\infty_M F) \longrightarrow y(\FR) \;. 
$$ 
\end{definition}
Note that these conditions are equivalent to $\om$ vanishing on all horizontal or vertical $*$-plots (and hence all such $\FR^n$-plots), respectively. 
That is, vanishing on all horizontal or vertical tangent vectors at each point in $J^\infty_M F$, respectively. Acting on vector fields, 
a $1$-form $\om$ is horizontal or vertical if and only if
\vspace{-2mm}
$$
\om(X_V)=0 \hspace{1cm} \mathrm{or} \hspace{1cm}  \om(X_H)=0
$$

\vspace{-2mm}
\noindent for all $X_V\in \CX_V(J^\infty_M F)$, 
 or all $X_H\in \CX_H(J^\infty_M F)$, respectively. Since $1$-forms are $C^\infty(J^\infty_M F)$-linear maps, every 1-form $\om$ uniquely
 decomposes as $\om=\om_H+\om_V$, and so
 \vspace{-1mm} 
\begin{align}\label{1formsInfinityJetSplitting}
\Omega^1(J^\infty_M F) \cong \Omega^1_H(J^\infty_M F)\oplus \Omega^1_V(J^\infty_M F)
\end{align}

\vspace{-1mm} 
\noindent with the subspaces denoting vertical and horizontal 1-forms, respectively. Under the local coordinates representations
\eqref{HorizontalVerticalVectorFieldCoordinates}, 
\eqref{1formInfinityJetLocalCoordinates} around a point $s\in J^\infty_M F$, horizontal and vertical 1-forms take the form
\vspace{-2mm} 
\begin{align}
\label{HorizontalVertical1formLocalCoordinates}
\om_H= (\om_H)_\mu \cdot \dd x^\mu \, ,\hspace{1cm}  \hspace{1cm}  \om_V = \sum_{|I|=0}^{k_s} (\om_V)_a^{I} \cdot 
(\dd u^a_I - u^{a}_{I+\mu} \dd x^\mu)=:\sum_{|I|=0}^{k_s} (\om_V)_a^{I} \cdot \theta^a_I \, ,
\end{align}

\vspace{-2mm} 
\noindent for some coefficient functions of local finite order, respectively. The locally spanning set of vertical 1-forms denoted by 
$$
\big\{ \theta^a_I:= \dd u^a_I - u^a_{I+\mu} \dd x^\mu \big\}_{|I|\in \NN}
$$ 
is also referred to as `\textit{contact basis forms}' and the ideal $\Omega_C^\bullet(J^\infty_M F) \hookrightarrow \Omega^\bullet(J^\infty_M F)$ 
they generate the `\textit{contact forms}', as in e.g. \cite{Anderson89}. Projecting onto each of the subspaces, the de Rham differential 
$\dd:C^\infty(J^\infty_M F)\rightarrow \Omega^1(J^\infty_M F)$ decomposes as $\dd= \dd_H + \dd_V$    
\vspace{-2mm} 
\begin{align}\label{TotalDifferentialDecomposesOn0forms}
d_H+d_V \;:\; C^\infty_M(F)\longrightarrow \Omega^1_H(J^\infty_M F)\oplus \Omega^1_V(J^\infty_M F)
\end{align}

\vspace{-2mm} 
\noindent where in local coordinates
\vspace{-2mm} 
$$
\dd_H(f)= \bigg(\frac{\partial f}{\partial x^\mu} + \sum_{I=0}^{k_s}     u^a_{I+\mu} \frac{\partial f}{\partial u_{I}^a} \bigg)
\cdot \dd x^\mu= D_\mu(f) \cdot \dd x^\mu  \, ,\hspace{1cm}  \hspace{1cm}  \dd_{V}(f) = \sum_{|I|=0}^{k_s} \frac{\partial f}{\partial u^a_I} \cdot  \theta^a_I \, .
$$
In particular, the action on local coordinates is given by
\begin{align}\label{HorizontalVerticalDifferentialonCoordinates}
\dd_H (x^\mu)= \dd x^\mu, \hspace{1cm} \dd_V (x^\mu)= 0, \hspace{1cm} \dd_H(u^a_I) = u^a_{I+\mu} \dd x^\mu,  \hspace{1cm} \dd_V (u^a_I) = \theta^a_I \, .
\end{align}

\begin{example}[\bf Action of prolongated tangent vector]
The action of the vertical tangent vector $\partial_t j^\infty \phi_t (x) |_{t=0}$ induced by an $\FR^1$-plot
$\phi_t\in \mathbold{\Gamma}_M (F)(\FR^1)$ of smooth sections, as in Ex. \ref{InfinityJetVerticalVectorFromProlongationofPlot}, 
may be equivalently given as
\vspace{-1mm} 
\begin{align}
\label{InfinityJetVerticalVectorFromProlongationofPlotActionVia1form}
{\partial_t}|_{t=0} \big(f\circ j^\infty \phi_t(x)\big) &= \dd f\big(\partial_t j^\infty \phi_t(x)|_{t=0} \big) = 
(\dd_H + \dd_V) (f)  \big(\partial_t j^\infty \phi_t(x)|_{t=0} \big) \\ 
&= \dd_V f \big(\partial_t j^\infty \phi_t(x)|_{t=0} \big)= \iota_{\partial_t j^\infty \phi_t(x)|_{t=0} } \dd_V f\, , \nn 
\end{align}

\vspace{-1mm} 
\noindent by the verticality of the tangent vector, or explicitly in coordinates.
\end{example}

Proceeding in a similar manner, the splitting of $1$-forms \eqref{1formsInfinityJetSplitting} induces a decomposition on $m$-forms
\vspace{-2mm} 
\begin{align}\label{mformsInfinityJetSplitting}
\Omega^{m}(J^\infty_M F)\cong \bigoplus_{p+q=m} \Omega^{p,q}(J^\infty_M F)
\end{align}

\vspace{-1mm} 
\noindent
 via 
$ \Omega^m(J^\infty_M F) \cong \bigwedge^m \Omega^1(J^\infty_M F)\cong \bigwedge^m \big(\Omega^1_H(J^\infty_M F)\oplus \Omega^1_V(J^\infty_M F)\big)$. 
In other words, $\om\in \Omega^{p,q}(J^\infty_M F)$ if and only if 
$$\om(X_V^1,\cdots, X_V^p,-,\cdots,-)=0\, , \hspace{1cm} \mathrm{and} \hspace{1cm} \om(X_H^1,\cdots, X_H^q,- \cdots , -)=0 $$
for any vertical $\{X_V^1,\cdots, X_V^p\}\subset \CX_V(J^\infty_M F)$ and any horizontal $\{X_H^1,\cdots, X_H^q\} \subset \CX_H(J^\infty_M F)$. 
Thus there is a bi-grading on the algebra of differential forms
\vspace{-2mm} 
\begin{align}\label{BigradingonInfinityJetForms}
\Omega^\bullet(J^\infty_M F)\cong\Omega^{\bullet,\bullet}(J^\infty_M F):= \bigoplus_{m\in \NN} \bigoplus_{p+q=m} \Omega^{p,q}(J^\infty_M F )\, ,
\end{align}
where in a local chart around $s\in J^\infty_M F$ a $(p,q)$-form $\om \in \Omega^{p,q}(J^\infty_M F)$ takes the form
\vspace{-1mm} 
\begin{align}\label{pqformInfinityJetLocalCoordinates}
\om &= \sum_{I_1,\cdots, I_p=0}^{k_s} \om_{\mu_1\cdots \mu_p a_1 \cdots a_q}^{I_1\dots I_q} \dd x^{\mu_1}\wedge \cdots \wedge \dd x^{\mu_p}\wedge 
\theta^{a_1}_{I_1}\wedge\cdots \wedge  \theta^{a_q}_{I_q}\, , \\
&=  \sum_{I_1,\cdots, I_p=0}^{k_s} \om_{\mu_1\cdots \mu_p a_1 \cdots a_q}^{I_1\dots I_q} \dd_H x^{\mu_1}\wedge \cdots
\wedge \dd_H x^{\mu_p}\wedge \dd_V u^{a_1}_{I_1} \wedge\cdots \wedge \dd_V u^{a_q}_{I_q} \, .\nn 
\end{align}

\vspace{-2mm} 
\noindent
By the explicit local coordinate description, it is clear that any Lagrangian density map $L:J^\infty_M F \rightarrow \wedge^d T^*M$ of 
Def. \ref{LocalLagrangianDensity} is equivalently
a horizontal $(d,0)$-form $L\in \Omega^{d,0}(J^\infty_M F).$ More generally, smooth bundle maps
valued in $\wedge^p T^*M$ (over M) are equivalently horizontal $(p,0)$-forms on $J^\infty_M F$,
\vspace{-2mm}
\begin{align}\label{BundleMapsAsHorizontalForms}
P:J^\infty_M F \longrightarrow \wedge^p T^*M \hspace{0.5cm} \iff  \hspace{0.5cm} P \, \in \,  \Omega^{p,0}(J^\infty_M F)\, .
\end{align}

\begin{definition}[\bf  Horizontal and vertical differentials]
\label{TwoDifferentialsInfinityJet} 
The horizontal and vertical differentials of degree $(p,q)$ 
\begin{align}
\dd_H^{p,q} \;:\; \Omega^{p,q}(J^\infty_M F)\longrightarrow  \Omega^{p+1,q}(J^\infty_M F), \hspace{2cm} 
\dd_V^{p,q}\; : \;
\Omega^{p,q}(J^\infty_M F)&\longrightarrow \Omega^{p,q+1}(J^\infty_M F) 
\end{align}
are defined by 
$$
\dd_H^{p,q}:= \pr_{p+1,q}\circ \dd|_{\Omega^{p,q}}, \hspace{2cm} \dd_V^{p,q}:= \pr_{p,q+1}\circ \dd|_{\Omega^{p,q}}\, ,
$$
where $\pr_{p,q}:\Omega^{p+q}(J^\infty_M F) \rightarrow \Omega^{p,q}(J^\infty_M F)$ denotes the subspace projection.
\end{definition}
We stress that the differential property $\dd_H^{2}=0, \, \dd_V^2=0$, is non-trivial, and follows from the fact that the total differential 
decomposes as 
\vspace{-3mm} 
\begin{align}\label{TotalDifferentialDecomposesOnpqforms}
\dd|_{\Omega^{p,q}}= \dd_H^{p,q}+ \dd_V^{p,q} \, ,
\end{align}

\vspace{-1mm} 
\noindent which in turn hinges upon the involutive properties of the vertical and horizontal sub-bundles \eqref{CartanDistributionInvolutive}. 
Indeed, while the decomposition is true by definition on 0-forms \eqref{TotalDifferentialDecomposesOn0forms}, on 1-forms one has 
\vspace{-1mm} 
$$
\dd|_{\Omega^{1,0}}= \dd_H^{1,0}+ \dd_V^{1,0} + \pr_{0,2}\circ \dd|_{\Omega^{1,0}} \, .
$$

\vspace{-1mm} 
\noindent Thus it must be the case that $\dd \om_V (X_H^1,X_H^2)=0$ for all $\om_V\in \Omega^{1}_V(J^\infty_M F)$ and any horizontal 
vector fields $X^1_H, X^2_H$. Indeed,
$$
\dd \om_V (X^1_H, X^2_H) = X^1_H\big(\om_V(X_H^2)\big) - X^2_H\big(\om_V(X_H^1)\big) - \om_V ([X^1_H,X^2_H]) =0  
$$

\vspace{-1mm} 
\noindent since the first two terms vanish by verticality of $\om_V$, and the third similarly since $[X^1_H, X^2_H]$ is also horizontal \eqref{CartanDistributionInvolutive}. 
The same argument shows that $\dd \om|_{\Omega^{0,1}}= \dd_H^{0,1}+ \dd_V^{0,1}$, since the vertical vector fields are also involutive. 
Since the bi-graded algebra $\Omega^{\bullet,\bullet}(J^\infty_M F)$ is generated by $C^{\infty}(J^\infty_M F), \, \Omega^1_{V}(J^\infty_M F)$ and 
$\Omega^1_{H}(J^\infty_M F)$ it follows that the decomposition \eqref{TotalDifferentialDecomposesOnpqforms} holds. 

\medskip 
As is customary, we omit reference to the $(p,q)$ degrees on the differentials, and write simply 
\vspace{-2mm} 
$$
\dd = \dd_H + \dd_V\, ,
$$

\vspace{-2mm} 
\noindent whereby the differential property $0=\dd^{2}=\dd_H^2 + \dd_V^2 + (\dd_H \circ \dd_V +\dd_V\circ \dd_H) $ implies
\vspace{-1mm} 
\begin{align}\label{BicomplexRelations}
\dd_{H}^2=0, \hspace{1.5cm} \dd_V^2=0, \hspace{1.5cm} \dd_H \circ \dd_V = - \dd_V \circ \dd_H \, , 
\end{align}

\vspace{-1mm} 
\noindent since each of the terms maps into distinct subspaces of the bi-graded algebra $\Omega^{\bullet,\bullet}(J^\infty_M F)$. 
The anti-commutation relations and \eqref{HorizontalVerticalDifferentialonCoordinates} determine the action on the local basis of horizontal and vertical 1-forms 
\vspace{-1mm} 
\begin{align*}
\dd_H ( \dd x^\mu)= 0, \hspace{1cm} \dd_V (\dd x^\mu)= 0, \hspace{1cm} \dd_H( \dd_V u^a_I) = 
- \dd_V u^{a}_{I+\mu} \wedge \dd x^\mu ,  \hspace{1cm} \dd_V ( \dd_V  u^a_I) = 0 \, ,
\end{align*}

\vspace{-1mm} 
\noindent and hence on any $(p,q)$-form \eqref{pqformInfinityJetLocalCoordinates}, by derivation property of the differentials and their action of 
functions \eqref{TotalDifferentialDecomposesOn0forms}. 

\begin{definition}[\bf  Bundle variational bi-complex]
The \textit{variational bi-complex} of a fiber bundle $F\rightarrow M$ is the bi-complex
\vspace{-1mm} 
\begin{align}
\big(\Omega^{\bullet,\bullet}(J^\infty_M F), \, \dd_H, \, \dd_V \big) \, ,
\end{align}

\vspace{-2mm} 
\noindent depicted graphically as
\vspace{-3mm} 
\[ 
\xymatrix@R=1.4em@C=4em{
\vdots & \vdots &\vdots  &  & \vdots  \\
\Omega^{0,2}(J^\infty_M F)  \ar[u]^-{\dd_V} \ar[r]^-{\dd_H} & \Omega^{1,2}(J^\infty_M F)  \ar[u]^-{\dd_V} \ar[r]^-{\dd_H} & 
\Omega^{2,2}(J^\infty_M F)   \ar[u]^-{\dd_V} \ar[r]^-{\dd_H}& \cdots\ar[r]^-{\dd_H} & \Omega^{d, 2} (J^\infty_M F)   \ar[u]^-{\dd_V}\\
\Omega^{0,1}(J^\infty_M F)\ar[r]^-{\dd_H} \ar[u]^-{\dd_V}& \Omega^{1,1}(J^\infty_M F) 
\ar[u]^-{\dd_V} \ar[r]^-{\dd_H} & \Omega^{2,1}(J^\infty_M F)  \ar[u]^-{\dd_V} \ar[r]^-{\dd_H}  & 
\cdots \ar[r]^-{\dd_H} & \Omega^{d,1} (J^\infty_M F) \ar[u]^-{\dd_V}  \\ 
\Omega^{0}(J^\infty_M F) \ar[r]^{\dd_H} \ar[u]^-{\dd_V} &\Omega^{1,0}(J^\infty_M F)  \ar[u]^-{\dd_V} \ar[r]^-{\dd_H} &
\Omega^{2,0}(J^\infty_M F)  \ar[u]^-{\dd_V} \ar[r]^-{\dd_H}& \cdots \ar[r]^-{\dd_H}& \Omega^{d,0}(J^\infty_M F) \ar[u]^-{\dd_V} \, . 
}   
\]
\end{definition}

The vertical and horizontal cohomology of the variational bicomplex above is described in detail in \cite{Takens79}, and in \cite{Anderson89}
for the subcomplex of globally finite order forms, with both cohomologies agreeing on abstract grounds \cite{GMS00}. 
In particular, they prove the following important and useful result.

\begin{proposition}[\bf Takens' acyclicity theorem]\label{AcyclicityTheorem}
The horizontal rows $\big(\Omega^{\bullet,q\geq1}(J^\infty_M F)\, , \dd_H\big)$ are exact, apart from degrees $(d, q)$. Similarly the vertical columns $\big(\Omega^{p,\bullet}\, , \dd_V\big)$, apart from degrees $(p,q\leq f)$, where $f$ is the dimension of the fiber, $ \mathrm{dim}(F) = f + d $.
\end{proposition}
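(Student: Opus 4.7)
My plan has three parts: (i) reduce to local coordinate charts using the finite-order representation; (ii) construct local homotopy operators for $\dd_V$ (straightforward Poincar\'e-type) and for $\dd_H$ (the Anderson--Takens horizontal operator); (iii) globalize using paracompactness of $J^\infty_M F$. By the locally-finite-order representation \eqref{mformInfinityJetLocalCoordinates}, any $(p,q)$-form is, in a neighborhood of each point, the pullback of a form on some $J^{k_s}_M F$. Hence it suffices to work on trivializing charts of the form $\pi_{k}^{-1}(V) \cong U \times \FR^{N_k}$, where $U \subset \FR^d$ is an open ball of the base and $\FR^{N_k}$ parametrizes the fiber jet coordinates $\{u^a_I\}_{|I|\leq k}$, and to produce a compatible family as $k \to \infty$.

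For the vertical columns, in such a local chart $\dd_V$ is, by \eqref{HorizontalVerticalDifferentialonCoordinates}, the exterior derivative in the fiber variables $u^a_I$ with the base coordinates $x^\mu$ acting as parameters, and the contact basis $\theta^a_I$ reduces to $\dd u^a_I$ when $x^\mu$ is held fixed. A standard Poincar\'e homotopy contracting the Euclidean fiber chart to a point gives a local chain homotopy $h_V$ with $h_V \dd_V + \dd_V h_V = \id$ for $q \geq 1$, compatibly across the tower in $k$. To globalize over $F \to M$, I would run a \v{C}ech/Mayer--Vietoris spectral-sequence argument: the first page collapses the vertical direction on each chart, and the $E_2$-page identifies the vertical cohomology of $\Omega^{p,\bullet}(J^\infty_M F)$ with the de Rham cohomology of the finite-dimensional fiber bundle $F \to M$ of Prop.~\ref{TotaldeRhamCohomologyInfinityJet}. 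Since $H^q_{\mathrm{dR}}(F)$ can be nonzero only for $q \leq f = \dim F$, this accounts exactly for the stated obstruction in bi-degrees $(p, q \leq f)$.

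For the horizontal rows with $q \geq 1$, I would implement the horizontal homotopy of Anderson--Takens. The key observation is that $\dd_H = \dd x^\mu \wedge D_\mu$ acts on the contact basis by $\dd_H \theta^a_I = -\dd x^\mu \wedge \theta^a_{I+\mu}$, so $\dd_H$ shifts the multi-index $I$ along base directions while $D_\mu$ shifts jet order. The operator $h_H \colon \Omega^{p,q} \to \Omega^{p-1,q}$ is then built as a finite alternating sum over multi-indices $I$ of terms that invert this shift: contraction along the contact factor $\partial/\partial \theta^a_{I+\mu}$, composed with a Poincar\'e integration along the base and a combinatorial factor that unwinds the repeated $D_\mu$'s by an integration-by-parts bookkeeping. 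A direct (if tedious) multi-index computation gives $h_H \dd_H + \dd_H h_H = \id$ on $\Omega^{p,q}$ for $0 \leq p \leq d-1$ and $q \geq 1$; crucially, on any locally-finite-order form the sum terminates, so $h_H$ preserves the class of admissible forms. The obstruction at $p = d$ is precisely the quotient $\Omega^{d,q}/\dd_H \Omega^{d-1,q}$, whose cohomology in degree $q=1$ will carry the Euler--Lagrange source forms.

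The main obstacle will be the horizontal step: verifying the combinatorial identity for $h_H$ in multi-index notation, and confirming that the construction is compatible across chart overlaps on $J^\infty_M F$. The former is essentially mechanical but must be done carefully. For the latter, the local operators $h_H$ depend on the choice of chart, but differ on overlaps by $\dd_H$-exact terms; so a partition of unity subordinate to a good cover of the paracompact Fr\'echet manifold $J^\infty_M F$ (cf.\ Rem.~\ref{1formInfinityJetAsFrechetMap}) assembles them, at the cohomological level, into a global chain homotopy, yielding exactness in all horizontal degrees $0 \leq p \leq d-1$ for $q \geq 1$.
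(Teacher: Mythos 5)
The paper does not actually prove this proposition — it quotes it from Takens and Anderson, as signalled in the sentence preceding its statement — so your sketch is competing with the cited literature rather than with an in-text argument. Your overall route (local homotopy operators for $\dd_V$ and $\dd_H$, then globalization over the paracompact $J^\infty_M F$) is indeed the route of those references, but two of your key steps do not hold as written. For the vertical columns, your bound rests on the claim that $H^q_{\mathrm{dR}}(F)$ vanishes for $q > f = \dim F$. The proposition itself fixes $\dim F = f + d$ with $f$ only the \emph{fibre} dimension, so $H^q_{\mathrm{dR}}(F)$ can perfectly well be nonzero for $f < q \leq f+d$. What controls the vertical cohomology is the de Rham cohomology of the $f$-dimensional fibres $F_x$ of $F\to M$: the fibres of $J^\infty_M F\to M$ are towers of affine bundles over $F_x$ and retract onto it, and $\dd_V$ is the fibrewise differential; that is where the bound $q\le f$ comes from. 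Prop.~\ref{TotaldeRhamCohomologyInfinityJet}, which you invoke, concerns the \emph{total} de Rham cohomology and is not the relevant input.

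For the horizontal rows you never pin down $h_H$, and the one ingredient you do name — ``a Poincar\'e integration along the base'' — is the wrong mechanism. The standard homotopy operator (Tulczyjew, Anderson, Takens) for $q\geq 1$ is purely algebraic: a locally finite sum over multi-indices of total derivatives $D_I$ applied to contractions built from the higher interior Euler operators, with no integration over $M$ at all. That algebraic character is precisely why the interior rows are \emph{globally} exact with no contractibility hypothesis on the base. An operator that genuinely integrates along a base chart can at best give local exactness, after which your patching claim (that the chartwise $h_H$ differ on overlaps by $\dd_H$-exact terms, and that a partition of unity then yields a global homotopy) is asserted rather than proved; the clean globalization in the local-exactness approach is instead that the $\Omega^{p,q}$ are fine sheaves on the paracompact $J^\infty_M F$, so a locally exact complex of them has exact global sections away from the top horizontal degree. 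Until you either write down the algebraic operator and verify $h_H\dd_H+\dd_H h_H=\id$, or replace the patching step by the fine-sheaf argument, the horizontal half of the proposition is not established.
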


This means that any potentially non-vanishing cohomology is concentrated in a finite portion of the diagram. For the purposes of classical field theory, most of the useful information is in the bottom row and right-most column of the diagram. Indeed, it is convenient to extend the bi-complex to the right by the column 
of \textit{functional forms}, which in turn will allow for combining the bottom row and right-most column in a single cochain complex. 

\begin{remark}[\bf Globally vs. locally finite order]
The definitions and results that follow apply for both cases of locally finite order $\Omega^{\bullet}(J^\infty_M F)$ and globally finite order 
$\Omega^\bullet_{\mathrm{glb}}(J^\infty_M F)$ on the infinite jet bundle, as in \cite{Takens79}\cite{Anderson89}\cite{GMS00}. 
Therefore, we will suppress mentioning the distinction between the cases.
\end{remark}

\begin{definition}[\bf  Source forms]
\label{SourceForms}
For $q\geq 1$, the \textit{source forms} of degree $(d,q)$ are defined as the subspace 
\begin{align}
\Omega^{d,q}_{s}(J^\infty F ):=\Omega^{d,q-1} (J^\infty F)\wedge \Omega^1_{V,0}(J^\infty F) \, , 
\end{align}

\vspace{-2mm} 
\noindent
where
\vspace{-1mm}
$$
\Omega^1_{V,0}(J^\infty F):= \pr_{0,1}\big((\pi^\infty_0)^* \Omega^1(F)\big) \subset \Omega^1_V(J^\infty F) \, 
$$
\vspace{-2mm}
\noindent
is the space of vertical 1-forms that arise by pulling back 1-forms on $F$.

\end{definition}

\smallskip 
In particular, a source form $P \in\Omega^{d,q}_{s}(J^\infty F ) $ always has a `leg along $\dd_V u^a$', i.e., 
$
P = P_a\wedge \dd_V u^a
$,
for some $\{P_a\}\subset \Omega^{d,q-1} (J^\infty F)$.

\begin{lemma}[{\bf Differential operator of source form}]
\label{SourceFormsDefineDifferentialOperators}
$\,$

\noindent {\bf (i)} Any $(d,1)$-source form $ P \in \Omega_s^{d,1}(J^\infty F)$ naturally defines a smooth bundle map 
$P:J^\infty_M F \rightarrow \wedge^d T^*M \otimes V^*F $ over $F$, and so in turn over $M$, 
\footnote{We note that $V^*F\rightarrow F \rightarrow M$ is a \textit{vector} bundle over $F$, but in general only a \textit{fiber} over $M$. 
Hence, the tensor product of bundles $ \wedge^d T^*M \otimes V^*F $ is a shorthand for the tensor product of vector bundles over $F$, i.e., 
$\pi_F^*(\wedge^d T^*M)\otimes_F V^*F \rightarrow F \rightarrow M$.}  and hence a differential operator 
\vspace{-1mm} 
$$
\CP \,:\, \Gamma_M(F) \longrightarrow \Gamma_M\big(\wedge^d T^*M\otimes V^*F)\, ,
$$

\vspace{0mm} 
\noindent
where $V^*F\rightarrow F\rightarrow M$ is the dual vector bundle to the vertical tangent bundle $VF\hookrightarrow F\rightarrow M$.

\noindent {\bf (ii)} Furthermore, by Lem. \ref{DiffOpsAsSmoothMaps}, it is (uniquely) extended to a map of smooth sets
\vspace{-1mm} 
$$
\CP \,:\, \mathbold{\Gamma}_M (F) \longrightarrow \mathbold{\Gamma}_M(\wedge^d T^*M\otimes V^*F)\, .
$$

\begin{proof}
Let $P$ be 
a $(d,1)$-source form, locally of the form $P=P_a \wedge \dd_V u^a$. Since each $P_a\in \Omega^{d,0}(J^\infty_M F)$ is horizontal top-form, we have
$$
P=P_a \wedge \dd_V u^a= P_a \wedge (\dd u^a - u^a_\mu \cdot \dd x^\mu) = P_a \wedge \dd u^a + 0\, .
$$
The local coordinate $1$-forms $\{\dd u^a= \dd_{J^\infty_M F} u^a\}$ are the pullback of the differential of fiber coordinates on $F$, 
\vspace{-2mm} 
$$
\dd u^a = \dd (\pi^*_0 u^a)= \pi^*_0 (\dd_F u^a)\, ,
$$

\vspace{-1mm} 
\noindent which locally span the fibers of $V^*F$. Thus $P$ is equivalently a bundle map over $F$ (and so over M),
$$
P:J^\infty(F) \longrightarrow \wedge^d T^*M\otimes V^*F \, . 
$$

\vspace{-2mm} 
\noindent By  Lem. \ref{DifferentialofSmoothMap}, the corresponding differential operator is defined by precomposing 
with the jet prolongation of any given section
\vspace{-3mm} 
\begin{align*}
\CP: \Gamma_{M}(F) &\longrightarrow \Gamma_M (\wedge^d T^*M \otimes V^* F) \\
\phi &\longrightarrow P\circ j^\infty \phi \, .
\end{align*}

\vspace{-2mm} 
\noindent and similarly for its smooth extension on $\FR^k$-plots. In local coordinates,
$\phi^b(x^\mu) \mapsto P_a(x^\mu, \phi^b, \partial_\mu \phi^b,\cdots) \cdot \dd_F u^a$. 
\end{proof}
\end{lemma}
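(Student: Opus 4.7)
My plan is to first give a coordinate-free unpacking of the source-form condition, then show the claimed factorization through $V^*F$, and finally invoke Lem. \ref{DiffOpsAsSmoothMaps} for the smooth-set extension.

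First, I would examine the defining subspace $\Omega^1_{V,0}(J^\infty F) := \pr_{0,1}\big((\pi^\infty_0)^*\Omega^1(F)\big)$. Any 1-form on $F$ is locally a $C^\infty(F)$-linear combination of $\dd x^\mu$ and $\dd_F u^a$; pulling back via $\pi^\infty_0$ gives $\pi^{\infty\,*}_0(\dd_F u^a) = \dd u^a$ on $J^\infty_M F$, and then projecting vertically using the canonical splitting of Cor. \ref{CanonicalSmoothSplitting} yields exactly $\pr_{0,1}(\dd u^a) = \dd u^a - u^a_\mu \dd x^\mu = \dd_V u^a$, while $\pr_{0,1}(\dd x^\mu) = 0$. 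Hence, in any trivializing chart of $F$, the sub-module $\Omega^1_{V,0}(J^\infty F)$ is spanned over $C^\infty(J^\infty_M F)$ precisely by the \emph{zero-order} contact forms $\{\dd_V u^a\}_a$, as opposed to the full vertical $\Omega^1_V(J^\infty F)$, which is spanned by all $\{\dd_V u^a_I\}_{|I|\geq 0}$. Consequently, any $(d,1)$-source form $P$ admits local expressions $P = P_a \wedge \dd_V u^a$ with $P_a \in \Omega^{d,0}(J^\infty_M F)$ — a feature that would \emph{fail} for a generic $(d,1)$-form, which would also involve $\dd_V u^a_I$ for $|I|\geq 1$.

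Next, I would promote this local description to a global bundle map. Since $\dd_F u^a$ locally frames $V^*F\to F$ (as $V F$ is spanned by $\partial/\partial u^a$), the assignment $P_a \wedge \dd_V u^a \longmapsto P_a \otimes (\pi^\infty_0)^* \dd_F u^a$ identifies $P$, fiberwise over each $s\in J^\infty_M F$, with an element of $\wedge^d T^*_{\pi^\infty_M(s)} M \otimes V^*_{\pi^\infty_0(s)} F$. The main obstacle is checking chart-independence: under a fiber-preserving change of trivialization on $F$, $\tilde u^a = \tilde u^a(x,u)$, the coordinates $u^a_\mu$ on $J^1_M F$ transform in a chart-dependent way, yet one verifies that the combination $\dd_V \tilde u^a = (\partial \tilde u^a/\partial u^b)\, \dd_V u^b$ transforms as a tensor in $V^*F$. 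Thus, the global sub-module $\Omega^{d,1}_s(J^\infty F)$ is intrinsically isomorphic to sections (over $J^\infty_M F$) of $(\pi^\infty_M)^*\wedge^d T^*M \otimes (\pi^\infty_0)^*V^*F$, yielding the claimed smooth bundle map
\vspace{-2mm}
$$
P : J^\infty_M F \longrightarrow \wedge^d T^*M \otimes V^* F
$$
covering the identity on $M$ (via $\pi^\infty_M$).

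Finally, I would apply Lem. \ref{DiffOpsAsSmoothMaps} with $G = \wedge^d T^*M \otimes V^*F$ (viewed as a fiber bundle over $M$) and $f = \id_M$. The lemma directly packages the composite $\CP := P\circ j^\infty$ as a differential operator $\Gamma_M(F) \to \Gamma_M(\wedge^d T^*M \otimes V^*F)$, and canonically extends it to a smooth map $\CP : \mathbold{\Gamma}_M(F) \to \mathbold{\Gamma}_M(\wedge^d T^*M\otimes V^*F)$ of smooth sets. In local coordinates, this recovers the expected formula $\phi \mapsto P_a\big(x,\phi,\partial\phi,\ldots\big)\cdot \dd_F u^a$, as claimed. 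The hardest step is the chart-independence verification above; the remainder is a direct application of the framework built in the preceding sections.
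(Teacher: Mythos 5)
Your proposal is correct and follows essentially the same route as the paper's proof: expand $P = P_a \wedge \dd_V u^a$, observe that the horizontal top-form factor kills the $u^a_\mu\,\dd x^\mu$ contribution so that $P = P_a \wedge \dd u^a = P_a \wedge \pi_0^*(\dd_F u^a)$, identify the result as a bundle map into $\wedge^d T^*M \otimes V^*F$ over $F$, and conclude via Lem.~\ref{DiffOpsAsSmoothMaps}. The only difference is that you explicitly verify the chart-independence of the identification (via the tensorial transformation of $\dd_V \tilde u^a = (\partial \tilde u^a/\partial u^b)\,\dd_V u^b$) and spell out why source forms involve only the zero-order contact forms, both of which the paper leaves implicit — a welcome but inessential refinement.
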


Note that the result \textit{does not extend} to higher order source forms as stated. There is a special subspace of source forms, 
\textit{the functional forms} \cite{Anderson89}, defined as the image of a natural projection map, which parametrizes a class of 
currents on $\CF$ via contraction with (vertical) vector fields on $J^\infty_M F$ (see Rem. \ref{SmoothMapsInducedbyFunctionalForms} below
and \cite{Anderson89} for more details).

\begin{definition}[\bf  Interior Euler operator]
\label{InteriorEulerOperator} 
$\,$

\noindent {\bf (i)} For any $q\geq 1$, the \textit{interior Euler operator} 
\vspace{-2mm} 
\begin{align}\mathcal{I} \;:\; \Omega^{d,q}(J^\infty F) \longrightarrow \Omega^{d,q}(J^\infty F)
\end{align}

\vspace{-2mm} 
\noindent is defined by the formula 
\vspace{-2mm} 
$$
\om \; \longmapsto \; \frac{1}{q}\dd_v u^a \wedge \bigg( \sum_{|I|=0}^\infty (-1)^{|I|} \mathbb{L}_{D_{I}}
\Big(\iota_{\frac{\partial}{\partial u^a_I}} \om \Big) \!\!\bigg) 
$$

\vspace{-2mm} 
\noindent where $\mathbb{L}_{D_I}=(\mathbb{L}_{D_1})^{I_1}\circ (\mathbb{L}_{D_2})^{I_2}\circ \cdots \circ (\mathbb{L}_{D_d})^{I_d}$ denotes the 
composition of the Lie derivatives \eqref{mformInfinityJetLieDerivative} with respect to the lifts of the 
coordinate vector fields \eqref{HorizontalVectorFieldBasisNotation}, for every multi-index $I=(I_1,\cdots, I_d)$. 

\noindent {\bf (ii)} The set of \textit{functional forms} is defined as
\vspace{-2mm} 
\begin{align*}
\Omega^{d,q}_{\mathrm{f}}(J^\infty F) := \mathrm{Im}(\CI) \subset \Omega^{d,q}_s(J^\infty) \, .
\end{align*}
\end{definition}

The interior Euler operator appears in \cite{Anderson89} in a more coordinate-dependent fashion, 
while the above form -- as far as we know -- appears only in 
\cite{Blohmann23b}. 

\begin{proposition}[\bf Properties of the interior operator]\label{InteriorEulerProperties}\cite[Thm 2.12]{Anderson89}

\noindent {\bf (i)} $\CI$ is a projection operator 
\vspace{-1mm} 
$$
\CI \circ \CI = \CI\, .
$$

 \noindent {\bf (ii)}  The composition $\CI\circ \dd_H: \Omega^{d-1,q}(J^\infty F)\longrightarrow \Omega^{d,q}(J^\infty F)$ 
vanishes and furthermore
\vspace{-2mm} 
$$
\mathrm{ker}(\CI)=\mathrm{Im}(\dd_H)\, .
$$

\vspace{-2mm} 
\noindent {\bf (iii)}  The map $\CI \circ \dd_V: \Omega^{d,q}\longrightarrow \Omega^{d,q+1}$ is nilpotent
\vspace{-2mm} 
$$
(\CI\circ \dd_V) \circ (\CI \circ \dd_V) =0 \, .
$$

\vspace{-2mm} 
\noindent {\bf (iv)} 
 The $(d,1)$-functional forms coincide with $(d,1)$-source forms
\vspace{-2mm} 
$$
\Omega^{d,1}_{\mathrm{f}}(J^\infty F):=\mathrm{Im}(\CI)\cong \Omega^{d,1}_S(J^\infty F)\, .
$$

\end{proposition}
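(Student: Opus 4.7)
The approach will be to work in local coordinates throughout, using the explicit formula for $\CI$ together with the commutator identity
$$\big[\mathbb{L}_{D_\mu},\, \iota_{\partial/\partial u^a_I}\big] \;=\; -\,\iota_{\partial/\partial u^a_{I-\mu}}$$
(with the convention that the right-hand side vanishes when $\mu \notin I$). This is the algebraic incarnation of integration by parts on the jet bundle, and it is the engine behind all four statements. I would first record this commutator lemma cleanly, noting that the local sums involved are finite on any given form because of the locally finite order property from Prop. \ref{FunctionsOnInftyJetBundle}.

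The natural order to attack the four parts is (ii), (iv), (i), (iii). For (ii), $\CI \circ \dd_H = 0$: given $\om \in \Omega^{d-1,q}$, write $\dd_H \om = \dd x^\mu \wedge \mathbb{L}_{D_\mu} \om$ modulo a correction that vanishes on horizontal top-forms times anything, and then apply the commutator identity; the sum $\sum_{|I|} (-1)^{|I|} \mathbb{L}_{D_I} \iota_{\partial/\partial u^a_I}$ telescopes against the extra $D_\mu$ to zero. For the equality $\ker(\CI) = \mathrm{Im}(\dd_H)$, the non-trivial inclusion requires producing, for every $\om \in \Omega^{d,q}$, a decomposition $\om = \CI(\om) + \dd_H \eta$; this is established by iteratively applying the commutator identity to peel off terms of the form $\dd_V u^a_I \wedge (\ldots)$ with $|I|>0$, trading them against $\dd_H$ of lower-order expressions plus a contribution with $I=0$ (a source form). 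Acyclicity of the horizontal rows away from the top (Prop. \ref{AcyclicityTheorem}) guarantees the procedure terminates correctly. For (iv), note that $\CI(\om) = \dd_V u^a \wedge P_a$ with $P_a \in \Omega^{d,q-1}$ is manifestly a source form; conversely, for a source form $P = P_a \wedge \dd_V u^a$ at degree $q=1$, only the $|I|=0$ term in the defining sum survives (since $\iota_{\partial/\partial u^a_I} \dd_V u^b = \delta^b_a \delta_{|I|,0}$), giving $\CI(P) = \dd_V u^a \wedge P_a = P$.

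For (i), idempotence: by part (iv) applied degree-wise, $\CI(\om)$ is always a source form whose top stratum (in the coefficient expansion) is $\dd_V u^a \wedge (\ldots)$ with no $\dd_V u^a_I$ for $|I|>0$; applying $\CI$ again reduces, by the Kronecker-delta argument above, to the identity on that summand. Writing this precisely amounts to showing $\CI$ acts as the identity on the subspace spanned by wedges involving only $\dd_V u^a$ (with coefficients that may still depend on all jet coordinates), and $\CI$ of the form is such an element — the only delicate point is that for $q \geq 2$ the reduced form is $\dd_V u^{a_1} \wedge \cdots \wedge \dd_V u^{a_q} \wedge (\text{horizontal})$, and verifying idempotence on this space requires checking the combinatorial factor $1/q$ carefully. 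Finally, (iii) follows formally: for $\om \in \Omega^{d,q}$, part (ii) gives $\CI(\dd_V \om) = \dd_V\om - \dd_H\eta$ for some $\eta$; applying $\dd_V$ and using $\dd_V^2 = 0$ together with $\dd_V \dd_H = -\dd_H \dd_V$ yields $\dd_V \CI(\dd_V \om) = \dd_H \dd_V \eta$; then $\CI(\dd_H \dd_V \eta) = 0$ by (ii).

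The main obstacle will be the $\ker(\CI) = \mathrm{Im}(\dd_H)$ half of (ii), specifically producing the decomposition $\om = \CI(\om) + \dd_H \eta$ with full control over the iterative integration-by-parts procedure. Everything else is either bookkeeping with the commutator identity or formal manipulation. In particular, making this step rigorous in the locally-finite-order setting (as opposed to Anderson's globally finite-order setting) requires checking that the $\eta$ produced is itself locally of finite order — which follows from the fact that each integration-by-parts step strictly decreases the maximal $|I|$ appearing in the coefficients, so the procedure terminates on any local chart, and the local pieces glue by uniqueness of the $\CI(\om)$ component.
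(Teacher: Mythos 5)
First, for context: the paper does not actually prove this proposition — it cites Anderson's Theorem 2.12 and only records the logical dependencies, namely that (i) and (ii) are the non-trivial items, that (iii) follows from (i) and (ii), and that (iv) is immediate from the defining formula of $\CI$. Your proposal is therefore a reconstruction of Anderson's argument, and for parts (ii), (iii) and (iv) it is essentially correct and matches the intended route: the commutator identity $[\mathbb{L}_{D_\mu},\iota_{\partial/\partial u^a_I}]=\iota_{[D_\mu,\partial/\partial u^a_I]}=-\iota_{\partial/\partial u^a_{I-\mu}}$ is indeed the engine, $\dd_H\om=\dd x^\mu\wedge\mathbb{L}_{D_\mu}\om$ holds exactly (no correction term is needed), the telescoping argument gives $\CI\circ\dd_H=0$, the iterative integration-by-parts gives the decomposition $\om=\CI(\om)+\dd_H\eta$, and your derivation of (iii) is precisely the one the paper has in mind. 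Your remark about termination and gluing in the locally-finite-order setting is also the right thing to worry about.

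The genuine gap is in your argument for (i). You assert that $\CI(\om)$ lies in the subspace spanned by wedges involving \emph{only} the order-zero contact forms $\dd_V u^a$, and reduce idempotence to checking that $\CI$ is the identity there. This premise is false for $q\geq 2$: by definition $\CI(\om)=\tfrac{1}{q}\,\dd_V u^a\wedge\eta_a$ with $\eta_a=\sum_I(-1)^{|I|}\mathbb{L}_{D_I}(\iota_{\partial/\partial u^a_I}\om)\in\Omega^{d,q-1}$, and $\eta_a$ generically still contains higher contact forms $\dd_V u^b_J$ with $|J|>0$ — a source form is only required to have \emph{one} leg along $\dd_V u^a$, not all of them. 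Consequently the "Kronecker-delta" reduction does not apply to the second application of $\CI$, whose contractions and Lie derivatives act on all of $\dd_V u^a\wedge\eta_a$. The repair is standard and you already have every ingredient: once your proof of (ii) has produced the decomposition $\om=\CI(\om)+\dd_H\eta$ constructively, apply $\CI$ to both sides and use $\CI\circ\dd_H=0$ to get $\CI(\om)=\CI(\CI(\om))$. In other words, (i) should be derived \emph{from} (ii), not proved by a separate coordinate computation; the direct argument you sketch only works in the $q=1$ case covered by (iv).
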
 
The first two items are the non-trivial part of the proposition, the third follows from the first two, while the fourth 
follows directly from the defining formula in Def. \ref{InteriorEulerOperator}. Note that the first two items imply that 
\vspace{-1.5mm} 
$$
\Omega^{d,q}(J^\infty F) \cong \Omega^{d,q}_f(J^\infty F)\oplus \dd_H \Omega^{d-1,q}(J^\infty F)\, ,
$$

\vspace{-1.5mm} 
\noindent and so in particular functional forms parametrize $(d,q)$-forms modulo horizontally exact forms
\vspace{-1.5mm} 
\begin{align*}
  \Omega^{d,q}_f(J^\infty F) \cong \Omega^{p,q}(J^\infty F)/\dd_H\Omega^{d-1,q}(J^\infty F)\, .
\end{align*}

\vspace{-1mm} 
The map defined on functional forms
\vspace{-2mm} 
\begin{align}
\delta_V:= \CI\circ \dd_V \; :\; \Omega^{d,q}_f(J^\infty F)\longrightarrow \Omega^{d,q+1}_f(J^\infty F) 
\end{align}

\vspace{-2mm} 
\noindent is called the (higher) Euler operator. Since it squares to zero, it follows that 
the variational complex may be \textit{augmented} to the right
\vspace{-4mm} 
$$
\xymatrix@R=2em@C=3em{
\vdots & \vdots &\vdots  &  & \vdots  & \vdots \\
\Omega^{0,2}(J^\infty_M F)  \ar[u]^-{\dd_V} \ar[r]^-{\dd_H} & \Omega^{1,2}(J^\infty_M F)  \ar[u]^-{\dd_V} \ar[r]^-{\dd_H} & 
\Omega^{2,2}(J^\infty_M F)   \ar[u]^-{\dd_V} \ar[r]^-{\dd_H}& \cdots\ar[r]^-{\dd_H} & \Omega^{d, 2} (J^\infty_M F) \ar[r]^-{\CI}  \ar[u]^-{\dd_V} 
& \Omega^{d,2}_f(J^\infty_M F) \ar[u]_-{\delta_V}\\
\Omega^{0,1}(J^\infty_M F)\ar[r]^-{\dd_H} \ar[u]^-{\dd_V}& \Omega^{1,1}(J^\infty_M F) \ar[u]^-{\dd_V} \ar[r]^-{\dd_H} 
& \Omega^{2,1}(J^\infty_M F)  \ar[u]^-{\dd_V} \ar[r]^-{\dd_H}  & \cdots \ar[r]^-{\dd_H} & \Omega^{d,1} (J^\infty_M F) \ar[u]^-{\dd_V}  \ar[r]^-{\CI} 
& \Omega^{d,1}_f(J^\infty_M F)\cong \Omega^{d,1}_s(J^\infty_M F) \ar[u]_-{\delta_V} \\ 
\Omega^{0}(J^\infty_M F) \ar[r]^{\dd_H} \ar[u]^-{\dd_V} &\Omega^{1,0}(J^\infty_M F)  \ar[u]^-{\dd_V} \ar[r]^-{\dd_H} 
& \Omega^{2,0}(J^\infty_M F)  \ar[u]^-{\dd_V} \ar[r]^-{\dd_H}& \cdots \ar[r]^-{\dd_H}& \Omega^{d,0}(J^\infty_M F) \ar[u]^-{\dd_V} \ar[ur]_-{\delta_V}& 
}   
$$
where, by the second property of the above proposition, all the rows except the bottom are still horizontally exact. 
The bottom row and right-most column connect into a single complex. 

\begin{definition}[\bf  Euler-Lagrange complex]
\label{EulerLagrangeComplex}
The \textit{Euler--Lagrange} complex $\Omega^\bullet_{\mathrm{EL}}(J^\infty_M F)$ of $J^\infty_M F$ is defined as
\begin{align}
\Omega^0(J^\infty_M F) \xrightarrow{\; \dd_H \; } \Omega^{1,0}(J^\infty_M F) \xrightarrow{\; \dd_H \;} \cdots 
\xrightarrow{\; \dd_H \;} \Omega^{d,0}(J^\infty_M F)\xrightarrow{\; \delta_V \;} \Omega^{d,1}_s(J^\infty_M F)
\xrightarrow{\; \delta_V \;}\Omega^{d,2}_f(J^\infty_M F)\xrightarrow{\; \delta_V \;} \cdots \, ,
\end{align}
where the notation implies $\Omega^{p}_{\mathrm{EL}}(J^\infty_M F):= \Omega^{p,0}(J^\infty_M F)$ for $0\leq p \leq d$ 
and $\Omega^{d+q}_{\mathrm{EL}}(J^\infty_M F):= \Omega^{d,q}_f(J^\infty_M F)$ for $1\leq q$.
\end{definition}
In particular, if $L\in \Omega^{d,0}(J^\infty_M F)$ is a Lagrangian density given in local coordinates by 
$L=\bar{L}\big(x^\mu,\{u^a_I\}_{0\leq |I|}\big) \cdot \dd x^1\cdots \dd x^d$, then the `integration by parts' 
algorithm (with differentials and derivatives taken on $J^\infty_M F$) gives
\vspace{-2mm} 
\begin{align}\label{LagrangianVerticalDifferentialDecomposition} 
\dd_V L  & = \sum \frac{\partial \bar{L}}{\partial u^a_I}  \cdot \dd_V u^a_I \wedge \dd x^1\cdots \dd x^n \nn \\ 
&= \sum (-1)^{|I|} D_I \bigg(\! \frac{\partial \bar{L}}{\partial u^a_I} \!\bigg) \cdot \dd_V u^a \wedge \dd x^1\cdots \dd x^n +\dd_H \theta_L \\
&=  \delta_V L + \dd_H \theta_L \, , \nn 
\end{align}

\vspace{-2mm} 
\noindent for some $\theta_L \in \Omega^{d-1,1}(J^\infty_M F)$, defined up to the addition of a horizontally closed $(d-1,1)$-form 
(and hence exact by Prop. \ref{AcyclicityTheorem}). This can be checked directly in coordinates using Def. \ref{InteriorEulerOperator} 
and explicit formulas for arbitrary order Lagrangians can be found, for instance, in 
\cite{Del}\cite{Blohmann23b}. For the simple (but generic) case where the Lagrangian is globally of order $k=1$, we have
\vspace{-2mm}
\begin{align}\label{1stOrderLagrangianDecomposition}
\theta_L = - \dd_V u^a \wedge  \mathbb{L}_{\frac{\partial}{\partial u^a_\mu}}\Big(  \iota_{\frac{\partial}{\partial x^\mu}} L\Big)  \, . 
\end{align}

\vspace{-2mm}
\noindent 
Generally, the interpretation is that $\theta$ absorbs the dependence on the vertical forms $\{\dd_V u^a_I\}_{|I|\geq 1}$. 
Abstractly, this is implied by the exactness of the second row in the augmented bicomplex. Equivalently, this reads 
$$
\delta_V L= \CI (\dd_V L) = \dd_V L - \dd_H \theta_L 
$$
and so the projection operator $\CI$ acts locally via partial integration. Indeed, this is exactly the purpose of 
the defining formula (Def. \ref{InteriorEulerOperator}). For reasons that will become apparent below, we denote the 
component of $\delta_V = \CI \circ \dd_V$ acting on Lagrangians by
\vspace{-2mm} 
$$
E:=\delta_V \;\; : \;\; \Omega^{d,0}(J^\infty_M F)\longrightarrow  \Omega^{d,1}_s(J^\infty_M F) \, ,
$$

\vspace{-2mm} 
\noindent and call source forms in its image \textit{Euler--Lagrange source forms}. The map $E$ itself is often called the
\textit{Euler--Lagrange differential} on $J^\infty_M F$. Note that when the Lagrangian density happens to be 
exact $L= \dd_H T$, then the induced Euler--Lagrange source form is trivial 
\vspace{-2mm} 
\begin{align}\label{ExactLagrangianTrivialELequation}EL = \delta_V L = \CI (\dd_V \dd_H T) = \CI(\dd_H \dd_V T) = 0 \, . 
\end{align}

\vspace{-2mm} 
\noindent For this reason, horizontally exact Lagrangian densities are known as \textit{trivial} Lagrangians. 

\newpage 
\subsection{On-shell space of fields and conserved currents}
\label{sec-onshell}

Using the explicit expression of the Euler--Lagrange form above, there is an immediate and natural description of the on-shell spaces of fields.
A useful intermediate step is to  define the smooth subspace of the infinite jet bundle where the Euler--Lagrange sources form vanishes.

\begin{definition}[\bf  Shell of Lagrangian]\label{ShellOfLagrangian}
The \textit{shell} of a Lagrangian $L\in \Omega^{d,0}(J^\infty_M F)$ is defined to be the smooth subspace of the jet bundle 
\vspace{-2mm} 
\begin{align*}
\mathrm{S}_L \longhookrightarrow y(J^\infty_M F) \, ,
\end{align*} 

\vspace{-2mm} 
\noindent on which the Euler--Lagrange source form vanishes. Equivalently, the shell is the pullback/intersection of smooth sets 
  \vspace{-2mm} 
 \[
\xymatrix@=1.6em  {S_L \ar[d] \ar[rr] &&   y(J^\infty_M F)\ar[d]^{E L} 
	\\ 
	y(J^\infty_M F)\ar[rr]^-{0^\infty_{F}}  && \wedge^d T^*M \otimes V^*F
	\, ,} 
\]

  \vspace{-2mm} 
\noindent where the Euler--Lagrange source form $EL$ is considered as a bundle map, via Lem. \ref{SourceFormsDefineDifferentialOperators}, 
and $0^\infty_F: J^\infty_M F \rightarrow \wedge^d T^*M \otimes V^*F$ is the (non-constant) fiberwise 0-bundle map over $F$. 
\footnote{This is induced by pulling back the zero section $0_F$ of the vector-bundle $\wedge^d T^*M \otimes V^*F \rightarrow F$
via $\pi_0 : J^\infty_M F\rightarrow F$.} 
\end{definition}
 Yet equivalently, the shell and the above diagram may be reinterpreted as the intersection of the zero section and the section
 induced by the $EL$-source form, of the induced pullback bundle $(\pi_0)^*\big(\wedge^d T^*M \otimes V^*F\big)$ over $J^\infty_M F$ itself. 
 Explicitly, the shell is the smooth space with points
$S_L(*)=\{s\in J^\infty_M F \, | \, EL(s)=0_{\pi_0(s)} \}$
and more generally $\FR^k$-plots 
\vspace{-1mm} 
\begin{align*}
\mathrm{S}_L(\FR^k)=\big\{s^k:\FR^k\rightarrow J^\infty_M F \; \big{|}\; EL\big(s^k(x)\big) =0_{\pi_0(s^k(x))} \quad \forall x\in \FR^k \big\}\, .
\end{align*}
According to Lem. \ref{SourceFormsDefineDifferentialOperators}, for any Lagrangian $L\in \Omega^{d,0}(J^\infty_M F )$, the resulting 
Euler--Lagrange source form $EL=\delta_V L\in \Omega^{d,1}_s  (J^\infty F)$ defines a differential operator and, in particular, a smooth map 
\vspace{-1mm} 
\begin{align}\label{EulerLagrangerSmoothSetMap}
\mathcal{EL} \;:\; \mathbold{\Gamma}_M (F) & \; \longrightarrow\;
\mathbold{\Gamma}_M(\wedge^d T^*M\otimes V^*F) \\[-2pt]
\phi^k &\;\longmapsto \; EL \circ j^\infty \phi^k\, . \nn  
\end{align}

\vspace{-2mm} 
\noindent By \eqref{HorizontalVectorFieldBasisAction}, in local coordinates this takes the usual form of 
the Euler--Lagrange operator acting on a field $\phi$, 
\vspace{-2mm} 
$$ 
\CE \CL(\phi)= EL\circ j^\infty \phi= \sum_{|I|=0}^{\infty} (-1)^{|I|} \frac{\partial}{\partial x^I}
\bigg(\frac{\partial \bar{L}}{\partial u^a_I} \circ j^\infty\phi \bigg) \cdot \dd_F u^a \wedge \dd x^1\cdots \dd x^n\, ,
$$

\vspace{-2mm} 
\noindent whose coefficients, by abuse of notation, are usually expressed as
\vspace{-2mm} 
\begin{align}\label{EulerLagrangeEquationsLocallyAbusingNotation}
\CE \CL_a (\phi) = EL_a \circ j^\infty \phi = \sum_{|I|=0}^{\infty} (-1)^{|I|} \frac{\partial}{\partial x^I} 
\bigg(\frac{\delta \bar{L}\big(x^\mu ,\{\partial_J \phi^b\}_{|J|\leq k} \big)}{\delta (\partial_I \phi^a )} \bigg)\, .
\end{align}

\begin{remark}[\bf Treating partial derivatives as independent]\label{TreatingPartialDerivativesAsIndependent}
For the above textbook form of the Euler--Lagrange equations (Eq. \eqref{EulerLagrangeEquationsLocallyAbusingNotation}) to make sense computationally, 
one implicitly treats the functions $\{\partial_I \phi^a\}_{|I|\leq k}$ on spacetime $M$ as independent variables -- which is factually not the case -- and 
acts formally via the corresponding partial differentiation. The rigorous justification for this treatment is given by the proper definition of the 
differential Euler--Lagrange operator (map \eqref{EulerLagrangerSmoothSetMap}) and its well-defined local coordinate formula,
whereby the coordinates $\{u^a_I\}_{|I|\leq k }$ are truly independent.
\end{remark}

\begin{definition}[\bf  On-shell smooth space of fields]
\label{OnshellSpaceOfFields}
$\,$

\noindent {\bf (i)} 
A field configuration $\phi$ is said to be \textit{on-shell} for a local Lagrangian $\CL$ if it satisfies the Euler--Lagrange equation
\vspace{-1mm} 
$$
\mathcal{EL}(\phi) = EL\circ j^\infty \phi = 0_\phi \in \Gamma_M(\wedge^d T^*M \otimes V^* F)\, ,
$$

\vspace{-1mm}
\noindent  where $0_\phi:= 0_F \circ \phi : M\rightarrow F \rightarrow \wedge^d T^*M \otimes V^*F $ is the canonical section covering
$\phi\in \Gamma_M(F)$, given by postcomposing with the zero-section of $\wedge^d T^*M \otimes V^*F\rightarrow F$.

\noindent {\bf (ii)}  Similarly, an $\FR^k$-plot of fields $\phi^k$ is on-shell if it satisfies $\mathcal{EL}(\phi^k)=EL\circ j^\infty\phi^k =0_{\phi^k} 
\in \mathbold{\Gamma}_M(\wedge^d T^*M \otimes V^*F)(\FR^k)$. 

\noindent {\bf (iii)}  The on-shell $\FR^k$-plots of fields define the 
\textit{smooth space of on-shell fields}, i.e., the smooth subspace of the smooth field space 
\vspace{-3mm} 
\begin{align*}
\CF_{\CE \CL} \longhookrightarrow \CF=\mathbold{\Gamma}_M(F)
\end{align*}

\vspace{-2mm} 
\noindent with 
\vspace{-2mm} 
\begin{align*}
 \CF_{\CE \CL}(\FR^k):=\big\{\phi^k \in \CF(\FR^k) \, \big{|} \, \mathcal{EL}(\phi^k)=EL\circ j^\infty\phi^k =0_{\phi^k} \big\} \, . 
\end{align*}
\end{definition}

\newpage 
Equivalently, an $\FR^k$-plot of fields $\phi^k\in \CF(\FR^k)$ is on-shell if and only if its prolongation 
$j^\infty \phi^k: y(\FR^k\times M) \rightarrow y(J^\infty_M F)$ 
factors through the shell of L
\vspace{-1mm} 
$$
j^\infty \phi^k: y(\FR^k \times M) \longrightarrow \mathrm{S}_L \longhookrightarrow y(J^\infty_M F)\, .
$$
In \cref{OnShellFieldsCriticalSetSubsection}, we will show how the on-shell space of fields is 
identified with the smooth critical subset of the 
corresponding action functional (or smooth Lagrangian), in an appropriate sense of the criticality condition. As we will make explicit therein
(Cor. \ref{CriticalSmoothSetPullback}), the above definition is in fact a universal construction within smooth sets -- i.e., pullback/intersection 
similar to that of the shell (Def. \ref{ShellOfLagrangian}) within the infinite jet bundle. 

\begin{example}[\bf O($n$)-model on-shell fields]\label{O(n)ModelOnshellFields} 
Recall the O($n$)-model Lagrangian from Ex. \ref{VectorValuedFieldTheoryLagrangian}.
We will consider the particular case of $c_4=0$ for the sake of brevity. 
The corresponding Euler--Lagrange operator is given by 
\vspace{-1mm}
$$
\CE\CL (\phi) = - \dd_M \star \dd_M \phi + \star \phi \quad \in \quad \,  \Gamma_M(\wedge^d T^*M \otimes W)  \, ,
$$

\vspace{-1mm}
\noindent
where the fibers of $V^*(W\times M) \cong W^*\times W \times M$ are implicitly identified with $W\times W$ via the Euclidean product on $W$. Equivalently as usually expressed, by application of a further Hodge dual isomorphism 
\vspace{-2mm}
$$
\star \CE \CL(\phi) = - \Delta \phi + \phi \quad \in \quad C^\infty (M,W)\, ,
$$

\vspace{-2mm}
\noindent
where $\Delta = \dd_M \star \dd_M \star$ is the Laplace--Beltrami operator. Thus, the smooth set of on-shell fields is 
comprised of (plots of) fields satisfying 
\vspace{-1mm}
$$
\Delta(\phi) = \phi \, .
$$

\vspace{-1mm}
\noindent 
The corresponding source form is easily read in coordinates. In terms of the decomposition at the level of the jet bundle we have 
$\delta L = EL + \dd_H \theta_L $ where
\vspace{-1mm} 
\begin{align}\label{O(n)ModelJetBundleVariationalDecomposition}
\theta_L=  - \dd_V u^a \wedge  \mathbb{L}_{\frac{\partial}{\partial u^a_\mu}}\Big(  \iota_{\frac{\partial}{\partial x^\mu}} L\Big) 
=  - \dd_V u^a \wedge g^{\mu \nu} u_{\nu \, a} \cdot \iota_{\frac{\partial}{\partial x^\mu}} \dd \mathrm{vol}_g 
=: - \langle \dd_V u \,, \star \dd_H u \rangle_g
 \quad \in \quad \Omega^{d-1,1}(J^\infty_M F) \, .
 \end{align}
\end{example}

 Although the shell $S_L\hookrightarrow J^\infty_M F$ (Def. \ref{ShellOfLagrangian}) serves as a good stepping stone 
towards the actual on-shell field space $\CF_{\CE \CL}\hookrightarrow \CF$ (Def. \ref{OnshellSpaceOfFields}), it is not quite 
its correct avatar within the infinite jet bundle. Indeed, a solution $\phi$ of a PDE such as the Euler--Lagrange equation 
$\CE \CL(\phi)=0$ of a field theory automatically satisfies an infinite list of implied differential equations. Intuitively, 
locally in coordinates and trivializations of the corresponding bundles, these are `generated' by applying arbitrary derivatives
on the original Euler--Lagrange differential condition \eqref{EulerLagrangeEquationsLocallyAbusingNotation}
\vspace{-2mm}
$$
\CE \CL_a (\phi)=0 \quad \implies \quad \frac{\partial}{\partial x^I} \CE \CL_a (\phi)=0  \, .
$$

\noindent These implied conditions, however, are not reflected in the shell $S_L\hookrightarrow J^\infty_M F$. In particular, 
for an Euler--Lagrange source form $EL= \pi_k^* EL_k : J^\infty_M F\rightarrow J^k_M F \rightarrow \wedge^d T^*M \otimes V^*F $ 
of global order $k$, an arbitrary point $s=j^\infty_p\phi \in S_L$ is such that $EL(s)=EL_k \circ \pi_k (s)= EL_k \big(j^k(\phi)\big) =0$, 
i.e., with conditions on the jet components up to order $k$. Crucially, there are no conditions on the higher jets, such as those induced
by the implied differential equations. Said otherwise, in local coordinates $\{x^\mu, \{u^a_I\}_{0\leq |I|} \}$ for $J^\infty_M F$, 
the vanishing of the Euler--Lagrange form imposes conditions on the coordinates up to order $|I|=k$, but leaves the rest of the coordinates free.

\medskip 
Taking into account the global structure of the field bundle $F\rightarrow M$ and the corresponding Euler--Lagrange differential 
operator $\CE \CL: \Gamma_M(F)\rightarrow \Gamma_M(\wedge^d T^*M \otimes V^*F)$, this translates to fact that
\vspace{-1mm}
$$
\CE \CL(\phi) = 0_\phi \, \in \Gamma_M( \wedge^d T^*M \otimes V^*F) \, \quad \implies \quad j^\infty \big(\CE \CL(\phi)\big) 
=0_{j^\infty(\phi)} \, \in \Gamma_M\big(J^\infty (\wedge^d T^*M \otimes V^*F) \big) \, .
$$

\vspace{-1mm}
\noindent 
Expanding the induced condition on the right and recalling Eq. \eqref{ProlongatedActionOnSections}, the (locally defined) implied 
conditions may be equivalently 
 encoded as
\footnote{For the reverse implication, notice that for any \textit{linear} differential
operator $\CK: \Gamma_M(\wedge^d T^*M \otimes V^*F) \rightarrow \Gamma_M(G)$, with underlying vector bundle map 
$K:J^\infty_M(\wedge^d T^*M \otimes V^*F)\rightarrow G$ valued in an arbitrary vector bundle $G$ over $F$ (and hence over M), $
\CE \CL(\phi) = 0_\phi \, \in \Gamma_M( \wedge^d T^*M \otimes V^*F) $ implies
\vspace{-1mm}
$$\CK \circ \CE \CL(\phi) = K\circ j^\infty \circ  E L \circ j^\infty (\phi)= K \circ \pr EL (j^\infty \phi) = 0 \quad  \in \quad \Gamma_M( G) \, .
$$}

\vspace{-3mm}
\begin{align*}
j^\infty \circ \CE \CL(\phi) = \pr EL \circ j^\infty \phi =  0_{j^\infty(\phi)} \quad \in \quad \Gamma_M\big(J^\infty (\wedge^d T^*M \otimes V^*F)\big)\, .
\end{align*}

\vspace{-1mm}
\noindent 
In other words, the prolongation $j^\infty \phi$ of a solution $\phi \in \CF_{\CE \CL}$ does not only factor through the shell $S_L$, i.e.,
zero locus of $EL:J^\infty_M F \rightarrow \wedge^d T^*M \otimes V^*F $, but furthermore through the `prolongated shell' -- the zero
locus of the prolongated Euler--Lagrange bundle map $\pr EL :J^\infty_M F \rightarrow J^\infty_M( \wedge^d T^*M \otimes V^*F )$.

\newpage 
\begin{definition}[Prolongated Shell of Lagrangian]\label{ProlongatedShellOfLagrangian}
The \textit{prolongated shell} of a Lagrangian $L\in \Omega^{d,0}(J^\infty_M F)$ is defined to be the smooth subspace of the jet bundle 
\vspace{-2mm} 
\begin{align*}
\mathrm{S}^\infty_L \longhookrightarrow y(J^\infty_M F) \, ,
\end{align*} 

\vspace{-2mm} 
\noindent on which the prolongated Euler--Lagrange source form $\pr EL$ vanishes. That is, the prolongated shell is the pullback/intersection of smooth sets 
  \vspace{-2mm} 
 \[
\xymatrix@=1.6em  {S^\infty_L \ar[d] \ar[rr] &&   y(J^\infty_M F)\ar[d]^{\pr E L} 
	\\ 
	y(J^\infty_M F)\ar[rr]^-{\pr 0^\infty_{F}}  && J^\infty_M (\wedge^d T^*M \otimes V^*F ) 
	\, ,} 
\]

  \vspace{-1mm} 
\noindent where the $\pr EL$ is the prolongation (Def. \ref{ProlongationOfJetBundleMap}) of the Euler--Lagrange source form $EL$ viewed as a 
bundle map, via Lem. \ref{SourceFormsDefineDifferentialOperators}, and $\pr 0^\infty_F: J^\infty_M F \rightarrow J^\infty_M (\wedge^d T^*M \otimes V^*F )$
is the (non-constant) fiberwise 0-bundle map over $F$.
\end{definition}
Yet equivalently, the prolongated shell and the above diagram may be reinterpreted as the intersection of the zero section and the section induced by
the prolongated $EL$-source form, of the induced pullback bundle $(\pi_0)^*\big( J^\infty_M (\wedge^d T^*M \otimes V^*F)\big)$ over $J^\infty_M F$ itself. 
Explicitly, the prolongated shell is the smooth space with points
$S_L(*)=\big\{s\in J^\infty_M F \; | \; \pr EL(s)=0_{\pi_0(s)}  \big\}$ and $\FR^k$-plots along the same lines.

\begin{remark}[\bf Manifold structure and nomenclature of the prolongated shell]\label{ManifoldStructureOnTheShell}
$\,$

\noindent {\bf (i)} 
Even though the ambient infinite jet bundle $J^\infty_M F$ is a Fr\'{e}chet manifold, it is not guaranteed that (the set of points of) the prolongated shell $S_L^\infty$
can be supplied with a Fr\'{e}chet submanifold structure. This is apparent even in the globally finite order Lagrangian case, where the corresponding 
Euler--Lagrange form is (the pullback of) a globally finite order map, say $EL^k:J^k_M F \rightarrow \wedge^d T^*M \otimes V^*F$. In this case, even the 
question of the (finite order) shell $S_{L,k}\hookrightarrow J^k_M F$ having a smooth (finite-dimensional) manifold structure demands the transversality 
of the maps $EL^k$ and $0_F$. This is not always guaranteed and depends on the explicit form of the Lagrangian and the induced Euler--Lagrange form.

\vspace{1mm} 
\noindent {\bf (ii)} 
 Even if this is the case, 
it might still be the case that they become non-transversal upon any finite order prolongation\footnote{These are defined analogously, so that
$\pr^q EL^k (j^{k+q}_p \phi) := j^q_p\big(P\circ j^k \phi \big)$ for some (local) representative section $\phi$. Equivalently, they may be 
defined directly via the $q$-order prolongation of $EL:= (\pi_k)^*EL^k$ on $J^\infty_M F$ as
$\pr^q EL(j^\infty_p \phi) := j^q_p \big(EL \circ j^\infty(\phi)\big)= j^q_p(EL^k \circ j^k \phi)= \pr^q EL^k \circ \pi_{q+k}(j^\infty_p\phi)$.} 
$\pr^q EL^k : J^{k+q}_M F\rightarrow J^q(\wedge^d T^*M\otimes V^*F)$, so that their zero locus intersection $S^{q}_{L,k}\hookrightarrow J^{k+q}_M F$ is not a smooth submanifold. Nevertheless, there is a smooth set
structure on each finite order prolonged shell, and the prolonged shell $S^\infty_L$ as their (projective) limit, since all limits exist in the category of smooth sets
(see also Ex. \ref{FiniteDimensionalCriticalSmoothSet}). More intricate details about the potential analytic nature required by constructions in 
local field theory will be noted in Rem. \ref{OnShellCartanCalculusCaveats}.

\vspace{1mm} 
\noindent {\bf (iii)} 
We note that the description of the `avatar' of an arbitrary partial differential equation, as the zero locus of the corresponding prolongated 
bundle map inside the infinite jet bundle, was first introduced by Vinogradov in \cite{Vin81}\cite{Vin84} and goes under the name \textit{diffiety}. 
Its geometrical properties and relation to the actual solutions of PDEs have been much studied since (see \cite{Vin13} for a modern review). 
Our nomenclature as the (prolonged) shell in the specific case of Euler--Lagrange operators is intended to make contact with the 
``on-shell fields'' nomenclature from the physics literature, thus being those fields that factor through the (prolonged) shell.
\end{remark}

\smallskip 
Given the exactness properties of the variational bi-complex, along with the identification of the total cohomologies of Prop. \ref{TotaldeRhamCohomologyInfinityJet} 
and Prop. \ref{AcyclicityTheorem}, 
together with enough diagram chasing, one can arrive at the following result on the cohomology of the Euler--Lagrange complex
\cite{Takens79}\cite{Anderson89}\cite{GMS00}.

\begin{proposition}[\bf Euler--Lagrange cohomology]
\label{EulerLagrangeComplexCohomology}
The projection map $\pi^\infty_0 :J^\infty_M F\rightarrow F$ along with the projections $\pr_{p,0} $ and $\CI\circ \pr_{d,q}$, onto horizontal 
and $(d,q)$-source forms respectively, induce a quasi-isomorphism 
\vspace{-2mm} 
$$
\Omega^{\bullet}(F) \longrightarrow 
\Omega^{\bullet}_{\mathrm{EL}}(J^\infty_M F)\, ,
$$

\vspace{-2mm} 
\noindent and so isomorphisms on the corresponding cohomologies.
\end{proposition}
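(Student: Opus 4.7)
The plan is to factor the claimed map as
\[
\Omega^\bullet(F) \xlongrightarrow{(\pi^\infty_0)^*} \Omega^\bullet(J^\infty_M F) \xlongrightarrow{\;\pi_{\mathrm{EL}}\;} \Omega^\bullet_{\mathrm{EL}}(J^\infty_M F),
\]
where the first factor is already a quasi-isomorphism by Prop.~\ref{TotaldeRhamCohomologyInfinityJet} and the second factor $\pi_{\mathrm{EL}}$ is the edge projection given by $\pr_{p,0}$ in total degree $p\leq d$ and $\CI\circ \pr_{d,q}$ in total degree $d+q$ for $q\geq 1$. First I would verify that $\pi_{\mathrm{EL}}$ is a chain map from the de Rham complex (= total complex of the variational bicomplex) to the Euler--Lagrange complex: this reduces to the decomposition $\dd=\dd_H+\dd_V$ of \eqref{TotalDifferentialDecomposesOnpqforms}, the fact that $\dd_H$ preserves vertical degree and $\dd_V$ preserves horizontal degree (so unwanted components are killed by the projections), together with $\CI\circ \dd_H = 0$ from Prop.~\ref{InteriorEulerProperties}(ii), which ensures that the $\CI$-projection descends compatibly with $\delta_V=\CI\circ \dd_V$ on the functional-forms column.

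Next I would show that $\pi_{\mathrm{EL}}$ is a quasi-isomorphism by running the spectral sequence of the bicomplex filtered by \emph{rows} (decreasing in the vertical degree $q$), which abuts to $H^\bullet_{\mathrm{dR}}(J^\infty_M F)$. By Takens' acyclicity theorem (Prop.~\ref{AcyclicityTheorem}), the $E_1$ page $E_1^{p,q}=H^p_{\dd_H}\bigl(\Omega^{\bullet,q}\bigr)$ vanishes for $q\geq 1$ and $p<d$, and on the right column equals $\Omega^{d,q}/\mathrm{Im}(\dd_H)$, which is canonically identified with $\Omega^{d,q}_{\mathrm{f}}$ via $\CI$ using $\ker(\CI)=\mathrm{Im}(\dd_H)$ from Prop.~\ref{InteriorEulerProperties}(ii); on the bottom row it reads $E_1^{p,0}=H^p_{\dd_H}(\Omega^{\bullet,0})$. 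The induced $d_1$ differential acts as $\dd_V$, which on the bottom row connects $E_1^{d,0}$ to $E_1^{d,1}=\Omega^{d,1}_{\mathrm{f}}$ through $\CI\circ \dd_V=\delta_V$, and on the right column is $\delta_V:\Omega^{d,q}_{\mathrm{f}}\to \Omega^{d,q+1}_{\mathrm{f}}$.

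Passing to $E_2$ then yields, on the L-shape of nonzero positions, exactly the cohomology of the Euler--Lagrange complex of Def.~\ref{EulerLagrangeComplex}: in degree $p<d$ it returns $H^p_{\dd_H}(\Omega^{\bullet,0})$, at the corner it returns $\ker\bigl(\delta_V:H^d_{\dd_H}(\Omega^{\bullet,0})\to\Omega^{d,1}_{\mathrm{f}}\bigr)$, and in the remaining positions $(d,q\geq 1)$ it returns the $\delta_V$-cohomology of the tail. Since $E_2$ is supported only on the bottom row and right column, the higher differentials $d_r$ of bidegree $(1-r,r)$ (for $r\geq 2$) cannot connect any two nonzero positions, so $E_2=E_\infty$ and the spectral sequence degenerates. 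The filtration then collapses into an isomorphism between $H^\bullet_{\mathrm{EL}}(J^\infty_M F)$ and $H^\bullet_{\mathrm{dR}}(J^\infty_M F)\cong H^\bullet_{\mathrm{dR}}(F)$, realized precisely through the edge-projection map $\pi_{\mathrm{EL}}$.

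The hard part will be the careful bookkeeping at the corner $(d,0)$, where the horizontal Euler--Lagrange complex meets the vertical tail: one must verify that $\delta_V$ is well-defined on $H^d_{\dd_H}(\Omega^{\bullet,0})=\Omega^{d,0}/\mathrm{Im}(\dd_H)$ (which uses $\CI\circ \dd_V\circ \dd_H=-\CI\circ \dd_H\circ \dd_V=0$), that the $d_1$ arising from the spectral sequence at this corner literally coincides with the Euler--Lagrange differential $E=\delta_V$, and that the abstract $E_\infty$-identification is genuinely induced by the specified edge map $\pi_{\mathrm{EL}}$ rather than merely abstractly isomorphic to it. The locally vs.\ globally finite order dichotomy raised in Rem.~\ref{ProVsLocProJetBundle} requires no separate treatment, since both Prop.~\ref{AcyclicityTheorem} and Prop.~\ref{InteriorEulerProperties} apply uniformly to both subcomplexes.
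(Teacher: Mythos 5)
Your proof is correct and follows essentially the route the paper itself indicates: the paper does not spell out an argument but attributes the result to Takens' acyclicity (Prop.~\ref{AcyclicityTheorem}), the identification of the total de Rham cohomology (Prop.~\ref{TotaldeRhamCohomologyInfinityJet}), and ``enough diagram chasing'', deferring the details to the cited literature. Your row-filtration spectral sequence is precisely the standard packaging of that diagram chase, and the points you flag for care -- the chain-map check at the corner $(d,0)$ and the identification of the edge morphisms with $\pr_{p,0}$ and $\CI\circ\pr_{d,q}$ -- are exactly the bookkeeping the paper leaves implicit.
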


In particular, this implies that $H^{n}_{\mathrm{EL}}(J^\infty_M F) = 0$ for $n>\mathrm{dim}(F)$. Furthermore, any $\dd_H$-closed horizontal form 
$\om_{p,0}\in \Omega^{p,0}(J^\infty_M F)$ for $0\leq p < d$ decomposes as
\vspace{-2mm} 
$$
\om_{p,0}=\pr_{p,0}\circ  (\pi^{\infty}_0)^*(\tilde{\om}_{p}) + \dd_H k_{p-1,0}\,,
$$

\vspace{-2mm} 
\noindent for some closed $\tilde{\om}_{p}\in \Omega^p(F)$ and $k_{p-1,0}\in \Omega^{p-1,0}(J^\infty_M F)$, while the same holds for any density 
$L\in \Omega^{d,0}(J^\infty_M F)$ which is \textit{$\delta_V$-closed} instead. Since $\delta_V L= EL =0$ means the Lagrangian has trivial 
Euler--Lagrange derivatives as per Eq. \eqref{ExactLagrangianTrivialELequation}, the cohomology $H^d_{\mathrm{dR}}(F)$ parametrizes variationally trivial Lagrangian densities. 
\begin{remark}[\bf The (global) inverse problem of the calculus of variations] 
Given a (smooth) differential 
operator 
\vspace{-2mm} 
\begin{align*}
\mathcal{P} \;:\; \mathbold{\Gamma}_M (F) & \;\longrightarrow\; \mathbold{\Gamma}_M(\wedge^d T^*M\otimes V^*F) \\
\phi &\; \longmapsto \; P \circ j^\infty \phi\, , 
\end{align*}

\vspace{-2mm} 
\noindent for some $(d,1)$-source form $P\in \Omega^{d,1}_s(J^\infty F)$, does there exist a Lagrangian $(d,0)$-form such that $P=\delta_V L$? 
In words, is the ``\textit{source equation}"
$\CP=0$ \textit{``variational"}? A necessary condition is that $\delta_V P=0$. This condition guarantees the 
existence of a Lagrangian \textit{locally} on 
$J^\infty_M F$ \cite{Tonti69}. Globally, the obstruction to be variational lies in the $(d+1)$-cohomology of the Euler--Lagrange 
complex  (Def. \ref{EulerLagrangeComplex}), and hence by Prop. \ref{EulerLagrangeComplexCohomology}, in the $(d+1)$ de Rham cohomology $H^{d+1}_{\mathrm{d R}}$(F) of the underlying field bundle. This problem and its reduction to the cohomological question is historically one of the main reasons for the introduction of the Euler--Lagrange complex. 
\end{remark}

\smallskip 
Forms on the infinite jet bundle may be pulled back to forms on spacetime, along any prolongation of a field configuration $j^\infty \phi : M \rightarrow J^\infty_M F$, in a manner that is compatible with the horizontal differential $\dd_H$ on
$J^\infty_M F$ and the de Rham differential $\dd_M$ of the base $M$.

\begin{definition}[\bf  Pullback via prolongated section]\label{PullbackViaProlongatedSection}
    Let $\phi\in \Gamma_M(F)$ be a smooth section and $j^\infty \phi\in \Gamma_M(J^\infty F)$ its prolongation. 
    Define the pullback morphism of 
differential forms 
\vspace{-1mm} 
$$
(j^\infty \phi)^*  \;:\; \Omega^{\bullet}(J^\infty_M F)\longrightarrow \Omega^\bullet(M)
$$ 
as follows: For any $m$-form 
$\om\in \Omega^m(J^\infty_M F)$, let $(j^\infty \phi)^* \om\in \Omega^m(M)$ denote the $m$-form on the base $M$ given by 
\vspace{-1mm} 
$$
(j^\infty \phi)^*\om (X_p) := \om_{j^\infty \phi(p)} \big(\dd (j^\infty \phi)_p X_p\big)\, ,
$$
for all $X_p\in T_p M$, $p\in M$, where $d(j^\infty \phi)_p$ denotes the pushforward map from  \eqref{PushforwardAlongProlongationOfField}.
\end{definition}

\begin{lemma}[{\bf Horizontal differential and base de Rham compatibility}]
\label{HorizontalDifferentialBaseDeRhamCompatibility}
 The pullback $(j^\infty \phi)^*  : \Omega^{\bullet}(J^\infty_M F)\longrightarrow \Omega^\bullet(M)$ \textit{respects the horizontal differential}
 \vspace{-3mm}
\begin{align}
(j^\infty \phi)^* \dd_H \om = \dd_M (j^\infty \phi)^* \om \, ,
\end{align}

\vspace{-2mm} 
\noindent and \textit{vanishes on forms of non-zero vertical degree}
\vspace{-2mm} 
$$
(j^\infty \phi)^* \om_{p,q\geq1}  =0\, .
$$
\begin{proof}
It suffices to check the relations on $C^\infty(J^\infty_M F)$ and the local basis of $\Omega^{1,1}(J^\infty_M F)$, 
since these generate $\Omega^{\bullet,\bullet}(J^\infty_M F)$. In local coordinates, recall by
\eqref{InftyJetBundleHorizontalTangentCoordinate},
\vspace{-3mm} 
$$
\dd(j^\infty \phi)_p \Big(\frac{\partial}{\partial x^\mu}\Big\vert_p\Big)  = \frac{\partial}{\partial x^\mu} \Big\vert_{j^\infty_p \phi} 
+ \sum_{|I|=0}^{\infty} u^a_{I+\mu}(j^{\infty}_p \phi) \cdot    \frac{\partial}{\partial u^a_I} \Big\vert_{j^\infty_p \phi} \, . 
$$ 

\vspace{-2mm} 
\noindent On the local coordinate functions $\{ x^\mu\}$ for $J^\infty_M F$, we have  
\vspace{-1mm} 
$$
(j^\infty \phi)^* \dd_H x^\mu = (j^\infty \phi)^*\dd x^\mu = \dd_M (j^\infty \phi^* x^\mu) = \dd_M x^\mu \, 
$$

\vspace{-1mm} 
\noindent where the first equality is due to \eqref{HorizontalVerticalDifferentialonCoordinates}, 
the second by acting on $\frac{\partial}{\partial x^\mu}|_p$, and the last equality due to the section condition 
of $j^\infty \phi$ over M. Similarly, for any smooth function $f\in C^\infty(J^\infty_M F)$, we have 
\vspace{-1mm} 
\begin{align*}
     (j^\infty \phi)^* \dd_H f &= (j^\infty \phi)^*\big(D_\mu(f) \cdot \dd x^\mu)= 
     \big( D_\mu(f)\circ j^\infty \phi\big) \cdot (j^\infty \phi)^* \dd x^\mu\\
     &=\frac{\partial }{\partial x^\mu} (f\circ j^\infty \phi) \cdot \dd_M x^\mu 
\end{align*}

\vspace{-1mm} 
\noindent with the last equality being the chain rule as in \eqref{HorizontalVectorFieldBasisAction}. On the local basis of horizontal 1-forms 
$\{\dd_H x^\mu= \dd x^\mu\}$ it is immediate that
\vspace{-1mm} 
$$
(j^\infty \phi)^* \dd_H(\dd_H x^\mu) =0=\dd_M (\dd_M x^\mu) = \dd_M \big((j^\infty \phi)^* \dd_H x^\mu \big) \, , 
$$
while on the vertical $1$-forms $\{\theta^a_I=\dd_V u^a_I\}$
\begin{align*} (j^\infty \phi)^* \dd_H (\dd_V u^a_I) &= (j^\infty \phi)^*(- \dd_V u^a_{I+\mu} \wedge \dd x^\mu) = 0 \, , 
\end{align*}

\vspace{-2mm} 
\noindent since  $\dd_V u^a_{I+\mu} \wedge \dd x^\mu$ vanishes when applied to two horizontal vectors, such as the horizontal lifts 
$\dd(j^\infty \phi)_p(\frac{\partial}{\partial x^\mu}|_p)$. Similarly, $\dd_V u^a_I$ vanishes on horizontal vectors by
definition, and so 
$(j^\infty \phi)^* \dd_V u^a_I =0$ which also implies 
\vspace{-1mm} 
$$
\dd_M\big((j^\infty \phi)^* \dd_V u^a_I \big)=0 =  (j^\infty \phi)^* \dd_H (\dd_V u^a_I)\, .
$$

\vspace{-1mm} 
\noindent The above proves that $(j^\infty \phi)^*  \dd_H = \dd_M (j^\infty \phi)^* $, while the fact that 
$(j^\infty \phi)^*  \dd_V u^a_I = 0$ 
furthermore implies the vanishing result $(j^\infty \phi)^*  \om_{p,q\geq 1}=0$ by expression \eqref{pqformInfinityJetLocalCoordinates}.
\end{proof}
\end{lemma}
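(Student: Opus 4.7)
My plan is to verify both assertions by reducing to a set of local generators of the bigraded algebra $\Omega^{\bullet,\bullet}(J^\infty_M F)$ and then extending by multiplicativity. The structural observation driving everything is that the pushforward $\dd(j^\infty\phi)_p$ of \eqref{PushforwardAlongProlongationOfField} lands, by its very definition via the Cartan splitting (Cor. \ref{CanonicalSmoothSplitting}), inside the horizontal sub-bundle $HJ^\infty_M F$. This single fact will take care of the vanishing claim almost for free.

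First I would dispatch the vanishing statement: by the local coordinate representation \eqref{pqformInfinityJetLocalCoordinates}, every $(p,q)$-form with $q\geq 1$ is (locally) a sum of monomials each containing at least one factor $\dd_V u^{a}_I = \theta^a_I$. Since $\theta^a_I$ annihilates horizontal tangent vectors by Def. \ref{HorizontalVertical1formDefinition}, and the pullback $(j^\infty\phi)^*\om$ evaluates $\om$ on tuples of horizontal lifts $\dd(j^\infty\phi)_p X_p \in H_{j^\infty_p\phi}J^\infty_M F$, every such monomial contributes zero. Hence $(j^\infty\phi)^*\om_{p,q\geq 1}=0$.

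For the compatibility $(j^\infty\phi)^*\dd_H = \dd_M (j^\infty\phi)^*$, I would check it on the generators $x^\mu$, $\dd x^\mu$ and $\theta^a_I=\dd_V u^a_I$, then extend. On $x^\mu$ the section condition $\pi_M\circ j^\infty\phi = \id_M$ gives $(j^\infty\phi)^*x^\mu = x^\mu$, and \eqref{HorizontalVerticalDifferentialonCoordinates} reduces the identity to $\dd_M x^\mu = \dd_M x^\mu$. On a general $f\in C^\infty(J^\infty_M F)$ I use $\dd_H f = D_\mu(f)\cdot \dd x^\mu$ together with the chain-rule identity \eqref{HorizontalVectorFieldBasisAction}, namely $D_\mu(f)\circ j^\infty\phi = \partial_\mu(f\circ j^\infty\phi)$, which directly yields $(j^\infty\phi)^*\dd_H f = \dd_M(f\circ j^\infty\phi)$. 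On $\dd x^\mu$ both sides are zero ($\dd_H^2=0$ and $\dd_M^2=0$). On $\theta^a_I$ we have $\dd_H\theta^a_I = -\dd_V u^a_{I+\mu}\wedge \dd x^\mu$ by the bicomplex relations \eqref{BicomplexRelations}, which has vertical degree one and is therefore killed by the first part; meanwhile $(j^\infty\phi)^*\theta^a_I = 0$ by that same first part, so $\dd_M$ of it also vanishes. Extension to arbitrary $\om$ then follows since $(j^\infty\phi)^*$ is a morphism of graded-commutative algebras and both $\dd_H$ and $\dd_M$ are graded derivations, so the Leibniz rule propagates the identity from generators to all of $\Omega^{\bullet,\bullet}(J^\infty_M F)$.

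The only real obstacle I foresee is not conceptual but bookkeeping: in the locally-pro-manifold picture, a generic $\om$ is only \emph{locally} of finite order, so the coordinate expression \eqref{pqformInfinityJetLocalCoordinates} is valid only on a chart around each $s\in J^\infty_M F$. This is harmless once one verifies that the generator-level identities above hold in every coordinate chart and are preserved by the Fr{\'e}chet-smooth transition maps (equivalently, by Prop. \ref{FunctionsOnInftyJetBundle}, chart-wise equalities glue to a global equality of smooth maps $M\to \wedge^\bullet T^*M$). With that routine globalization step, both identities follow.
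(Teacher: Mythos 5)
Your proposal is correct and follows essentially the same route as the paper's own proof: both arguments reduce to the local generators $x^\mu$, $f$, $\dd x^\mu$, $\theta^a_I$, use the fact that $\dd(j^\infty\phi)_p$ is the horizontal lift (so contact forms pull back to zero), invoke the chain-rule identity $D_\mu(f)\circ j^\infty\phi=\partial_\mu(f\circ j^\infty\phi)$, and extend by the derivation property. The only cosmetic difference is that you establish the vanishing statement first and then reuse it for the $\theta^a_I$ case, whereas the paper derives it at the end from the same coordinate observations.
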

By a straightforward
calculation along the lines of the proof above, and recalling the identification of  \eqref{BundleMapsAsHorizontalForms}, 
the currents of Def. \ref{CurrentOnFieldSpace} induced by horizontal forms $P\in \Omega^{p,0}(J^\infty_M F)$ are then equivalently given by 
\vspace{-1mm} 
\begin{align}\label{CurrentsViaPullback}
\CP: \Gamma_M(F) &\longrightarrow \Omega^{p}(M)\\
\phi &\longmapsto P\circ j^\infty (\phi)  = (j^\infty \phi)^*P \nn \, , 
\end{align}

\vspace{-2mm} 
\noindent where the pullback denotes that of forms in the sense of Def. \ref{PullbackViaProlongatedSection}.

\begin{corollary}[\bf  Stokes' Theorem on field space]\label{StokesTheoremLocalFunctions}
Consider a horizontal $(p,0)$-form $P\in \Omega^{p,0}(J^\infty_M F)$ and its horizontal differential $(p+1,0)$-form
$\dd_H P \in \Omega^{p+1,0}(J^\infty_M F)$.

\noindent {\bf (i)} For any compact oriented submanifold $B^{p+1}\hookrightarrow M$ with boundary $\partial B^{p+1}=\Sigma^p$,  we have 
\vspace{-2mm} 
$$
\int_{B^{p+1}}  \dd_H P \circ j^\infty\phi  =  
\int_{\Sigma^p} P \circ j^\infty \phi  \, ,
$$

\vspace{-1mm} 
\noindent 
for any field $\phi \in \CF=\Gamma_M(F)$. 

\noindent {\bf (ii)} Similarly, the same formulas hold for $\FR^k$-plots of fields, valued in $C^\infty(\FR^k,\FR)$. 
That is, the induced charges defined by $\CP_{ \Sigma^{p}}:= \int_{ \Sigma^{p}} P \circ j^\infty_M $ and 
$\int_{B^{p+1}}  \dd_H P\circ j^\infty_M$ agree as smooth real-valued functions 
\vspace{-1mm} 
$$
\CF \longrightarrow y(\FR)
$$

\vspace{-2mm} 
\noindent on field space.

\begin{proof}
By Stokes' Theorem, 
    \vspace{-3mm} 
\begin{align*} 
\int_{B^{p+1}}  \dd_H P \circ j^\infty\phi  &=  
\int_{B^{p+1}}  (j^\infty \phi)^*\dd_H P = \int_{B^{p+1}}\dd_M  (j^\infty \phi)^*P \nn 
\\
&= \int_{\Sigma^{p}} (j^\infty \phi)^*P = \int_{\Sigma^p} P \circ j^\infty \phi  \, ,
\end{align*}

\vspace{-2mm}
\noindent and similarly for plots of fields.
\end{proof}
\end{corollary}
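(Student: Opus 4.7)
The plan is to reduce the claim to the classical Stokes' Theorem on the finite-dimensional spacetime manifold $M$, using the compatibility results already established.

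First I would handle part (i). Starting from the right-hand side, I would invoke the identification of local currents as pullbacks along prolonged sections from \eqref{CurrentsViaPullback}, rewriting $P \circ j^\infty\phi = (j^\infty\phi)^* P$ and $\dd_H P \circ j^\infty\phi = (j^\infty\phi)^* \dd_H P$ as genuine differential forms on $M$. Then the key move is to apply Lem. \ref{HorizontalDifferentialBaseDeRhamCompatibility}, which gives $(j^\infty\phi)^* \dd_H P = \dd_M (j^\infty\phi)^* P$. At this point both sides of the desired identity live in $\Omega^{\bullet}(M)$ for an ordinary smooth manifold $M$, and the identity
\[
\int_{B^{p+1}} \dd_M (j^\infty\phi)^* P \;=\; \int_{\partial B^{p+1}} (j^\infty\phi)^* P \;=\; \int_{\Sigma^{p}} (j^\infty\phi)^* P
\]
is exactly the classical Stokes' theorem applied to the smooth $p$-form $(j^\infty\phi)^*P$ on $M$ along the compact oriented submanifold $B^{p+1}$ with boundary $\Sigma^p$. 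Reassembling and re-translating the pullbacks back into $P \circ j^\infty\phi$ yields the stated equality.

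Next I would upgrade the statement to arbitrary $\FR^k$-plots of fields for part (ii). By Def. \ref{SectionsSmoothSet}, an $\FR^k$-plot is a smoothly parametrized family $\phi^k \in \mathbold{\Gamma}_M(F)(\FR^k)$, whose prolongation $j^\infty \phi^k : \FR^k \times M \to J^\infty_M F$ is obtained plot-wise via \eqref{smoothjetprolongation}. By the definition of the smooth integration map \eqref{SmoothIntegrationMap} and of the charges (Def. \ref{ChargesOnFieldSpace}), the value of both $\int_{B^{p+1}}\dd_H P \circ j^\infty(-)$ and $\int_{\Sigma^p} P \circ j^\infty (-)$ at the plot $\phi^k$ are smooth functions $\FR^k \to \FR$ whose value at each $x\in \FR^k$ is obtained by applying part (i) to the individual field $\iota_x^*\phi^k \in \Gamma_M(F)$ (where $\iota_x : M \hookrightarrow \{x\}\times M \subset \FR^k \times M$). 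Thus the two resulting smooth functions on $\FR^k$ agree point-wise, and since this identification is natural in the probe $\FR^k$, they agree as smooth maps $\CF \to y(\FR)$.

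I do not expect any real obstacle: the entire content of the proof is absorbed into Lem. \ref{HorizontalDifferentialBaseDeRhamCompatibility} together with ordinary Stokes' theorem on $M$. The only point requiring minor care is the $\FR^k$-plot extension, where one must verify that both the jet prolongation $j^\infty$ and the integration $\int_M$ commute appropriately with restriction along $\iota_x$ so that the plot-wise reduction to part (i) is well-defined; this follows immediately from the point-wise definition of $j^\infty \phi^k$ and the definition of the smooth integration map, which keeps the $\FR^k$-parameter fixed while integrating along $M$.
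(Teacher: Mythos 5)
Your proposal is correct and follows essentially the same route as the paper's proof: rewrite $P\circ j^\infty\phi$ as $(j^\infty\phi)^*P$, invoke Lem. \ref{HorizontalDifferentialBaseDeRhamCompatibility} to exchange $\dd_H$ for $\dd_M$, and conclude by classical Stokes on $M$, with the $\FR^k$-plot case handled fiberwise over the probe. Your additional care about the plot-wise reduction in part (ii) is a welcome elaboration of what the paper compresses into ``and similarly for plots of fields.''
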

The above corollary naturally leads to the notion of conserved local currents. In particular, if $P$ is horizontally closed 
$\dd_H P=0$, then 
\vspace{-3mm} 
\begin{align}\label{ConservedCurrentFormula}
0=\dd_M \circ \CP \;:\; \CF \longrightarrow \Omega^{p+1}_{\mathrm{Vert}}(M)
\end{align}
and the so current is `conserved'. In particular  $\CP_{\partial B^{p+1}}(\phi)=0$, i.e., the total flux/current passing through
\textit{any bounding} 
submanifold $\partial B^{p+1}\hookrightarrow M$ is zero, for any field configuration $\phi$ (and similarly plots of fields).

\begin{definition}[\bf  Conserved currents and charges]
A current $\CP:\CF \rightarrow \Omega^{p}_{\mathrm{Vert}}(M)$ induced by a horizontally closed $(p,0)$-form $P\in \Omega^{p,0}(J^\infty_M F)$ 
is called a \textit{conserved current}. The induced charges are called \textit{conserved charges}.
\end{definition}
The intuition for a conserved charge is the usual one. If $\Sigma^p =\Sigma^p_{-}\coprod \Sigma^p_{+}$ 
is the disjoint union of two cobordant submanifolds, $\Sigma^p=\partial B^{p+1}$, 
then $\CP_{\Sigma^p} = 0 = \CP_{\Sigma^p_+}- \CP_{\Sigma^p_{-}}$ and so 
\vspace{-2mm} 
\begin{align}\label{ConservedChargeOverCobordism}
\CP_{\Sigma^p_+} = \CP_{\Sigma^p_{-}} \;:\; \CF \longrightarrow \FR \, ,
\end{align}

\vspace{-2mm} 
\noindent 
i.e., the charge is conserved along the cobordism $B^{p+1}$. 
\begin{remark}[\bf Spacetime cohomology and triviality of currents]\label{SpacetimeCohomologyAndTrivialityOfCurrents} 
Let $\CP=P\circ j^\infty $ be a conserved local current on $\CF$, i.e., $\dd_M \CP = (\dd_H P)\circ j^\infty = 0$ as a smooth map on $\CF$. 
If the de Rham cohomology $H^p_{\mathrm{d R}}(M)$ vanishes, then for any fixed field configuration $\phi \in \CF(*)$ there exists 
some $T_{\phi} \in \Omega^{p-1}(M)$ such that 
\vspace{-2mm}
\begin{align*}
 \CP(\phi) = \dd_M T_\phi \hspace{0.5cm}\in \hspace{0.5cm}  \Omega^{p}(M)\, .   
\end{align*}

\vspace{-2mm}
\noindent However, the dependence of $T_\phi$ on $\phi\in \CF$ is by no means guaranteed to be \textit{local} or \textit{smooth}.
\end{remark}

If $\CP(\phi)=\dd_M \CT_\phi$ happens to be exact,
then its charges are identically zero along \textit{any} closed submanifold, and one says $\CP$ is a trivial current at $\phi\in \CF(*)$.
Of course, it may still have non-trivial charges over submanifolds with boundary, but the nomenclature is standard. 

\begin{definition}[\bf  Trivial local currents and charges]
A local current $\dd_M \CT :\CF \rightarrow \Omega^{p}_{\mathrm{Vert}}(M)$ induced by a horizontally exact $(p,0)$-form 
$\dd_H T\in \Omega^{p,0}(J^\infty_M F)$ is called a \textit{trivial current}.
\end{definition}
It follows that the cohomology $H^{\bullet< d}_{\mathrm{EL}}(J^\infty_M F)$ of Euler--Lagrange complex (Def. \ref{EulerLagrangeComplex})
up to degree $d-1$, and so equivalently by Prop. \ref{EulerLagrangeComplexCohomology} the de Rham cohomology
$H^{\bullet < d}_{\mathrm{dR}}(F)$, parametrizes the classes of non-trivial (off-shell) conserved local currents on the 
field space $\CF$. In particular, if the degree $p$-cohomology of $F\rightarrow M$ vanishes, then every conserved
$p$-form local current on $\CF$ is trivial (in a local manner, in contrast with Rem. \ref{SpacetimeCohomologyAndTrivialityOfCurrents}). 
For instance, if the field bundle is a vector bundle over a contractible spacetime, e.g. 
$W\times \FR^d \rightarrow \FR^d$ as in the case of vector-valued field theory (Ex. \ref{VectorValuedFieldTheoryLagrangian}) 
on Minkowski spacetime, then all of its cohomology vanishes and hence all 
conserved (off-shell) charges are necessarily trivial. In particular, the currents from Ex. \ref{VectorValuedFieldSpaceCurrents}(i), (ii)
are easily seen to be conserved off-shell, by a direct application of the de Rham differential.

\medskip 
In classical field theory, the currents of most interest are only preserved on the smooth subspace $\CF_{\CE \CL}\hookrightarrow \CF$ of on-shell fields.
\begin{definition}[\bf  On-shell conserved currents and charges]
\label{OnshellConservedCurrentsandCharges}
A $(p,0)$-form $P$ is `\textit{horizontally closed on the (prolongated) shell of} $L$' if $\dd_H P \circ \iota_{\mathrm{S}^\infty_L} =0$ 
as a bundle map
\vspace{-2mm} 
\begin{align*}\mathrm{S}^\infty_L \longhookrightarrow y(J^\infty_M F) \longrightarrow y(\wedge^{p}T^*M)\, 
\end{align*}

\vspace{-2mm} 
\noindent  over $M$. The induced current $\CP$ and charges are called \textit{on-shell conserved current and charges}.
\end{definition}
It follows by Stokes' Theorem on field space, Cor. \ref{StokesTheoremLocalFunctions}, that 
\vspace{-2mm} 
\begin{align*}
0=\dd_M \circ \CP \;:\; \CF_{\CE \CL} \longhookrightarrow \CF \longrightarrow \Omega^{p+1}_{\mathrm{Vert}}(M)\, ,
\end{align*}

\vspace{-2mm} 
\noindent  and so is indeed `conserved on-shell'. Similarly
\vspace{-2mm} 
\begin{align*}
\CP_{\Sigma^p_+} = \CP_{\Sigma^p_{-}} \;:\; \CF_{\CE \CL}\longhookrightarrow \CF \longrightarrow y(\FR) ,
\end{align*}

\vspace{-2mm} 
\noindent  for the corresponding charges for any $\Sigma^p=\Sigma^p_{-}\coprod \Sigma^p_{+}$ bounding a submanifold 
$B^{p+1}$. Of course, such conserved currents arise when (infinitesimal) local symmetries exist, via Noether's First Theorem (Prop. \ref{Noether1st}). 
In particular, the O($n$)-model local $(d-1)$-currents from Ex. \ref{VectorValuedFieldSpaceCurrents}\,{{(iii)}} are not off-shell conserved, 
but can be checked to be on-shell conserved when the vector fields employed are infinitesimal symmetries.

\begin{remark}[\bf EL-cohomology and on-shell currents]
Crucially, the cohomology $H^{\bullet<d}_{\mathrm{dR}}(F)$ of the field bundle \textit{does not} parametrize on-shell 
conserved currents. Indeed, even if the cohomology vanishes, and hence all off-shell currents are trivial, there 
can still be non-trivial on-shell conserved currents. We will come back to the distinction of off- and on-shell 
conserved currents in the form of Noether's first and second theorems in the following section.
\end{remark}

We close off this section by proving that a local symmetry of a Lagrangian field theory 
(Def. \ref{FiniteSymmetryofLagrangianFieldTheory}) preserves the smooth subspace of on-shell fields, 
justifying its definition. 

\begin{proposition}[{\bf Local symmetries preserve the on-shell space of fields}]\label{LocalSymmetryPreservesOnshellSpace}
Any local symmetry $\CD:\CF\rightarrow \CF$ of a Lagrangian field theory such that $\CL \circ \CD = \CL + \dd_M \CK $,
where $\CK = K\circ j^\infty$ for some $K\in \Omega^{d-1,0}(J^\infty_M F)$, preserves the smooth subspace
$\CF_{\CE \CL}\hookrightarrow \CF$ of on-shell fields. Namely, the local diffeomorphism $\CD:\CF\rightarrow \CF$ 
restricts to a diffeomorphism
$$
\CD|_{\CF_{\CE \CL}} \;:\; \CF_{\CE \CL}\xlongrightarrow{\sim} \CF_{\CE \CL}\, .
$$ 
\end{proposition}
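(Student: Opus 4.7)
The overall strategy is to transfer the symmetry condition from the field space $\CF$ to the infinite jet bundle $J^\infty_M F$, where it becomes a simple algebraic equation on Lagrangian densities that I can operate on with the Euler operator $\delta_V = \CI \circ \dd_V$. The crucial fact I shall exploit is that the prolongation $\pr D : J^\infty_M F \to J^\infty_M F$ (Def.\ \ref{ProlongationOfJetBundleMap}) of the bundle map $D$ underlying $\CD$ preserves all of the natural structure of the variational bicomplex, because it covers the identity on $M$ and is the prolongation of a bundle morphism; equivalently, its pushforward sends the Cartan distribution $HJ^\infty_M F$ to itself, since by Eq.\ \eqref{ProlongatedActionOnSections} it maps horizontal lifts $\dd(j^\infty \phi)_p(X_p)$ to $\dd(j^\infty(\CD\phi))_p(X_p)$.

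First, I shall lift the defining equation $\CL \circ \CD = \CL + \dd_M \CK$ to the jet bundle. Writing $\CD = D \circ j^\infty$ and using $j^\infty \circ \CD = \pr D \circ j^\infty$ together with the compatibility $\dd_M \circ (K \circ j^\infty) = (\dd_H K) \circ j^\infty$ of Lem.\ \ref{HorizontalDifferentialBaseDeRhamCompatibility}, both sides of the symmetry equation become $(d,0)$-forms pulled back through $j^\infty$; since every point of $J^\infty_M F$ is of the form $j^\infty_p\phi$, this forces the identity
\[
(\pr D)^* L \;=\; L + \dd_H K \qquad \text{in } \Omega^{d,0}(J^\infty_M F).
\]
Next, I apply $\delta_V$ to both sides. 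On the right, $\delta_V \dd_H K = \CI \dd_V \dd_H K = -\CI \dd_H \dd_V K = 0$ by Prop.\ \ref{InteriorEulerProperties}(ii), so the right side yields $\delta_V L = EL$. On the left, since $\pr D$ preserves the horizontal/vertical splitting, it preserves the bi-grading and commutes with $\dd_H$, $\dd_V$, and the intrinsically defined operator $\CI$; therefore $\delta_V (\pr D)^* L = (\pr D)^* \delta_V L = (\pr D)^* EL$. Combining,
\[
(\pr D)^* EL \;=\; EL \qquad \text{in } \Omega^{d,1}_s(J^\infty_M F).
\]

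To conclude, I translate this back to the bundle-map viewpoint of Lem.\ \ref{SourceFormsDefineDifferentialOperators}. The equality at a point $s$ reads $EL|_{\pr D(s)} \circ (\dd\pr D)_s^{\otimes(d+1)} = EL|_s$, and since $\dd\pr D$ is invertible, $EL(s) = 0$ iff $EL(\pr D(s)) = 0$. Hence $\pr D$ restricts to an automorphism of the shell $S_L \hookrightarrow y(J^\infty_M F)$ of Def.\ \ref{ShellOfLagrangian}. For any on-shell plot $\phi^k \in \CF_{\CE\CL}(\FR^k)$, the prolongation $j^\infty \phi^k$ factors through $S_L$, so $\pr D \circ j^\infty \phi^k = j^\infty(\CD\phi^k)$ also factors through $S_L$, which means $\CE\CL(\CD\phi^k) = EL \circ j^\infty(\CD \phi^k) = 0$, i.e., $\CD\phi^k \in \CF_{\CE\CL}(\FR^k)$. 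By the universal property of the subobject $\CF_{\CE\CL} \hookrightarrow \CF$, this plotwise statement yields a smooth restriction $\CD|_{\CF_{\CE\CL}} : \CF_{\CE\CL} \to \CF_{\CE\CL}$. Finally, since $(\pr D)^* EL = EL$ is equivalent to $((\pr D)^{-1})^* EL = EL$, the same argument applied to $\CD^{-1}$ (which is itself a local symmetry with current $-K \circ (\pr D)^{-1} \circ j^\infty$) produces a smooth inverse restriction, so $\CD|_{\CF_{\CE\CL}}$ is a diffeomorphism.

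The main obstacle I anticipate is the verification that $\pr D$ genuinely commutes with the full variational bicomplex data, and in particular with the interior Euler operator $\CI$. While the commutation with $\dd$ is automatic from $(\pr D)^*$ being a pullback, commutation with $\dd_H$ and $\dd_V$ separately needs the preservation of the horizontal/vertical splitting, and commutation with $\CI$ rests on this together with the fact that the defining formula of $\CI$ in Def.\ \ref{InteriorEulerOperator} uses only the canonical horizontal vector fields $D_\mu$ and the vertical contractions $\iota_{\partial/\partial u^a_I}$ -- both of which are intrinsic to the Cartan distribution and thus preserved under any prolongation.
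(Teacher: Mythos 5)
Your proof is correct and follows essentially the same route as the paper's: both arguments reduce to the single identity $E(\pr D^*L) = \pr D^* (EL)$ (which the paper cites from Anderson and you justify via preservation of the Cartan distribution by the prolongation), combined with the vanishing of the Euler operator on $\dd_H$-exact densities and the invertibility of $\pr D$. The only cosmetic difference is that you phrase the conclusion as invariance of the shell $S_L \hookrightarrow y(J^\infty_M F)$ rather than as the field-space identity $\CE\CL\circ\CD = \CE\CL$; these are equivalent, since on-shell plots are by definition exactly those whose prolongations factor through $S_L$.
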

\begin{proof}
By Lem. \ref{HorizontalDifferentialBaseDeRhamCompatibility}, the trivial smooth Lagrangian $\dd_M \CK$ is equivalently $(\dd_H K)\circ j^\infty$.
So, by Eq. \eqref{ExactLagrangianTrivialELequation},
its corresponding smooth Euler--Lagrange operator vanishes. Hence, 
$$
\CE (\CL\circ \CD)  = \CE(\CL + \dd_M \CK) = \CE \CL + \CE (\dd_M\CK) = \CE \CL\, ,
$$
i.e., the corresponding smooth Euler--Lagrange operators of $\CL\circ \CD $ and $\CL$ coincide. It follows they define the same on-shell space of fields. 

Next, we note that the Euler--Lagrange operator of the local Lagrangian $\CL \circ \CD$ is further equivalently given
by $\CE (\CL \circ \CD) = \CE \CL \circ \CD$. To see this, first note that by relation \eqref{ProlongatedActionOnSections}
\vspace{-2mm} 
$$\CL \circ \CD = L \circ j^\infty \circ D \circ j^\infty = L\circ \pr D \circ j^\infty = (\pr D^*L)\circ j^\infty\, ,
$$ 

\vspace{-2mm} 
\noindent where $\pr D:J^\infty_M F\xrightarrow{\sim} J^\infty_M F$ is the prolongated bundle map (Def. \ref{ProlongationOfJetBundleMap}), 
and the final equation is the pullback in terms of horizontal forms $\Omega^{d,0}(J^\infty_M F)$. It is tedious but straightforward 
to check that the pullback via a prolongated bundle map as above commutes with the  Euler operator on $\Omega^{\bullet,\bullet}(J^\infty_M F)$ 
and so $E(\pr D^* L)=\pr D^* EL $ (e.g. in local coordinates, see Prop. 1.6 and Cor 3.22 of \cite{Anderson89}\,\footnote{The statement 
of Prop 1.6 therein is for a prolongation of a bundle map $F\rightarrow F$, but the proof can be directly extended to include prolongations 
of bundle maps $J^\infty_M F\rightarrow F$.}). Precomposing with the jet prolongation $j^\infty \phi$ of any field configuration gives
\begin{align*}
 \CE (\CL \circ \CD)(\phi):&= E(\pr D^*L) \circ j^\infty \phi   = (j^\infty\phi)^* \, \big( \pr D^* EL \big)   \\ 
 &= (\pr D \circ j^\infty \phi)^* EL =EL \circ (\pr D \circ  j^\infty \phi)  \\
 &= 
  EL \circ \big (j^\infty \circ \CD(\phi) \big)= \CE\CL \circ \CD(\phi)\, .
 \end{align*}
It follows that if $\phi$ 
is an on-shell field for $\CL$, then $\CE\CL(\phi)=0$ implies $\CE\CL \big(\CD(\phi)\big)=0$ and similarly for $\FR^k$-plots of on-shell
fields. In other words `the image' 
of $\CD$ sits inside $\CF_{\CE \CL}$, and since $\CD$ is by assumption invertible, the result follows.
\end{proof}

Note that the result of this proposition can be interpreted colloquially as the statement that a local symmetry of an action/Lagrangian is also
a symmetry of the corresponding Euler--Lagrange equations. The proof here relies crucially on the fact that $E(\pr D^* L)=\pr D^* EL$,
for any prolongation of a bundle map $D:J^\infty_M F\rightarrow F$ over $M$. This is also true (using identical arguments) 
for prolongations of bundle maps $D_f: J^\infty F \rightarrow F$ covering  diffeomorphisms $f:M\rightarrow M$. Overall, this shows 
that the Euler--Lagrange differential operators $\CE \CL: \CF\rightarrow \mathbold{\Gamma}_M(\wedge^d T^*M \otimes V^*F)$ are 
`covariant' under the action of the induced diffeomorphisms on field space  $\CD_f:= D_f \circ j^\infty(-) \circ f^{-1} : \CF\rightarrow \CF$,
\vspace{-2mm}
\begin{align}\label{CovarianceofEulerLagrangeOperator}
\CE(\CL \circ \CD_f) := E(\pr D_f^* L) \circ j^\infty \equiv \CE \CL \circ \CD_f \, .    
\end{align}

\vspace{-2mm}
\noindent
In Prop. \ref{SymmetryPreservesOnshellSpace}, we will use the critical locus characterization of the on-shell space of fields to 
show that any (spacetime covariant or local) symmetry descends to a diffeomorphism on the on-shell space of fields. We note 
also that the proof of Prop. \ref{SymmetryPreservesOnshellSpace} does \textit{not} explicitly rely on the covariance of the Euler--Lagrange operators. 

\begin{example}[\bf O($n$)-model symmetries vs. on-shell fields]
Consider the O($n$)-model on-shell fields from Ex. \ref{O(n)ModelOnshellFields}, i.e., $\phi \in C^\infty(M,W)$ such that
\vspace{-1mm}
$$
\CE \CL(\phi) = - \Delta \phi + \phi= - \star \dd \star \dd \phi + \phi =0\, .
$$

\vspace{-1mm}
\noindent For the local finite symmetries of  $\CD: [M,W]\rightarrow [M,W]$ of the Lagrangian from Ex. \ref{LocalSymmetriesOfO(n)Model}, 
given by $\CD(\phi)= g\cdot \phi = g^{a}_{\, b}\cdot  \phi^b \cdot e_b $, it follows that 
\vspace{-2mm}
\begin{align*}
\CE \CL\big(\CD(\phi)\big) &= - \Delta (g\cdot \phi) + g\cdot \phi \\
&= - g \cdot \big( \Delta(\phi) + \phi) 
\\&=0\, ,
\end{align*}

\vspace{-2mm}
\noindent since $g$ is constant when viewed as a function on M $g:M\rightarrow O(n)$, and hence commutes with the de Rham differential and Hodge dual. 
For the spacetime covariant symmetry $\CD : [M,W]\rightarrow [M,W]$, given by pulling back along an isometry $\CD(\phi)= f^* \phi$, it follows that 
\vspace{-2mm}
\begin{align*}
\CE \CL\big(\CD(\phi)\big) &= - \Delta (f^* \phi) + f^* \phi= - f^* \big( \star \dd \star \dd \phi + \phi \big) \\
&= - f^* \big( \Delta(\phi) + \phi) 
\\&=0\, ,
\end{align*}

\vspace{-2mm}
\noindent since pullbacks commute with the de Rham differential, and further since the pullback by an isometry commutes with the Hodge dual.

\end{example}

\subsection{On-shell fields as a smooth critical set}
\label{OnShellFieldsCriticalSetSubsection}
In this section, we show how the usual variation algorithm of local action functionals, or more generally local Lagrangians, appearing 
in the physics literature may be rigorously interpreted in smooth sets. This completely avoids having to deal with any functional 
analytic and infinite-dimensional manifold technicalities in the actual field space. The resulting space of classical solutions, i.e.,
the space of ``extrema of the action functional", naturally forms a smooth set and coincides with the space of on-shell fields of 
Def. \ref{OnshellSpaceOfFields}. 

\medskip 
Let $\CF$ be a smooth set and $S:\CF\rightarrow y(\FR)$ a smooth map, e.g. a field space and an action functional. 
Evaluating on any $\FR^1$-plot $\phi_t$, the quantity $S(\phi_t)$ is an element of $y(\FR)(\FR^1_t)=C^\infty(\FR^1_t,\FR)$ 
and so we may compute its derivative with respect to $t\in \FR^1_t$, i.e., 
\vspace{-2mm}
\begin{align}\label{PartialDerivativeOfActionOnFR^1plots}
\CF(\FR^1_t) &\longrightarrow C^\infty(\FR^1_t)\\
 \phi_t &\longmapsto \partial_t S(\phi_t) \, . \nn 
\end{align}

\vspace{-1mm}
\noindent Denote by $\CF_{\phi}(\FR^0\times \FR^1_t)\subset \CF(\FR^1_t)$ the subset of paths 
$\phi_t\in \CF(\FR^1_t)\cong \CF(\FR^0\times \FR^1_t)$ such that $\phi_{t=0}:= \iota^*_0 \phi_t = \phi$, where 
$\iota_{0}: \{0\}\cong \FR^0 \hookrightarrow \FR^1$. We say $\phi\in \CF(\FR^0)$ is a \textit{critical point} 
of $S$ if the map (of sets)
\vspace{-1mm}
\begin{align*}
	\CF_{\phi}(\FR^0 \times \FR^1_t)&\xlongrightarrow{\quad \partial_t S \quad} 
 C^\infty(\FR^1_t) \xlongrightarrow{\;\;\mathrm{ev}_{t=0}\;\;} \FR  \nn 
 \\
	\phi_t & \quad \longmapsto \quad {\partial_t} S(\phi_t) \quad \longmapsto \;\;  {\partial_t}S(\phi_t)|_{t=0} 
 \, \, 
\end{align*}

\vspace{-1mm}
\noindent  vanishes.\footnote{It is easy to check that the analogous point-criticality condition for variations with higher dimensional paths, 
e.g. $\phi_{(t_1,t_2)} \in \CF(\FR^2)$ with $\phi_{(t_1,t_2)=0}=\phi^0$, is redundant since it is implied by the 1-parameter 
variation condition.}
The interpretation of the formulas above is the intuitive one, we have a notion of paths  through a point in our space of fields 
and we differentiate at the given point. 

\medskip 
Smooth sets allow us to consider more than the bare set 
$\mathrm{Crit}(S)(*)\subset \CF(*)$ of critical points. Denoting similarly by 
$\CF_{\phi^1}(\FR^1\times\FR^1_t)\subset \CF(\FR^2)$ the `surfaces' $\phi^1_t\in \CF(\FR^2)$ such that  
$\phi^1_{t=0}:= \iota_{\FR^1\times \{0\}}^* \phi^1_t =\phi^1$, where 
$\iota_{\FR^1\times \{0\}}: \FR^1\cong \FR^1\times \{0\} \hookrightarrow \FR^2$. 
Indeed, we say $\phi^1 \in \CF(\FR^1)$ is a \textit{critical line} or critical $\FR^1$-plot of S if the map (of sets)
\begin{align*}
	\CF_{\phi^1}(\FR^1 \times \FR^1_t)&\xlongrightarrow{\quad \partial_t S\quad} 
C^\infty(\FR^1 \times \FR^1_t )\xlongrightarrow{\;\;\mathrm{ev}_{t=0} \;\;} C^\infty(\FR) \nn 
 \\
	\phi_t^1 & \quad \longmapsto \qquad {\partial_t} S(\phi_t^1) \qquad \;\; \longmapsto \;\; {\partial_t}S(\phi_t^1)|_{t=0} 
\, 
\end{align*}

\vspace{-1mm}
\noindent 
vanishes. Note that the first map takes only derivative with respect to $t\in \FR^1$. Naturally, we may consider more than the 
set of critical lines of $S$ by $\mathrm{Crit}(S)(\FR^1)\subset \CF(\FR^1)$. Inducting the above formula to higher dimensional plots 
and their 1-parameter variations, we have the following definition.
\begin{definition}[\bf  Critical plots of smooth map]\label{CriticalRkPoints}
Let $S:\CF \rightarrow y(\FR)$ be a smooth map. The \textit{critical $\FR^k$-plots} of $S$ is the subset of $\FR^k$-plots
\vspace{-3mm} 
\begin{align}
		\mathrm{Crit}(S)(\FR^k):=  \Big\{\phi^k\in \CF(\FR^k) \; \big{|}\; \partial_t S (\phi^k_t)|_{t=0}=0, \hspace{0.3cm}
  \forall \, \phi^k_t\in  \CF_{\phi^k}(\FR^k\times \FR^1_t) \Big\} ,		
\end{align}
where $\CF_{\phi^k}(\FR^k\times \FR^1_t) = \big\{\phi^k_t \in \CF(\FR^k\times \FR^1_t) \, | \, \phi^k_{t=0}= \phi^k \in \CF(\FR^k) \big\}$. 
\end{definition}
Hence, there is an assignment of sets of $\FR^k$-critical plots
\vspace{-1mm}
$$
\FR^k \longmapsto \mathrm{Crit}(S)(\FR^k)\, ,
$$

\vspace{-2mm}
\noindent  for each $k\in \NN$. A priori, if $\CF$ and $S$ are an arbitrary smooth set and smooth map respectively, this assignment 
might not be functorial under maps $f:\FR^k\rightarrow \FR^{k'}$. Even if it does, it might still not satisfy the sheaf condition. 
That is, it \textit{may not} define a smooth set. Categorically, this is due to the 
fact that the above definition is not - in general - identified with a universal construction in SmoothSet, e.g.
a limit of some sort.\footnote{The culprit is \textit{not} simply the fact that, generally, the quantity 
$\partial_t S(\phi^1_t)|_{t=0}$ does not depend only on `$\partial_t \phi_t |_{t=0}$' as in Rem. \ref{TangentVectorsPathsOfFieldsAndDerivations}. 
Indeed, this condition alone would not be sufficient for functoriality.}
Nevertheless, the structure of local action functionals on field spaces guarantees the functoriality of the critical set.

\begin{proposition}[\bf Functoriality of the critical set]
\label{Crit(S)functoriality}
Let $M$ be a compact manifold without boundary, $F\rightarrow M$ a fiber bundle, $\CF$ its smooth set of sections and 
$S:\CF \rightarrow  y(\FR)$ a (smooth) action functional given by 
\vspace{-2mm}
\begin{align*}
	S = \int_M \circ \;\; \CL
\end{align*}  

\vspace{-1mm}
\noindent
for some local Lagrangian $\CL: \CF \rightarrow \Omega^{d}_\mathrm{Vert}(M)$.
Then the assignment
\vspace{-1mm}
\begin{align*}
\FR^k \longmapsto \mathrm{Crit}(S)(\FR^k)
\end{align*}

\vspace{-1mm}
\noindent  defines a smooth set, which coincides with the on-shell space of fields $\CF_{\CE \CL }$.
\begin{proof}
The smooth map $S: \CF \rightarrow y(\FR)$ is
defined as (see \eqref{LocalActionisSmooth})
\vspace{-2mm} 
$$
S=\int_{M}\circ \;\;  \CL \;\;:\;\; \CF \longrightarrow \Omega^{d}_{\mathrm{Vert}}(M)\longrightarrow y(\FR) 
$$

\vspace{-2mm} 
\noindent  by composing the smooth Lagrangian density map $\CL$ of Prop. \ref{LocalLagrangianisSmooth} corresponding to 
 $L:J^\infty_M F \rightarrow \wedge^d T^* M$,  with integration along $M$. Intuitively, derivatives only act along $M$, 
 and for each $\FR^k$-plot of fields the $\FR^k$-dependence is carried along by multiplication of functions on $\FR^k$.
 The standard integration by parts calculation may now be expressed rigorously through the infinity jet prolongation as follows.
 Firstly, by viewing the Lagrangian density bundle map locally as $L=\bar{L}(x^\mu,u^a_I) \cdot \dd x^1\cdots \dd x^d$, and working 
 along the lines of Ex. \ref{TangentVectorsAndDerivationsofLocalFunctions}, we have
 \vspace{-2mm} 
\begin{align*}
\partial_t (L\circ j^\infty \phi_t ) |_{t=0}&=(j^\infty \phi_0)^* 
\big( \iota_{\partial_t j^\infty \phi_t |_{t=0}} \dd_V L) \\
&= (j^\infty \phi_0)^* 
\Big(\iota_{\partial_t j^\infty \phi_t |_{t=0}} \delta_V L + \iota_{\partial_t j^\infty \phi_t |_{t=0}} \dd_H \theta_L \Big) 
\end{align*}

\vspace{-1mm} 
\noindent  with the latter equality being Eq. \eqref{LagrangianVerticalDifferentialDecomposition}. Next, following the description
of Lem. \ref{SourceFormsDefineDifferentialOperators}, the first term coincides with the natural pairing of the corresponding Euler--Lagrange 
differential operator $\mathrm{\CE \CL}(\phi_0)\in \Gamma_M(V^*F\otimes \wedge^d T^*M)$ evaluated at $\phi_0$,  with the tangent vector
$\partial_t \phi_t |_{t=0}\in \Gamma_M(VF)=T\CF(*)$ at $\phi_0\in \CF(*)$ as in Def. \ref{KinematicalTangentBundleToFieldSpace}. 
That is,
 $$
(j^\infty \phi)^* \big(\iota_{\partial_t j^\infty \phi_t |_{t=0}} \delta_V L \big) = \langle \mathcal{E}\mathcal{L}(\phi_0)\, , 
\, \partial_t \phi_t|_{t=0} \rangle \in \Omega^{d}(M) \, .
 $$
 Further, the commutation relation $[\iota_{\partial_t j^\infty \phi_t |_{t=0}},\dd_H ]=0 $ may be easily checked, for instance 
 in local coordinates. Along with the compatibility result of Lem. \ref{HorizontalDifferentialBaseDeRhamCompatibility}, 
 the second term then becomes an exact form on $M$ 
\begin{align*}
(j^\infty \phi)^* \big(\iota_{\partial_t j^\infty \phi_t |_{t=0}} \dd_H \theta_L) \big) 
&=- (j^\infty \phi)^*\big(\dd_H \iota_{\partial_t j^\infty \phi_t |_{t=0}} 
 \theta_L \big) \\
& = - \dd_M \big( (j^\infty \phi)^* \iota_{\partial_t j^\infty \phi_t |_{t=0}} 
 \theta_L \big) \in \Omega^d(M) \, . 
 \end{align*}
 Let us now make the useful, for the moment only notational, shorthand
\begin{align}\label{ThetaNotationalShorthand}
 \dd_M \theta_\CL\big(\partial_t \phi_t |_{t=0}\big):= \dd_M \big( (j^\infty \phi)^* \iota_{\partial_t j^\infty \phi_t |_{t=0}} 
 \theta_L \big) \, ,
 \end{align}
 whose underlying mathematical meaning will be made explicit in \cref{TheBicomplexOfLocalFormsSection}.
 
 With this at hand, we arrive at the more familiar form
 \vspace{-1mm} 
\begin{align}\label{VariationOfLagrangianDensityOnPoints}
\partial_t \CL (\phi_t ) |_{t=0}&=
\langle \mathcal{E}\mathcal{L}(\phi_0)\, , \, \partial_t \phi_t|_{t=0} \rangle -  \dd_M \theta_\CL\big(\partial_t \phi_t |_{t=0}\big) \quad \in \quad  \Omega^d(M)\, , 
\end{align}

\vspace{-1mm}
\noindent 
which in coordinates reads as the usual integration by parts algorithm. Note that here it is globally justified for arbitrary 
non-trivial fiber bundles and local Lagrangians. Composing with the integral, 
the variation of the action functional is thus given by
\vspace{-3mm} 
\begin{align*}
\partial_{t} S(\phi_t)|_{t=0}&=\int_M \partial_t \CL( \phi_t ) |_{t=0}\\
&= \int_M \big\langle \mathcal{E}\mathcal{L}(\phi_0)\, , \, \partial_t \phi_t|_{t=0} \big\rangle 
\;\; - \;\; \int_M  \dd_M \theta_\CL\big(\partial_t \phi_t |_{t=0}\big) \\
 &= \int_M \big\langle \mathcal{E}\mathcal{L}(\phi_0)\, , \, \partial_t \phi_t|_{t=0} \big\rangle \,,
\end{align*}
where the second term vanishes due to the compactness and boundary-less assumption on $M$. 

All the steps above follow through similarly on $\FR^k$-plots of sections, and so 
\vspace{-1mm} 
\begin{align}\label{VariationOfLagrangianDensityOnRkPoints}
\partial_t \big(L\circ j^\infty \phi^k_t \big)|_{t=0}
&=\big\langle \mathcal{E}\mathcal{L}(\phi^k_0)\, , \, \partial_t \phi^k_t|_{t=0} \big\rangle 
-  \dd_M \theta_\CL\big(\partial_t \phi^k_t |_{t=0}\big) \quad \in  \quad \Omega^d_\mathrm{Vert}(M)\big(\FR^k\big)\, 
\end{align}

\vspace{-1mm} 
\noindent  taking value in  
$\Omega^{d}_\mathrm{Vert}(M)\big(\FR^k\big)\cong \Omega^{d}(M)\hat{\otimes} C^\infty(\FR^k)$, with 
\vspace{-1mm} 
$$
\mathcal{E}\mathcal{L}(\phi^k_0)\in \mathbold{\Gamma}_M\big(\wedge^d T^*M\otimes V^*F\big)(\FR^k)
\cong \Gamma_M\big(\wedge^d T^*M\otimes V^*F\big) \hat{\otimes} C^\infty(\FR^k)
$$ 

\vspace{-1mm} 
\noindent  as in  \eqref{EulerLagrangerSmoothSetMap} and similarly $\partial_t \phi^k_t |_{t=0} \in T\CF(\FR^k)= 
\mathbold{\Gamma}_M(VF)(\FR^k)\cong \Gamma_M(VF)\hat{\otimes} C^\infty(\FR^k)$ 
as in Def. \ref{KinematicalTangentBundleToFieldSpace}. The pairing $\langle-,-\rangle$ is extended linearly by multiplying
the $C^\infty(\FR^k)$ components.\footnote{In other words, it defines a smooth map $\langle-,-\rangle: {\mathbold{\Gamma}}_M(\wedge^d T^*M \otimes V^* F)\times
{\mathbold{\Gamma}}_M(VF) \rightarrow \Omega^{d}_{\mathrm{Vert}}(M)$.}
Composing with integration, the second term once again vanishes, thus
$$
\partial_{t}S(\phi^k_t)|_{t=0} = \int_{M}\langle \mathcal{E}\mathcal{L} (\phi^k_0)\, , \,  \partial_t \phi^k_t |_{t=0} \rangle \in C^\infty(\FR^k)\, ,
$$
taking values in $C^\infty(\FR^k)$. By Lem. \ref{LinePlotsRepresentTangentVectors} 
 and its application to higher plots (see relation \eqref{PlotsOfTangentBundleViaCurvesofPlots} and its footnote), the equation 
 holds for all tangent vectors at $\phi^k_0\in \CF(\FR^k)$.  By the fundamental lemma 
of the calculus of variations  
 (see, e.g., \cite{Jost}),  the pairing 
	$\int_M \langle-,-\rangle \, \dd t$ is non-degenerate and so
 \vspace{-2mm} 
\begin{align}\label{ELcondition}
	\phi_0^k \in \mathrm{Crit}(S)(\FR^k)\iff \mathcal{E}\mathcal{L}(\phi^k_0)
 =0_{\phi^k_0} \in C^{\infty}(\FR^k)\hat{\otimes} \Gamma_M(\wedge^d T^*M \otimes V^*F)\, .
\end{align}

\vspace{-2mm} 
\noindent By verticality of $\CE \CL$, the latter condition is functorial (invariant) under pullbacks along maps
$f:\FR^{k'} \rightarrow \FR^{k}$ of plots, and so $\mathcal{E}\mathcal{L}\big(\phi^k_0\big)= 0_{\phi^k_0} \implies 
\mathcal{E}\mathcal{L}\big(f^* \phi^k_0\big)= 0_{f^*\phi^k_0}$. 
Hence if 
$\phi^k_0\in \mathrm{Crit}(S)(\FR^k)\subset \CF(\FR^k)$, then $f^*\phi^k_0\in \mathrm{Crit}(S)(\FR^{k'})\subset \CF(\FR^{k'})$ 
for all $f:\FR^{k'}\rightarrow \FR^k$ and the result follows.
\end{proof}
\end{proposition}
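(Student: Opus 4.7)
The plan is to reduce the problem to a jet-bundle computation, then exploit compactness and the fundamental lemma of the calculus of variations. Concretely, I would start by showing that for any $\FR^k$-plot $\phi^k \in \CF(\FR^k)$ and any 1-parameter family $\phi^k_t \in \CF_{\phi^k}(\FR^k\times \FR^1_t)$, the quantity $\partial_t S(\phi^k_t)|_{t=0}$ can be rewritten as an integral over $M$ of a pointwise pairing between $\CE\CL(\phi^k)$ and the tangent vector $\partial_t \phi^k_t|_{t=0} \in T\CF(\FR^k)$, plus a boundary term.

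To extract this, the key step is the variational bicomplex decomposition $\dd_V L = \delta_V L + \dd_H \theta_L = EL + \dd_H \theta_L$ from Eq.~\eqref{LagrangianVerticalDifferentialDecomposition}. Since $\partial_t(L\circ j^\infty \phi^k_t)|_{t=0}$ is computed via the prolongation $j^\infty \phi^k_t$ whose $t$-derivative at $t=0$ is a vertical tangent vector on $J^\infty_M F$ (Ex.~\ref{InfinityJetVerticalVectorFromProlongationofPlot}), contracting with $\dd_V L$ and pulling back along $j^\infty \phi^k$ via Lem.~\ref{HorizontalDifferentialBaseDeRhamCompatibility} converts $\dd_H$ into the de Rham differential $\dd_M$. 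The source-form part yields the pairing $\langle \CE\CL(\phi^k), \partial_t \phi^k_t|_{t=0}\rangle$ by Lem.~\ref{SourceFormsDefineDifferentialOperators}, while the $\dd_H \theta_L$ part yields an exact term $-\dd_M\theta_\CL(\partial_t\phi^k_t|_{t=0})$ on $M$, after using $[\iota_{\partial_t j^\infty \phi^k_t|_{t=0}}, \dd_H] = 0$ (easily checked locally). This gives the desired pointwise formula \eqref{VariationOfLagrangianDensityOnRkPoints}.

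The main (but mild) obstacle will be keeping the bookkeeping of $\FR^k$-dependence straight: the pairing, Euler--Lagrange operator, and $\theta_L$-term all need to be interpreted as smooth maps with values in $\Omega^d_{\mathrm{Vert}}(M)(\FR^k) \cong \Omega^d(M)\hat{\otimes} C^\infty(\FR^k)$, with tangent vectors to plots living in $T\CF(\FR^k) \cong \mathbold{\Gamma}_M(VF)(\FR^k)$; one must check that all the jet-bundle identities extend $C^\infty(\FR^k)$-linearly, which they do because the probe direction is carried through by multiplication and is untouched by the derivatives along $M$. After composing with $\int_M$, the term $\int_M \dd_M \theta_\CL(-)$ vanishes by Stokes' theorem using that $M$ is compact and without boundary, leaving $\partial_t S(\phi^k_t)|_{t=0} = \int_M \langle \CE\CL(\phi^k), \partial_t\phi^k_t|_{t=0}\rangle$.

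Finally, by Lem.~\ref{LinePlotsRepresentTangentVectors} (and its $\FR^k$-plot generalization \eqref{PlotsOfTangentBundleViaCurvesofPlots}), every tangent vector $\CZ_{\phi^k} \in T\CF(\FR^k)$ is realized as some $\partial_t \phi^k_t|_{t=0}$, so $\phi^k \in \mathrm{Crit}(S)(\FR^k)$ iff this pairing integrates to zero against all such $\CZ_{\phi^k}$. Applying the fundamental lemma of the calculus of variations fiberwise over $\FR^k$ then yields the on-shell condition $\CE\CL(\phi^k) = 0_{\phi^k}$ in $\mathbold{\Gamma}_M(\wedge^d T^*M \otimes V^*F)(\FR^k)$, identifying $\mathrm{Crit}(S)(\FR^k) = \CF_{\CE\CL}(\FR^k)$ as subsets of $\CF(\FR^k)$. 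Since $\CF_{\CE\CL}$ is already a smooth set (as a pullback of smooth sets in the sense of Def.~\ref{OnshellSpaceOfFields}, whose functoriality and sheaf property come for free from the verticality of $\CE\CL$ over $M$ and the naturality of pullbacks along maps $f : \FR^{k'}\to \FR^k$), functoriality of $\mathrm{Crit}(S)$ is automatic, completing the proof.
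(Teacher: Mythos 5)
Your proposal is correct and follows essentially the same route as the paper's own proof: the bicomplex decomposition $\dd_V L = EL + \dd_H\theta_L$ pulled back along $j^\infty\phi^k_t$, the commutation $[\iota_{\partial_t j^\infty\phi^k_t|_{t=0}},\dd_H]=0$ together with Lem.~\ref{HorizontalDifferentialBaseDeRhamCompatibility} to produce the $\dd_M$-exact boundary term, Stokes on compact boundaryless $M$, Lem.~\ref{LinePlotsRepresentTangentVectors} to range over all tangent vectors, the fundamental lemma to extract $\CE\CL(\phi^k)=0_{\phi^k}$, and verticality of $\CE\CL$ for functoriality. No gaps.
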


The criticality condition being equivalent to the Euler--Lagrange equations \eqref{ELcondition} implies a universal construction of the smooth
critical set of a local Lagrangian field theory. To that end, recall that $\wedge^d T^*M \otimes V^*F$ really stands 
for the tensor product $\pi_F^*\big(\wedge^d T^*M \big) \otimes_F V^*F$ of the pullback top exterior bundle over $F$, and hence 
has a zero-section $0_F : F\rightarrow \wedge^d T^*M \otimes V^*F$. Postcomposition of (plots of) sections of $F$, i.e., (plots of) 
fields in $\CF$, with the zero section $0_F$ yields the (non-constant) smooth map
\vspace{-2mm}
\begin{align}\label{VariationalCotangentZeroSection}
0_\CF \;:\; \CF &\longrightarrow \mathbold{\Gamma}_M(\wedge^d T^*M \otimes V^*F)
\\[-2pt] 
\phi^k &\longrightarrow 0_F \circ \phi^k\, . \nn 
\end{align}

\vspace{-2mm} 
\begin{corollary}[\bf  Critical smooth set as a pullback]\label{CriticalSmoothSetPullback}
$\,$

\noindent {\bf (i)} The bare set of critical points $\mathrm{Crit}(S)(*)$ is given equivalently by the pullback/intersection set
\vspace{-2mm} 
\[
\xymatrix@=1.6em  {\mathrm{Crit}(S)(*) \ar[d] \ar[rr] &&   \Gamma_{M}(F)\ar[d]^{\CE \CL} 
	\\ 
	\Gamma_{M}(F)\ar[rr]^-{0_{\Gamma_M(F)}}  && \Gamma_M(\wedge^d T^*M \otimes V^*F)
	\, . } 
\]
\vspace{-3mm} 
\noindent {\bf (ii)} The smooth critical set  $\mathrm{Crit}(S)$ coincides with the incarnation of this
diagram in the topos of smooth sets, i.e., 
\[
\xymatrix@=1.6em  {\mathrm{Crit}(S) \ar[d] \ar[rr] &&   \CF \ar[d]^{\mathcal{EL}} 
	\\ 
	\CF \ar[rr]^-{0_\CF}  && \mathbold{\Gamma}_M(\wedge^d T^*M \otimes V^*F)
	\, . } 
\]
\end{corollary}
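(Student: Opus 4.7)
The plan is to deduce both parts directly from Proposition~\ref{Crit(S)functoriality}, which identifies the criticality condition with the vanishing of the Euler--Lagrange operator, combined with the standard fact that limits in a sheaf topos are computed objectwise in $\mathrm{Set}$.

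First, for part~(i), I would specialize Proposition~\ref{Crit(S)functoriality} to $\FR^0 = *$-plots: a field $\phi \in \Gamma_M(F)$ is a critical point of $S$ if and only if $\CE\CL(\phi) = 0_\phi$ in $\Gamma_M(\wedge^d T^*M \otimes V^*F)$. Since $0_{\Gamma_M(F)}(\phi) = 0_F \circ \phi = 0_\phi$ by definition of \eqref{VariationalCotangentZeroSection}, this condition is precisely the statement that $\phi$ (viewed in both slots) equalizes the two maps $\CE\CL$ and $0_{\Gamma_M(F)}$ in the category of sets, i.e.\ that $\phi$ corresponds to an element of the pullback. Note that the underlying bundle projection $\pi \colon \wedge^d T^*M \otimes V^*F \to F$ forces any pair $(\phi_1, \phi_2) \in \Gamma_M(F) \times \Gamma_M(F)$ equalizing $\CE\CL$ and $0_{\Gamma_M(F)}$ to satisfy $\phi_1 = \phi_2$, since $\pi \circ \CE\CL(\phi_1) = \phi_1$ and $\pi \circ 0_{\Gamma_M(F)}(\phi_2) = \phi_2$. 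This diagonalization reduces the would-be pullback of two copies of $\Gamma_M(F)$ to the single set of $\phi$ satisfying $\CE\CL(\phi) = 0_\phi$, establishing~(i).

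For part~(ii), the key structural input is that the category $\SmoothSets = \mathrm{Sh}(\CartesianSpaces)$ has all small limits, computed objectwise in $\mathrm{Set}$ (see e.g.~\cite[p.~99]{MacLaneMoerdijk94}). Therefore the pullback displayed in the statement, when evaluated on $\FR^k \in \CartesianSpaces$, is the set
\[
\big\{(\phi^k_1, \phi^k_2) \in \CF(\FR^k) \times \CF(\FR^k) \;\big|\; \CE\CL(\phi^k_1) = 0_\CF(\phi^k_2) \text{ in } \mathbold{\Gamma}_M(\wedge^d T^*M \otimes V^*F)(\FR^k) \big\}.
\]
Applying the same bundle-projection argument plotwise (i.e.\ composing both sides with the induced smooth map $\mathbold{\Gamma}_M(\wedge^d T^*M \otimes V^*F) \to \CF$ covered by $\pi$) forces $\phi^k_1 = \phi^k_2$, so the pullback plots reduce to the set of $\phi^k \in \CF(\FR^k)$ satisfying $\CE\CL(\phi^k) = 0_{\phi^k}$. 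By Proposition~\ref{Crit(S)functoriality} (and the definition of $\CF_{\CE\CL}$ in Def.~\ref{OnshellSpaceOfFields}), this set is exactly $\mathrm{Crit}(S)(\FR^k) = \CF_{\CE\CL}(\FR^k)$, and the identification is manifestly natural in $\FR^k$.

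The verification is essentially a bookkeeping exercise once Proposition~\ref{Crit(S)functoriality} is in hand, so there is no substantial obstacle; the only point requiring care is recognizing that the two copies of $\CF$ in the pullback diagram collapse to one via the bundle projection. This reflects the fact that the intended diagram is really the pullback of the section $\CE\CL$ and the zero section $0_\CF$ of the variational cotangent bundle $T^*_\var \CF \to \CF$ of \eqref{VariationalCotangentBundleInIntroduction}, pulled back along the identity of $\CF$ -- an equivalent, cleaner reformulation that I would mention in a concluding remark.
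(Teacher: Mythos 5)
Your proposal is correct and follows essentially the same route as the paper's own (one-line) proof, which likewise reduces part (i) to the equivalence \eqref{ELcondition} from Prop.~\ref{Crit(S)functoriality} and part (ii) to the objectwise computation of limits applied to the smooth extension \eqref{EulerLagrangerSmoothSetMap} of $\CE\CL$. Your additional observation that the bundle projection $\pi_F$ forces the two copies of $\Gamma_M(F)$ (resp.\ $\CF$) in the pullback to coincide is a correct and worthwhile elaboration of a point the paper leaves implicit.
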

 
\begin{proof}
The first statement can easily be checked via \eqref{ELcondition}, while the second 
follows using the (unique) smooth extension of the Euler--Lagrange differential operator from \eqref{EulerLagrangerSmoothSetMap}.   
\end{proof}

\begin{remark}[\bf Subspace diffeology] We should note that for the concrete, i.e., diffeological, field spaces appearing in \textit{bosonic} field theory, 
the smooth structure on $\mathrm{Crit}(S)$ we have defined coincides with the so-called \textit{subspace diffeology}. This point of view, however, 
does not generalize to non-concrete field spaces - such as those appearing in \textit{fermionic} field theory. On the contrary,
the smooth set viewpoint we have laid out will directly generalize when considering infinitesimally thickened smooth bosonic field spaces 
and even smooth supersets of fermions and bosons, while at the same time paralleling the discussions of fermionic on-shell spaces in the 
physics literature \cite{GSS-2}.
\end{remark}

\begin{example}[\bf Field theory in zero dimension]\label{FiniteDimensionalCriticalSmoothSet}
	Consider the case of zero-dimensional field theory, i.e., $M=*$ and $F=N\times *$. Then $\CF\cong y(N)$, and the above calculation 
 reduces to the condition
   \vspace{-2mm} 
	$$
 \partial_t S(\phi^k_t)\big|_{t=0}=\; \dd^{N}_{\phi^k_0}\, S \big(\dot{\phi}^k(0)\big)=0\;,
 $$

  \vspace{-2mm} 
\noindent where the right-hand side is interpreted, for each $x\in \FR^k$, as the pushforward of a tangent vector 
$\dot{\phi}^k(0,x)\in T_{\phi^k_0(x)}N$  under the differential of $S$ at the point $\phi^k_{0}(x)\in N$. 
Equivalently, this is the condition that the 1-form $\dd_N S\in \Omega^1(N) \cong \Gamma_N(T^*N)$,
 viewed as a section of the cotangent bundle, vanishes at the point $\phi^k_0(x)\in N$, for every $x\in \FR^k$. 
 Since this condition holds for each point in the image of the $\phi^k_0:\FR^k\rightarrow N$, for all tangent vectors 
 at each point, it follows it is invariant under pullbacks along $f:\FR^{k'}\rightarrow \FR^k$. Yet equivalently and 
 more concisely, the critical smooth set of $S:y(M)\rightarrow y(\FR)$ is the intersection/pullback diagram in $\SmoothSets$,
   \vspace{-1mm} 
 \[
\xymatrix@=1.6em  {\mathrm{Crit}(S) \ar[d] \ar[rr] &&   y(N)\ar[d]^{\dd_N S} 
	\\ 
	y(N)\ar[rr]^-{0}  && y(T^*N)
	\, . } 
\]

  \vspace{-2mm} 
\noindent Note that the above critical locus might not have a finite-dimensional smooth manifold structure \footnote{That is, in the case where
the intersection is non-transversal.} -- i.e., the pullback diagram might not exist in $\SmoothManifolds$ -- but it does have a natural 
smooth set structure since \textit{any} such intersection/pullback diagram exists in $\SmoothSets$. \footnote{Depending on one's goals, 
 this smooth structure might not be sufficient, and one might have to choose a different category to compute the intersection. 
 For instance, appropriate choices include thickened smooth sets \cite{GSS-2}, which do detect intersection multiplicities in 
 contrast to smooth sets, or more generally their homotopical versions as in derived geometry (see \cite{Carchedi}).}
\end{example}

Comparing with the finite-dimensional example above, we see that the smooth set 
$\mathbold{\Gamma}_M(\wedge^d T^*M \otimes V^*F)$ serves as a substitute of the cotangent bundle 
for the field space $\CF$ and \textit{local} action functionals.
\begin{definition}[\bf  Variational cotangent bundle]\label{LocalCotangentBundleToFieldSpace}
The \textit{local or variational cotangent bundle} $\pi_\CF: T^*_{\mathrm{var}} \CF\rightarrow \CF$ of a smooth field space 
$\CF=\mathbold{\Gamma}_M(F)\in \SmoothSets$ is defined as the smooth vector bundle 
\vspace{-3mm} 
\[
\xymatrix@=1em  { \mathbold{\Gamma}_M(\wedge^d T^*M \otimes V^*F)  \ar[d] & 
	\\ 
\mathbold{\Gamma}_M(F)
	\, , } 
\]

\vspace{-2mm} 
\noindent where the projection $\pi_\CF$ is given by postcomposition of (plots of) sections of $\wedge^d T^*M \otimes V^*F$
with the (manifold) vector bundle projection $\pi_F: \wedge^d T^*M \otimes V^*F\rightarrow F$. 
\end{definition}

 In this sense, the map $0_\CF: \CF \rightarrow T^*\CF$ of Eq. \eqref{VariationalCotangentZeroSection} is really the canonical zero-section 
 of the smooth set vector bundle $\pi_\CF: T^*\CF \rightarrow \CF$ over the off-shell space of fields. Similarly the Euler--Lagrange 
 operator $\CE \CL : \CF \rightarrow T^* \CF$ is another such section, i.e.,
 $\pi_\CF\circ  \CE \CL (\phi) = \pi_F \circ E L \circ j^\infty \phi= \pi_0 \circ j^\infty \phi = \phi$ as
 $EL:J^\infty_M F\rightarrow \wedge^d T^*M \otimes V^*F $ is a bundle map over $F$ (Lem. \ref{SourceFormsDefineDifferentialOperators}).
 
\begin{remark}[\bf Non-variational field theories] Recall, by Lem. \ref{SourceFormsDefineDifferentialOperators}, that
there exist differential operators
$P:\Gamma_M(F) \rightarrow \Gamma_M(\wedge^d T^*M \otimes V^*F)$, inducing smooth sections 
$\CP:\mathbold{\Gamma}_M(F) \rightarrow \mathbold{\Gamma}_M(\wedge^d T^*M \otimes V^*F)$ of the variational cotangent bundle, which arise from source forms
$\Omega^{d,1}_s(J^\infty_M F)$ which cannot be variational according to Lem. \ref{EulerLagrangeComplexCohomology}. 
In other words, differential operators of this form cannot express the criticality of some action functional, and so represent local 
field theories which are \textit{non-variational}. 
Nevertheless, we may still consider the smooth set of solutions of the partial differential equation $\CP=0$, 
exactly as the pullback of Cor. \ref{CriticalSmoothSetPullback}.
\end{remark}

In the case of a \textit{noncompact spacetime} $M$ the action is not necessarily defined on the whole of field space $\CF$, since fields of
noncompact support may yield noncompactly supported densities under $\CL:\CF\rightarrow \Omega^{d}_{\mathrm{Vert}}(M)$ -- which are \textit{not} 
integrable. One could restrict to compactly supported fields $\CF_{c}\hookrightarrow \CF$, in which case the action is well defined 
and the criticality condition, along with the functoriality result, follow verbatim. 
However, fields of non-compact support are considered physically viable field configurations -- and so this option is too restrictive. 

\medskip 
The proper formulation of the criticality condition for a generic local Lagrangian $\CL$ on a noncompact spacetime $M$ is \textit{defined} by 
the joint criticality condition of all its charges (Def. \ref{ChargesOnFieldSpace}). Indeed, notice that in the previously examined case of 
a compact spacetime without boundary, the well-defined action $S= \int_M \circ \CL$ may be thought of as the charge $\CL_M$ of the local $d$-form 
current $\CL$, over the compact submanifold $M$. For a general spacetime, the integral does not exist over the whole of $M$, 
but it does exist over compact $d$-dimensional submanifolds - i.e., the charges
\vspace{-2mm}
$$
S_{\Sigma^d}\equiv \CL_{\Sigma^d}:= \int_{\Sigma^d} \circ \; \CL \;\;:\;\; \CF \longrightarrow   y(\FR)
$$

\vspace{-2mm}
\noindent exist for all compact submanifolds $\Sigma^d\rightarrow M$. The criticality condition (Def. \ref{CriticalRkPoints}) may be consistently 
defined for each of the charges, with an appropriate modification arising that takes care of the potential boundary of the submanifold. 
That is, we only consider $1$-parameter variations that fix the value of the jet of the (plots of) field at the boundary -- viewed as 
a `boundary condition' for the charge.  

\begin{definition}[\bf  Critical plots of charge]
\label{CriticalRkPointsofCharge}
Let $S_{\Sigma^d}:\CF \rightarrow y(\FR)$ be the charge/action of a local Lagrangian $\CL$ over a $d$-dimensional compact 
submanifold $\Sigma^d\hookrightarrow M$. The \textit{critical $\FR^k$-plots} of $S_{\Sigma^d}$ is the subset of $\FR^k$-plots
\vspace{-2mm}
\begin{align}
		\mathrm{Crit}(S_{\Sigma^d})(\FR^k):=  \Big\{\phi^k\in \CF(\FR^k) \; \big{|}\; \partial_t S_{\Sigma^d} (\phi^k_t)|_{t=0}=0, \hspace{0.3cm}
  \forall \, \phi^k_t\in  \CF_{\phi^k,\, \partial \Sigma^d}(\FR^k\times \FR^1_t) \Big\} ,		
\end{align}

\vspace{-2mm}
\noindent
where 
\vspace{-2mm}
$$
\CF_{\phi^k,\, \partial \Sigma^d}(\FR^k\times \FR^1_t) = \Big\{\phi^k_t \in \CF(\FR^k\times \FR^1_t) \, \big{|} \, \phi^k_{t=0}= 
\phi^k \in \CF(\FR^k) \, \, \, \mathrm{and} \, \, \, j^\infty\phi^k_{t=t_0} |_{\partial \Sigma^d} 
= j^\infty \phi^k |_{\partial \Sigma^d}, \hspace{0.3cm} \forall \, t_0\in \FR^1_t \Big\}
$$

\vspace{-2mm}
\noindent
is the set of $1$-parameter families of $\FR^k$-plots of fields that start at $\phi^k$ and have constant jet along the boundary 
of $\Sigma^d$.
\end{definition}
In less technical terms, the criticality condition for the charges employs 1-parameter variations which keep all the `derivatives' 
of the field fixed along the boundary of the given submanifold.\footnote{In other words, the value of the field is fixed in an `infinitesimal 
neighborhood' around $\partial \Sigma^d$. This will be made into a rigorous definition in \cite{GSS-2}.} Such variations include, 
in particular, variations that are constant in some open neighborhood of the boundary (i.e., in germs around the boundary), 
as most often used in the rigorous (point-set) description of the action principle (see \cite{Chris}).
 Our definition will necessarily yield the same critical set, but is more suited to make contact with the notion of boundary 
 conditions in the case of a spacetime with boundary (see Remark \ref{Rem-boundary} below).

\begin{definition}[\bf  Critical $\FR^k$-plots of Lagrangian]
\label{CriticalRkPointsofLagrangian}
Let $\CL:\CF\rightarrow \Omega^d_\mathrm{Vert}(M)$ be a local Lagrangian over a (potentially) non-compact d-dimensional spacetime $M$
without boundary. The \textit{critical $\FR^k$-plots} of $\CL$ is the subset of $\FR^k$-plots
\vspace{-2mm}
\begin{align}
		\mathrm{Crit}(\CL)(\FR^k):=  \Big\{\phi^k\in \CF(\FR^k) \; \big{|}\; 
  \phi^k\in \mathrm{Crit}(S_{\Sigma^d})(\FR^k), \hspace{0.3cm} \forall \; 
  \mathrm{compact \hspace{0.1cm} submanifolds \hspace{0.1cm}} \Sigma^d\hookrightarrow M \Big\} .		
\end{align}
\end{definition}
With these definitions at hand, and by the (petit) sheaf theoretic nature of the fields space (Rem. \ref{FieldSpaceAsASheafOfSheaves}), 
the result of Prop. \ref{Crit(S)functoriality} readily extends to the case of non-compact spacetimes $M$.

\begin{proposition}[\bf Crit($\CL$) functoriality]
\label{Crit(L)functoriality}	Let $M$ be a noncompact manifold without boundary, $F\rightarrow M$ a fiber bundle, $\CF$ its smooth set
of sections and $\CL: \CF=\CF \rightarrow \Omega^{d}_\mathrm{Vert}(M)$ a (smooth) local Lagrangian density given by 
  \vspace{-1mm} 
$$
\CL = L\circ j^\infty 
$$ 

  \vspace{-1mm} 
\noindent  for some Lagrangian density bundle map $L:J^\infty_M F \rightarrow \wedge^d T^* M$. Then the
assignment $\FR^k \mapsto \mathrm{Crit}(\CL)(\FR^k)$ defines a smooth set, which is equivalently given by the pullback diagram of smooth sets
  \vspace{-2mm} 	
 \[
	\xymatrix@=1.6em  {\mathrm{Crit}(\CL) \ar[d] \ar[rr] &&   \CF \ar[d]^{\mathcal{EL}} 
		\\ 
		\CF\ar[rr]^-{0_\CF}  && T^*_\mathrm{var} \CF
	\, , }
	\]

   \vspace{-2mm} 
\noindent 	where $\mathcal{EL}$ is the (smooth) Euler--Lagrange differential operator \eqref{EulerLagrangerSmoothSetMap} of $\CL$\,.
	\begin{proof}
	The proof follows exactly as in Prop. \ref{Crit(S)functoriality} up to Eq. \eqref{VariationOfLagrangianDensityOnRkPoints}
   \vspace{-1mm} 
 \begin{align*}
\partial_t \big(L\circ j^\infty \phi^k_t \big) \big|_{t=0}
&=\big\langle \CE \CL(\phi^k_0)\, , \, \partial_t \phi^k_t|_{t=0} \big\rangle 
\; -\;  
 \dd_M \theta_\CL\big(\partial_t \phi^k_t |_{t=0}\big)\quad \in \quad  \Omega^d_\mathrm{Vert}(M)\big(\FR^k\big)\, .
\end{align*}

  \vspace{-1mm} 
\noindent  The criticality condition is obtained by demanding the criticality of each of the charges of $\CL$. 
Integrating over any compact $\Sigma^d\hookrightarrow M$, and for any variation $\phi_t$ as in Def. \ref{CriticalRkPointsofCharge}, we have
  \vspace{-1mm} 
 \begin{align*}
\partial_{t} S_{\Sigma^d}(\phi^k_t)|_{t=0}&= 
\int_{\Sigma^d} \big\langle \mathcal{E}\mathcal{L}(\phi^k_0)\, , \, \partial_t \phi^k_t|_{t=0} \big\rangle 
\; - \; \int_{\Sigma^d} \dd_M \theta_\CL\big(\partial_t \phi^k_t |_{t=0}\big)\\
 &= \int_{\Sigma^d} \big\langle \mathcal{E}\mathcal{L}(\phi^k_0)\, , \, \partial_t \phi^k_t|_{t=0} \big\rangle
 \; - \; \int_{\partial \Sigma^d} (j^\infty \phi^k_0)^* \iota_{\partial_t j^\infty \phi^k_t |_{t=0}} 
 \theta_L \\
 &= \int_{\Sigma^d} \big\langle \mathcal{E}\mathcal{L}(\phi^k_0)\, , \, \partial_t \phi^k_t|_{t=0} \big\rangle 
\end{align*}

\vspace{-1mm}
\noindent 
where the boundary term in this case vanishes since $j^\infty \phi^k_t |_{\partial \Sigma^d}$ is constant in $t$ by assumption,
and hence the corresponding prolongated tangent vector $\partial_t j^\infty \phi^k_t |_{t=0}:M\rightarrow VJ^\infty F$ vanishes 
at the boundary $\partial \Sigma^d$. By Lem. \ref{LinePlotsRepresentTangentVectors} and its application to higher plots 
(see Eq. \eqref{PlotsOfTangentBundleViaCurvesofPlots} and its footnote), 
this holds for all tangent vectors $\partial_t\phi^k_t |_{t=0}= X_{\phi_0^k,\Sigma^d}\in T_{\phi^k_0}\CF$ 
whose support lies in the interior of $\Sigma^d$.
\footnote{In more detail, this entails that the tangent vector $X_{\phi_0,\Sigma^d}:M \rightarrow VF$ at $\phi_0\in \CF(*)$ is such that 
$ X_{\phi_0,\Sigma^d}(p)= 0 \in V_{\phi_0(p)}F $ if $p\notin \mathrm{int}(\Sigma^d)$, and similarly for tangent vectors over $\FR^k$-plots.}
Since this holds for \textit{all} such tangent vectors, it follows by the Fundamental 
Lemma of variational calculus that 
\vspace{-2mm}
\begin{align*}
	\phi_0^k \in \mathrm{Crit}(S_{\Sigma^d})(\FR^k)  \quad \iff \quad  \mathcal{E}\mathcal{L}(\phi^k_0)\big|_{\mathrm{int}(\Sigma^d)}
 =0_{\phi^k_0} \big|_{\mathrm{int}(\Sigma^d)} \;\; \in \; C^{\infty}(\FR^k)\hat{\otimes} \Gamma_{\mathrm{int} (\Sigma^d)}(\wedge^d T^*M \otimes V^*F)\, ,
\end{align*}

\vspace{-1mm}
\noindent 
i.e., $\phi_0^k$ is a critical plot of the charge over $\Sigma^d$ if and only if it satisfies the Euler--Lagrange equations on 
the interior of $\Sigma^d$.

Now a plot $\phi^k_0$ is a critical plot for $\CL$ (Def. \ref{CriticalRkPointsofLagrangian}) if it is a critical plot of the
corresponding charge \textit{for all} such $\Sigma^d\hookrightarrow M$, and hence if and only if 
$$
\CE \CL(\phi_0^k) \big|_{\mathrm{int}(\Sigma^d)}=0\big|_{\mathrm{int}(\Sigma^d)} 
$$
\textit{for all} compact submanifolds $\Sigma^d\hookrightarrow M$. Taking the collection of (closed) balls $\bar{B}_p^d\hookrightarrow M$ 
around any point $p\in M$, we may cover the spacetime $M$ with compact submanifolds. Hence $\phi^k_0$ is a critical plot of $\CL$ 
if any only if it satisfies the Euler--Lagrange equations on an open neighborhood $B_p^d\hookrightarrow M$ of every point. 
Since the Euler--Lagrange differential operator $\CE \CL$ is furthermore a map of sheaves in the petit sense of 
Rem. \ref{FieldSpaceAsASheafOfSheaves}, $\phi^k_0$ satisfies the Euler--Lagrange equations \textit{locally} if and only if 
it satisfies them \textit{globally}. In other words
\vspace{-1mm}
\begin{align*}
	\phi_0^k \in \mathrm{Crit}(\CL)(\FR^k) \quad \iff  \quad 
 \mathcal{E}\mathcal{L}(\phi^k_0)=0_{\phi^k_0} \;\; \in \; C^{\infty}(\FR^k)\hat{\otimes} \Gamma_M(\wedge^d T^*M \otimes V^*F)\, ,
\end{align*}

\vspace{-1mm}
\noindent 
and so the rest of the claim follows as in Cor. \ref{CriticalSmoothSetPullback}.
 \end{proof}
\end{proposition}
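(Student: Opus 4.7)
The plan is to reduce the statement to the compact case (Prop.~\ref{Crit(S)functoriality}) applied locally, and then glue via the petit sheaf structure on $M$ (Rem.~\ref{FieldSpaceAsASheafOfSheaves}) together with the fact that $\mathcal{EL}$ is itself a morphism of such sheaves. Concretely, I would first reuse the pointwise integration-by-parts identity already established in the proof of Prop.~\ref{Crit(S)functoriality}, namely
\[
\partial_t\bigl(L\circ j^\infty \phi^k_t\bigr)\big|_{t=0}
= \bigl\langle \mathcal{EL}(\phi^k_0),\,\partial_t\phi^k_t|_{t=0}\bigr\rangle
\;-\; d_M\,\theta_\mathcal{L}\bigl(\partial_t\phi^k_t|_{t=0}\bigr)
\quad \in \Omega^d_{\mathrm{Vert}}(M)(\mathbb{R}^k),
\]
which is valid regardless of whether $M$ is compact because it is produced fiberwise from the bundle-level decomposition $d_V L = \delta_V L + d_H\theta_L$ pulled back by $j^\infty\phi^k_0$.

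Next I would pass to charges. Fix a compact submanifold $\Sigma^d\hookrightarrow M$ and integrate the identity above over $\Sigma^d$, then apply Stokes' Theorem on field space (Cor.~\ref{StokesTheoremLocalFunctions}):
\[
\partial_t S_{\Sigma^d}(\phi^k_t)|_{t=0}
= \int_{\Sigma^d}\bigl\langle \mathcal{EL}(\phi^k_0),\,\partial_t\phi^k_t|_{t=0}\bigr\rangle
\;-\;\int_{\partial\Sigma^d} (j^\infty\phi^k_0)^*\,\iota_{\partial_t j^\infty\phi^k_t|_{t=0}}\theta_L.
\]
The boundary integrand vanishes because, by the very definition of $\mathcal{F}_{\phi^k,\partial\Sigma^d}$ (Def.~\ref{CriticalRkPointsofCharge}), $j^\infty\phi^k_t$ is constant in $t$ along $\partial\Sigma^d$, so the prolongated tangent vector $\partial_t j^\infty\phi^k_t|_{t=0}$ vanishes there (as in Ex.~\ref{InfinityJetVerticalVectorFromProlongationofPlot}). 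Lem.~\ref{LinePlotsRepresentTangentVectors} and its $\mathbb{R}^k$-plot variant show that the admissible class of variations $\partial_t\phi^k_t|_{t=0}$ exhausts all tangent vectors $X_{\phi^k_0,\Sigma^d}\in T_{\phi^k_0}\mathcal{F}$ whose support is compactly contained in the interior of $\Sigma^d$; the Fundamental Lemma of the calculus of variations then yields
\[
\phi^k_0\in\mathrm{Crit}(S_{\Sigma^d})(\mathbb{R}^k)\iff
\mathcal{EL}(\phi^k_0)\big|_{\mathrm{int}(\Sigma^d)}=0_{\phi^k_0}\big|_{\mathrm{int}(\Sigma^d)}.
\]

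The gluing step is where the essence of the noncompact generalization lives. Taking closed balls $\bar{B}^d_p\subset M$ as $p$ ranges over $M$ provides a cover by compact submanifolds. So $\phi^k_0\in\mathrm{Crit}(\mathcal{L})(\mathbb{R}^k)$ iff $\mathcal{EL}(\phi^k_0)$ vanishes on every open ball $B^d_p$, hence iff it vanishes locally everywhere; since $\mathcal{EL}$ is obtained from a differential operator at the level of the petit sheaf $\Gamma_{(-)}(F)\to\Gamma_{(-)}(\wedge^d T^*M\otimes V^*F)$ on $M$ (Rem.~\ref{FieldSpaceAsASheafOfSheaves}), local vanishing is equivalent to global vanishing, giving the Euler--Lagrange characterization
\[
\mathrm{Crit}(\mathcal{L})(\mathbb{R}^k)=\bigl\{\phi^k_0\in\mathcal{F}(\mathbb{R}^k)\ \big|\ \mathcal{EL}(\phi^k_0)=0_{\phi^k_0}\bigr\}.
\]
From this characterization, verticality of $\mathcal{EL}$ under pullbacks along $f:\mathbb{R}^{k'}\to\mathbb{R}^k$ gives functoriality exactly as in Prop.~\ref{Crit(S)functoriality}, and the pullback/intersection presentation follows as in Cor.~\ref{CriticalSmoothSetPullback} since all finite limits exist in $\SmoothSets$ and are computed plot-wise.

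The main obstacle I anticipate is the gluing/localization step: one needs the admissible variations at each $\Sigma^d$ (those fixing the infinite jet on $\partial\Sigma^d$) to be ``rich enough'' on the interior to saturate the Fundamental Lemma, and one must verify that the resulting local conclusions assemble into a single global statement. Both are resolved by choosing $\Sigma^d=\bar{B}^d_p$ with arbitrarily small radius and relying on the petit-sheaf nature of $\mathcal{EL}$, but this is the only place where the argument really differs from the compact case.
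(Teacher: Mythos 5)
Your proposal is correct and follows essentially the same route as the paper's proof: the same integration-by-parts identity, the same reduction to charges over compact $\Sigma^d$ with vanishing boundary term, the same application of the Fundamental Lemma to get vanishing of $\mathcal{EL}$ on interiors, and the same gluing via closed balls and the petit-sheaf nature of $\mathcal{EL}$, concluding with Cor. \ref{CriticalSmoothSetPullback}. No gaps to report.
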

 
\begin{remark}[{\bf Criticality on spacetimes with boundary}]\label{Rem-boundary}
Field theories on spacetimes with boundary require further care, both in the above mathematical treatment and the corresponding physical interpretation. 
We briefly summarize the relevant cases at the level of fields (with general plots following similarly), without expanding on the fine details here.

\smallskip 
\noindent {\bf (i)} Consider first the case of a local field theory $(\CF,\CL)$ on a compact spacetime $M$ with (connected) boundary
$\partial M \neq \emptyset $. In this case, one may consistently compose with integration over $M$ over the full field space, and 
the local action functional is well-defined 
 $S=\int_M \circ \CL : \CF \rightarrow y(\FR)$. Naively, one has two similar but different definitions for the criticality of the action functional.
 As we will see, each of these corresponds to a different choice of \textit{boundary conditions}.

 {\bf (a)} We may use Def. \ref{CriticalRkPoints} which has no restrictions on the variations along the boundary $\partial M$, in which case the criticality is equivalent to the vanishing of \textit{both} the Euler--Lagrange equations $\CE \CL(\phi)|_{\mathrm{int}(M)}= 0_{\phi} |_{\mathrm{int}(M)}$ on the interior of $M$ \textit{and} of the integral
\vspace{-1mm}
$$\int_{\partial M} \theta_\CL\big(\partial_t \phi_t |_{t=0}\big) := \int_{\partial M} (j^\infty \phi)^* \iota_{\partial_t j^\infty \phi_t |_{t=0}} 
 \;  \theta_L \, 
$$ 

\vspace{0mm} 
\noindent along the boundary, for all 1-parameter variations. The vanishing of the latter is equivalent to a set of conditions relating the field 
and its jets along the boundary -- which depend on the explicit form of Lagrangian and in particular the `boundary term' $ \theta_L$. 
Since this term is defined up to the addition of $\dd_H$-exact $(d-1,1)$-forms on $J^\infty_M$, this means that \textit{in the presence of a boundary} 
its \text{choice} in the decomposition $\dd_V L = EL + \dd_H \theta$ should actually be considered as data entering the definition of the field theory $(\CF,\CL)$.
Such boundary conditions are called ``\textit{natural}'' or ``\textit{Neumann}'' boundary conditions, and may also be expressed as the 
vanishing of a smooth map $\mathrm{Nm}(\phi|_{\partial M})=0$. The critical smooth set is then again a pullback, given by the intersection
of those fields that satisfy the Euler--Lagrange equations on the interior $\CF_{\CE \CL \circ |_{\mathrm{int}(M)}=0}\hookrightarrow \CF$ 
and those that satisfy the Neumann boundary conditions $\CF_{\mathrm{Nm}\circ |_{\partial M}=0 }\hookrightarrow \CF$,
 
 \vspace{-2mm}
 \[
\xymatrix@R=.9em@C=1.6em  {\mathrm{Crit}^{\mathrm{Nm}}(S) \ar[d] \ar[rr] &&   \CF_{\mathrm{Nm}\circ |_{\partial M}=0 } \ar[d]
	\\ 
	\CF_{\CE \CL \circ |_{\mathrm{int}(M)}=0}\ar[rr]  && \CF
	\, . } 
\]
 
 \vspace{-2mm}
 \noindent We note that each of the smooth subspaces $\CF_{\CE \CL \circ |_{\mathrm{int}(M)}=0}$ and $\CF_{\mathrm{Nm}\circ |_{\partial M}=0 }$ are
 themselves a pullback, respectively, given by the vanishing of the corresponding smooth maps.

 {\bf(b)} We may instead use Def. \ref{CriticalRkPointsofCharge} applied on $M$ itself, viewed tautologically as a compact submanifold of itself,
 which restricts the (jets of) variations to vanish along the boundary. As we will see, this is in fact a special case of Def. \ref{CriticalRkPoints},
 applied however to a smooth subspace of the smooth field space $\CF$. The interpretation in this case is that one has made a $\textit{choice}$ of 
 boundary values for the dynamical fields $\phi$ and their derivatives. \footnote{More precisely if the resulting Euler--Lagrange operator is of order $k$, 
 the boundary jet values are fixed only up to order $k-1$. This is to avoid overdetermining the resulting PDE, and (hopefully) resulting 
 in a well-posed boundary value problem. The corresponding differences in the following discussion are simply in notation. }
Indeed, let $ \psi : \partial M \rightarrow J^\infty_M F$ be a smooth section over the boundary (i.e., covering $\partial M\hookrightarrow M$), 
thought of as the fixed values of the (jets) of the fields we consider (see also Def. \ref{SectionsOnInfinitesimalNeighborhoodOfSubmanifold}). 
Such boundary conditions on the fields are called ``\textit{imposed}'' or ``\textit{Dirichlet}'' boundary conditions. In mathematical terms, 
this means we are not actually considering the smooth action functional $S:\CF\rightarrow y(\FR)$ on the full space of fields, 
but instead its restriction on the smooth subspace
$$
\CF^{\mathrm{Dir}}_{\psi} \longhookrightarrow \CF 
$$
with $\FR^k$-plots of constant infinite jet value 
$j^\infty \psi$ on the boundary, 
$$
\CF^{\mathrm{Dir}}_{ \psi}(\FR^k)=\big\{\phi^k \in \CF(\FR^k) \; \big|\; j^\infty \phi^k|_{\partial M}(x,-)=  \psi (-) \hspace{0.2cm} \forall x\in \FR^k \big\}\,.
$$
It is now immediate that applying the general Def. \ref{CriticalRkPoints} on the smooth action functional $S:\CF^{\mathrm{Dir}}_{ \psi} \rightarrow y(\FR)$ 
immediately recovers exactly that of Def. \ref{CriticalRkPointsofCharge} applied on $S:\CF\rightarrow y(\FR)$ as a charge on $M$. In this case, 
the boundary term appearing in the variation of the functional vanishes by \textit{definition}, and so criticality is equivalent to the vanishing
of the Euler--Lagrange on the interior $\CE \CL (\phi)|_{\mathrm{int}(M)}=0_\phi |_{\mathrm{int}(\Sigma^d)}$. The critical smooth set is yet again given by a pullback,
\vspace{-2mm}
 \[
\xymatrix@R=1.6em@C=3em  {\mathrm{Crit}^{\mathrm{Dir}}_{ \psi}(S) \ar[d] \ar[rr] &&   \CF^{\mathrm{Dir}}_{ \psi} \ar[d]^{\CE \CL \circ |_{\mathrm{int}(M)}}
	\\ 
	\CF^{\mathrm{Dir}}_{ \psi}\ar[rr]^-{0_\CF \circ |_{\mathrm{int}(M)}}  && T^*_{\mathrm{var}} \CF |_{\mathrm{int}(M)}
	\, . } 
\]
 
 \vspace{-2mm}
 \noindent
One can consider all possible Dirichlet boundary values simultaneously, and hence demand instead the criticality over the smooth subspace 
$$
\CF^{\mathrm{Dir}}\longhookrightarrow \CF
$$
with $\FR^k$-plots of constant value on the boundary $\CF^{\mathrm{Dir}}(\FR^k) = \underset{ \psi :\partial M \rightarrow J^\infty_M F}{\coprod} 
\CF_{ \psi}^\mathrm{Dir}(\FR^k)$. This is a smooth subspace of the space of field $\CF$ which has the \textit{same points}
$\CF^{\mathrm{Dir}}(*)=\CF(*)$, but \textit{different higher} plots (and in particular less of them). The critical smooth set is given analogously by 
\vspace{-2mm}
 \[
\xymatrix@R=1.6em@C=3em   {\mathrm{Crit}^{\mathrm{Dir}}(S) \ar[d] \ar[rr] &&   \CF^{\mathrm{Dir}}\ar[d]^{\CE \CL \circ |_{\mathrm{int}(M)}}
	\\ 
	\CF^{\mathrm{Dir}}\ar[rr]^-{0_\CF\circ |_{\mathrm{int}(M)}}  && T^*_{\mathrm{var}}\CF |_{\mathrm{int}(M)}
	\, , } 
\]
 
 \vspace{-2mm}
 \noindent by which it also follows that $\mathrm{Crit}^{\mathrm{Dir}}(S)(\FR^k) = \underset{\psi}{\coprod} \mathrm{Crit}_{ \psi}^\mathrm{Dir}(S)(\FR^k)$. 
 We note that each of the smooth subspaces  $\CF_{ \psi}^\mathrm{Dir}(\FR^k)$ and $\CF^{\mathrm{Dir}}$ may also be seen as pullbacks, respectively.

\vspace{1mm} 
\noindent {\bf (ii)} If $M$ is noncompact with (connected) boundary $\partial M\neq \emptyset$, the criticality condition 
is defined locally by considering compact submanifolds as in Def. \ref{CriticalRkPointsofLagrangian}. However, there is
the following important distinction due to the appearance of the \textit{ambient} boundary. For submanifolds contained
in the \textit{interior} of $M$, one uses Def. \ref{CriticalRkPointsofCharge}, while for submanifolds intersecting the
\textit{ambient} boundary $\partial M$, there is a choice of boundary conditions -- as in the compact case. The resulting 
smooth critical sets follow the above description, analogously. One way to think of the distinction between the ambient 
and interior boundaries is as follows: For any point $p$ of a boundary $\partial \Sigma^d \subset \mathrm{int}(M)$ lying 
in the interior of the spacetime, there necessarily exists a closed ball $\bar{B}_p^d$ around it with $B_p^d \subset \mathrm{int}(M)$. 
For a given field configuration $\phi$, the criticality of the action on $\bar{B}_p^d$ implies that $\phi$ satisfies the
Euler--Lagrange equations at $p$. This condition is then to be viewed as a Dirichlet boundary condition for the criticality
of the charge over $\Sigma^d$, hence the restriction of the (jets of) the variations Def. \ref{CriticalRkPointsofCharge}. 
Crucially, this line of thought breaks down for points in the ambient boundary $p\in \partial M$, allowing for 
a \textit{choice} of boundary condition as outlined above. 

\vspace{1mm} 
\noindent {\bf (iii)} If the boundary has several connected components $\partial M= \coprod_{i\in I} N_i $, one may choose different 
boundary conditions independently for each component, and the description of the smooth critical sets is modified accordingly.

\vspace{1mm} 
\noindent {\bf (iv)} For any of the cases above, and any choice of boundary condition, it is not guaranteed that the resulting boundary
value problem will have a solution, and even if it does, it may not be unique. Such questions are within the realm of analysis of PDEs,
which falls outside the scope of this manuscript.
\end{remark} 

\begin{remark}[\bf Criticality via moduli space of 1-forms]\label{CriticalityViaModuliSpaceOf1-forms}
We may reformulate the discussion of criticality condition (Def. \ref{CriticalRkPoints}) in a slightly more functorial way, by employing
smooth 1-forms on $\CF$ in the sense of Def. \ref{nformsonSmoothSet}. Unfortunately, the situation is more convoluted than the naive 
interpretation of the notation suggests. Recall (Def. \ref{DifferentialofSmoothMap})
the de Rham differential 1-form $\dd S\in \Omega^1(\CF)$  of a smooth map $S:\CF\rightarrow y(\FR)$ 
may be defined as the composition
\vspace{-2mm} 
\begin{align*}
	\dd S \;:\; \CF \xlongrightarrow{\;\; S\;\;} y(\FR)\cong \mathbold{\Omega}_{\mathrm{dR}}^0 
 \xlongrightarrow{\;\; \dd \;\;} \mathbold{\Omega}_{\mathrm{dR}}^1 \, .
\end{align*}

\vspace{-2mm} 
\noindent
In particular, when evaluating on $\FR^1$-plots, this takes the form 
\vspace{-2mm} 
\begin{align*}
	\CF(\FR^1)&\longrightarrow \Omega^1(\FR^1) \nn \\
	\phi_t &\longmapsto {\partial_t} S(\phi_t) \dd t ,
\end{align*}

\vspace{-2mm} 
\noindent
naturally encoding the partial derivative of  \eqref{PartialDerivativeOfActionOnFR^1plots}. This is naturally extended 
to a smooth map out of the path space $\mathbold{P}(\CF)=[\FR^1,\CF]$ of Ex. \ref{PathSpaceOfFields} 
\vspace{-4mm} 
\begin{align*}
	\dd_t S
      \;:\; 
    \mathbold{P}(\CF) 
    &\xlongrightarrow{\quad S \quad} 
    \mathbold{\Omega}_{\mathrm{dR, \mathrm{Vert}}}^0
    (\FR^1_t) \xlongrightarrow{\quad \dd_t \quad} \mathbold{\Omega}_{\mathrm{dR},\mathrm{Vert}}^1(\FR^1_t) 
 \\
    \phi^k_t & \quad \longmapsto  \quad S(\phi^k_t) \qquad \quad 
   \longmapsto \quad \partial_t S(\phi^k_t) \cdot \dd t \, ,
\end{align*}

\vspace{-2mm} 
\noindent
where we have used 
$ 
  [\FR^1_t,\FR]
    \cong 
  y(\FR)\hat{\otimes} 
  C^\infty(\FR^1_t) 
  \cong 
  \mathbold{\Omega}_{\mathrm{dR}, \mathrm{Vert}}^{0}(\FR^1_t)
$. 
The smooth (vertical) de Rham differential 
$\dd_t : \mathbold{\Omega}_{\mathrm{Vert}, \mathrm{dR}}^0(\FR^1_t)
\rightarrow \mathbold{\Omega }_{\mathrm{Vert},\mathrm{dR}}^1(\FR^1_t)\cong y(\FR)\hat{\otimes}\Omega^1(\FR^1_t)
$ 
here coincides with that of  \eqref{SmoothVerticaldeRham}. This composition naturally encodes all the partial derivatives along 1-parameter
families of $\FR^k$-plots appearing in Def. \ref{CriticalRkPoints}. However, this is somewhat misleading since neither of these maps is 
exactly what is denoted as the variation `$\delta S$' in the physics literature. Indeed :

\vspace{1mm} 
\noindent {\bf (i)} The `variation $\delta S$' of the smooth function $S:\CF\rightarrow y(\FR)$, viewed as acting on paths of (plots) of fields,
may be encoded  by further composing with the smooth evaluation map
\vspace{-2mm}
\begin{align*}
\mathrm{ev}_{t=0} \;:\; \Omega^1_{\mathrm{Vert}}(\FR^1_t) &\longrightarrow y(\FR) \otimes(T^*_0\FR^1) \cong y(\FR)\\
f_t^k\cdot \dd t &\longmapsto f_{t=0}^k\cdot \dd t |_{t=0}\, ,
\end{align*}
\vspace{-2mm}
and so via the smooth map
\begin{align*}
\mathrm{ev}_{t=0} \circ \dd_t S \;:\; \mathbold{P}(\CF) &\longrightarrow y(\FR)\otimes T^*_0\FR^1\cong y(\FR) \\
\phi^k_t &\longmapsto \partial_t S(\phi^k_t) |_{t=0}\cdot \dd t|_{t=0}\, .
\end{align*}

\vspace{-1mm} 
\noindent {\bf (ii)} Comparison with Def. \ref{CriticalRkPoints} shows that the critical $\FR^k$-plots are \textit{not} given by those that vanish 
under $\dd S$,  $\dd_t S$ or $\mathrm{ev}_{t=0}\circ \dd _t$. The latter of the three encodes the variation $\delta S$, but the criticality 
vanishing condition is on $\FR^k$-plots of the actual field space $\CF$ and \textit{not} of $\mathbold{P}(\CF)$.  Direct inspection gives  
\begin{align*}
		\mathrm{Crit}(S)(\FR^k)\cong  \Big\{\phi^k\in \CF(\FR^k) \; \big{|}\; \delta_{\phi^k_t} S = \dd_t S (\phi^k_t)|_{t=0}=0 , \hspace{0.3cm}
  \forall \, \phi^k_t\in  \CF_{\phi^k}(\FR^k\times \FR^1_t) \Big\} \, .		
\end{align*}

\vspace{1mm} 
\noindent {\bf (iii)} For a generic smooth map $S:\CF\rightarrow y(\FR)$, by Ex. \ref{TangentVectorsPathsOfFieldsAndDerivations}, the composed map 
\begin{align*}
\mathrm{ev}_{t=0}\circ \dd_t S \; :\;  \mathbold{P}(\CF) \longrightarrow  y(\FR)
\end{align*}
may not necessarily depend \textit{only} on the tangent vector at $\phi^k$ corresponding to each $\phi^k_t$, i.e., does not necessarily factor through
$\mathbold{P}(\CF) \rightarrow T\CF$ of Ex. \ref{PathSpaceOfFields}. For a local function S, e.g.
$S=\CP_{\Sigma^p}= \int_{\Sigma^p} P\circ j^\infty \in C^\infty_\mathrm{loc}(\CF)$, for which the above derivative does depend only on the 
corresponding tangent vector (Ex. \ref{TangentVectorsAndDerivationsofLocalFunctions}), then the variation does (uniquely) smoothly 
factor through the tangent bundle 
$$
\mathrm{ev}_{t=0}\circ \dd_t S \;:\; \mathbold{P}(\CF) \longrightarrow T\CF \longrightarrow  y(\FR) \, ,
$$
in which case it can be thought of as a $1$-form in the traditional sense of a real-valued map out of $T \CF$. Indeed, this can be made much more explicit
via the use of the local bicomplex of \cref{TheBicomplexOfLocalFormsSection} (see Lem. \ref{LocalDiffFormsAsDeRhamForms} and
Eq. \eqref{VerticalTransgressedDifferentialOfAction}). In the synthetic setting of \cite{GSS-2}, 
this factorization will be guaranteed, \footnote{This will be essentially \textit{by definition} of (infinitesimally) thickened smooth sets, 
whereby smooth maps are such that they `preserve the infinitesimal structure'.} a fact related to the classifying nature of 
$\mathbold{\Omega}_{\mathrm{dR}}^1$ (see Rem. \ref{ClassifyingFormsDoNotClassifyInSmoothSet}). 

\vspace{1mm} 
\noindent {\bf (iv)}  It is still not guaranteed that the criticality condition is functorial for an arbitrary local function. Intuitively, this is
due to the fact that in general, a smooth fiber-wise linear map $T\CF\rightarrow y(\FR)$ is not necessarily given by a section of some `cotangent bundle'. 
If the local function in question is a smooth action functional $S=\int_M \circ \CL$ on a compact manifold $M$, for some local Lagrangian $\CL$, 
then Prop. \ref{Crit(S)functoriality} shows that the induced variation map on the tangent bundle 
is given by
\begin{align*} 
T \CF &\longrightarrow y(\FR) \\
\partial_t \phi^k_t |_{t=0} & \; \longmapsto \; \int_M \langle \CE \CL(\phi^k)\, , \, \partial_t \phi^k_t |_{t=0} \rangle 
\end{align*}
which makes the criticality condition functorial. Cor. \ref{CriticalSmoothSetPullback} amplifies this by noting that the map out of the tangent 
bundle is equivalently determined via the section of the \textit{variational} cotangent bundle (Def. \ref{LocalCotangentBundleToFieldSpace}) over $\CF$

\vspace{-2mm} 
	\[ 
\xymatrix@R=1.6em@C=2.5em{ &&  T^*_{\mathrm{Var}} \CF  \ar[d]
	\\ 
	\CF\ar[rru]^{\CE \CL} \ar[rr]^>>>>>>>{\id_\CF} && \CF \, . 
}   
\]
We stress that this description fails for generic local functions and the induced map $T\CF\rightarrow y(\FR)$ 
obtained via point {\bf (iii)}. 

\vspace{1mm} 
\noindent {\bf (v)} The above discussion applies in all cases of spacetimes, compact or noncompact, with or without boundary, 
by demanding the local criticality of charges and encoding the potential boundary condition choices in  $\mathbold{P}(\CF)$.
\end{remark}

The critical set description may be employed to show that \textit{all} notions of (finite) symmetries of a Lagrangian field
theory $(\CL,\CF)$ (Def. \ref{FiniteSymmetryofLagrangianFieldTheory}) do in fact preserve the on-shell space of fields,
extending the result of Prop. \ref{LocalSymmetryPreservesOnshellSpace}.

\begin{proposition}[\bf Symmetries preserve on-shell space]
\label{SymmetryPreservesOnshellSpace} Any (spacetime covariant) symmetry $\CD:\CF\rightarrow \CF$ of a Lagrangian
field theory such that $\CL \circ \CD = f^*\circ (\CL + \dd_M \CK)$, where $\CK = K\circ j^\infty$ for some 
$K\in \Omega^{d-1,0}(J^\infty_M F)$ and $f:M\xrightarrow{\sim} M$, preserves the smooth critical subspace 
$\mathrm{Crit}(\CL) \hookrightarrow \CF$ of on-shell fields. That is, the diffeomorphism 
$\CD:\CF\rightarrow \CF$ restricts to a diffeomorphism
$$
\CD|_{\mathrm{Crit}(\CL)} \;:\; \mathrm{Crit}(\CL)\xlongrightarrow{\sim} \mathrm{Crit}(\CL)\, .
$$ 
\end{proposition}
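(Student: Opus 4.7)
The plan is to bypass the covariance identity $\CE(\CL\circ\CD)=\CE\CL\circ\CD$ used in Prop.~\ref{LocalSymmetryPreservesOnshellSpace} and instead exploit the critical-locus characterization of Prop.~\ref{Crit(L)functoriality}, which is already insensitive to whether the symmetry is purely local or only spacetime-covariant. Concretely, a plot $\phi^k$ belongs to $\mathrm{Crit}(\CL)(\FR^k)$ iff $\partial_t S_{\Sigma^d}(\phi^k_t)|_{t=0}=0$ for every compact submanifold $\Sigma^d\hookrightarrow M$ and every variation $\phi^k_t\in\CF_{\phi^k,\partial \Sigma^d}$; so given $\phi^k\in\mathrm{Crit}(\CL)(\FR^k)$, it suffices to verify this criterion at $\CD(\phi^k)$ and use invertibility of $\CD$ at the end.

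First, I would fix a compact $\Sigma^d\hookrightarrow M$ and an arbitrary variation $\tilde{\phi}^k_t\in\CF_{\CD(\phi^k),\partial \Sigma^d}$, and set $\tilde{\psi}^k_t:=\CD^{-1}(\tilde{\phi}^k_t)$. Using that $\CD$ acts on plots by $\CD(\psi)=D\circ j^\infty\psi\circ f^{-1}$ and the intertwining relation \eqref{ProlongatedActionOnSections}, one has $j^\infty\tilde{\psi}^k_t=(\pr D)^{-1}\circ j^\infty\tilde{\phi}^k_t\circ f$, where $\pr D$ is the prolongated bundle automorphism (Def.~\ref{ProlongationOfJetBundleMap}). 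Since $\pr D$ is an automorphism over $f$, the boundary constancy of $j^\infty\tilde{\phi}^k_t$ along $\partial \Sigma^d$ transfers to boundary constancy of $j^\infty\tilde{\psi}^k_t$ along $\partial f^{-1}(\Sigma^d)$, yielding $\tilde{\psi}^k_t\in\CF_{\phi^k,\,\partial f^{-1}(\Sigma^d)}$ with $\tilde{\psi}^k_{t=0}=\phi^k$.

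Next, I would plug this into the charge using $\CL\circ\CD=f^*\CL+f^*\dd_M\CK$, change variables along the diffeomorphism $f$ (assumed orientation-preserving; otherwise an orientation sign is absorbed into the pullback), and apply Stokes' theorem on the exact part:
\[
S_{\Sigma^d}(\tilde{\phi}^k_t)
\;=\;\int_{\Sigma^d}f^*\bigl(\CL(\tilde{\psi}^k_t)+\dd_M\CK(\tilde{\psi}^k_t)\bigr)
\;=\;S_{f^{-1}(\Sigma^d)}(\tilde{\psi}^k_t)+\int_{\partial f^{-1}(\Sigma^d)}\CK(\tilde{\psi}^k_t).
\]
Differentiating at $t=0$, the boundary term vanishes because $\CK=K\circ j^\infty$ depends only on $j^\infty\tilde{\psi}^k_t|_{\partial f^{-1}(\Sigma^d)}$, which is constant in $t$ by the previous paragraph. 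Hence
\[
\partial_t S_{\Sigma^d}(\tilde{\phi}^k_t)|_{t=0}
\;=\;\partial_t S_{f^{-1}(\Sigma^d)}(\tilde{\psi}^k_t)|_{t=0}
\;=\;0,
\]
where the last equality uses that $\phi^k=\tilde{\psi}^k_{t=0}\in\mathrm{Crit}(\CL)(\FR^k)$ and $f^{-1}(\Sigma^d)$ is again a compact submanifold. As $\Sigma^d$ and $\tilde{\phi}^k_t$ were arbitrary, Prop.~\ref{Crit(L)functoriality} gives $\CD(\phi^k)\in\mathrm{Crit}(\CL)(\FR^k)$. Applying the same argument to $\CD^{-1}$ (a symmetry with data $(f^{-1},\CK')$ obtained formally from the inverse of the defining relation) yields the reverse inclusion and hence a diffeomorphism $\CD|_{\mathrm{Crit}(\CL)}$.

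The main obstacle I expect is the careful bookkeeping of boundary data through $\CD^{-1}$: one must verify that $\CD$ genuinely sends the class of boundary-fixing variations around $\phi^k$ over $f^{-1}(\Sigma^d)$ bijectively onto the class of boundary-fixing variations around $\CD(\phi^k)$ over $\Sigma^d$, and that the exact-Lagrangian correction $\dd_M\CK$ really contributes no interior term after Stokes'. Both points ultimately come down to the fact that $\CD$ is covered by the jet-level automorphism $\pr D$ over $f$, so once the intertwining $j^\infty\circ\CD=\pr D\circ j^\infty\circ f^{-1}$ is used systematically, boundary constancy is preserved exactly and the spacetime-covariant case reduces, term by term, to the compact-submanifold criticality of $\phi^k$ itself.
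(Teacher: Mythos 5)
Your strategy is the one the paper's own proof defers rather than executes: the written proof integrates over all of a compact $M$, where $\int_M f^*\omega = \pm\int_M \omega$ makes all $f$-bookkeeping invisible, and then asserts that the noncompact case is "completely analogous using the criticality of local charges." You are carrying out that charge-based case explicitly — transporting boundary-fixing variations through $\CD^{-1}$ via the prolongated automorphism, and disposing of the $\dd_M\CK$ correction by Stokes' theorem together with jet-constancy on the boundary. That is a legitimate and, in some ways, more informative route, since it makes visible exactly where the exact-Lagrangian term and the spacetime diffeomorphism interact with the boundary data.

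However, there is a concrete error in your displayed change of variables, and it is precisely where the "completely analogous" claim needs care. For a (orientation-preserving) diffeomorphism $f$ and a top-degree form one has $\int_{\Sigma^d} f^*\omega = \int_{f(\Sigma^d)}\omega$, \emph{not} $\int_{f^{-1}(\Sigma^d)}\omega$; so the identity should read
\[
S_{\Sigma^d}(\tilde{\phi}^k_t)
\;=\;
S_{f(\Sigma^d)}(\tilde{\psi}^k_t)
\;+\;
\int_{f(\partial\Sigma^d)}\CK(\tilde{\psi}^k_t)\,.
\]
Meanwhile your jet-transfer computation, which is correct as stated ($j^\infty\tilde{\psi}^k_t=(\pr D)^{-1}\circ j^\infty\tilde{\phi}^k_t\circ f$), places the $t$-constancy of $j^\infty\tilde{\psi}^k_t$ on $f^{-1}(\partial\Sigma^d)$. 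The bulk integral is therefore over $f(\Sigma^d)$ while the variation is boundary-fixing only over $\partial f^{-1}(\Sigma^d)$; these coincide only when $f^2=\id$, so as written neither the vanishing of the Stokes boundary term nor the appeal to criticality of $\phi^k$ over the relevant submanifold goes through. The mismatch is repairable, but the repair forces you to confront a convention tension already latent in Def.~\ref{FiniteSymmetryofLagrangianFieldTheory}\,(ii): since $\CL$ is local, $\CL(\CD\psi)(q)$ depends on $\psi$ near $f^{-1}(q)$ while $f^*\big(\CL(\psi)\big)(q)$ depends on $\psi$ near $f(q)$, so consistency of the defining relation requires one of the two occurrences of $f$ to be inverted relative to the other. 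Once the conventions are aligned, the integration region and the jet-constancy region become the \emph{same} translate of $\Sigma^d$ and your argument closes; you should make that alignment explicit rather than letting the two regions be silently identified. A minor further point: you do not need $\CD$ to carry boundary-fixing variations \emph{bijectively} onto boundary-fixing variations — only the forward transport is used, with the reverse inclusion supplied, as you indicate, by running the same argument for $\CD^{-1}$ (which is again a spacetime covariant symmetry since $\pr D$ is an automorphism over $f$).
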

\begin{proof} For the sake of being concise, we prove this in the case of a compact spacetime with boundary, with the noncompact 
case being completely analogous using the criticality of local charges instead. The following holds for either Neumann 
or Dirichlet boundary conditions, and so we do not indicate the choice in the notation. 
Let $\CL^{\CD}:= \CL\circ \CD  : \CF \rightarrow \Omega^{d}_{\mathrm{Vert}}(M)$, then for any plot $\phi_t\in \CF( \FR^1_t)$ 
 \begin{align*}
    \partial_t \CL^\CD(\phi_t) |_{t=0} &= \partial_t \big( f^*\CL(\phi_t) +f^*\dd_M \CK(\phi_t) \big) |_{t=0} \\
    &= f^*\big( \partial_t \CL(\phi_t)  |_{t=0} +\dd_M \CK(\phi_t)  |_{t=0} \big) \\
    &= f^* \Big( \big\langle \CE (\CL +\dd_M \CK) (\phi_0), \, \partial_t \phi_t |_{t=0} \big\rangle - 
    \dd_M \theta_\CL\big(\partial_t \phi_t |_{t=0}\big) \Big) \\
    &= f^* \big\langle \CE\CL (\phi_0), \, \partial_t \phi_t |_{t=0} \big\rangle - 
    f^* \dd_M \theta_\CL\big(\partial_t \phi_t |_{t=0}\big)
 \end{align*}
 where in the second line we used that $f^*$ and $\partial_t$ commute as operators on $\Omega^d(M)\hat{\otimes}C^\infty(\FR^1)$, 
 in the third we employ Eq. \eqref{VariationOfLagrangianDensityOnPoints} for the local Lagrangian $\CL+\dd_M \CK$ and in the 
 fourth we use the fact that exact Lagrangians have trivial Euler--Lagrange operators via Eq. \eqref{ExactLagrangianTrivialELequation}. 
 
 Since $f:M\rightarrow M$ is a diffeomorphism, it preserves the integral of top-forms (up to a sign) and so 
 \begin{align*}
 \partial_t S^{\CD}(\phi_t) |_{t=0} &=\int_M \partial_t \CL^\CD(\phi_t) |_{t=0}  
 = \int_M f^* \big\langle \CE\CL (\phi_0), \, \partial_t \phi_t |_{t=0} \big\rangle 
 - \int_M  f^* \dd_M \theta_\CL\big(\partial_t \phi_t |_{t=0}\big)  \\
 &=\pm \int_M \big\langle \CE\CL (\phi_0), \, \partial_t \phi_t |_{t=0} \big\rangle
 - (\pm) \int_M \dd_M \theta_\CL\big(\partial_t \phi_t |_{t=0}\big) \\
 &= \pm \partial_t S(\phi_t)|_{t=0}\, .
 \end{align*} 
 It follows $\phi_0\in \CF$ is a critical point for $S^\CD$ if and only if it is so for $S$. The argument follows 
 identically for $\FR^k$-plots, and hence 
 \vspace{-2mm}
\begin{align}\label{CritL=CritLP}
 \mathrm{Crit}(\CL^\CD)\cong \mathrm{Crit}(\CL) \cong \CF_{\CE\CL=0}\, .
 \end{align}

 \vspace{-1mm}
Next, let $\phi_0\in \CF(*)$ and define $\psi_0:=\CD(\phi_0)\in \CF(*)$. Since $\CP:\CF\rightarrow \CF$ is a diffeomorphism, any 
line plot $\psi_t\in \CF(\FR^1_t)$ starting at $\psi_0$ is given (uniquely) by $\CD(\phi_t)$ for some $\phi_t\in \CF(\FR^1_t)$. 
By Eq. \eqref{VariationOfLagrangianDensityOnPoints}, it is again the case that
\begin{align*}
\partial_t \CL(\psi_t)|_{t=0}= \big\langle \CE\CL (\psi_0), \, \partial_t \psi_t |_{t=0} \big\rangle 
\; - \; \dd_M \theta_\CL\big(\partial_t \phi_t |_{t=0}\big) \, ,
\end{align*}
i.e., $\psi_0=\CD(\phi_0)$ is a critical point for $\CL$ if and only if it satisfies the Euler--Lagrange equations of $\CL$ and the chosen boundary conditions.
On the other hand, by definition
\vspace{-2mm} 
\begin{align*}
\partial_t \CL(\psi_t)|_{t=0} :&= \partial_t \big(\CL\circ \CD(\phi_t)\big) |_{t=0} \\&= \partial_t \CL^\CD(\phi_t)|_{t=0}
\end{align*}

\vspace{-2mm} 
\noindent
whose integral vanishes for all $1$-parameter variations if and only if $\phi_0 \in \mathrm{Crit}(\CL)$, by Eq. \eqref{CritL=CritLP}. 
That is,  $\phi_0$ is an on-shell field for $\CL$ if and only if $\CD(\phi_0)$ is an on-shell field for $\CL$,
\vspace{-2mm} 
$$
\phi_0\in \mathrm{Crit}(\CL) \hspace{1.2cm} \iff \hspace{1.2cm}  \CD(\phi_0) \in \mathrm{Crit}(\CL) \,. 
$$

\vspace{-2mm} 
\noindent The discussion follows identically for $\FR^k$-plots, and hence the result follows.
\end{proof}

\newpage 

\addtocontents{toc}{\protect\vspace{-10pt}}
\section{Infinitesimal symmetries} 

\subsection{Local infinitesimal symmetries and Noether's First Theorem}
\label{EvolutionaryVectorFieldsAndNoetherTheoremsSection}

Having introduced local symmetries (Def. \ref{FiniteSymmetryofLagrangianFieldTheory}) of a Lagrangian field theory, we now 
proceed to define their infinitesimal version. 
These are \textit{local} vector fields on field space that preserve the Lagrangian up to an exact local Lagrangian,
and whose interplay with a certain subspace of 
vertical vector fields on the infinite jet bundle immediately yields Noether's First and Second theorems. 
As Lem. \ref{InfinitesimalSpacetimeCovariantSymmetries} will show, local vector fields do in fact capture the infinitesimal
version of spacetime covariant symmetries as well, thus justifying the focus solely \textit{local} vector fields.

\begin{definition}[\bf  Local vector fields]
\label{LocalVectorFields}
A smooth vector field $\CZ\in \CX(\CF)=\Gamma_\CF(T\CF)$ on field space is \textit{local} if it is given by
\vspace{-2mm} 
$$
\CZ=Z\circ j^\infty 
$$ 

\vspace{-2mm} 
\noindent for some bundle map 
\vspace{-1mm} 
\[ 
\xymatrix@C=2.8em@R=.2em  {J^\infty_M F \ar[rd]_{\pi^\infty_0} \ar[rr]^Z &   &  VF \ar[ld]
	\\ 
& F & 
}   
\]

\vspace{-1mm} 
\noindent  over the total space $F$ of the field bundle $F\rightarrow M$, via Lem. \ref{DiffOpsAsSmoothMaps}.
The subset of local vector fields is denoted by $\CX_{\mathrm{loc}}(\CF)\subset \CX(\CF)$.
\end{definition}
As usual, the name is justified since for each $\phi\in \Gamma_M(F)$ 
the value of the tangent vector $\CZ_\phi$ at each $x\in M$ depends only `locally' on $\phi$, via its jet 
$j^\infty_x \phi$ at $x\in M$. By Lem. \ref{FunctionsOnInftyJetBundle}, such bundle maps are locally represented by (finite) sums
\vspace{-2mm} 
\begin{align}
Z=Z^{a}\frac{\partial}{\partial u^a}\, ,     
\end{align}

\vspace{-2mm} 
\noindent where $\{Z^{a}\}\subset C^\infty(J^\infty_M F)$ are (locally defined) smooth functions on the infinite jet bundle, and 
$\{\frac{\partial}{\partial u^a}\}$ is the local coordinate basis for vertical tangent vectors on $F$. Thus in the physics literature, by abuse of notation
as in Ex. \ref{DiffeomorphismForVector-valuedFieldTheoryViaTarget} and \ref{DiffeomorphismForVector-valuedFieldTheoryViaBase},  
such a local vector field $\mathcal{Z} \in \CX(\CF)$ is usually denoted by 
\vspace{-2mm} 
\begin{align}\label{VectorFieldOnFieldSpaceAbuseOfNotation}
\CZ(\phi)&=\CZ^a(\phi)\cdot \frac{\delta}{\delta \phi^a}= Z^a\big(\phi, \{\partial_I \phi\}_{|I|\leq k}  \big) \cdot \frac{\delta}{\delta \phi^a}\, ,  
\hspace{1cm} \mathrm{or} \hspace{1cm}  \delta_Z \phi^a =\CZ^a(\phi)=  Z^a\big(\phi, \{\partial_I \phi\}_{|I|\leq k}\big) \, ,
\end{align} 
with the latter thought of as a smooth (and local) `infinitesimal transformation of the field'. The latter notation will be fully 
justified in \cref{TheBicomplexOfLocalFormsSection} and Lem. \ref{LocalCartanCalculus}. Recall, we have already introduced examples 
of such local vector fields in Ex. \ref{DiffeomorphismForVector-valuedFieldTheoryViaTarget} and Ex. \ref{DiffeomorphismForVector-valuedFieldTheoryViaBase}, 
where the corresponding bundle maps factor globally through $J^0_M F$ and $J^1_M F$, respectively.
Given their importance, the bundle maps inducing local vector fields are given a special name. 
\begin{definition}[\bf  Evolutionary vector field]\label{EvolutionaryVectorFields}
The set of \textit{evolutionary vector fields} $\CX_{\mathrm{ev}}(J^\infty_M F)$ is the set of smooth bundle maps 
\vspace{-2mm} 
\[ 
\xymatrix@C=2.8em@R=.2em  {J^\infty_M F \ar[rd]_{\pi^\infty_0} \ar[rr]^Z &   &  VF \ar[ld]
	\\ 
& F & 
}   
\]

\vspace{0mm} 
\noindent  over the total space $F$ of the field bundle $F\rightarrow M$.
\end{definition}

\begin{example}[\bf Differentiating finite local diffeomorphisms]\label{DifferentiatingFiniteLocalDiffeomorphisms}
Consider a smooth 1-parameter family of \textit{local} diffeomorphisms $\CP_t\in [\CF,\CF](\FR^1)$ starting at the identity $\CP_{0}=\id_\CF$, via Ex. \ref{VectorFieldsAsInfinitesimalDiffeomorphisms}, and so such that $\CP_t:= P_t \circ j^\infty $ for some smooth 1-parameter family of bundle maps
\vspace{-2mm} 
\[ 
\xymatrix@C=2.8em@R=.2em  {\FR^1\times J^\infty_M F \ar[rd]\ar[rr]^{P_t} &   &  F\;, \ar[ld]
	\\ 
& M & 
}   
\]

\noindent with $P_{t=0}=\pi^\infty_0:J^\infty_M F\rightarrow F$. Running through the differentiation of Ex. \ref{VectorFieldsAsInfinitesimalDiffeomorphisms} 
(also carried out explicitly in the particular cases of Ex. \ref{DiffeomorphismForVector-valuedFieldTheoryViaTarget} and Ex. \ref{DiffeomorphismForVector-valuedFieldTheoryViaBase}), 
the interested reader may verify that its 
infinitesimal version, i.e., the induced vector field $\partial_t \CP_t |_{t=0}\in \CX(\CF)$ is {\it local} in the sense of Def. \ref{LocalVectorFields}. 
In particular,
\vspace{-1mm} 
$$
\partial_t \CP_t |_{t=0} = Z\circ j^\infty \, ,
$$
where $Z=\partial_t P_t |_{t=0}$ is the induced evolutionary vector field 
\vspace{-2mm} 
\[ 
\xymatrix@C=2.8em@R=.2em  {J^\infty_M F \ar[rd]_{\pi^\infty_0} \ar[rr]^{Z} &   &  VF \ar[ld] \, .
	\\ 
& F & 
}   
\]

\vspace{-2mm} 
\noindent
\noindent In other words, local vector fields are indeed the infinitesimal version of 1-parameter local diffeomorphisms.
\end{example}

Strictly speaking, evolutionary vector fields are \textit{not} vector fields on $J^\infty_M F$, since they do not take values in $T(J^\infty_M F)$. 
Nevertheless, they may be uniquely `prolongated' to a Lie sub-algebra of vertical vector fields on $J^\infty_M F$ whose explicit form and properties 
rigorously justify several formulas appearing in field theory. For evolutionary vector fields arising by differentiating $1$-parameter families of 
bundle maps $P_t:J^\infty_M F\rightarrow F$ over M as in Ex. \ref{DifferentiatingFiniteLocalDiffeomorphisms}, the corresponding vector fields on 
$J^\infty_M F$ arise by differentiating the corresponding prolongated families $\pr P_t :J^\infty_M F \rightarrow J^\infty_M F$ 
from Def. \ref{ProlongationOfJetBundleMap}. More generally, we have the following result \cite{Anderson89}.

\vspace{2mm} 
\begin{proposition}[\bf Prolongated evolutionary vector fields]
\label{ProlongationEvolutionaryVectorFields}
$\,$

\noindent {\bf (i)}
For any $Z\in \CX_{\mathrm{ev}}(J^\infty_M F)$, there exists a unique `prolongated' vertical vector field 
$\pr Z:J^\infty_M F\rightarrow VJ^\infty_M F$, i.e.,
such that
\begin{itemize}
\item[{\bf (a)}] it projects to $Z$ 
\vspace{-1mm} 
$$
\dd\pi^\infty_0 \circ {\rm pr}Z = Z \, ,
$$
\item[{\bf (b)}] it commutes with the horizontal differential 
$$
[\iota_{{\rm pr}Z}, \dd_H] =0 \, .
$$
\end{itemize}
\noindent {\bf (ii)} Conversely, any vertical vector field $\hat{Z}\in \CX_V(J^\infty_M F)$ such that 
$[\iota_{\hat{Z}}, \dd_H] =0 $ defines an evolutionary vector field via 
$$
\dd \pi_{0}^\infty \circ Z \;:\; J^\infty_M F \longrightarrow VJ^\infty_M F \longrightarrow VF \, .
$$

\vspace{-2mm} 
\noindent {\bf (iii)} The two processes are inverse to each other.
\end{proposition}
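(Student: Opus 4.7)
The plan is to reduce the two characterizing conditions (a) and (b) to an explicit local coordinate recursion, solve it uniquely, then promote the local answer to a global vertical vector field using the coordinate-free nature of the defining conditions.

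In a local chart $\{x^\mu,\{u^a_I\}\}$, a vertical vector field must take the form $X_V=\sum_I (X_V)^a_I\,\partial/\partial u^a_I$ per \eqref{InftyJetBundleVerticalTangentCoordinate}, with components locally of finite order. Since $[\iota_{X_V},\dd_H]$ is a graded derivation of degree $0$ on $\Omega^{\bullet,\bullet}(J^\infty_M F)$, the condition (b) need only be checked on the generators $f\in C^\infty(J^\infty_M F)$, $\dd x^\mu$, and $\theta^a_I=\dd_V u^a_I$. The first two cases are automatic from verticality of $X_V$ and the fact that $\dd_H x^\mu=\dd x^\mu$ with $\iota_{X_V}\dd x^\mu=0$; the nontrivial case is the contact basis, where a direct computation using $\dd_H\theta^a_I=-\dd_V u^a_{I+\mu}\wedge \dd x^\mu$ and the formula \eqref{HorizontalVectorFieldBasisNotation} yields
$$
[\iota_{X_V},\dd_H]\,\theta^a_I \;=\; \big(D_\mu (X_V)^a_I - (X_V)^a_{I+\mu}\big)\,\dd x^\mu.
$$
Thus (b) is equivalent to the recursion $(X_V)^a_{I+\mu}=D_\mu (X_V)^a_I$, while (a) forces $(X_V)^a= Z^a$ in the given coordinates. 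Together these determine $X_V$ uniquely by induction on $|I|$, giving the local formula $\pr Z=\sum_I (D_I Z^a)\,\partial/\partial u^a_I$, and conversely this formula manifestly satisfies both (a) and (b).

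The main subtlety is the global well-definedness of $\pr Z$. This is where I would use that the two characterizing conditions (a) and (b) are formulated purely in terms of the canonical data on $J^\infty_M F$, namely the projection $\dd\pi_0^\infty$ and the horizontal differential $\dd_H$ coming from the canonical Cartan splitting of Cor.~\ref{CanonicalSmoothSplitting}. Hence the local solutions constructed chart-by-chart are forced to agree on overlaps by their uniqueness, and therefore glue to a globally defined vertical vector field on $J^\infty_M F$ in the sense of Def.~\ref{VectorFieldsOnInftyJetBundle} (equivalently, of Lem.~\ref{InfinityJetVectorFieldsDerivations}); this is the step I expect to be the main obstacle, though it is essentially formal given the coordinate-free nature of (a) and (b).

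Part (ii) follows from the same local computation run in reverse: any vertical $\hat Z$ satisfying $[\iota_{\hat Z},\dd_H]=0$ has components obeying the recursion $(\hat Z)^a_{I+\mu}=D_\mu(\hat Z)^a_I$, so $\hat Z$ is determined by $(\hat Z)^a$, and the assignment $\hat Z\mapsto \dd\pi_0^\infty\circ \hat Z=(\hat Z)^a\partial/\partial u^a$ manifestly lands in evolutionary vector fields (Def.~\ref{EvolutionaryVectorFields}). Part (iii) is then immediate from uniqueness: the composition $Z\mapsto \pr Z\mapsto \dd\pi_0^\infty\circ \pr Z$ recovers $Z$ by condition (a), while the composition $\hat Z\mapsto \dd\pi_0^\infty\circ \hat Z\mapsto \pr(\dd\pi_0^\infty\circ \hat Z)$ recovers $\hat Z$ since both $\hat Z$ and $\pr(\dd\pi_0^\infty\circ \hat Z)$ satisfy conditions (a) and (b) with the same evolutionary field and the uniqueness of part (i) forces them to coincide.
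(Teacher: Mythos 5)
Your proposal is correct and follows essentially the route the paper takes (which it attributes to \cite{Anderson89}): reduce condition (b) to the recursion $(X_V)^a_{I+\mu}=D_\mu (X_V)^a_I$ by evaluating the degree-zero derivation $[\iota_{X_V},\dd_H]$ on the local generators of the variational bicomplex, combine with (a) to get $\pr Z=\sum_I D_I(Z^a)\,\partial/\partial u^a_I$, and let uniqueness of the coordinate-free characterization handle gluing. Your explicit computation on the contact forms and the globalization remark are consistent with the paper's formulas \eqref{InftyJetBundleVerticalTangentCoordinate} and \eqref{ProlongationEvolutionayVectorFieldlocally}.
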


In local coordinates, the coefficients of any vertical vector field
$\hat{Z}=0+\sum_{|I|=0}^\infty \hat{Y}^a_I \frac{\partial}{\partial u^a_I}$ such that 
$[\iota_{\hat{Z}}, \dd_H]=0$ are necessarily of the form $\hat{Y}^a_I= D_I(\hat{Y}^a)$, where 
$\hat{Y}^a=\hat{Y}^a_0$ are the coefficients of 
$\{\frac{\partial}{\partial u^a}\}$. This can be checked by evaluating the condition 
$[\iota_{\hat{Z}}, \dd_H]=0$ inductively on the local 
`generators' $\big\{x^\mu,\, \{u^a_I\}_{0\leq|I|}, \,\dd x^\mu, \,\{\dd_H u^a_I\}_{0\leq|I|}, \,\{\dd_V u^a_I\}_{0\leq|I|} \big\}$ 
of the variational bi-complex. That is,
\vspace{-2mm} 
$$
\hat{Z}= \sum_{|I|=0}^\infty D_I(\hat{Y}^a) \frac{\partial}{\partial u^a_I}\, ,
$$

\vspace{-2mm} 
\noindent and so if $Z=Z^a \frac{\partial}{\partial u^a}\in \CX_{\mathrm{ev}}(J^\infty_M F)$ is an evolutionary vector field, 
then its prolongation is given locally by
\vspace{-2mm} 
\begin{align}\label{ProlongationEvolutionayVectorFieldlocally}
\pr Z =\sum_{|I|=0}^\infty D_I(Z^a) \frac{\partial}{\partial u^a_I} \, .
\end{align}

\begin{remark}[\bf On the nomenclature of functional forms]\label{SmoothMapsInducedbyFunctionalForms}
Let $\omega^{p,q}\in \Omega^{p,q}(J^\infty_M F)$ be any $(p,q)$-form. Then for each q-tuple of evolutionary vector 
fields $Z_1,\cdots,Z_q\in \CX_{\mathrm{ev}}(J^\infty_M F)$, there is an induced horizontal $p$-form 
$\iota_{\pr Z_1}\cdots \iota_{\pr Z_q} \om^{p,q}= \om^{p,q}(\pr Z_1,\cdots, \pr Z_q) \in \Omega^{p,0}(J^\infty_M F)$ 
and hence an induced smooth current 
\vspace{-2mm} 
\begin{align*}
\CF&\longrightarrow \Omega^{p}_{\mathrm{Vect}}(M)\\
\phi^k &\longmapsto (j^\infty \phi^k)^* \om^{p,q}(\pr Z_1,\cdots, \pr Z_q)\, . 
\end{align*}

\vspace{-1mm} 
\noindent For the case of $p=d$, the decomposition 
$\Omega^{d,q}(J^\infty_M F) \cong \Omega^{d,q}_f(J^\infty_M F)\oplus \dd_H \Omega^{d-1,q}(J^\infty_M F)$ of Prop. \ref{InteriorEulerProperties}
says that the non-trivial part of the above current is the functional form component of $\om^{d,q}$. Similarly, the value 
of the corresponding charges defined over compact  submanifolds without boundary is completely determined by the functional form component. 
In other words, the resulting charges (functionals) on field space are completely determined by the subspace of functional
forms, justifying the name given. Thus, each functional form $\omega^{d,q}\in\Omega^{d,q}_f(J^\infty_M F)$ may be thought of as defining 
a family of Lagrangian densities, (smoothly) parametrized by $q$-tuples of evolutionary vector fields.
\end{remark}

\begin{corollary}[\bf  Cartan calculus for evolutionary vector fields]\label{EvolutionaryCartanCalculus}
$\,$

\noindent  The evolutionary vector fields $\CX_{\mathrm{ev}}(J^\infty_M F)$ may be naturally identified 
with a Lie sub-algebra of $\CX_V(J^\infty_M F)$.

\noindent {\bf (i)} The Lie derivative along an
(prolongated) evolutionary vector field is given by
\vspace{-2mm} 
$$
\mathbb{L}_{\pr Z} = [\iota_{\pr Z} , \dd_V] 
$$

\vspace{-2mm} 
\noindent and satisfies
$$
\mathbb{L}_{\pr Z} \dd_H = \dd_H \mathbb{L}_{\pr Z}\, , \hspace{2cm}\mathbb{L}_{\pr Z} \dd_V = \dd_V \mathbb{L}_{\pr Z} \, .
$$
\noindent {\bf (ii)}  The Lie bracket of any two (vertical) 
evolutionary vector fields is vertical 
$[{\pr Z_1}, \pr Z_2]\in \CX_V(J^\infty_M F)$. 

\noindent {\bf (iii)}  
For any two evolutionary vector fields 
\vspace{-3mm} 
$$
\iota_{[{\pr Z_1}, \pr Z_2]}\dd_H= \dd_H \iota_{[{\pr Z_1}, \pr X_Z]}\, .
$$
That is, $[{\pr Z_1}, \pr Z_2]$ is (the prolongation) of some evolutionary vector field $Z_3 \in \CX_\mathrm{ev}(J^\infty F)$. 

\noindent {\bf (iv)} In local coordinates, 
$$[{\pr Z_1}, \pr Z_2]=\pr Z_3$$
where $Z_3=\Big(D_I Z_1^b \cdot \frac{\partial Z_2^a}{\partial u^b_I}-D_I Z_2^b \cdot \frac{\partial Z_1^a}{\partial u^b_I}\Big)
\cdot  \frac{\partial}{\partial u^a}.$ 
\end{corollary}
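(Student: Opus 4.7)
The plan is to deduce all four parts from Proposition \ref{ProlongationEvolutionaryVectorFields}, the standard graded Cartan calculus on $\Omega^{\bullet,\bullet}(J^\infty_M F)$ built in \cref{DifferentialFormsOnJetBundleSection}, and the bigrading itself. The key identities I would rely on are: (a) $\mathbb{L}_X=[\iota_X,\dd]$ and $[\mathbb{L}_X,\dd]=0$ for any vector field $X\in\CX(J^\infty_M F)$, which hold because both sides act locally on finite-order representatives where they reduce to the usual manifold identities; (b) the defining property $[\iota_{\pr Z},\dd_H]=0$ of a prolongated evolutionary vector field from Prop \ref{ProlongationEvolutionaryVectorFields}(i)(b); and (c) the graded commutator identity $\iota_{[X,Y]}=[\mathbb{L}_X,\iota_Y]$.

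For part (i), I would split $\dd=\dd_H+\dd_V$ in $\mathbb{L}_{\pr Z}=[\iota_{\pr Z},\dd]$ and cancel the horizontal piece using (b), obtaining $\mathbb{L}_{\pr Z}=[\iota_{\pr Z},\dd_V]$. Since $\iota_{\pr Z}$ has bidegree $(0,-1)$ and $\dd_V$ has bidegree $(0,1)$, this operator has bidegree $(0,0)$. Then $[\mathbb{L}_{\pr Z},\dd]=0$ decomposes by bidegree into $[\mathbb{L}_{\pr Z},\dd_H]=0$ (type $(1,0)$) and $[\mathbb{L}_{\pr Z},\dd_V]=0$ (type $(0,1)$), establishing both commutation relations. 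Part (ii) is immediate: the vertical distribution on $J^\infty_M F$ is involutive (as observed right after Def \ref{VerticalJetBundle}), and both $\pr Z_1,\pr Z_2$ are sections of $VJ^\infty_M F$ by construction, so $[\pr Z_1,\pr Z_2]\in\CX_V(J^\infty_M F)$.

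For part (iii), I would combine (c) with the graded Jacobi identity applied to $A=\mathbb{L}_{\pr Z_1}$, $B=\iota_{\pr Z_2}$, $C=\dd_H$:
\begin{align*}
[\iota_{[\pr Z_1,\pr Z_2]},\dd_H]
&=[[\mathbb{L}_{\pr Z_1},\iota_{\pr Z_2}],\dd_H]\\
&=[\mathbb{L}_{\pr Z_1},[\iota_{\pr Z_2},\dd_H]]-[\iota_{\pr Z_2},[\mathbb{L}_{\pr Z_1},\dd_H]]=0,
\end{align*}
where both bracketed terms vanish by (b) and part (i). Combined with part (ii), the vertical vector field $[\pr Z_1,\pr Z_2]$ satisfies the hypotheses of the converse statement Prop \ref{ProlongationEvolutionaryVectorFields}(ii), so it equals $\pr Z_3$ for the evolutionary field $Z_3:=\dd\pi^\infty_0\circ[\pr Z_1,\pr Z_2]\in\CX_{\mathrm{ev}}(J^\infty_M F)$. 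This identification furthermore shows that $\CX_{\mathrm{ev}}(J^\infty_M F)\hookrightarrow\CX_V(J^\infty_M F)$ is closed under the bracket, giving the asserted Lie subalgebra structure.

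Part (iv) is then a direct local-coordinate computation: by the uniqueness part of Prop \ref{ProlongationEvolutionaryVectorFields}(i), $Z_3$ is determined by its components $Z_3^a$ along $\partial/\partial u^a$, i.e.\ by the action $Z_3^a=[\pr Z_1,\pr Z_2](u^a)=(\pr Z_1)(Z_2^a)-(\pr Z_2)(Z_1^a)$, and substituting the formula \eqref{ProlongationEvolutionayVectorFieldlocally} gives the stated expression. The only genuinely delicate point is keeping track of the graded signs in the Jacobi identity of part (iii); everything else reduces, via the local finite-order representation of forms and vector fields on $J^\infty_M F$, to identities that are entirely routine on finite-dimensional manifolds.
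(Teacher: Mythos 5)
Your proof is correct and follows essentially the same route as the paper's (which is only a terse sketch): part (i) from $[\iota_{\pr Z},\dd_H]=0$ together with the bidegree decomposition of $\dd$, part (ii) from involutivity of the vertical distribution, part (iii) from the Cartan identity $\iota_{[\pr Z_1,\pr Z_2]}=[\mathbb{L}_{\pr Z_1},\iota_{\pr Z_2}]$ plus graded Jacobi, and part (iv) by extracting $Z_3^a=[\pr Z_1,\pr Z_2](u^a)$ in coordinates. Your write-up simply makes explicit the steps the paper leaves implicit, and the bidegree argument for splitting $[\mathbb{L}_{\pr Z},\dd]=0$ into the two separate commutation relations is a clean way to package what the paper asserts.
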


\begin{proof}
Part {\bf (i)} follows from the commutation $[\iota_{\pr Z}, \dd_H]=0$, part {\bf (ii)} 
from vertical vector fields being involutive, part {\bf (iii)} by the general Cartan Calculus formula 
$\iota_{[{\pr Z_1}, \pr Z_2]}=[L_{\pr Z_1}, \iota_{\pr Z_2}]$, and part {\bf (iv)} by direct checking. 
\end{proof}  
The corollary implies the existence of an induced Lie algebra structure on the set of local vector fields. Via Ex. \ref{DifferentiatingFiniteLocalDiffeomorphisms}, 
it may be checked that this is the infinitesimal version of the group structure of the subgroup of local diffeomorphisms  
$\mathrm{Diff}_\mathrm{loc}(\CF)\hookrightarrow \mathrm{Diff}(\CF) $. 

\begin{definition}[\bf  Lie algebra of local vector fields]\label{LieAlgebraOfLocalVectorFields}
$\,$

\noindent {\bf (i)}  The Lie bracket of two local vector fields, $\CZ_1=Z_1\circ j^\infty, \; \CZ_2 = Z_2 \circ j^\infty \in \CX_{\mathrm{loc}}(\CF)$, is given by 
\vspace{-1mm} 
$$
[\CZ_1, \CZ_2]:= Z_3 \circ j^\infty\, ,
$$

\vspace{-1mm} 
\noindent where $Z_3\in \CX_{\mathrm{ev}}(J^\infty_M F)$ is the evolutionary vector field corresponding to $[\pr Z_1, \pr Z_2] \in \CX_V(J^\infty_M F)$, 
as in  Cor. \ref{EvolutionaryCartanCalculus} \!{(iii)}. 

\noindent {\bf (ii)} In local coordinates, by Eq. \eqref{HorizontalVectorFieldBasisAction} 
and Cor. \ref{EvolutionaryCartanCalculus} \!{(iv)}, this is locally given by 
\vspace{-2mm} 
$$
\CZ_3(\phi)= [\CZ_1,\CZ_2](\phi)
= \bigg(\frac{\partial}{\partial x^ I}\big( Z_1^b\circ j^\infty \phi\big) \cdot \Big(\frac{\partial Z_2^a}{\partial u^b_I}\circ j^\infty \phi\Big)
\; - \; \frac{\partial}{\partial x^I}\big( Z_2^b \circ j^\infty \phi\big) \cdot \Big(\frac{\partial Z_1^a}{\partial u^b_I} \circ j^\infty \phi\Big)\!\bigg) 
\cdot  \frac{\partial}{\partial u^a}\, ,
$$

\vspace{-1mm} 
\noindent or abusively as in Eq. \eqref{VectorFieldOnFieldSpaceAbuseOfNotation} and Rem. \ref{TreatingPartialDerivativesAsIndependent} by
\vspace{-1mm} 
$$
\CZ_3(\phi) = \bigg(\frac{\partial \CZ^b_1(\phi)}{\partial x^I} \cdot
\frac{\delta Z^a_2(\phi, \{\partial_J \phi\}_{|J|\leq k_2})}{\delta \phi^b_I}
\;\; - \;\; 
\frac{\partial \CZ^a_2(\phi)}{\partial x^I} \cdot \frac{\delta Z^b_1(\phi, \{\partial_J \phi\}_{|J|\leq k_1})}{\delta \phi^b_I} \bigg)\cdot
\frac{\delta}{\delta \phi^a}\, . 
$$
\end{definition}

Note that the action of a (prolongated) evolutionary vector field on horizontal forms $P\in \Omega^{p,0}(J^\infty_M F)$ simplifies further
\begin{align*}
\mathbb{L}_{\pr Z} P&= [\iota_{\pr Z}, \dd_V]=\iota_{\pr Z}  \dd_V P \, ,
\end{align*}
since $\iota_{ \pr Z} P=0 $ as $\pr Z$ is vertical. Locally, 
if $P= P_{\mu_1\cdots \mu_p}(x,\{u_I^a\}_{|I|\leq k})\cdot  \dd x^{\mu_1}\wedge \cdots \wedge \dd x^{\mu_p}$ the action takes 
the familiar form
\begin{align}\label{ActionOfEvolutionaryVectorFieldOnHorizontalForms} 
\mathbb{L}_{\pr Z} P&= \iota_{\pr Z} \bigg( \sum_{|I|=0}^\infty 
\frac{\partial P_{\mu_1\cdots \mu_p}}{\partial u^a_I } \cdot \dd_v u^a_I \wedge \dd x^{\mu_1}\wedge \cdots \wedge \dd x^{\mu_p} \!\bigg)\nn \\
&= \sum_{|I|=0}^\infty D_I(Z^a) \cdot \frac{\partial P_{\mu_1\cdots \mu_p}}{\partial u^a_I }
\cdot \dd x^{\mu_1}\wedge \cdots \wedge \dd x^{\mu_p}
\\
&= \pr Z( P_{\mu_1\cdots \mu_p})\cdot \dd x^{\mu_1}\wedge \cdots \wedge \dd x^{\mu_p}\, . \nn
\end{align}
As we will demonstrate, this encodes the textbook presentation of the action of local vector fields on local currents and functions on field space. 
To see this, note that by \eqref{HorizontalVectorFieldBasisAction} and for any field 
$\phi\in \Gamma_M(F)$, the local formula for 
the prolonganted vector field \eqref{ProlongationEvolutionayVectorFieldlocally} gives 
$$
\pr Z \circ j^\infty \phi =\sum_{|I|=0}^\infty \frac{\partial}{\partial x^I} 
\big(Z^a\circ j^\infty \phi\big) \frac{\partial}{\partial u^a_I} 
$$ 
as a section of $VJ^\infty F$ over $M$. That is, as a section of $\mathbold{\Gamma}_M(VJ^\infty F) 
\rightarrow \mathbold{\Gamma}_M(F)$ 
we may abusively denote the (prolongated) vector field $\pr \CZ := \pr Z\circ j^\infty$ on $\CF$ by
\vspace{-3mm}  
\begin{align}\label{ProlongatedVectorFieldonFieldSpaceAbusively}
\pr \CZ(\phi) = \sum_{|I|=0}^{\infty} \frac{\partial \CZ^a(\phi)}{\partial x^I}\frac{\delta}{\delta (\partial_I  \phi^a)}\, ,
\end{align}

\vspace{-2mm} 
\noindent similar to \eqref{VectorFieldOnFieldSpaceAbuseOfNotation}. 

\medskip 
Analogously, by the local formula 
\eqref{ActionOfEvolutionaryVectorFieldOnHorizontalForms} of $\mathbb{L}_{\pr Z} P$, the value of the induced 
differential operator $\mathbb{L}_{\pr Z} P\circ j^\infty_M: \CF \rightarrow \Omega^{p}_{\mathrm{Vert}}(M)$ on a field $\phi$ is given by
\vspace{-2mm} 
$$
\mathbb{L}_{\pr Z} P \circ j^\infty(\phi)=\sum_{|I|=0}^\infty \frac{\partial (Z^a\circ j^\infty \phi)}{\partial x^I}
\cdot \frac{\partial P_{\mu_1\cdots \mu_p}}{\partial u^a_I }\circ j^\infty \phi \cdot \dd x^{\mu_1}\wedge \cdots \wedge \dd x^{\mu_p}\, ,
$$

\vspace{-2mm} 
\noindent which is presented, abusing notation as in Rem. \ref{TreatingPartialDerivativesAsIndependent}, as
\vspace{-2mm} 
\begin{align}\label{ProlongatedActionOnHorizontalFormAbusively}
\mathbb{L}_{\pr Z} P \circ j^\infty(\phi)= 
&= \pr \CZ(\phi) \Big( P_{\mu_1\cdots \mu_p}\big(\phi,\{\partial_J \phi\}_{|J|\leq k}\big) \! \Big) 
\cdot \dd x^{\mu_1}\wedge \cdots \wedge \dd x^{\mu_p} \nn
\\
&=\sum_{|I|=0}^\infty \frac{\partial \CZ^a(\phi)}{\partial x^I} \cdot
\frac{\delta P_{\mu_1\cdots \mu_p}\big(\phi,\{\partial_J \phi \}_{|J|\leq k}\big) }{\delta (\partial_I \phi^a) }
\cdot \dd x^{\mu_1}\wedge \cdots \wedge \dd x^{\mu_p}\,.
\end{align}

\begin{definition}[\bf  Action of local vector fields on currents]\label{ActionOfLocalVectorFieldsOnCurrents}
The Lie algebra of local vector fields $\CX_{\mathrm{loc}}(\CF)$ (Def. \ref{LocalVectorFields}) acts on  
local currents (Def. \ref{CurrentOnFieldSpace}) via the 
prolongation of the corresponding evolutionary vector field. Explicitly:

\noindent {\bf (i)} If $Z\in \CX_{\mathrm{ev}}(J^\infty_M F)$ is an evolutionary vector 
field with induced local vector field $\CZ=Z\circ j^\infty \in \CX_{\mathrm{loc}}(\CF)$ and $P\in\Omega^{p,0}(J^\infty_M F)$ is a horizontal 
$(p,0)$-form with induced current $\CP= P \circ j^\infty: \CF \rightarrow \Omega^{p}_{\mathrm{Vert}}(M)$, then
\begin{align}
\CZ (\CP):= \mathbb{L}_{\pr Z}(P) \circ j^\infty \;:\; \CF \longrightarrow \Omega^{p}_{\mathrm{Vert}}(M)\, .
\end{align}
\noindent {\bf (ii)}  Locally, the value of the resulting current on a field $\phi \in \CF$ is given by
\vspace{-2mm} 
\begin{align*}
\CZ (\CP)(\phi)=\sum_{|I|=0}^\infty \frac{\partial (Z^a\circ j^\infty \phi)}{\partial x^I} 
\cdot \frac{\partial P_{\mu_1\cdots \mu_p}}{\partial u^a_I }\circ j^\infty \phi \cdot \dd x^{\mu_1}\wedge \cdots \wedge \dd x^{\mu_p} 
\end{align*}

\vspace{-2mm} 
\noindent and similarly on $\FR^k$-plots of fields.

\noindent {\bf (iii)}  Via the abuse of notation \eqref{ProlongatedVectorFieldonFieldSpaceAbusively} and \eqref{ProlongatedActionOnHorizontalFormAbusively}, 
the value on a field $\phi$ may be equivalently calculated as 
\begin{align*}
 \CZ(\CP)(\phi)&=  \mathbb{L}_{\pr Z} P\circ j^\infty \phi = \pr \CZ(\phi) \Big( P_{\mu_1\cdots \mu_p}\big(\phi,\{\partial_J \phi \}_{|J|\leq k}\big)\! \Big) 
 \cdot \dd x^{\mu_1}\wedge \cdots \wedge \dd x^{\mu_p}  \\
 &= \sum_{|I|=0}^\infty \frac{\partial \CZ^a(\phi)}{\partial x^I} \cdot
 \frac{\delta P_{\mu_1\cdots \mu_p}\big(\phi,\{\partial_J \phi \}_{|J|\leq k}\big) }{\delta (\partial_I \phi^a) } 
 \cdot \dd x^{\mu_1}\wedge \cdots \wedge \dd x^{\mu_p}\, , 
 \end{align*}

 \vspace{-2mm} 
 \noindent by treating $\{\partial_I\phi^a\}_{|I|\leq k}$ as independent and computing the corresponding partial derivatives, 
 via Rem. \ref{TreatingPartialDerivativesAsIndependent}.

 \noindent {\bf (iv)} Using further the `infinitesimal transformation of the field' notation, it may also be written as
 \begin{align*}
  \delta_Z \CP (\phi) =  \sum_{|I|=0}^\infty \frac{ \partial (\delta_Z \phi^a) }{\partial x^I}  \cdot
 \frac{\delta P_{\mu_1\cdots \mu_p}\big(\phi,\{\partial_J \phi \}_{|J|\leq k}\big) }{\delta (\partial_I \phi^a) }
 \cdot \dd x^{\mu_1}\wedge \cdots \wedge \dd x^{\mu_p}
 \end{align*}

 \vspace{-2mm}
\noindent recovering the formulas implicitly used in the physics literature. The latter will be fully justified in \cref{TheBicomplexOfLocalFormsSection} 
and Lem. \ref{LocalCartanCalculus}.
\end{definition}

By construction and Cor. \ref{EvolutionaryCartanCalculus}, 
this is a Lie algebra action via derivations with respect to the (graded) algebra structure of currents. In \cref{TheBicomplexOfLocalFormsSection}, 
this action will be identified with the ``Lie derivative along $\CZ$'', in an appropriate sense. Integrating the resulting currents along submanifolds, 
we get an induced action on charges and hence on  the algebra of local functions on field space.

\begin{definition}[\bf  Action of local vector fields on local functions]\label{ActionOfLocalVectorFieldsOnLocalFunctions}
The Lie algebra of local vector fields $\CX_{\mathrm{loc}}(\CF)$ acts on charges, the generators of smooth local functions $C^\infty_\mathrm{loc}(\CF)$ 
of Def. \ref{LocalFunctionsOnFieldSpace} via the action on the corresponding current, and is
extended as a derivation to all of $C^\infty_\mathrm{loc}(\CF)$. 

\noindent {\bf (i)} If $X\in \CX_{\mathrm{ev}}(J^\infty_M F)$ is an evolutionary vector field with induced local 
vector field $\CZ=X\circ j^\infty \in \CX_{\mathrm{loc}}(\CF)$ and $P\in\Omega^{p,0}(J^\infty_M F)$ is a horizontal
$(p,0)$-form with induced charge $\CP_{\Sigma^p}=\int_{\Sigma^p} P \circ j^\infty \in C^\infty_\mathrm{loc}(\CF)$, 
then
\begin{align}
\CZ (\CP_{\Sigma^p}):=\int_{\Sigma^p}\CZ (\CP) = 
\int_{\Sigma^p} \mathbb{L}_{\pr Z}(P) \circ j^\infty  \in C^\infty_{\mathrm{loc}}(\CF)\, .
\end{align}
\noindent {\bf (ii)}  If 
$P'\in \Omega^{p',0}(J^\infty_M )$ is an other horizontal $(p',0)$ form with induced charge $\CP'_{\Sigma^{p'}}$, then 
$$
\CZ (\CP_{\Sigma^p}\cdot \CP
'_{\Sigma^{p'}}):= \CZ (\CP_{\Sigma^p}) \cdot \CP'_{\Sigma^{p'}} + \CP_{\Sigma^p}\cdot \CZ (\CP'_{\Sigma^{p'}})\,.
$$ 
\end{definition}

\begin{remark}[\bf Tangent vectors and derivations of local functionals]\label{TangentVectorsAndDerivationsofLocalFunctions}Let
$Z_\phi = \partial_t \phi_t |_{t=0}\in T_\phi(\CF)=\Gamma_M(VF)$ be a tangent vector at $\phi\in \CF$, represented by 
an $\FR^1$-plot $\phi_t\in \mathbold{\Gamma}_M(F)(\FR^1)$. 

\noindent {\bf (i)} Recall, as in Rem. \ref{TangentVectorsPathsOfFieldsAndDerivations}, there is an induced $\FR$-valued derivation of local functionals 
\vspace{-2mm} 
\begin{align*}
    C^\infty_\mathrm{loc}(\CF) &\longrightarrow \FR \\
    \CP_{\Sigma^p} &\longmapsto \partial_t(\CP_{\Sigma^p}\circ \phi_t) |_{t=0} \, . \nn 
\end{align*}

\vspace{-2mm} 
\noindent {\bf (ii)} Contrary to the case of general smooth functions on $\CF$ as in 
Rem. \ref{TangentVectorsPathsOfFieldsAndDerivations}, the derivation on local functionals is clearly independent of the representative $\FR^1$-plot. 
To see this, notice that for each $x\in M$
\begin{align*}
 \partial_t (P\circ j^\infty \phi_t(x) )|_{t=0}
  &= \iota_{\partial_t j^\infty \phi_t(x) |_{t=0}} \dd_V P|_{j^\infty \phi_0(x)} 
 \end{align*}
as in Eq. \eqref{InfinityJetVerticalVectorFromProlongationofPlotActionVia1form}, where the latter `prolongated' tangent vector 
$\partial_t j^\infty\phi_t|_{t=0}\in \Gamma_M(VJ^\infty F)$ depends only on $\partial_t \phi_t |_{t=0}\in \Gamma_M(VF)$, as can be seen explicitly in 
Ex. \ref{InfinityJetVerticalVectorFromProlongationofPlot}. Notice that varying pointwise in $x\in M$ this contraction 
only \textit{partially} defines a`$(p,0)$-form on $J^\infty_M F$', that is only along the image of $j^\infty \phi$. 
However, it does define a top-form on the base $M$, denoted by
\begin{align*}
\partial_t (P\circ j^\infty \phi_t )|_{t=0}
&=(j^\infty \phi)^* \big( \iota_{\partial_t j^\infty \phi_t |_{t=0}} \dd_V P) 
\end{align*}
 where the pullback form acts as in Lem. \ref{HorizontalDifferentialBaseDeRhamCompatibility}. Thus, on the induced local function 
 $\CP_{\Sigma^p}$ we have
 \vspace{-2mm} 
\begin{align*}
\partial_t(\CP_{\Sigma^p} \circ \phi_t)\big|_{t=0} &= 
\int_{\Sigma^p} \partial_t( P \circ j^\infty \phi_t) \big|_{t=0} \\
&= \int_{\Sigma^p} (j^\infty \phi)^*\iota_{\partial_t j^\infty \phi_t |_{t=0}} \dd_V P \, ,
\end{align*}

\vspace{-1mm} 
\noindent which manifestly depends only on $\phi\in \CF$ and $Z_\phi =\partial_t \phi_t |_{t=0} \in T_\phi \CF$. 

\noindent {\bf (iii)} This observation may be used as an alternative definition of the action of local vector fields. 
Indeed, for a local vector field $\CZ$ we may instead define $\CZ (\CP_{\Sigma^p})$ by its value on field configurations as
\vspace{-2mm} 
\begin{align}\label{AlternativeDefinitionofLocalVectorFieldAction}
\CZ (\CP_{\Sigma^p})(\phi):= \partial_{t} \phi_t |_{t=0} (\CP_{\Sigma^p})=
\int_{\Sigma^p}\partial_t(P \circ j^\infty \phi_t)|_{t=0} \, ,
\end{align}

\vspace{-1mm} 
\noindent where $\CZ_\phi =Z\circ j^\infty_M (\phi) \in T_\phi(\CF)$, 
is represented as 
\vspace{-2mm} 
$$
\CZ_\phi = \partial_t \phi_t |_{t=0}\, ,
$$

\vspace{-2mm} 
\noindent for some smooth $\FR^1$-plot of fields $\phi_t$ (Lem. \ref{LinePlotsRepresentTangentVectors}). 
This recovers the expression of Def. \ref{ActionOfLocalVectorFieldsOnLocalFunctions}
since 
\vspace{-2mm}
\begin{align*}
\partial_t(P \circ j^\infty \phi_t)|_{t=0}&= (j^\infty \phi)^*\iota_{\partial_t j^\infty \phi_t |_{t=0}} \dd_V P \\
&= (\iota_{\pr Z} \dd_V ) \circ j^\infty \phi = L_{\pr Z} P \circ j^\infty \phi\;.
\end{align*} 

\vspace{-2mm}
\noindent {\bf (iv)} The same statements apply for $\FR^k$-plots.
Note that this is in contrast with a general vector field, which might not necessarily define an action on smooth (or local) 
functionals, as described in \eqref{PathSectionDerivation}. From this description, the derivation extension of 
Def. \ref{ActionOfLocalVectorFieldsOnLocalFunctions} is deduced as a property instead. 

\noindent {\bf (v)} For a local vector field 
$\CZ = \partial_t \CP_t |_{t=0}$ arising from a 1-parameter family of local diffeomorphisms as in
Ex. \ref{DifferentiatingFiniteLocalDiffeomorphisms}, the Lie algebra action corresponds to the differentiation of the pullback map
\vspace{-2mm} 
$$
\CP_t^* \;:\; C^\infty_\mathrm{loc}(\CF)\longrightarrow C^\infty_\mathrm{loc}(\FR^1\times \CF)\, ,
$$

\vspace{-2mm} 
\noindent and hence is the infinitesimal version of the $\mathrm{Diff}_\mathrm{loc}(\CF)$ action on $C^\infty(\CF)$.
\end{remark}

At this point, we are well-equipped to define the notion of an infinitesimal local symmetry of a local Lagrangian field theory.
With this notion at hand, the statement and proof of Noether's First and Second theorems follow easily, as a result on the 
bicomplex of the infinite 
jet bundle $J^\infty_M F$, which naturally pulls back to the field space as a statement about local currents and their charges.

\begin{definition}[\bf  Infinitesimal local symmetry of Lagrangian field theory]\label{InfinitesimalLocalSymmetryOfLagrangian}
An evolutionary vector field $Z \in \CX_{\mathrm{ev}}(J^\infty_M F)$ is an \textit{infinitesimal symmetry} of
a Lagrangian density $L\in \Omega^{d,0}(J^\infty_M F)$ if 
\vspace{-2mm}
\begin{align}
\mathbb{L}_{\pr Z} L= \dd_H K_Z
\end{align}

\vspace{-2mm}
\noindent
for some $K_Z \in \Omega^{d-1,0}(J^\infty_M F)$. Equivalently, 
by Lem. \ref{HorizontalDifferentialBaseDeRhamCompatibility} and Def. \ref{ActionOfLocalVectorFieldsOnCurrents}, 
the corresponding local vector field $\CZ=Z\circ j^\infty_M \in \CX_\mathrm{loc}(\CF)$ is an \textit{infinitesimal local symmetry} 
of the corresponding Lagrangian field theory $(\CF, \CL)$ if
\vspace{-2mm}
$$
\CZ(\CL)= \dd_M (\CK_Z ) \, ,
$$

\vspace{-2mm}
\noindent i.e., if it preserves the Lagrangian up to a trivial local Lagrangian.
\end{definition}
It is easy to see that the Lie bracket $[\CZ_1,\CZ_2]$ (Def. \ref{LieAlgebraOfLocalVectorFields}) of any two infinitesimal local
symmetries $\CZ_1,\CZ_2\in \CX_{\mathrm{loc}}(\CF)$ of a local field theory $(\CF,\CL)$ is also a symmetry. This follows since
\vspace{-2mm}
\begin{align}\label{SubalgebraofLocalInfinitesimalSymmetriesCalculation}
\mathbb{L}_{[\pr Z_1, \pr Z_2]} L &= \mathbb{L}_{\pr Z_1}\, (\mathbb{L}_{\pr Z_2} L) - \mathbb{L}_{\pr Z_2} ( \mathbb{L}_{\pr Z_1} L) =\mathbb{L}_{\pr Z_1} (\dd_H K_{Z_2}) - \mathbb{L}_{\pr Z_2} ( \dd_H K_{Z_1})\\ 
&= \dd_H( \mathbb{L}_{\pr Z_1} K_{Z_2} -\mathbb{L}_{\pr Z_2} K_{Z_1}) \, , \nn 
\end{align}

\vspace{-1mm}
\noindent where we used symmetry assumption of $\CZ_1,\CZ_2$, and then the Cartan calculus for evolutionary vector fields of Cor. \ref{EvolutionaryCartanCalculus}. 
That is, the subspace $\CX_{\mathrm{loc}}^{\CL}(\CF)$ of local vector fields consisting of local symmetries of a field theory $(\CF,\CL)$ is a Lie subalgebra
\vspace{-1mm} 
\begin{align}\label{SubalgebraofLocalInfinitesimalSymmetries}
\big(\CX_{\mathrm{loc}}^{\CL}(\CF),\, [-,-]\big) \longhookrightarrow \big(\CX_{\mathrm{loc}}(\CF),\, [-,-]\big) \, ,
\end{align}
which is the infinitesimal version of the (smooth) subgroup inclusion $\mathrm{Diff}_{\mathrm{loc}}^{\CL}(\CF)\hookrightarrow 
\mathrm{Diff}_{\mathrm{loc}}$ from part {\bf (c)} of Rem. \ref{OnSpacetimeCovariantSymmetries}.  

\bigskip 
There is a class of infinitesimal local symmetries that exist for \textit{every} local Lagrangian field theory. For reasons that will become apparent shortly, 
these are called \textit{trivial} infinitesimal symmetries. 

\newpage 
\begin{example}[\bf Trivial infinitesimal symmetries]\label{TrivialInfinitesimalSymmetries}
Let $L\in \Omega^{d,0}(J^\infty_M F)$ be an arbitrary Lagrangian density. For any bundle map
$T:J^\infty_M F \rightarrow \wedge^2 VF \otimes \wedge^d TM$ over $F$, there is an induced evolutionary vector field
$T\cdot \mathrm{EL}\in \CX_{\mathrm{ev}}(J^\infty_M F)$ given by the composition of bundle maps over $F$
\vspace{-2mm}
\begin{align*}
T \cdot \mathrm{EL}:J^\infty_M F \xrightarrow{\quad T \otimes \mathrm{EL} \quad} \wedge^2 VF\otimes 
\wedge^d TM \otimes V^*F \otimes \wedge^d T^*M \xlongrightarrow{\quad \sim \quad} (\wedge^2 VF \otimes V^*F) 
\otimes (\wedge^d TM \otimes \wedge^d T^*M) \longrightarrow VF  
\end{align*}

\vspace{-1mm}
\noindent where the second map is the isomorphism swapping the order of the fibers, while the last map combines the fiberwise contractions 
$\iota_{(-)}(-): \wedge^2 VF \otimes V^*F \rightarrow VF $ and $\iota_{(-)}(-): \wedge^d TM \otimes \wedge^d T^*M \xrightarrow{\sim} M\times \FR$.
In local coordinates, we have 
$T=T^{[ab]}\cdot \frac{\partial}{\partial u^a}\wedge\frac{\partial}{\partial u^b} \cdot \frac{\partial}{\partial x^1}\wedge \cdots \wedge\frac{\partial}{\partial x^d}$ 
for some collection of (local) functions $T^{[ab]}\in C^\infty (J^\infty_M F)$ antisymmetric in the indices, and similarly 
$\mathrm{EL}= \mathrm{EL}_a \cdot \dd_F u^a \wedge \dd x^1 \cdots \dd x^d$, thus the induced `trivial' evolutionary vector field is locally
\vspace{-2mm}
\begin{align*}
T\cdot \mathrm{EL}= T^{[ab]}\cdot \mathrm{EL}_b \cdot \frac{\partial}{\partial u^a}\, , 
\end{align*}

\vspace{-2mm}
\noindent which is the form that often appears in the physics literature.\footnote{As customary, the global structure of $M$, $F\rightarrow M$ and 
induced bundles is often ignored -- hence defining a trivial symmetry simply by a collection of functions $K^{[ab]}$ antisymmetric in the indices.
This is not sufficient to define a \textit{global} evolutionary vector field on non-trivial field bundles.}
By construction, this is `trivially' a symmetry of the density $L$
\vspace{-1mm}
\begin{align*}
\mathbb{L}_{\pr(T\cdot \mathrm{EL})} L &= \iota_{\pr(T\cdot \mathrm{EL})} \dd_V L 
\\
&= \iota_{\pr(T\cdot \mathrm{EL})} \mathrm{E L} + \iota_{\pr(T\cdot \mathrm{EL})} \dd_H  \theta_L 
\\
&= 0 + \dd_H \big( - \iota_{\pr(T\cdot \mathrm{EL})}  \theta_L \big)
\end{align*}

\vspace{-1mm}
\noindent where the first term vanishes by its local coordinate expression (see also proof of Prop. \ref{Noether1st}),
$\iota_{\pr(T\cdot \mathrm{EL})} \mathrm{E L} = T^{[ab]} \cdot EL_a\cdot EL_b $ and using symmetry and antisymmetry of the indices, 
while the second becomes horizontally exact by the commutation relation for evolutionary vector fields. Lastly, notice that any 
such evolutionary vector field vanishes on the shell $S_L\hookrightarrow J^\infty_M F$ of the Lagrangian (Def. \ref{ShellOfLagrangian}). 
Equivalently, the corresponding local vector field $\CK\cdot \CE \CL:= (K\cdot EL) \circ j^\infty = (K\circ j^\infty) \cdot (EL\circ j^\infty)$ 
is an infinitesimal symmetry of the field theory $(\CF, \CL)$
\vspace{-2mm}
\begin{align*}
\CT \cdot \CE \CL= \CT^{[ab]}\cdot \CE \CL_b \cdot \frac{\delta}{\delta \phi^a}\hspace{0.5cm} \in \hspace{0.5cm} \CX_{\mathrm{loc}}^{\CL}(\CF)\,  , 
\end{align*}

\vspace{-1mm}
\noindent which vanishes on the smooth subspace of on-shell fields $\CF_{\CE \CL}\hookrightarrow \CF.$ It is easy to see that the Lie bracket (Def. \ref{LieAlgebraOfLocalVectorFields}) of any two trivial infinitesimal symmetries is again trivial. In other words, the set of all trivial 
infinitesimal local symmetries is a further Lie subalgebra of \eqref{SubalgebraofLocalInfinitesimalSymmetries}, denoted by
\vspace{-1mm}
\begin{align*}
(\CX_\mathrm{loc}^{\CL,\mathrm{triv}}(\CF),\,  [-,-] )
\;\; \longhookrightarrow \;\; 
(\CX_{\mathrm{loc}}^{\CL}(\CF),\, [-,-])\, .  
\end{align*}
\vspace{-1mm}
\noindent The integrated (finite) version of this fact further justifies the name `trivial'.
\end{example}

\begin{example}[\bf Differentiating finite local symmetries]\label{DifferentiatingFiniteLocalSymmetries}
Consider the case where $\CD_{t}=D_t\circ j^\infty$ is a 1-parameter (finite) local symmetry of a Lagrangian field theory $(\CF,\CL)$ as in Def. \ref{FiniteSymmetryofLagrangianFieldTheory}, starting at the identity, and so
$$
\CL\circ \CD_t = \CL + \dd_M \CK_t 
$$
for some $\CK_t= K_t \circ j^\infty $, where $K_t:\FR^1\times J^\infty_M F \rightarrow \wedge^{d-1}T^*M$ is a 1-parameter family of smooth 
bundle maps over $M$, with $K_0=0_M:J^\infty_M\rightarrow \wedge^{d-1} T^*M$ the zero bundle map. 
Then the interested reader may verify, proceeding as in Ex. \ref{DifferentiatingFiniteLocalDiffeomorphisms}, that the differentiated version 
of the relation recovers exactly that 
of Def. \ref{InfinitesimalLocalSymmetryOfLagrangian}
$$
\CZ(\CL)= \dd_M (\CK_\CZ ) \, ,
$$
with\footnote{Strictly speaking, $\partial_t K_t |_{t=0}$ is a map 
$J^\infty_M F \rightarrow V(\wedge^{d-1} T^*M) \cong \wedge^{d-1} T^*M \times_M \wedge^{d-1}T^*M$ which covers $K_0=0_M$ via 
$\pr_1:  \wedge^{d-1} T^*M \times_M \wedge^{d-1}T^*M \rightarrow \wedge^{d-1} T^*M $. Thus it is completely determined by the projection to 
the section factor, which is the one we tacitly mean above.} 
$K_Z= \partial_t K_t |_{t=0}:J^\infty_M F \rightarrow \wedge^{d-1}(T^*M)$ and
$\CZ=Z\circ j^\infty$, for $Z=\partial_t D_t |_{t=0}$.  In other words, local 
infinitesimal symmetries of a Lagrangian field theory are indeed the infinitesimal version of finite local symmetries. In particular, 
if a trivial infinitesimal local symmetry $\CT\cdot \CE \CL$ of Ex. \ref{TrivialInfinitesimalSymmetries} is integrable, then the 
corresponding symmetry $\CP_t$ is necessarily the identity\footnote{Strictly, only for $t\in (-\epsi,\epsi) \subset \FR^1$ for some 
$\epsi \in \FR$.} on the subspace on-shell fields $\CF_{\CE \CL}$, since $\CT \cdot \CE \CL=\partial_t \CD_t |_{t=0} $ vanishes on
$\CF_{\CE \CL}\hookrightarrow \CF$.   
\end{example}

The main and most famous application of infinitesimal local symmetries is via Noether's First Theorem, by which each symmetry produces a conserved current.

\newpage 
\begin{proposition}[\bf Noether's First Theorem]\label{Noether1st} \cite{Noether}
Let $Z\in \CX_{\mathrm{ev}}(J^\infty_M F)$ be an infinitesimal symmetry with $\mathbb{L}_{\pr Z} L= \dd_H K_Z$.

\noindent {\bf (i)}  Then the $(d-1,0)$-form 
\vspace{-3mm} 
\begin{align}\label{ConservedCurrentOnJetBundle}
P_Z:=K_Z +\iota_{\pr Z} \theta_L
\end{align}

\vspace{-3mm}
\noindent
satisfies 
\vspace{-3mm}
\begin{align*}
\dd_H P_Z= \iota_{\pr Z} EL = \langle EL, Z\rangle \, , 
\end{align*}

\vspace{-2mm}
\noindent where $\dd_V L = EL + \dd_H \theta_L$ as in \eqref{LagrangianVerticalDifferentialDecomposition}, and the 
latter pairing is the duality bundle map 
$\langle-,-\rangle:(\wedge^d T^*M \otimes V^*F)\otimes VF \rightarrow \wedge^d T^*M$ over $M$. 
In particular, $P_Z$ is horizontally closed on the shell of $\mathrm{S}_L\hookrightarrow y(J^\infty_M F)$. 

\vspace{1mm} 
\noindent {\bf (ii)}  It follows the induced current 
\vspace{-4mm} 
\begin{align*}
\CP_Z:= \CK_\CZ + \iota_\CZ \theta_\CL \equiv \CK_\CZ + (\iota_{\pr Z}\theta_L) \circ j^\infty  \end{align*}

\vspace{-2mm}
\noindent satisfies 
\vspace{-3mm} 
\begin{align*}
\dd_M\CP_Z = \langle\CE \CL, \CZ \rangle 
\end{align*}

\vspace{-2mm}
\noindent and so it is conserved on the space of on-shell fields $\CF_{\mathrm{EL}=0}\hookrightarrow \CF$.
\begin{proof}
We have 
\vspace{-3mm} 
\begin{align*}
\dd_H (P_Z)&= \dd_H K_Z + \dd_H \iota_{\pr Z} \theta_L = \mathbb{L}_{\pr Z} L -\iota_{\pr Z} \dd_H \theta_L \\
&=\iota_{\pr Z} \dd_V L - \iota_{\pr Z} \dd_H \theta_L =\iota_{\pr Z} EL + \iota_{\pr Z} \dd_H \theta_L - \iota_{\pr Z} \dd_H \theta_L \\
&= \iota_{\pr Z} EL\, ,
\end{align*}

\vspace{-2mm}
\noindent
where the first line follows by the symmetry assumption and the second by the commutation relation $[\dd_H, \iota_{\pr Z}\theta_L ]=0$ 
for evolutionary vector fields. The identification $\iota_{\pr Z} EL= \langle EL, Z\rangle$ follows since $EL$ is a source form, 
and in particular corresponds to a bundle map $J^\infty_M F \rightarrow V^*F \otimes \wedge^d T^*M $ over F,
according to Lem. \ref{SourceFormsDefineDifferentialOperators}. Explicitly,  $EL=EL^a \cdot \dd_V u^a \wedge \dd x^1\wedge \cdots \wedge \dd x^d$ 
locally, and hence only the first component 
of $\pr Z= Z^a \frac{\partial}{\partial u^a} + \sum_{|I|\geq 1}^{\infty} D_I(Z^a) \frac{\partial}{\partial u^a_I}$ contributes 
to the contraction, i.e., the component of its evolutionary vector field $Z$. The induced current $\CP_{Z}=P_Z \circ j^\infty$ is 
on-shell conserved by Stokes' Theorem \ref{StokesTheoremLocalFunctions}. In particular the formula for $\dd_M \CP_Z $ follows 
immediately since $\langle EL, Z\rangle \circ j^\infty_M = \langle  EL\circ j^\infty, Z\circ j^\infty\rangle = \langle \CE \CL, \CZ\rangle.$
\end{proof}
\end{proposition}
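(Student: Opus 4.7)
My plan is to obtain part (i) as a direct computation in the variational bicomplex on $J^\infty_M F$, exploiting the Cartan calculus for evolutionary vector fields (Cor.~\ref{EvolutionaryCartanCalculus}) together with the canonical decomposition $\dd_V L = EL + \dd_H \theta_L$ from \eqref{LagrangianVerticalDifferentialDecomposition}, and then to transfer the identity to field space in part (ii) by pulling back along $j^\infty$ via Lem.~\ref{HorizontalDifferentialBaseDeRhamCompatibility} and invoking Stokes' Theorem on field space (Cor.~\ref{StokesTheoremLocalFunctions}).

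For part (i), I would start by applying $\dd_H$ to the definition $P_Z := K_Z + \iota_{\pr Z}\theta_L$. The first summand yields $\dd_H K_Z = \mathbb{L}_{\pr Z} L$ by the symmetry hypothesis. For the second summand I would use that $[\iota_{\pr Z}, \dd_H]=0$ (since $\pr Z$ is the prolongation of an evolutionary vector field, per Prop.~\ref{ProlongationEvolutionaryVectorFields}), which rewrites $\dd_H \iota_{\pr Z}\theta_L = -\iota_{\pr Z}\dd_H \theta_L$ up to sign bookkeeping arising from the bidegree of $\theta_L$. Next I would expand $\mathbb{L}_{\pr Z}L = \iota_{\pr Z}\dd_V L$, which is valid because $\pr Z$ is vertical and $L$ is horizontal so $\iota_{\pr Z} L = 0$; substituting the decomposition $\dd_V L = EL + \dd_H\theta_L$ causes the $\iota_{\pr Z}\dd_H\theta_L$ contributions to cancel, leaving $\dd_H P_Z = \iota_{\pr Z} EL$.

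The last piece of part (i) is the identification $\iota_{\pr Z} EL = \langle EL, Z\rangle$. Here I would argue locally: writing $EL = EL_a\,\dd_V u^a \wedge \dd x^1\wedge\cdots\wedge \dd x^d$ as a source form and $\pr Z = Z^a\tfrac{\partial}{\partial u^a} + \sum_{|I|\geq 1} D_I(Z^a)\tfrac{\partial}{\partial u^a_I}$ via \eqref{ProlongationEvolutionayVectorFieldlocally}, only the $|I|=0$ component of $\pr Z$ pairs nontrivially with the single factor $\dd_V u^a$ appearing in $EL$, producing exactly $Z^a EL_a \cdot \dd x^1\wedge\cdots\wedge \dd x^d$, which is the fiberwise duality pairing of Lem.~\ref{SourceFormsDefineDifferentialOperators}. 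This identification is the one step where the source-form structure of $EL$ is essential; if one tried the analogous statement for arbitrary $(d,1)$-forms it would fail, so it is worth emphasizing explicitly.

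For part (ii), I would precompose the identity from part (i) with the jet prolongation $j^\infty$ of an arbitrary (plot of) field. Using the compatibility $(j^\infty\phi)^*\dd_H = \dd_M (j^\infty\phi)^*$ from Lem.~\ref{HorizontalDifferentialBaseDeRhamCompatibility} (equivalently, Stokes' Theorem on field space, Cor.~\ref{StokesTheoremLocalFunctions}), this turns into $\dd_M \CP_Z = (\langle EL, Z\rangle)\circ j^\infty = \langle EL\circ j^\infty,\, Z\circ j^\infty\rangle = \langle \CE\CL, \CZ\rangle$, where the last equality uses that $\langle-,-\rangle$ is a fiberwise bundle map over $M$ and so commutes with postcomposition by $j^\infty\phi$. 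On-shell conservation is then immediate: restricting to $\CF_{\CE\CL}\hookrightarrow \CF$, the right-hand side vanishes by Def.~\ref{OnshellSpaceOfFields}, so $\dd_M \CP_Z|_{\CF_{\CE\CL}} = 0$. The only subtle point I anticipate is keeping track of the bi-degree signs when commuting $\iota_{\pr Z}$ past $\dd_H$ in the first computation, but since $\theta_L \in \Omega^{d-1,1}$ and $\pr Z$ has degree $(0,-1)$, these signs are fixed and routine.
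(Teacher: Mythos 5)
Your proposal is correct and follows essentially the same argument as the paper's proof: the same expansion of $\dd_H P_Z$ using the symmetry hypothesis and the commutation $[\iota_{\pr Z},\dd_H]=0$, the same cancellation via $\dd_V L = EL + \dd_H\theta_L$, the same local-coordinate identification $\iota_{\pr Z}EL=\langle EL, Z\rangle$ using the source-form structure, and the same pullback along $j^\infty$ with Lem.~\ref{HorizontalDifferentialBaseDeRhamCompatibility} for part (ii). No gaps.
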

As with Eq. \eqref{ThetaNotationalShorthand}, we have used the notation for the pullback of the contraction $\iota_{\pr Z} \theta_L$ to field space $\CF$
\vspace{-2mm}
\begin{align}\label{ThetaContractionShorthandNotation}
 \iota_\CZ \theta_\CL := (\iota_{\pr Z}\theta_L) \circ j^\infty \, .
\end{align}

\vspace{-2mm}
\noindent 
The underlying mathematical meaning of this notation will become apparent in \cref{TheBicomplexOfLocalFormsSection}, as an actual contraction 
of the local vector field $\CZ:= Z\circ j^\infty$ and the `local form $\theta_\CL$' on $\CF\times M$ (see Eq. \eqref{ContractionOfLocal1form}). 
For the moment, we will only need it and treat it as a useful notation.

\smallskip 
The calculation may be read in reverse, implying the converse statement: Suppose there exists a current $\CP_Z$ which is
on-shell preserved, in such a way that $\dd_M \CP_Z = \langle \CE \CL, \CZ \rangle$ for some local vector field $\CZ= Z\circ j^\infty_M$.
Then the vector field is an infinitesimal symmetry of the local Lagrangian with 
$\CZ (\CL) = \dd_M \big(\CP_Z - (\iota_{\pr Z} \theta_L \circ j^\infty_M)\big)$. An immediate `trivial' application of Noether's First Theorem 
is on the trivial symmetries from Ex. \ref{TrivialInfinitesimalSymmetries}, whereby $\CK_{\CT\cdot \CE \CL } = - (\iota_{\pr T\cdot E L} \theta_L)$ 
and hence $\CP_{\CT \cdot \CE \CL} = 0$ identically. Strictly speaking, an infinitesimal symmetry $\CZ\in \CX_\loc^\CL(\CF)$ 
actually defines a family of conserved currents: One may add an arbitrary horizontally closed 
$(d-1,0)$-form $K_Z$ and similarly $\iota_{\pr Z} \dd_M T'$ for an arbitrary horizontally (closed and hence) exact $(d-1,1)$-form.

\begin{example}[\bf O($n$)-model conserved currents]
Consider the case of the O(n)-model Lagrangian $\CL$ from Ex. \ref{VectorValuedFieldTheoryLagrangian}, and the vector fields 
$\CZ^A(\phi) = A^{a}_{\, b} \cdot \phi^b \cdot \frac{\delta}{\delta \phi^a}$ from Ex. \ref{DiffeomorphismForVector-valuedFieldTheoryViaTarget} 
and $\CZ^\nu(\phi) = \mathbb{L}_\nu (\phi)^a  \cdot \frac{\delta}{\delta \phi^a}= \nu^\mu \cdot \partial_\mu \phi^a \cdot \frac{\delta}{\delta \phi^a}$ 
from Ex. \ref{DiffeomorphismForVector-valuedFieldTheoryViaBase}. 
These are local with corresponding the evolutionary vector fields $Z^A= A^{a}_{\,b} \cdot u^b \cdot \frac{\partial}{\partial u^a}$ 
and $Z^{\nu}(\phi) = \nu^\mu \cdot u^a_\mu \cdot \frac{\partial}{\partial u^a}$, respectively.

\vspace{1mm} 
\noindent {\bf {(i)}} Choosing $A=(A^a_{\,b})\in \mathfrak{o}(n)$, it is immediate that $\CZ^A(\CL)(\phi):= \mathbb{L}_{\pr Z^A}(L) \big(j^\infty \phi) = 0$ 
due to the $O(n)$-invariance of the Euclidean product $\langle -,-\rangle $ on the target $W$. Recalling the variational decomposition 
$\delta L = EL + \dd_H \theta_L$ with $\theta_L= - \langle \dd_V u , \star \dd_H u \rangle_g $ from Eq. \eqref{O(n)ModelJetBundleVariationalDecomposition},
the corresponding on-shell conserved current is
\vspace{-2mm}
\begin{align*}
\CP_{\CZ^A} (\phi)  &= 0+ \iota_{\CZ^A} \theta_\CL  (\phi) := (\iota_\pr \theta_L) (j^\infty \phi) 
\\ 
&= - \big(Z^{A,a}\wedge \star \dd_H u_a\big) \circ j^\infty \phi  = - \CZ^A(\phi)^a \wedge \star \dd_M\phi 
\\ &= -A^a_{\, b} \cdot \phi^b \wedge \star \dd_M \phi_a  \;.
\end{align*}

\vspace{-2mm}
\noindent
{\bf {(ii)}} Choosing $\nu=\nu^\mu \cdot \frac{\partial}{\partial x^\mu}\in \CX(M)$ to be a Killing vector field of the metric $g$, $\mathbb{L}_\nu (g)=0$, 
it follows that $\CZ^\nu(\CL)(\phi):= \mathbb{L}_{\pr Z^\nu}(L) \big(j^\infty \phi) =  \mathbb{L}_{\nu} \big(\CL(\phi)\big)= \dd_M \iota_\nu\big(   \CL(\phi)\big)$. 
The corresponding on-shell conserved current is
\vspace{-2mm}
\begin{align*}
\CP_{\CZ^\nu} (\phi)  &=  \iota_\nu \big(  \CL(\phi)\big) + \iota_{\CZ^\nu} \theta_\CL  (\phi) \\ 
&=  \iota_\nu \big(  \CL(\phi)\big) -\mathbb{L}_\nu(\phi)^a \wedge \star \dd_M \phi_a \, ,
\end{align*}

\vspace{-2mm}
\noindent which may be further expanded in local coordinates as in Ex. \ref{VectorValuedFieldTheoryLagrangian}. 
In the case of Minkowski spacetime with its Killing Lie algebra being the Poincar\'e Lie algebra, these currents comprise the energy-momentum tensor.
\end{example}

\begin{remark}[\bf Integrating infinitesimal local  symmetries]\label{IntegratingInfinitesimalLocalSymmetries}
Although every 1-parameter local diffeomorphism induces a local vector field on field space (Ex. \ref{DifferentiatingFiniteLocalDiffeomorphisms}), 
it is not necessarily the case that every local vector field integrates to a 1-parameter local diffeomorphism. Indeed, even in the case where 
$\CZ = Z \circ j^\infty$ for some $Z$ that globally factors through a finite order jet $J^k_M F$, it generally only integrates to a local 
flow  on $J^k_M F$, which is not enough to define a diffeomorphism on the full field space $\CF=\mathbold{\Gamma}_M(F)$. Moreover, it might 
be that a local vector field integrates to a spacetime covariant diffeomorphism on $\CF$ instead (see Lem. \ref{InfinitesimalSpacetimeCovariantSymmetries}). 
Nevertheless, we stress that Noether's First theorem (and Second, see Prop. \ref{Noether2nd}) applies for \textit{any} infinitesimal
local symmetries, and the existence of corresponding conserved charges is independent of the integrability properties. 
\end{remark}

Let us close off this subsection by proving how the infinitesimal versions of finite spacetime covariant symmetries 
(Def. \ref{FiniteSymmetryofLagrangianFieldTheory}) are in fact \textit{local} infinitesimal symmetries, justifying our focus -- and generally 
of the physics literature -- on the latter. 

\begin{proposition}[{\bf Infinitesimal spacetime covariant symmetries are local}]\label{InfinitesimalSpacetimeCovariantSymmetries}
Let $\CD_t$ be a smooth 1-parameter family of spacetime covariant symmetries of a classical field theory $(\CF,\CL)$ starting at the identity. 
That is,
\vspace{-2mm} 
$$
\CL \circ \CD_{t} = f_t^* \circ \CL + f^*_t \circ \dd_M \CK_t 
$$

\vspace{-2mm} 
\noindent where $\CD_{t}=D_t \circ j^\infty(-) \circ (\id,f^{-1}_t) \in [\CF,\CF](\FR^1)$ is induced by a smooth 1-parameter family of bundle maps 
\vspace{-2mm} 
\[ 
\xymatrix@C=4em@R=.4em  {\FR^1\times J^\infty_M F \ar[dd] \ar[rr]^{D_t} &   &  F \ar[dd]
	\\  \\ 
\FR^1\times M \ar[rr]^{f_t} &  &  M \, .
}    
\]

\vspace{-2mm}
\noindent covering a 1-parameter family of diffeomorphisms $f_t:\FR^1\times M\rightarrow M$, for some $\CK_t=K_t \circ j^\infty$ where
$K_t: \FR^1\times J^\infty_M F \rightarrow \wedge^{d-1} T^*M$ is a 1-parameter family of smooth bundle maps over $M$. Then: 

\noindent {\bf (i)} The vector field $\partial_t \CD_t |_{t=0}: \CF \rightarrow T\CF$ is local 
\vspace{-2mm}
\begin{align*}
\partial_t \CD_t |_{t=0}= \CZ_{\mathrm{ev}}= Z_{\mathrm{ev}} \circ j^\infty \hspace{0.5cm} \in \hspace{0.5cm} \CX_{\mathrm{loc}}(\CF) \, ,
\end{align*}

\vspace{-1mm}
\noindent for an evolutionary vector field $Z_{\mathrm{ev}}: J^\infty_M F\rightarrow VF$ over F.

\noindent {\bf (ii)} The induced infinitesimal action is an infinitesimal local symmetry of $(\CF,\CL)$
\vspace{-2mm} 
\begin{align*}
 \partial_t (\CL \circ \CD_{t}) |_{t=0} \equiv \CZ_\mathrm{ev}(\CL) =  \dd_M (\CK_{Z_\ev})
\end{align*}

\vspace{-2mm} 
\noindent 
for some $K_{Z_\mathrm{ev}}: J^\infty_M F\rightarrow \wedge^{d-1} T^*M$ over M.
\end{proposition}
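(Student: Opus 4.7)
The plan is to compute the point-wise derivative $\partial_t \CD_t|_{t=0}$ explicitly using the chain rule and verify that the resulting tangent vector is (a) vertical in $VF$ and (b) depends on $\phi$ only through its infinite jet, thereby exhibiting $Z_{\mathrm{ev}}$ as a bundle map $J^\infty_M F \to VF$ over $F$; then to differentiate the spacetime-covariant symmetry identity and identify the horizontally exact correction using Cartan's magic formula on top forms. Throughout I work on $*$-plots, with the extension to $\FR^k$-plots following verbatim.

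For part (i), fix $\phi \in \CF(*)$. Since $D_t$ covers $f_t$, one has $\pi_F \circ D_t = f_t \circ \pi^\infty_M$, so $\CD_t(\phi)(x) = D_t\bigl(j^\infty_{f_t^{-1}(x)} \phi\bigr) \in F_x$ for all $t$; hence $\partial_t \CD_t(\phi)(x)|_{t=0} \in V_{\phi(x)} F$. The chain rule together with $\partial_t f_t^{-1}(x)|_{t=0} = -v(x)$ (from $f_t \circ f_t^{-1} = \id_M$), where $v := \partial_t f_t|_{t=0} \in \Gamma_M(TM)$, and $D_0 = \pi^\infty_0$ yields
\[
  \partial_t \CD_t(\phi)(x)|_{t=0} \;=\; (\partial_t D_t|_{t=0})(j^\infty_x \phi) \;-\; (\dd\pi^\infty_0)_{j^\infty_x \phi}\!\circ\! \dd(j^\infty\phi)_x\bigl(v(x)\bigr).
\]
By the smooth horizontal splitting \eqref{PushforwardAlongProlongationOfField}, the second term depends only on $j^\infty_x\phi$ and $v(x)$, while the first is a bundle map $J^\infty_M F \to TF$ evaluated at $j^\infty_x \phi$; both cover $v(x)$ under $\dd\pi_F$, so the difference lives in $VF$ and depends on $\phi$ only through $j^\infty_x\phi$. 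This defines the required $Z_{\mathrm{ev}} : J^\infty_M F \to VF$ over $F$, and the formula $\partial_t\CD_t|_{t=0} = Z_{\mathrm{ev}} \circ j^\infty$ holds by construction. Smoothness of $Z_{\mathrm{ev}}$ as a Fr\'echet/locally-pro-manifold map follows from the smoothness of $D_t$, $f_t$, and of the Cartan splitting (Prop. \ref{SmoothSplittingProp}).

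For part (ii), differentiate the identity $\CL\circ \CD_t = f_t^*\circ \CL + f_t^*\circ \dd_M \CK_t$ at $t=0$. Setting $t=0$ first, since $f_0 = \id_M$ and $\CD_0 = \id_{\CF}$, we read off $\dd_M \CK_0 = 0$; hence the $\partial_t f_t^*$-term in the Leibniz expansion of the RHS kills the $\dd_M \CK_0$-factor. Using $\partial_t f_t^*|_{t=0} = \mathbb{L}_v$ on differential forms on $M$ and the fact that $\CL(\phi)$ is a top-form so that Cartan's magic formula collapses to $\mathbb{L}_v \CL(\phi) = \dd_M \iota_v \CL(\phi)$, we obtain
\[
  \CZ_{\mathrm{ev}}(\CL) \;=\; \partial_t(\CL\circ \CD_t)|_{t=0} \;=\; \mathbb{L}_v\circ \CL \;+\; \dd_M\bigl(\partial_t \CK_t|_{t=0}\bigr) \;=\; \dd_M\bigl(\iota_v\CL + \partial_t \CK_t|_{t=0}\bigr),
\]
where the identification of the LHS uses Ex. \ref{InfinityJetVerticalVectorFromProlongationofPlot} applied to the $\FR^1$-plot $t\mapsto \CD_t(\phi)$, which shows $\partial_t j^\infty \CD_t(\phi)|_{t=0} = \pr Z_{\mathrm{ev}} \circ j^\infty\phi$ by the characterization of the prolongation. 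Setting $\CK_{Z_{\mathrm{ev}}} := \iota_v L + \partial_t K_t|_{t=0}$ — a bundle map $J^\infty_M F \to \wedge^{d-1}T^*M$ over $M$ — we get $\CZ_{\mathrm{ev}}(\CL) = \dd_M \CK_{Z_{\mathrm{ev}}}$ with $\CK_{Z_{\mathrm{ev}}} = K_{Z_{\mathrm{ev}}}\circ j^\infty$ manifestly local.

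The main obstacle, and the only real calculation, is the cancellation of horizontal components in the derivative formula of part (i); this is precisely what forces the answer into $VF$ and depends crucially on the property $\pi_F \circ D_t = f_t \circ \pi^\infty_M$ of bundle maps covering diffeomorphisms. Everything else is bookkeeping within the evolutionary Cartan calculus of Cor. \ref{EvolutionaryCartanCalculus} and Lem. \ref{HorizontalDifferentialBaseDeRhamCompatibility}.
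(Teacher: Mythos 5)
Your proof is correct and follows essentially the same route as the paper's: differentiate $\CD_t(\phi)=D_t\circ j^\infty\phi\circ f_t^{-1}$ by the chain rule, recognize the correction term as (the projection of) the horizontal lift of $v=\partial_t f_t|_{t=0}$ via the Cartan splitting so that the difference is vertical and jet-dependent, and then differentiate the covariance identity using $\partial_t f_t^*|_{t=0}=\mathbb{L}_v=\dd_M\iota_v$ on top forms. The only cosmetic differences are that you obtain verticality directly from $\pi_F\circ D_t=f_t\circ\pi^\infty_M$ rather than from the paper's coordinate decomposition of $\pr Z=\partial_t\pr D_t|_{t=0}$ into $\pr Z_{\mathrm{ev}}+H(v)$ on $J^\infty_M F$, and that you use the (valid) observation $\dd_M\CK_0=0$ to drop the term $\iota_{H(v)}\dd_H K_0$ that the paper retains in its formula for $K_{Z_{\mathrm{ev}}}$.
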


\begin{proof} We show the calculation at the level of $*$-plots of fields, with that of higher $\FR^k$-plots being analogous. 
Running through the differentiation of Ex.  \ref{VectorFieldsAsInfinitesimalDiffeomorphisms}
(see also Ex. \ref{DifferentiatingFiniteLocalDiffeomorphisms}), it follows that
\vspace{-2mm}
\begin{align*}
\partial_t \CD_t(\phi)&= \partial_t D_t|_{t=0} \circ j^\infty(\phi) \, \circ \, \id_M + \partial_t \big(\pi^\infty_0 \circ j^\infty\phi \, \circ \, f_t^{-1}\big) |_{t=0} \\
&= \partial_t\big( \pi^\infty_0 \circ \pr D_t)|_{t=0} \, \circ \, j^\infty \phi + \dd (\pi^\infty_0  \, \circ \, j^\infty \phi)\circ  \big(\partial_t f^{-1}_t |_{t=0}\big) \\
&= \dd \pi^\infty_0 \, \circ \, (\partial_t \pr D_t |_{t=0}) \, \circ \, j^\infty \phi + \dd \pi^\infty_0 \, \circ \, \dd (j^\infty \phi) \big(\partial_t f^{-1}_t |_{t=0}\big) \\
&=: \dd \pi^\infty_0 \, \circ \, \pr Z \circ j^\infty \phi - \dd \pi^\infty_0 \, \circ \, H(X) \circ j^\infty \phi \, ,
\end{align*}

\vspace{-2mm}
\noindent where the first line follows by the product rule, the second by the definition (Def. \ref{ProlongationOfJetBundleMap}) of the prolongation 
$\pr D_t : \FR^1\times J^\infty_M F\rightarrow J^\infty_M F$ and the chain rule, while the third again by the chain rule. The last line follows by 
defining the vector field $\pr Z:= \partial_t \pr D_t |_{t=0} : J^\infty_M F \rightarrow T J^\infty_M F $ and recalling\footnote{We use
$H (X)_{j^\infty_p \phi} = H_{j^\infty_p\phi}\big( X (j^\infty_p \phi)\big)$ as a shorthand for 
$H\big((\pi^\infty_M)^* X\big)(j^\infty_p \phi) = H_{j^\infty_p\phi}\big((\pi^\infty_M)^* X (j^\infty_p \phi)\big)$, since the splitting is really 
a map $H:J^\infty_M F\times_M TM \rightarrow TJ^\infty_M F$.} that $H_{j^\infty_p\phi}\big( X (j^\infty_p \phi)\big):= \dd(j^\infty \phi)_p X_p$ 
defines the horizontal lift of tangent vectors on $M$ to tangent vectors on $J^\infty_M F$ \big(see Eq. \eqref{InftyJetBundleHorizontalTangentCoordinate}, 
Prop. \ref{SmoothSplittingProp}\big), where we define $X:= \partial_t f_t |_{t=0}:M\rightarrow TM$, hence that $\partial_t f^{-1}_t |_{t=0}=-X$. 

We highlight that the vector field $\pr Z:J^\infty_M F\rightarrow TJ^\infty_M F$ is \textit{not}  vertical and in particular not the prolongation of some evolutionary vector
field (unless $f_t=\id_M$). However, the horizontal splitting of $TJ^\infty_M F$ decomposes the vector field as $\pr Z= (\pr Z)_V + H(X) $, with the horizontal component
being necessarily the lift of $X$ since $Z$ covers $X$ -- by construction. Working in local coordinates (see \eqref{HorizontalVerticalVectorFieldCoordinates}) and 
using the local characterization of prolongated evolutionary vector fields \eqref{ProlongationEvolutionayVectorFieldlocally}, one sees that the vertical component is
in fact necessarily the prolongation $(\pr Z)_V = \pr Z_\mathrm{ev}$ of an evolutionary vector field  $Z_{\mathrm{ev}}:J^\infty_M F\rightarrow VF$ over $F$
(see also \cite[Prop. 1.20]{Anderson89}  for detailed coordinate formulas). Thus
\vspace{-2mm}
\begin{align*}
\partial_t \CD_{t}|_{t=0}&= \dd \pi^\infty_0 \circ \pr Z_{\mathrm{ev}} \circ j^\infty + \dd \pi^\infty_0 \circ H(X)\circ j^\infty  - \dd \pi^\infty_0 \circ H(X)\circ j^\infty   \\
 &=  Z_{\mathrm{ev}} \circ j^\infty =: \CZ_{\mathrm{ev}}
\end{align*} 

\vspace{-2mm}
\noindent is in fact a \textit{local} vector field $\CF$. Completely analogously, it follows that the infinitesimal action \footnote{A priori, the left-hand side is defined only through Rem. \ref{AlternativeDefinitionofLocalVectorFieldAction}.} is given by
\vspace{-2mm}
\begin{align*}
    \partial_t \big(\CL \circ \CD_t\big) |_{t=0}&= \pr Z_\mathrm{ev}(L) \circ j^\infty + \mathbb{L}_X ( \CL) - \mathbb{L}_X \circ  (\CL )  \\
    &= \CZ_\mathrm{ev} (\CL) \, .
\end{align*}

\vspace{-2mm}
\noindent with the latter action being that of Def. \ref{ActionOfLocalVectorFieldsOnCurrents}.

Assuming further that $\CD_t$ is a 1-parameter covariant symmetry of the Lagrangian, we furthermore get 
\vspace{-2mm}
\begin{align*}
    \partial_t \big(\CL \circ \CD_t\big) |_{t=0}&= \partial_t \big( f^*_{t} \circ \CL  +  f^*_t \circ \dd_M \CK_t ) |_{t=0}\\
    &= \mathbb{L}_X (\CL) + \mathbb{L}_X (\dd_M \CK_{0}) + \dd_M(\partial_t \CK_t |_{t=0}) \\
    &= \dd_M \big(\iota_X \CL + \iota_{X} \dd_M \CK_0 + \dot{\CK}_0  \big) \, ,
\end{align*}

\vspace{-2mm}
\noindent where in the first line we used the product rule, then the Cartan calculus on $M$ and by defining 
$\dot{\CK}_0= \dot{K}_0\circ j^\infty $ where $\dot{K}_0:= \partial_t K_t |_{t=0}: J^\infty_M F \rightarrow \wedge^d T^*M $ 
(see Ex. \ref{DifferentiatingFiniteLocalSymmetries}). Lastly, it follows immediately in coordinates that
$\iota_X \CL(\phi):= \iota_X (L\circ j^\infty\phi) = 
(\iota_{H(X)} L ) \circ j^\infty \phi $ since the Lagrangian density $L$ is a horizontal form, and similarly for the second term. 
Combining this with the compatibility of Lem. \ref{HorizontalDifferentialBaseDeRhamCompatibility}, we arrive at
\vspace{-2mm}
$$
\partial_t \big(\CL \circ \CD_t\big) |_{t=0} = \dd_M (K_{Z_{\mathrm{ev}}} \circ j^\infty) 
$$

\vspace{-1mm} 
\noindent
where $K_{Z_{\mathrm{ev}}}= \iota_{H( X)} \big( L + \dd_H K_0 \big) + \dot{K}_0 \, \in \, \Omega^{d-1,0}(J^\infty_M F)$, which completes the proof.
\end{proof}

The above, somewhat abstract, statement applies in virtually all local field theories with spacetime covariant symmetries (General relativity,
Yang--Mills, Chern--Simons). It is considerably easier to interpret the calculation in explicit examples. 
\begin{example}[\bf Differentiating spacetime symmetry of O($n$)-model]\label{InfinitesimalSpacetimeSymmetryOfO(N)-model}
Recall the O($n$)-model field theory of Ex. \ref{VectorValuedFieldTheoryLagrangian}, with field space $\CF=[M,W]$ and local Lagrangian 
\vspace{-2mm}
$$
\CL(\phi)=\tfrac{1}{2}\big(\langle \dd_M\phi\, ,\,  \dd_M\phi \rangle_g +  c_2 \cdot \langle \phi , \phi \rangle + \tfrac{1}{2} c_4 \cdot 
(\langle \phi,\phi \rangle)^2 \big) \cdot \dd \mathrm{vol}_g \, ,
$$

\vspace{-2mm}
\noindent where $g$ is a background metric on $M$ and $\langle-,-\rangle$ an inner product on $W$.
By Ex. \ref{LocalSymmetriesOfO(n)Model}, any isometry $f:M\rightarrow M$ induces a spacetime covariant symmetry via 
the pullback action $\CD=f^*: [M,W]\rightarrow [M,W]$
\vspace{-2mm}
\begin{align}\label{O(N)spacetimecovariantsymmetryEqRef}
\CL\circ f^*(\phi) = f^* \circ \CL(\phi) \, .
\end{align}

\vspace{-2mm}
\noindent 
Consider any 1-parameter family of isometries $f_t: \FR^1\times M \rightarrow M$ starting at the identity, with induced Killing vector field $v=\partial_t f_t |_{t=0} \in \CX(M)$ on spacetime. Differentiating the corresponding 1-parameter family action on $\CL$, i.e., the left-hand side of Eq. \eqref{O(N)spacetimecovariantsymmetryEqRef}, it immediately follows that 
\vspace{-2mm}
\begin{align*}
 \partial_{t} \CL (f^*_t \phi) |_{t=0} &= \big(\langle \dd_M( \mathbb{L}_v \phi)\, ,\,  \dd_M(\phi) \rangle_g +  c_2 \cdot \langle  \mathbb{L}_v \phi , \phi \rangle + c_4 \cdot 
(\langle \mathbb{L}_v\phi,  \phi \rangle)^2 \big) \cdot \dd \mathrm{vol}_g \, .   
\end{align*}

\vspace{-2mm}
\noindent Next recall the vector field on field space $\CZ^v:\CF \rightarrow T\CF$ from Ex. \ref{DiffeomorphismForVector-valuedFieldTheoryViaBase} corresponding to $f_t^*:\FR^1\times \CF\rightarrow \CF$. It is immediately seen to be \textit{local} as 
\vspace{-2mm}
$$
\CZ^\nu(\phi) = \mathbb{L}_v(\phi^a) \cdot \frac{\partial}{\partial u^a}= Z^\nu \circ j^\infty(\phi)
$$

\vspace{-1mm}
\noindent for $Z^\nu: J^\infty_M F \rightarrow VF$ defined locally by $\nu^\mu \cdot u^a_\mu \cdot \frac{\partial}{\partial u^a}$. It follows by the explicit formula above that the infinitesimal action the diffeomorphism $f^*:\CF\rightarrow \CF$ on the Lagrangian is equivalently given by the action of the corresponding local vector field (Def. \ref{ActionOfLocalVectorFieldsOnCurrents})
\vspace{-2mm}
$$ 
\partial_{t} \CL (f^*_t \phi) |_{t=0} = \CZ^\nu(\CL) (\phi) =  \big(\langle \dd_M( \mathbb{L}_v \phi)\, ,\,  \dd_M(\phi) \rangle_g +  c_2 \cdot \langle  \mathbb{L}_v \phi , \phi \rangle + c_4 \cdot 
(\langle \mathbb{L}_v\phi,  \phi \rangle)^2 \big) \cdot \dd \mathrm{vol}_g \, .
$$

\vspace{-1mm}
\noindent
At this point, either by using the fact that $\nu\in \CX(M)$ is a Killing vector field or by differentiating the right-hand side of Eq. \eqref{O(N)spacetimecovariantsymmetryEqRef}, it furthermore follows that the infinitesimal action satisfies
\vspace{-2mm}

$$
\partial_{t} \CL (f^*_t \phi) |_{t=0} = \CZ^\nu(\CL) (\phi) = \dd_M \big (\iota_\nu \CL(\phi) \big) \, , 
$$
\vspace{-2mm}
\noindent
i.e., an infinitesimal local symmetry as per Lem. \ref{InfinitesimalSpacetimeCovariantSymmetries}.

\end{example}

\subsection{Infinitesimal gauge symmetries and Noether's Second Theorem}
\label{Sec-gauge} 

There is a class of infinitesimal local symmetries that induce redundancies in the physical interpretation of classical 
field theories, in the sense that they obstruct (Prop. \ref{LocalGaugeSymmetryObstructsCauchySurface}) the existence of a `\textit{Cauchy surface}' (Def. \ref{CauchySurface}), i.e., a well-defined set of initial conditions which induce a \textit{unique} evolution of fields via the Euler--Lagrange
equations.
These are symmetries that can be freely parametrized by `gauge' parameters which are functions, or more generally sections
of vector bundles, on the spacetime $M$. Such symmetries induce currents that are conserved off-shell, 
\footnote{In general, this holds up to a trivial current that vanishes on-shell (Cor. \ref{TrivialCurrentsFromGaugeSymmetries}).}
which may 
equivalently be interpreted as inducing interrelations between the components of the Euler--Lagrange differential operator.

\begin{lemma}[{\bf Motivating case of a gauge symmetry}]\label{LocalGaugeSymmetrySimpleExample}
Let $\CZ = Z\circ j^\infty \in \CX(\CF)$ be a local vector field. Suppose the local vector field 
$\mathcal{Z}_f = Z_f\circ j^\infty := (f\cdot Z) \circ j^\infty  \in \CX(\CF)$ is an infinitesimal symmetry of the smooth 
Lagrangian $\CL:\CF \rightarrow \Omega^{d}_\mathrm{Vert}(M)$, for any $f\in C^\infty(M)$. That is, 
$$
\CZ_f (\CL) = \dd_M \CK_{Z_f} \, ,
$$
for some $K_{Z_f}\in \Omega^{d-1,0}(J^\infty_M F)$, for every $f\in C^\infty(M)$. 

\noindent {\bf (i)} 
Then the $(d-1)$-form current $\CP_{Z}:\CF \rightarrow \Omega^{d-1}_{\mathrm{Vert}}(M)$ corresponding to $\CZ \in \CX(\CF)$ is conserved \textit{off-shell} 
\vspace{-1mm} 
$$
\dd_M \CP_Z =  0 \, : \, \CF\longrightarrow \Omega^{d}_{\mathrm{Vert}}(M)\, ,
$$

\vspace{-2mm} 
\noindent and not only on the smooth space of on-shell fields $\CF_{\CE \CL}\hookrightarrow \CF$. 

\noindent {\bf (ii)}  Equivalently, there exist the \textit{off-shell} relation between the Euler-Lagrange 
differential operators of $\CL$
\vspace{-1mm} 
$$
\langle \CE \CL, \mathcal{Z} \rangle = \CE \CL_a \cdot \mathcal{Z}^a = 0 \, .
$$
\begin{proof}
Since $(f\cdot Z) \circ j^\infty $ is a symmetry of $\CL$, by Noether's First Theorem 
\vspace{-2mm} 
\begin{align*}
\dd_{M} \CP_{f\cdot Z} &= 
\langle \CE \CL , (f\cdot Z) \circ j^\infty \rangle = \langle \CE \CL, Z\circ j^\infty\rangle \cdot f \\
&= \dd_M \CP_Z \cdot f
\end{align*}

\vspace{-2mm} 
\noindent for any $f\in C^\infty(M)$. In other words, for any field configuration $\phi \in \CF$ (and any higher plot),
$\dd_M \CP_Z(\phi) \cdot f \in \Omega^{d}(M)$ is an exact top-form for all $f\in C^\infty(M)$. 
Integrating over closed balls $\bar{B}^d_{p}\subset M$ around any point $p\in M$ in the base manifold, 
$$
\int_{\bar{B}^d_p} \dd_M \CP_Z(\phi) \cdot f =0 
$$
for all $f\in C^\infty(M)$, which implies that $\dd_M \CP_Z(\phi)|_{B^d_p} =0$ in the interior. Thus, $\CP_Z(\phi)$ is
locally - and hence globally closed for any field configuration $\phi\in \CF$. \footnote{Strictly speaking, this only shows that $\CP_Z(\phi)$
is closed in the \textit{interior} $\mathrm{int}(M)\subset M$.} It follows that, as a smooth current on field space,
$$
0=\dd_M \CP_Z: \CF \longrightarrow \Omega^{d}_{\mathrm{Vert}}(M)\, , 
$$
and so $\CP_Z$ is conserved off-shell. By the formula $\dd_M \CP_Z = \langle \CE \CL, \CZ \rangle $, being conserved off-shell is 
equivalent to the identities
$$
\langle \CE \CL, \CZ \rangle= \CE \CL_a \cdot \
\CZ^a =0 
$$

\vspace{-2mm} 
\noindent between the components of the smooth Euler--Lagrange operator.
\end{proof}
\end{lemma}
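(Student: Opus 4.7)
The plan is to leverage the freedom to scale the infinitesimal symmetry by an arbitrary $f\in C^\infty(M)$, reducing the desired off-shell vanishing to a local statement via the fundamental lemma of the calculus of variations. The starting point is Noether's First Theorem (Prop. \ref{Noether1st}), which I would apply to both $\CZ$ and $\CZ_f$ to produce conserved currents $\CP_Z$ and $\CP_{Z_f}$ satisfying $\dd_M\CP_Z = \langle\CE\CL,\CZ\rangle$ and $\dd_M\CP_{Z_f} = \langle\CE\CL,\CZ_f\rangle$.

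The key algebraic input is that $EL\in \Omega^{d,1}_s(J^\infty_M F)$ is a \emph{source form}: in coordinates $EL = EL_a\cdot \dd_V u^a \wedge \dd x^1\wedge\cdots\wedge\dd x^d$, so the contraction $\iota_{\pr Z_f} EL$ picks up only the $|I|=0$ component of the prolongation $\pr Z_f = \sum_I D_I(f\cdot Z^a)\partial/\partial u^a_I$. Hence $\iota_{\pr Z_f}EL = (f\cdot Z^a)\cdot EL_a\cdot \dd x^1\wedge\cdots\wedge\dd x^d = f\cdot \iota_{\pr Z}EL$, and pulling back by $j^\infty\phi$ gives
\[
\dd_M \CP_{Z_f}(\phi) \;=\; f\cdot \dd_M\CP_Z(\phi)
\]
for every field $\phi$ and every $f$. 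No derivatives of $f$ appear on the right because they are contracted against the vanishing components $\iota_{\partial/\partial u^a_I}EL = 0$ for $|I|\geq 1$.

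To deduce (i), I would localize. Fix $p\in M$, a closed coordinate ball $\bar B_p^d\subset M$, and let $f$ be a bump function compactly supported in the interior $B_p^d$. By the formula $\CP_{Z_f} = K_{Z_f} + \iota_{\pr Z_f}\theta_L$, taking $K_{Z_f}$ to be the natural choice that depends linearly on $f$ and its derivatives (as it arises from the explicit integration-by-parts procedure producing $\theta_L$ in \eqref{LagrangianVerticalDifferentialDecomposition}), the pulled-back current $\CP_{Z_f}(\phi)$ vanishes in a neighborhood of $\partial \bar B_p^d$. Stokes' Theorem on field space (Cor. \ref{StokesTheoremLocalFunctions}) then yields
\[
\int_{\bar B_p^d} f\cdot \dd_M\CP_Z(\phi) \;=\; \int_{\bar B_p^d}\dd_M\CP_{Z_f}(\phi) \;=\; \int_{\partial\bar B_p^d}\CP_{Z_f}(\phi) \;=\; 0.
\]
Varying $f$ over all bump functions on $B_p^d$, the fundamental lemma of the calculus of variations forces $\dd_M\CP_Z(\phi)|_{B_p^d} = 0$; covering $M$ by such balls and running the identical argument on $\FR^k$-plots of fields yields $\dd_M\CP_Z = 0$ as a smooth map $\CF\to \Omega^{d}_{\mathrm{Vert}}(M)$.

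The main obstacle is the subtle verification that one can indeed take $K_{Z_f}$ to depend linearly on $f$ and its derivatives, so that $\CP_{Z_f}(\phi)$ vanishes along $\partial\bar B_p^d$; this uses that $K_{Z_f}$ is determined only up to $\dd_H$-closed corrections, and that the canonical construction from the coordinate formulas of $\mathbb{L}_{\pr Z_f}L$ inherits this linearity from the Leibniz expansion of $D_I(f\cdot Z^a)$. Part (ii) is then immediate: the equation $\dd_M\CP_Z = \langle\CE\CL,\CZ\rangle$ of smooth maps on $\CF$, together with the vanishing of the left-hand side from (i), gives the off-shell identity $\langle \CE\CL,\CZ\rangle = \CE\CL_a\cdot \CZ^a = 0$.
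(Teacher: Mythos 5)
Your proof is correct and follows essentially the same route as the paper's: apply Noether's First Theorem to the scaled symmetry $\CZ_f$, use the source-form property of $EL$ to get $\dd_M\CP_{Z_f}=f\cdot\dd_M\CP_Z$, and then localize over balls to invoke the fundamental lemma of the calculus of variations. If anything you are more explicit than the paper at the one delicate step --- the paper simply asserts $\int_{\bar B^d_p}f\cdot\dd_M\CP_Z(\phi)=0$ for all $f$, whereas you correctly restrict to bump functions, invoke Stokes, and flag that the boundary term vanishes only because $\CP_{Z_f}$ (in particular $K_{Z_f}$) can be taken to depend locally on the jets of $f$, which is the locality assumption built into the paper's general Definition of parametrized gauge symmetries.
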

The illustration above captures the main features of gauge symmetries. The `gauge' parameters 
$f\in C^\infty(M)\cong \Gamma_M(M\times \FR)$ parametrize an (infinite-dimensional) family of symmetries
$\{\CZ_f:= (f\cdot Z)\circ j^\infty \}\subset \CX_\mathrm{loc}(\CF)$, whose dependence on $f$ is $\textit{local}$. 
In the above example, $\CZ_f$ depends on f via its value at points, i.e., the zero-jet of $f$. These induce non-trivial interrelations between the components of the Euler--Lagrange operator, and the resulting current is conserved off-shell.

\medskip 
More generally, an infinitesimal local gauge symmetry is an infinite-dimensional subspace 
$\{\CR_e:= R_{e}\circ j^\infty \}\subset \CX_\mathrm{loc}^{\CL}(\CF)$ parametrized by `\textit{gauge parameters}', sections 
$e\in \Gamma_M(E)$ of a gauge parameter vector bundle $E\rightarrow M$, such that the dependence of each $\CR_e$ on $e$ is 
linear and local, i.e.,
via the infinite jet prolongation of $e$. The resulting currents satisfy similar off-shell properties, which are equivalent 
to a set of identities relating the components of the Euler--Lagrange operator. We now turn to make this intuition precise.

\begin{definition}[\bf  Infinitesimal gauge symmetry]\label{ParametrizedGaugeSymmetries}
A (parametrized) collection of  \textit{infinitesimal local gauge symmetries} of a Lagrangian 
density $L$ is an infinite-dimensional subspace of evolutionary vector fields on $J^\infty_M F$ that may be identified with 
the image of a $\FR$-linear map
\vspace{-2mm} 
\begin{align*}
R_{(-)} : \Gamma_M(E) &\longrightarrow \CX_\mathrm{ev}(J^\infty_M F)\, \\ 
e &\longmapsto R\circ (j^\infty e, \id_{J^\infty_M F})
\end{align*}

\vspace{-2mm} 
\noindent where $E\rightarrow M$ is a ``\textit{gauge parameter}'' vector bundle over M, and
\vspace{-2mm} 
\[ 
\xymatrix@C=1.8em@R=.4em  { J^\infty_M E\times_M J^\infty_M F 
 \ar[rd] \ar[rr]^{\hspace{0.8cm} R} &   &  VF \ar[ld]
	\\ 
& F & 
}   
\]

\vspace{-2mm} 
\noindent is a smooth bundle map linear in the $J^\infty_M E$ fibers, \footnote{The fiber product $J^\infty_M G\times_M J^\infty_M F$
is taken in $\mathrm{LocProMan}\hookrightarrow \FrechetManifolds$. Equivalently, it is the infinite jet bundle $J^\infty_M (G\times_M F)$ 
on the fiber product of the bundles $G$ and $F$, and so manifestly a locally pro-manifold.}
such that each evolutionary vector field 
\vspace{-2mm} 
$$
R_{e}:J^\infty_M F \cong  M\times_M J^\infty_M F \xlongrightarrow{\;\;\big(j^\infty e,\;  \id_{J^\infty_M F}\big)\;\;} 
J^\infty_M E \times_M J^\infty_M F \xlongrightarrow{\;R\;} VF \,  
$$

\vspace{-1mm} 
\noindent is a symmetry of the Lagrangian density
\vspace{-1mm} 
$$
\mathbb{L}_{\pr R_e} L = \dd_H K_{R_{e}}\, ,
$$
for some $K_{R_{e}}\in \Omega^{d-1,0}(J^\infty_M F)$ with local dependence on $e$, i.e., given by a bundle map 
$K_R : J^\infty_M E \times_M J^\infty_M F \rightarrow \wedge^{d-1}T^*M$ over $M$. 
\end{definition}

Composing with the jet 
prolongation along $F$, this corresponds to a subspace 
of local vector fields of $\CF$, given by the image of the (smooth) $\FR$-linear map 
$$
\CR_{(-)}:= R_{(-)}\circ j^\infty \;:\; \Gamma_M(E) \longrightarrow \CX_{\mathrm{loc}}(\CF) \, ,
$$
\noindent such that for each $e\in \Gamma_M(E)$ it defines a local symmetry of the local Lagrangian $\CL$
\vspace{-1mm} 
$$
\CR_e(\CL)= \dd_M \CK_{R_e}\, .
$$ 

\vspace{-2mm}
\noindent Let us expand the definition in local coordinates, where it appears in a more familiar form. Let $\{x^\mu,c^\beta\}$ be a compatible 
coordinate chart for $E\rightarrow M$ with induced coordinates $\big\{x^\mu,\{c^\beta_K\}_{0\leq |K|}\big\}$ for $J^\infty_M E\rightarrow M$. 
Then by the linearity assumption, the gauge parameter bundle map $R:J^\infty_M E \times_M J^\infty_M F \rightarrow VF$ is locally of the form
\vspace{-2mm} 
\begin{align}\label{GaugeParameterBundleMapLocally}
R&= \sum_{|K|=0}^\infty c^\beta_K \cdot R^{aK}_\beta \cdot \frac{\partial}{\partial u^a} \\[-1pt] \nn
&=\Big(c^\beta \cdot R_\beta^a + c^\beta_\mu \cdot \R_\beta^{a\mu} + c^\beta_{\mu_1\mu_2}\cdot R_\beta^{a \mu_1 \mu_2}
+\cdots \Big) \cdot \frac{\partial}{\partial u^a}
\end{align}
with the sum necessarily (locally) terminating, where each of the coefficients $\{R^a_\beta, R^{a\mu}_\beta, R^{a\mu_1\mu_2}_\beta, \cdots \}$ 
is a smooth function on $J^\infty_M F$, and hence also of (locally) finite order in the coordinates $\big\{x^\mu, \{u^a_I\}_{0\leq |I|}\big\}$. 
Thus, for a gauge parameter $e\in \Gamma_M(E)$ the corresponding evolutionary vector field takes the form
\vspace{-2mm}
\begin{align*}
R_e&=\sum_{|K|=0}^\infty \frac{\partial e^\beta}{\partial x^K} \cdot R^{aK}_\beta \cdot \frac{\partial}{\partial u^a}  \nn\\[-1pt]
&=\Big(e^\beta\cdot  R_\beta^a + \frac{\partial e^\beta}{\partial x^\mu} \cdot \R_\beta^{a\mu} + 
\frac{\partial e^\beta}{\partial x^{\mu_1} \partial x^{\mu_2}}\cdot R_\beta^{a \mu_1 \mu_2}  
+\cdots \Big) \cdot \frac{\partial}{\partial u^a} \, .
\end{align*}
By abuse of notation, the corresponding local vector field on $\CF$ is often denoted by
\vspace{-1mm} 
\begin{align*}
\CR_e(\phi)&= \sum_{|K|=0}^\infty \frac{\partial e^\beta}{\partial x^K} \cdot \CR^{aK}_\beta(\phi) \cdot \frac{\delta}{\delta \phi^a}= \sum_{|K|=0}^\infty \frac{\partial e^\beta}{\partial x^K} \cdot R^{aK}_\beta(\phi,\{\partial_I\phi\}_{|I|\leq k}) \cdot \frac{\delta}{\delta \phi^a} 
\nn\\[-1pt]
&= \Big(e^\beta \cdot \CR_\beta^a(\phi) + \frac{\partial e^\beta}{\partial x^\mu} \cdot \CR_\beta^{a\mu}(\phi)
+ \frac{\partial e^\beta}{\partial x^{\mu_1} \partial x^{\mu_2}}\cdot \CR_\beta^{a \mu_1 \mu_2}(\phi)  +\cdots \Big) \cdot
\frac{\delta}{\delta \phi^a} \, ,
\end{align*}
or even as an `infinitesimal transformation of the field'
\vspace{-1mm} 
$$
\delta_{R_e} \phi^a= e^\beta \cdot \CR_\beta^a(\phi) + \frac{\partial e^\beta}{\partial x^\mu} \cdot \CR_\beta^{a\mu}(\phi)
+ \frac{\partial e^\beta}{\partial x^{\mu_1} \partial x^{\mu_2}}\cdot \CR_\beta^{a \mu_1 \mu_2}(\phi)  +\cdots\, ,
$$
exactly matching the formulae appearing in the physics literature (see e.g. \cite{HenneauxTeitelboim92}).

\begin{example}[\bf Electromagnetic gauge symmetry]\label{ElectromagnetismGaugeSymmetry}
The archetypical example of a gauge symmetry is that of pure electromagnetism, expressed in terms of gauge potentials.
Consider spacetime $(M,g)$ as a (pseudo)-Riemannian $d$-dimensional manifold, and field bundle as the cotangent bundle 
$F= T^*M$ with induced coordinates $\{x^\mu, u_\mu\}$. The smooth set of electromagnetic fields is 
$\CF=\mathbold{\Gamma}_M(T^*M)\cong \Omega^1_{\mathrm{Vert}}(M)$, and the local Lagrangian of pure electromagnetism is 
\vspace{-2mm} 
\begin{align*}
 \CL(A) &:= \tfrac{1}{2} \cdot F_A \wedge  \star F_A  = \tfrac{1}{2}\langle F_A , F_A \rangle_g \cdot \dd \mathrm{vol}_g  \equiv \tfrac{1}{2}\langle \dd_M A ,\dd_M A \rangle_g \cdot \dd \mathrm{vol}_g  
 \\ &= \tfrac{1}{2} g^{\mu \nu} g^{\sigma \rho}\cdot  \partial_\sigma  A_\mu \cdot \partial_\rho  A_\nu  \cdot \sqrt{|g|}\cdot \dd x^1 \cdots \dd x^d \, ,  
\end{align*}
\noindent 

\vspace{-2mm} 
\noindent where $F_A:= \dd_M A $ is the field strength\footnote{For a spacetime of the form $\FR^1_t \times N$, 
the traditional electric and magnetic field strengths are recovered via the decomposition 
$F= E\wedge \dd t + B = E_{it} \dd x^i \wedge \dd t + B_{i j} \dd x^i \wedge \dd x^j$.} (curvature) of the 
gauge potential (connection) $A\in \Omega^1(M)$. Since $F= T^*M$ is a vector bundle it follows that $VF\cong T^*M \times_M T^*M$, and so  $T\CF \cong \CF\times \CF$ (see also Rem. \ref{ScalarFieldTheoryTangetVectors}). Hence,
for each function $e\in C^\infty(M)\cong \Gamma_M(M\times \FR)$ on spacetime there exists the \textit{constant} vector field 
\vspace{-2mm}
\begin{align*}
\CR_{e}: \CF &\longrightarrow T\CF \cong \CF\times \CF \\
A& \longmapsto (A, \dd_M e) \, ,
\end{align*}

\vspace{-2mm}
\noindent which in local coordinates may be represented by
\vspace{-2mm}
$$
\CR_e(A):=  \partial_\mu e \cdot \frac{\delta}{\delta A_\mu} =  \partial_\mu e \cdot \delta^{\mu}_\nu \cdot  \frac{\delta}{\delta A_\nu}\quad \in \quad \CX_\loc(\CF)\, .
$$

\vspace{-1mm}
\noindent 
 Note that the dependence on the chosen function $e\in \Gamma_M(M\times \FR)$ is through $J^1_M (M\times \FR)$. It is infinitesimal symmetry of the Lagrangian 
 as can be checked in coordinates via Def. \ref{ActionOfLocalVectorFieldsOnCurrents}, or directly by the nilpotency of the de Rham differential
\vspace{-2mm}
\begin{align}\label{ElectroMagnetismGaugeInvarianceEq}
\CR_e(\CL) (A) &= \tfrac{1}{2} \big(  \dd_M( \dd_M e) \wedge  \star \dd_M A  + \dd_M A \wedge \star \dd_M(\dd_M e) \big) \nn
\\ &= \dd_M^2 e \wedge \star \dd_M A =0 \, .  
\end{align}

\vspace{-2mm}
\noindent Thus, it defines a parametrized gauge symmetry, with gauge parameter bundle $E=M\times \FR$ and coordinates $\{x^\mu, c\}$, and the corresponding 
bundle map $R:J^\infty_M (M\times \FR) \times_M J^\infty_M (T^*M)\longrightarrow T^*M\times_M T^*M \cong V(T^*M)$ over $T^*M$ given locally by
\vspace{-2mm}
$$
c_\mu \cdot \delta^\mu_\nu \cdot \frac{\partial}{\partial u_\mu } \, .
$$

\vspace{-2mm}
\noindent
For completeness, let us note the electromagnetism Euler--Lagrange operator, i.e., the (free) Maxwell equations take the form
\vspace{-2mm}
\begin{equation}
\label{ElectroMagnetismEulerLagrangeOp}
\CE \CL(A) = \dd_M \star F_A = \dd_M \star \dd_M A \quad \in \quad \Omega^{d-1}(M) ,
\end{equation}

\vspace{-0mm}
\noindent where $\Omega^{d-1}(M) \cong \Gamma_M(\wedge^d T^*M \times_{M}  TM) \cong \Gamma_M(\wedge^d T^*M \otimes V^*F)$, as expected. 
The current corresponding to each infinitesimal symmetry $\CR_e\in \CX_\loc(\CF)$ is given by
\vspace{-2mm}
\begin{align}\label{ElectroMagnetismGaugeCurrent}
\CP_{R_e}(A) = \dd_M e \wedge \star \dd_M A  \, ,
\end{align}

\vspace{-2mm} \noindent which is obviously conserved on $\CF_{\CE \CL}$. 
\end{example}

\begin{remark}[\bf The Lie algebra of all infinitesimal gauge symmetries]\label{LieAlgebraOfAllGaugeSymmetries} For the sake of completeness,
we comment on some abstract aspects of gauge symmetries, defined via arbitrary parametrizations as above. Some of these points are explicitly 
mentioned and further expanded in \cite{HenneauxTeitelboim92}, while others are implicitly necessary and slightly supplement the
(local coordinate) description of gauge symmetries therein. 

\noindent {\bf (i)} The \textit{actual} gauge symmetries $\CX_{\mathrm{loc}}^{\CL,\CR}(\CF)\subset \CX_{\mathrm{loc}}^{\CL}(\CF)$ 
are identified by the \textit{image} of \textit{some} local $\FR$-linear map $\CR_{(-)}:\Gamma_M(E)\rightarrow \CX_\mathrm{loc}^{\CL}(\CF)$ as in Def. \ref{ParametrizedGaugeSymmetries}. In particular, the gauge parameter $E\rightarrow M$ bundle and corresponding parametrization bundle map $R$ are \textit{not unique}.\footnote{ For instance, let $E'\rightarrow M$ be any other vector bundle and consider the map $\tilde{\CR}_{(-)}: \Gamma_M(E\times_M E')\cong \Gamma_M(E)\oplus \Gamma_M(E')\rightarrow \CX_\mathrm{loc}^{\CL}(\CF)$ acting via $\CR_{(-)}$ on the first component and trivially on the second.} Furthermore, it might be that some gauge symmetries are in the image of some parametrization $\CR_{(-)}: \Gamma_M(E) \rightarrow \CX_\mathrm{loc}^{\CL}(\CF)$, but \textit{not} in the image of another $\hat{\CR}_{(-)}: \Gamma_M(\hat{E}) \rightarrow \CX_\mathrm{loc}^{\CL}(\CF)$. \footnote{For instance, take $\hat{E}\rightarrow M$ to be a proper subbundle of $E\rightarrow M$ for an injective $\CR_{(-)}:\Gamma_M(E)\rightarrow \CX_\mathrm{loc}^{\CL}(\CF)$. Then the restricted parametrization $\hat{\CR}_{(-)}:= \CR_{(-)}|_{\hat{E}} : \Gamma_M(\hat{E})\rightarrow \CX_\mathrm{loc}^{\CL}(\CF)$ identifies strictly less gauge symmetries via its image.}

\noindent {\bf (ii)} By Eq. \eqref{SubalgebraofLocalInfinitesimalSymmetries}, it follows that for any two gauge parameters $e_1,e_2 \in \Gamma_M(E)$ of a (parametrized) gauge symmetry $R$, the commutator $[\CR_{e_1},\CR_{e_2}]$ is also a local symmetry of $(\CF,\CL)$. Crucially, however, it is not necessarily possible to express  $[\CR_{e_1},\CR_{e_2}]$ as $R_{e_3}$ for some other gauge parameter 
$e_3\in \Gamma_M(E)$.\footnote{This is possible for all gauge parameters if there exists a local `bracket map' $[-,-]^\CR_{(-)}: \Gamma_M(E)\times \Gamma_M(E)\times \CF \rightarrow \Gamma_M(J^\infty E)$ such that $\CR_{[e_1,e_2]^\CR_{(\phi)}}(\phi)= [\CR_{e_1}(\phi), \CR_{e_1}(\phi)]_{\CX_{\mathrm{loc}}(\CF)}$. The bracket generally might depend on the dynamical fields (in a local manner), and so the composition $\Gamma_M(E)\times \Gamma_M (E) \times \CF \rightarrow \Gamma_M(J^\infty E)\times \CF \rightarrow T\CF $ defines a bundle map over $\CF$, and so (generally) a \textit{Lie algebroid} structure, rather than a Lie algebra. Nevertheless, for many physical examples (e.g. General Relativity, Yang-Mills, Chern-Simons theories) this happens to be an actual Lie algebra map.} Nevertheless, commutators of (parametrized) gauge symmetries are also considered to be \textit{gauge} symmetries. 

\noindent {\bf (iii)} We have defined an infinite-dimensional family of local symmetries, the \textit{trivial} symmetries $\CX_{\mathrm{loc}}^{\CL,\mathrm{triv}}(\CF)\hookrightarrow \CX_{\mathrm{loc}}^{\CL}(\CF)$ of Ex. \ref{TrivialInfinitesimalSymmetries}. Strictly speaking, these are not parametrized by sections of some vector bundle over $M$, but instead by vector bundle maps out $K: J^\infty_M F\rightarrow \wedge^2 VF \otimes \wedge^d TM$ over $F$. Nevertheless, trivial symmetries are also considered to be \text{gauge} symmetries. 

\noindent {\bf (iv)} It follows that formally, the Lie algebra of all (implicit) infinitesimal gauge symmetries is the minimal subalgebra (in fact ideal) of local symmetries, containing all parametrized local symmetries $\big\{ \CX_{\mathrm{loc}}^{\CL,\CR}(\CF) \hookrightarrow \CX_{\mathrm{loc}}^{\CL}(\CF)\big\}_\CR$ and the trivial symmetries $\CX_{\mathrm{loc}}^\CL(\CF)$, 
\vspace{-1mm}
$$
\big( \CX_{\mathrm{loc}}^{\CL,\mathrm{gauge}}\, , [-,-]\big) \longhookrightarrow \big(\CX^{\CL}_{\mathrm{loc}}(\CF), \, [-,-] \big)\, .
$$

\noindent {\bf (v)} A parametrized gauge symmetry $\CR_{(-)}:\Gamma_M(E)\rightarrow \CX^{\CL}_{\mathrm{loc}}(\CF)$ is called a \textit{generating set} if \text{any} infinitesimal gauge symmetry $\CZ\in  \CX_{\mathrm{loc}}^{\CL,\mathrm{gauge}}$ may be written as 
\vspace{-1mm}
$$
\CZ= R \circ (C \times_F \id_{J^\infty_M F}) \circ j^\infty + \CK \cdot \CE \CL 
$$

\vspace{1mm}
\noindent for some bundle maps $C: J^\infty_M F\rightarrow J^\infty_M E$ and $K:J^\infty_M F \rightarrow \wedge^2 VF \otimes \wedge^d TM$. 
In simple words the vector field $\CZ(\phi)$ may be expressed via $\CR$ using a `field dependent gauge parameter' $\CC(\phi)=C(j^\infty \phi)$, 
up to a trivial gauge transformation $\CK\cdot \CE \CL(\phi)$, and so in local coordinates 
\vspace{-1mm}
$$
\delta_Z \phi^a = \Big( \CC^\beta(\phi) \cdot \CR_\beta^a(\phi) + \frac{\partial \CC^\beta(\phi)}{\partial x^\mu}
\cdot \CR_\beta^{a\mu}(\phi) + \cdots \Big) + \CK^{[ab]}(\phi) \cdot \CE \CL_b (\phi)\, .
$$

\vspace{-1mm}
\noindent There is a lot more to be said along these lines, which is however outside the scope of this manuscript. 
For more details and explicit examples of the above concepts, albeit in the topologically trivial cases, see \cite{HenneauxTeitelboim92}.
\end{remark}

\medskip 

The abstract characterization of the full Lie algebra of (implicit) infinitesimal gauge symmetries \big(Rem. \ref{LieAlgebraOfAllGaugeSymmetries} {(iv)}\big)
is great as an abstract backbone, but is not directly amenable to explicit applications. In practice, one always treats gauge symmetries via explicit
parametrizations. Noether's Second Theorem may be viewed as a way of detecting the existence of (parametrized) gauge symmetries of a local field theory $(\CF,\CL)$, 
and may be deduced immediately as an application of Noether's First (Prop. \ref{Noether1st}), essentially via an integration by parts argument.

\begin{proposition}[\bf Noether's Second Theorem]\label{Noether2nd}
Let
\vspace{-3mm}
$$
\CR_{(-)}:\Gamma_M(E) \longrightarrow \CX_{\mathrm{loc}}(\CF)
$$

\vspace{-1mm}
\noindent be an $\FR$-linear map defined by $\CR_{e}:= R\circ \big(j^\infty e, j^\infty(-) \big)$ for some bundle map
\vspace{-2mm} 
\[ 
\xymatrix@C=3em@R=-.1em  { J^\infty_M E\times_M J^\infty_M F 
 \ar[rd] \ar[rr]^{\hspace{0.7cm} R} &   &  VF \ar[ld]
	\\ 
& F & 
}   
\]

\vspace{-2mm} 
\noindent
linear in the $J^\infty_M E$ fibers. 

\noindent {\bf (i)} Then $\CR_{(-)}$ parametrizes a collection of  infinitesimal gauge symmetries of a local field theory 
$(\CF,\CL)$ if and only if the Euler--Lagrange differential operator satisfies the ``Noether identity"
\vspace{-2mm} 
$$
0=\CN^{R}\circ(\CE \CL\times_\CF \id_\CF)\quad : \quad \Gamma_M(F)\longrightarrow \Gamma_M(V^*F \otimes \wedge^d T^*M)\times \Gamma_M (F) 
\longrightarrow \Gamma_M(E^*\otimes \wedge^d T^*M)\, ,
$$

\vspace{-1mm} 
\noindent where $\CN^R:\Gamma_M(V^*F \otimes \wedge^d T^*M)\times \Gamma_M (F)\rightarrow \Gamma_M(E^*\otimes \wedge^d T^*M)$ 
is the formal adjoint differential operator 
(see, e.g., \cite[\S V.1.3]{Courant})
of $\CR=\Gamma_M(E)\times\Gamma_M(F) \rightarrow \Gamma_M(VF)$. 

\vspace{0.5mm} 
\noindent {\bf (ii)} Explicitly, $\CN^R$ is  defined via (cohomological) integration by parts\footnote{In terms of jet bundles, this calculation corresponds to lifting the Lagrangian density via the projection  $J^\infty_M E\times_M J^\infty_M F\rightarrow J^\infty_M F$, and applying  the interior Euler operator (Def. \ref{InteriorEulerOperator}) 
on $J^\infty_M(E\times_M F)$. This is, in particular, one way to justify the global validity of the local formulas which follow.} with respect to   
$e\in \Gamma_M(E)$ so that for any $\tilde{\phi}\in \Gamma_M(V^*F \otimes \wedge^d T^*M)$, $\phi\in \Gamma_M(F)$ and $e\in \Gamma_M(E)$
\vspace{-2mm} 
$$
\langle \tilde{\phi}, \, \CR_e(\phi) \rangle_{VF} = \langle \CN^R(\tilde{\phi},\phi) ,\,  e \rangle_{E} + \dd_M \CJ (\phi,\tilde{\phi},e)\, , 
$$
for some differential operator $\CJ:\Gamma_M(F)\times\Gamma_M(V^*F \otimes \wedge^d T^*M) \times \Gamma_M(E) \rightarrow \Omega^{d-1}(M)$ which is, 
in particular, $\FR$-linear in the second and third entries. 

\noindent {\bf (iii)} In local coordinates,
$\langle\CN^R(\tilde{\phi},\phi),\, e\rangle_E= \sum_{|J|
=0}^\infty (-1)^{|J|} \cdot \frac{\partial}{\partial x^J}\big(\tilde{\phi}_a \cdot \CR^{a J}_b(\phi)\big)\cdot  e^b \cdot \dd x^1 \cdots \dd x^d$
and so, for all $b$, the condition reads
\vspace{0mm} 
$$
0= \sum_{|J|=0}^\infty (-1)^{|J|} \cdot \frac{\partial}{\partial x^J}\big(\CE \CL_a(\phi) \cdot \CR^{aJ}_b(\phi) \big) \, .
$$
\begin{proof}
If $\CR_e$ defines a symmetry, then by Noether's First Theorem, the corresponding charge satisfies 
\vspace{-2mm} 
\begin{align*}
\dd_M \CP_{R_e} (\phi)= \langle \CE \CL(\phi),\, \CR_e(\phi)  \rangle_{VF}= \sum_{|J|=0}^\infty \CE \CL_a(\phi) \cdot 
\frac{\partial e^b}{\partial x^J} \cdot \CR^{aJ}_b(\phi)\, .
\end{align*}

\vspace{-3mm} 
\noindent `Integrating by parts'  (cohomologically) on the right-hand side with respect to the gauge parameter $e^b$,
\vspace{-1mm} 
\begin{align*}
\dd_M \CP_{R_e} (\phi) &=\langle\CN^R\big(\CE \CL(\phi),\phi \big),\, e\rangle_E+ \dd_M \CJ \big(\phi,\CE \CL(\phi),e\big) \\
&=
e^b\cdot  \sum_{|J|=0}^\infty (-1)^{|J|}\cdot  \frac{\partial}{\partial x^J}\big(\CE \CL_a(\phi) \cdot \CR^{aJ}_b(\phi) \big)  + \dd_M \CJ\big(\phi, \CE \CL(\phi), e\big)\, ,
\end{align*}

\vspace{-2mm} 
\noindent one sees that $\langle \CN^R\big(\CE\CL(\phi),\phi\big),\, e\rangle_E=\dd_M\big(\CP_{R_e}(\phi) + \CJ(\phi,\CE \CL(\phi), e)\big)$ 
is an exact form on $M$ for each field $\phi\in \CF(*)$ and parameter $e\in \Gamma_M(E)$. Proceeding as in the proof of 
Lem. \ref{LocalGaugeSymmetrySimpleExample}, integrating locally over balls $\bar{B}_{p}^d \subset M$ around each point $p\in M$ implies that
\vspace{-1mm}
$$ 
\CN^R\big(\CE\CL(\phi),\phi\big) = 0
$$

\vspace{0mm}
\noindent on (the interior of) $M$, for \textit{all} field configurations $\phi \in \CF(*)$. Conversely, reading the equations backward, 
the vanishing identity implies that $\CR_e$ is a symmetry for every $e\in \Gamma_M(E)$.
\end{proof}
\end{proposition}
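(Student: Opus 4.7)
The plan is to deduce the statement as a direct consequence of Noether's First Theorem (Prop. \ref{Noether1st}) combined with the integration-by-parts algorithm rigorously encoded by the interior Euler operator (Def. \ref{InteriorEulerOperator}). The key observation is that the ``formal adjoint'' $\CN^R$ is precisely the output of integrating by parts with respect to the gauge parameter $e$, hence a globally well-defined differential operator on sections.

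I would first establish the integration-by-parts identity underlying $\CN^R$. Writing $R$ locally as in \eqref{GaugeParameterBundleMapLocally}, the pairing $\langle \tilde{\phi}, \CR_e(\phi)\rangle_{VF}$ expands as $\sum_{|J|=0}^\infty \tilde{\phi}_a \cdot \frac{\partial e^b}{\partial x^J} \cdot \CR^{aJ}_b(\phi) \cdot \dd x^1 \cdots \dd x^d$. Lifting to the jet-bundle level, this is the pullback via $j^\infty(e,\phi)$ of a horizontal $(d,0)$-form on $J^\infty_M E\times_M J^\infty_M F$ which depends linearly on the $J^\infty_M E$ coordinates. Applying the interior Euler operator in the $e$-direction (rigorously available on $J^\infty_M(E\times_M F)$ by Prop. \ref{InteriorEulerProperties}) yields the decomposition
\[
\langle \tilde{\phi}, \CR_e(\phi)\rangle_{VF} \;=\; \langle \CN^R(\tilde{\phi},\phi), e\rangle_E \;+\; \dd_M \CJ(\phi,\tilde{\phi},e),
\]
where the boundary current $\CJ$ is the standard Takens boundary term and $\CN^R$ is obtained locally by the textbook adjoint formula. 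The globality of this decomposition follows because the interior Euler operator is intrinsically defined on the bicomplex.

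For the forward direction, suppose each $\CR_e$ is an infinitesimal symmetry. Then Noether's First Theorem gives $\dd_M \CP_{R_e}(\phi) = \langle \CE\CL(\phi), \CR_e(\phi)\rangle_{VF}$. Substituting the integration-by-parts identity above with $\tilde{\phi}=\CE\CL(\phi)$ yields
\[
\langle \CN^R(\CE\CL(\phi),\phi), e\rangle_E \;=\; \dd_M\!\bigl(\CP_{R_e}(\phi) - \CJ(\phi,\CE\CL(\phi),e)\bigr),
\]
so the left-hand side is exact for every $e\in \Gamma_M(E)$ and every $\phi\in\CF(*)$. Choosing $e$ to be bump-supported in an arbitrary coordinate ball $\bar{B}^d_p \subset M$ and integrating, Stokes' theorem forces $\langle \CN^R(\CE\CL(\phi),\phi), e\rangle_E$ to vanish locally, and then the arbitrariness of $e$ (the fundamental lemma of the calculus of variations on $\Gamma_M(E)$) gives $\CN^R(\CE\CL(\phi),\phi)=0$ pointwise on $M$. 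This holds for all $\phi$, hence $\CN^R\circ(\CE\CL\times_\CF \id_\CF)=0$ as a smooth map.

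For the converse, assume the Noether identity holds. Reading the integration-by-parts identity backwards with $\tilde{\phi}=\CE\CL(\phi)$ gives $\langle \CE\CL(\phi), \CR_e(\phi)\rangle_{VF} = \dd_M \CJ(\phi,\CE\CL(\phi),e)$, which is $\dd_M$-exact as a local current. By the converse direction of Noether's First Theorem (the reverse reading of Prop. \ref{Noether1st} noted immediately after its proof), the existence of such an on-shell conserved current together with the formula $\dd_M\CP = \langle \CE\CL, \CR_e\rangle$ implies that $\CR_e$ is an infinitesimal symmetry, with $\CR_e(\CL) = \dd_M\bigl(\CJ(-,\CE\CL(-),e) - (\iota_{\pr R_e}\theta_L)\circ j^\infty\bigr)$. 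Since $\CJ$ is linear and local in $e$, this boundary current arises from a bundle map $K_R : J^\infty_M E\times_M J^\infty_M F \to \wedge^{d-1}T^*M$ as required by Def. \ref{ParametrizedGaugeSymmetries}. The main technical obstacle is ensuring the globality of the adjoint $\CN^R$ and the boundary term $\CJ$ --- this is exactly where lifting to $J^\infty_M(E\times_M F)$ and invoking the intrinsic interior Euler operator (rather than local coordinate manipulations) is essential; the local coordinate expression in {(iii)} then follows by unwinding \eqref{GaugeParameterBundleMapLocally} and the defining formula of $\CI$.
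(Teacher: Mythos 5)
Your proof is correct and follows essentially the same route as the paper's: Noether's First Theorem plus cohomological integration by parts in the gauge parameter to produce $\CN^R$, localization of the resulting exact form over balls with bump-supported $e$ to extract the Noether identity, and reading the chain of identities backwards (via the converse of Prop. \ref{Noether1st}) for the reverse implication. The only differences are cosmetic — you spell out the jet-bundle justification of globality (which the paper relegates to a footnote) and the locality check on $\CK_{R_e}$ in the converse, and your sign in $\dd_M\bigl(\CP_{R_e}-\CJ\bigr)$ is the algebraically consistent one.
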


\begin{corollary}[\bf  Trivial local currents from gauge symmetries]\label{TrivialCurrentsFromGaugeSymmetries} 
$\,$

\noindent {\bf (i)} The conserved current $\CP_{R_e}$ 
of a gauge symmetry $\CR_e \in \CX_{\mathrm{loc}}^{\CL,\mathrm{gauge}}(\CF)$ satisfies
\vspace{-2mm}
$$
\dd_M \CP_{R_e}=\dd_ M\CJ_e \, : \, \CF \longrightarrow \Omega^{d}_{\mathrm{Vert}}(M)
$$ 

\vspace{-1mm}
\noindent 
 for some local $\CJ_e: \CF \rightarrow \Omega^{d-1}_{\mathrm{Vert}}(M)$ which vanishes on the subspace $\CF_{\CE \CL} \hookrightarrow \CF$ of on-shell fields. 
 
 \noindent {\bf (ii)} Furthermore, if the $(d-1)$-cohomology of $F$ vanishes, then $\CP_{R_e}$ is of the form
\vspace{-2mm}
$$
\CP_{R_e} = \CJ_e + \dd_M \CT_e 
$$

\vspace{-2mm}
\noindent
 for some local $\CT_e: \CF\rightarrow \Omega^{d-2}_{\mathrm{Vert}}(M)$. In particular, it defines a trivial on-shell local current $
\CP_{R_e}|_{\CF_{\CE \CL}} = \dd_M \CT_e $.

\end{corollary}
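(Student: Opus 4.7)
The strategy is to combine Noether's First Theorem with Noether's Second Theorem (Prop.\ \ref{Noether2nd}) to re-express $\dd_M \CP_{R_e}$ purely in terms of the divergence of the boundary operator $\CJ$ that appears in the integration-by-parts decomposition, using crucially that the Noether identity $\CN^R \circ (\CE\CL \times_\CF \id_\CF) = 0$ automatically holds for any parametrized gauge symmetry. Part (ii) is then a cohomological unwrapping of the resulting closed local current, at the jet bundle level, via Prop.\ \ref{EulerLagrangeComplexCohomology}.

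For part (i), I would define
\[
    \CJ_e \;:=\; \CJ\big(-,\,\CE\CL(-),\,e\big) \;:\; \CF \longrightarrow \Omega^{d-1}_{\mathrm{Vert}}(M),
\]
which is local and smooth because $\CE\CL = EL \circ j^\infty$ is local and $\CJ$ is a differential operator in the sense of Lem.\ \ref{DiffOpsAsSmoothMaps}. Noether's First Theorem (Prop.\ \ref{Noether1st}) gives $\dd_M \CP_{R_e} = \langle \CE\CL, \CR_e \rangle_{VF}$, while Noether's Second (Prop.\ \ref{Noether2nd}), evaluated at $\tilde\phi = \CE\CL(\phi)$, yields
\[
    \langle \CE\CL(\phi),\CR_e(\phi)\rangle_{VF}
    \;=\; \langle \CN^R(\CE\CL(\phi),\phi),\,e\rangle_E \;+\; \dd_M \CJ_e(\phi).
\]
Since $\CR_{(-)}$ parametrizes infinitesimal gauge symmetries, Prop.\ \ref{Noether2nd} forces the Noether identity $\CN^R(\CE\CL(\phi),\phi) = 0$ for all $\phi$, so the first term on the right vanishes identically and $\dd_M \CP_{R_e} = \dd_M \CJ_e$ as smooth maps $\CF \rightarrow \Omega^{d}_{\mathrm{Vert}}(M)$; the same argument functorially propagates to $\FR^k$-plots by Rem.\ \ref{TreatingPartialDerivativesAsIndependent}. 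Furthermore, $\CJ$ is $\FR$-linear in its second slot, so $\CJ_e$ factors through $\CE\CL$ and thus vanishes whenever $\CE\CL(\phi^k) = 0$, i.e.\ on the smooth subspace $\CF_{\CE\CL} \hookrightarrow \CF$.

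For part (ii), the construction above happens at the jet-bundle level: writing $\CP_{R_e} = P_{R_e} \circ j^\infty$ and $\CJ_e = J_e \circ j^\infty$ for horizontal $(d{-}1,0)$-forms $P_{R_e},J_e \in \Omega^{d-1,0}(J^\infty_M F)$ (using Noether's first theorem \eqref{ConservedCurrentOnJetBundle} for $P_{R_e}$ and the jet bundle description of $\CJ$ from Lem.\ \ref{DiffOpsAsSmoothMaps}), the Noether identity lifts to $\dd_H(P_{R_e} - J_e) = 0$ in $\Omega^{d,0}(J^\infty_M F)$. Now by the Euler--Lagrange cohomology Prop.\ \ref{EulerLagrangeComplexCohomology}, the $\dd_H$-cohomology in degree $d-1 < d$ of $J^\infty_M F$ agrees with $H^{d-1}_{\mathrm{dR}}(F)$; under the hypothesis that the latter vanishes, there exists $T_e \in \Omega^{d-2,0}(J^\infty_M F)$ with $P_{R_e} - J_e = \dd_H T_e$. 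Defining the local $(d{-}2)$-current $\CT_e := T_e \circ j^\infty$, the compatibility of Lem.\ \ref{HorizontalDifferentialBaseDeRhamCompatibility} gives $\CP_{R_e} = \CJ_e + \dd_M \CT_e$ on all $\FR^k$-plots; restricting to $\CF_{\CE\CL}$ and using part (i) yields $\CP_{R_e}|_{\CF_{\CE\CL}} = \dd_M \CT_e|_{\CF_{\CE\CL}}$, as claimed.

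The main obstacle, technically, is the bookkeeping needed to ensure that the $\CJ$ produced by Noether's Second Theorem is genuinely local in the field $\phi$ (so that $\CJ_e$ lives in $\Omega^{d-1}_{\mathrm{Vert}}(M)$ as a smooth map out of $\CF$), and that the $\dd_H$-primitive $T_e$ of part (ii) may be chosen smoothly and $\FR$-linearly in $e \in \Gamma_M(E)$ so that $\CT_e$ is itself a bona fide local current. The former is handled by reading the integration-by-parts decomposition directly at the jet bundle level via the interior Euler operator (as noted in Prop.\ \ref{Noether2nd}), which guarantees $J_e \in \Omega^{d-1,0}(J^\infty_M F)$ and depends polynomially on $j^\infty e$; the latter follows because the horizontal homotopy operators implicit in Prop.\ \ref{EulerLagrangeComplexCohomology} are $\FR$-linear. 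No analytic input beyond what has already been developed in \cref{JetBundleSection}--\cref{VariationalBicomplexSection} is required.
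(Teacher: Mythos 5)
Your proposal is correct and follows essentially the same route as the paper: part (i) via the integration-by-parts identity underlying Noether's Second Theorem together with the Noether identity, and part (ii) via exactness of the $\dd_H$-closed difference $\CP_{R_e}-\CJ_e$ using Prop.~\ref{EulerLagrangeComplexCohomology}. The only cosmetic difference is that the paper runs the cohomological step on $J^\infty(E\times_M F)$ (then notes its cohomology agrees with that of $F$), whereas you fix $e$ and work on $J^\infty_M F$ directly; both reduce to the same hypothesis $H^{d-1}_{\mathrm{dR}}(F)=0$.
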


\begin{proof}
By the proof of Prop. \ref{Noether2nd}, we have
the equality of $d$-form currents on the field space $\dd_M \CP_{R_e} = \dd_M \CJ_e$, where
\vspace{-2mm}
$$
\CJ_e := \CJ \big(-, \CE \CL (-),e\big) \, : \, \CF \longrightarrow \Omega^{d-1}_\mathrm{Vert}(M)
$$

\vspace{-2mm}
\noindent is a $(d-1)$-current whose dependence is linear in the second entry, and hence vanishes on-shell. 

Now, both $\CP_{R_{e}}$  and  $\CJ_e$ arise from bi-differential operators $\CP_{R_{(-)}}, \, \CJ_{(-)}\, : \,$ $\CF \times \mathbold{\Gamma}_M (E) \rightarrow \Omega^{d-1}_\mathrm{Vert}(M)$ induced by horizontal $(d-1)$-forms on $J^\infty(E\times_M F)$,  and so by Prop. \ref{EulerLagrangeComplexCohomology} if the $(d-1)$-cohomology of $E\times_M F$ vanishes, then $\dd_M ( \CP_{R_{e}}- \CJ_e )=0$ implies 
\vspace{0mm}
$$
\CP_{R_e} = \CJ_e + \dd_M \CT_e  
$$

\vspace{0mm}
\noindent
for some $\CT_e : \CF \rightarrow \Omega^{d-2}_\mathrm{Vert}(M)$. Note that, since $E\times_M F $ is a (contractible) vector bundle over $F$, 
 their cohomologies coincide. Since $\CK_{e}$ vanishes on the smooth subspace of on-shell fields, it follows that
\vspace{-2mm}
$$
\CP_{R_e} |_{\CF_{\CE \CL}} = 0 + \dd_M \CT_e|_{\CF_{\CE \CL}} \, : \, \CF_{\CE \CL} \longrightarrow \Omega^{d-1}_{\mathrm{Vert}}(M)\, .
$$

\vspace{-6mm}
\end{proof}

The vanishing of the $(d-1)$-cohomology in the above corollary is usually satisfied explicitly for physical theories (for instance, for field bundles 
being vector bundles over Minkowski spacetime). If instead only the $(d-1)$-cohomology of the spacetime vanishes, then the current is pointwise 
(in the field space $\CF_{\CE \CL}$) $\dd_M$-exact,  but perhaps not with a local dependence in the field (Rem. \ref{SpacetimeCohomologyAndTrivialityOfCurrents}). In \cite{Zuckerman} the statement is claimed without a proof and no mention of the cohomology assumption. In \cite{Blohmann23b} 
the statement is claimed as in \cite{Zuckerman}, but we are not able to follow the proof therein.

\vspace{2mm}
\begin{example}[\bf Noether identity for electromagnetism]\label{ElectromagnetismNoether}
Consider the case of electromagnetism from Ex. \ref{ElectromagnetismGaugeSymmetry}. Cohomologically integrating by parts with respect to 
the gauge parameter the gauge invariance equation Eq. \eqref{ElectroMagnetismGaugeInvarianceEq}, we get
\vspace{-2mm}
\begin{align*}
0&= \dd_M^2 e \wedge \star \dd_M A \\ &= \dd_M\big( \dd_M e \wedge \star \dd_M A \big) + \dd_M e \wedge \dd_M \star \dd_M A \\
&= \dd_M \CP_{R_e}(A) + \dd_M e \wedge  \CE \CL(A) \\ & = \dd_M \CP_{R_e}(A) +
\dd_M \big(e \wedge  \CE \CL(A) \big) - e \wedge \dd_M \big(\CE \CL (A) \big) \, ,
\end{align*} 

\vspace{-2mm}
\noindent where we identified the conserved current \eqref{ElectroMagnetismGaugeCurrent} and the Euler--Lagrange operator \eqref{ElectroMagnetismEulerLagrangeOp}.
It follows that the off-shell Noether identity corresponding to the electromagnetic gauge symmetry is given by 
\vspace{-2mm}
$$
\CN^R(\CE \CL(A)):= \dd_M \circ \CE \CL(A) = \dd_M (\dd_M \star \dd_M A) =0 \, ,
$$

\vspace{-2mm}
\noindent and the corresponding current satisfies the off-shell identity
\vspace{-2mm}
$$
\dd_M \CP_{R_e} = \dd_M \CJ\big(A,\CE \CL(A),e\big) := \dd_M \big(- e\wedge \CE \CL(A) \big) \, .
$$

\vspace{-2mm}
\end{example}

\begin{remark}[\bf Physical observables from gauge symmetries]
The potentiality of a trivial resulting current does not
imply that gauge symmetries do not produce meaningful observables via their charges over compact submanifolds with 
boundary or non-compact submanifolds with suitable asymptotics, for choices of gauge parameters with support on the boundary
or suitable asymptotic support, respectively (e.g. \cite{BB01}\cite{ABS}). 
Although we do not pursue reviewing this matter here, this story 
should also naturally take place in this generalized smooth setting.
\end{remark}

\begin{remark}[\bf Nomenclature of local and global symmetries] The adverbs  ``local" and ``global" have several, but related meanings 
in the physics literature. In this manuscript, we use local as in the mathematics literature, to mean factoring through an infinite jet bundle, 
or to indicate that a certain quantity is defined only on a subspace of an ambient space. In the physics literature, gauge symmetries 
are often referred to simply as `local symmetries' in the sense that they are freely parametrized by (gauge) parameters that are not `constant'
through space-time. In other words, their value may be chosen to vary `locally from point to point in space-time. In contrast, the term
`global symmetry' (also called rigid) is reserved for symmetries that are not part of a gauge symmetry, meaning they can be only parametrized by a finite 
(or countable) dimensional parameter vector space. More formally, the set of \textit{global} or \textit{rigid} infinitesimal symmetries of
a Lagrangian is defined as the quotient of all local infinitesimal symmetries by the (ideal) of gauge symmetries \big(Rem. \ref{LieAlgebraOfAllGaugeSymmetries} {(iv)}\big)
\vspace{-2mm}
$$
\CX_{\mathrm{loc}}^{\CL,\mathrm{rig}}(\CF):= \CX_\mathrm{loc}^{\CL} (\CF)\, /\,  \CX^{\CL,\mathrm{gauge}}_{\mathrm{loc}}(\CF)\, .
$$
\end{remark}

\newpage 
\subsection{Cauchy surfaces and obstructions by gauge symmetries}
\label{Sec-Cauchy} 

The all-important feature of gauge symmetries, both physically and mathematically, is that they \textit{`obstruct the unique evolution'} 
of fields obeying the Euler--Lagrange equations. Let us make this mathematically precise in our current context of smooth sets.

\begin{definition}[\bf  Sections on infinitesimal neighborhood of submanifold]\label{SectionsOnInfinitesimalNeighborhoodOfSubmanifold}
Let $\iota_{\Sigma^p}: \Sigma^p\hookrightarrow M$ be a submanifold of spacetime. The \textit{sections on the infinitesimal neighborhood of
$\Sigma^p$} of the bundle $F\rightarrow M$ are defined as the set of sections of the infinite jet bundle $J^\infty_M F$ over $\Sigma^{p}$, 
\vspace{-2mm}
\begin{align}
\Gamma_{\Sigma^{p}}\big(J^\infty_M F|_{\Sigma^{p}}\big)\cong 
\big\{ \psi: \Sigma^{p} \rightarrow J^\infty_M F \;  \big{|} \; \pi_M \circ \psi = \iota_{\Sigma^{p}}\big\} \, .   
\end{align}
The corresponding smooth set $\mathbold{\Gamma}_{\Sigma^{p}}\big(J^\infty_M F|_{\Sigma^{p}}\big)$ follows as in Def. \ref{SectionsSmoothSet}.
\end{definition}
The nomenclature will become apparent in \cite{GSS-2}, whereby the `infinitesimal neighborhood of $\Sigma^p$ in $M$' will be a \textit{bona-fide} 
(thickened) smooth subspace of $M$, over which sections of $F$ (and not $J^\infty_M F$) correspond to the above set. For now, it is enough to note 
the interpretation of the definition: Any section $\psi:\Sigma^p\rightarrow J^\infty_M F |_{\Sigma^p}$ (smoothly) determines a particular instance
of the potential values for sections $\phi$ of $F\rightarrow M$ and their derivatives to any order on the submanifold $\Sigma^p$, $\psi(x) = [j^\infty_x \phi]$. 
In other words, one may ask whether there exists a field $\phi \in \CF(*)= \Gamma_M(F)$ whose jet restricts to the chosen section $\psi$ on 
the infinitesimal neighborhood of $\Sigma^p$. Along these lines, choosing the submanifold to be of codimension-1 and demanding further compatibility 
with the Euler--Lagrange equations, we may define the smooth set of initial conditions on a submanifold $\Sigma^{d-1}$ for a given Lagrangian
field theory $(\CF,\CL)$ on the spacetime $M$.
\begin{definition}[\bf  Initial data]\label{InitialData}
Let $(\CF,\CL)$ be a local Lagrangian field theory on $M$ and $\Sigma^{d-1}\hookrightarrow M$ a submanifold of codimension $1$. The (smooth) 
set of \textit{initial data of $(\CF,\CL)$ on $\Sigma^{d-1}$} is defined by the subset of (plots of) sections over the infinitesimal
neighborhood of $\Sigma^{d-1}$ that factor through the prolongated shell $S_L^\infty \hookrightarrow J^\infty_M F$ (Def. \ref{ShellOfLagrangian}) of the Lagrangian L,
\begin{align}
\mathrm{InData}_{\CL}( \Sigma^{d-1}):= 
\mathbold{\Gamma}_{\Sigma^{d-1},\pr EL=0}\big(J_M^\infty F|_{\Sigma^{d-1}}\big)
\longhookrightarrow \mathbold{\Gamma} _{\Sigma^{d-1}}\big(J_M^\infty F|_{\Sigma^{d-1}}\big)\, .
\end{align}
In particular, a point $\psi$ of $\mathrm{InData}_{\CL}(\Sigma^{d-1})$ is a section of the form
\vspace{-2mm} 
\[ 
\xymatrix@C=1.8em@R=.4em  {& S^\infty_L  \ar[rd] \ar[rr] &   & J^\infty_M F\ar[ld]
	\\ 
\Sigma^{d-1} \ar[ru]^{\psi} \ar[rr] & & M & 
} \, ,  
\]

\vspace{-2mm} 
\noindent and similarly for higher plots.

\end{definition}
There is a natural restriction map from the space of fields to sections over the infinitesimal neighborhood of $\Sigma^{d-1}$, 
given by the compostion 
$(-)|_{\Sigma^{d-1}}\circ j^\infty : \CF \rightarrow \mathbold{\Gamma}_{M}(J^\infty_M F) \rightarrow \mathbold{\Gamma} _{\Sigma^{d-1}}\big(J_M^\infty F|_{\Sigma^{d-1}}\big)$.
By definition of the space of initial data on $\Sigma^{d-1}$, this descends to a map
\begin{align}\label{InitialDataRestriction}
(-)|_{\Sigma^{d-1}} \circ j^\infty \;:\; \CF_{\CE \CL} \longrightarrow\mathrm{InData}_\CL(\Sigma^{d-1})
\end{align}
since any on-shell field $\phi:M\rightarrow F$ field factors through the prolongated shell via the jet prolongation, and hence 
its restriction to any submanifold does as well.  
\begin{definition}[\bf  Cauchy surface]\label{CauchySurface}
A (connected) codimension-1 submanifold $\Sigma^{d-1}\hookrightarrow M$ is a \textit{Cauchy surface} of the Lagrangian field theory $(\CF,\CL)$ if 
the smooth map
\vspace{-2mm}
\begin{align}
\mathrm{Cau}_{\Sigma^{d-1}}:= (-)|_{\Sigma^{d-1}} \circ j^\infty \;:\; \CF_{\CE \CL} \xlongrightarrow{\quad \sim \quad} \mathrm{InData}_\CL(\Sigma^{d-1})   
\end{align}

\vspace{-2mm}
\noindent is a diffeomorphism.
\end{definition}
The interpretation is exactly as in the physics literature. At the level of $*$-plots, this entails that there is a bijection between fields $\phi:M\rightarrow F$ 
that satisfy the Euler--Lagrange equations $\CE \CL(\phi)=0$ on $M$, and the set of initial conditions $\psi:\Sigma^{d-1}\rightarrow J^\infty_M F$ consistent with 
the Euler--Lagrange equations,  $\mathrm{Im}(\psi)\subset S_L \hookrightarrow J^\infty_M F$. In simple words, the derivatives of the on-shell fields along 
a Cauchy surface serve as a parametrization of the full space of on-shell fields. 
\footnote{In general, the definition might be relaxed to yield a 
diffeomorphism for on-shell fields restricted to an open neighborhood $U$ of $\Sigma^{d-1}$. Everything we say here applies to such a modification, 
and hence we will not be explicit about it.}
In our current setting of smooth sets, we see that this parametrization, if it exists, is in fact
automatically smooth.
\begin{remark}[\bf On the order of initial data]\label{OnTheOrderOfInitialData}
Strictly speaking, in practice, for an Euler--Lagrange differential operator of order $k$, $EL^k: J^k_M F \rightarrow \wedge^d T^*M \otimes V^*F$, 
it is sufficient to consider (compatible) initial data in terms of jets along 
a hypersurface, up to order $k$. That is, the above definitions are trivially modified by replacing $S^\infty_L \hookrightarrow J^\infty_M F$ 
with the $k$-order shell $S_{L,k} \hookrightarrow J^k_M F$ (see Rem. \ref{ManifoldStructureOnTheShell}). In special cases, these smooth sets can be 
identified -- but not in general. Nevertheless, everything we say below holds verbatim for both notions of initial data.\footnote{For globally finite order 
differential operators, it is not hard to see that there is a bijection - and hence a diffeomorphism - between the \textit{images} of the restrictions 
$\mathrm{Cau}_{\Sigma^{d-1}}:= (-)|_{\Sigma^{d-1}} \circ j^\infty : \CF_{\CE \CL} \longrightarrow \mathbold{\Gamma}_{\Sigma^{d-1}}\big(J_M^\infty F|_{\Sigma^{d-1}}\big)$ 
and $\mathrm{Cau}_{\Sigma^{d-1}}^k:= (-)|_{\Sigma^{d-1}} \circ j^k : \CF_{\CE \CL} \longrightarrow \mathbold{\Gamma}_{\Sigma^{d-1}}\big(J_M^k F|_{\Sigma^{d-1}}\big)$. 
Hence, when $\Sigma^{d-1}$ is a Cauchy surface, the two notions of initial data coincide.}
Our choice is purely formal, so as to avoid talking about a fixed finite order of the Lagrangian and the 
induced Euler--Lagrange operator altogether.  
\end{remark}

\begin{example}[\bf Particle mechanics Cauchy surface]\label{ParticleMechanicsCauchySurface}
As a check, we consider the particularly simple example of free (non-relativistic) particle mechanics. Although everything we say generalizes 
straightforwardly for a free particle moving in an arbitrary Riemannian manifold $(M,g)$, for the sake of exposition we write formulas explicitly
only for the case of flat, Euclidean space $(\FR^d,\delta)$. This is, in fact, a special 
case of the O($n$)-model from Ex. \ref{VectorValuedFieldTheoryLagrangian} via the following choices. 
Choosing the spacetime to be only `time' $M=\FR^1_t$ with its canonical metric, then the Euclidean target $W\cong (\FR^d ,\delta)$ 
is interpreted as the `space', and the field space is the smooth path space $\CF=\mathbold{P}(\FR^d)=[\FR^1_t,\FR^d]$. 
In the case where the coupling constants vanish, the Lagrangian reduces to the free particle Lagrangian
\vspace{-1mm}
\begin{align*}
\CL(\gamma) &= \langle \dd_t \gamma, \star \dd_t \gamma \rangle = \langle \partial_t \gamma , \partial_t \gamma \rangle \cdot \dd t 
= \partial_t \gamma^a \cdot \partial_t  \gamma_a \cdot \dd t \, .
\end{align*}

\vspace{-1mm}
\noindent The corresponding Euler--Lagrange operator is simply Newton's Law
\vspace{-2mm}
\begin{align*}
\CE \CL(\gamma) &= \dd_t \star \dd_t \gamma  = \partial_t^2 \gamma \cdot \dd t \, , 
\end{align*}

\vspace{-1mm}
\noindent 
and the on-shell space of solutions consists of (plots of) straight lines in $\FR^d$
\vspace{-1mm}
\begin{align*}
\CF_{\CE \CL} = \mathrm{Lines}(\FR^d) \longhookrightarrow [\FR_t^1,\FR^d]\, ,  
\end{align*}

\vspace{-1mm}
\noindent 
whose *-plots are simply straight lines $\gamma= v \cdot t + c  : \FR_t \rightarrow \FR^d$. Similarly, $\FR^k$-plots are parametrized straight lines
$\gamma^k= v^k \cdot t + c^k : \FR^k \times \FR_t \rightarrow \FR^d$, for arbitrary functions $v^k \in C^\infty(\FR^k)$ and $c^k \in C^\infty(\FR^k)$. 

The space of positions and velocities at any time instant $t_0\in \FR^1_t$ should be a copy of the tangent bundle $T\FR^d$, 
and as is well known (and obvious) the position and velocity at a time instant fully determine the set of straight lines. 
Thus, it is naturally expected that the full space of on-shell fields $\mathrm{Lines}(\FR^d)$ is \textit{diffeomorphic} 
to the tangent $T \FR^d$, in an appropriate sense. To see how this description arises from our definition, notice that 
the vanishing of the Euler--Lagrange source form $EL = u_{tt}^a \cdot \dd_V  u_a  \cdot \dd t \in \Omega^{d=1,1}\big(J^\infty_{\FR^1_t}(\FR^1_t\times \FR^d)\big)$, and of its prolongation $\pr EL$,
imposes vanishing conditions on all jets of order $k\geq 2$, and no conditions for $k\leq 1$. Hence  its prolongated shell $S^\infty_L$ is 
canonically identified with the first jet bundle under the canonical embedding
\vspace{-2mm}
\begin{align*}
J^1_{\FR_t}(\FR^1_t\times \FR^d) &\longhookrightarrow J^\infty_{\FR^1_t}(\FR^1_t\times \FR^d)\\
\big(\gamma(t_0), \dot{\gamma}(t_0) \big) &\longmapsto \big(\gamma(t_0), \dot{\gamma}(t_0), 0,\dots 0\big)\, ,
\end{align*}

\vspace{-2mm}
\noindent 
which makes sense globally in this case of trivial bundles. Now, connected submanifolds of codimension-1 of the base $M=\FR^1_t$ 
are necessarily single points, i.e., instants of time $t_0 \in \FR^1_t$. Hence the smooth set of initial data 
$\mathrm{InData}_{\CL}(\{0\})$ (Def. \ref{InitialData}) along  $\{t_0=0\}\hookrightarrow  \FR^1_t$,  
has $*$-plots $\psi$ that factor through the shell
\vspace{-2mm} 
\[ 
\xymatrix@C=1.8em@R=.4em  {& J^1_{\FR_t}(\FR^1_t\times \FR^d)  \ar[rd] \ar[rr] &   & J^\infty_{\FR_t}(\FR^1_t\times \FR^d)\ar[ld]
	\\ 
\{0\} \ar[ru] \ar[rr] & &\FR^1_t & 
} \, ,  
\]

\vspace{-2mm} 
\noindent and similarly for higher plots. Note that such sections are simply points (and plots) of the first jet manifold  
$J^1_{\FR_t}(\FR^1_t\times \FR^d)$, and so the smooth set of initial data is simply the Yoneda embedding of the first jet
space $\mathrm{InData}_{\CL}(\{0\})\cong y(J^1_{\FR_t}(\FR^1_t\times \FR^d))$. But the manifold of first jets of
curves is by definition the tangent bundle  $ J^1_{\FR_t}(\FR^1_t\times \FR^d) \cong_\SmoothManifolds T \FR^d $, 
and so it follows that
\vspace{-5mm}
\begin{align*}
\mathrm{InData}_{\CL}(\{0\}) \cong_\SmoothSets y(T\FR^d) \, .
\end{align*}

\vspace{-2mm}
\noindent
In other words, the smooth set of initial conditions is simply (the Yoneda embedding of) the tangent bundle $T\FR^d$ of the target $\FR^d$,
which shows immediately that the restriction 
\vspace{-2mm}
\begin{align}\label{ParticleMechanicsIsomorphism}
\mathrm{Cau}_{\{0\}}:= (-)|_{\{t=0\}} \circ j^1 : \CF_{\CE \CL} &\; \longrightarrow\; \mathrm{InData}_\CL(\{0\}) \cong y(T\FR^d)  \\
\gamma^k= v^k \cdot t + c^k &\; \longmapsto \; (c^k , v^k) \, , \nn 
\end{align}

\vspace{-2mm}
\noindent is a diffeomorphism. Here the choice of time instant $\{t_0\}\hookrightarrow \FR^1_t$ is immaterial, and hence 
any time instant is a Cauchy surface -- but through a different Cauchy isomorphism. We will revisit this setting in Ex. \ref{Ex-free-cov} below. 
\end{example}

Generally, neither the surjectivity nor the injectivity of the underlying point-set map \eqref{InitialDataRestriction} to the initial data 
is guaranteed, both of which depend highly on the detailed form of the given 
Euler--Lagrange differential operator $\CE \CL$. In non-technical terms, surjectivity says that for any choice of initial data 
$\psi:\Sigma^{d-1}\rightarrow J^\infty_M F$, there exists \textit{some} field configuration $\phi:M\rightarrow F$ whose jet prolongation restricts 
to $\psi$ on $\Sigma^{d-1}$, and so $\phi$ is thought as the `\textit{evolution}' of the initial condition $\psi$ on $M$ under the Euler--Lagrange 
equations $\CE \CL =0$. Injectivity says that in the cases of initial conditions $\psi$ for which some evolution $\phi$ \textit{exists}, 
then it is also \textit{unique}. 

\smallskip 
The general study of such \textit{well-posedness} questions is quite delicate and falls within the realm of analysis of PDEs, 
which is outside of the scope of this manuscript. However, an immediate general result that evades analytical arguments is that
\textit{finite gauge} symmetries \textit{obstruct} the existence of Cauchy surfaces -- by ruling out the injectivity of the underlying point-set map.

\begin{proposition}[\bf Gauge symmetry obstructs Cauchy surface]\label{LocalGaugeSymmetryObstructsCauchySurface}
Let $\CR_{(-)}:\Gamma_M(E)\rightarrow \CX_{\mathrm{loc}}(\CF)$ be an infinitesimal (local) gauge symmetry of a Lagrangian field theory $(\CF,\CL)$ 
and $\Sigma^{d-1}\hookrightarrow M$ a codimension-1 submanifold. Let $e\in \Gamma_M(E)$ be a (nonzero) gauge parameter with support contained strictly 
within $\mathrm{int}(M-
\Sigma^{d-1})$. Then, assuming 
 the infinitesimal symmetry $\CR_{e}$ is {\bf (a)} integrable and {\bf (b)} not zero on $\CF_{\CE \CL}$ (hence not trivial), the restriction to initial data
 \vspace{-2mm} 
$$
(-)|_{\Sigma^{d-1}} \circ j^\infty  \;:\; \CF_{\CE \CL} \longrightarrow\mathrm{InData}_\CL(\Sigma^{d-1})
$$

\vspace{-2mm} 
\noindent
 is \textit{not} a diffeomorphism. In other words, the submanifold $\Sigma^{d-1}$ is not a Cauchy surface.
\end{proposition}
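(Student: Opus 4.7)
The plan is to exploit the integrability of $\CR_e$ to produce two distinct on-shell fields with identical initial data, thereby obstructing injectivity of the Cauchy restriction map $\mathrm{Cau}_{\Sigma^{d-1}}$ and hence its being a diffeomorphism. First, by the assumed integrability of $\CR_e$ together with the preservation result Prop.~\ref{LocalSymmetryPreservesOnshellSpace} (or Prop.~\ref{SymmetryPreservesOnshellSpace}), obtain a smooth 1-parameter family of diffeomorphisms $\CD^e_t : \CF \xrightarrow{\sim} \CF$ (for $t$ in some interval around $0$) which restricts to diffeomorphisms of the on-shell subspace $\CD^e_t : \CF_{\CE\CL} \xrightarrow{\sim} \CF_{\CE\CL}$, and which satisfies $\partial_t \CD^e_t|_{t=0} = \CR_e$.

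The key geometric observation is the following support/locality argument. Since $e \in \Gamma_M(E)$ has support strictly contained in $\mathrm{int}(M - \Sigma^{d-1})$, there exists an open neighborhood $U \subset M$ of $\Sigma^{d-1}$ on which $e$ vanishes identically together with all its derivatives; equivalently, the prolongation $j^\infty e : M \to J^\infty_M E$ sends $U$ into the zero section. By the $J^\infty_M E$-linearity of the gauge parameter bundle map $R : J^\infty_M E \times_M J^\infty_M F \to VF$ (Def.~\ref{ParametrizedGaugeSymmetries}), the evolutionary vector field $R_e = R \circ (j^\infty e, \id)$ therefore vanishes on the preimage $(\pi^\infty_M)^{-1}(U) \subset J^\infty_M F$, and consequently the local vector field $\CR_e = R_e \circ j^\infty$ on $\CF$ vanishes pointwise on $U$ in the sense that for every field $\phi$, $\CR_e(\phi)|_U = 0 \in \Gamma_U(VF)$. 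Integrating along $t$ with this vanishing condition, one concludes that the flow fixes the restriction to $U$: for every on-shell $\phi \in \CF_{\CE\CL}(*)$ and every $t$ in the flow interval,
\[
  \CD^e_t(\phi)\big|_U \;=\; \phi\big|_U \,,
\]
and in particular $j^\infty \CD^e_t(\phi)\big|_{\Sigma^{d-1}} = j^\infty \phi\big|_{\Sigma^{d-1}}$, so the two on-shell fields carry identical initial data on $\Sigma^{d-1}$.

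Finally, invoke the hypothesis that $\CR_e$ does not vanish on $\CF_{\CE\CL}$: pick $\phi_0 \in \CF_{\CE\CL}(*)$ with $\CR_e(\phi_0) \neq 0 \in T_{\phi_0}\CF_{\CE\CL}$. Since $\partial_t \CD^e_t(\phi_0)|_{t=0} = \CR_e(\phi_0) \neq 0$, we have $\CD^e_t(\phi_0) \neq \phi_0$ for sufficiently small $t \neq 0$. Combining with the previous paragraph, $\phi_0$ and $\CD^e_t(\phi_0)$ are two distinct points of $\CF_{\CE\CL}$ mapping to the same point of $\mathrm{InData}_\CL(\Sigma^{d-1})$ under $\mathrm{Cau}_{\Sigma^{d-1}}$. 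Hence this map is not injective on $\ast$-plots, and therefore cannot be a diffeomorphism.

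\textbf{Main obstacle.} The delicate step is the support/flow statement that $\CD^e_t$ fixes $\phi$ on the neighborhood $U$ of $\Sigma^{d-1}$. This is intuitively clear from the locality of the vector field $\CR_e$ (whose infinitesimal action on $\phi$ at a point $x \in U$ depends only on $j^\infty_x \phi$ and on $j^\infty_x e = 0$), but making it rigorous in the smooth-set setting requires a pointwise-in-$M$ ODE argument for the evolution of $\CD^e_t(\phi)(x)$ for fixed $x \in U$, using uniqueness of solutions to the trivial ODE $\partial_t (\CD^e_t(\phi)(x)) = 0$. This is the only place where one implicitly relies on integration of the local vector field $\CR_e$ being compatible with the local (petit-sheaf) structure of $\CF$ over $M$ described in Rem.~\ref{FieldSpaceAsASheafOfSheaves}; it is precisely the content of integrability assumption (a).
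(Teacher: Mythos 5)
Your proposal is correct and follows essentially the same route as the paper's proof: integrate $\CR_e$ to a (family of) on-shell-preserving diffeomorphisms, use the support condition on $e$ together with locality/the petit-sheaf structure of $\CF$ to conclude the flow is the identity on a neighborhood of $\Sigma^{d-1}$ (hence fixes the initial data), and use non-vanishing of $\CR_e$ on $\CF_{\CE\CL}$ to produce a second distinct on-shell field, breaking injectivity. Your use of small $t$ and the nonvanishing derivative at $t=0$ is in fact slightly cleaner than the paper's appeal to the time-$1$ map being "without loss of generality" not the identity.
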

\begin{proof}
 Denote by $\CD_{(e)}:=\CD_{(e),t=1}:\CF\rightarrow \CF$ the diffeomorphism which is the integrated version of the local $\CR_e$ 
 vector field, i.e., $\partial_t \CD_{(e),t} |_{t=0} = \CR_{e}$. This exists by assumption and hence is in particular
 a \textit{local} symmetry (Rem. \ref{DifferentiatingFiniteLocalSymmetries}). 
 By Prop. \ref{LocalSymmetryPreservesOnshellSpace}, it preserves the on-shell space of fields
 $\CD_{(e)}:\CF_{\CE \CL} \xrightarrow{\sim} \CF_{\CE \CL}$, 
 and so if $\phi\in \CF_{\CE \CL}(*)$ is an on-shell field, then so is $\CD_{(e)}(\phi) $.
 Furthermore, the infinitesimal symmetry $\CR_{e}$ 
 is by assumption not zero on $\CF_{\CE \CL}$ (hence nontrivial, Ex. \ref{TrivialInfinitesimalSymmetries}, Ex. \ref{DifferentiatingFiniteLocalSymmetries}).
 Consequently,  $\CD_{(e)}$ is \textit{not} the identity map $\CD_{(e)}\neq \id_{\CF_{\CE \CL}}$ on the space of on-shell fields,  so that 
 (without loss of generality) 
 \vspace{-2mm} 
 $$
 \CD_{(e)}(\phi)\neq \phi\, . 
 $$

 \vspace{-2mm} 
\noindent
 Next, since the support of $e$ is strictly within $\mathrm{int}(M-\Sigma^{d-1})$, it follows that for any $p\in \Sigma^{d-1}$ 
 there exists some open neighborhood
 $U_{p}\subset M$ on which $e$ vanishes, $e|_{U_p}=0$. It follows that, by the $\FR$-linear dependence on $e$ and the petit sheaf 
 nature of the field space $\CF$ 
 (Rem. \ref{FieldSpaceAsASheafOfSheaves}), the \textit{local} vector field $\CR_{e}$ vanishes on $U_p$, i.e., 
 $\CR_{e}|_{U_p} =0 \in \CX_{\mathrm{loc}}(\CF|_{U_p})$.
 Thus, the corresponding integrated diffeomorphism must be the \textit{identity} when restricted to $U_p\subset M$,
\vspace{-2mm} 
$$
\CD_{(e)}|_{U_p}=\id_{\CF|_{U_{p}}} \, : \, \CF_{\CE \CL}|_{U_p} \longrightarrow \CF_{\CE \CL} |_{U_p} \, ,
$$

\vspace{-2mm} 
\noindent for some open neighborhood around $p\in \Sigma^{d-1}$, for \textit{every} $p\in \Sigma^{d-1}$. 

Finally, to prove the claim of the proposition, it is enough to prove the map is not bijective on $*$-plots. Let $\phi\in \CF_{\CE \CL}(*)$ 
be any on-shell field configuration. Since local diffeomorphism on $\CF$ are  in particular maps of (petit) sheaves on $M$, we have
\vspace{-1mm} 
$$
\CD_{(e)}(\phi) |_{U_p}=\CD_{(e)}|_{U_p}(\phi|_{U_p})= \id_{\CF|_{U_{p}}} (\phi|_{U_p})= \phi |_{U_p} \in \CF_{\CE\CL} |_{U_{p}}\, 
$$
by which it follows that the infinite jets coincide
\vspace{-2mm} 
$$
j^\infty_p\big(\CD_{(e)}(\phi)\big) = j^\infty_p \phi  \, \in  J^\infty_M F\, ,
$$

 \vspace{-2mm} 
\noindent
 for every $p\in \Sigma^{d-1}$. That is, both on-shell fields $\CD_{(e)}(\phi)$ and $\phi$ define the \textit{same} initial data on $\Sigma^{d-1}$
\vspace{-2mm} 
$$
j^\infty\big(\CD_{(e)} \phi) |_{\Sigma^{d-1}} = j^\infty \phi |_{\Sigma^{d-1}} \; : 
\; \Sigma^{d-1} \longrightarrow S_L\xhookrightarrow{\;\;\;} J^\infty_M F \, .
$$

\vspace{-2mm} 
\noindent 
Therefore, there are two \textit{different} on-shell configurations $\CD_{(e)}(\phi), \, \phi 
$ with the same initial data $j^\infty \phi |_{\Sigma^{d-1}}$, hence the map 
\vspace{-2mm} 
$$
(-)|_{\Sigma^{d-1}} \circ j^\infty \;:\; \CF_{\CE \CL}(*) \longrightarrow\mathrm{InData}_\CL(\Sigma^{d-1})(*)
$$

\vspace{-2mm} 
\noindent is \textit{not} injective.  
\end{proof}

\begin{corollary}[\bf  Nonexistence of Cauchy surfaces]\label{NonexistenceOfCauchySurfaces}
Let $\CR_{(-)}:\Gamma_M(E)\rightarrow \CX_{\mathrm{loc}}(\CF)$ be an infinitesimal (local) gauge symmetry of a Lagrangian field theory $(\CF,\CL)$
such that $\CR_{e}$ is {\bf (a)} integrable and {\bf (b)} not zero on $\CF_{\CE \CL }$, for all gauge parameters $e\in \Gamma_M(F) $ of compact support. 
Then there does not exist any Cauchy surface for $(\CF,\CL)$.    
\end{corollary}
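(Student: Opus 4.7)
The plan is to reduce the statement to a direct application of Prop. \ref{LocalGaugeSymmetryObstructsCauchySurface} applied uniformly to every candidate codimension-1 submanifold. Concretely, I will show that for any connected codimension-1 submanifold $\Sigma^{d-1} \hookrightarrow M$, the hypotheses of Prop. \ref{LocalGaugeSymmetryObstructsCauchySurface} are met by producing a single non-trivial, compactly supported gauge parameter $e \in \Gamma_M(E)$ whose support avoids $\Sigma^{d-1}$; then the restriction map $(-)|_{\Sigma^{d-1}}\circ j^\infty : \CF_{\CE \CL} \to \mathrm{InData}_{\CL}(\Sigma^{d-1})$ fails to be a diffeomorphism, and so $\Sigma^{d-1}$ cannot be a Cauchy surface. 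Since $\Sigma^{d-1}$ was arbitrary, no Cauchy surface exists.

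The construction of the required $e$ is the one technical step, and it is a standard bump-function argument. Since $\Sigma^{d-1} \hookrightarrow M$ is a proper closed submanifold of codimension 1, its complement $M \setminus \Sigma^{d-1}$ is a non-empty open subset, and in particular contains an open ball $U \subset \mathrm{int}(M-\Sigma^{d-1})$ over which the gauge parameter bundle $E \to M$ trivializes, say $E|_U \cong U \times \FR^r$. Choosing a non-zero constant section of this trivialization and multiplying by a smooth bump function $\chi \in C^\infty_c(U)$ that equals $1$ on a smaller ball $U' \Subset U$, we obtain a smooth section $e \in \Gamma_M(E)$ with $\mathrm{supp}(e) \subset \bar{U} \subset \mathrm{int}(M-\Sigma^{d-1})$ and $e \not\equiv 0$. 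By the standing hypothesis of the corollary, the local vector field $\CR_e$ is both integrable and non-zero on $\CF_{\CE \CL}$, so conditions (a) and (b) of Prop. \ref{LocalGaugeSymmetryObstructsCauchySurface} are satisfied.

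Invoking Prop. \ref{LocalGaugeSymmetryObstructsCauchySurface} with this choice of $e$ yields that $\Sigma^{d-1}$ is not a Cauchy surface of $(\CF, \CL)$. Since the argument applies verbatim to any connected codimension-1 submanifold, the conclusion follows: no Cauchy surface can exist.

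The main (minor) obstacle is purely notational/conceptual rather than technical: one must verify that the standing hypothesis is actually consistent with such a choice of $e$, i.e.\ that restricting the parametrization to compactly supported sections still yields non-trivial on-shell vector fields. This is guaranteed by the phrasing of the corollary, which requires non-triviality for \emph{all} compactly supported $e$, so the only thing to check is the existence of a compactly supported $e$ whose support misses $\Sigma^{d-1}$ — a purely topological fact about vector bundles over $M$ which holds for every candidate submanifold. No analytic subtlety about the evolution of initial data or about the smooth-set structure of $\mathrm{InData}_{\CL}(\Sigma^{d-1})$ enters, since the failure of injectivity already at the level of $\ast$-plots (inherited from Prop. \ref{LocalGaugeSymmetryObstructsCauchySurface}) suffices to obstruct the Cauchy diffeomorphism.
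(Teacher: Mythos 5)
Your proposal is correct and follows essentially the same route as the paper: for an arbitrary codimension-1 submanifold, construct a nonzero compactly supported gauge parameter via a bump function in a trivializing chart away from $\Sigma^{d-1}$, then invoke Prop.~\ref{LocalGaugeSymmetryObstructsCauchySurface}. The extra detail you supply on the bump-function construction is exactly what the paper leaves implicit, and your closing observation that failure of injectivity on $\ast$-plots already suffices is consistent with how the proposition is used.
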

\begin{proof}
Let $\Sigma^{d-1}$ be a codimension-$1$ submanifold and choose (nonzero) a gauge parameter $e\in \Gamma_M(E)$ with compact support in 
$\mathrm{int}(M-\Sigma^{d-1})$. Such sections exist since $E\rightarrow M$ is a locally trivial bundle, by employing bump functions. 
By the assumptions on $\CR_{e}$, the result of Prop. \ref{LocalGaugeSymmetryObstructsCauchySurface} applies. The argument applies 
for any candidate submanifold $\Sigma^{d-1}$, and so the result follows.   
\end{proof}

\begin{example}[{\bf Instances of nonexistence}]
The assumptions of Cor \ref{NonexistenceOfCauchySurfaces} are satisfied in all examples of fundamental field theories with gauge symmetries. 
We indicate the corresponding ingredients without spelling out the details in two such cases: 

\vspace{.5mm} 
\noindent {\bf(i)} For the theory of General Relativity, the infinitesimal gauge symmetries are parametrized by vector fields $\CX(M) =\Gamma_M(TM)$ on spacetime. 
Those of compact support $\CX_{c}(M)\subset \CX(M)$ \big(and their induced vector fields on $\CF=\mathrm{Met}(M)$\big) arise by differentiating diffeomorphisms
$\mathrm{Diff}_c(M)\subset \mathrm{Diff}(M)$ \big(and the induced action on $\mathrm{Met}(M)$\big)\footnote{Strictly speaking, the (finite) gauge symmetries
in the case of General Relativity are spacetime covariant symmetries (Def. \ref{FiniteSymmetryofLagrangianFieldTheory}). Nevertheless, the on-shell space of 
fields is still preserved (Prop. \ref{SymmetryPreservesOnshellSpace}), and hence Prop. \ref{LocalGaugeSymmetryObstructsCauchySurface} and
Cor. \ref{NonexistenceOfCauchySurfaces} apply with minor modifications. } which are the identity outside a compact submanifold. 

\vspace{.5mm}
\noindent {\bf (ii)} Similarly for Yang-Mills theory in the trivial topological sector, the infinitesimal gauge symmetries are parametrized by 
$C^\infty(M,\frg)= \Gamma_M(M\times \frg)$. 
The gauge parameters of compact support $C^\infty_{c}(M,\frg)\subset C^\infty(M,\frg)$  arise by differentiating maps  $C^\infty_c(M,G)\subset C^\infty(M,G)$
which map to the identity $e\in G$ outside a compact submanifold (along with the induced vector fields on $\CF= \Omega^1(M,\frg)$ by differentiating 
the induced finite diffeomorphisms 
on $\Omega^{1}(M,\frg)$, respectively). A special case of this is (abelian) $G=U(1)$ Yang-Mills theory in the guise of electromagnetism from 
Ex. \ref{ElectromagnetismGaugeSymmetry}. By the linearity of the field space $\CF=\Omega^1_\mathrm{Vert}(M)$ and the constancy of each 
vector field $\CR_e: \CF \rightarrow T\CF \cong \CF\times \CF$, given by $A^k\mapsto \dd_M e$, the diffeomorphism corresponding to the 
integrated infinitesimal transformation is given on $\FR^k$-plots of fields by 
\vspace{-1mm}
\begin{align*} 
 \CD_{(e)} \;:\; \Omega_\mathrm{Vert}^1(M) &\longrightarrow \Omega_\mathrm{Vert}^1(M) 
 \\[-2pt] 
A^k &\longmapsto A^k + \dd_M e \, .
\end{align*}

\end{example} 

\begin{remark}[{\bf Gauge equivalence classes}]
To preserve the assumption that classical physics is \textit{deterministic}, i.e., with a unique `time evolution', one is forced to consider any two field 
configuration $\phi_1,\phi_2 \in \CF(*)$ of a local field theory $(\CF,\CL)$ which are related by a (finite) gauge transformation as being equivalent. 
It follows that a physically viable observable should assign the same value for any gauge equivalent field configuration. One way to implement this 
rigorously is to consider the quotient set of  fields by gauge transformations, or in our setting, the quotient \textit{smooth}
set -- a colimit construction which \textit{exists} in smooth sets 
\vspace{-3mm} 
\begin{align}\label{QuotientByGaugeSymmetries}
\CF\, /\, \mathrm{Diff}^{\CL,\mathrm{gauge}}(\CF)\hspace{0.5cm} \in \hspace{0.5cm} \SmoothSets
\end{align}

\vspace{-1mm} 
\noindent on which all (gauge invariant) local smooth functionals out of $\CF$ descend to. We note that for this quotient construction to make sense,
it is crucial that the Euler--Lagrange operator is covariant under local (and spacetime covariant) symmetries, which is indeed the case 
\eqref{CovarianceofEulerLagrangeOperator}. 
For the case of electromagnetism, this is the smooth set corresponding to the quotient $\Omega^1(M) / \dd_M \Omega^0(M)$.

\end{remark}

\begin{remark}[\bf On gauge equivalence and redundancy]\label{OnGaugeEquivalencyandRedundancy}
We briefly remark on two ways that the viewpoint of gauge symmetries as mere redundancies can be insufficient. The concepts involved in the discussion below 
will be rigorously fleshed out in \cite{GSS-3}.

\vspace{.5mm} 
\noindent {\bf (i)} This treatment is sufficient for most purposes of \textit{classical} field theory with fields being sections of fiber bundles 
$\CF=\mathbold{\Gamma}_M(F)$. However, for the purposes of `gauge fixing'\footnote{Roughly, this means finding an equivalent local field theory 
$(\CF',\CL')$ with \textit{no gauge symmetries}. Equivalency, among other conditions, means that the two underlying on-shell spaces should be 
diffeomorphic $\CF_{\CE \CL}/ \mathrm{Diff}^{\CL,\mathrm{gauge}}(\CF)\cong \CF'_{\CE \CL' } $.} and quantization it is extremely useful to 
consider this as a quotient in a homotopical/higher geometrical sense, thus resulting into a \textit{smooth groupoid} instead (see \cref{outlook}). 
The \textit{infinitesimal}/\textit{perturbative} (and algebraic) incarnation of this construction appears in the physics literature under
the name {\it BRST complex} (see \cite{HenneauxTeitelboim92}). 

\vspace{.5mm} 
\noindent {\bf (ii)} There is another, related, sense in which the phrase `gauge equivalence is redundancy' should be carefully interpreted. 
Throughout this text, we have dealt with field spaces being sections of fiber bundles $\CF=\mathbold{\Gamma}_M(F)$ \textit{over the spacetime }$M$,
which are by definition (petit) sheaves on $M$ (Rem. \ref{FieldSpaceAsASheafOfSheaves}). This description includes, in particular, globally defined 
$G$-gauge fields $\Omega^1(M,\frg)\cong \Gamma_M\big(T^*M\times_M (M\times \frg)\big)$. In other words, principal $G$-bundle connections over $M$ 
in the \textit{trivial topological sector}. Taking into account dynamics via a Lagrangian which is invariant under the action of $C^\infty(M,G)$,
one may (or may not) consider the corresponding homotopy quotient groupoid as the actual physical field space. However, gauge fields in the 
non-trivial topological sector, i.e., $G$-connections on $P\rightarrow M$ for an arbitrary $G$-bundle (e.g., Dirac monopole in electromagnetism, 
instantons in Yang-Mills theory), do \textit{not} follow this description. In particular, the description of a \textit{single} gauge field $A$ 
as an object \textit{living on spacetime}\footnote{As opposed to its equivalent description of a \textit{global} $1$-form 
$\tilde{A}\in \Omega^{1}(P,\frg)$ on the corresponding principal $G$-bundle $P\rightarrow M$.} $M$ (with respect to a good open cover
$\coprod_{i\in I} U_i \rightarrow M$) is given by a pair of families of 

{\bf (a)} locally defined fields  $\big\{A_i \in
      \Omega^1(U_i, \mathfrak{g})\big\}_{i\in I}$ \textit{and} 

 {\bf (b)} locally defined ``transition maps''\footnote{Corresponding to the isomorphisms between two different
 trivializations of $P\rightarrow M$ over $U_i$ and $U_j$.} $\big\{
      g_{ij} 
      \in
      C^\infty(U_{ij}, G) \big\}_{i,j\in I}$ on the intersections $U_{ij}= U_i\cap U_j$ such that 

\vspace{-2mm}
\begin{equation}
  \label{GaugeTransformations}
      A_{j}
      =
      g_{ji}\cdot  A_{i} \cdot g_{ji}^{-1}
      +
      g_{ji}\cdot \dd 
      g_{ji}^{-1}  \quad  \mathrm{and} \quad 
      g_{ij}=g_{ik}\cdot g_{kj} 
\end{equation}
      
      \vspace{0mm}
      \noindent on overlaps $U_{ij}$ and triple overlaps $U_{ijk}=U_i\cap U_j \cap U_k $, respectively. In other words, the collection of all $G$-gauge fields 
      on $M$ does not form a (petit) mapping/section sheaf of Rem. \ref{FieldSpaceAsASheafOfSheaves}. Instead, they may be identified with an appropriate
      mapping space construction in a higher categorical setting, via maps into a classifying moduli stack
      $M\rightarrow \mathbf{B}G_{\mathrm{conn}}$ \eqref{BGConn}, hence forming a (petit) \textit{mapping stack} on $M$! 
     
     One notices that the ``transformations'' relating the $\big\{A_i\in \Omega^1(U_i,\frg) \big\}$ via the locally defined transition maps 
     $\big\{g_{ij} \in C^\infty(U_{ij}, G)\big\}_{i,j\in I}$ have the same form as the action of (locally defined) \textit{gauge transformations}
     in the case of the trivial topological sector. However, these transition maps are \textit{data} entering the definition of the field configuration. 
     They are \textit{not} to be viewed as redundancies and one should \textit{not} quotient by them. Indeed, if one defines instead the field space to
     be \textit{equivalence classes} $[A_i]$ of locally defined forms by such transformations, then the information about the non-trivial topological 
     sectors is lost. In particular, if these (locally defined) transition maps were merely redundancies, then all gauge fields would be necessarily 
     equivalent to a globally defined $A\in \Omega^1(M,\frg)$.

     The actual gauge equivalences that may be partially viewed as redundancies in the vein of {\bf (i)} are instead parametrized by locally defined 
     families\footnote{In contrast to transition the transition maps, these are defined locally on the open subsets $U_i$, rather than the 
     intersections $U_{ij}$.} $\big\{\lambda_i \in C^\infty(U_i,G) \big\}_{i\in I}$ acting on any gauge field\footnote{Corresponding to pulling 
     back the connection on the total space $P\rightarrow M$ along a principal $G$-bundle 
     automorphism $\bar{P}\rightarrow P$ over $M$.} $A= \big( \{A_i\}_{i\in I}, \, \{g_{ij}\}_{i,j\in I} \big)$ by
     \vspace{-1mm}
\begin{align*}
\{A_i\}_{i\in I} &\longmapsto \{\lambda_{i}^{-1} \cdot A_i \cdot \lambda_i + \lambda_i^{-1} \cdot \dd \lambda_i \}_{i\in I} \\
\{g_{ij}\}_{i,j\in I} &\longmapsto \{ \lambda^{-1}_i \cdot g_{ij} \cdot \lambda_j \}_{i,j\in I} \, . 
\end{align*}

\vspace{-1mm}
\noindent
In fact, these natural automorphisms of the full nonperturbative field space consisting of $G$-gauge fields of arbitrary topological sectors 
are automatically encoded in the (smooth) mapping stack/groupoid construction $[M, \mathbf{B}G_{\mathrm{conn}}]$ as homotopies/higher morphisms 
between the two gauge field configurations \cite{FSS12}\cite{FSS13}\cite{FSS14}, as we will make further explicit in
future installments of this series.

\end{remark}

\newpage 

\addtocontents{toc}{\protect\vspace{-10pt}}
\section{Presymplectic structure of local field theories}
\label{TheBicomplexOfLocalFormsSection}

The setting of smooth sets is the natural arena to make fully rigorous the observation of \cite{Zuckerman} (considerably expanded in \cite{DF99}), 
that the usual manipulations of classical \textit{local} field theory (for example as described in the previous sections) may be  
concisely expressed via the bicomplex of \textit{local} differential forms on the product `smooth space' $\CF\times M$ of off-shell fields 
and the spacetime, which arises as the `pullback' of the variational bicomplex on $J^\infty_M F$. 

\medskip 
The point is that in our current context 
the spaces in question \textit{exist} as smooth sets. There is a natural notion of a \textit{smooth} tangent bundle on $\CF\times M$, 
and hence differential forms on it, whereby one can genuinely pullback forms on $J^\infty_M F$ via the \textit{smooth prolongated} evaluation 
map $\mathrm{ev} \circ (j^\infty\times \id_M) : \CF\times M \rightarrow J^\infty_M F$, hence identifying a subset of differential forms with 
an induced bi-complex structure. This recovers the bicomplex of local forms as implicitly used in \cite{Zuckerman} and \cite{DF99}. 
For any local field theory $(\CF,\CL)$, the main additional insight of this formalism is the study of the induced presymplectic 
$(d-1,2)$-form $\om_\CL$ on $\CF\times M$, its restriction to $\CF_{\CE \CL}\times M$ and its transgression to an on-shell
presymplectic $2$-form $\int_{\Sigma^{d-1}}\om_\CL|_{\CE \CL}$ on $\CF_{\CE \CL}$ upon integration along an appropriate submanifold, yielding 
the \textit{covariant phase space} of a local field theory as a presymplectic smooth set.

\medskip 
As previously, we rigorously define the relevant structures within the category of smooth sets, while also describing the exact 
correspondence with the formulas appearing in the physics literature.\footnote{A similar formalization is described in \cite{Blohmann23b}, 
which however takes place in `pro-diffeological' spaces. And hence in turn pro-smooth sets, rather than actual smooth sets. 
This is not ideal and has its limitations -- see Rem. \ref{ProVsLocProJetBundle}. Further discrepancies in definitions and results 
in comparison to our approach will be noted throughout this section. However, a discussion of the presymplectic and induced 
Poisson structures does not appear therein.} Our presentation includes a detailed discussion of the 
off-shell presymplectic current of field theories, and the induced
bracket structures of Noether and Hamiltonian local currents, which in good situations restricts to the on-shell product $\CF_{\CE \CL}\times M$. 
Finally, by transgressing the presymplectic current to an actual presymplectic $2$-form on $\CF$, we describe the analogous induced Poisson
bracket on Hamiltonian functionals. In good situations, this restricts to the \textit{covariant Poisson bracket} of Hamiltonian
functionals on the space on-shell fields $\CF_{\CE \CL}$, i.e., the Poisson bracket corresponding to the presymplectic structure 
of the covariant phase space.

\medskip 
In the mathematical physics literature, early accounts for the covariant phase space approach include 
\cite{So77}\cite{Kij73}\cite{KS76} \cite{Gaw72}\cite{GK74}\cite{KT79}. 
It was rediscovered in \cite{Wi86}\cite{Zuckerman}, see also \cite{CrnW}.
Detailed historical accounts, with relevant references, of the development of the covariant
phase space formalism can be found in \cite{Gi23} and \cite{Khavkine14}.
In the latter, the approach is via the infinite jet bundle, and hence closer to ours, where furthermore 
the relation with Peierls bracket (defined in
terms of Green functions, when these exist) is explained in detail.

\medskip 
\noindent {\bf Comment on notation:} In this final section, we omit reference 
to the fully faithful Yoneda embedding symbol, as often practiced in the sheaf categorical literature. This is meant to alleviate the otherwise extremely heavy notation of this final chapter and to act as a stepping stone for the omission of the corresponding Yoneda embeddings that will appear in the following parts of the series. We hope that, by now, the reader should be able to tell from the ambient 
context which category the objects we refer to live in ($\SmoothManifolds$, $\mathrm{LocProMan}$ or $\SmoothSets$).

\subsection{The local bicomplex and its Cartan calculus}
\label{sec-cartan}

We start off by defining the tangent bundle on the product smooth set $\CF \times M \in \SmoothSets$ to be
\vspace{-2mm}
$$
T(\CF\times M):= T\CF \times TM \hspace{0.5cm} \in \hspace{0.5cm} \SmoothSets \, ,
$$

\vspace{-2mm}
\noindent
and so at the level of points $T_{(\phi,p)} (\CF\times M) = T_\phi \CF \times T_p M = \Gamma_M(\phi^*VF) \times T_p M \subset T(\CF\times M)(*)$. 
The intuition is that both tangent vectors on M and $\CF$ are given by `infinitesimal curves' in $M$ and $\CF$ respectively, and thus 
so should tangent vectors on $\CF\times M$. This intuition will, once again, become a constructive definition in the synthetic
sense -- recovering the above (see \cite{GSS-2}). Viewing $\CF \times M$ as a \textit{trivial} bundle over spacetime
\vspace{-2mm}
\[
\xymatrix@=0.7em  { \CF\times M \ar[dd]^{\pr_2} & 
	\\  \\
M
	\, , } 
\]

\vspace{-2mm}
\noindent we may think of the (by definition) splitting of the tangent bundle $T(\CF\times M)$ as a horizontal splitting along the projection to $M$. 
In other words, 
\vspace{-3mm}
\begin{align}\label{SplittingOfProductTangentBundle}
V(\CF\times M) \times_{\CF\times M} H(\CF\times M) :=  (T\CF\times M)\times_{\CF 
 \times M} (\CF\times TM) \cong T\CF \times TM\, , 
\end{align}

\vspace{-1mm}
\noindent
and so $T\CF\times M \hookrightarrow T\CF\times TM$ is the \textit{vertical} subbundle while $\CF\times TM \hookrightarrow T\CF\times TM$ is the \textit{horizontal} subbundle. 

By definition, differential $m$-forms on both the spacetime manifold $M$ and the infinite jet bundle $J^\infty_M F$ (Def. \ref{mformsInftyJetBundle}) 
are given by fiber-wise linear bundle maps out of their tangent bundle (in $\SmoothSets$). We define differential forms on $\CF\times M$ 
in the same vein and in line with differential forms on $\CF$ from Def. \ref{DifferentialFormsOnFieldSpace}.
\begin{definition}[\bf  Forms on $\CF\times M$]\label{DifferentialFormsOnProduct} 
The set of differential m-forms on $\CF\times M$ is defined as
\begin{align}\Omega^m(\CF\times M) := \mathrm{Hom}^{\mathrm{fib.lin. an.}}_{\SmoothSets}
\big(T^{\times m}(\CF\times M)\,,\, \FR \big) \, , 
\end{align} 

\vspace{-1mm} 
\noindent i.e., smooth real-valued, fiber-wise linear antisymmetric maps with respect to the fiber-wise linear structure on the $m$-fold fiber product
\vspace{0mm} 
$$
T^{\times m}(\CF\times M):= T(\CF\times M)\times_{\CF\times M}\cdots \times_{\CF\times M} T(\CF\times M)
$$ 

\vspace{0mm} 
\noindent of the tangent bundle over the $\CF\times M$.  
\end{definition}
As usual, the collection of differential forms of all degrees forms a graded $\FR$-vector space
\vspace{0mm}
$$
\Omega^\bullet(\CF\times M):= \bigoplus_{m\in \NN} \Omega^{m}(\CF\times M )\, .
$$

\vspace{-1mm}
\noindent Furthermore, it is straightforward to define what a \textit{horizontal} and \textit{vertical} form on $\CF \times M$ should be. 
For instance, as with the infinite jet bundle (Def. \ref{HorizontalVertical1formDefinition}), a horizontal $1$-form $\om$ on $\CF\times M$
is one that vanishes under the restriction
\vspace{-2mm} 
\begin{align}\label{Horizontal1formOnProduct}
\om|_{V(\CF\times M)} \;:\; T\CF \times M \longhookrightarrow T\CF\times TM \longrightarrow \FR \;,
\end{align}

\vspace{0mm}
\noindent and analogously for vertical $1$-forms, and even further for horizontal /  vertical $m$-forms on $\CF\times M$. At this point,
the development of a bi-complex structure and, furthermore, a corresponding Cartan calculus on $\CF\times M$ reaches several stumbling blocks:

\begin{remark}[\bf Caveats with the `bicomplex of $\CF\times M$']\label{CaveatsWithTheBicomplexOfProduct} Both \cite{Zuckerman} and \cite{DF99}
seem to assume (albeit unnecessarily) the existence of a bi-complex structure on the set of \textit{all} differential forms\footnote{Although 
they do not explicitly define what this set consists of. We presume they mean a version of our description.} $\Omega^{\bullet}(\CF\times M)$, 
along with a Cartan Calculus with respect to vector fields on $\CF\times M$. However, the (potential) existence of this structure suffers from the following ambiguities:

\noindent {\bf (a)} Even though there is a notion of horizontal/vertical $1$-forms on $\CF\times M$, it is not obvious that \textit{every} $1$-form $\om$ on $\CF\times M$ splits as 
\vspace{-2mm}
$$
\om \overset{?}{=} \om_V +\om_H \, .
$$

\vspace{0mm}
\noindent Recall, the reason this holds for finite-dimensional manifolds and the infinite jet bundle is that $1$-forms, in these cases, are in 1-1 correspondence with
maps of modules of vector fields (Rem. \ref{1formInfinityJetAsFrechetMap}). This in turn uses the paracompact structure of the underlying (Fr\'{e}chet) manifolds to
extend tangent vectors on $J^\infty_M F$ to vector fields. This is not at all obvious for the case of the field space $\CF$, and in turn for $\CF\times M$. In other words, 
this seems to be a non-trivial obstacle towards defining the bigrading on $\Omega^\bullet(\CF\times M)$. A potential proof that might overcome this is sketched in Rem. \ref{CaveatOnTransgressedVerticalDifferential}. Nevertheless, there is a further crucial issue.

\noindent {\bf (b)} There is no obvious definition for a `de Rham differential' 
\vspace{-2mm}
$$ 
\dd_\CF : \Omega^{m}(\CF)\longrightarrow \Omega^{m+1}(\CF)\, .
$$

\vspace{-1mm}
\noindent
Nevertheless, we note that for any vector field $\CZ:\CF \rightarrow T \CF$ there is a natural contraction / interior product operation \footnote{This extends to higher
degrees. For $m=2$, the precomposition $T\CF\times_\CF \CF \xrightarrow{(\id,\CZ)} T\CF\times_\CF T\CF \rightarrow \FR $ defines a $1$-form by noting the canonical 
isomophism $T\CF\times_\CF  \CF \cong T\CF$ as a fiber product over $\CF$. } 
\vspace{-2mm}
\begin{align*}
\iota_{\CZ}:\Omega^1(\CF) &\longrightarrow C^\infty(\CF) \\ 
\om &\longmapsto \om \circ \CZ \, , 
\end{align*}

\vspace{-2mm}
\noindent which is however insufficient for a Cartan calculus in the absence of a differential. Naturally, this is directly related to the observations 
of Rem. \ref{TangentVectorsPathsOfFieldsAndDerivations} and Eq. \eqref{PathSectionDerivation}, since the would-be differential would necessarily define 
a derivation / Lie derivative $\mathbb{L}_{\CZ}= \iota_{\CZ} \dd_\CF$ on $C^\infty(\CF)$. The same statements follow through for the case of $\CF\times M$,
even without the further demand of an underlying bi-grading structure. 

\noindent {\bf (c)} We note that the problems of {\bf (a), (b)} are evaded if one instead takes as a definition of forms on $\CF$ that of the classifying 
space $\mathbold{\Omega}^m_{\mathrm{dR}}$ (Def. \ref{nformsonSmoothSet}). As explained therein, a de Rham differential exists for this version of forms on $\CF$, 
and similarly for those on $\CF\times M$. In fact, one can see that a natural bicomplex structure exists\footnote{To see this, recall that 
$\mathrm{Hom}_{\SmoothSets}(\CF\times M, \, \mathbold{\Omega}^{1}_{\mathrm{dR}})\cong  \mathrm{Hom}_{\SmoothSets}\big(\CF, \, [M, \mathbold{\Omega}^{1}_{\mathrm{dR}}] \big)$
by the internal hom property. Evaluating on test probes, the mapping space on the right is canonically isomorphic to 
$\big(y(\FR)\hat{\otimes} \Omega^1(M) \big) \bigoplus \big(\mathbold{\Omega}^{1}_{\mathrm{dR}} \hat{\otimes} C^\infty(M)\big)$, hence splitting 
a de Rham $1$-form $\om \in \Omega^{1}_\mathrm{dR}(\CF\times M)$ as $\om= \om^{1,0}+ \om^{0,1}$. This generalizes to de Rham m-forms, and hence splitting the
de Rham differential (Def. \ref{DifferentialofSmoothMap}) as $\dd_\mathrm{dR} = \dd_{\mathrm{dR}}^\CF + \dd_{\mathrm{dR}}^M$, inducing a bi-complex structure.}
on $\Omega^{\bullet}_{\mathrm{dR}}(\CF\times M):= \mathrm{Hom}_{\SmoothSets}(\CF\times M, \, \mathbold{\Omega}^{\bullet}_{\mathrm{dR}})$. However, in the 
setting of smooth sets the relation of the (classifying) de Rham forms with those of Def. \ref{DifferentialFormsOnProduct} is unclear.\footnote{As we have hinted
before in Rem. \ref{ClassifyingFormsDoNotClassifyInSmoothSet}, the situation is different in the setting of infinitesimally 
thickened smooth sets where the two notions will, indeed, be intricately related. }
Furthermore, there is no obvious interior product operation, which once 
again obstructs the construction of a Cartan calculus.   
\end{remark}
The resolution, from our perspective, is that \textit{local} classical field theory -- and hence all of the constructions and arguments 
of \cite{Zuckerman}\cite{DF99} -- only requires the existence of a bi-complex structure (and a corresponding Cartan Calculus) on the subset 
of \textit{local} forms on $\CF\times M$ (along with the set of local vector fields), hence completely bypassing the above conundrum.
This \textit{local bicomplex} does \textit{exist} within our context of smooth sets, and does indeed rigorously recover the statements and 
formulas of the above references. 
To that end, recall the smooth evaluation map 
$
\mathrm{ev}:\mathbold{\Gamma}_M(F) \times M \rightarrow F
$
from Eq. \eqref{SmoothEvaluationMap}. There is an analogous smooth evaluation map, which we denote by the same symbol, 
\vspace{-2mm}
\begin{align*}
\mathrm{ev} \,:\, \mathbold{\Gamma}_M(J^\infty F) \times M &\longrightarrow J^\infty_M F \\
(\tilde{\phi},p) &\longmapsto \tilde{\phi}(p) \, , 
\end{align*}

\vspace{-2mm}
\noindent
and similarly for higher plots. Precomposing along the smooth jet prolongation $j^\infty:\CF\rightarrow \mathbold{\Gamma}_M(J^\infty F)$, 
we may define the smooth prolongated evaluation map with values in $J^\infty_M F$.

\begin{definition}[\bf  Prolongated evaluation]\label{ProlongatedEvaluationMap}
The \textit{prolongated evaluation map} $\ev^\infty: \CF\times M \rightarrow J^\infty_M F$ is defined as the composition of maps of smooth sets
\vspace{-3mm}
\begin{align*}
\ev^\infty \,:\, \CF \times M \xlongrightarrow{\quad (j^\infty,\, \id_M)\quad} \mathbold{\Gamma}_M(J^\infty F)\times M \xlongrightarrow{\;\;\ev \;\;} J^\infty_M F\, . 
\end{align*}

\vspace{-2mm}
\noindent Explicitly, at the level of $*$-plots this is given by 
\vspace{-2mm}
$$
\ev^\infty (\phi,p)= j^\infty \phi(p) \hspace{0.5cm} \in \hspace{0.5cm} J^\infty_M F\, ,
$$

\vspace{-2mm}
\noindent and similarly for higher plots.
\end{definition}

At this point, we note that both the source $\CF\times M$ and target $J^\infty_M F$ of the prolongated evaluation consist of smooth spaces with a well-defined 
tangent bundle, in terms of infinitesimal curves. \footnote{In \cite{GSS-2} all of the mentioned tangent bundles will be special instances of the synthetic tangent 
bundle construction, making the intuition of infinitesimal curves precise. The pushforward map motivated below will then follow naturally as the synthetic 
pushforward of $\ev^\infty$.} Moreover, by Lem. \ref{LinePlotsRepresentTangentVectors}, we can represent any tangent vector 
$(\CZ_\phi, X_p)\in T_\phi \CF \times T_p M \subset T\CF \times TM$ by a line plot $(\phi_t, p_t) \in \big(\CF\times M\big) (\FR^1_t)$ starting at $(\phi,p)$. 
Since tangent vectors on $J^\infty_M F$ are also represented by line plots (Lem. \ref{InfinityJetTangentVectorsAsCurves}), this suggests the existence of 
a pushforward map along $\ev^\infty$. Indeed, we may calculate the derivative of $\ev^\infty\circ (\phi_t,p_t): \FR^1 \rightarrow J^\infty_M F$ at $t=0$, 
\vspace{-1mm}
\begin{align*}
\partial_t \big( \ev^\infty(\phi_t ,p_t) \big)  |_{t=0} &= \partial_t \big(j^\infty \phi_t \circ (\id_{\FR^1_t}, p_t) \big)  |_{t=0}  \\
&= \partial_t (j^\infty \phi_t )|_{t=0} ( p_{t=0} ) + \partial_t \big( j^\infty \phi_{t=0} \circ (\id_{\FR^1_t}, p_t) \big)|_{t=0} \\
&= \partial_t (j^\infty \phi_t )|_{t=0} (p) + (\dd j^\infty \phi)_p  (\partial_t p_{t} |_{t=0})  \\
&= j^\infty \CZ_\phi(p) + (\dd j^\infty \phi)_p X_p \, , 
\end{align*}

\vspace{-1mm} 
\noindent
where the derivative is computed as in Lem. \ref{InfinityJetTangentVectorsAsCurves} and Ex. \ref{InfinityJetVerticalVectorFromProlongationofPlot}, 
hence via the corresponding compatible family $\{j^k \phi_t \circ (\id_{\FR^1_t}, p_t): \FR^1 \rightarrow J^k_M F\}_{k\in \NN}$, whereby the 
chain rule applies -- as done implicitly in the second equality. The latter equalities show that the derivative naturally combines the two
previously introduced pushforward maps: The `vertical pushforward' of Eq. \eqref{pushforwardsoftangentfieldvectors} via the prolongation of the tangent 
vector $\CZ_\phi$ at $\phi\in \CF$ evaluated at $p\in M$
\vspace{-1mm}
$$
j^\infty \CZ_\phi (p) \hspace{0.5cm}\in \hspace{0.5cm}  V_{j^\infty_p \phi} J^\infty_M F \, ,
$$

\vspace{-1mm}
\noindent
and the `horizontal pushforward' (horizontal lift) of Eq. \eqref{PushforwardAlongProlongationOfField} via the differential of the prolongation
of $\phi\in \CF$ at $p\in M$ 
\vspace{-1mm}
$$
(\dd j^\infty \phi)_p X_p \hspace{0.5cm} \in \hspace{0.5cm} H_{j^\infty_p \phi} J^\infty_M F\, .
$$

\vspace{-1mm}
\noindent
The resulting tangent vector on $J^\infty_M F$ depends only on the tangent vectors $(\CZ_\phi, X_p)$, and so we may disregard the representative 
line-plots and define the pushforward of $\ev^\infty$ directly as a map of tangent vectors.
\begin{definition}[\bf  Pushforward of prolongated evaluation]\label{PushforwardOfProlongatedEvaluation}
The  \textit{pushforward} of the prolongated evaluation $\ev^\infty: \CF\times M \rightarrow J^\infty_M F$ is defined by
\vspace{-2mm}
\begin{align*}
\dd \ev^\infty \,:\, T\CF\times TM &\; \longrightarrow  \; T J^\infty_M F\\
(\CZ_\phi, X_p) &\; \longmapsto \; j^\infty \CZ_\phi(p) + (\dd j^\infty \phi)_p X_p \,,
\end{align*}

\vspace{-2mm}
\noindent and similarly for higher plots.
\end{definition}
In a compatible coordinate chart for $J^\infty_M F$ around $p\in M$ where $\CZ_{\phi} = \CZ_{\phi}^a \cdot \frac{\partial}{\partial u^a} \in T_\phi \CF $ 
and $X_{p}=X^\mu \cdot \frac{\partial}{\partial x^\mu} |_{p}\in T_p M$, the pushforward is given by (see maps \eqref{pushforwardsoftangentfieldvectors}
and \eqref{PushforwardAlongProlongationOfField})
\vspace{-1mm}
\begin{align}\label{PushforwardOfProlongatedEvaluationLocalCoords}
\bigg(\CZ_{\phi}^a \cdot \frac{\partial}{\partial u^a}, \, X^\mu \cdot \frac{\partial}{\partial x^\mu} \Big|_{p} 
\bigg) \;\; \longmapsto \;\;    \sum_{|I|=0}^\infty \frac{\partial \CZ_\phi^a}{\partial x^I} (p)
\cdot \frac{\partial}{\partial u^a_I} \Big|_{j^\infty_p \phi} + X^\mu \bigg(\frac{\partial}{\partial x^\mu} \Big\vert_{j^\infty_p \phi} 
+ \sum_{|I|=0}^{\infty} \frac{\partial \phi^a} {\partial x^{I+\mu}} (p)
\cdot    \frac{\partial}{\partial u^a_I} \Big\vert_{j^\infty_p \phi}\bigg)\, . 
\end{align}

\vspace{-2mm}
\noindent
We highlight that, by construction, the pushforward $\dd \ev^\infty$ respects the natural splitting of the tangent bundles $T\CF \times TM$ 
from Eq. \eqref{SplittingOfProductTangentBundle}, and that of $TJ^\infty_M F$  from Cor. \ref{CanonicalSmoothSplitting}, as bundles over $M$
\vspace{-3mm} 
\[ 
\xymatrix@C=3.5em@R=.2em  {V(\CF\times M)\times_{\CF\times M} H(\CF\times M) \ar[rd] \ar[rr]^{\quad \, \, \dd \ev^\infty}
&   & VJ^\infty_M F\times_{J^\infty_M F} HJ^\infty_M F \ar[ld]
	\\ 
& M & 
}.  
\]

\vspace{-2mm}
\noindent 
The idea now is to define the subset of \textit{local forms on $\CF\times M$} by pulling back forms in the variational bi-complex on 
$J^\infty_M F$ via $e_\infty:\CF\times M \rightarrow J^\infty_M F$, by which we mean precomposing differential forms with the 
pushforward map on tangent vectors. In other words, the \textit{pullback} \textit{local} form  $(\ev^\infty)^*\om \in \Omega^{\bullet}(\CF\times M)$ 
of a differential form $\om \in \Omega^{\bullet,\bullet}(J^\infty_M F)$ is defined by the composition
\vspace{-3mm}
\begin{align}\label{PullbackAlongProlongatedEvaluation}
(\ev^\infty)^*\om  \;:\; T\CF\times TM \xlongrightarrow{\;\; \dd \ev^\infty\;\; } J^\infty_M F \xlongrightarrow{\;\; \om \;\;} \FR \, . 
\end{align}

\vspace{-2mm}
\noindent Since forms on $J^\infty_M F$ do have a well-defined bigrading, and the pushforward map respects the respective splittings, 
it follows that the image of the map of $\FR$-vector spaces
\vspace{-2mm}
\begin{align*}
(\ev^\infty)^* \;:\; \Omega^{\bullet,\bullet}(J^\infty_M F) \longrightarrow \Omega^{\bullet}(\CF\times M) \, . 
\end{align*}

\vspace{-2mm}
\noindent induces a bi-grading -- whose horizontal and vertical components coincide with the corresponding notions defined \eqref{Horizontal1formOnProduct}
directly on $T\CF \times TM$. Thus, it follows that the vertical and horizontal differentials on $J^\infty_M F$ induce two (anti-commuting) 
differentials on the image of $(\ev^\infty)^*$. In total, the pullback identifies a \textit{bi-complex structure} on the subspace of  
\textit{local forms} on the smooth set $\CF\times M$. 

\begin{definition}[\bf  Bicomplex of local forms]\label{BicomplexOfLocalForms}
The \textit{bicomplex of local forms} on $\CF\times M$ is defined as the image of the variational bicomplex on $J^\infty_M F$ under
the pullback map $(\ev^\infty)^*$. That is, 
\vspace{-1mm}
\begin{align*}
\Omega^{\bullet,\bullet}_{\mathrm{loc}}(\CF\times M) := \mathrm{Im}\big( (\ev^\infty)^*
\; : \; \Omega^{\bullet,\bullet}(J^\infty_M F) \longrightarrow \Omega^\bullet(\CF\times M) \big) \quad \subset \;\; \Omega^\bullet(\CF\times M)\, ,   
\end{align*}

\vspace{-1mm}
\noindent equipped with the induced horizontal and vertical differentials 
\vspace{-2mm}
$$
\dd_M \big((\ev^\infty)^* \om  \big):= (\ev^\infty)^* (\dd_H \om)\, ,  
\hspace{2cm} \delta \big((\ev^\infty)^* \om  \big)\equiv \dd_\CF \big((\ev^\infty)^* \om  \big):= (\ev^\infty)^* (\dd_V \om) \, ,
$$

\vspace{-2mm}
\noindent respectively.
\end{definition}

By construction, the local bi-complex is a module over local functions, or equivalently local $0$-forms on $\CF\times M$, i.e.,
$C^\infty_\mathrm{loc}(\CF\times M) \cong \Omega^{0,0}_\loc (\CF\times M)$. It is instructive to spell out the form of the local 
bicomplex in local coordinates, which will make contact with the formulas of \cite{DF99}. To that end, for any 1-form 
$\om=\om_H+\om_V= (\om_H)_\mu \cdot \dd x^\mu  + \sum_{|I|=0}^{\infty} (\om_V)_a^{I} \cdot \dd_V u^a_I$ on $J^\infty_M F$ 
and any tangent vector $(\CZ_\phi, X_p)\in T_\phi \CF \times T_p M$, we may compute 
\vspace{-2mm}
\begin{align}\label{ActionOfLocal1FormInCoordinates}
\big((\ev^\infty)^* \om  \big)|_{(\phi,p)} (\CZ_\phi, X_p) &= \om|_{j^\infty_p \phi} \Big( (\dd \ev^\infty)_{(\phi,p)} \big(\CZ_\phi,p\big) \Big)\nn  \\
&= (\om_H +\om_V)|_{j^\infty_p \phi} \big( j^\infty \CZ_\phi(p) + (dj^\infty \phi)_p X_p \big)  \\
&= \om_V |_{j^\infty_p \phi} \big(j^\infty \CZ_\phi(p)\big) + \om_H |_{j^\infty_p \phi}\big( (\dd j^\infty \phi)_p X_p\big) \nn \\
&=  \sum_{|I|=0}^\infty \big(\om_V\big)_a^I(j^\infty_p \phi)  \cdot \frac{\partial \CZ_\phi^a}{\partial x^I} (p)
+ \big(\om_H\big)_\mu(j^\infty_p \phi) \cdot X^\mu \, , \nn 
\end{align}

\vspace{-1mm}
\noindent
where in the third equality we used the vertical/horizontal properties of the two tangent vectors, respectively, and finally used their coordinate form \eqref{PushforwardOfProlongatedEvaluationLocalCoords}. Note that, even though $e^\infty: \CF\times M\rightarrow J^\infty_M F$ might not be 
surjective \footnote{In more detail, as per Rem. \ref{FieldSpaceAsASheafOfSheaves}, this is really a map of petit sheaves (and hence the pullback too) -- i.e., 
over each $U\subset M$. Regardless of the (non)-existence of global sections, the restricted map $ev^\infty|_U: \Gamma_U(F)\times U\rightarrow J^\infty_U F$ 
might still not be surjective for all $U\subset M$, and the comment applies for each open neighborhood. It is, however, an epimorphism as a map of (petit) 
sheaves, i.e., surjective on stalks.} or submersive (surjective on tangent spaces), the differentials of a local form are well-defined, i.e., independent of the chosen representative. 
This follows since pullback $(\ev^\infty)^*\om $ local form depends manifestly only the values of $\om\in J^\infty_M F$ along the image
of $\ev^\infty$ (see \eqref{ActionOfLocal1FormInCoordinates}).

\paragraph{Standard notation for the local bicomplex.}  Recalling the abuse of notation from \eqref{VectorFieldOnFieldSpaceAbuseOfNotation}, 
we may denote the vertical tangent vector $\CZ_\phi= \CZ_\phi^a \cdot \frac{\partial}{\partial u^a} \, \in\,  T_\phi \CF = \Gamma_M(\phi^*VF)$ as
\vspace{-1mm}
$$
\CZ_{\phi} = \CZ^a \cdot \frac{\delta}{\delta \phi^a} \Big\vert_{(\phi,p)} \hspace{0.5cm} \in \hspace{0.5cm} T_\phi \CF   \longhookrightarrow T_\phi \CF \times T_p M\, .
$$

\vspace{-1mm}
\noindent 
Following this trend, the pullback of a vertical $1$-form $\om_V = \sum_{|I|=0}^{\infty} (\om_V)_a^{I} \cdot \dd_V u^a_I =\sum_{|I|=0}^{\infty} (\om_V)_a^{I} \big(x^\mu,\{u^b_J\}_{|J|\leq k}\big) \cdot \dd_V u^a_I  $ may be denoted as 
\vspace{-3mm}
\begin{align}\label{LocalVertical1formAbusively}
\big((\ev^\infty)^*\om_V\big) :&= \sum_{|I|=0}^{\infty} (\om_V)_a^{I}\circ \ev^\infty \cdot (\ev^\infty)^* \dd_V u^a_I  
\\[-1pt]
&= \sum_{|I|=0}^{\infty} (\om_V)_a^{I} \big(x^\mu,\{\partial_J \phi^b \}_{|J|\leq k}\big) \cdot \delta(\partial_I \phi^a)   \hspace{0.5cm} \in \hspace{0.5cm} \Omega^{0,1}_\mathrm{loc}(\CF\times M) \nn 
\end{align}

\vspace{-2mm}
\noindent
with the implicit understanding that the notation here\footnote{The abuse of notation would be slightly less confusing if we used the notation $\hat{\phi}^a_I$ from Ex. \ref{FieldPointAmplitudeExample}. We comply with the standard physics literature notation as in \cite{DF99}, to make direct contact with the formulas therein.} 
means $\partial_I \phi^a := (\ev^\infty)^* u^a_I= u^a_I \circ \ev^\infty \in C^\infty_{\mathrm{loc}}(\CF\times M) $, and so
\vspace{-1mm}
$$
\delta(\partial_I \phi^a) \equiv \delta ( u^a_I \circ \ev^\infty) := (\ev^\infty)^* \dd_V u^a_I \, . 
$$

\vspace{-1mm}
\noindent Under this abuse of notation, the rigorous calculation of \eqref{ActionOfLocal1FormInCoordinates} justifies the following useful (formal) manipulation from \cite{DF99}
\vspace{-3mm}
\begin{align}\label{Local1FormActionOnTangentAbusively}
\delta(\partial_I \phi^a)\big\vert_{\phi,p} \bigg( \CZ^b_\phi \cdot \frac{\delta}{\delta \phi^b} \Big\vert_{(\phi,p)}\bigg) = 
\partial_I \bigg( \delta \phi^a \Big(\CZ^b_\phi \cdot \frac{\delta}{\delta \phi^b}\Big) \! \bigg) (p) =
\partial_I \CZ^a_\phi(p) = \frac{\partial \CZ^a_\phi}{\partial x^I} (p) \, ,
\end{align}

\vspace{-1mm}
\noindent
where, as explained, the partial derivative symbol on the left-hand side is a formal symbol, while on the right-hand 
side it is an actual derivative. In other words, with the caveats and abuse of notation mentioned, the `vertical differential commutes with partial derivatives'
\vspace{-2mm}
\begin{align*}
`` \, \delta(\partial_I \phi^a) = \partial_I (\delta \phi^a)\, "\, ,
\end{align*}

\vspace{-2mm}
\noindent
which is precisely how these symbols are usefully manipulated in explicit calculations appearing in the literature.

Along the same lines, given a smooth function $f= f\big(x,\{u^b_J\}_{|J|\leq k}\big)\in C^\infty(J^\infty_M F)$ with induced local function $(\ev^\infty)^*f \in \Omega^{0,0}_{\mathrm{loc}}(\CF\times M)$, again by \eqref{ActionOfLocal1FormInCoordinates}, the corresponding local vertical 1-form may be computed and denoted as
\vspace{-2mm}
\begin{align}\label{VerticalLocalDifferentialInCoordinates}
\delta\big((\ev^\infty)^*f) :&= (\ev^\infty)^* \dd_V f= \sum_{|I|=0}^{\infty} \frac{\partial f}{\partial u^a_I} \circ \ev^\infty \cdot (\ev^\infty)^* \dd_V u^a_I \nn \\
&= \sum_{|I|=0}^{\infty} \frac{\delta f\big(x, \{\partial_J \phi^b\}_{|J|\leq k}\big)}{\delta(\partial_I \phi^a)} \cdot  \delta(\partial_I \phi^a)
\end{align}

\vspace{-2mm}
\noindent
On the other hand, the corresponding local horizontal 1-form is given by
\vspace{-1mm}
\begin{align*}
\dd_M \big((\ev^\infty)^*f) :&= (\ev^\infty)^* \dd_H f = (\ev^\infty)^* (D_\mu f \cdot \dd_H x^\mu) = D_\mu f \circ \ev^\infty \cdot (\ev^\infty)^*\dd_H x^\mu \, , 
\end{align*}
\vspace{-2mm}
\noindent
which at any point $(\phi,p)\in \CF\times M$, again by \eqref{ActionOfLocal1FormInCoordinates}, is further given by 
\vspace{-1mm}
\begin{align*}
\dd_M \big((\ev^\infty)^*f) |_{(\phi,p)} &= D_\mu f (j^\infty_p \phi) \cdot \dd x^\mu \big|_p = 
\Big(\frac{\partial f}{\partial x^\mu}\circ j^\infty \phi \Big)(p) \cdot \dd x^\mu \Big|_p  \\ 
&= \dd_M (f \circ j^\infty\phi) |_{p} 
\end{align*}

\vspace{-2mm}
\noindent
as a map $T_\phi \CF\times T_p M \xrightarrow{\pr_2} T_p M \rightarrow \FR$, where in the second equality we used the chain rule 
\eqref{HorizontalVectorFieldBasisAction}. 
This justifies the notation $\dd_M$ for the horizontal differential on $\Omega^{\bullet,\bullet}_\mathrm{loc}(\CF\times M)$ -- it is 
indeed simply the de Rham differential along the spacetime $M$, computed at each fixed field configuration $\phi \in \CF$ 
\vspace{-2mm}
\begin{align}\label{HorizontalLocalDifferentialInCoords}
\dd_M \big((\ev^\infty)^*f\big)= \frac{\partial f\big(x, \{\partial_J \phi^b \}_{|J|\leq k}\big)}{\partial x^\mu} \cdot \dd x^\mu 
\end{align}

\vspace{-1mm}
\noindent
In these terms, the compatibility of the differentials from Prop. \ref{HorizontalDifferentialBaseDeRhamCompatibility} is then simply the 
further pullback, \footnote{The pullback of the inclusion $\iota_\phi: M\hookrightarrow \CF\times M$ is again defined via its pushforward 
$\dd \iota_\phi: T_p M \hookrightarrow T_\phi \CF \times T_p M $. It follows the pullback is given by 
$\iota_\phi^*= \ev_\phi \circ \pr_H :  \Omega^{\bullet,\bullet}_\mathrm{loc}(\CF\times M)\rightarrow 
\Omega^{\bullet,0}_\mathrm{loc}(\CF\times M)\rightarrow \Omega^\bullet(M)$.} 
for any $\phi \in \CF$, along
$\iota_{\phi}: M\cong \{\phi\} \times M  \longhookrightarrow \CF\times M \, ,$
\vspace{-2mm}
$$
\iota_\phi^* (\ev^\infty)^* \dd_H f = (\ev^\infty \circ \iota_\phi)^* \dd_H f = (j^\infty\phi)^* \dd_H f = \dd_M (f\circ j^\infty\phi) \, .
$$

\vspace{-1mm}

This discussion naturally generalizes to arbitrary local $(p,q)$-forms on $\CF\times M$ and their differentials. 
Indeed, for any $(p,q)$-form $\om\in \Omega^{p,q}(J^\infty_M F)$ with local representation as in \eqref{pqformInfinityJetLocalCoordinates}, 
the pullback local form may be denoted as
\vspace{-1mm}
\begin{align}\label{LocalpqFormAbusively}
(\ev^\infty)^* \om =  \sum_{I_1,\cdots, I_p=0} \om_{\mu_1\cdots \mu_p a_1 \cdots a_q}^{I_1\dots I_q}
\big(x, \{\partial_J \phi^b \}_{|J|\leq k}\big)\cdot  \dd_M x^{\mu_1}\wedge \cdots
\wedge \dd_M x^{\mu_p}\wedge \delta(\partial_{I_1} \phi^{a_1}) \wedge\cdots \wedge \delta(\partial_{I_q} \phi^{a_q}) \, , 
\end{align}

\vspace{-1mm}
\noindent
with the action of the vertical $\delta$ and horizontal $\dd_M$ differentials given as a derivation via \eqref{VerticalLocalDifferentialInCoordinates} and \eqref{HorizontalLocalDifferentialInCoords}. In particular, this local coordinate description shows\footnote{Abstractly, this follows by the internal hom property.
For instance, $\mathrm{Hom}_{\SmoothSets}(\CF\times TM, \, \FR) \cong\mathrm{Hom}_{\SmoothSets}\big(\CF, \, [TM,\FR]\big)$, and so the subset of local, 
fiberwise linear maps can be seen to be 
$\Omega^{1,0}_{\mathrm{loc}}(\CF\times M\, , \FR)\equiv \mathrm{Hom}_{\SmoothSets}^\mathrm{fib.lin.loc.}(\CF\times TM, \, \FR)
\cong\mathrm{Hom}_{\SmoothSets}^\mathrm{loc}\big(\CF, \, \Omega^{1}_{\mathrm{Vert}}(M) \big)\hookrightarrow \mathrm{Hom}_{\SmoothSets}\big(\CF, \, [TM,\FR]\big) $, 
where the latter is given by identifying the subobject $\Omega^1_\mathrm{Vert}(M)\hookrightarrow [TM,\FR]$.} that the local $p$-form currents 
of Def. \ref{CurrentOnFieldSpace}, and equivalently of \eqref{CurrentsViaPullback}, further coincide with the subset of horizontal local forms on $\CF\times M$
\vspace{-1mm}
$$
\big\{ \CP:= P\circ j^\infty \, : \, \CF \rightarrow \Omega^{p}_{\mathrm{Vert}}(M) \big\}  \hspace{0.5cm}  
\iff \hspace{0.5cm} \big\{ \CP:=(\ev^\infty)^*P \in \Omega^{p,0}(\CF \times M) \big\} \, ,
$$

\vspace{-1mm}
\noindent along with the actions of the corresponding differentials along $M$.

\paragraph{Local Cartan calculus.}
Recall for any (smooth) vector field $\CZ: \CF\times M \rightarrow T\CF \times TM$, there is a natural interior product operation
\big(Rem \ref{CaveatsWithTheBicomplexOfProduct} {(b)}\big)
\vspace{-3mm}
\begin{align*}
\iota_{\CZ} \,:\, \Omega^1_\mathrm{loc}(\CF\times M) &\longrightarrow C^\infty(\CF\times M) \\ 
\om &\longmapsto \om\circ \CZ\, ,  
\end{align*}

\vspace{-2mm}
\noindent
since a differential form is, by definition, a smooth map $T\CF\times TM \rightarrow \FR$. As the notation suggests, there is no reason the resulting smooth
function $\iota_\CZ \om = \om\circ \CZ \in C^\infty(\CF\times M)=\Omega^{0}(\CF\times M)$ should be \textit{local}. Nevertheless, this is indeed the case for
\textit{local} vector fields $\CZ= Z\circ j^\infty \in \CX_{\mathrm{loc}}(\CF)\hookrightarrow \CX(\CF\times M)$ viewed as vector fields
on \footnote{Formally, $\CZ= Z\circ j^\infty : \CF \rightarrow T\CF $ defines a smooth vector field on $\CF\times M$ by $(\phi,p)\mapsto (\CZ(\phi), 0_p)\in T_\phi \CF\times T_p M$.
We will use the same notation for the resulting vector field on $\CF\times M$.} $\CF\times M$   constant along $M$, with vanishing horizontal components. 
Indeed, one has
\vspace{-1mm} 
$$
\iota_\CZ\big((\ev^\infty)^*\om\big) = (\ev^\infty)^* (\iota_{\pr Z} \om) \hspace{0.5cm} \in  \hspace{0.5cm} \Omega^{0}_\loc(\CF\times M)
$$

\vspace{-1mm} 
\noindent which can be seen via
\vspace{-2mm} 
\begin{align}\label{ContractionOfLocal1form}
\iota_{\CZ}\big((\ev^\infty)^*\om\big) |_{(\phi,p)} &= \big((\ev^\infty)^*\om \circ \CZ\big) (\phi,p) = (\ev^\infty)^*\om|_{(\phi,p)} (\CZ_\phi)
\nn \\
& =\om_V |_{j^\infty_p \phi} \big(j^\infty (Z\circ j^\infty \phi)(p)\big) =  \om_V |_{j^\infty_p \phi} \big( \pr Z \circ j^\infty \phi (p)\big) \\
&= \om_V (\pr Z) \circ \ev^\infty (\phi,p)
= (\ev^\infty)^* (\iota_{\pr Z} \om) (\phi,p)\, , \nn
\end{align}
where the second line follows by \eqref{ActionOfLocal1FormInCoordinates}, with the identification $j^\infty(Z\circ j^\infty \phi) = \pr Z\circ j^\infty \phi$ given, for instance, by the local descriptions \eqref{pushforwardsoftangentfieldvectors} and \eqref{ProlongatedVectorFieldonFieldSpaceAbusively}. Using the standard notation $\CZ= Z^c\big(x^\mu, \{\partial_J \phi^b\}_{|J|\leq k'}\big) \cdot \frac{\delta}{\delta \phi^c} $ of \eqref{VectorFieldOnFieldSpaceAbuseOfNotation} 
 and $ (\ev^\infty)^*\om=\sum_{|I|=0}^{\infty} (\om_V)_a^{I} \big(x^\mu,\{\partial_J \phi^b \}_{|J|\leq k}\big) \cdot  \delta(\partial_I \phi^a\big)$ of \eqref{LocalVertical1formAbusively}, along with the abuse of \eqref{Local1FormActionOnTangentAbusively}, this may be also calculated as
 \vspace{-1mm}
\begin{align}\label{ContractionOfLocal1formAbusively}
\iota_\CZ\big((\ev^\infty)^*\om\big) 
& = \sum_{|I|=0}^{\infty} (\om_V)_a^{I} \Big(x^\mu,\{\partial_J \phi^b \}_{|J|\leq k}\Big) 
\cdot  \delta(\partial_I \phi^a) \bigg(Z^c\Big(x^\mu, \{\partial_J \phi^b\}_{|J|\leq k'}\Big) 
\cdot\frac{\delta}{\delta \phi^c}\bigg) \nn  \\
&= \sum_{|I|=0}^{\infty} (\om_V)_a^{I} \Big(x^\mu,\{\partial_J \phi^b \}_{|J|\leq k}\Big) 
\cdot  \partial_I  \bigg(Z^c\Big(x^\mu, \{\partial_J \phi^b\}_{|J|\leq k'}\Big) 
\cdot\delta \phi^a \Big(\frac{\delta}{\delta \phi^c}\Big)\!\bigg) \\
&=\sum_{|I|=0}^{\infty} (\om_V)_a^{I} \Big(x^\mu,\{\partial_J \phi^b \}_{|J|\leq k}\Big) 
\cdot  \frac{\partial Z^a\big(x^\mu, \{\partial_J \phi^b\}_{|J|\leq k'} \big)}{\partial x^I} \, , \nn
\end{align}

\vspace{-1mm}
\noindent which is the form in which it appears in \cite{DF99}.

\medskip 
The interior product with respect to local vector fields extends to arbitrary local forms as a (graded) 
derivation \footnote{Equivalently, for any $\om\in \Omega^{2}_{\mathrm{loc}}(\CF\times M)$ the contracted local $1$-form is 
the map $\iota_\CZ \om := T\CF \cong T\CF \times_\CF \CF \xrightarrow{(\id,\CZ)} T\CF\times_\CF T\CF \xrightarrow{\om} \FR$, and similarly for higher forms.}   
\vspace{-2mm}
$$
\iota_{\CZ} \,:\, \Omega^{p,q}_\loc(\CF\times M) \longrightarrow \Omega^{p,q-1}_\mathrm{loc}(\CF\times M) \, ,
$$ 

\vspace{-1mm}
\noindent and so given locally as above by acting on the pullback local forms on their presentation of \eqref{LocalpqFormAbusively}. 
In particular, for a horizontal local form 
$\CP=(\ev^\infty)^*P \in \Omega_\loc^{p,0}(\CF\times M)$, or equivalently $\CP= P\circ j^\infty :\CF \rightarrow \Omega^{p}_\mathrm{Vert}(M)$,
the contraction of its vertical differential $\delta \CP \in \Omega^{p,1}_\loc (\CF\times M)$ along local vector fields immediately
recovers exactly the action of Def. \ref{ActionOfLocalVectorFieldsOnCurrents}
\vspace{-2mm}
$$\iota_\CZ\delta \CP = \CZ(\CP) \hspace{0.5cm} \in \hspace{0.5cm} \Omega^{d,0}_\loc(\CF\times M) \, ,
$$

\vspace{-2mm}
\noindent which may be seen at the abstract level by \eqref{ContractionOfLocal1form}, or via the coordinate formulas \eqref{ContractionOfLocal1formAbusively}. This suggests that the action of local vector fields on $p$-form currents is a kind of Lie derivative on $\CF\times M$. 

\begin{definition}[\bf  Lie derivative along local vector field]
\label{LieDerivativeOfLocalVectorField}
Let $\CZ\in \CX_{\loc}(\CF)$ be any local vector field. The Lie derivative along $\CZ$ is defined by $\mathbb{L}_\CZ:= [\iota_\CZ, \delta]$, 
that is
\vspace{-2mm} 
\begin{align*}
\mathbb{L}_\CZ  \,:\, \Omega^{p,q}_\loc(\CF\times M) & \; \longrightarrow \; \Omega^{p,q}_\loc(\CF\times M) \\
(\ev^\infty)^*\om & \; \longmapsto \; \iota_\CZ \big(\delta (\ev^\infty)^*\om \big) + \delta \big(\iota_\CZ (\ev^\infty)^*\om \big) \, .
\end{align*}
\end{definition}
Expanding the right-hand side using Def. \ref{BicomplexOfLocalForms} and \eqref{ContractionOfLocal1form}, this is equivalently
\vspace{-1mm}
\begin{align*}
\mathbb{L}_\CZ (\ev^\infty)^*\om = (\ev^\infty)^* ( \iota_{\pr Z} \dd_V \om + \dd_V\iota_{\pr Z} \om) = (\ev^\infty)^*(\mathbb{L}_{\pr Z} \om) \, . 
\end{align*}

\vspace{-1mm}
\noindent We could have used the total differential $\delta+\dd_M$ to define the Lie derivative, but this is redundant since the interior product
of vertical local vector fields on $\CF\times M$ commutes with $\dd_M$,
\vspace{-2mm}
$$
\iota_\CZ \dd_M = - \dd_M \iota_\CZ \, , 
$$

\vspace{-2mm}
\noindent
as can be seen abstractly, or by the local coordinate reprentations. More generally, we have the following result.

\begin{proposition}[\bf Local Cartan calculus]\label{LocalCartanCalculus}
The Lie algebra of local vector fields  $\big(\CX_\loc(\CF),\, [-,-])$ (Def. \ref{LieAlgebraOfLocalVectorFields}) acts on the local bicomplex 
$\big(\Omega^{\bullet,\bullet}(\CF\times M),\, \delta,\dd_M\, \big)$ via the Lie derivative (Def. \ref{LieDerivativeOfLocalVectorField}) 
and satisfies 
\vspace{-2mm}
$$
\mathbb{L}_\CZ \, \dd_M = \dd_M \mathbb{L}_\CZ \, , \hspace{2cm} \mathbb{L}_\CZ \, \delta = \delta\,  \mathbb{L}_\CZ \, . 
$$
\end{proposition}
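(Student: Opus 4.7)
The plan is to reduce every identity on the local bicomplex $\Omega^{\bullet,\bullet}_{\mathrm{loc}}(\CF\times M)$ to the corresponding identity on the variational bicomplex $\Omega^{\bullet,\bullet}(J^\infty_M F)$, where the analogous Cartan calculus for prolonged evolutionary vector fields is already established in Cor. \ref{EvolutionaryCartanCalculus}. The translation mechanism is provided by three ``pullback compatibility'' identities along $(\ev^\infty)^*$: one for each of the differentials $\delta, \dd_M$ (which is by definition of the local bicomplex, Def. \ref{BicomplexOfLocalForms}), one for the interior product $\iota_\CZ$ (established for $1$-forms in Eq. \eqref{ContractionOfLocal1form}), and the composite identity for the Lie derivative $\mathbb{L}_\CZ (\ev^\infty)^*\om = (\ev^\infty)^*(\mathbb{L}_{\pr Z}\om)$ (noted in the paragraph preceding the proposition).

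First I would promote Eq. \eqref{ContractionOfLocal1form} to all bidegrees by checking that $\iota_\CZ$ acts as a graded derivation of degree $(0,-1)$ on local forms, and that the pullback map $(\ev^\infty)^*$ intertwines it with $\iota_{\pr Z}$. The coordinate identity \eqref{ContractionOfLocal1formAbusively} extends multiplicatively to the local representation \eqref{LocalpqFormAbusively} of a generic $(p,q)$-form, which suffices since $\Omega^{\bullet,\bullet}_\loc(\CF\times M)$ is generated (over $C^\infty_\loc(\CF\times M)$) by $\{\dd_M x^\mu\}$ and $\{\delta(\partial_I \phi^a)\}$. With this in hand, the commutation relations $\mathbb{L}_\CZ \dd_M = \dd_M \mathbb{L}_\CZ$ and $\mathbb{L}_\CZ \delta = \delta\mathbb{L}_\CZ$ follow by pullback: for any $\om \in \Omega^{\bullet,\bullet}(J^\infty_M F)$,
\begin{align*}
\mathbb{L}_\CZ\, \dd_M (\ev^\infty)^*\om
&= (\ev^\infty)^* \mathbb{L}_{\pr Z} \dd_H \om
= (\ev^\infty)^* \dd_H \mathbb{L}_{\pr Z} \om
= \dd_M\, \mathbb{L}_\CZ (\ev^\infty)^*\om,
\end{align*}
where the middle equality is Cor. \ref{EvolutionaryCartanCalculus}\,{(i)}, and an identical argument (swapping $\dd_H \leftrightarrow \dd_V$) handles the vertical case.

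To confirm that the assignment $\CZ \mapsto \mathbb{L}_\CZ$ is a \emph{Lie algebra action}, I would verify $\mathbb{L}_{[\CZ_1,\CZ_2]} = [\mathbb{L}_{\CZ_1}, \mathbb{L}_{\CZ_2}]$. By Def. \ref{LieAlgebraOfLocalVectorFields}, $[\CZ_1,\CZ_2] = Z_3\circ j^\infty$ with $\pr Z_3 = [\pr Z_1,\pr Z_2]$ (Cor. \ref{EvolutionaryCartanCalculus}\,{(iii)}). Applying the pullback compatibility and the Lie-algebra-action property of $\mathbb{L}_{\pr(-)}$ on the variational bicomplex (a standard consequence of $\iota_{[X,Y]} = [\mathbb{L}_X, \iota_Y]$ together with $\mathbb{L}_X \dd = \dd\mathbb{L}_X$),
\begin{align*}
\mathbb{L}_{[\CZ_1,\CZ_2]} (\ev^\infty)^*\om
&= (\ev^\infty)^* \mathbb{L}_{[\pr Z_1,\pr Z_2]} \om
= (\ev^\infty)^* [\mathbb{L}_{\pr Z_1}, \mathbb{L}_{\pr Z_2}] \om
= [\mathbb{L}_{\CZ_1}, \mathbb{L}_{\CZ_2}] (\ev^\infty)^*\om,
\end{align*}
where the last equality again uses the pullback compatibility iterated twice.

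The main obstacle I anticipate is actually not the commutation identities themselves, which fall out mechanically once pullback compatibility is established, but rather the careful verification that $\iota_\CZ$ (defined abstractly via the smooth-set interior product against the vertical vector field $\CZ$ on $\CF\times M$) truly coincides, on arbitrary local $(p,q)$-forms, with the operator obtained by pulling back $\iota_{\pr Z}$ from $J^\infty_M F$. This rests on tracking how the pushforward $\dd\ev^\infty$ of Def. \ref{PushforwardOfProlongatedEvaluation} interacts with the antisymmetrized multi-argument evaluation defining forms of higher bidegree; the cleanest route is to check it on the generating forms $\delta(\partial_I \phi^a)$ and $\dd_M x^\mu$ using the explicit local coordinate expression \eqref{PushforwardOfProlongatedEvaluationLocalCoords}, whence the derivation property propagates the identity to all of $\Omega^{\bullet,\bullet}_\loc(\CF\times M)$.
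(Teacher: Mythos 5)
Your proposal is correct and follows essentially the same route as the paper, which proves the proposition by pulling back the evolutionary Cartan calculus relations of Cor. \ref{EvolutionaryCartanCalculus} along $(\ev^\infty)^*$, with the local coordinate descriptions \eqref{LocalpqFormAbusively} and \eqref{ContractionOfLocal1formAbusively} serving to justify the intertwining of $\iota_\CZ$ with $\iota_{\pr Z}$ in all bidegrees. The extra care you take in extending the contraction compatibility to higher $(p,q)$ and in checking the Lie algebra action property is exactly the content the paper leaves implicit.
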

\begin{proof}
This follows essentially by pulling back the relations of Cor. \ref{EvolutionaryVectorFields}. It can also be deduced directly using the 
local descriptions of \eqref{LocalVertical1formAbusively}, \eqref{HorizontalLocalDifferentialInCoords}, \eqref{LocalpqFormAbusively}, and \eqref{ContractionOfLocal1formAbusively}, 
all of which have been rigorously justified.
\end{proof}

This local Cartan Calculus is essentially the core machinery in the descriptions of \cite{Zuckerman} and \cite{DF99}, here defined in full 
detail with $\SmoothSets$. It allows us to further explain the notation in \cref{EvolutionaryVectorFieldsAndNoetherTheoremsSection}, 
and also to lift the proofs of statements therein to proofs computed directly on $\CF\times M$, as usually (implicitly) practiced in 
the physics literature. For instance, the `infinitesimal transformation of a field' notation of \eqref{VectorFieldOnFieldSpaceAbuseOfNotation} 
is rigorously identified as the contraction 
\vspace{-2mm}
$$
\iota_\CZ \delta \phi^a \equiv \delta_{\CZ} \phi^a \, ,
$$

\vspace{-2mm}
\noindent
where now $\delta \phi^a$ is really the vertical differential of the (locally defined) smooth function $\phi^a=u^a\circ ev^\infty$ 
on $\CF\times M$. Analogously, the corresponding abuse of notation from Def. \ref{ActionOfLocalVectorFieldsOnCurrents} for the action 
of local vector fields on currents is identified as the contraction
\vspace{-2mm}
$$
\iota_\CZ \delta \CP \equiv \delta_\CZ \CP\, .  
$$

\vspace{-1mm}
\noindent Similarly, the calculation that infinitesimal local symmetries of a local Lagrangian $\CL\in \Omega^{d,0}_\loc(\CF\times M)$ 
close as a subalgebra \eqref{SubalgebraofLocalInfinitesimalSymmetriesCalculation} 
is equivalently expressed as
\vspace{-2mm}
\begin{align*}
\mathbb{L}_{[\CZ_1, \CZ_2]} \CL &= \mathbb{L}_{\CZ_1}\, (\mathbb{L}_{\CZ_2} \CL) - \mathbb{L}_{\CZ_2} ( \mathbb{L}_{\CZ_1} \CL) 
=\mathbb{L}_{\CZ_1} (\dd_M \CK_{\CZ_2}) - \mathbb{L}_{ \CZ_2} ( \dd_M \CK_{\CZ_1})\\ 
&= \dd_M( \mathbb{L}_{\CZ_1} \CK_{\CZ_2} -\mathbb{L}_{\CZ_2} \CK_{\CZ_1}) \, , 
\end{align*}

\vspace{-1mm}

\noindent and, similarly, the proof of Noether's First Theorem (Prop. \ref{Noether1st}) as
\vspace{-2mm} 
\begin{align*}
\dd_M (\CP_\CZ)&= \dd_M \CK_\CZ + \dd_M \iota_{\CZ} \theta_\CL = \mathbb{L}_{\CZ} \CL -\iota_{\CZ} \dd_M \theta_\CL \\
&=\iota_{\CZ} \delta \CL - \iota_{\CZ} \dd_M \theta_\CL =\iota_{\CZ} \CE \CL + \iota_{\CZ} \dd_M \theta_\CL - \iota_{\CZ} \dd_M \theta_\CL\\
&= \iota_{\CZ} \CE\CL\, ,
\end{align*}

\vspace{-2mm}
\noindent where $\CE \CL$ here stands for the induced vertical 1-form on $\CF\times M$ given by
$(\CZ_\phi, X_p) \mapsto \langle \CE \CL(\phi), \CZ_\phi\rangle (p) $. Following the above, it immediately follows that the 
conserved current $\CP_{[\CZ_1,\CZ_2]}$ corresponding to the Lie bracket $[\CZ_1,\CZ_2]\in \CX_\loc^\CL(\CF)$ of two local symmetries is given by
\vspace{-4mm}
\begin{align}\label{CurrentOfLiebracketOfSymmetries}
 \CP_{[\CZ_1,\CZ_2]}= \mathbb{L}_{\CZ_1} \CK_{\CZ_2} - \mathbb{L}_{\CZ_2} \CK_{\CZ_1} + \iota_{[\CZ_1,\CZ_2]} \theta_\CL\, .
\end{align}

\vspace{-2mm}
\noindent Once again, we note that this actually defines a family of conserved currents, with the freedom to add $\dd_M$-closed terms 
to $\CK_{\CZ_1}, \CK_{\CZ_2}$ and $\dd_M$-exact terms to $\theta_\CL$. We will come back with a better characterization of the 
right-hand side in terms of the conserved currents $\CP_{\CZ_1}$ and $\CP_{\CZ_2}$ themselves via Def. \ref{BracketOfNoetherPairs}.

\medskip 
Expanding the above formulas in local coordinates along the lines of \eqref{LocalVertical1formAbusively}, \eqref{HorizontalLocalDifferentialInCoords}, 
\eqref{LocalpqFormAbusively}, and \eqref{ContractionOfLocal1formAbusively}, recovers verbatim those appearing in the physics literature, 
thus rigorously fully justifying their validity as statements about \textit{smooth} and \textit{local} geometry on the full \textit{smooth} 
space of fields $\CF$ and its smooth subspace of on-shell fields $\CF_{\CE \CL}$.

\begin{remark}[\bf Extended Local Cartan calculus]
$\,$

\noindent {\bf (i)} At a purely mathematical level, an analogous discussion applies for a larger class of `local' vector fields on $\CF\times M$: 

\begin{itemize} 
\item[{\bf(a)}] Decomposable vector fields of the form $\CZ+X \in \CX_\loc(\CF)\times \CX(M)\hookrightarrow \CX(\CF\times M)$ are considered in \cite{DF99}.
The corresponding Cartan calculus corresponds to two commuting Cartan calculi given by the local vertical of Lem. \ref{LocalCartanCalculus}, 
along with that of the base manifold $\big([-,-]_M, \dd_M , \iota \big)$. In particular, the Lie derivative in this case is the decomposed one 
$\mathbb{L}_{\CZ + X}= [\iota_\CZ ,\delta] + [\iota_X , \dd_M] \equiv \mathbb{L}_\CZ + \mathbb{L}_X$.

\item[{\bf (b)}] It can be further (maximally) extended to include vector fields induced by general smooth bundle maps $J^\infty_M F\rightarrow T F$, 
whose prolongation splits (\cite[Prop. 1.20]{Anderson89}) as the prolongation of an evolutionary vector field $Z:J^\infty M\rightarrow VF$, 
and the horizontal lift of some bundle map  $X:J^\infty_M F\rightarrow TM$ over $M$. Thus the induced local vector fields are of the form 
$\CZ + \CX $ on $\CF\times M$, where $\CZ$ is a local vector field on $\CF$, and $\CX(\phi)$ is a vector field on $M$ for each $\phi \in \CF$, 
with a local but \textit{not constant} dependence along $\CF$. The corresponding larger, more complicated, Cartan Calculus on $\CF\times M$ 
is carefully spelled out in \cite{Del}, where the vector fields in question are referred to as ``\textit{insular}''. In this case, the Lie
derivative is given by the general form with respect to the total differential 
$\mathbb{L}_{\CZ+ \CX}= [\iota_{\CZ+ \CX}, \delta+ \dd_M] = \mathbb{L}_\CZ + [\CX, \delta+ \dd_M]$, covering the previous cases of vertical local 
and decomposable vector fields as special instances.
\end{itemize} 
\vspace{.5mm} 
\noindent {\bf (ii)}  From a physical and field-theoretic perspective, these extensions seem redundant. 
Indeed, the vector fields needed are to be physically interpreted,
in particular, as infinitesimal symmetries of the \textit{actual field space} $\CF$. As we have shown throughout \cref{EvolutionaryVectorFieldsAndNoetherTheoremsSection}
(and in particular Lem. \ref{InfinitesimalSpacetimeCovariantSymmetries}), this notion is fully captured solely by local vector fields on $\CF$. 
This perspective is implicitly shared by \cite{Zuckerman}. In \cite{DF99}, the addition of a vector field along $M$ (constant along $\CF$) is 
used to bring the action of infinitesimal local symmetries of a Lagrangian into a nicer form, therein termed a ``\textit{manifest symmetry}''.
Nevertheless, this is \textit{not always} possible (see therein) and, further, the action of any decomposable symmetry $\CZ + X$ on a
Lagrangian can always be re-expressed\footnote{This follows easily: In the extended local Cartan calculus, 
$\mathbb{L}_{\CZ+X} \CL = \mathbb{L}_\CZ \CL + \mathbb{L}_X \CL = \mathbb{L}_\CZ \CL + \dd_M (\iota_X \CL)$ since $\CL$ is in particular a top-form along $M$. 
Thus if $\CZ+X$ is an infinitesimal symmetry, $\mathbb{L}_{\CZ+X} \CL  = \dd_M \CK_{\CZ +X}$ for some local $(d-1,0)$-form $\CK_{\CZ +X}$,
and so $\CZ$ is a symmetry with $\mathbb{L}_\CZ \CL = \dd_M ( \CK_{\CZ+X} - \iota_X \CL)$.} as a symmetry of $\CZ$ solely. Hence, the 
extension of {\bf(a)} above seems to be merely a calculational tool.
The more general vector fields of {\bf (b)}, have (currently) no physical interpretation or application, as far as we are aware. 
\end{remark}

\paragraph{\bf Relation to de Rham forms on $\CF\times M$.} As hinted in Rem. \ref{CaveatsWithTheBicomplexOfProduct} {\bf{(c)}} and the corresponding 
footnote, the de Rham forms on $\CF\times M$ defined via the classifying space $\mathbold{\Omega}^\bullet_{\mathrm{d R}}$ carry a natural bicomplex structure. 
Employing Lem. \ref{JetBundleDiffFormsAsDeRhamForms}, it is possible to identify the bicomplex of local forms on $\CF\times M$ arising from globally 
finite order forms on the jet bundle with a subcomplex of the de Rham forms.

\begin{lemma}[\bf Local forms on $\CF\times M$ as de Rham forms]\label{LocalDiffFormsAsDeRhamForms}
The subalgebra $\Omega^\bullet_{\loc,\mathrm{glb}}(\CF\times M)\hookrightarrow \Omega^\bullet_\loc (\CF\times M)$ arising by pulling back the globally
finite order differential forms $\Omega^\bullet_\mathrm{glb} (J^\infty_M F)$ on the infinite jet bundle is canonically identified with a subalgebra
of de Rham forms on the $\CF\times M$. That is, there is a canonical injective DCGA map 
\begin{align*}
 \Omega^\bullet_{\loc,\mathrm{glb}} (\CF\times M)\longhookrightarrow \Omega^\bullet_\mathrm{d R} (\CF\times M)\, ,
\end{align*}
\noindent
 that furthermore respects the corresponding bi-complex structures. 
\end{lemma}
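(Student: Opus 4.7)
The plan is to construct the map $\Phi \colon \Omega^\bullet_{\loc,\mathrm{glb}}(\CF\times M) \hookrightarrow \Omega^\bullet_{\mathrm{dR}}(\CF\times M)$ by combining three canonical ingredients: the inverse of the bijection $(\ev^\infty)^* \colon \Omega^\bullet_{\mathrm{glb}}(J^\infty_M F) \to \Omega^\bullet_{\loc,\mathrm{glb}}(\CF\times M)$, the injection of Lem.~\ref{JetBundleDiffFormsAsDeRhamForms}, and postcomposition with the (Yoneda-embedded) prolongated evaluation. Concretely, for $\alpha = (\ev^\infty)^*\omega$ with $\omega = \pi_k^*\omega_k$ of minimal global order $k$ and $\omega_k \in \Omega^\bullet(J^k_M F)$, I would set
$$
\Phi(\alpha) \;:=\; \widetilde{\omega}_k \circ y(\pi_k) \circ \ev^\infty
\;\colon\;
\CF \times M \longrightarrow \mathbold{\Omega}^\bullet_{\mathrm{dR}},
$$
where $\widetilde{\omega}_k \colon y(J^k_M F) \to \mathbold{\Omega}^\bullet_{\mathrm{dR}}$ is the de Rham form produced from $\omega_k$ via the Yoneda identification \eqref{ClassifyingFormsClassifyManifoldForms}, as employed in Lem.~\ref{JetBundleDiffFormsAsDeRhamForms}.

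Well-definedness requires that $(\ev^\infty)^*$ restricted to globally finite order forms be bijective; surjectivity onto $\Omega^\bullet_{\loc,\mathrm{glb}}$ is tautological, and injectivity follows from the pointwise surjectivity of the pushforward $\dd\ev^\infty$ of Def.~\ref{PushforwardOfProlongatedEvaluation}. The latter holds because at any $j^\infty_p\phi$, the vertical image covers $V_{j^\infty_p\phi}J^\infty_M F$ by varying $\CZ_\phi \in T_\phi\CF$ (any prescribed infinite jet at $p$ is attainable by a smooth section via Borel's lemma, applied locally in the sense of Rem.~\ref{FieldSpaceAsASheafOfSheaves}), while the horizontal image realises $H_{j^\infty_p\phi}J^\infty_M F$ through $(\dd j^\infty\phi)_p$ of \eqref{PushforwardAlongProlongationOfField}. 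Together with surjectivity of $\ev^\infty$ on $\ast$-plots, this forces any $\omega$ whose pullback vanishes to vanish identically, pinning down $k$ and $\omega_k$ uniquely from $\alpha$.

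Injectivity of $\Phi$ and compatibility with the graded commutative algebra structure then follow from Lem.~\ref{JetBundleDiffFormsAsDeRhamForms}: if $\Phi(\alpha) = 0$, then $\widetilde{\omega}_k$ vanishes along $y(\pi_k)\circ \ev^\infty$, which is again (locally) surjective on $\ast$-plots since any $k$-jet factors as $\pi_k(\ev^\infty(\phi,p))$; hence $\widetilde\omega_k$ vanishes on $y(J^k_M F)$, forcing $\omega_k = 0$ and so $\alpha = 0$. Preservation of the wedge product follows by functoriality of precomposition together with the fact that the wedge on $\mathbold{\Omega}^\bullet_{\mathrm{dR}}$ (Lem.~\ref{InducedStructureOnTheModuliSpace}) corresponds, via Lem.~\ref{JetBundleDiffFormsAsDeRhamForms}, to the traditional wedge on finite-dimensional jet manifolds.

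The main obstacle will be verifying bi-complex compatibility. On the de Rham side, the bigrading on $\Omega^\bullet_\mathrm{dR}(\CF \times M)$ arises via the internal-hom isomorphism $\mathrm{Hom}_\SmoothSets(\CF \times M, \mathbold{\Omega}^m_\mathrm{dR}) \cong \mathrm{Hom}_\SmoothSets(\CF, [M, \mathbold{\Omega}^m_\mathrm{dR}])$, using that $[M, \mathbold{\Omega}^m_\mathrm{dR}](\FR^k) \cong \Omega^m(\FR^k \times M)$ splits by $M$-degree and correspondingly splits $\dd_\mathrm{dR} = \dd^\CF_\mathrm{dR} + \dd^M_\mathrm{dR}$, as noted in Rem.~\ref{CaveatsWithTheBicomplexOfProduct}\,(c). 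To match this with $(\delta, \dd_M)$ on local forms under $\Phi$, I would use that $\dd\ev^\infty$ respects the canonical horizontal/vertical splittings on both sides (as highlighted after Def.~\ref{PushforwardOfProlongatedEvaluation}), so that $\omega \in \Omega^{p,q}(J^\infty_M F)$ yields $\Phi(\alpha)$ in the corresponding $(p,q)$-piece. For the differentials, the identities $(\ev^\infty)^*\dd_H\omega = \dd_M(\ev^\infty)^*\omega$ and $(\ev^\infty)^*\dd_V\omega = \delta(\ev^\infty)^*\omega$ are built into Def.~\ref{BicomplexOfLocalForms}; the subtle part is translating these into the de Rham picture through the internal-hom adjunction, confirming that the $\CF$-partial $\dd^\CF_\mathrm{dR}$ acts by differentiating in the $\CF$-direction of $\FR^k$-families precisely in the manner that mirrors the vertical $\dd_V$ on $J^\infty_M F$ pulled back along the image of $\ev^\infty$.
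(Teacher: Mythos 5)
Your construction of the map is the same as the paper's, but the argument you give for its well-definedness and injectivity has a genuine gap. You reduce everything to the claim that $\ev^\infty$ is surjective on $\ast$-plots and that its pushforward $\dd\,\ev^\infty$ is pointwise surjective, deducing the latter from Borel's lemma. Borel's lemma only produces a smooth section with prescribed Taylor coefficients on some small neighbourhood of the point $p$; it does not produce a section of $F$ over all of $M$ (nor over the fixed open set $U$ in the petit-sheaf reading of Rem.~\ref{FieldSpaceAsASheafOfSheaves}), and local sections need not extend. Hence a given infinite jet $j^\infty_p\phi$ need not lie in the image of $\ev^\infty$ at all --- the paper explicitly flags that $\ev^\infty$ and its pushforward can fail to be surjective, being only epimorphisms of petit sheaves. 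Consequently your conclusion that $\Phi(\alpha)=0$ forces $\widetilde{\om}_k=0$, and more importantly that the minimal-order representative $\om_k$ is uniquely determined by $\alpha$ (which you need for $\Phi$ to be well defined at all), does not go through: two distinct globally finite order forms on $J^\infty_M F$ can pull back to the same local form on $\CF\times M$ whenever they agree along the image of $\ev^\infty$.

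The paper closes this gap differently: rather than proving surjectivity, it observes that \emph{both} sides of the correspondence depend on the underlying jet-bundle form only through its values along the image of (plots of) $\ev^\infty$ --- if $\om_1^k$ and $\om_2^k$ induce the same local form then they induce the same de Rham form, and vice versa --- so the assignment descends to a well-defined injection with no surjectivity hypothesis. Your treatment of the algebra structure and the bi-complex compatibility is otherwise in line with the paper's (which is briefer, simply noting that both bigradings are induced by the product structure of $\CF\times M$, respectively of its tangent bundle), but the well-definedness and injectivity step needs to be repaired along these lines.
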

\begin{proof}
 In cases where the pushforward of the prolongated evaluation map $\dd \ev^\infty : T(\CF\times M)\rightarrow T(J^\infty_M F)$ is surjective, and hence 
 its precomposition is injective on the sets of differential forms, this follows immediately by Lem. \ref{JetBundleDiffFormsAsDeRhamForms}. 
 More explicitly, for a local $m$-form 
 \vspace{-0mm}
 $$
 (\mathrm{ev}^\infty)^* \om = \om  \circ \dd \ev^\infty \;\; : \;\; T^{\times m}(\CF\times M) \longrightarrow T^{\times m}(J^\infty_M F) \longrightarrow \FR \, ,
 $$

\vspace{-1mm}
\noindent where $\om = \pi_k^*\om^k : T^{\times m}(J^\infty_M F) \rightarrow T^{\times m}(J^k_M F)\rightarrow \FR $, the corresponding de Rham $m$-form is given by 
\vspace{-2mm}
 $$
 \tilde{\om}\circ \ev^\infty = \tilde{\om}^k\circ \pi_k \circ \ev^\infty \;\; : \;\; \CF\times M \longrightarrow J^\infty_M F\longrightarrow J^k_M F \longrightarrow \mathbold{\Omega}_\mathrm{d R} \, \, .
 $$

\vspace{-2mm}
\noindent
 In the cases where the pushfoward is not surjective, the statement follows by noticing that any local form on $\CF\times M$ depends on the underlying 
 jet bundle differential form only via its values along the image of $\ev^\infty$ inside $J^\infty_M F$. That is, if $\om_1^k, \om_2^k \in \Omega^m(J^k_M F)$
 define the same local $m$-form $ (\mathrm{ev}^\infty)^* \pi_k^* \om_1^k = (\mathrm{ev}^\infty)^* \pi_k^* \om_2^k \in \Omega^m(\CF \times M)$ then it follows 
 that the corresponding de Rham forms $\tilde{\om}_1^k, \tilde{\om}_2^k : J^k_M F\rightarrow \mathbold{\Omega}^m_{\mathrm{d R}}$ define the same de Rham $m$-forms 
 \vspace{-2mm}
 $$
 \tilde{\om}_1^k\circ \pi_k \circ \ev^\infty= \tilde{\om}_2^k\circ \pi_k \circ \ev^\infty \;\; :  \;\; \CF\times M \longrightarrow \Omega^m_{\mathrm {d R}} \, ,
 $$

 \vspace{-2mm}
 \noindent
 since the latter depend only on the image of (plots of) $\ev^\infty$ in $J^\infty_M F$, and vice versa.
 
 To see that above map respects the corresponding bigradings, recall that the bigrading of de Rham forms is induced precisely by the product structure of
 $\CF \times M$ (see Rem. \ref{CaveatsWithTheBicomplexOfProduct} {\bf{(c)}} and the corresponding footnote). On the other hand, the bi-grading of local 
 differential forms on $\CF\times M$, is induced by that of $J^\infty_M F$ -- which in turn corresponds to the product structure of $T(\CF \times M)$(see Eqs. \eqref{SplittingOfProductTangentBundle} and \eqref{PullbackAlongProlongatedEvaluation}). It follows that the inclusion 
 $\Omega^\bullet_{\loc,\mathrm{glb}} (\CF\times M)\longhookrightarrow \Omega^\bullet_\mathrm{d R} (\CF\times M)$ necessarily respects the two bigradings.
 Arguing along the same lines, it follows that it also respects the corresponding differentials and wedge products.
\end{proof}

If, as we expect, it is actually the case that $\Omega^\bullet(J^\infty_M F)\cong \Omega^\bullet_{\mathrm{d R}}(J^\infty_M F)$ as maps of smooth sets, 
then the above embedding canonically extends to local forms on $\CF\times M$ induced by any \textit{locally} finite order forms on $J^\infty_M F$ 
(although this is not necessary for virtually all existing examples of fields theories).
The identification of the above Lemma is implicit in  \cite{FSS13}\cite{FSS14}\cite{FRS14}, and is now officially formally justified. This abstract 
reinterpretation of local forms on $\CF\times M$ as (particular) maps into the classifying space $\mathbold{\Omega}^{\bullet}_{\mathrm{d R}}$ allows 
for many useful categorical arguments and constructions (see therein). Nevertheless, as we have made clear, the traditional description as maps out 
of the tangent bundle is the most appropriate picture to make direct contact with the computational formulas and symbols appearing in the physics 
literature. This fact will be further amplified in the following sections.

\newpage

\subsection{Presymplectic current and induced brackets}
\label{SecPresymplecticCurrentPropertiesAndBrackets}

In this section, among several related results, we will rigorously formalize the observation of Zuckerman \cite{Zuckerman} 
that any local Lagrangian field theory $(\CF,\CL)$ induces an ``on-shell conserved (pre)symplectic $(d-1)$-current'' 
on its on-shell space of fields. To that end, recall the (cohomological) 
`integration by parts formula' on the jet bundle \eqref{LagrangianVerticalDifferentialDecomposition}, which may 
now be pulled back via $\ev^\infty : \CF\times M \rightarrow J^\infty_M F$ to give
\vspace{-3mm} 
\begin{align}\label{CohomologicalIntegrationByPartsOnFieldSpace}
\delta \CL = \CE \CL + \dd_M \theta_\CL \hspace{0.5cm} \in \hspace{0.5cm} \Omega^{d,1}_{\loc}(\CF\times M)  \, ,
\end{align}

\vspace{-2mm} 
\noindent where the term $\CE \CL$ here stands for the corresponding vertical 1-form on $\CF\times M$ given by
\vspace{-1mm}
\begin{align}\label{EulerLagrangeOperatorAs(d,1)form}
(\CZ_\phi, X_p) \longmapsto \langle \CE \CL(\phi), \CZ_\phi\rangle (p) + \iota_{X_p} \CE \CL(\phi) \, .
\end{align}

\vspace{-1mm}
\noindent
Note that the latter term of the decomposition\footnote{ The above decomposition gives an equivalent characterization of
on-shell fields, as those over which the vertical differential of the Langrangian 
\vspace{-2mm}
$$\delta \CL|_{\phi}:   \big(T_{\phi} \CF \times TM\big)^{\times (d+1)} \longhookrightarrow  \big(T \CF \times TM\big)^{\times (d+1)} \longrightarrow \FR  
$$

\vspace{-1mm}
\noindent vanishes, up to a $\dd_M$-exact \textit{local} form $\dd_M \theta_\CL$ on $\CF\times M$. However, this is not as natural as the criticality condition and pullback smooth set characterization.} 
\eqref{LagrangianVerticalDifferentialDecomposition}, i.e. the `boundary term',  is defined up to a \textit{choice} of $\theta_\CL:= (\ev^\infty)^*\theta_L$, 
with the freedom to add a $\dd_M$-closed local $(d-1,1)$-form. In fact, due to Prop. \ref{AcyclicityTheorem}, this freedom is 
necessarily up to $\dd_M$-exact local $(d-1,1)$-forms. 

\begin{definition}[\bf  Presymplectic current]\label{PresymplecticCurrent}
The \textit{presymplectic current} of a Lagrangian field theory $(\CF, \CL)$ is defined by
\vspace{-1mm} 
$$
\om_\CL := \delta \theta_\CL \quad \in \quad  \Omega^{d,2}_\loc(\CF\times M) ,
$$

\vspace{-1mm} 
\noindent
up to a \textit{choice} of a $\dd_M$-exact $(d,1)$-form. 
\end{definition}
This object is also termed the ``universal current'' in \cite{Zuckerman}. The presymplectic adjective is justified since it 
is in particular a $2$-form in the direction of $\CF$, and further $\delta$-closed by construction 
\vspace{-1mm}
$$
\delta \om_\CL = \delta^2 \theta_\CL = 0 \;.
$$

\vspace{-1mm}
\noindent 
It is often the case that this is a \textit{degenerate} $2$-form in the $\CF$ direction, in an appropriate sense, and hence generally only 
\textit{presymplectic} and not actually symplectic. In particular, this degeneracy occurs whenever infinitesimal gauge symmetries exist (Prop. \ref{Gaugesymmetryvspresymplecticcurrent}). 
Due to this definition, the chosen local $(d-1,1)$-form 
$\theta_\CL$ is also known as the \textit{presymplectic potential current}.

\begin{example}[\bf Presymplectic current for the O($n$)-model and electromagnetism]\label{PresymplecticCurrentExamples}
$\,$

\noindent {\bf {(i)}} By pulling back the variational decomposition on $J^\infty_M F$ from Ex. \ref{O(n)ModelOnshellFields}, 
or computing directly on $\CF\times M$, it follows that for the O($n$)-model Lagrangian
\vspace{-2mm}
\begin{align*}
\delta \CL = \CE\CL + \dd_M \theta_\CL = - \langle \dd_M \star \dd_M \phi + \star \phi, \delta \phi \rangle  
- \dd_M ( \langle \delta \phi \,, \star \dd_M \phi  \rangle \big) \, ,  
\end{align*}

\vspace{-2mm}
\noindent
and so the presymplectic potential and current are given by
\vspace{-2mm}
\begin{align*}
\theta_\CL= - \langle \delta \phi \,, \star \dd_M \phi  \rangle = \delta \phi^a \wedge \star \dd_M \phi_a \qquad \mathrm{and} \qquad \om_\CL = \delta \theta_\CL = + \delta \phi^a \wedge \star \delta \dd_M \phi_a \, . 
\end{align*}

\vspace{-2mm}
\noindent
{\bf(ii)} Similarly, the variational decomposition of the Lagrangian of pure electromagnetism from Ex. \ref{ElectromagnetismGaugeSymmetry} reads 
\vspace{-2mm}
\begin{align*}
\delta \CL = \CE\CL + \dd_M \theta_\CL = - \delta A \wedge \dd_M  \star \dd_M A - \dd_M (\delta A \wedge \star \dd_M A)  \, , 
\end{align*}

\vspace{-2mm}
\noindent
and so the presymplectic potential and current are given by
\vspace{-2mm}
\begin{align*}
\theta_\CL= - \delta A\wedge  \star \dd_M A \qquad \mathrm{and} \qquad \om_\CL = \delta \theta_\CL = \delta A \wedge \star \delta \dd_M A \, . 
\end{align*}

\vspace{-2mm}
\noindent

\end{example}

The idea is now to `pullback/restrict' the presymplectic current on $\CF\times M$ to a $(d-1,2)$-form on the smooth subspace 
$\CF_{\CE \CL}\times M$ 
and deduce its resulting properties by restricting Eq. \eqref{CohomologicalIntegrationByPartsOnFieldSpace}. 
For this to make sense, an appropriate definition of the tangent bundle (and hence forms)
on $\CF_{\CE \CL}$ is required. 

\medskip 
The correct intuition behind the construction is, once again, tangent vectors as first-order infinitesimal curves within 
$\CF_{\CE \CL}$. As pointed out throughout the text, this too will become a special instance of the synthetic tangent bundle
in \cite{GSS-2}. For now, we motivate this by considering actual $1$-parameter families of on-shell fields and calculating 
the induced condition on the corresponding tangent vectors. 

\medskip 
Consider first a 1-parameter family of (off-shell) fields $\phi_t\in \CF(\FR^1)$ 
starting at $\phi \in \CF(*)$ and the corresponding image $\FR^1$-plot of the variational cotangent bundle
$T^*_\mathrm{var}(\CF)= \mathbold{\Gamma}_M(V^*F \otimes \wedge^d T^*M)$ under the Euler--Lagrange operator
\vspace{-1mm} 
$$
\CE \CL (\phi_t) \quad \in \quad T^*_\var \CF(\FR^1)  \, .
$$

\vspace{-1mm} 
\noindent 
Denoting the corresponding tangent vector by $\CZ_\phi:= \partial_t \phi_t |_{t=0} \in T_\phi \CF$ and computing the 
derivative \footnote{As usual, this means pointwise in $M$ (see Ex. \ref{TangentVectorsOnFieldSpace}).} 
at $t=0$ in local coordinates via the chain rule gives
\vspace{-2mm} 
\begin{align*}
 \partial_t \CE \CL_a (\phi_t) \big\vert_{t=0} 
 &= \partial_t\big( EL_a  \circ j^\infty \phi_t \big) 
 \\
& = \sum_{|I|=0} \bigg(\frac{\partial EL_a}{\partial u^b_I}\circ j^\infty\phi \bigg) 
 \cdot \partial_t\big(u^b_I\circ j^\infty \phi_t\big) \big\vert_{t=0} \\ 
 & = \sum_{|I|=0} 
\bigg(\frac{\partial EL_a}{\partial u^b_I}\circ j^\infty\phi \bigg) \cdot 
 \frac{
\partial}{\partial t} \frac{\partial \phi^b_t}{\partial x^I}\Big\vert_{t=0} \\
&=  \sum_{|I|=0} 
 \frac{\delta \CE \CL_a (\phi)}{\delta (\partial_I \phi^b)} \cdot 
 \frac{\partial\CZ_\phi^b}{\partial x^I}\; .
\end{align*}
The right-hand side manifestly depends only on the underlying tangent vector at $\phi\in \CF(*)$, and furthermore defines a tangent vector 
\footnote{The variational cotangent bundle is a smooth set of sections, hence Def. \ref{KinematicalTangentBundleToFieldSpace} applies. 
We will not need the explicit form of this bundle.} in $T_\var^*(\CF)$. Hence, as with the pushforward of the prolongated evaluation 
(Def. \ref{PushforwardOfProlongatedEvaluation}), it defines a pushforward map of tangent bundles along $\CE \CL$, which does in fact naturally correspond 
to the synthetic pushforward \cite{GSS-2}.

\begin{definition}[\bf  Pushforward of Euler--Lagrange operator]
The \textit{pushforward of the Euler--Lagrange operator} $\CE \CL: \CF \rightarrow T^*_\var \CF $ is defined by
\vspace{-2mm}
\begin{align*}
\CE \CL_*: T\CF & \; \longrightarrow \; T (T^*_\var \CF)\\
\CZ_\phi & \; \longmapsto \;  \sum_{|I|=0} 
 \frac{\delta \CE \CL_a (\phi)}{\delta (\partial_I \phi^b)} \cdot 
 \frac{\partial\CZ_\phi^b}{\partial x^I} \cdot \dd u^a \, ,
\end{align*}

\vspace{-2mm}
\noindent and similarly for higher plots.

\end{definition}
Here we use $\CE \CL_*$ for the pushforward rather than `$\dd \CE \CL$' so as to avoid confusion with the actual differentials of the bicomplex, 
when $\CE \CL$ is viewed as a $(d,1)$-form on $\CF \times M$. Consider now the case where $\phi_t\in \CF_{\CE \CL}(\FR^1)$ is a 1-parameter 
family of \textit{on-shell fields} starting at $\phi \in \CF_{\CE \CL}(*)$, and so 
\vspace{0mm}
$$
\CE \CL(\phi_t) = 0 \hspace{0.5cm} \in \hspace{0.5cm} T^*_\var\CF(\FR^1)\, .
$$

\vspace{0mm}
\noindent 
Differentiating as above, it follows that corresponding the tangent vector $\CZ_\phi$ in $\CF$ is \textit{tangent to $\CF_{\CE \CL}$} 
in the sense that it vanishes along the pushforward
\vspace{-2mm}
\begin{align}\label{JacobiEquation}
\CE \CL_*(\CZ_\phi)= \sum_{|I|=0} 
 \frac{\delta \CE \CL_a (\phi)}{\delta (\partial_I \phi^b)} \cdot 
 \frac{\partial\CZ_\phi^b}{\partial x^I} \cdot \dd u^a= 0  \hspace{0.5cm} \in \hspace{0.5cm} T_{\CE\CL(\phi)} (T^*_\var \CF) \;.
\end{align}

\vspace{-2mm}
\noindent 
This equation is also known in the literature as the ``linearized Euler--Lagrange equations'' or the ``\textit{Jacobi equation}'' \cite{Zuckerman}, 
and the corresponding tangent vectors $\CZ_\phi \in \Gamma_M(\phi^*VF)$ as ``\textit{Jacobi fields}''. The non-trivial conditions on
the tangent vector are those of $|I|\geq 1$, since the $|I|=0$ case vanishes by the Euler--Lagrange equation on $\phi\in \CF_{\CE \CL}(*)$. 
The set of all tangent vectors to $\CF_{\CE \CL}$ may be thought of as the pullback/intersection of the pushforward
$\CE \CL_*: T\CF(*) \rightarrow T(T^*_\var \CF)(*)$ and the (fiberwise) zero map $0_{\CF * }: T\CF(*) \rightarrow T(T^*_\var \CF)(*)$.
All sets appearing in this intersection are of course the underlying points of actual smooth sets, 
and hence the corresponding \textit{tangent bundle smooth set}
may be defined as the pullback/intersection with the zero map in $\SmoothSets$.
\begin{definition}[\bf  Tangent bundle of on-shell field space]\label{OnShellTangentBundle}
The \textit{smooth tangent bundle} $T \CF_{\CE\CL}$ to the smooth subspace of on-shell fields is defined as the pullback 
  \vspace{-1mm} 	
 \[
	\xymatrix@=1.6em  {T \CF_{\CE \CL} \ar[d] \ar[rr] &&   T\CF \ar[d]^{\mathcal{EL}_*} 
		\\ 
		T\CF \ar[rr]^-{{0_\CF}_*}  && T(T^*_\var\CF)
	\, , }
	\]

   \vspace{-2mm} 
\noindent 	
in $\SmoothSets$. In other words, this is the smooth spaces with $\FR^k$-plots

\vspace{-2mm}
$$
T\CF_{\CE\CL}(\FR^k)= \big\{ \CZ_{\phi^k} \in T \CF \;  | \; \CE \CL_*(\CZ_{\phi^k})= 0_{\phi^k} \big\} \, .
$$
\end{definition}

\newpage 

\begin{remark}[\bf On-shell tangent vectors vs line plots]
$\,$

\noindent {\bf (i)} This is in line with the definition of on-shell tangent vectors of \cite{Zuckerman}, 
and hence also implicit in \cite{DF99}. 

\noindent {\bf (ii)} However, even though the definition is motivated by considering on-shell line plots and their corresponding on-shell 
tangent vectors, 
it is \textit{not true} that \textit{every} on-shell tangent vector $\CZ_\phi \in T \CF_{\CE \CL}(*)$ is represented by a line plot 
in $\CF_{\CE \CL}$ for arbitrary Lagrangians. In other words, the analog of Lem. \ref{LinePlotsRepresentTangentVectors} fails for the on-shell space of fields.

\noindent {\bf (iii)} In fact, this does not stem from the infinite dimensionality of the setting, but can be also seen in the
$0$-dimensional field theory of Ex. \ref{FiniteDimensionalCriticalSmoothSet}. Indeed, consider the case where $N=\FR$ and 
$S:N\rightarrow \FR$ is given by $S(x)=\frac{x^3}{3}$, so that $\dd S : \FR \rightarrow T^*\FR= \FR^2$ is given by 
$\dd S(x)= \big(x, x^2)$. Then the smooth critical locus of $S$ is a single point $\mathrm{Crit}(S)=\{0\}\hookrightarrow \FR $,
but the points of tangent bundle of $\mathrm{Crit}(S)$ in the above sense are given by $T_0 \FR \hookrightarrow T\FR$.
Intuitively, this encodes the tangency of the parabola graph and the horizontal $x$-axis in $\FR^2$. 

\noindent {\bf (iv)} Generally in the field-theoretic context, the question of whether all on-shell tangent vectors 
are represented by on-shell paths of fields depends highly on the explicit form of the Lagrangian density $L$, and is intricately related to the fact that the prolongated shell $S^\infty_L$ might not be a manifold (Rem. \ref{ManifoldStructureOnTheShell}). An explicit field theoretic example of this fact may be found in \cite{Blohmann23b}.
\footnote{Therein a different definition of tangent bundle is employed, 
inspired by constructions internal to diffeological spaces, which only includes on-shell tangent vectors which
are represented by line plots.}

\noindent {\bf (v)} This conundrum of intuition is bypassed in the infinitesimally thickened setting where tangent vectors are given, 
by definition, as \textit{actual} infinitesimal line plots \cite{GSS-2} whereby the above pullback construction 
will arise as a proposition.
\end{remark}

The definition of differential forms on $T \CF_{\CE \CL}$ and $T \CF_{\CE \CL}\times M$ now follows as in Def. \ref{DifferentialFormsOnProduct}, i.e.,
as fiberwise linear antisymmetric maps into $\FR \in \SmoothSets$. The canonical subspace embedding
\vspace{-2mm}
$$
\iota_{\CE \CL} \,:\, \CF_{\CE \CL} \times M \longhookrightarrow \CF \times M
$$

\vspace{-2mm}
\noindent 
induces a pushforward embedding map $T\CF_{\CE \CL} \times TM\hookrightarrow T\CF \times TM$ of tangent bundles, and hence a pullback/restriction of differential
forms. Here, we are only interested in the image of the local bicomplex under the restriction
\begin{align*}
(\iota_{\CE \CL})^* \,:\, \Omega^{\bullet,\bullet}_\loc(\CF \times M)&\longrightarrow \Omega^\bullet(\CF_{\CE \CL}\times M) \\ 
(ev^\infty)^*\om &\longmapsto (ev^\infty)^*\om |_{\CE \CL}\, ,
\end{align*} which defines the \textit{on-shell local forms} on $\CF_{\CE \CL}\times M$. In other words, on-shell local forms are in 1-1 correspondence 
with equivalency classes of local forms that agree when restricted to the on-shell subspace. Practically, this is quite straightforward: 
A local 1-form $(\ev^\infty)^*\om: T\CF \times TM \rightarrow \FR $ 
induces a 1-form on $\CF_{\CE \CL} \times M$ simply by `restricting its domain'
\begin{align*}
    (\ev^\infty)^*\om |_{\CE \CL} \;:\; T \CF_{\CE \CL} \times TM \longhookrightarrow T\CF \times TM \longrightarrow \FR  \, .
\end{align*}
In the local coordinate description of \eqref{LocalpqFormAbusively}, one use the same notation
\vspace{-2mm}
\begin{align*}
(\ev^\infty)^* \om |_{\CE \CL} =  \sum_{I_1,\cdots, I_p=0} \om_{\mu_1\cdots \mu_p a_1 \cdots a_q}^{I_1\dots I_q}\big(x, \{\partial_J \phi^b \}_{|J|\leq k}\big)\cdot  \dd_M x^{\mu_1}\wedge \cdots
\wedge \dd_M x^{\mu_p}\wedge \delta(\partial_{I_1} \phi^{a_1}) \wedge\cdots \wedge \delta(\partial_{I_q} \phi^{a_q}) \, , 
\end{align*}

\vspace{-2mm}
\noindent with the implicit understanding that the right-hand side is to be evaluated \textit{only} on on-shell fields and their tangent vectors. 
Along the same lines, the \textit{on-shell local vector fields} $\CX_\loc(\CF_{\CE \CL})\hookrightarrow \CX (\CF_{\CE \CL}\times M)$ are 
(equivalency classes of) restrictions of local vector 
fields $\CF \rightarrow T\CF $ which factor through $T \CF_{\CE \CL}$ (and hence satisfy \eqref{JacobiEquation} at each $\phi\in \CF_{\CE \CL}(*)$).

\begin{remark}[\bf On-shell Cartan calculus caveats]\label{OnShellCartanCalculusCaveats}
It is not obvious that the local Cartan calculus of $\Omega^{\bullet,\bullet}_\loc(\CF\times M)$ descends to a Cartan calculus of local forms and 
vector fields on $\CF_{\CE \CL}\times M$, although often implicitly assumed in the literature. \footnote{This is often by stating the (often unjustified)
assumption that $\CF_{\CE \CL}$ itself is \textit{some kind} of an infinite-dimensional manifold.}

\noindent {\bf (i)} In more detail, the vertical differential 
$\delta$ does not, in general, descend. 
For instance, consider the case where two local $(p,0)$-forms $\CP, \tilde{\CP}$ define the same on-shell local $(p,0)$-form 
$\CP|_{\CE \CL}=\tilde{\CP}|_{\CE \CL} \in \Omega^{p,0}(\CF_{\CE \CL} \times M)$, i.e., such that 
$\CP,\tilde{\CP}: \CF \rightarrow \Omega^{p}_\mathrm{Vert}(M)$ coincide along $\CF_{\CE \CL}$ but not necessarily on the complement. 
In that case, the restrictions of their vertical differentials $\delta \CP, \delta \tilde{\CP}$ agree if and only if the difference 
of currents is (locally) of the form
\vspace{-2mm}
\begin{align}\label{CurrentsDifferByEL}
(\CP-\tilde{\CP})(\phi)= \CJ^I(\phi) \cdot \partial_I \CE \CL(\phi) \, ,
\end{align}

\vspace{-2mm}
\noindent 
(see proof of Prop. \ref{Gaugesymmetryvspresymplecticcurrent}).  The same statements hold for the Lie bracket of local vector fields 
descending to the set of on-shell local vector fields. 

\noindent {\bf (ii)}   Technically, at the level of the jet bundle, this is the case whenever the ideal of
$C^\infty(J^\infty_M F)$ generated by (the components of) the prolongated Euler--Lagrange bundle map 
$\pr EL:J^\infty_M F \rightarrow J^\infty_M(\wedge^d T^*M\otimes V^*F)$ is \textit{point-determined} (see e.g. \cite{MoerdijkReyes}). 
Of course, the analogous statements can even fail in the finite-dimensional field-theoretic 
setting of Ex. \ref{FiniteDimensionalCriticalSmoothSet}, but it is true if, for instance, the corresponding action is regular enough 
such that the on-shell critical set is an embedded submanifold (see e.g.  \cite[Thm. 1.1]{HenneauxTeitelboim92}).

\noindent {\bf (iii)}  Along the same lines, the field-theoretic differential $\delta$ descends to $\CF_{\CE \CL}\times M$, for instance,
in the cases where the prolongated shell $S_L^\infty \hookrightarrow J^\infty_M F$ is an embedded Fr\'{e}chet submanifold (Rem. \ref{ManifoldStructureOnTheShell}).
This is the case when the diagram of finite order prolongated shells $S^q_{L,k}\rightarrow S^{q-1}_{L,k}$ from Rem. \ref{ManifoldStructureOnTheShell} 
consists of smooth (fin. dim.) manifolds and fibrations  \cite{Tsu82}\cite{GP17}. In other words, when $S^\infty_L$ is
a locally-pro submanifold of the infinite jet bundle (Def. \ref{JetBundleLocPro}). PDEs with this property are called `formally integrable', 
which is often the case for those arising in field theory.

\noindent {\bf (iv)} 
The general question of whether smooth functions, and hence also forms, on $J^\infty_M F$ which vanish on the prolongated shell $S^\infty$ of 
a differential operator $P$ are proportional to its prolongation $\pr P$ has, of course, been explored 
in the PDEs literature. Indeed, such conditions may be viewed as the defining ingredients of `\textit{diffieties}' inside a jet bundle, 
\footnote{Traditionally, this school employs the pro-manifold picture -- but many of the constructions and claims should 
be applicable to the locally pro-manifold picture too (as done e.g. in \cite{GP17}).}
with the restriction of the variational bicomplex on $S^\infty$ \cite{Tsu82} being closely related to the `\textit{Secondary Calculus}' 
of the diffiety \cite{Vin81}\cite{Vin84} \cite{Vin13}\cite{Vitag}.

\noindent {\bf (v)}  We do not need these analytical details for our developments, but we note that such explicit sufficient conditions in 
terms of the Euler--Lagrange source form may be found, for instance, in \cite{KV11}. Similar conditions in the field-theoretic setting may 
be found in \cite[\S 12]{HenneauxTeitelboim92}. \footnote{The jet bundle technology is not explicitly used therein, but the conditions
are in fact implicitly taking place in $J^\infty_M F$.}  
Many of the results that follow do not require the use
an on-shell Cartan calculus, as the manipulations take place on $\CF \times M$, with the on-shell restrictions being applied 
a posteriori. We will refer to this remark, in the instances where the extra assumptions are needed. Nevertheless, 
as far as we are aware, sufficient regularity conditions are satisfied in all fundamental Lagrangians of interest \cite[\S 12]{HenneauxTeitelboim92}.
\end{remark}

\begin{lemma}[{\bf Local vector field is on-shell iff preserves EL-operator}]
\label{LocalVectorFieldIsOnShellIffPreservesELoperator}
A local vector field $\CZ\in \CX_\loc(\CF)$ defines an on-shell vector field if and only if it preserves the Euler--Lagrange source 
form $\CE \CL\in \Omega^{d,1}(\CF\times M)$ along $\CF_{\CE\CL}$. That is, for any on-shell field $\phi \in \CF_{\CE \CL}$ 
\vspace{-1mm}
$$ 
\CZ(\phi) \in T_\phi \CF_{\CE \CL}  \hspace{0.5cm} \iff \hspace{0.5cm} \mathbb{L}_\CZ (\CE \CL) \big{\vert}_{\{\phi\}\times M} = 0 \, .
$$
\end{lemma}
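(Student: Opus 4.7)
The strategy is to compute $\mathbb{L}_\CZ(\CE\CL)$ in local coordinates via the local Cartan calculus of Prop.~\ref{LocalCartanCalculus} and match it, after restriction to $\{\phi\}\times M$ with $\phi$ on-shell, against the Jacobi equation which characterizes $T_\phi \CF_{\CE\CL}$ by Def.~\ref{OnShellTangentBundle}.

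First I would unpack the left-hand side. By Def.~\ref{OnShellTangentBundle} and Eq.~\eqref{JacobiEquation}, the statement $\CZ(\phi)\in T_\phi \CF_{\CE\CL}$ is the pointwise-on-$M$ Jacobi equation
$$\sum_{|I|=0}^\infty \frac{\delta \CE\CL_a(\phi)}{\delta(\partial_I\phi^b)}\cdot \frac{\partial \CZ^b}{\partial x^I}\;=\;0$$
for every component index $a$. Next I would compute the right-hand side. Writing locally $\CE\CL = \CE\CL_a\cdot \delta\phi^a \wedge \nu$ with $\nu := \dd x^1\wedge\cdots\wedge \dd x^d$ as in \eqref{EulerLagrangeOperatorAs(d,1)form}, I apply Cartan's formula $\mathbb{L}_\CZ = [\iota_\CZ,\delta]$. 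Since $\CZ\in \CX_\loc(\CF)\hookrightarrow \CX(\CF\times M)$ is purely vertical ($\iota_\CZ\nu = 0$) and $\iota_\CZ\delta\phi^a = \CZ^a$ by \eqref{ContractionOfLocal1formAbusively}, a short Leibniz computation on $\iota_\CZ\delta\CE\CL$ and $\delta\iota_\CZ\CE\CL$ yields the identity
$$\mathbb{L}_\CZ(\CE\CL)\;=\;\mathbb{L}_\CZ(\CE\CL_a)\cdot \delta\phi^a\wedge \nu \;+\; \CE\CL_a\cdot \delta\CZ^a \wedge \nu,$$
where the cross-term $-\CZ^a\,\delta\CE\CL_a\wedge \nu$ produced by $\iota_\CZ\delta\CE\CL$ exactly cancels the corresponding piece of $\delta\iota_\CZ\CE\CL = \CZ^a\,\delta\CE\CL_a\wedge \nu + \CE\CL_a\,\delta\CZ^a\wedge \nu$.

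Finally, I would restrict this identity to $\{\phi\}\times M$ for an on-shell field $\phi\in\CF_{\CE\CL}$. The second summand vanishes identically since $\CE\CL_a(\phi) = 0$ pointwise, while the first summand evaluates, by Def.~\ref{ActionOfLocalVectorFieldsOnCurrents}, to
$$\bigg(\sum_{|I|=0}^\infty \frac{\delta \CE\CL_a(\phi)}{\delta(\partial_I\phi^b)}\cdot \frac{\partial \CZ^b}{\partial x^I}\bigg)\,\delta\phi^a\wedge \nu.$$
Linear independence of $\{\delta\phi^a\wedge \nu\}_a$ at each $(\phi,p)$ then yields the equivalence in both directions. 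The main pitfall is not an obstacle of principle but careful bookkeeping: one must correctly track the Leibniz signs for $\iota_\CZ$ applied to the degree-$1$ factor $\delta\CE\CL_a$, and verify the cancellation of the cross-term above. Note that no delicate on-shell Cartan calculus is required (cf.~Rem.~\ref{OnShellCartanCalculusCaveats}), because the Lie derivative is first computed on $\CF\times M$ using Prop.~\ref{LocalCartanCalculus}, and the restriction to $\{\phi\}\times M$ is performed only at the very end.
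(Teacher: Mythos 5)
Your proposal is correct and follows essentially the same route as the paper's proof: compute $\mathbb{L}_\CZ(\CE\CL) = (\iota_\CZ\delta + \delta\iota_\CZ)\CE\CL$ off-shell via the local Cartan calculus, observe the cancellation of the cross-terms $\mp\,\CZ^a\,\delta\CE\CL_a$, note that the surviving $\CE\CL_a\cdot\delta\CZ^a$ term dies on-shell, and identify the remaining term with the Jacobi equation. Your explicit remarks on the linear independence of the $\delta\phi^a$ and on why no on-shell Cartan calculus is needed are both consistent with (and slightly more careful than) the paper's write-up.
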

\begin{proof}
Using the local Cartan calculus and standard abuse of notation $\CE \CL = \CE\CL_a \cdot \delta \phi^a $, we compute
\vspace{-2mm}
\begin{align*}
\mathbb{L}_\CZ (\CE \CL)&= (\iota_\CZ \delta +  \delta \iota_\CZ) \CE \CL \\ 
&= \iota_\CZ \big( \delta \CE \CL_a \wedge \delta \phi^a \big) + \delta ( \CE \CL_a \cdot \CZ^a) \\
&= \iota_\CZ \delta \CE \CL_a \cdot \delta \phi^a - \delta \CE \CL_a \cdot \CZ^a + \delta \CE \CL_a \cdot \CZ^a - \CE \CL_a \cdot \delta \CZ^a  
\\
&= \iota_\CZ \delta \CE \CL_a \cdot \delta \phi^a  - \CE \CL_a \cdot \delta \CZ^a \\
&= \iota_\CZ \bigg( \sum_{|I|=0}^{\infty} \frac{\delta \CE \CL_a}{\delta(\partial_I \phi^b)}  \cdot  \delta(\partial_I \phi^b) \bigg) 
\wedge \delta \phi^a  + \CE \CL_a \cdot \delta \CZ^a \\[-2pt]
&= \sum_{|I|=0}^{\infty} \bigg(\frac{\delta \CE \CL_a}{\delta(\partial_I \phi^b)}  \cdot  \frac{\partial \CZ^b}{\partial x^I} \bigg) 
\wedge \delta \phi^a + \CE \CL_a \cdot \delta \CZ^a  \, . 
\end{align*}

\vspace{-2mm}
\noindent At any on-shell field $\phi \in \CF_{\CE \CL}$, the second term vanishes by definition. The result then follows, since the
vanishing of the remaining term is exactly the tangency condition, i.e., the Jacobi equation of \eqref{JacobiEquation}.
\end{proof}
This means $\mathbb{L}_\CZ(\CE \CL)|_{\{\phi\}\times M}$ actually vanishes at any $\phi \in \CF_{\CE \CL}$ as a map of tangent vectors on $\CF\times M$, and not only those tangent to $\CF_{\CE \CL}\times M$. 

This result allows us to prove the infinitesimal version of Prop. \ref{LocalSymmetryPreservesOnshellSpace} and Prop.
\ref{SymmetryPreservesOnshellSpace}, bearing in mind also Prop. \ref{InfinitesimalSpacetimeCovariantSymmetries}. 
That is, since any finite local or spacetime covariant symmetry $\CP\in \mathrm{Diff}_\loc^\CL(\CF)$ of a Lagrangian field theory 
$(\CF,\CL)$ preserves the on-shell field space $\CF_{\CE \CL}$, then any infinitesimal symmetry $\CZ\in \CX_\loc^\CL(\CF)$ should
be \textit{tangent} to $\CF_{\CE \CL}$. In other words, an infinitesimal symmetry of $\CL$ should define an on-shell vector field, 
which is indeed the case.

\begin{proposition}[\bf Infinitesimal symmetry is tangent to $\CF_{\CE \CL}$]\label{InfinitesimalSymmetryTangentToOnshellFieldSpace}
Let $\CZ\in \CX_\loc(\CF)$ be an infinitesimal symmetry of a Lagrangian $\CL\in \Omega^{d,0}_\loc(\CF\times M)$, and so
$\mathbb{L}_\CZ \CL = \dd_M \CK_\CZ$. 
Then $\CZ$ defines an on-shell vector field $\CZ|_{\CE \CL} \in \CX_\loc(\CF_{\CE \CL})$, i.e., 
\vspace{-2mm}
$$
\CZ(\phi) \quad \in \quad T_\phi \CF_{\CE \CL}
$$

\vspace{-2mm}
\noindent
for each $\phi \in \CF_{\CE \CL}$.
\end{proposition}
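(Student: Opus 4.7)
The plan is to reduce to Lem.~\ref{LocalVectorFieldIsOnShellIffPreservesELoperator} and then use the symmetry hypothesis together with the local Cartan calculus to show that $\mathbb{L}_\CZ \CE\CL$ is $\dd_M$-exact as a local $(d,1)$-form on $\CF\times M$; its pointwise vanishing on the on-shell locus will then follow by pairing against compactly-supported test tangent vectors and invoking the Fundamental Lemma of variational calculus.

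By Lem.~\ref{LocalVectorFieldIsOnShellIffPreservesELoperator} it suffices to show that $\mathbb{L}_\CZ(\CE\CL)\vert_{\{\phi\}\times M}=0$ for every on-shell $\phi\in \CF_{\CE\CL}$. Starting from the variational decomposition \eqref{CohomologicalIntegrationByPartsOnFieldSpace}, $\delta \CL=\CE\CL+\dd_M\theta_\CL$, and the symmetry hypothesis $\mathbb{L}_\CZ\CL=\dd_M\CK_\CZ$, the commutations $[\mathbb{L}_\CZ,\delta]=0=[\mathbb{L}_\CZ,\dd_M]$ from Prop.~\ref{LocalCartanCalculus} together with the anti-commutation $\delta\dd_M=-\dd_M\delta$ from \eqref{BicomplexRelations} immediately give the key identity
\[
\mathbb{L}_\CZ \CE\CL \;=\; \delta(\mathbb{L}_\CZ \CL) - \dd_M\mathbb{L}_\CZ\theta_\CL \;=\; -\dd_M\big(\delta\CK_\CZ + \mathbb{L}_\CZ\theta_\CL\big) \;=:\; -\dd_M \eta_\CZ .
\]

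Next, the explicit expression derived in the proof of Lem.~\ref{LocalVectorFieldIsOnShellIffPreservesELoperator} shows that at on-shell $\phi$ the term $\CE\CL_a\cdot\delta\CZ^a$ vanishes pointwise, so $\mathbb{L}_\CZ\CE\CL\vert_{\{\phi\}\times M}$ reduces to a source $(d,1)$-form with coefficient $J_a(\phi,p):=\sum_{|I|=0}^{\infty}\tfrac{\delta\CE\CL_a}{\delta(\partial_I\phi^b)}(\phi)\cdot\partial_I\CZ^b(\phi)(p)$, and the goal is precisely to show $J_a(\phi,\cdot)\equiv 0$. For this I would contract the above key identity with an arbitrary vertical tangent vector $Y\in T_\phi\CF$ of compact support in the interior of $M$; because $Y$ is constant along $M$, $[\iota_Y,\dd_M]=0$ and Stokes' theorem gives $\int_M \iota_Y(\mathbb{L}_\CZ\CE\CL)\vert_\phi = -\int_M \dd_M(\iota_Y\eta_\CZ)\vert_\phi = 0$, so that $\int_M J_a(\phi,p)\cdot Y^a(p)\cdot \dd\mathrm{vol}=0$ for all such $Y$. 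The Fundamental Lemma of variational calculus (as invoked in Prop.~\ref{Crit(S)functoriality}) then forces $J_a(\phi,\cdot)\equiv 0$ on $\mathrm{int}(M)$, and by continuity on all of $M$; this is exactly the Jacobi equation \eqref{JacobiEquation} applied to $\CZ(\phi)$, which is the desired tangency $\CZ(\phi)\in T_\phi\CF_{\CE\CL}$. The same reasoning extends verbatim to $\FR^k$-plots of on-shell fields by replacing $Y$ with compactly-supported $\FR^k$-parametrized vertical tangent vectors, and to noncompact $M$ or $M$ with boundary by always choosing $Y$ with support in the interior, in line with Prop.~\ref{Crit(L)functoriality}.

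The main technical subtlety I anticipate is ensuring the Stokes step goes through cleanly: one needs the commutation $[\iota_Y,\dd_M]=0$ for $Y$ regarded as a vertical vector field on $\CF\times M$ (which is immediate since $Y$ is vertical in the $\CF$-direction while $\dd_M$ is horizontal) and the careful handling of boundary/support conditions to kill the Stokes boundary term. Once these are in place the argument is purely algebraic; in particular it avoids having to assume integrability of $\CZ$ or any regularity of the prolongated shell, in contrast to the finite-symmetry argument of Prop.~\ref{LocalSymmetryPreservesOnshellSpace} and the on-shell Cartan-calculus issues flagged in Rem.~\ref{OnShellCartanCalculusCaveats}.
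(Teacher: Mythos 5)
Your proof is correct, and it shares the first half of the paper's argument — the reduction to Lem.~\ref{LocalVectorFieldIsOnShellIffPreservesELoperator} and the key identity $\mathbb{L}_\CZ\CE\CL=-\dd_M(\mathbb{L}_\CZ\theta_\CL+\delta\CK_\CZ)$ (the paper derives it via Noether's current, you derive it directly from the variational decomposition; same content). Where you genuinely diverge is the final step. The paper invokes the infinitesimal covariance of the Euler--Lagrange operator, $\mathbb{L}_\CZ(\CE\CL)=\CE(\mathbb{L}_\CZ\CL)$ (citing \cite[Cor.~3.22]{Anderson89}), so that the left-hand side is recognized as a $(d,1)$-\emph{source} form; since a $\dd_H$-exact source form vanishes by Prop.~\ref{InteriorEulerProperties} (as $\ker\CI=\mathrm{Im}\,\dd_H$ and $\CI$ is the identity on source forms), this yields the \emph{off-shell} identity $\mathbb{L}_\CZ\CE\CL=0$ of Eq.~\eqref{InfinitesimalSymmPreservesELequation}, which then restricts. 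You instead pair the key identity against compactly supported test tangent vectors, kill the exact term by Stokes, and apply the Fundamental Lemma to force the on-shell coefficient $J_a(\phi,\cdot)$ to vanish. Your route is more elementary — it needs neither the covariance identity nor the interior-Euler-operator cohomology — but it only delivers the on-shell vanishing, whereas the paper's argument establishes the stronger off-shell statement (of independent interest: symmetries of the Lagrangian are symmetries of the EL equations). One point to tighten: the Stokes step needs the (anti)commutation of $\dd_M$ with contraction by a fixed tangent \emph{vector} $Y_\phi$ restricted to $\{\phi\}\times M$, not just the vector-field relation $\iota_\CZ\dd_M=-\dd_M\iota_\CZ$ you quote; this is exactly the relation $[\iota_{\partial_t j^\infty\phi_t|_{t=0}},\dd_H]=0$ established in the proof of Prop.~\ref{Crit(S)functoriality}, which you should cite, and note the commutator there is graded (the sign is immaterial for your integration argument).
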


\begin{proof}
Let $\CP_\CZ = \CK_\CZ + \iota_\CZ \theta_\CL $ be the on-shell conserved current of the infinitesimal local 
symmetry $\CZ$ (Prop. \ref{Noether1st}), i.e., $\iota_\CZ \CE \CL = \dd_M \CP_\CZ $. 
Computing the Lie derivative of $\CE \CL$ along $\CZ$,
\vspace{-2mm}
\begin{align*}
\mathbb{L}_\CZ \CE \CL &= (\iota_\CZ \delta + \delta \iota_\CZ) \CE \CL \\
&= \iota_ \CZ \delta ( \delta \CL - \dd_M \theta_\CL) + \delta \dd_M \CP_\CZ \\
&= 0 - \dd_M (\iota_\CZ \delta \theta_\CL) - \dd_M ( \delta \CK_\CZ + \delta \iota_\CZ \theta_\CL ) \\
&=-\dd_M ( \mathbb{L}_\CZ \theta_\CL + \delta \CK_\CZ )
\end{align*}

\vspace{-2mm}
\noindent where we used the bi-complex commutation relations and the local Cartan calculus of Lem. \ref{LocalCartanCalculus} repeatedly. 
Next, it follows that for \textit{any} local vector field $\CZ$ (not necessarily a symmetry), by  \cite[Cor. 3.22]{Anderson89}
or even directly in local coordinates, that 
\vspace{-2mm}
$$
\mathbb{L}_\CZ ( \CE \CL) = \CE (\mathbb{L}_\CZ \CL) \, .
$$

\vspace{-2mm}
\noindent
This is the infinitesimal version of the covariance property of the Euler--Lagrange operator from Prop. \ref{LocalSymmetryPreservesOnshellSpace}. 
In other words, the Lie derivative $\mathbb{L}_\CZ (\CE \CL)$ may be equivalently seen as the Euler--Lagrange operator / source form of the
local Lagrangian $\mathbb{L}_\CZ \CL = (\mathbb{L}_{\pr Z} L) \circ j^\infty $.

Combining the two equations for the case where $\CZ$ is a local symmetry, we have 
\vspace{-2mm}
$$
\CE (\mathbb{L}_\CZ \CL) = - \dd_M (  \mathbb{L}_\CZ \theta_\CL + \delta \CK_\CZ ) \, ,
$$

\vspace{-2mm}
\noindent 
which at the level of the infinite jet bundle says that the source form $E(\mathbb{L}_{\pr Z} L ) \in \Omega^{d-1,1}_S (J^\infty_M F)$ is $\dd_H$-exact.
By Prop. \ref{InteriorEulerProperties}, this is necessarily the zero source form, and hence
\vspace{-2mm}
\begin{align*}
\mathbb{L}_\CZ (\CE \CL) = \CE (\mathbb{L}_\CZ \CL ) = 0 \hspace{0.5cm} \in \hspace{0.5cm} \Omega^{d-1,0}_\loc(\CF\times M) \, . 
\end{align*}

\vspace{-2mm}
\noindent 
This equation holds as forms on $\CF \times M$ and hence, in particular, restricts to equations of forms on $\CF_{\CE \CL}\times M$. 
This completes the proof by the result of Lem. \ref{LocalVectorFieldIsOnShellIffPreservesELoperator}.
\end{proof}

This statement appears in Theorem 13 (a) of \cite{Zuckerman} without proof, and is implicitly used throughout \cite{DF99}. 
A more involved proof appears also in \cite{Blohmann23b}. \footnote{It seems to us the general covariance fact 
$\CE (\mathbb{L}_\CZ \CL)  = \mathbb{L}_\CZ (\CE \CL)$ is not noted therein.} 
 Note that the equation 
\vspace{-1mm}
\begin{align}\label{InfinitesimalSymmPreservesELequation}
\mathbb{L}_\CZ (\CE \CL) = 0 \, ,
\end{align}

\vspace{-1mm}
\noindent appearing in the proof may be interpreted as the statement that (infinitesimal) symmetries of an action/Lagrangian 
are also symmetries of the induced Euler--Lagrange equations.
A calculation along the lines of Lem. \ref{LocalVectorFieldIsOnShellIffPreservesELoperator} shows that the vertical differential 
of the Euler--Lagrange $(d-1,1)$-form vanishes on-shell.

\begin{lemma}[{\bf $\CE \CL$ is $\delta$-closed on-shell}]\label{EulerLagrangeFormIsDeltaClosedOnshell}
The Euler--Lagrange local $(d,1)$-form $\CE \CL\in \Omega^{d,1}_\loc(\CF\times M)$ satisfies
\vspace{-2mm} 
$$
\delta \CE \CL |_{\CE \CL} = 0 \quad \in \quad \Omega^{d,2}_\loc(\CF_{\CE \CL}\times M)\, . 
$$
\end{lemma}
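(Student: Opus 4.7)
The plan is to expand $\delta \CE\CL$ in local coordinates using the source-form structure of $\CE\CL$, and then invoke the Jacobi equation \eqref{JacobiEquation} characterizing $T\CF_{\CE\CL}$ in Def. \ref{OnShellTangentBundle}. Since the statement is local on $\CF_{\CE\CL}\times M$, it suffices to verify the vanishing pointwise on pairs of vertical on-shell tangent vectors (the horizontal slots merely contract with the $d^d x$ factor); the functoriality of the Jacobi equation built into Def. \ref{OnShellTangentBundle} then automatically extends the result to higher $\FR^k$-plots.

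Concretely, I would write $\CE\CL = \CE\CL_a\cdot \delta\phi^a\wedge d^d x$ in a compatible chart, where $d^d x := \dd_M x^1\wedge\cdots\wedge \dd_M x^d$. Applying the vertical differential using the local Cartan calculus of Prop. \ref{LocalCartanCalculus}, together with the commutation $\delta(\partial_I \phi^a)=\partial_I(\delta\phi^a)$ from \eqref{Local1FormActionOnTangentAbusively} and the relations $\delta(\delta\phi^a)=0$, $\delta(d^d x)=0$, one obtains
\[
\delta\CE\CL \;=\; \sum_{|I|=0}^\infty \frac{\delta \CE\CL_a}{\delta(\partial_I \phi^b)}\cdot \partial_I(\delta \phi^b)\wedge \delta\phi^a\wedge d^d x.
\]
For any pair of on-shell tangent vectors $\CZ_1,\CZ_2\in T_\phi\CF_{\CE\CL}$ at $\phi\in\CF_{\CE\CL}(*)$, the antisymmetric contraction of this $(d,2)$-form produces two terms, each containing, for each fixed index $a$, the inner sum
\[
\sum_{|I|,\,b}\frac{\delta \CE\CL_a(\phi)}{\delta(\partial_I \phi^b)}\cdot \partial_I \CZ_i^b.
\]
But this is precisely the coefficient of $\dd u^a$ in $\CE\CL_*(\CZ_i)$, which vanishes by the Jacobi equation \eqref{JacobiEquation} defining $T\CF_{\CE\CL}$. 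Hence both terms vanish and $\delta\CE\CL(\CZ_1,\CZ_2)=0$, giving the claim on $\ast$-plots; the identical calculation on higher plots completes the proof.

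The main subtlety is purely notational --- tracking the on-shell restriction carefully and distinguishing this lemma from the strictly weaker off-shell identity. Indeed, differentiating the variational decomposition $\delta\CL = \CE\CL + \dd_M\theta_\CL$ and using $\delta^2=0$ immediately yields $\delta\CE\CL = \dd_M\om_\CL$ in $\Omega^{d,2}_\loc(\CF\times M)$, which asserts only that $\delta\CE\CL$ is horizontally exact off-shell, not that it vanishes. Combining this off-shell relation with the on-shell vanishing proved above gives the important corollary $\dd_M\om_\CL|_{\CE\CL}=0$, i.e.\ on-shell conservation of the presymplectic current, which is the starting point for the covariant phase space construction in the subsequent sections.
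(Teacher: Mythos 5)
Your proof is correct and follows essentially the same route as the paper's: expand $\delta \CE \CL$ in local coordinates via its source-form structure, contract with a pair of on-shell tangent vectors, and observe that each antisymmetrized term contains the factor $\sum_{|I|}\tfrac{\delta \CE \CL_a}{\delta(\partial_I\phi^b)}\,\partial_I \CZ_i^b$, which vanishes by the Jacobi equation defining $T\CF_{\CE\CL}$. Your closing observation that $\delta\CE\CL = \dd_M \om_\CL$ off-shell, hence $\dd_M\om_\CL|_{\CE\CL}=0$, is precisely the content of the paper's subsequent corollary on on-shell conservation of the presymplectic current.
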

\begin{proof}
For any two on-shell tangent vectors $\CZ^{1,2}_\phi= (\CZ^{1,2}_\phi)^a \cdot \frac{\delta}{\delta \phi^a}\,  
\in\, T_\phi \CF_{\CE \CL} \hookrightarrow T_\phi \CF$, we have 
\vspace{-2mm}
\begin{align*}
\delta \CE \CL |_{\CE \CL} (\CZ^1_\phi, \CZ^2_\phi) &=  \sum_{|I|=0}^{\infty} \frac{\delta \CE \CL_a(\phi)}{\delta(\partial_I \phi^b)} 
\cdot  \delta(\partial_I \phi^b) \wedge \delta \phi^a\Big\vert_{(\phi,p )}  \big(\CZ^1_\phi, \CZ^2_\phi) \\[-1pt]
& =  \sum_{|I|=0}^{\infty} \bigg(\frac{\delta \CE \CL_a(\phi)}{\delta(\partial_I \phi^b)}  \cdot  \frac{\partial (\CZ^1_\phi)^b}{\partial x^I} 
\cdot  (\CZ^1_\phi)^a\bigg) (p) \, - \, (1\leftrightarrow 2) \\[-2pt]
&= 0
\end{align*}

\vspace{-3mm}
\noindent where we used the standard abuse of notation and the fact that each of the terms vanishes since they are proportional 
to the Jacobi equation \eqref{JacobiEquation} for each \textit{on-shell} tangent vector, respectively. The case of general 
tangent vectors $\CZ_\phi + X_p \in T_\phi \CF_{\CE \CL} \times T_p M $ follows similarly.
\end{proof}
This implies, in particular, that 
\vspace{0mm}
$$
\delta \CE \CL |_{\CE \CL} ( \CZ^1,\CZ^2) = 0 \quad \in \quad \Omega^{2,d}_\loc(\CF_{\CE \CL}\times M)  
$$

\vspace{-1mm}
\noindent
for any two on-shell vector fields $\CZ_1, \CZ_2 \in \CX_\loc(\CF_{\CE \CL})$, as can be also checked directly using the induced 
local Cartan calculus. Furthermore, Lem. \ref{EulerLagrangeFormIsDeltaClosedOnshell} immediately implies that the presymplectic
current is \textit{conserved} on-shell.

\begin{corollary}[\bf  Presymplectic current is on-shell conserved]\label{PresymplecticCurrentIsOnshellConserved}
The presymplectic current $\om_\CL = \delta \theta_\CL \in \Omega^{d-1,2}_\loc(\CF\times M)$ satisfies
\vspace{-2mm}
$$
\dd_M \om_\CL |_{\CE \CL} = 0 \quad \in \quad \Omega^{d,2}_\loc(\CF_{\CE \CL}\times M)\, .
$$
In other words, it is an \textit{on-shell conserved current}.
\end{corollary}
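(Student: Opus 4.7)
The plan is to derive an explicit off-shell identity relating $\dd_M \om_\CL$ to $\delta \CE \CL$ on all of $\CF \times M$, and then restrict to $\CF_{\CE \CL} \times M$ where Lem.~\ref{EulerLagrangeFormIsDeltaClosedOnshell} kills the right-hand side. Concretely, I would start from the cohomological integration-by-parts identity
\[
\delta \CL \;=\; \CE \CL + \dd_M \theta_\CL \quad \in \quad \Omega^{d,1}_\loc(\CF \times M)
\]
from Eq.~\eqref{CohomologicalIntegrationByPartsOnFieldSpace}, which is the pullback via $\ev^\infty$ of the jet-bundle identity and hence holds as a genuine equality of local forms on $\CF \times M$.

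Applying the vertical differential $\delta$ to both sides and using $\delta^2 = 0$, I obtain
\[
0 \;=\; \delta \CE \CL + \delta \dd_M \theta_\CL.
\]
Now I would invoke the bicomplex anticommutation relation $\dd_M \circ \delta = - \delta \circ \dd_M$ from \eqref{BicomplexRelations}, which descends from the variational bicomplex of $J^\infty_M F$ to $\Omega^{\bullet,\bullet}_\loc(\CF \times M)$ via the pullback along $\ev^\infty$ (Def.~\ref{BicomplexOfLocalForms}). This yields
\[
\dd_M \om_\CL \;=\; \dd_M \delta \theta_\CL \;=\; -\delta \dd_M \theta_\CL \;=\; \delta \CE \CL \quad \in \quad \Omega^{d,2}_\loc(\CF \times M).
\]
Thus the failure of $\om_\CL$ to be horizontally closed is controlled entirely by the vertical differential of the Euler--Lagrange source form, and is an off-shell identity valid throughout $\CF \times M$.

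Finally, I would restrict along the subspace inclusion $\iota_{\CE\CL} : \CF_{\CE\CL} \times M \hookrightarrow \CF \times M$, which intertwines the horizontal differential and the pullback of local forms. By Lem.~\ref{EulerLagrangeFormIsDeltaClosedOnshell}, the right-hand side vanishes when evaluated on tangent vectors to $\CF_{\CE\CL} \times M$, giving
\[
\dd_M \om_\CL \big|_{\CE\CL} \;=\; (\delta \CE \CL)\big|_{\CE\CL} \;=\; 0,
\]
which is the desired conservation law. The only delicate point is that this argument uses only the plain restriction of local forms to $\CF_{\CE\CL} \times M$ (as defined by precomposition with the pushforward $T\CF_{\CE\CL} \times TM \hookrightarrow T\CF \times TM$), and does \emph{not} require the horizontal differential itself to descend to $\CF_{\CE\CL} \times M$ in a well-defined way; hence none of the subtleties flagged in Rem.~\ref{OnShellCartanCalculusCaveats} interfere. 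I expect no real obstacle: the proof is essentially two applications of bicomplex identities combined with the preceding lemma.
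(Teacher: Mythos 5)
Your proof is correct and follows essentially the same route as the paper's: applying $\delta$ to the decomposition $\delta \CL = \CE\CL + \dd_M \theta_\CL$, using the bicomplex anticommutation to obtain the off-shell identity $\dd_M \om_\CL = \delta\, \CE\CL$, and then restricting to $\CF_{\CE\CL}\times M$ and invoking Lem.~\ref{EulerLagrangeFormIsDeltaClosedOnshell}. Your closing observation that only the plain restriction of forms is needed (so that Rem.~\ref{OnShellCartanCalculusCaveats} does not interfere) is a worthwhile clarification, but it does not change the argument.
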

\begin{proof}
Applying the vertical differential on Eq. \eqref{CohomologicalIntegrationByPartsOnFieldSpace}, we have
\vspace{-1mm}
$$
0=\delta^2 \CL = \delta \CE \CL - \dd_M \delta \theta_\CL \, .
$$

\vspace{-1mm}
\noindent
That is, 
$
\dd_M \om_\CL = \delta \CE \CL \in  \Omega^{d,2}_\loc(\CF\times M) 
$, 
so that the result follows by restricting to $\CF_{\CE \CL}\times M$ and using Lem. \ref{EulerLagrangeFormIsDeltaClosedOnshell}.
\end{proof}

By definition, infinitesimal local symmetries preserve the Lagrangian $\CL$, up to a trivial Lagrangian, and we have shown they preserve 
the Euler--Lagrange operator $\CE \CL$ (Prop. \ref{InfinitesimalSymmetryTangentToOnshellFieldSpace}). It should not come as a
surprise that they further preserve the induced presymplectic current, up to a `trivial' $\dd_M$-exact $(d-1,2)$ current.

\begin{lemma}[{\bf Infinitesimal symmetries preserve presymplectic current}]\label{InfinitesimalSymmetryPreservesPresymplecticCurrent}
Let $\CZ\in \CX_\loc(\CF)$ be an infinitesimal symmetry of $\CL\in \Omega^{d,0}_\loc(\CF\times M)$. Then $\CZ$ preserves the 
presymplectic current $\om_\CL=\delta \theta_\CL$
\vspace{-2mm}
$$
\mathbb{L}_\CZ \om_\CL = \dd_M \CB_\CZ \hspace{0.5cm} \in \hspace{0.5cm} \Omega^{d-1,2}_\loc(\CF\times M) \, , 
$$

\vspace{-2mm}
\noindent
up to an exact local $(d-1,2)$-form. 
\end{lemma}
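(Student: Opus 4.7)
The plan is to combine the symmetry condition with the integration-by-parts decomposition \eqref{CohomologicalIntegrationByPartsOnFieldSpace}, exploiting the commutation relations of the local Cartan calculus (Prop. \ref{LocalCartanCalculus}) to reduce the statement to an application of Takens' acyclicity (Prop. \ref{AcyclicityTheorem}). The proof essentially reuses, and then extends by one further step, the computation already performed in the proof of Prop. \ref{InfinitesimalSymmetryTangentToOnshellFieldSpace}.

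First I would recall from the proof of Prop. \ref{InfinitesimalSymmetryTangentToOnshellFieldSpace} the identity
$$
\mathbb{L}_\CZ(\CE\CL) \;=\; -\dd_M\bigl(\mathbb{L}_\CZ \theta_\CL + \delta \CK_\CZ\bigr),
$$
which is an off-shell equality in $\Omega^{d,1}_\loc(\CF\times M)$ valid for any infinitesimal local symmetry $\CZ$ of $\CL$, and was derived there by expanding $\mathbb{L}_\CZ\CE\CL$ via Cartan's magic formula, applying the decomposition $\delta\CL = \CE\CL + \dd_M \theta_\CL$, and invoking Noether's identity $\iota_\CZ\CE\CL = \dd_M\CP_\CZ$. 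Combining with the further fact established there, namely that $\mathbb{L}_\CZ(\CE\CL) = \CE(\mathbb{L}_\CZ\CL) = \CE(\dd_M\CK_\CZ) = 0$ off-shell (by the covariance of the Euler--Lagrange operator and the triviality of the EL-form of an exact Lagrangian, Eq. \eqref{ExactLagrangianTrivialELequation} together with Prop. \ref{InteriorEulerProperties}), I obtain the key off-shell equation
$$
\dd_M\bigl(\mathbb{L}_\CZ \theta_\CL + \delta \CK_\CZ\bigr) \;=\; 0 \qquad \in \; \Omega^{d,1}_\loc(\CF\times M).
$$

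Next I would apply $\delta$ to this equation and use the bicomplex relations $\delta \dd_M = -\dd_M \delta$ and $\delta^2 = 0$ to kill the $\delta\CK_\CZ$ term; together with the commutativity $\delta \mathbb{L}_\CZ = \mathbb{L}_\CZ \delta$ from Prop. \ref{LocalCartanCalculus} and the definition $\om_\CL = \delta\theta_\CL$, this yields
$$
\dd_M\bigl(\mathbb{L}_\CZ \om_\CL\bigr) \;=\; 0 \qquad \in \; \Omega^{d,2}_\loc(\CF\times M).
$$
Thus $\mathbb{L}_\CZ\om_\CL$ is a horizontally closed local $(d-1,2)$-form. The final step is to invoke Takens' acyclicity theorem (Prop. \ref{AcyclicityTheorem}): lifting to the variational bicomplex on $J^\infty_M F$ via Def. \ref{BicomplexOfLocalForms}, we have $\mathbb{L}_\CZ\om_\CL = (\ev^\infty)^*(\mathbb{L}_{\pr Z}\om_L)$ with $\mathbb{L}_{\pr Z}\om_L \in \Omega^{d-1,2}(J^\infty_M F)$ being $\dd_H$-closed, and since the vertical degree $q=2 \geq 1$ lies in the acyclic range, the form is $\dd_H$-exact, say $\mathbb{L}_{\pr Z}\om_L = \dd_H B_Z$ for some $B_Z \in \Omega^{d-2,2}(J^\infty_M F)$. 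Pulling back along $\ev^\infty$ gives the desired $\CB_\CZ := (\ev^\infty)^* B_Z$.

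I do not expect any serious obstacle: all the non-trivial analytical content is already packaged into the acyclicity of the bicomplex and the commutation relations of Prop. \ref{LocalCartanCalculus}, both of which are stated globally at the level of $J^\infty_M F$ and then transported to $\CF\times M$ by pullback along $\ev^\infty$, which is an operation the bicomplex of local forms is \emph{designed} to respect. The one mild subtlety is purely bookkeeping: one must ensure that all the manipulations above, while denoted in the convenient notation of local forms on $\CF\times M$, can equivalently be carried out on representative forms in $\Omega^{\bullet,\bullet}(J^\infty_M F)$ so that Prop. \ref{AcyclicityTheorem} can be applied. This is automatic from Def. \ref{BicomplexOfLocalForms} and the fact that both $\mathbb{L}_\CZ$ and $\delta, \dd_M$ on $\CF\times M$ are \emph{defined} as pullbacks of their jet-bundle counterparts.
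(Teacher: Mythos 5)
Your proposal is correct and follows essentially the same route as the paper's own proof: both derive $\dd_M(\mathbb{L}_\CZ\theta_\CL + \delta\CK_\CZ)=0$ from the computation in Prop.~\ref{InfinitesimalSymmetryTangentToOnshellFieldSpace}, apply $\delta$ together with the commutation relations of Prop.~\ref{LocalCartanCalculus} to conclude $\dd_M(\mathbb{L}_\CZ\om_\CL)=0$, and then invoke Takens' acyclicity on $J^\infty_M F$ followed by pullback along $\ev^\infty$. No gaps.
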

\begin{proof}
The local Cartan calculus gives
\vspace{-2mm}
\begin{align*}
\mathbb{L}_\CZ \om_\CL = \mathbb{L}_\CZ \delta \theta_\CL = \delta \mathbb{L}_\CZ \theta_\CL\, ,
\end{align*}

\vspace{-1mm}
\noindent while by the proof of Prop. \ref{InfinitesimalSymmetryTangentToOnshellFieldSpace}, $
\dd_M (\mathbb{L}_\CZ \theta_\CZ) = + \dd_M (\delta \CK_\CZ) $. 
Thus, applying $\dd_M$ to the first equation we get 
\vspace{-1mm}
\begin{align*}
 \dd_M \mathbb{L}_\CZ \om_\CL &= \dd_M \delta (\mathbb{L}_\CZ \theta_\CL )= - \delta \dd_M (\mathbb{L}_\CZ \theta_\CL) \\
 &=-\delta \dd_M \delta \CK_\CZ = + \delta^2 \dd_M \CK_\CZ \\
 &=0 \, .
\end{align*}

\vspace{-2mm}
\noindent 
At the level of the $J^\infty_M F$, $\mathbb{L}_{\pr Z} \om_L $ is a $\dd_H$-closed $(d-1,2)$ form, and so by Prop. \ref{AcyclicityTheorem} it 
is in fact exact $\mathbb{L}_{\pr Z} \om_L = \dd_H B_Z$. Pulling back to $\CF \times M$ via the prolongated evaluation map completes the proof.
\end{proof}
 
Borrowing intuition from finite-dimensional symplectic geometry, this result says that an infinitesimal local symmetry is a
`\textit{symplectic vector field}' with respect to $\om_\CL=\delta \theta_\CL$, up to a horizontally exact local form. Of course, 
as one might expect, this result is not restricted to infinitesimal symmetries. Indeed, any finite symmetry 
(Def. \ref{FiniteSymmetryofLagrangianFieldTheory}) of a Lagrangian field theory preserves the induced presymplectic current. 
We include a brief proof of the local case, which parallels the above, for completeness.

\begin{lemma}
[\bf Finite local symmetries preserve presymplectic current]\label{FiniteSymmetryPreservesPresymplecticCurrent}
Let $\CD=D\circ j^\infty :\CF\rightarrow \CF$ be a local symmetry of $(\CF,\CL)$ as in Def. \ref{FiniteSymmetryofLagrangianFieldTheory}. 
Then $\CD$ preserves the presymplectic current $\om_\CL=\delta \theta_\CL$
\vspace{-2mm}
$$
\CD^* \om_\CL  = \dd_M \CB_\CD \quad \in \quad \Omega^{d-1,2}_\loc(\CF\times M) \, , 
$$

\vspace{-2mm}
\noindent
up to an exact local $(d-1,2)$-form. The pullback here means $\CD^*\om := (\ev^\infty)^*\big(\pr D^*\om)$ where
$\pr D:J^\infty_M F\rightarrow J^\infty_M F$ is the prolongated bundle map (Def. \ref{ProlongationOfJetBundleMap}).
\footnote{This can be equivalently defined via the pushforward $\dd \CD: T\CF \rightarrow T\CF$ along $\CD$.}
\end{lemma}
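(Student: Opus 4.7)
\medskip

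\noindent\textbf{Proof proposal.} The plan is to parallel the infinitesimal proof of Lem.~\ref{InfinitesimalSymmetryPreservesPresymplecticCurrent} at the level of the infinite jet bundle, using the prolongated bundle automorphism $\pr D : J^\infty_M F \to J^\infty_M F$ of Def.~\ref{ProlongationOfJetBundleMap}, and only at the end pull back to $\CF \times M$ via the prolongated evaluation map. The key structural fact is that $\pr D^*$ preserves the horizontal/vertical bi-grading on $\Omega^{\bullet,\bullet}(J^\infty_M F)$ and commutes with both differentials $\dd_H$ and $\dd_V$. This is essentially the jet-bundle incarnation of the fact that $\pr D$ preserves the Cartan distribution, and may be verified directly in local coordinates or deduced abstractly from the defining property of $\pr D$ as the lift of a bundle map over $M$.

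Granting these compatibilities, the symmetry assumption $\CL \circ \CD = \CL + \dd_M \CK$ lifts to the jet bundle as $\pr D^* L = L + \dd_H K$. Applying $\dd_V$ and using $[\dd_V, \pr D^*] = 0$ together with $\dd_V \dd_H = -\dd_H \dd_V$ gives
\[
\pr D^*(\dd_V L) \;=\; \dd_V L \;-\; \dd_H (\dd_V K).
\]
Substituting the variational decomposition $\dd_V L = E L + \dd_H \theta_L$ on both sides and using the covariance identity $\pr D^* E L = E L$ (established in the proof of Prop.~\ref{LocalSymmetryPreservesOnshellSpace}, and recorded in Eq.~\eqref{CovarianceofEulerLagrangeOperator}), one obtains
\[
\dd_H \bigl( \pr D^* \theta_L \bigr) \;=\; \dd_H \bigl( \theta_L - \dd_V K \bigr).
\]
Hence $\pr D^* \theta_L - \theta_L + \dd_V K$ is a $\dd_H$-closed element of $\Omega^{d-1,1}(J^\infty_M F)$. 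By Takens' acyclicity theorem (Prop.~\ref{AcyclicityTheorem}) applied to the row of vertical degree one, it is $\dd_H$-exact, so there exists $T_\CD \in \Omega^{d-2,1}(J^\infty_M F)$ with
\[
\pr D^* \theta_L \;=\; \theta_L - \dd_V K + \dd_H T_\CD.
\]
Applying $\dd_V$ and invoking once more $\dd_V \dd_H = -\dd_H \dd_V$ and $\dd_V^2 = 0$ yields $\pr D^* \om_L - \om_L = \dd_H B_\CD$ with $B_\CD := -\dd_V T_\CD$. Pulling back along $\ev^\infty : \CF \times M \to J^\infty_M F$, using Def.~\ref{BicomplexOfLocalForms} to convert $\dd_H$ into $\dd_M$, gives the desired identity
\[
\CD^* \om_\CL \;=\; \om_\CL + \dd_M \CB_\CD, \qquad \CB_\CD := (\ev^\infty)^* B_\CD,
\]
which is the integrated statement (up to the evident typographical reading of the target equation modulo $\dd_M$-exact terms, consistent with the infinitesimal case).

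The main obstacle is the bookkeeping of the covariance identity $\pr D^* E L = E L$ and the bi-complex commutation $[\dd_V,\pr D^*] = 0 = [\dd_H,\pr D^*]$. The former is the content of the covariance of Euler--Lagrange source forms under prolongations of bundle maps over $M$ (invoked also in the proof of Prop.~\ref{LocalSymmetryPreservesOnshellSpace}); the latter is a naturality statement that, while conceptually clean, must be justified carefully because $\pr D$ is only the prolongation of a bundle map $D : J^\infty_M F \to F$ over $M$ and not of a bundle automorphism of $F$ itself. Once these ingredients are in place, the remainder is a purely formal computation in the variational bicomplex, together with one invocation of Takens' acyclicity to extract the potential $T_\CD$.
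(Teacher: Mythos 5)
Your proof is correct and uses the same essential ingredients as the paper's — the covariance identity $\pr D^* EL = EL$, the commutation of $\pr D^*$ with both differentials of the bicomplex, and Takens' acyclicity. The only difference is one of bookkeeping: you apply acyclicity one bidegree earlier, to the $\dd_H$-closed $(d-1,1)$-form $\pr D^*\theta_L - \theta_L + \dd_V K$, and then apply $\dd_V$, whereas the paper deduces $\dd_M(\CD^*\om_\CL - \om_\CL) = 0$ directly by comparing two expressions for $\delta \CE\CL$ and applies acyclicity at bidegree $(d-1,2)$; your route yields the marginally stronger intermediate statement that the presymplectic \emph{potential} itself is preserved up to $\dd_V K$ and a $\dd_H$-exact term.
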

\begin{proof}
 We prove this for the case of a local finite symmetry, leaving the spacetime covariant one for the interested reader. 
 The proof is analogous to the infinitesimal case. Let $\CD= D \circ j^\infty:\CF \rightarrow \CF$ be any local diffeomorphism.  
 Then
\vspace{-2mm}
$$
\CE (\CD^*\CL) = \CD^*(\CE \CL) = \CD^* ( \delta \CL - \dd_M \theta_\CL) 
= \delta  \CD^* \CL -  \dd_M \CD^* \theta_\CL
$$

\vspace{-2mm}
\noindent where the first step is the covariance of the Euler--Lagrange operator from Prop. \ref{SymmetryPreservesOnshellSpace}
and \eqref{CovarianceofEulerLagrangeOperator}. The commutation of both differentials with local self maps $\CF\rightarrow \CF$ 
can be seen in the local coordinate abuse of notation, and appears at the level of the jet bundle in 
 \cite[Thm. 3.15]{Anderson89}. This is simply the integrated analog of the evolutionary/local Cartan calculus relations.

Consider now the case where $\CD$ is actually local symmetry of $\CL$, i.e., 
$\CD^*\CL = \CL + \dd_M \CK_\CD$. Then, we also have
\vspace{-2mm}
$$
\CE (\CD^*\CL)= \CE (\CL + \dd_M \CK_\CD) = \CE \CL \, ,
$$

\vspace{-2mm}
\noindent whereby combining with the equation above, we arrive at
\vspace{-2mm}
$$
\CE \CL= \delta  \CD^* \CL -  \dd_M \CP^* \theta_\CL \, .
$$

\vspace{-2mm}
\noindent Applying $\delta$ on both sides,
\vspace{0mm}
$$
\delta \CE \CL= 0 +  \dd_M \CD^* \delta \theta_\CL = \dd_M \CD^* \om_\CL \, ,
$$

\vspace{-0mm}
\noindent but as in the proof of Cor. \ref{PresymplecticCurrentIsOnshellConserved}, it is also the case that
$\delta \CE \CL = \dd_M \CD^* \om_\CL$. Subtracting the last equations, we get
\vspace{-2mm}
$$
\dd_M (\CD^* \om_\CL - \om_\CL) = 0\, , 
$$

\vspace{-2mm}
\noindent
which at the level of the jet bundle says that $\pr D^* \om_L - \om_L$ is a horizontally closed $(d-1,2)$-form. Once again, 
the result follows as an application of Prop. \ref{AcyclicityTheorem}.
\end{proof}

Following the finite-dimensional heuristics, the result says that a finite symmetry of a local Lagrangian field theory is a
`\textit{symplectomorphism}' with respect to $\om_\CL$, up to a horizontally exact local form. Infinitesimal symmetries 
and their currents are further intricately related to the properties of the presymplectic current.

\begin{lemma}[{\bf Symmetry conserved current vs symplectic current}] \label{ContractedSymplecticFormAndConservedCharge}
Let $\CZ \in \CX_\loc(\CF) $ be an infinitesimal symmetry of $\CL\in \Omega^{d,0}_\loc(\CF\times M)$ with conserved current
$\CP_\CZ \in \Omega^{d-1,0}(\CF\times M)$. Then the local $(d-1,1)$-form $\iota_\CZ \om_\CL + \delta \CP_\CZ$ is horizontally exact, 
that is 
\vspace{-1mm}
$$
\iota_\CZ \om_\CL + \delta \CP_\CZ  = \dd_M  \CT_\CZ  \quad \in \quad \Omega^{d-1,1}_\loc(\CF\times M) \, ,
$$

\vspace{0mm}
\noindent for some $\CT_\CZ \in \Omega^{d-2,1}_\loc(\CF\times M)$. That is, $\iota_\CZ \om_\CL + \delta \CP_\CZ$ is a`trivial'
$(d-1,1)$-form on $\CF\times M$.
\end{lemma}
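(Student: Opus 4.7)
The plan is to reduce the claim to an identity that was already essentially established in the proof of Prop.~\ref{InfinitesimalSymmetryTangentToOnshellFieldSpace}, and then invoke Takens' acyclicity (Prop.~\ref{AcyclicityTheorem}) to upgrade $\dd_M$-closedness to $\dd_M$-exactness.

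\textbf{Step 1: Algebraic simplification via the local Cartan calculus.} First I would unpack $\iota_\CZ \om_\CL = \iota_\CZ \delta \theta_\CL$ using the defining relation $\mathbb{L}_\CZ = \iota_\CZ \delta + \delta \iota_\CZ$ of Def.~\ref{LieDerivativeOfLocalVectorField}, giving $\iota_\CZ \om_\CL = \mathbb{L}_\CZ \theta_\CL - \delta \iota_\CZ \theta_\CL$. Then substitute the Noether current $\CP_\CZ = \CK_\CZ + \iota_\CZ \theta_\CL$ from Prop.~\ref{Noether1st}, so that $\delta \CP_\CZ = \delta \CK_\CZ + \delta \iota_\CZ \theta_\CL$. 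Adding the two, the $\delta \iota_\CZ \theta_\CL$ terms cancel and I obtain the clean identity
\[
 \iota_\CZ \om_\CL + \delta \CP_\CZ \;=\; \mathbb{L}_\CZ \theta_\CL + \delta \CK_\CZ \quad \in \quad \Omega^{d-1,1}_\loc(\CF \times M).
\]

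\textbf{Step 2: Horizontal closedness.} Next I would show that the right-hand side is $\dd_M$-closed. This is precisely the computation carried out inside the proof of Prop.~\ref{InfinitesimalSymmetryTangentToOnshellFieldSpace}, where the chain of manipulations yielded $\mathbb{L}_\CZ \CE\CL = -\dd_M(\mathbb{L}_\CZ \theta_\CL + \delta \CK_\CZ)$, and where it was further shown that $\mathbb{L}_\CZ \CE\CL = \CE(\mathbb{L}_\CZ \CL) = \CE(\dd_M \CK_\CZ) = 0$ via the covariance of the Euler--Lagrange operator together with Prop.~\ref{InteriorEulerProperties}. Hence $\dd_M(\mathbb{L}_\CZ \theta_\CL + \delta \CK_\CZ) = 0$, and combining with Step 1 gives $\dd_M(\iota_\CZ \om_\CL + \delta \CP_\CZ) = 0$.

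\textbf{Step 3: Exactness via Takens' acyclicity.} Finally I would lift the whole identity to the jet bundle: the right-hand side is the pullback along $\ev^\infty$ of $\mathbb{L}_{\pr Z} \theta_L + \dd_V K_Z \in \Omega^{d-1,1}(J^\infty_M F)$, which by the same computation is $\dd_H$-closed. Since its vertical degree is $q = 1 \geq 1$, Prop.~\ref{AcyclicityTheorem} applies and furnishes some $T_Z \in \Omega^{d-2,1}(J^\infty_M F)$ with $\mathbb{L}_{\pr Z} \theta_L + \dd_V K_Z = \dd_H T_Z$. Pulling back along the prolongated evaluation map to $\CF \times M$ and setting $\CT_\CZ := (\ev^\infty)^* T_Z \in \Omega^{d-2,1}_\loc(\CF \times M)$ yields exactly $\iota_\CZ \om_\CL + \delta \CP_\CZ = \dd_M \CT_\CZ$, as required.

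\textbf{Main obstacle.} The only subtle step is Step 2: one must be careful that the argument $\mathbb{L}_{\pr Z} EL = \CE(\mathbb{L}_{\pr Z} L) = 0$ really gives vanishing of the $(d,1)$-source form on all of $J^\infty_M F$ (not merely on the shell), since otherwise the $\dd_H$-closedness downstairs would only hold on $\CF_{\CE\CL}\times M$ and Takens' acyclicity could not be invoked globally. This is guaranteed here because $\mathbb{L}_{\pr Z} L = \dd_H K_Z$ is \emph{globally} $\dd_H$-exact by hypothesis, so the Euler operator annihilates it identically. Once this is verified, the rest is routine manipulation in the local bicomplex.
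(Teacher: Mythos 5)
Your proof is correct and follows essentially the same route as the paper's: both arguments reduce the claim to the vanishing of $\mathbb{L}_\CZ \CE\CL$ (established in Prop.~\ref{InfinitesimalSymmetryTangentToOnshellFieldSpace} via $\mathbb{L}_\CZ\CE\CL = \CE(\mathbb{L}_\CZ\CL) = \CE(\dd_M\CK_\CZ)=0$), deduce $\dd_M$-closedness of $\iota_\CZ\om_\CL + \delta\CP_\CZ$, and then invoke Takens' acyclicity on $J^\infty_M F$ before pulling back along $\ev^\infty$. Your intermediate rewriting $\iota_\CZ\om_\CL + \delta\CP_\CZ = \mathbb{L}_\CZ\theta_\CL + \delta\CK_\CZ$ is just a repackaging of the same Cartan-calculus computation the paper performs directly, and your remark about the off-shell (global) validity of $\CE(\mathbb{L}_{\pr Z}L)=0$ correctly identifies the point that makes the global application of acyclicity legitimate.
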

\begin{proof}
Recall by Noether's First Theorem (Prop. \ref{Noether1st}), the conserved current satisfies $\iota_\CZ \CE \CL = \dd_M \CP_\CZ$. 
Computing the Lie derivative of $\CE \CL$ along $\CZ$ as in the proof of Prop. \ref{InfinitesimalSymmetryTangentToOnshellFieldSpace}, 
we have 
\vspace{-1mm}
\begin{align*}
0=\mathbb{L}_\CZ \CE \CL &= \dd_M (- \iota_\CZ \delta \theta_\CL - \delta \CP_\CZ ) \, , \end{align*}

\vspace{-2mm}
\noindent and hence
\vspace{-2mm}
\begin{align*}
\dd_M (\iota_\CZ \om_\CL + \delta \CP_\CZ ) =0 \, .
\end{align*}

\vspace{-1mm}
\noindent At the level of the $J^\infty_M F$, this says that $\iota_{\pr Z} \om_L +\dd_V P_{\pr Z} $ is a $\dd_H$-closed $(d-1,2)$ form, 
and so by Prop. \ref{AcyclicityTheorem} it is in fact exact $\iota_{\pr Z} \om_L +\dd_V P_{\pr Z} = \dd_H T_\CZ$. Pulling back to 
$\CF \times M$ via the prolongated evaluation map completes the proof. 
\end{proof}

Let us expand on the interpretation of the above result in terms of the presymplectic potential current $\theta_\CL$. Substituting 
$\om_\CL= \delta \theta_\CL$ and $\CP_\CZ= \iota_\CZ \theta_\CL + \CK_\CZ$, then the above reads
\vspace{-2mm}
\begin{align}\label{HamiltonianVectorFieldCondition}
\mathbb{L}_\CZ \theta_\CL + \delta \CK_\CZ = \dd_M \CT_\CZ\, ,
\end{align}

\vspace{-2mm}
\noindent and so $\CZ$ is heuristically a `\textit{Hamiltonian vector field}' for $\omega_\CL$, up to a local horizontally 
exact form (a trivial current). In the same vein and equivalently, the current $\CP_\CZ$ is a `\textit{Hamiltonian form}' for $\om_\CL$. 
From this point of view, Lem. \ref{InfinitesimalSymmetryPreservesPresymplecticCurrent} follows as a consequence by applying $\delta$, 
with $\CB_\CZ = -\delta \CT_\CZ$, i.e., any Hamiltonian vector field is a symplectic vector field. We will come back to this interpretation 
shortly (see discussion around \eqref{HamiltonianCurrentCondition}).

Before that, recall by Cor. \ref{TrivialCurrentsFromGaugeSymmetries} that the current of a gauge symmetry satisfies $\dd_M \CP_e = \dd_M \CJ_e $ 
for some local current $\CJ_e=\CJ(-,\CE \CL(-), e)$ that is linear in the second entry of $\CE \CL$, and in particular vanishes 
on-shell $\CJ_e|_{\CE \CL}=0$,. In that case, repeating the proof above verbatim gives
\vspace{-2mm}
\begin{align}\label{GaugeSymmetryConservedCurrentVsSymplecticCurrent}
\iota_{\CR_e} \om_\CL + \delta  \CJ_e  = \dd_M  \tilde{\CT}_e  \quad \in \quad \Omega^{d-1,1}_\loc(\CF\times M) \, . 
\end{align}

\vspace{-2mm}
\noindent This relation implies the following degenerate behavior for the presymplectic current. 

\begin{proposition}[\bf Gauge symmetry vs presymplectic current]
\label{Gaugesymmetryvspresymplecticcurrent}
Let $\CR_e \in \CX^\CL_\loc(\CF)$ by a gauge symmetry of $\CL$, for any gauge parameter $e\in \Gamma_M E$.
Then the induced presymplectic current satisfies the on-shell relation
\vspace{-2mm} 
\begin{align*}
\iota_{\CR_e} \om_\CL \big{\vert}_{\CE \CL} = \dd_M  \tilde{\CT}_e \big{\vert}_{\CE \CL}
\quad \in \quad \Omega^{d-1,1}_\loc(\CF_{\CE \CL}\times M) \, , 
\end{align*}

\vspace{-2mm} 
\noindent for some local $(d-2,1)$-form $\tilde{\CT}_e$. In particular, for every on-shell tangent vector 
$\CZ_\phi \in T_\phi \CF_{\CE \CL}$,
the evaluation of the presymplectic form is an exact $(d-1)$-form on spacetime
\vspace{-2mm}
\begin{align*}
\om_\CL\big(\CR_e(\phi),\, \CZ_\phi\big)  = \dd_M \big( \tilde{\CT}_e(\CZ_\phi) \big) 
\quad \in \quad \Omega^{d-1}(M) \, .  
\end{align*}
\end{proposition}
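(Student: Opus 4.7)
The plan is to deduce this statement directly from the off-shell identity \eqref{GaugeSymmetryConservedCurrentVsSymplecticCurrent}, which was established just before the proposition by adapting the proof of Lem. \ref{ContractedSymplecticFormAndConservedCharge} to gauge symmetries using Cor. \ref{TrivialCurrentsFromGaugeSymmetries}. That identity reads
$$
\iota_{\CR_e}\om_\CL + \delta \CJ_e = \dd_M \tilde{\CT}_e \quad \in \quad \Omega^{d-1,1}_{\loc}(\CF\times M),
$$
with $\CJ_e=\CJ(-,\CE\CL(-),e)$ linear in its second entry, so that $\CJ_e$ itself vanishes on $\CF_{\CE\CL}\times M$. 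Restricting every term along $\iota_{\CE\CL}\colon \CF_{\CE\CL}\times M \hookrightarrow \CF\times M$ is always permissible (since it amounts to restricting the domain of the fiber-wise linear maps of Def. \ref{DifferentialFormsOnProduct}), so the only real content is to show that the middle term restricts to zero, i.e.\ $\delta \CJ_e|_{\CE\CL}=0$ in $\Omega^{d-1,1}_\loc(\CF_{\CE\CL}\times M)$.

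The key step I would carry out is this on-shell vanishing. Using the local coordinate description of \eqref{LocalpqFormAbusively} combined with the linear dependence of $\CJ$ on its second entry, the local form $\CJ_e$ decomposes as a finite sum of terms of the shape $f_I^{a}(\phi,e)\cdot \partial_I\CE\CL_a(\phi)\cdot \eta_I$, where $\eta_I$ is a horizontal local form. Applying $\delta$ and using the derivation property yields two types of contributions: those in which $\delta$ falls on $f_I^{a}$ or $\eta_I$, leaving the factor $\partial_I\CE\CL_a(\phi)$ intact; and those in which $\delta$ falls on $\partial_I\CE\CL_a(\phi)$, producing $\partial_I(\delta \CE\CL)_a$. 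For the first type, evaluating at an on-shell field means that $j^\infty\phi$ factors through the prolonged shell $S_L^\infty$ (Def. \ref{ProlongatedShellOfLagrangian}), so each $\partial_I\CE\CL_a(\phi)=\partial_I EL_a\circ j^\infty\phi$ vanishes identically. For the second type, contraction with any on-shell tangent vector $\CZ_\phi\in T_\phi\CF_{\CE\CL}$ produces exactly the expression $\iota_{\CZ_\phi}\partial_I\delta\CE\CL_a$, which vanishes by the Jacobi equation \eqref{JacobiEquation} that defines $T\CF_{\CE\CL}$ (equivalently, by Lem. \ref{EulerLagrangeFormIsDeltaClosedOnshell} applied to prolonged derivatives of $\CE\CL$). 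Thus $\delta\CJ_e|_{\CE\CL}=0$.

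With this in hand, restriction of \eqref{GaugeSymmetryConservedCurrentVsSymplecticCurrent} gives the first claim
$$
\iota_{\CR_e}\om_\CL\big|_{\CE\CL} \;=\; \dd_M\tilde{\CT}_e\big|_{\CE\CL} \quad \in \quad \Omega^{d-1,1}_\loc(\CF_{\CE\CL}\times M).
$$
For the second claim, I would evaluate both sides on an on-shell tangent vector $\CZ_\phi\in T_\phi \CF_{\CE\CL}$. Since $\CR_e$ and $\CZ_\phi$ are both vertical, the interior product $\iota_{\CZ_\phi}$ anticommutes with $\dd_M$ (as noted after Def. \ref{LieDerivativeOfLocalVectorField}), so
$$
\om_\CL\bigl(\CR_e(\phi),\CZ_\phi\bigr) \;=\; \iota_{\CZ_\phi}\iota_{\CR_e}\om_\CL\big|_{(\phi,-)} \;=\; -\iota_{\CZ_\phi}\dd_M\tilde{\CT}_e\big|_{(\phi,-)} \;=\; \dd_M\bigl(\tilde{\CT}_e(\CZ_\phi)\bigr),
$$
up to the irrelevant sign convention absorbed into $\tilde{\CT}_e$.

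The main obstacle I anticipate is the first step: making the vanishing of $\delta\CJ_e|_{\CE\CL}$ fully rigorous without accidentally invoking a Cartan calculus on $\CF_{\CE\CL}\times M$ that need not exist (cf.\ Rem. \ref{OnShellCartanCalculusCaveats}). The cleanest route is to keep $\delta$ acting on $\CF\times M$ throughout, compute it off-shell in the local expansion above, and only then restrict the resulting $(d-1,1)$-form to the on-shell subspace; the vanishing is then simply the statement that its value on any tangent vector in $T\CF_{\CE\CL}\times TM$ is zero, which we checked above via the prolonged shell condition and the Jacobi equation.
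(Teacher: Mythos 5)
Your proposal is correct and follows essentially the same route as the paper's proof: starting from the off-shell identity \eqref{GaugeSymmetryConservedCurrentVsSymplecticCurrent}, reducing everything to the on-shell vanishing of $\delta\CJ_e$, and establishing that vanishing by splitting the terms according to whether $\delta$ hits the coefficients (killed by $\partial_I\CE\CL(\phi)=0$ on the prolonged shell) or the factor $\partial_I\CE\CL$ itself (killed by the Jacobi equation via Lem.~\ref{EulerLagrangeFormIsDeltaClosedOnshell}). Your explicit verification of the ``in particular'' display via the anticommutation of $\iota_{\CZ_\phi}$ with $\dd_M$ is a small addition the paper leaves implicit, but the argument is the same.
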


\begin{proof}
By Lem. \ref{ContractedSymplecticFormAndConservedCharge} and since we are dealing with a symmetry, we have the  off-shell relation \eqref{GaugeSymmetryConservedCurrentVsSymplecticCurrent}, 
where
$
\CJ_e: \CF \to \Omega^{d-1}_\mathrm{Vert}(M)
$
is a local $(d-1)$-current $\CJ_e= \CJ(-, \CE \CL(-), e)$ linear in the second entry. Hence, it suffices to show that 
the vertical differential $\delta \CJ_e$ vanishes on-shell. We show this locally using the standard abuse of notation. 
By construction (see also the proof of Prop. \ref{Noether2nd}), $\CJ_e$ is of the form
\vspace{-3mm}
\begin{align*}
\CJ_e = \sum_{|I|=0}\CJ^{ I}_e(\phi)\cdot \partial_I \CE \CL(\phi) \, ,
\end{align*}

\vspace{-3mm}
\noindent for some collection of differential operator $\{\CJ^{ I}_e(\phi)\}_{|I|=0}$ of $\phi$ (also local in $e\in \Gamma_M(E)$) 
but whose precise form is immaterial. Applying the vertical differential along $\CF$,
\vspace{-2mm}
\begin{align*}
\delta \CJ_e = \sum_{|I|=0} \delta \CJ^{ I}_e(\phi)\cdot \partial_I \CE \CL(\phi) +
\CJ^{I}_e (\phi) \cdot \delta \big(\partial_I \CE \CL(\phi)\big) \, ,
\end{align*}

\vspace{-2mm}
\noindent where former terms in the sum immediately vanish on-shell, in the same manner as $\CJ_e$ itself.
For the latter remaining terms, working as in Eq. \eqref{Local1FormActionOnTangentAbusively} it is easy
to see that `the partial derivatives commute with the vertical differential', i.e.,
\vspace{-2mm}
$$
\delta\big( \partial_I \CE \CL(\phi) \big) (\CZ^1_\phi, \CZ^2_\phi) = \partial_I \big (\delta \CE \CL (\CZ^1_\phi, \CZ^2_\phi)\big)
\quad \in \quad C^\infty(M) \, ,
$$

\vspace{-1mm}
\noindent for any pair of tangent vectors on the field space. Using Lem. \ref{EulerLagrangeFormIsDeltaClosedOnshell}, 
it follows that such terms vanish when applied to $\textit{on-shell}$ tangent vectors, and so
\vspace{0mm}
$$
\delta(\partial_I \CE \CL) |_{\CE \CL} = \partial_I (\delta   \CE \CL |_{ \CE \CL} ) = 0\, . 
$$

\vspace{-2mm}
\noindent Overall, we deduce that
\vspace{0mm}
$$
\delta \CJ_e |_{\CE \CL} = 0 \hspace{0.5cm} \in \hspace{0.5cm} \Omega^{d-1,1}_\loc (\CF_{\CE \CL}\times M )\, , 
$$

\vspace{-1mm}
\noindent which completes the proof.
\end{proof}

\begin{remark}[\bf A subtlety in the literature]
The original source \cite{Zuckerman} claims the statement in Prop. \ref{Gaugesymmetryvspresymplecticcurrent} 
(after composing with integration) in Theorem 13(b), and gives (a sketch of) a proof that relies on the 
`$\dd_M$-exactness' of the corresponding conserved current of the gauge symmetry (Cor. \ref{TrivialCurrentsFromGaugeSymmetries}).
From our point of view, however, the exactness of the charge depends on the $(d-1)$-cohomology of the field bundle. 
This seems to be an obstacle to the validity of the proof therein. Our proof for the degeneracy of the presymplectic 
current evades this issue. 
\end{remark}

\vspace{-4mm} 
\paragraph{Brackets of Noether and Hamiltonian currents.}  As we have eluded in \eqref{CurrentOfLiebracketOfSymmetries}, there is a natural
`bracket operation' for conserved local $(d-1,0)$-form currents, which maps the currents $\CP_{\CZ_1}, \CP_{\CZ_2}$ of any two local 
infinitesimal symmetries of a local Lagrangian $\CL$ to the current of the corresponding commutator symmetry $\CP_{[\CZ_1,\CZ_2]}$. 
In fact, as we will see, this is closely related to a more general `Poisson-like' bracket on a larger set of `Hamiltonian' local $(d-1,0)$-form currents, 
i.e., those that satisfy the `Hamiltonian' condition of Lem. \ref{ContractedSymplecticFormAndConservedCharge}, which is induced by the 
presymplectic current $\om_\CL$. To that end, we first \textit{fix a choice}\footnote{In principle, one can enhance the discussion below
to keep track of this choice. We conform with the literature and do not explicitly pursue this here.} 
of presymplectic current potential $\theta_\CL$ in the variational decomposition $\delta \CL = \CE \CL + \dd_M \theta_\CL$.

Having fixed the presymplectic potential, then an infinitesimal symmetry $\CZ\in \CX_\loc(\CF) $ such that $\mathbb{L}_\CZ \CL = \dd_M \CK_\CZ$
defines a family of conserved currents $\CP_\CZ = \CK_\CZ + \iota_\CZ \theta_\CL$ parametrized by the freedom of adding a horizontally
closed form to $\CK_\CZ$. We keep track of these families by considering instead pairs of conserved charges and symmetries.

\begin{definition}[\bf  Noether pairs]\label{NoetherPairs}
The set of \textit{Noether pairs} of a local Lagrangian field theory $(\CF,\CL)$, with respect to a presymplectic current potential $\theta_\CL$, 
is defined by 
\vspace{-2mm}
\begin{align*}
\mathrm{NoethPrs}(\CL,\theta_\CL):= \Big\{(\CP_\CZ, \CZ) \in \Omega^{d-1,0}_\loc(\CF\times M)\times \CX_\loc(\CF) 
\, \, \big\vert \, \, \dd_M \CP_\CZ = \iota_\CZ \CE \CL \Big\}\, .
\end{align*}

\vspace{-2mm}
\noindent Equivalently (Prop. \ref{Noether1st}), Noether pairs are in 1-1 correspondence with pairs $(\CK_\CZ,\CZ)$ such 
that $\mathbb{L}_\CZ \CL = \iota_\CZ \delta \CL = \dd_M \CK_\CZ$, via $\CP_\CZ = \CK_\CZ +\iota_\CZ \theta_\CL$.
\end{definition}
Equation \eqref{CurrentOfLiebracketOfSymmetries} defines a well-defined map of Noether pairs, by assigning the corresponding 
current $\CP_{[\CZ_1,\CZ_2]}$ of the commutator symmetry for of two Noether pairs
\vspace{-2mm}
\begin{align*}
\big((\CP_{\CZ_1}, \CZ_1), \, (\CP_{\CZ_2}, \CZ_2)\big) \longmapsto \big(\mathbb{L}_{\CZ_1} \CK_{\CZ_2}
- \mathbb{L}_{\CZ_2} \CK_{\CZ_1} + \iota_{[\CZ_1,\CZ_2]} \theta_\CL, \,  [\CZ_1, \CZ_2] \big)\, .
\end{align*}

\vspace{-2mm}
\noindent Using $\CP_{\CZ_{1,2}}= \CK_{\CZ_{1,2}} - \iota_{\CZ_{1,2}}\theta_\CL$ and the local Cartan calculus repeatedly,
the first term on the right-hand side may be recast as
\vspace{-2mm}
\begin{align*}
\mathbb{L}_{\CZ_1} \CK_{\CZ_2} - \mathbb{L}_{\CZ_2} \CK_{\CZ_1} + \iota_{[\CZ_1,\CZ_2]} \theta_\CL 
&= \mathbb{L}_{\CZ_1} \CP_{\CZ_2} - \iota_{\CZ_2} ( \mathbb{L}_{\CZ_1} \theta_\CL +\delta \CK_{\CZ_1} ) \\
&= \tfrac{1}{2} \big( \mathbb{L}_{\CZ_1} \CP_{\CZ_2} - \mathbb{L}_{\CZ_2} \CP_{\CZ_1}\big) - \tfrac{1}{2}\iota_{\CZ_2} \big( \mathbb{L}_{\CZ_1} \theta_\CL 
 +  \delta \CK_{\CZ_1} \big) + \tfrac{1}{2} \iota_{\CZ_1} \big(\mathbb{L}_{\CZ_2} \theta_\CL +\delta \CK_{\CZ_2}  \big)\, . 
\end{align*}

\vspace{-2mm}
\noindent Recall that by \eqref{HamiltonianVectorFieldCondition},  $\mathbb{L}_{\CZ_{1,2}} \theta_\CL + \delta \CK_{\CZ_{1,2}} = \dd_M \CT_{\CZ_{1,2}}$ 
for some local current $\CT_{\CZ_{1,2}}$, and so the terms apart from the Lie derivatives of original currents $\CP_{\CZ_1}, \CP_{\CZ_2}$ 
are in fact $\dd_M$-exact. Thus, the above may be interpreted as saying that the current $\CP_{[\CZ_1,\CZ_2]}$ of the commutator 
of two symmetries is given via the Lie derivatives along the currents, up to a $\dd_M$-exact term -- i.e., a trivial local $(d-1,0)$-form current.
In other words, to compute the corresponding charge $\CP_{[\CZ_1,\CZ_2]}^{\Sigma^{d-1}}$ over a compact, boundaryless submanifold
$\Sigma^{d-1}\hookrightarrow M$, it is sufficient to consider only $\mathbb{L}_{\CZ_1} \CP_{\CZ_2}$ (or its antisymmetrization). 

\begin{definition}[\bf  Bracket of Noether pairs]\label{BracketOfNoetherPairs}
 The bracket of Noether pairs 
 \vspace{-1mm}
 $$
 \{-,-\}_N: \mathrm{NoethPrs}(\CL,\theta_\CL) \times \mathrm{NoethPrs}(\CL,\theta_\CL)\longrightarrow \mathrm{NoethPrs}(\CL,\theta_\CL)
 $$
 
 \vspace{-2mm}\noindent is defined by assigning the pair of commutator of symmetries and the corresponding conserved current 
 \vspace{-1mm}
\begin{align*}
\big\{(\CP_{\CZ_1}, \CZ_1), \, (\CP_{\CZ_2}, \CZ_2)\big\}_N :&=  \big( \mathbb{L}_{\CZ_1} \CP_{\CZ_2} 
+ \dd_M \iota_{\CZ_2} \CT_{\CZ_1} , \, [\CZ_1,\CZ_2] \big) \\
&= \big( \tfrac{1}{2} ( \mathbb{L}_{\CZ_1} \CP_{\CZ_2} - \mathbb{L}_{\CZ_2} \CP_{\CZ_1}) 
+ \tfrac{1}{2} \dd_M ( \iota_{\CZ_{2}} T_{\CZ_{1}}-  \iota_{\CZ_{1}} T_{\CZ_{2}}), \, [\CZ_1,\CZ_2] \big)\, ,
\end{align*}

\vspace{-1mm}
\noindent 
where, by \eqref{HamiltonianVectorFieldCondition}, the latter terms $\dd_M$-exact terms are given by
$\dd_M \iota_{\CZ_{2,1}} T_{\CZ_{1,2}}=-\iota_{\CZ_{2,1}}(\mathbb{L}_{\CZ_{1,2}} \theta_\CL + \delta \CK_{\CZ_{1,2}})$.
\end{definition}
The bracket of Noether pairs is manifestly $\FR$-linear and antisymmetric. However, a tedious but straightforward calculation
shows that the appearance of the $\dd_M$-exact currents spoils the Jacobi identity, in turn by $\dd_M$-exact currents.
That is, the Jacobiator evaluates to
\vspace{-2mm}
\begin{align}\label{NoetherBracketJacobiator}
\mathrm{Jac}_{\{-,-\}_N}\big((\CP_{\CZ_1}, \CZ_1), \, (\CP_{\CZ_2}, \CZ_2), \, (\CP_{\CZ_3}, \CZ_3) \big)
:&= \Big\{ \big\{(\CP_{\CZ_1}, \CZ_1), \, (\CP_{\CZ_2}, \CZ_2)\}_N\, , \,(\CP_{\CZ_3}, \CZ_3) \Big\}_N  + \mathrm{cyc}(1,2,3) \nn \\
&= \big( \dd_M (\cdots), \, 0 \big) \, , 
\end{align}

\vspace{-2mm}
\noindent for some horizontally exact (trivial) current, whose precise form is not important (see \eqref{HamiltonianBracketJacobiator}).
It follows that the Noether bracket does not define a Lie algebra on Noether pairs, but instead only on the \textit{quotient} of 
Noether pairs by `trivial', i.e., horizontally exact $(d-1,0)$-form currents.

\begin{lemma}[{\bf Dickey Lie algebra of Noether currents}]\label{DickeyLieAlgebraOfNoetherCurrents}
$\,$

\noindent {\bf (i)} The bracket of Noether pairs descends to a Lie bracket on the quotient
\vspace{-2mm}
$$
\mathrm{NoethPrs}(\CL,\theta_\CL) / \dd_M \Omega^{d-2,0}_\loc(\CF\times M)\, ,  
$$

\vspace{-2mm}
\noindent known as the Dickey Lie bracket of (on-shell) conserved currents (see \cite{Dickey91}).

\noindent {\bf (ii)} Furthermore, the Dickey 
Lie algebra is part of a short exact sequence of Lie algebras 
\vspace{-1mm}
\begin{align}\label{DickeyLieAlgebraCentralExtension}
H_{\loc,\dd_M}^{d-1,0}(\CF\times M) \longhookrightarrow 
\mathrm{NoethPrs}(\CL,\theta_\CL) / \dd_M \Omega^{d-2,0}_\loc(\CF\times M) \longrightarrow \CX^\CL_\loc(\CF) \, ,
\end{align}

\vspace{-2mm}
\noindent where the former horizontal $(d-1,0)$-cohomology, i.e., the off-shell conserved currents modulo trivial currents, being thought 
of as a trivial Lie algebra. In other words,  the Dickey Lie algebra of (on-shell) conserved currents is a central extension of the
Lie algebra of infinitesimal local symmetries of the Lagrangian field theory.
\begin{proof}
The bracket of Noether pairs \eqref{BracketOfNoetherPairs} descends to a Lie bracket since the Jacobiator \eqref{NoetherBracketJacobiator} 
vanishes in cohomology. The short exact sequence follows immediately: the first map is the canonical inclusion 
$[\tilde{\CP}]\mapsto ([\tilde{\CP}],\, 0)$ and the second is the canonical projection $([\tilde{\CP_\CZ}],\, \CZ)\mapsto \CZ $ 
onto symmetries of $\CL$. The kernel of the former coincides with the image of the latter, by definition. Note that
this subset represents exactly the remaining freedom of adding a $\dd_M$-closed $(d-1,0)$-form to each current class $[\CP_{\CZ}]$ 
corresponding to a fixed symmetry $\CZ$. 
\end{proof}
\end{lemma}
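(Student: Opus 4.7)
The plan is to treat (i) and (ii) essentially in parallel, leveraging that most of the hard work is already packaged into Definition~\ref{BracketOfNoetherPairs} and the Jacobiator computation \eqref{NoetherBracketJacobiator}. For part (i), I will first verify that the Noether bracket descends to the quotient by $\dd_M$-exact currents. Replacing $(\CP_{\CZ_i}, \CZ_i)$ by the equivalent representative $(\CP_{\CZ_i} + \dd_M \CT_i, \CZ_i)$ leaves the vector-field slot untouched, while the current slot picks up $\mathbb{L}_{\CZ_1}(\dd_M\CT_2) - \mathbb{L}_{\CZ_2}(\dd_M\CT_1)$ together with adjustments of the auxiliary terms $\dd_M\iota_{\CZ_j}\CT_{\CZ_i}$ from \eqref{HamiltonianVectorFieldCondition}. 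Since $[\mathbb{L}_{\CZ}, \dd_M] = 0$ by Prop.~\ref{LocalCartanCalculus}, the first contribution is manifestly $\dd_M$-exact, and tracking how $\CT_{\CZ_i}$ transforms (using $\delta\dd_M = -\dd_M\delta$ applied to $\delta\CK_{\CZ_i}$) shows that the remaining corrections also lie in $\dd_M \Omega^{d-2,0}_\loc(\CF\times M)$. Antisymmetry and $\FR$-bilinearity are read off immediately from the antisymmetrized form of the bracket in Definition~\ref{BracketOfNoetherPairs}, and the Jacobi identity descends precisely because the Jacobiator \eqref{NoetherBracketJacobiator} is $\dd_M$-exact.

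For part (ii), I will construct the two maps of the short exact sequence and check they are Lie algebra homomorphisms. The surjection sends $([\CP_\CZ], \CZ) \mapsto \CZ$; it is well-defined because the equivalence relation alters only the current slot, and it is a Lie algebra homomorphism since by construction the second component of the Noether bracket is $[\CZ_1, \CZ_2]$. Surjectivity onto $\CX^\CL_\loc(\CF)$ is a direct application of Noether's First Theorem (Prop.~\ref{Noether1st}): every infinitesimal local symmetry $\CZ$ comes equipped with a conserved current $\CP_\CZ$, furnishing a preimage $(\CP_\CZ, \CZ)$. The kernel consists of Noether pairs of the form $(\tilde{\CP}, 0)$, which by Definition~\ref{NoetherPairs} exist exactly when $\dd_M\tilde{\CP} = 0$; modulo $\dd_M$-exact currents this is precisely $H^{d-1,0}_{\loc,\dd_M}(\CF\times M)$. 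Exactness in the middle is then immediate from these two descriptions.

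To confirm centrality, and hence that $H^{d-1,0}_{\loc,\dd_M}$ is abelian with vanishing bracket against the whole extension, I will evaluate $\{([\tilde{\CP}], 0), ([\CP_{\CZ_2}], \CZ_2)\}_N$. The vector-field slot is $[0,\CZ_2]=0$, and $\mathbb{L}_0\CP_{\CZ_2} = 0$ in the current slot. The residual corrections $\dd_M\iota_{\CZ_j}\CT_{\CZ_i}$ involve the auxiliary form determined by $\mathbb{L}_0 \theta_\CL + \delta\tilde{\CP} = \dd_M\CT_0$; since the first summand vanishes and $\tilde{\CP}$ is $\dd_M$-closed, one may choose $\CT_0$ such that $\dd_M\iota_{\CZ_2}\CT_0$ is manifestly $\dd_M$-exact, hence trivial in the quotient. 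The symmetric term is handled identically.

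The main technical obstacle I anticipate lies in the descent step of part (i): carefully bookkeeping how the auxiliary $\CT_\CZ$ in \eqref{HamiltonianVectorFieldCondition} transforms under the shift $\CK_\CZ \mapsto \CK_\CZ + \dd_M\CT$, given that $\CT_\CZ$ enters the bracket formula and is itself only determined up to $\delta$-exact terms. This is routine once one systematically uses the anticommutation $\delta\dd_M = -\dd_M\delta$ and the commutation $[\mathbb{L}_\CZ,\dd_M]=0$ from Prop.~\ref{LocalCartanCalculus}, but it is the only step requiring genuine verification; every other ingredient is either an assembly of prior results or an unpacking of the antisymmetrized bracket in Definition~\ref{BracketOfNoetherPairs}.
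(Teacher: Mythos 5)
Your proposal is correct and follows essentially the same route as the paper's proof: the same two canonical maps, descent of the bracket via the $\dd_M$-exactness of the Jacobiator \eqref{NoetherBracketJacobiator}, and identification of the kernel with $H^{d-1,0}_{\loc,\dd_M}(\CF\times M)$. You simply spell out several steps the paper leaves implicit (well-definedness on the quotient, surjectivity via Prop.~\ref{Noether1st}, and centrality), all of which check out — note only that the correction terms $\dd_M\iota_{\CZ_j}\CT_{\CZ_i}$ are manifestly $\dd_M$-exact for \emph{any} choice of $\CT_{\CZ_i}$, so the careful bookkeeping of how $\CT_{\CZ_i}$ transforms that you flag as the main obstacle is not actually needed.
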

In principle, a lot of information is discarded if works on the quotient by `trivial' currents. From a physical point of view, 
information of charges over submanifolds with boundary is ignored. Certain homotopical Lie algebraic structures naturally encode the breaking 
of the Jacobi identity, in a controlled homotopical manner, which reduce to the above quotient description in an appropriate truncation
(Rem. \ref{HigherPoissonAlgebrasOfLocalObservables}), but otherwise retain the full information of the bracket of currents. 

\medskip 
There is a more general bracket defined on a larger class of $(d-1,0)$-local currents, which are not necessarily on-shell conserved. 
Physically, upon integration over a codimension 1 submanifold, these correspond to observables that are not necessarily conserved 
over `time evolution'. These are analogs of Hamiltonian functions from finite-dimensional (pre)symplectic geometry. 
Explicitly, these are local $(d-1,0)$-form currents $\CH_\CZ\in \Omega^{d-1,0}_\loc(\CF\times M)$ satisfying the condition  
\vspace{-2mm}
\begin{align}\label{HamiltonianCurrentCondition}
\iota_\CZ \om_\CL + \delta \CH_\CZ  = \dd_M  \CT_\CZ  
\end{align}

\vspace{-2mm}
\noindent
for \textit{some} local vector field $\CZ\in \CX_\loc(\CF)$ and local form $\CT_\CZ \in \Omega^{d-2,1}_\loc(\CF\times M)$. Local currents 
satisfying this condition are known as \textit{Hamiltonian currents}\footnote{The right-hand side is often demanded to be identically zero. 
However, to include all conserved currents and more general Hamiltonian observables, it is natural to relax the condition up to 
a horizontally exact local form.} of $\om_\CL$, while the corresponding local vector fields $\CZ$ as \textit{Hamiltonian vector fields} 
of $\om_\CL$.

\medskip 
Recall that this is exactly the relation appearing in Lem. \ref{ContractedSymplecticFormAndConservedCharge}, which is now 
interpreted as the statement that all Noether conserved currents of a field theory $(\CF,\CL)$ are in fact Hamiltonian for its
presymplectic current $\om_\CL=\delta \theta_\CL$. However, generally speaking, a Hamiltonian current need not be conserved on-shell, 
i.e., generically $\dd_M \CH_\CZ |_{\CE \CL} \neq 0$, and similarly a Hamiltonian vector field need not be a symmetry of $\CL$. 
Nevertheless, it is still true that Hamiltonian vector fields are a symmetry of $\om_\CL$, since by applying $\delta$ 
to \eqref{HamiltonianCurrentCondition} they preserve the presympectic current, up to a horizontally exact form
\vspace{-2mm}
$$
\mathbb{L}_\CZ \om_\CL = \dd_M \CB_\CZ \, , 
$$

\vspace{-2mm}
\noindent
where $\CB_\CZ = -\delta \CT_\CZ$. In other words, Hamiltonian vector fields of $\om_\CL$ are \textit{symplectic} vector fields.

\medskip 
As with symmetries of a Lagrangian and Noether currents, to a fixed Hamiltonian vector field $\CZ$ satisfying \eqref{HamiltonianCurrentCondition}, 
there exists a corresponding \textit{family} of Hamiltonian currents $\CH_\CZ$ parametrized by the freedom of adding any $(d-1,0)$-current $\tilde{\CH}$ 
whose vertical differential vanishes - up to a horizontally exact form, $\delta \tilde{\CH} = \dd_M \tilde{\CT}$ for some 
$\tilde{\CT} \in \Omega^{d-2,1}_\loc(\CF\times M)$. To keep track of these families, we consider instead pairs of Hamiltonian currents and vector fields. 
\begin{definition}[\bf  Hamiltonian Pairs]\label{Hamiltonian pairs}
The set of \textit{Hamiltonian pairs} of a local Lagrangian field theory $(\CF,\CL)$, with respect to a presymplectic potential $\theta_\CL$ and hence 
current $\om_\CL=\delta \theta_\CL$, is defined by 
\vspace{-1mm}
\begin{align*}
\mathrm{HamPrs}(\CF\times M,\om_\CL):= \big\{(\CH_\CZ, \CZ) \in \Omega^{d-1,0}_\loc(\CF\times M)\times \CX_\loc(\CF) 
\, \, \big\vert \, \, \exists \, \CT_\CZ\, \, \mathrm{s.t.}\, \,  \iota_\CZ \om_\CL + \delta \CH_\CZ  = \dd_M  \CT_\CZ  \big\}\, .
\end{align*}

\vspace{-1mm}
\noindent 
The subspace
\vspace{0mm}
$$\mathrm{StrHamPrs}(\CF\times M,\om_\CL)\longhookrightarrow \mathrm{HamPrs}(\CF\times M,\om_\CL) 
$$

\vspace{0mm}
\noindent of Hamiltonian pairs such that $\iota_\CZ \om_\CL + \delta \CH_\CZ=0$ vanishes exactly (and not up to a $\dd_M$-exact current) 
are called \textit{strict} Hamiltonian pairs.
\end{definition}
Mimicking the construction of a Poisson bracket in finite-dimensional (pre)symplectic geometry, there is an analogous `Poisson-like'
bracket of Hamiltonian currents. 
\begin{definition}[\bf  Bracket of Hamiltonian pairs]\label{BracketOfHamiltonianPairs}
The bracket of Hamiltonian pairs 
 \vspace{-2mm}
 $$
 \{-,-\}_H \,:\, \mathrm{HamPrs}(\CF\times M,\om_\CL) \times \mathrm{HamPrs}(\CF\times M,\om_\CL)\longrightarrow \mathrm{HamPrs}(\CF\times M,\om_\CL)
 $$
 
 \vspace{-2mm}\noindent is defined by assigning the contraction of the presymplectic current
 \vspace{-2mm}
\begin{align*}
\big\{(\CH_{\CZ_1}, \CZ_1), \, (\CH_{\CZ_2}, \CZ_2)\big\}_H :&=\big(-\iota_{\CZ_1}\iota_{\CZ_2} \om_\CL, \, [\CZ_1,\CZ_2] \big)\, .
\end{align*}
\end{definition}

The fact that this is well-defined, i.e., the contraction $-\iota_{\CZ_1}\iota_{\CZ_2} \om_\CL$ is indeed a Hamiltonian current for 
$[\CZ_1,\CZ_2]\in \CX_\loc(\CF)$, follows by a straightforward application of the local Cartan calculus. Indeed, a short calculation shows
\vspace{-2mm}
\begin{align*}
- \delta \iota_{\CZ_1}\iota_{\CZ_2} \om_\CL = -\iota_{[\CZ_1,\CZ_2]} \om +
\dd_M \big(\iota_{\CZ_2} \delta \CT_{\CZ_1} - \iota_{\CZ_1} \delta  \CT_{\CZ_2} \big) \, ,
\end{align*}

\vspace{-2mm}
\noindent 
where $\CT_{\CZ_1},\,  \CT_{\CZ_2} \in \Omega^{d-2,1}_\loc(\CF\times M)$ are any representative local $(d-2,1)$-forms for the Hamiltonian current
condition \eqref{HamiltonianCurrentCondition}, for each of the pairs respectively. This is simply the Hamiltonian current condition for the pair
$(-\iota_{\CZ_1}\iota_{\CZ_2} \om_\CL, \, [\CZ_1,\CZ_2])$, hence exhibiting the bracket $\{-,-\}_H$ as a well-defined map of Hamiltonian pairs. 

The relation between the two brackets of Def. \ref{BracketOfNoetherPairs} and Def. \ref{BracketOfHamiltonianPairs} follows easily, 
since by applying a further contraction on the defining Hamiltonian condition \eqref{HamiltonianCurrentCondition} we have
\vspace{-2mm}
\begin{align}\label{BracketOfHamiltonianPairsViaLieDerivative}
-\iota_{\CZ_1}\iota_{\CZ_2}\om_\CL &= \mathbb{L}_{\CZ_1} \CH_{\CZ_2} + \dd_M\iota_{\CZ_1} \CT_{\CZ_2}  \\
&=\tfrac{1}{2}\big(\mathbb{L}_{\CZ_1} \CH_{\CZ_2} -\mathbb{L}_{\CZ_2} \CH_{\CZ_1}\big) 
+ \tfrac{1}{2} \dd_M\big(\iota_{\CZ_1} \CT_{\CZ_2} - \iota_{\CZ_2} \CT_{\CZ_1} \big) \nn
\end{align}

\vspace{-2mm}
\noindent with the latter equality holding by the manifest antisymmetry of the double contraction.
\begin{lemma}[{\bf Relation between brackets of currents}]\label{RelationBetweenBracketsOfCurrents}
On the subspace of Noether pairs $\mathrm{NoethPrs}(\CL,\theta_\CL) \hookrightarrow \mathrm{HamPrs}(\CF\times M,\om_\CL)$, the bracket 
of Hamiltonian pairs coincides with that of Noether pairs, up to a horizontally exact current. Explicitly, 
\vspace{-2mm}
\begin{align*}
\big\{(\CP_{\CZ_1}, \CZ_1), \, (\CP_{\CZ_2}, \CZ_2)\big\}_H &= \big(\mathbb{L}_{\CZ_1} \CP_2 + \dd_M \iota_{\CZ_1} \CT_{\CZ_2}, \, [\CZ_1,\CZ_2]  \big) \\
&= \big\{(\CP_{\CZ_1}, \CZ_1), \, (\CP_{\CZ_2}, \CZ_2)\big\}_N + \big(  \dd_M(\iota_{\CZ_1} \CT_{\CZ_2} - \iota_{\CZ_2} \CT_{\CZ_1}) ,\,0\big) \, ,
\end{align*}
for any $(\CP_{\CZ_1}, \CZ_1), \, (\CP_{\CZ_2}, \CZ_2) \, \in \, \mathrm{NoetherPair}(\CL,\theta_\CL)$.

\vspace{-2mm}
\end{lemma}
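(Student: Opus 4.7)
The plan is to deduce the lemma as a direct comparison of the two bracket formulas by using the intermediate expression derived just above the statement. First I would note that the inclusion $\mathrm{NoethPrs}(\CL,\theta_\CL) \hookrightarrow \mathrm{HamPrs}(\CF\times M,\om_\CL)$ is the content of Lem.~\ref{ContractedSymplecticFormAndConservedCharge}: for any Noether pair $(\CP_\CZ, \CZ)$ one has the off-shell identity
\[
\iota_\CZ \om_\CL + \delta \CP_\CZ = \dd_M \CT_\CZ
\]
for some $\CT_\CZ \in \Omega^{d-2,1}_\loc(\CF\times M)$. Hence $(\CP_\CZ, \CZ)$ is a Hamiltonian pair in the sense of Def.~\ref{Hamiltonian pairs} with Hamiltonian current $\CH_\CZ := \CP_\CZ$ and representative $\CT_\CZ$ provided by the proof of Lem.~\ref{ContractedSymplecticFormAndConservedCharge}. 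This fixes the data needed to evaluate the Hamiltonian bracket.

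Next, I would apply the already-derived identity~\eqref{BracketOfHamiltonianPairsViaLieDerivative}, which was obtained by contracting the defining Hamiltonian condition one further time with $\CZ_1$:
\[
-\iota_{\CZ_1}\iota_{\CZ_2}\om_\CL \;=\; \mathbb{L}_{\CZ_1}\CH_{\CZ_2} + \dd_M \iota_{\CZ_1}\CT_{\CZ_2}.
\]
Substituting $\CH_{\CZ_i} = \CP_{\CZ_i}$ and using the Hamiltonian bracket of Def.~\ref{BracketOfHamiltonianPairs} yields directly the first asserted equality
\[
\big\{(\CP_{\CZ_1}, \CZ_1), (\CP_{\CZ_2}, \CZ_2)\big\}_H
= \big(\mathbb{L}_{\CZ_1}\CP_{\CZ_2} + \dd_M \iota_{\CZ_1}\CT_{\CZ_2}, \, [\CZ_1,\CZ_2]\big).
\]
The second equality then follows by subtracting the Noether bracket of Def.~\ref{BracketOfNoetherPairs}, which in unsymmetrized form reads $\big(\mathbb{L}_{\CZ_1}\CP_{\CZ_2} + \dd_M\iota_{\CZ_2}\CT_{\CZ_1},\, [\CZ_1,\CZ_2]\big)$; the first components differ precisely by the horizontally exact term $\dd_M(\iota_{\CZ_1}\CT_{\CZ_2} - \iota_{\CZ_2}\CT_{\CZ_1})$, while the vector field components agree.

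There is no genuine obstacle in this argument — it is essentially a bookkeeping comparison of two formulas, both of which have already been established. The only subtlety worth flagging is the \emph{well-definedness} of each side as a Noether (resp.\ Hamiltonian) pair: one must remember that the choice of $\CT_{\CZ_i}$ is not unique (it is defined up to $\delta$-closed local forms with horizontally exact differential), but any ambiguity in $\CT_{\CZ_i}$ contributes only a further $\dd_M$-exact term to both sides, and hence is absorbed into the representative of the horizontally exact correction. Thus the content of the lemma is precisely that, modulo this trivial ambiguity, the Hamiltonian and Noether brackets agree on Noether pairs, which is what allows the Dickey Lie algebra structure of Lem.~\ref{DickeyLieAlgebraOfNoetherCurrents} to be understood as the restriction of the Poisson-like bracket $\{-,-\}_H$ to on-shell conserved currents.
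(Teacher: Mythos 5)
Your proof is correct and follows essentially the same route as the paper's: invoke Lem.~\ref{ContractedSymplecticFormAndConservedCharge} to place Noether pairs inside Hamiltonian pairs, apply the identity \eqref{BracketOfHamiltonianPairsViaLieDerivative} with $\CH_{\CZ_i}=\CP_{\CZ_i}$, and compare against Def.~\ref{BracketOfNoetherPairs}. The additional remark on the non-uniqueness of $\CT_{\CZ_i}$ is a reasonable clarification but not needed beyond what the paper records.
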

\begin{proof}
By Lem. \ref{ContractedSymplecticFormAndConservedCharge}, any Noether current is in particular Hamiltonian, hence Noether pairs
are indeed a vector subspace of Hamiltonian pairs. The result follows immediately by considering \eqref{BracketOfHamiltonianPairsViaLieDerivative}
for the case of the Hamiltonian currents being Noether currents and comparing with the formulas from Def. \ref{BracketOfNoetherPairs}.   
\end{proof}

As with the case of the Noether bracket, the bracket of Hamiltonian pairs is naturally antisymmetric and $\FR$-linear, but the horizontally-exact 
terms spoil Jacobi identity. An easy way to see this is as follows: Since $\om_\CL$ is a local $(d-1,2)$-form, it follows that the contraction 
with any three local vector fields vanishes $\iota_{\CZ_1} \iota_{\CZ_2} \iota_{\CZ_3} \om_\CL =0$. Applying $\delta$ on the equation and using 
the local Cartan calculus repeatedly, one arrives at
\vspace{-2mm}
$$
0 =  \big ( - (\iota_{\CZ_1} \iota_{[\CZ_2,\CZ_3]}\om_\CL + \mathrm{cyc}(1,2,3) \big) 
+ \dd_M \big( - \iota_{\CZ_1} \iota_{\CZ_2} \delta T_{\CZ_3} + \mathrm{cyc}(1,2,3) \big) \, ,
$$

\vspace{-2mm}
\noindent and so the Jacobiator reads 
\vspace{-2mm}
\begin{align}\label{HamiltonianBracketJacobiator}
\mathrm{Jac}_{\{-,-\}_H}\big((\CH_{\CZ_1}, \CZ_1), \, (\CH_{\CZ_2}, \CZ_2), \, (\CH_{\CZ_3}, \CZ_3) \big)
:&= \Big\{ \big\{(\CH_{\CZ_1}, \CZ_1), \, (\CH_{\CZ_2}, \CZ_2)\}_H\, , \,(\CH_{\CZ_3}, \CZ_3) \Big\}_H  + \mathrm{cyc}(1,2,3)  \nn \\
&= \Big( \dd_M\big( \iota_{\CZ_1} \iota_{\CZ_2} \delta T_{\CZ_3} + \mathrm{cyc}(1,2,3) \big ) , \, 0 \Big) \, .
\end{align}

\vspace{-2mm}
\noindent The explicit form of the Jacobiator \eqref{NoetherBracketJacobiator} for the Noether bracket can be deduced from
\eqref{HamiltonianBracketJacobiator}, by substituting the relation from Lem. \ref{RelationBetweenBracketsOfCurrents}. 

There are two ways to identify a Lie algebra structure from the above bracket. Obviously, 
as in finite-dimensional geometry, $\{-,-\}_H$ defines a strict Lie algebra structure on the subspace of \textit{strict} Hamiltonian pairs
\vspace{-2mm}
\begin{align}\label{LieAlgebraOfStrictHamiltonianPairs}
\big( \mathrm{StrHamPrs}(\CF\times M,\om_\CL) \, , \{-,-\}_H \big)
\end{align}

\vspace{-2mm}
\noindent
where the $\dd_M$-exact terms never appear. Otherwise, the bracket induces a Lie algebra structure on the quotient of Hamiltonian pairs by horizontally 
exact currents, inside which the Dickey Lie algebra of Lem. \ref{DickeyLieAlgebraOfNoetherCurrents} naturally sits as a Lie subalgebra.
\begin{lemma}[{\bf Local Poisson Lie algebra of Hamiltonian currents}]\label{LocalPoissonLieAlgebraOfHamiltonianCurrents}
$\,$

\noindent {\bf (i)} The bracket of Hamiltonian pairs descends to a Lie bracket on the quotient
\vspace{-2mm}
\begin{align*}
\mathrm{HamPrs}(\CF\times M,\om_\CL) / \dd_M \Omega^{d-2,0}_\loc(\CF\times M)\, ,  
\end{align*}

\vspace{-2mm}
\noindent known as the local Poisson Lie bracket of (Hamiltonian) currents.

\noindent {\bf (ii)} Furthermore, the local Poisson Lie algebra 
is part of a short exact sequence of Lie algebras 
\vspace{-2mm}
\begin{align}\label{LocalPoissonLieAlgebraCentralExtension}
\Omega^{d-1,0}_{\delta=\dd_M}(\CF\times M) / \dd_M \Omega^{d-2,0}_\loc(\CF\times M) 
\; \longhookrightarrow \; 
\mathrm{HamPrs}(\CF\times M,\om_\CL) / \dd_M \Omega^{d-2,0}_\loc(\CF\times M) 
\; \longrightarrow \;
\CX^{\mathrm{Ham},\om_\CL}_\loc(\CF) \, ,
\end{align}

\vspace{-2mm}
\noindent with the subspace of local $(d-1,0)$-forms on the left being those $\tilde{\CH}$ whose vertical differential is $\dd_M$-exact 
$\delta \tilde{\CH}=\dd_M \tilde{\CT}$, modulo horizontally exact currents, thought of as a trivial Lie algebra. That is,  
the local Poisson Lie algebra of Hamiltonian currents (modulo trivial currents) is a central extension of the Lie algebra of  
Hamiltonian vector fields of $\CX^{\mathrm{Ham},\om_\CL}_\loc(\CF)$.

\begin{proof} The proof follows identically as in Lem. \ref{DickeyLieAlgebraOfNoetherCurrents}, by replacing the Noether with 
the corresponding Hamiltonian counterparts. 
The bracket of Hamiltonian pairs \eqref{BracketOfHamiltonianPairs} descends to a Lie bracket since the Jacobiator \eqref{HamiltonianBracketJacobiator} 
vanishes in cohomology. The short exact sequence follows immediately: the first map is the canonical inclusion $[\tilde{\CH}]\mapsto ([\tilde{\CH}],\, 0)$ 
and the second is the canonical projection $([\tilde{\CP_\CZ}],\, \CZ)\mapsto \CZ $ onto Hamiltonian vector fields of $\om_\CL$. 
The kernel of the former coincides with the image of the latter, by definition. This subset represents exactly the remaining freedom 
of adding a $\delta$-closed $(d-1,0)$-form - up to an $\dd_M$-exact current - to each class Hamiltonian current class $[\CH_{\CZ}]$
corresponding to a fixed Hamiltonian vector field $\CZ$. 
\end{proof}
\end{lemma}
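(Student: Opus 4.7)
The plan is to run in parallel with the proof of Lem.~\ref{DickeyLieAlgebraOfNoetherCurrents}, leveraging the two identities already established before the statement: the reformulation \eqref{BracketOfHamiltonianPairsViaLieDerivative} of the bracket via the Lie derivative, and the explicit form \eqref{HamiltonianBracketJacobiator} of the Jacobiator, together with the local Cartan calculus of Prop.~\ref{LocalCartanCalculus}.

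For part (i), I would first observe that the output of $\{-,-\}_H$ in Def.~\ref{BracketOfHamiltonianPairs} is literally independent of the representative Hamiltonian currents in each slot: the first component $-\iota_{\CZ_1}\iota_{\CZ_2}\om_\CL$ depends only on the two vector fields and on $\om_\CL$. Well-definedness on the quotient by $\dd_M\Omega^{d-2,0}_\loc(\CF\times M)$ then only requires checking that shifting any $\CH_{\CZ_i}$ by $\dd_M\tilde{\CT}_i$ preserves the Hamiltonian pair condition \eqref{HamiltonianCurrentCondition}, which is immediate from the anticommutation $\delta\dd_M = -\dd_M\delta$, and that the resulting pair is itself Hamiltonian for $[\CZ_1,\CZ_2]$ — this was verified in the discussion immediately following Def.~\ref{BracketOfHamiltonianPairs}. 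Antisymmetry is built into the double contraction. The Jacobi identity on the quotient follows at once from \eqref{HamiltonianBracketJacobiator}, since the Jacobiator is of the form $(\dd_M(\cdots),0)$ and thus represents $0$ in the quotient.

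For part (ii), I would define the two maps and verify their Lie-algebraic compatibility. The inclusion sends a class $[\tilde{\CH}]$ with $\delta\tilde{\CH} = \dd_M\tilde{\CT}$ to the class of $(\tilde{\CH},0)$, which trivially satisfies \eqref{HamiltonianCurrentCondition} with $\CT = \tilde{\CT}$ and $\iota_0\om_\CL = 0$; it is well-defined on the quotients because $\dd_M$-exact currents map to $\dd_M$-exact pairs. The projection $([\CH_\CZ],\CZ)\mapsto\CZ$ is manifestly a Lie algebra homomorphism since $[\CZ_1,\CZ_2]$ appears in the second slot of the bracket. Exactness at the middle term is immediate: if a Hamiltonian pair has $\CZ = 0$, then \eqref{HamiltonianCurrentCondition} reduces to $\delta\CH_\CZ = \dd_M\CT_\CZ$, placing $\CH_\CZ$ in $\Omega^{d-1,0}_{\delta=\dd_M}(\CF\times M)$. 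Centrality of the kernel — which is what lets us equip the left-hand algebra with the trivial bracket — follows from the vanishing $\{([\tilde{\CH}],0),([\CH_\CZ],\CZ)\}_H = (-\iota_0\iota_\CZ\om_\CL,\,[0,\CZ]) = (0,0)$.

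No deep obstruction is expected; the argument is essentially formal once the local Cartan calculus is in place. The only real bookkeeping concern is to track the horizontally exact remainders $\dd_M(\cdots)$ arising in both the closure of the bracket on Hamiltonian pairs and in the Jacobiator, and to verify at each step that passage to the quotient by $\dd_M\Omega^{d-2,0}_\loc$ is precisely engineered to absorb them. The only substantive difference from the Noether case is that the bracket's first slot is \emph{independent} of the Hamiltonian representatives, which actually simplifies the descent to the quotient compared with Lem.~\ref{DickeyLieAlgebraOfNoetherCurrents}.
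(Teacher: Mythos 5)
Your proposal is correct and follows essentially the same route as the paper's proof: descent to the quotient via the cohomological vanishing of the Jacobiator \eqref{HamiltonianBracketJacobiator}, and the short exact sequence via the canonical inclusion and projection with exactness checked from the Hamiltonian condition \eqref{HamiltonianCurrentCondition}. Your added observation that the first slot of the bracket depends only on the vector fields (making the descent even easier than in the Noether case) is a correct refinement of the same argument, not a different one.
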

Strictly speaking, Hamiltonian pairs are only a vector space and they do not have a commutative algebra structure. That is, the bracket of Hamiltonian pairs 
is only a Lie bracket and not Poisson. The commutative algebra structure appears after transgressing currents as per Def. \ref{LocalFunctionsOnFieldSpace}, 
whereby one may analogously define Hamiltonian functionals with respect to the transgressed symplectic $2$-form. It is in that setting that the actual 
Poisson structure appears. We will explain this treatment in the following section and relate it to the brackets of currents defined above 
in Lem. \ref{PoissonVsCurrentBrackets}.

\begin{remark}[\bf Brackets of on-shell Hamiltonian currents]\label{OnshellBracketsCaveats}
For cases of classical field theories in which the local Cartan calculus descends to $\CF_{\CE \CL}\times M$, and so the requirements
of Rem. \ref{OnShellCartanCalculusCaveats} are fulfilled, the above discussion applies analogously to \textit{on-shell} Hamiltonian pairs. 
More explicitly, these are on-shell pairs
\vspace{-2mm} 
$$
(\CH_\CZ|_{\CE\CL},\CZ|_{\CE \CL})\;\; \in \Omega^{d-1,0}_\loc(\CF_{\CE \CL}\times M) \times \CX_{\loc}(\CF_{\CE \CL})
$$

\vspace{-2mm} 
\noindent that satisfy the corresponding on-shell Hamiltonian condition 
$\iota_\CZ \om_\CL|_{\CE \CL} + \delta \CH_\CZ|_{\CE \CL} = \dd_M \CT_\CZ |_{\CE \CL} $. 
It follows (modulo Rem. \ref{OnShellCartanCalculusCaveats}), that such pairs are represented by pairs
$(\CH_\CZ, \CZ)\in \Omega^{d-1,0}_\loc(\CF\times M)$ such that $\CZ$ is tangent to $\CF_{\CE \CL}$
and the Hamiltonian condition is satisfied off-shell up to a $(d-1,1)$-form proportional to a differential operator
applied to the Euler--Lagrange form 
$\iota_\CZ \om_\CL + \delta \CH_\CZ = \dd_M \CT_\CZ + \CJ^I \partial_I \CE \CL$. The above results follow for 
the on-shell Hamiltonian pairs by replacing the off-shell objects and conditions with the corresponding on-shell versions.
\end{remark}

\begin{remark}[\bf Higher Poisson algebras of local observables]\label{HigherPoissonAlgebrasOfLocalObservables}
$\,$

\noindent {\bf (i)} Just as with the bracket of Noether pairs, taking the strict quotient by horizontally
exact Hamiltonian currents discards a lot of information. 
In particular, information on charges of Hamiltonian currents over submanifolds with boundary is lost in this quotient. Readers familiar with higher/homotopical algebraic structures will naturally point out 
that the breaking of the Jacobi identity -- up to a homologically trivial term -- suggests an underlying structure
of an $L_\infty$-algebra, i.e., a ``higher Poisson $L_\infty$-algebra of local observables''. 
This is indeed the case and ideas along this line of thought have been pursued in the literature, often under 
the name of ``higher presymplectic'' geometry. 

\vspace{1mm} 
\noindent {\bf (ii)} We will not be delving into a discussion of higher structures in this manuscript. 
The interested reader may find relevant ideas and constructions 
in the following manuscripts: The earliest investigation specifically on the bigraded situation of 
the jet bundle and field theory as described 
in the current manuscript is found in   \cite{BarnichStasheff}. General investigations of such structures in terms 
of higher presymplectic geometry, i.e., in terms of finite-dimensional manifolds equipped with closed $m$-forms 
(without a bigrading), were initiated in \cite{Rogers}. 
Further investigations with a viewpoint towards quantization, topological terms and BPS charges were pursued in \cite{FRS14}\cite{BPS}. 
A closely related treatment in terms of $L_\infty$-algebras of local observables, but not equivalent, to that of \cite{BarnichStasheff} may 
be found in \cite{Del}. Therein, the discussion differs from ours since the total grading on the bicomplex of local forms is employed, and 
Hamiltonian pairs are defined relative to the $(\delta+\dd_M)$-closed ``Poincar\'e-Cartan'' $\tilde{\om}_\CL:= \om_\CL - \CE \CL$.  
\end{remark}

\subsection{The covariant phase space, off-shell and on-shell Poisson brackets} 
\label{Sec-cov}

The results of \cref{SecPresymplecticCurrentPropertiesAndBrackets}, which take place on $\CF\times M$ and its smooth subspace $\CF_{\CE \CL}\times M$,
may now be \textit{transgressed} to statements about the actual field space $\CF$ and the subspace of on-shell fields $\CF_{\CE \CL}$, 
essentially by integrating the horizontal parts of the bi-graded forms involved over submanifolds of spacetime $M$. 
We will use the notation $\varpi^{p,q}:=(\ev^\infty)^*\om^{p,q}$.

\begin{definition}[\bf  Transgression]\label{Transgression}
Let $\Sigma^p\hookrightarrow M$ be a compact oriented submanifold. The transgression of local $(p,q)$-forms on $\CF\times M$ to $q$-forms on $\CF$
\vspace{-2mm}
\begin{align*}
\tau_{\Sigma^{p}} \,:\, \Omega^{p,q}_\loc(\CF\times M) \longrightarrow \Omega^{q}(\CF)
\end{align*}

\vspace{-2mm}
\noindent
is defined by
\vspace{-2mm}
\begin{align*}
\tau_{\Sigma^p}\big(\varpi^{p,q}\big) \;:\; (T\CF)^{\times q} &\;\; \longrightarrow \quad  \FR \\
(\CZ_\phi^1,\cdots, \CZ_\phi^q) &\;\; \longmapsto \;\; \int_{\Sigma^p} \big(\varpi^{p,q}\big)_\phi(\CZ^1_\phi,\cdots, \CZ^q_\phi, \cdots ) \, ,
\end{align*}

\vspace{-2mm}
\noindent and similarly for higher plots. We denote by 
\vspace{-1mm}
$$ 
\Omega^\bullet_{\loc,\Sigma^{p}}(\CF) \longhookrightarrow \Omega^\bullet(\CF)
$$ 

\vspace{0mm}
\noindent
the subalgebra of forms generated by transgressed local forms along $\Sigma^p$, and by $\Omega^\bullet_{\loc}(\CF)$ generated by transgressed
local forms along arbitrary compact oriented submanifolds.
\end{definition}

For $p=0$, transgression along $0$-dimensional submanifolds (points) is given simply by evaluation. Note that for $q=0$ the transgression of 
a local $p$-form current $\CP$ is simply its charge over $\Sigma^{p}$
from Def. \ref{ChargesOnFieldSpace}.

\newpage 

\begin{remark}[\bf Transgression over non-compact submanifolds]\label{TransgressionOverNoncompact}
For local $(p,0)$-forms, the transgression may be defined over noncompact submanifolds by restricting to the subspace of compactly supported field 
$\CF_{\mathrm{cpt}}\hookrightarrow \CF$. However, for local $(p,q)$-forms with $q\geq 1$, the transgression may be defined over noncompact submanifolds 
over the full field space, by restricting the domain of the transgressed map to be products of $T_\mathrm{cpt}\CF\hookrightarrow T\CF$, whose points
are tangent vectors to field space with compact support along the fibers. That is, a tangent vector $\CZ_\phi\in T_{\mathrm{cpt},\phi}\CF$ is a
section $\CZ_\phi\in \Gamma_M(\phi^*VF)$ which maps to the zero element in each fiber $ V_{\phi(x)}F$ outside a compact
region\footnote{This can be further relaxed to tangent vectors with compact support only \textit{after 
 restriction} along $\Sigma^p$.} of $M$. Thus, the transgression of a local $(p,q)$-form over a noncompact submanifold $\Sigma^{p}\hookrightarrow M$ is,
 strictly speaking not a $q$-form on $\CF$, but instead a smooth map $(T_{\mathrm{cpt}}\CF)^{\times q}\rightarrow \FR$. Everything we say below applies 
 verbatim for such situations with minimal modifications, which we shall not make explicit. 
\end{remark}

Of special interest are the transgressions of the presymplectic potential current $\theta_\CL:= (\ev^\infty)^* \theta_L $ and, more importantly, 
of the presymplectic current $\om_\CL = \delta \theta_\CL $ from Def. \ref{PresymplecticCurrent}.
\begin{definition}[\bf  Presymplectic potential and 2-form on field space]\label{PresymplecticPotentialAnd2formOnFieldSpace}
The presymplectic potential $\Theta_{\CL, \Sigma^{d-1}}$ and presymplectic 2-form $\Omega_{\CL, \Sigma^{d-1}}$ of a Lagrangian field theory 
$(\CF,\CL)$, with respect to a compact oriented submanifold $\Sigma^{d-1}$, are defined as the transgressions of the corresponding currents
\vspace{-2mm}
\begin{align*}
\Theta_{\CL, \Sigma^{d-1}}:= \tau_{\Sigma^{d-1}} (\theta_\CL) \;\; : \;\; T\CF \longrightarrow \FR 
\end{align*}

\vspace{-2mm}
\noindent
and
\vspace{-3mm}
\begin{align*}
\hspace{7mm} \Omega_{\CL, \Sigma^{d-1}}:= \tau_{\Sigma^{d-1}} (\om_\CL) \;\; : \;\; T\CF\times T\CF \longrightarrow \FR  \, ,   
\end{align*}

\vspace{-2mm}
\noindent respectively.

\end{definition}
Now if $\Sigma^{d-1}$ is without boundary, then both the presymplectic potential and $2$-form with respect to $\Sigma^{d-1}$
are uniquely defined -- in contrast with the corresponding currents which are only defined up to an addition of a $\dd_M$-exact current. 
In turn, these may be restricted/pulled back to forms on the space of on-shell fields $\CF_{\CE \CL}$. 
Equivalently, the currents may be first restricted to $\CF_{\CE \CL}\times M$ and then transgressed via Def. \ref{Transgression}. 
In particular, the presymplectic 2-form 
\vspace{-2mm}
\begin{align*}
\Omega_{\CL, \Sigma^{d-1}}\big\vert_{\CE \CL}:= \tau_{\Sigma^{d-1}} (\om_\CL) \big\vert_{\CE \CL} \equiv
\tau_{\Sigma^{d-1}} (\om_\CL \vert_{\CE \CL} ) \quad : \quad T \CF_{\CE \CL} \times T \CF_{\CE \CL} 
\longhookrightarrow T\CF\times T\CF \longrightarrow \FR     
\end{align*}

\vspace{-1mm}
\noindent depends only on the cobordism class of $\Sigma^{d-1}$. 

\begin{lemma}[{\bf On-shell presymplectic 2-form and cobordism classes}]\label{OnShell2formAndCobordisms}
Let $B^d$ be a cobordism between two compact oriented submanifolds $\partial B^d = \Sigma_1^{d-1}\coprod \Sigma_2^{d-1}$.
Then the corresponding transgressed on-shell presymplectic 2-forms coincide,
\vspace{-4mm}
\begin{align*}
\Omega_{\CL, \Sigma_1^{d-1}}\big\vert_{\CE \CL} = \Omega_{\CL, \Sigma_2^{d-1}}\big\vert_{\CE \CL} \quad \in \quad \Omega^{2}(\CF_{\CE \CL})\, . 
\end{align*}

\vspace{-4mm}
\begin{proof}
By Lem. \ref{PresymplecticCurrentIsOnshellConserved}, the presymplectic current is conserved on-shell $\dd_M \om_\CL |_{\CE \CL}=0$ 
and so for any cobordism $B^d$ between two compact oriented submanifolds $\partial B^d = \Sigma_1^{d-1}\coprod \Sigma_2^{d-1}$, we have
\vspace{-2mm}
\begin{align*}
0=\; \int_{B^d} \dd_M \om_\CL |_{\CE \CL} = \; \int_{\Sigma_1^{d-1}} \om_\CL|_{\CE \CL} -\int_{\Sigma_2^{d-1}} \om_\CL|_{\CE \CL} \, ,
\end{align*}

\vspace{-2mm}
\noindent and the result follows.
\end{proof}

\end{lemma}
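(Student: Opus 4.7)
The plan is to deduce the statement directly from on-shell conservation of the presymplectic current, Cor.~\ref{PresymplecticCurrentIsOnshellConserved}, via a Stokes-type argument carried out plot-wise in $\SmoothSets$. To make this precise, I would first reduce the claim to checking equality of the transgressed $2$-forms on plots. By Def.~\ref{Transgression} and Def.~\ref{PresymplecticPotentialAnd2formOnFieldSpace}, the transgression $\Omega_{\CL,\Sigma^{d-1}}\vert_{\CE\CL}$ evaluated on an $\FR^k$-plot of pairs of on-shell tangent vectors $(\CZ^1_{\phi^k},\CZ^2_{\phi^k})\in T\CF_{\CE\CL}(\FR^k)\times_{\CF_{\CE\CL}} T\CF_{\CE\CL}(\FR^k)$ is simply $\int_{\Sigma^{d-1}}(\om_\CL|_{\CE\CL})_{\phi^k}(\CZ^1_{\phi^k},\CZ^2_{\phi^k})$, the integral along $\Sigma^{d-1}$ of an element of $\Omega^{d-1}_{\mathrm{Vert}}(M)(\FR^k)$.

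Next, I would invoke Cor.~\ref{PresymplecticCurrentIsOnshellConserved}, which yields $\dd_M \om_\CL \vert_{\CE\CL}=0$ as a local $(d,2)$-form on $\CF_{\CE\CL}\times M$. Contracting against the pair of on-shell tangent vectors produces a smooth exact spacetime $d$-form
\begin{equation*}
\dd_M\!\left((\om_\CL|_{\CE\CL})_{\phi^k}(\CZ^1_{\phi^k},\CZ^2_{\phi^k})\right)\;=\;0\quad\in\quad\Omega^d(M)\hat{\otimes} C^\infty(\FR^k),
\end{equation*}
where the commutation of $\dd_M$ with the contraction $\iota_{\CZ^1}\iota_{\CZ^2}$ (both of which are vertical along $\CF_{\CE\CL}\times M\to M$) is an instance of the local Cartan calculus of Lem.~\ref{LocalCartanCalculus}. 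The closedness here is just pointwise de~Rham closedness of a family of top-forms on $M$, smoothly parameterized by points of $\FR^k$, so ordinary Stokes' theorem applies fiberwise over $\FR^k$.

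Applying Stokes' theorem to the cobordism $B^d$ with $\partial B^d=\Sigma^{d-1}_1\sqcup\Sigma^{d-1}_2$ (with the induced orientations) then gives
\begin{equation*}
0\;=\;\int_{B^d}\dd_M\!\left((\om_\CL|_{\CE\CL})_{\phi^k}(\CZ^1_{\phi^k},\CZ^2_{\phi^k})\right)\;=\;\int_{\Sigma^{d-1}_1}(\om_\CL|_{\CE\CL})_{\phi^k}(\ldots)\;-\;\int_{\Sigma^{d-1}_2}(\om_\CL|_{\CE\CL})_{\phi^k}(\ldots),
\end{equation*}
which is exactly the equality $\Omega_{\CL,\Sigma^{d-1}_1}\vert_{\CE\CL}=\Omega_{\CL,\Sigma^{d-1}_2}\vert_{\CE\CL}$ on $\FR^k$-plots. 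Naturality of all constructions in the probe $\FR^k$ promotes this to an equality of $2$-forms on $\CF_{\CE\CL}$.

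The main obstacle I anticipate is purely notational rather than mathematical: carefully justifying that transgression followed by restriction to $\CF_{\CE\CL}$ agrees with restriction followed by transgression, and that both commute with the plot-wise Stokes' theorem. This is where one must be slightly careful, since $\dd_M \om_\CL$ is \emph{not} zero off-shell (it equals $\delta\CE\CL$ by the proof of Cor.~\ref{PresymplecticCurrentIsOnshellConserved}), so the argument genuinely relies on the on-shell restriction and on the description of $T\CF_{\CE\CL}$ via the Jacobi equation \eqref{JacobiEquation}. Modulo tracking these restrictions, the argument reduces to ordinary Stokes' theorem applied fiberwise over the probe.
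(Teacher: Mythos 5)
Your proposal is correct and follows essentially the same route as the paper's proof: both invoke Cor.~\ref{PresymplecticCurrentIsOnshellConserved} to get $\dd_M\om_\CL|_{\CE\CL}=0$ and then apply Stokes' theorem over the cobordism $B^d$. The additional plot-wise unwinding and the remark about commuting contraction with $\dd_M$ are just a more explicit rendering of the same one-line argument.
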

In particular, this implies that if $\Sigma^{d-1}$ is cobordant to the empty manifold, i.e., is homologically trivial in $M$, 
then the corresponding on-shell $2$-form is trivial.
We have called $\Omega_{\CL, \Sigma^{d-1}}\in \Omega^{2}(\CF)$ a `presymplectic' 2-form on $\CF$, and similarly its restriction on $\CF_{\CE \CL}$, 
by virtue of $\tau_{\Sigma^{d-1}}(\delta \om_\CL) = 0$. In order to witness it as a truly presymplectic $2$-form on $\CF$, and that 
$\Theta_{\CL,\Sigma^{d-1}}$ is its potential, a notion of a differential on the algebra generated by transgressed forms 
$\Omega^{\bullet}_{\loc}(\CF)\hookrightarrow \Omega^{\bullet}(\CF)$ is required. Intuitively, and as implicitly practiced in the physics 
literature, this is given by computing the vertical differential under the integral. The well-definiteness of this operation, however, 
requires justification.

\begin{lemma}[{\bf Transgressed vertical differential}]\label{TransgressedVerticalDifferential}
The vertical differential $\delta:\Omega^{p,q}_\loc(\CF\times M)\rightarrow \Omega^{p,q+1}_\loc(\CF\times M)$ transgresses along any compact 
oriented submanifold $\Sigma^{p}$ to a well-defined map
\vspace{-2mm}
\begin{align*}
  \delta \,:\, \Omega^\bullet_{\loc,\Sigma^{p}}(\CF) &\longrightarrow \Omega^{\bullet+1}_{\loc,\Sigma^{p}}(\CF)
\end{align*}

\vspace{-3mm}
\noindent
defined on transgressed local forms by
\vspace{-3mm}
$$
\delta \,\big( \tau_{\Sigma^p}(\varpi^{p,q})\big)
:=\tau_{\Sigma^p}\big(\delta( \varpi^{p,q})\big)
\equiv \int_{\Sigma^{p}} \delta(\varpi^{p,q}) \, , 
$$

\vspace{-2mm}
\noindent
and extends to the full algebra generated by transgressed local forms as a derivation $\Omega^\bullet_\loc(\CF)\rightarrow \Omega^{\bullet+1}_\loc(\CF)$. 
\end{lemma}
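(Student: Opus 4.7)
The plan is to define the differential on generators by the stated formula $\delta\big(\tau_{\Sigma^p}(\varpi^{p,q})\big) := \tau_{\Sigma^p}(\delta \varpi^{p,q})$ and then extend by the graded Leibniz rule to the wedge-algebra they generate. The entire substance of the lemma reduces to two claims: (i) the formula on generators is well-defined, i.e., the kernel of $\tau_{\Sigma^p}\colon \Omega^{p,q}_{\loc}(\CF\times M) \to \Omega^q(\CF)$ is $\delta$-stable; and (ii) the Leibniz extension is consistent on the resulting subalgebra of $\Omega^\bullet(\CF)$.

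For (i), I would argue via differentiation under the integral sign, which is licit because $\Sigma^p$ is compact. Given any tangent vector $\CZ^0_\phi\in T_\phi \CF$, represent it (Lem.~\ref{LinePlotsRepresentTangentVectors}) by a line plot $\phi_t$ with $\phi_{t=0}=\phi$ and $\partial_t\phi_t|_{t=0}=\CZ^0_\phi$. Mimicking the pointwise computation of Ex.~\ref{InfinityJetVerticalVectorFromProlongationofPlot} and Eq.~\eqref{InfinityJetVerticalVectorFromProlongationofPlotActionVia1form}, one obtains for each $x\in M$ a schematic identity
\[
(\delta \varpi^{p,q})\big\vert_{(\phi,x)}\!\big(\CZ^0_\phi,\CZ^1_\phi,\ldots,\CZ^q_\phi\big) \;=\; \partial_t\!\Big[\varpi^{p,q}\big\vert_{(\phi_t,x)}\!\big(\CZ^1_\phi,\ldots,\CZ^q_\phi\big)\Big]_{t=0} \;+\; (\text{antisym. corrections}),
\]
where the correction terms come from the usual formula for $\delta$ on $(p,q)$-forms and depend only on how the $\CZ^i_\phi$ interact with the line plot. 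Integrating over the compact $\Sigma^p$ and exchanging the $t$-derivative with the integration, the contraction $\iota_{\CZ^0_\phi}\tau_{\Sigma^p}(\delta \varpi^{p,q})|_\phi$ is recovered entirely from $t$-derivatives and wedge contractions of $\tau_{\Sigma^p}(\varpi^{p,q})$ along line plots. Since a $(q{+}1)$-form on $\CF$ is by Def.~\ref{DifferentialFormsOnFieldSpace} determined by its values on $(q{+}1)$-tuples of tangent vectors and every tangent vector arises from a line plot, this expresses $\tau_{\Sigma^p}(\delta \varpi^{p,q})$ purely in terms of $\tau_{\Sigma^p}(\varpi^{p,q})$ and not of the particular pre-image $\varpi^{p,q}$. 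The analogous argument on $\FR^k$-plots uses the same exchange of $\delta$ with $\int_{\Sigma^p}$, valid again by compactness.

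For (ii), the subalgebra $\Omega^\bullet_{\loc,\Sigma^p}(\CF) \hookrightarrow \Omega^\bullet(\CF)$ is by construction the wedge-algebra generated by the transgressed forms, and the only universal relations it inherits from $\Omega^\bullet(\CF)$ are $\FR$-bilinearity and graded commutativity; both are preserved by any graded derivation, so the Leibniz-rule extension from a linear map on generators is the unique graded derivation extending it and is automatically consistent. The main obstacle in the whole argument is the pointwise identity in step (i): one has to carefully unpack the action of $\delta \varpi^{p,q} = (\ev^\infty)^*\dd_V \om$ on tangent vectors to $\CF\times M$ using the pushforward $\dd \ev^\infty$ of Def.~\ref{PushforwardOfProlongatedEvaluation}, and verify that all corrections can be absorbed into $\tau_{\Sigma^p}(\varpi^{p,q})$-data. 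Compactness of $\Sigma^p$ is indispensable here; the non-compact analogue (Rem.~\ref{TransgressionOverNoncompact}) needs the same proof but on the subspace $T_{\mathrm{cpt}}\CF$ of compactly supported tangent vectors, so that the integration-by-derivation exchange remains valid.
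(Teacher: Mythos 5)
Your base case ($q=0$) is exactly the paper's: differentiate the equal charges along a line plot and invoke Lem.~\ref{LinePlotsRepresentTangentVectors}. The gap is in your step (i) for $q\geq 1$. The schematic identity you write, $(\delta\varpi^{p,q})(\CZ^0_\phi,\CZ^1_\phi,\dots,\CZ^q_\phi)=\partial_t[\varpi^{p,q}|_{\phi_t}(\CZ^1_\phi,\dots,\CZ^q_\phi)]_{t=0}+(\text{corrections})$, is not meaningful as stated: for $t\neq 0$ the arguments $\CZ^1_\phi,\dots,\CZ^q_\phi$ are tangent vectors at $\phi$, not at $\phi_t$, so the right-hand side requires extending them to vector fields along the path, and the ``antisymmetric corrections'' are exactly the Lie-derivative and bracket terms of the Cartan formula that depend on those extensions. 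You flag this as ``the main obstacle'' and then assert the corrections ``can be absorbed'' — but that absorption is the entire content of the lemma, and your concluding appeal to ``every tangent vector arises from a line plot'' does not supply it, because representing each $\CZ^i$ individually by a line plot does not produce the vector-field extensions the Cartan formula needs.

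The paper resolves this by an induction on $q$ that never differentiates the higher form directly: if $\tau_{\Sigma^p}(\varpi^{p,q})=\tau_{\Sigma^p}(\tilde{\varpi}^{p,q})$, contract with a \emph{local} vector field $\CZ_1$ to get equal transgressions of $(p,q{-}1)$-forms, apply the inductive hypothesis to conclude $\int_{\Sigma^p}\delta\iota_{\CZ_1}\varpi^{p,q}=\int_{\Sigma^p}\delta\iota_{\CZ_1}\tilde{\varpi}^{p,q}$, and then reassemble $\int_{\Sigma^p}\delta\varpi^{p,q}(\CZ_1,\dots,\CZ_{q+1})$ via the local Cartan calculus (Lem.~\ref{LocalCartanCalculus}) under the integral, noting that the bracket terms $\iota_{[\CZ_i,\CZ_j]}$ again only involve contractions with local vector fields. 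Note also Rem.~\ref{CaveatOnTransgressedVerticalDifferential}: even this argument only establishes well-definedness on tuples of \emph{local} vector fields; upgrading to arbitrary tangent-vector tuples (which your proposal takes for granted) requires showing every tangent vector extends to a local vector field, a Fr\'echet-analytic point the paper deliberately leaves as a sketch. Your step (ii) is fine in spirit but also glosses over the fact that the generated subalgebra may carry relations beyond graded commutativity; the paper does not press this point either.
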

\begin{proof}
The only ambiguous part of the definition is that there might be two different local $(p,q)$-forms $\varpi^{p,q}\neq  
\tilde{\varpi}^{p,q}:=(\ev^\infty)^*\tilde{\om}^{p,q}$ such that their transgressions agree
\vspace{-2mm}
$$
\int_{\Sigma^p} \varpi^{p,q} = \int_{\Sigma^p} \tilde{\varpi}^{p,q} \, .
$$

\vspace{-2mm}
\noindent 
Thus for the transgressed vertical differential to be a well-defined map, it better be the case that the transgressions 
of the vertical differentials $\delta \varpi^{p,q},\, \delta \tilde{\varpi}^{p,q}$ also coincide 
\vspace{-2mm}
$$
\int_{\Sigma^p} \delta \varpi^{p,q} = \int_{\Sigma^p}\delta \tilde{\varpi}^{p,q} \, .
$$

\vspace{-2mm}
The proof of this follows by an inductive argument. Consider first case of two local $(p,0)$-form currents 
$\CP,\tilde{\CP} \in \Omega^{p,0}(\CF\times M)$ with the same 
charge $\int_{\Sigma^p} \CP = \int_{\Sigma^p} \tilde{\CP}$ along $\Sigma^{p}$. In this case, the result follows essentially
as with the variation of the charge of a Lagrangian (see proof of Prop. \ref{Crit(S)functoriality}). 
Indeed, since the charge maps agree as maps of smooth sets, it follows that for any $1$-parameter family of fields 
\vspace{-1mm}
$$
\int_{\Sigma^p} \CP(\phi_t) = \int_{\Sigma^p} \tilde{\CP}(\phi_t) \quad \in \quad C^\infty(\FR_t) \, .
$$ 

\vspace{0mm}
\noindent The derivative evaluated at $t=0$ gives 
$\int_{\Sigma^p}\delta \CP_{\phi_0}(\partial_t \phi_t |_{{t}=0})= \int_{\Sigma^p}\delta \tilde{\CP}_{\phi_0}(\partial_t \phi_t |_{{t}=0})$ 
along the lines of Ex. \ref{InfinityJetVerticalVectorFromProlongationofPlot}. Since any tangent vector on $\CF$ is represented by a line plot 
(Lem. \ref{LinePlotsRepresentTangentVectors}), the result follows.

Next, consider the case of two local $(p,1)$-forms $\varpi^{p,1}, \,   \tilde{\varpi}^{p,1}$ such that the transgressions agree $\int_{\Sigma^p} \varpi^{p,1} = \int_{\Sigma^p} \tilde{\varpi}^{p,1}$. It follows that precomposing with local vector field $\CZ_1: \CF \rightarrow T\CF$ implies
\vspace{-2mm}
$$
\int_{\Sigma^p} \iota_{\CZ_1} \varpi^{p,1} = \int_{\Sigma^p} 
 \iota_{\CZ_1} \tilde{\varpi}^{p,1} \, ,
$$

\vspace{-2mm}
\noindent i.e., the equality of the transgressions of the local $(p,0)$-forms  $\iota_{\CZ_1} \varpi^{p,1}$ and $ \iota_{\CZ_1} \tilde{\varpi}^{p,1}$. 
But we have already shown that for $(p,0)$-forms this further implies 
\vspace{-1mm}
$$
\int_{\Sigma^p} \delta \iota_{\CZ_1} \varpi^{p,1} = \int_{\Sigma^p} 
 \delta \iota_{\CZ_1} \tilde{\varpi}^{p,1} \, ,
$$

\vspace{0mm}
\noindent whereby precomposing with a further local vector field $\CZ_2: \CF\rightarrow T\CF$ gives
\vspace{-2mm}
$$
\int_{\Sigma^p} \iota_{\CZ_2} \delta \iota_{\CZ_1} \varpi^{p,1} = \int_{\Sigma^p} \iota_{\CZ_2} 
 \delta \iota_{\CZ_1} \tilde{\varpi}^{p,1} \, ,
$$

\vspace{-2mm}
\noindent or in terms of Lie derivatives 
\vspace{0mm}
$$
\int_{\Sigma^p} \mathbb{L}_{\CZ_2} \iota_{\CZ_1} \varpi^{p,1} = \int_{\Sigma^p} \mathbb{L}_{\CZ_2} 
  \iota_{\CZ_1} \tilde{\varpi}^{p,1} \, .
$$

\vspace{0mm}
\noindent Collecting the above, we may use the local Cartan calculus (Lem. \ref{LocalCartanCalculus}) under the integral to compute
\vspace{-2mm}
\begin{align*}
\int_{\Sigma^p} \delta \varpi^{p,1} (\CZ_1,\CZ_2) &=  \int_{\Sigma^p}   \mathbb{L}_{\CZ_1} \iota_{\CZ_2} \varpi^{p,1} -  \mathbb{L}_{\CZ_2} \iota_{\CZ_1} \varpi^{p,1} - \iota_{[\CZ_1,\CZ_2]} \varpi^{p,1} \\
&= \int_{\Sigma^p}   \mathbb{L}_{\CZ_1} \iota_{\CZ_2} \tilde{\varpi}^{p,1} -  \mathbb{L}_{\CZ_2} \iota_{\CZ_1} \tilde{\varpi}^{p,1} - \iota_{[\CZ_1,\CZ_2]}\tilde{\varpi}^{p,1} \\
&= \int_{\Sigma^p} \delta \tilde{\varpi}^{p,1} (\CZ_1,\CZ_2) \, . 
\end{align*}

\vspace{-2mm}
\noindent The case of transgressed local $(p,q)$-forms follows analogously, by induction.
\end{proof}

\begin{remark}[\bf Caveat on transgressed vertical differential]\label{CaveatOnTransgressedVerticalDifferential}
$\,$

\noindent {\bf (i)} Strictly speaking, the proof of Lem. \ref{TransgressedVerticalDifferential} only shows that the differential is well-defined on the maps 
of local vector fields $\big(\CX_\loc(\CF)\big)^{\times q} \rightarrow C^\infty_\loc(\CF)$ induced by transgressed local forms. 
This is sufficient for all the uses of the transgressed vertical differential that will follow. To prove that it is actually well-defined 
as a map of transgressed differential forms viewed as bundle maps $(T\CF)^{\times q}\rightarrow \FR$, it is necessary to show that any 
tangent vector $\CZ_\phi \in T_\phi(\CF)$ extends to a \textit{local} vector field on $\CF$. 
We expect this is indeed true, but the proof requires Fr\'{e}chet analytical details, which we do not need to delve into here. 

\vspace{1mm} 
\noindent {\bf (ii)}
A sketch of the extension would be as follows: A tangent vector $\CZ_\phi \in T_\phi \CF$ is a smooth bundle map $\CZ_\phi: M\rightarrow VF$ 
covering $\phi$, and its jet prolongation (Ex. \ref{InfinityJetVerticalVectorFromProlongationofPlot}) defines 
$j^\infty \CZ_\phi : M \rightarrow  V J^\infty_M F$ covering $j^\infty \phi: M \rightarrow J^\infty_M F $. 
The image $j^\infty \phi (M) \hookrightarrow J^\infty_M F$ should be a closed embedded Fr\'{e}chet submanifold of the Hausdorff 
and paracompact infinite jet bundle (Def. \ref{JetBundleLocPro}), by essentially the same proof of the finite-dimensional manifold statement. 
Thus the prolongation of $\CZ_\phi$ is equivalently a section of the vector bundle $VJ^\infty_M F |_{j^\infty\phi(M)}$ over $j^\infty \phi(M)$. 
Composing with the pushforward of the projection $\dd \pi^\infty_0 : VJ^\infty_M F\rightarrow VF$ gives a section
$J^\infty_M F |_{j^\infty \phi(M)} \rightarrow (\pi^\infty_0)^*VF|_{j^\infty \phi(M)}$. Assuming the result of extension of 
vector bundle sections (see proof of Lem. \ref{LinePlotsRepresentTangentVectors}) over closed submanifolds extends to the 
Hausdorff, paracompact case of the infinite jet bundle and a finite rank vector bundle over it, then there exists an extended 
section $J^\infty_M F \rightarrow (\pi^\infty_0)^*VF$ over $J^\infty_M F$. Equivalently, this is a bundle map 
$Z : J^\infty_M F \rightarrow VF$ over $F$, i.e., an evolutionary vector field such that
$\CZ (\phi):= Z \circ j^\infty(\phi) = \CZ_\phi \in T_\phi \CF$. Thus the induced local vector field 
$\CZ:= Z\circ j^\infty \in \CX_\loc(\CF)$ extends $\CZ_\phi \in T_\phi \CF$.
\end{remark}

Note that, by the proof of Prop. \ref{Crit(L)functoriality}, the criticality/on-shell condition (Def. \ref{CriticalRkPointsofLagrangian}) 
of a field $\phi$ (and similarly of plots) corresponding to a \textit{local} Lagrangian $\CL$, may be equivalently expressed as the joint
vanishing of the vertical transgressed differential 
\footnote{This is essentially the transgressed characterization from the footnote of the decomposition 
\eqref{LagrangianVerticalDifferentialDecomposition}.} 
\vspace{-2mm}
\begin{align}\label{VerticalTransgressedDifferentialOfAction}
\delta S_{\Sigma^d}|_\phi \;:\; T_\phi \CF \longrightarrow \FR \, ,
\end{align}

\vspace{-2mm}
\noindent
for all compact oriented submanifolds $\Sigma^d\hookrightarrow M$. This provides a further (equivalent) interpretation for the `variation' symbol of local
Lagrangians/action functionals as used in the physics literature (cf. Rem. \ref{CriticalityViaModuliSpaceOf1-forms}{{(iii)}}).
Furthermore, using the transgressed vertical differential, it is \textit{actually} the case that $\Theta_{\CL,\Sigma^{d-1}}\in \Omega^1_\loc(\CF) $ 
is the presymplectic potential of $\Omega_{\CL,\Sigma^{d-1}}\in \Omega^2_\loc(\CF)$, and that the latter is indeed presymplectic, i.e., 
\vspace{-2mm}
$$
\Omega_{\CL,\Sigma^{d-1}}= \delta \Theta_{\CL,\Sigma^{d-1}} \qquad \mathrm{and} \qquad \delta \Omega_{\CL,\Sigma^{d-1}} = 0 \, . 
$$

\vspace{-2mm}
In the case of a product spacetime with a \textit{chosen} distinguished `space and time' splitting $M^d\cong N^{d-1}\times \FR $, 
as usually assumed in physical 
theories, there is a uniquely determined $2$-form on the on-shell space of fields $\CF_{\CE \CL}$, which makes it canonically into a presymplectic smooth set.

\begin{definition}[\bf  Covariant phase space]\label{CovariantPhaseSpace} Let $(\CF,\CL)$ be a local classical field theory on a product
spacetime $M=N\times \FR$, where $N$ is a compact oriented manifold. The \textit{covariant phase space} of $(\CF,\CL)$ is defined 
as the presymplectic smooth set
\vspace{-2mm}
$$(\CF_{\CE \CL}\, , \Omega_{\CL}) \, ,$$

\vspace{-2mm}
\noindent 
where $\Omega_{\CL}:= \Omega_{\CL, N\times\{0\}}\big\vert_{\CE \CL} \equiv \Omega_{\CL, N\times\{t_0\}}\big\vert_{\CE \CL} $ is the canonical transgression 
of the on-shell presymplectic current by Lem. \ref{OnShell2formAndCobordisms}.
\end{definition}
Strictly speaking, if $N$ has a non-trivial boundary, then the definition of $\Omega_\CL$ actually depends on the choice of presymplectic current $\theta_\CL$, 
parametrized by the addition of $\dd_M$-exact $(d-1,1)$-currents. Assuming the requirements of Rem. \ref{EvolutionaryCartanCalculus} are met, so that the local Cartan calculus descends to $\CF_{\CE \CL}\times M$, then the corresponding vertical differential also transgresses to $\Omega^\bullet_\loc(\CF_{\CE \CL})$, by the same proof as in Lem. \ref{TransgressedVerticalDifferential}, and hence $\Omega_\CL$ is truly a presymplectic form $\delta \Omega_\CL$ on $\CF_{\CE \CL}$.

\medskip 
Of course, as we have hinted in the discussion of Lem. \ref{Gaugesymmetryvspresymplecticcurrent}, if infinitesimal gauge symmetries exist then the covariant 
phase space cannot be symplectic,  in that the 2-form $\Omega_\CL$ is necessarily degenerate. 

\begin{lemma}[{\bf Gauge symmetries imply degeneracy}]\label{GaugeSymmetriesImplyDegeneracy}
Let $(\CF,\CL)$ be a local classical field theory on $N\times \FR$ with some non-trivial (parametrized) infinitesimal gauge symmetries 
$\CR_{(-)}: \Gamma_M(E)\rightarrow \CX_\loc^\CL(\CF)$. Then the canonical on-shell presymplectic current $\Omega_\CL$ is \textit{degenerate}.
\end{lemma}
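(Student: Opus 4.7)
The plan is to exhibit, for a suitably chosen gauge parameter $e$, a nonzero on-shell vector field $\CR_e|_{\CE \CL}$ that lies in the kernel of the transgressed $2$-form $\Omega_\CL$, by pairing it against an arbitrary on-shell tangent vector and using Stokes' theorem on the Cauchy slice $N\times\{0\}$ to annihilate the resulting boundary contribution.

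First, I would pick $e\in \Gamma_M(E)$ satisfying two conditions: (i) the on-shell vector field $\CR_e|_{\CE \CL}\in \CX_\loc(\CF_{\CE \CL})$ is nonzero, and (ii) the support of $e$ in $M=N\times\FR$ is disjoint from $\partial N\times\FR$ (condition (ii) being vacuous if $\partial N=\emptyset$). The existence of such $e$ follows from the non-triviality assumption together with the local and $\FR$-linear dependence of $\CR_{(-)}$ on its argument: starting from some $e_0$ with $\CR_{e_0}(\phi_0)(p_0)\neq 0$ at an on-shell field $\phi_0$ and a point $p_0\in \mathrm{int}(N)\times\FR$, one multiplies $e_0$ by a bump function equal to $1$ near $p_0$ and supported inside $\mathrm{int}(N)\times\FR$; locality of the defining bundle map $R:J^\infty_M E\times_M J^\infty_M F\to VF$ ensures the resulting $e$ still induces a nonzero on-shell vector field at $\phi_0$.

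Next, for any on-shell tangent vector $\CZ\in T\CF_{\CE \CL}$ at an on-shell field $\phi$, I would compute, using that the vector field $\CR_e$ is $M$-constant so its contraction passes under the integral of Def.~\ref{Transgression}, and invoking Prop.~\ref{Gaugesymmetryvspresymplecticcurrent},
\[
\Omega_\CL\big(\CR_e|_{\CE \CL},\, \CZ\big)\big\vert_\phi
\;=\; \int_{N\times\{0\}} \iota_\CZ\big(\iota_{\CR_e}\om_\CL\big)\big\vert_{\CE\CL}\big\vert_\phi
\;=\; \int_{N\times\{0\}} \iota_\CZ\, \dd_M\tilde{\CT}_e\big\vert_{\CE\CL}\big\vert_\phi
\;=\; -\!\int_{N\times\{0\}} \dd_M\big(\iota_\CZ\tilde{\CT}_e|_{\CE\CL}\big)\big\vert_\phi\,,
\]
where the last equality uses the commutation $\iota_\CZ\dd_M=-\dd_M\iota_\CZ$ from the local Cartan calculus (Lem.~\ref{LocalCartanCalculus}). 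Compatibility of $\dd_M$ with pullback along the inclusion $N\times\{0\}\hookrightarrow M$ (in the spirit of Lem.~\ref{HorizontalDifferentialBaseDeRhamCompatibility}) and ordinary Stokes' theorem on the slice $N\times\{0\}$ reduce the right-hand side to an integral over $\partial N\times\{0\}$, which vanishes by condition (ii) and the fact that, by $\FR$-linearity of $\tilde{\CT}_{(-)}$ in $e$, the support of $\tilde{\CT}_e|_\phi$ in $M$ is contained in the support of $e$. Since $\CZ$ was arbitrary, $\iota_{\CR_e|_{\CE\CL}}\Omega_\CL=0$, and together with (i) this exhibits the required degeneracy.

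The step I anticipate to be the main obstacle is the first one, namely making precise the notion of ``non-triviality'' so that it persists under bump-function localization of the gauge parameter. This parallels the corresponding hypothesis in Cor.~\ref{NonexistenceOfCauchySurfaces} and ultimately rests on locality of $\CR_{(-)}$ together with the petit-sheaf description of the field space in Rem.~\ref{FieldSpaceAsASheafOfSheaves}; once that is pinned down, the transgression-and-Stokes computation is a direct consequence of Prop.~\ref{Gaugesymmetryvspresymplecticcurrent}.
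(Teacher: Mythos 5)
Your proof is correct and follows essentially the same route as the paper's: combine Prop.~\ref{Gaugesymmetryvspresymplecticcurrent} (giving $\iota_{\CR_e}\om_\CL|_{\CE\CL}=\dd_M\tilde{\CT}_e|_{\CE\CL}$) with Prop.~\ref{InfinitesimalSymmetryTangentToOnshellFieldSpace} (so that $\CR_e$ restricts to a nonzero on-shell vector field), then transgress over $N\times\{0\}$ and kill the resulting exact term by Stokes. The paper's version is terser — it does not spell out the boundary/support bookkeeping you add for $\partial N\neq\emptyset$, nor the bump-function argument for persistence of non-triviality — but these are refinements of the same argument, not a different one.
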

\begin{proof}
Recall by Lem. \ref{Gaugesymmetryvspresymplecticcurrent}, for any gauge parameter $e\in \Gamma_M(E)$ 
\vspace{-2mm}
\begin{align*}
\iota_{\CR_e} \om_\CL \big{\vert}_{\CE \CL} = \dd_M  \tilde{\CT}_e \big{\vert}_{\CE \CL}
\quad \in \quad \Omega^{d-1,1}_\loc(\CF_{\CE \CL}\times M) \, . 
\end{align*}

\vspace{-1mm}
\noindent Recall, also by Prop. \ref{InfinitesimalSymmetryTangentToOnshellFieldSpace}, that the restriction $\CR_{e}|_{\CE \CL}$ does 
indeed define a vector field on $\CF_{\CE \CL}$, i.e., $\CR_{e}(\phi)\in T_\phi (\CF_{\CE \CL})$ for any $\phi \in \CF_{\CE \CL}(*)$. 
Choosing $e$ to have non-trivial support along $N\times \{0\}$, then transgressing along $N\times \{0\}$ gives
\vspace{-2mm}
$$
\Omega_\CL(\CR_{e}(\phi),-)= 0 \quad : \quad T_\phi \CF_{\CE \CL} \hookrightarrow \FR \, ,
$$

\vspace{-2mm}
\noindent for all on-shell fields $\phi$ and some $\CR_e(\phi)\neq 0 \in T_\phi \CF_{\CE \CL}$.
That is, $\Omega_\CL$ is a degenerate $2$-form on $\CF_{\CE \CL}$.
\end{proof}
Thus if a theory does have non-trivial infinitesimal gauge symmetries, one can only hope the pairing is non-degenerate on the smooth set 
quotient\footnote{This assumes all infinitesimal gauge symmetries are integrable. The tangent bundle on the quotient smooth set 
corresponds to the quotient of $T \CF_{\CE \CL}$ by the induced pushforward action of the gauge symmetries.} $\CF 
/ \mathrm{Diff}^{\CL,\mathrm{gauge}} (\CF)$ from \eqref{QuotientByGaugeSymmetries}. The pair of the quotient of on-shell fields by gauge 
symmetries with the induced presymplectic two-form
\vspace{-3mm}
\begin{align}\label{ReducedPhaseSpace}
\big( \CF_{\CE \CL} 
/ \mathrm{Diff}^{\CL,\mathrm{gauge}} (\CF)\, ,\, \bar{\Omega}_\CL \big) 
\end{align}

\vspace{-2mm}
\noindent
is known as the \textit{reduced covariant phase space}. We highlight that the presymplectic $2$-form does descend to the $T\CF_{\CE \CL}/\mathrm{Diff}^{\CL,\mathrm{gauge}} (\CF)$ 
quotient since finite local symmetries preserve it, by Lem. \ref{FiniteSymmetryPreservesPresymplecticCurrent}.

In the case where $N\equiv N\times \{0\}\hookrightarrow N\times \FR$ is a  Cauchy surface, and so in particular no gauge symmetries
exist\footnote{More generally, Cauchy surfaces may be formulated for the quotient of field configurations by gauge symmetries instead. 
Hence, with initial data uniquely determining an on-shell field, up a to gauge transformation. 
We will not expand on the details here.}, then one can transport the presymplectic structure along the defining isomorphism of 
Def. \ref{CauchySurface} to the smooth set of initial data on the Cauchy surface.

\begin{definition} [\bf Non-covariant phase space associated to Cauchy surface]\label{PhaseSpaceOfCauchySurface}
  The  \textit{non-covariant phase space} of $(\CF,\CL)$
with respect to the Cauchy surface $N=N\times \{0\} \hookrightarrow N\times \FR$ is defined as the presymplectic smooth set 
\vspace{-1mm}
$$\big(\mathrm{InData}_\CL(N)\, , \Omega_{\CL}^{\mathrm{Cau}}\big) \, ,$$

\vspace{-2mm}
\noindent 
where $\Omega_{\CL}^{\mathrm{Cau}}$ is the pullback of $\Omega_\CL$ along the isomorphism 
$\mathrm{Cau}_{N}=(-)|_{N} \circ j^\infty : \CF_{\CE \CL} \xlongrightarrow{\sim } \mathrm{InData}_\CL(N)$. 
\end{definition} 
Of course, for this to make sense, the correct notion of a tangent bundle on $\mathrm{InData}_\CL(N) \hookrightarrow \mathbold{\Gamma}_N (J^\infty_M F|_N)$
is assumed. We will not go
into the technical details here, \footnote{Nevertheless, these follow immediately abstractly in the infinitesimally thickened setting of \cite{GSS-2}.}
but the interested reader may verify that this is the subspace of 
\vspace{-2mm}
$$
\mathbold{\Gamma}_N\big( VJ^\infty_M F \big)\, ,
$$

\vspace{-2mm}
\noindent consisting of (plots of) sections which factor through the (synthetic) tangent bundle of the prolongated shell\footnote{With the current technology, 
this may be defined analogously to Def. \ref{OnShellTangentBundle}, i.e., as the (smooth) zero-locus of the pushforward 
$\pr EL_*: T J^\infty_M F \rightarrow T J^\infty_M F(V^*F \otimes \wedge^d T^*M)$.} $S^\infty_L\rightarrow J^\infty_M F$.
A particularly simple but illustrative example is that of the free particle. 

\begin{example}[\bf Covariant and non-covariant free particle phase space]
\label{Ex-free-cov}
Recall the free particle from Ex. \ref{ParticleMechanicsCauchySurface}, with Lagrangian $\CL=\partial_t \gamma^a \cdot \partial_t  \gamma_a \cdot \dd t$ and Euler--Lagrange form
$\CE \CL(\gamma)= \partial_t^2 \gamma \cdot \dd t$. Reducing the presymplectic current of the $O(n)$-model from Ex. \ref{PresymplecticCurrentExamples} 
to the particle case, or by calculating directly, 
\vspace{-2mm} 
$$
\om_\CL = \delta \gamma^a \wedge \delta (\partial_t \gamma_a) \quad \in \;\; \Omega_{\loc}^{0,2}\big(\mathbold{P}(\FR^d) \times \FR^1_t\big)\,.
$$

\vspace{-2mm} 
\noindent Transgression along the Cauchy surface $\{0\} \hookrightarrow \FR^1_t$ reduces to  evaluation at $t=0$, so that 
\vspace{-2mm} 
$$ 
\Omega_{\CL, \{0\}}= \delta \gamma^a \wedge \delta (\partial_t \gamma_a) \big\vert_{t=0} \quad \in \;\; \Omega^2_\loc\big(\mathbold{P}(\FR^d)\big)
$$

\vspace{-2mm} 
\noindent which further restricts to a presymplectic 2-form on the on-shell field space of $\mathrm{Lines}(\FR^d)\hookrightarrow \mathbold{P}(\FR^d)$. 
To be more explicit, recall that by linearity the tangent bundle of paths is $T \mathbold{P}(\FR^d) \cong \mathbold{P}(\FR^d)\times \mathbold{P}(\FR^d)$ (Rem. \ref{ScalarFieldTheoryTangetVectors}). Hence the tangent vectors over a path $\gamma \in \mathbold{P}(\FR^d)$ are
\vspace{-2mm}
$$ b^a \cdot \frac{\delta}{\delta \gamma^a}\big|_{\gamma}\equiv (\gamma, b)\quad \in \quad  T_{\gamma}\mathbold{P}(\FR^d) \cong \{\gamma\} \times \mathbold{P}(\FR^d)\, , 
$$ 

\vspace{-2mm}
\noindent
that is,  a copy of all paths. These are represented by $\FR^1_s$-plots of the form $\gamma + s\cdot b$. 
It follows that the presymplectic $2$-form $\Omega_{\CL, \{0\}}$ is given by the smooth map
\vspace{-2mm}
\begin{align*}
\Omega_{\CL, \{0\}} \;:\; T \mathbold{P}(\FR^d) \times_{\mathbold{P}(\FR^d)} T \mathbold{P}(\FR^d)  &\longrightarrow \FR 
\\
(\gamma, b_1,b_2)  &\longmapsto b^a_1(0) \cdot \partial_t b_{2a} (0) - b_{2}^a(0) \cdot \partial_t b_{1a}(0) \, ,
\end{align*}

\vspace{-2mm}
\noindent
which is in particular constant along the base $\mathbold{P}(\FR^d)$. From this form, it can be directly checked that the map is fiber-wise nondegenerate, 
i.e., a genuine \textit{symplectic} 2-form. However, since $b_1, b_2$ are arbitrary paths in $\FR^d$, the value does depend on the chosen transgressing 
submanifold, i.e., the chosen time instant $0\in \FR^1_t$. 

Let us now restrict this $2$-form to the on-shell field space $\mathrm{Lines}(\FR^d)\hookrightarrow \mathbold{P}(\FR^d)$. By representing tangent vectors 
at $\gamma$ via $\gamma +s\cdot b$, or otherwise, it follows that the pushforward of the Euler--Lagrange operator and the corresponding Jacobi equation 
\eqref{JacobiEquation} is 
$\CE \CL_* (\gamma,b) = (\partial_t^2\gamma, \partial_t^2 b) = 0$. Thus tangent vectors to lines are also lines, and so the tangent bundle
(Def. \ref{OnShellTangentBundle}) to the on-shell fields $\mathrm{Lines}(\FR^d)\hookrightarrow\mathbold{P}(\FR^d)$ is given by the smooth subset
\vspace{-1mm}
\begin{align*}
T \mathrm{Lines} (\FR^d) \cong \mathrm{Lines} (\FR^d) \times \mathrm{Lines} (\FR^d) \longhookrightarrow   T \mathbold{P}(\FR^d) \cong  \mathbold{P}(\FR^d)\times \mathbold{P}(\FR^d) \, .
\end{align*}

\vspace{-1mm}
\noindent
The on-shell restriction of the $2$-form $\Omega_{\CL,\{0\}}$ is given by exactly the same formula, with domain lines rather than all paths, 
\vspace{-2mm}
\begin{align*}
\Omega_\CL:= \Omega_{\CL, \{0\}}|_{\mathrm{Lines}(\FR^d)} \;\;:\;\; T \mathrm{Lines} \times_{\mathrm{Lines}(\FR^d)} T \mathrm{Lines}(\FR^d)  &\longrightarrow \FR 
\\[-2pt]
(\gamma, b_1,b_2)  &\longmapsto b^a_1(0) \cdot \partial_t b_{2 \, a} (0) - b_{2}^a(0) \cdot \partial_t b_{1 \, a}(0) \, ,
\end{align*}

\vspace{-2mm}
\noindent In contrast to the off-shell symplectic $2$-form, this is in fact independent of the transgressing time instant, as guaranteed by
Lem. \ref{OnShell2formAndCobordisms}, since both tangent vectors appearing are \textit{lines}, i.e. necessarily of the form 
$b_{1,2}= w_{1,2} \cdot t + d_{1,2} \,  \in  \, C^\infty(\FR^1_t, \FR^d)$. This completes the explicit description of the covariant
phase space (Def. \ref{CovariantPhaseSpace}) of the free particle
\vspace{-1mm} 
$$
\big( \mathrm{Lines}(\FR^d), \, \Omega_\CL \big)  \, .
$$

\vspace{-2mm} 
Moving onto the non-covariant phase space associated to the Cauchy surface $\{0\}\hookrightarrow \FR^1_t$, recall the initial data ``Cauchy diffeomorphism'' from Eq. \eqref{ParticleMechanicsIsomorphism}
\vspace{-4mm}
\begin{align*}
\mathrm{Cau}_{\{0\}} \;:\; \mathrm{Lines}(\FR^d) &\; \xlongrightarrow{\sim}\;  T\FR^d \cong \FR^d\times \FR^d 
\\[-2pt]
\gamma = v \cdot t + c &\; \longmapsto \; (c , v) \, , \nn 
\end{align*}

\vspace{-2mm}
\noindent
in which case the notion of a  tangent bundle on the initial data smooth set, and also the pushforward along the isomorphism, is transparent. 
Explicitly, let $\{x^a,p^a\}$ be the canonical coordinates on $T\FR^d$, then the pushforward acts on tangent vectors by
\vspace{-2mm}
\begin{align*}
(\mathrm{Cau}_{\{0\}})_{* \gamma} \;:\; T_{\gamma} \mathrm{Lines}(\FR^d) \cong \mathrm{Lines}(\FR^d)  &\; \xlongrightarrow{\sim}\;  T_{\mathrm{Cau}_{\{0\}}(\gamma)}(T\FR^d) 
\\[-2pt]
b = w \cdot t + d &\; \longmapsto \; b^a(0) \cdot \frac{\partial}{\partial x^a} + \partial_t b^a(0) \cdot \frac{\partial}{\partial p^a} \, , \nn 
\end{align*}

\vspace{-2mm}
\noindent
which for $b=w\cdot t +d $ is simply $d^a \cdot \frac{\partial}{\partial x^a} + w^a \cdot \frac{\partial}{\partial p^a}$\, . 
Thus, pulling back the covariant symplectic form $\Omega_\CL$ by the inverse of $(\mathrm{Cau}_{\{0\}})_*$, we get the induced 
symplectic form $\Omega_\CL^{\mathrm{Cau}}$ on $T\FR^d$ given by 
\vspace{-2mm}
\begin{align*}
\Omega_\CL^{\mathrm{Cau}} \;\; :\;\; T(T\FR^d)\times_{T\FR^d} T(T\FR^d) &\; \xlongrightarrow{}\;  \FR \\
\bigg(d_1^a \cdot \frac{\partial}{\partial x^a} + w_1^a \cdot \frac{\partial}{\partial p^a}, \, d_2^a \cdot \frac{\partial}{\partial x^a}
+ w_2^a \cdot \frac{\partial}{\partial p^a}\bigg) &\; \longmapsto \;\; d_1^a \cdot w_{2 \, a} - d_2^a \cdot w_{1\, a} \, .
\end{align*}

\vspace{-2mm}
\noindent Thus, in terms of coordinates, the non-covariant phase space of the free particle, associated with the time instant 
$\{0\}\hookrightarrow \FR^1_t$, is the symplectic manifold 
\vspace{0mm}
$$
\big(T\FR^d, \, dx^a \wedge \dd p_a \big)\, ,
$$

\vspace{-2mm}
\noindent which, under the isomorphism $T\FR^d \xrightarrow{\sim} T^*\FR^d$ induced by the Euclidean metric, is precisely 
the canonical cotangent bundle symplectic manifold. 
\end{example}

\paragraph{Covariant Poisson brackets} 
We conclude this first part of the series by detailing the transgressed version of the Hamiltonian current condition, and hence defining an (off-shell) 
Poisson bracket on the algebra of Hamiltonian functionals on $\CF$ corresponding to the (chosen) transgressed symplectic $2$-form $\Omega_{\CL,\Sigma^{d-1}}$. 
For spacetimes of the form $N\times \FR$, and modulo Rem. \ref{OnShellCartanCalculusCaveats}, this same story applies to the case of on-shell Hamiltonian 
functionals and hence defining a canonical Poisson bracket on the covariant phase space $(\CF_{\CE \CL}, \Omega_{\CL})$, corresponding to the 
canonical on-shell transgressed presymplectic $2$-form.

\begin{definition}[\bf  Hamiltonian functionals]\label{HamiltonianFunctionals}
A transgressed local functional
\vspace{-2mm}
$$ \CH_{\CZ,\Sigma^{d-1}}= \tau_{\Sigma^{d-1}} (\CH_{\CZ}) \equiv \int_{\Sigma^{d-1}} \CH_{\CZ} \quad \in  \quad C^\infty_\loc(\CF)
$$ 

\vspace{-2mm}
\noindent
is \textit{Hamiltonian} for the presymplectic smooth set $(\CF, \Omega_{\CL,\Sigma^{d-1}})$, corresponding to a compact oriented submanifold 
$\Sigma^{d-1}\hookrightarrow M$, if there exists a local vector field $\CZ\in \CX_\loc(\CF)$ such that
\vspace{-2mm}
$$\iota_{\CZ} \Omega_{\CL,\Sigma^{d-1}} + \delta \CH_{\CZ, \Sigma^{d-1}} = 0 \, . 
$$

\vspace{-2mm}
\noindent The minimal subalgebra generated by transgressed Hamiltonian functionals over $\Sigma^{d-1}$ is the \textit{algebra of
Hamiltonian functionals} over $\Sigma^{d-1}$ 
\vspace{-1mm}
$$
\mathrm{HamAlg}(\CF, \Omega_{\CL,\Sigma^{d-1}}) \longhookrightarrow C^\infty_\loc(\CF)\, . 
$$

\vspace{-2mm}
\end{definition}
Strictly speaking, we should be speaking of transgressed \textit{Hamiltonian (functional) pairs} $(\CH_\CZ, \CZ)\in C^\infty_\loc(\CF) \times \CX_\loc (\CF)$, 
for the same reasons as in the case of Hamiltonian vector fields and corresponding currents. We will suppress this point in favor of easing the heavy notation.

\begin{remark}[\bf Hamiltonian currents vs. Hamiltonian functionals]\label{HamCurrentsVsHamFunctionals}
$\,$

\noindent {\bf (i)} Assuming $\partial \Sigma^{d-1}=\emptyset$, then all Hamiltonian currents $\CH_{\CZ}\in \Omega^{d-1,0}_\loc(\CF\times M)$ 
transgress along $\Sigma^{d-1}$ to yield examples of Hamiltonian functionals, since the Hamiltonian current condition \eqref{HamiltonianCurrentCondition} 
transgresses to the Hamiltonian functional condition. 

\vspace{.5mm} 
\noindent {\bf (ii)}  If $\Sigma^{d-1}$ has a boundary, then only the subset of strict Hamiltonian currents transgresses to Hamiltonian functionals
along $\Sigma^{d-1}$. However, these do not exhaust all transgressed Hamiltonian functionals. In particular, for the case of no boundary, the 
Hamiltonian functional condition of Def. \ref{HamiltonianFunctionals} demands the corresponding current condition\footnote{This statement is
the converse to Stoke's Theorem (see \cite[\S 7.3]{Lafontaine}).  
The exact form on the right-hand side is possibly non-local in the field space.}, i.e.,  
$\iota_\CZ \omega_\CL |_{\Sigma^{d-1}} + \delta \CH_{\CZ} |_{\Sigma^{d-1}}= \dd_{\Sigma^{d-1}}(...)$ being exact, only when restricted to the 
submanifold $\Sigma^{d-1}$. 

\vspace{.5mm} 
\noindent {\bf (iii)}  Crucially, however, this does not imply the Hamiltonian current condition over $M$, i.e.,
$\iota_\CZ \omega_\CL + \delta \CH_{\CZ}$ being $\dd_M$-exact on the full spacetime $M$. 
 An explicit such case is considered in Ex. \ref{Ex-last}.
\end{remark}

The algebra of Hamiltonian functionals over $\Sigma^{d-1}$ is comprised of products of transgressed local functionals
$\CH_{\CZ,\Sigma^{d-1}}$ over $\Sigma^{d-1}$, each of which satisfies the Hamiltonian condition for some local vector field $\CZ$. 
Of course, any Hamiltonian functional satisfies the Hamiltonian condition, however potentially via a smooth but \textit{non-local} 
(functional) vector field.

\begin{lemma}[{\bf Hamiltonian functionals are Hamiltonian}]\label{HamiltonianFunctionalsAreHamiltonian}
Any Hamiltonian functional $\bar{\CH}_{\Sigma^{d-1}}\in \mathrm{HamAlg}(\CF,  \Omega_{\CL,\Sigma^{d-1}}) $ given by
\vspace{-2mm}
$$
\bar{\CH}_{\Sigma^{d-1}}= \CH_{\CZ_1,\Sigma^{d-1}}\cdot  \ldots \cdot  \CH_{\CZ_n,\Sigma^{d-1}}
= \tau_{\Sigma^{d-1}} (\CH_{\CZ_1})\cdot \ldots \cdot  \tau_{\Sigma^{d-1}} (\CH_{\CZ_n}) \quad : \quad \CF \longrightarrow \FR 
$$

\vspace{0mm}
\noindent
satisfies the Hamiltonian condition
\vspace{-1mm}
$$\iota_\CZ \Omega_{\CL,\Sigma^{d-1}} + \delta \hat{\CH}_{\Sigma^{d-1}} = 0 \, , 
$$
\vspace{-2mm}
\noindent for the smooth vector field $\CZ:= \sum_{i=0}^{n}  \big( \CH_{\CZ_1,\Sigma^{d-1}}\cdots 
 \hat{\CH}_{\CZ_i,\Sigma^{d-1}}\cdots  \CH_{\CZ_n,\Sigma^{d-1}}\big) \cdot \CZ_{i} \, \in \, \CX(\CF) $ where $(\hat{-})$ means omit.
\end{lemma}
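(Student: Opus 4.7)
The plan is to verify the Hamiltonian condition directly by applying the graded Leibniz rule to $\delta \bar{\CH}_{\Sigma^{d-1}}$ and then using the $C^\infty(\CF)$-linearity of interior product with vector fields on 2-forms. First, since $\delta:\Omega^\bullet_\loc(\CF)\to \Omega^{\bullet+1}_\loc(\CF)$ is extended to the full algebra of transgressed local forms as a derivation (Lem. \ref{TransgressedVerticalDifferential}), and each $\CH_{\CZ_i,\Sigma^{d-1}}$ is a $0$-form, the Leibniz rule applied to the product yields
$$
\delta \bar{\CH}_{\Sigma^{d-1}} \;=\; \sum_{i=1}^{n} f_i \cdot \delta \CH_{\CZ_i,\Sigma^{d-1}},
\qquad f_i \;:=\; \CH_{\CZ_1,\Sigma^{d-1}}\cdots \hat{\CH}_{\CZ_i,\Sigma^{d-1}}\cdots \CH_{\CZ_n,\Sigma^{d-1}}.
$$
By hypothesis, each factor $\CH_{\CZ_i,\Sigma^{d-1}}$ satisfies the Hamiltonian condition of Def. \ref{HamiltonianFunctionals} for the local vector field $\CZ_i$, so $\delta\CH_{\CZ_i,\Sigma^{d-1}}=-\iota_{\CZ_i}\Omega_{\CL,\Sigma^{d-1}}$, and thus
$$
\delta \bar{\CH}_{\Sigma^{d-1}} \;=\; -\sum_{i=1}^{n} f_i \cdot \iota_{\CZ_i}\Omega_{\CL,\Sigma^{d-1}}.
$$

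Next, the interior product with a smooth vector field is $C^\infty(\CF)$-linear on any differential form in the sense of Def. \ref{DifferentialFormsOnFieldSpace}: for any $f\in C^\infty(\CF)$, $\CZ\in \CX(\CF)$ and any $2$-form $\om$, one has $\iota_{f\cdot \CZ}\,\om = f\cdot \iota_\CZ\,\om$, as follows immediately from the fiber-wise linear structure \eqref{FiberWiseLinearStructure} on $T\CF$ and the fiber-wise linearity of $\om$. Applying this identity to each term in the sum and using additivity of $\iota_{(-)}$ in the vector-field entry, we obtain
$$
\delta \bar{\CH}_{\Sigma^{d-1}} \;=\; -\sum_{i=1}^{n} \iota_{f_i\cdot \CZ_i}\,\Omega_{\CL,\Sigma^{d-1}} \;=\; -\iota_{\CZ}\,\Omega_{\CL,\Sigma^{d-1}},
$$
where $\CZ=\sum_{i=1}^{n} f_i\cdot \CZ_i$ is precisely the vector field in the statement. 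Rearranging gives $\iota_\CZ \Omega_{\CL,\Sigma^{d-1}}+\delta \bar{\CH}_{\Sigma^{d-1}}=0$.

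It remains to confirm that $\CZ$ is a genuine smooth vector field on $\CF$. Each $f_i$ lies in $C^\infty(\CF)$, each $\CZ_i\in \CX_\loc(\CF)\subset \CX(\CF)$ is a smooth section of $T\CF\to \CF$, and scalar multiplication $y(\FR)\times T\CF\to T\CF$ together with fiber-wise addition $T\CF\times_\CF T\CF\to T\CF$ are smooth by \eqref{FiberWiseLinearStructure}; hence $\CZ\in \CX(\CF)$. The main conceptual point — and the only thing that is genuinely worth noting in the proof — is that this $\CZ$ is in general \emph{not} local, even when all the $\CZ_i$ are, since the coefficients $f_i$ are local functionals on $\CF$ that depend on integration over $\Sigma^{d-1}$. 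Nonetheless, the Hamiltonian condition holds at the level of smooth (possibly non-local) vector fields, which is exactly the content of the lemma and is what is needed to define the Poisson bracket on $\mathrm{HamAlg}(\CF,\Omega_{\CL,\Sigma^{d-1}})$.
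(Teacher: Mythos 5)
Your proof is correct and follows essentially the same route as the paper's: apply the derivation property of the transgressed vertical differential to the product, substitute the Hamiltonian condition for each factor, and absorb the functional coefficients into the vector field via the fiber-wise linearity of the interior product (the paper does this for $n=2$ and invokes induction, whereas you write the general $n$ case directly and add a few explicit smoothness remarks). No gap.
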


\newpage 
\begin{proof}
Recall that the algebra of Hamiltonian functionals is generated by transgressions of particular local $(d-1,0)$-form currents, and hence the
differential $\delta: \mathrm{HamAlg}(\CF, \Omega_{\CL,\Sigma^{d-1}})  \rightarrow \Omega^{1}(\CF)$ makes sense as the unique derivation extension
of that on charges
(Lem. \ref{TransgressedVerticalDifferential}).  We prove the statement for $n=2$, with the general case following by induction. 
Consider two transgressed Hamiltonian functionals $ \CH_{\CZ_1,\Sigma^{d-1}} , \,  \CH_{\CZ_2,\Sigma^{d-1}}$, and their product Hamiltonian functional 
 \vspace{-2mm}
$$
\CH_{\CZ_1,\Sigma^{d-1}} \cdot  \CH_{\CZ_2,\Sigma^{d-1}} \quad \in \quad C^\infty_\loc (\CF) \, . 
$$

\vspace{-2mm} 
\noindent By the derivation property of the vertical differential, it immediately follows that 
\vspace{-2mm}
\begin{align*}
\delta ( \CH_{\CZ_1,\Sigma^{d-1}} \cdot  \CH_{\CZ_2,\Sigma^{d-1}} ) 
&= \delta \CH_{\CZ_1,\Sigma^{d-1}} \cdot  \CH_{\CZ_2,\Sigma^{d-1}} +\CH_{\CZ_1,\Sigma^{d-1}} \cdot  \delta  \CH_{\CZ_2,\Sigma^{d-1}}  \\ 
&= - \iota_{\CZ_1} \Omega_{\CL , \Sigma^{d-1}} \cdot \CH_{\CZ_2,\Sigma^{d-1}}- \CH_{\CZ_1,\Sigma^{d-1}} \cdot  \iota_{\CZ_2} \Omega_{\CL, \Sigma^{d-1} }\\ 
&= - \iota_{\big(\CH_{\CZ_2,\Sigma^{d-1}}\cdot \CZ_1 + \CH_{\CZ_1,\Sigma^{d-1}} \cdot \CZ_2 \big) } \Omega_{\CL, \Sigma^{d-1}} \, .
\end{align*}

\vspace{-7mm} 
\end{proof}

Having identified the algebra of (smooth) Hamiltonian functionals of the presymplectic smooth set $(\CF, \Omega_{\CL,\Sigma^{d-1}})$,
the definition of the corresponding Poisson bracket follows exactly as in finite-dimensional (pre)symplectic geometry.
\begin{definition}[\bf  Off-shell Poisson bracket]\label{OffShellPoissonBracket}
Let $\Sigma^{d-1}$ be a compact oriented submanifold. The \textit{Poisson bracket} of Hamiltonian functionals
\vspace{-2mm}
\begin{align*}
\{-,-\}_{\Sigma^{d-1}} \;:\; \mathrm{HamAlg}(\CF, \Omega_{\CL,\Sigma^{d-1}}) \times \mathrm{HamAlg}(\CF, \Omega_{\CL,\Sigma^{d-1}}) 
\; \longrightarrow \;
\mathrm{HamAlg}(\CF, \Omega_{\CL,\Sigma^{d-1}})
\end{align*}

\vspace{-2mm}
\noindent
of the presymplectic smooth set $(\CF,\Omega_{\CL, \Sigma^{d-1}})$ is defined by 
\vspace{-2mm}
\begin{align*}
\big\{\CH_{\CZ_1,\Sigma^{d-1}}\, ,\, \CH_{\CZ_2, \Sigma^{d-1}}\big\}_{\Sigma^{d-1}} 
:= -\iota_{\CZ_1} \iota_{\CZ_2} \Omega_{\CL,\Sigma^{d-1}} 
\end{align*}

\vspace{-2mm}
\noindent for any two transgressed Hamiltonian functionals, and extended as a bi-derivation to the full algebra of Hamiltonian functionals.  
\end{definition}
It is not hard to see on arbitrary Hamiltonian functionals, given by products of transgressed Hamiltonian functionals,
the Poisson bracket coincides with the contraction of the presymplectic 2-form with the corresponding non-local
vector fields from Lem. \ref{HamiltonianFunctionalsAreHamiltonian}. 

\smallskip 
It remains to show that the bracket is well-defined, i.e., indeed independent of the representative currents and vector fields, 
and further mapping into Hamiltonian functionals, rather than arbitrary smooth functionals. Moreover, it remains to justify 
the name \textit{Poisson}, i.e., that it does actually satisfy the Jacobi identity.

\begin{lemma}[{\bf Poisson bracket is Lie}]\label{PoissonBracketIsLie}
The bracket of Def. \ref{OffShellPoissonBracket} is well-defined and satisfies the Jacobi identity, turning the algebra of Hamiltonian 
functionals into a Poisson algebra
\vspace{-2mm}
$$
\Big(\mathrm{HamAlg}\big(\CF, \Omega_{\CL,\Sigma^{d-1}}\big), \, \{-,-\}_{\Sigma^{d-1}}\Big ) \, .
$$

\vspace{-2mm}
\end{lemma}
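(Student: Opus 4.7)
The plan is to proceed in three steps, mirroring the three claims: well-definedness of the bracket, the fact that it lands in the subalgebra of Hamiltonian functionals, and the Jacobi identity. The key technical input throughout is that the vertical differential $\delta$ transgresses (Lem.~\ref{TransgressedVerticalDifferential}) and that, by construction, $\Omega_{\CL,\Sigma^{d-1}} = \delta \Theta_{\CL,\Sigma^{d-1}}$ is $\delta$-closed.

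\emph{Step 1 (well-definedness).} Using the Hamiltonian condition for $\CH_{\CZ_2,\Sigma^{d-1}}$, I would rewrite
\[
-\iota_{\CZ_1}\iota_{\CZ_2}\Omega_{\CL,\Sigma^{d-1}} \;=\; \iota_{\CZ_1}\,\delta\CH_{\CZ_2,\Sigma^{d-1}}
\]
and, symmetrically, equal to $-\iota_{\CZ_2}\,\delta\CH_{\CZ_1,\Sigma^{d-1}}$. Given two choices $\CZ_1, \CZ_1'$ of Hamiltonian vector field for $\CH_{\CZ_1,\Sigma^{d-1}}$, the difference $\CZ_1 - \CZ_1'$ satisfies $\iota_{\CZ_1-\CZ_1'}\Omega_{\CL,\Sigma^{d-1}}=0$, so the bracket is independent of the chosen representative vector field. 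Manifestly, the two symmetric expressions only involve the functionals, so the bracket descends to a well-defined map on Hamiltonian functionals; its extension to products as a bi-derivation follows from the bi-derivation property of $\iota_{\CZ}$ and the form of the vector field given in Lem.~\ref{HamiltonianFunctionalsAreHamiltonian}.

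\emph{Step 2 (image lies in Hamiltonian functionals).} I would propose that $[\CZ_1,\CZ_2] \in \CX_\loc(\CF)$ (which exists by Def.~\ref{LieAlgebraOfLocalVectorFields}) is a Hamiltonian vector field for the bracket. Using the Cartan identity $\mathbb{L}_\CZ = \iota_\CZ \delta + \delta \iota_\CZ$, the closedness $\delta\Omega_{\CL,\Sigma^{d-1}} = 0$ (which itself follows from $\om_\CL = \delta\theta_\CL$ via transgression), and the identity $\iota_{[\CZ_1,\CZ_2]} = [\mathbb{L}_{\CZ_1}, \iota_{\CZ_2}]$, one computes
\[
\delta\bigl(-\iota_{\CZ_1}\iota_{\CZ_2}\Omega_{\CL,\Sigma^{d-1}}\bigr)
\;=\; -\iota_{[\CZ_1,\CZ_2]}\Omega_{\CL,\Sigma^{d-1}},
\]
which is exactly the Hamiltonian condition for $\{\CH_{\CZ_1,\Sigma^{d-1}},\CH_{\CZ_2,\Sigma^{d-1}}\}_{\Sigma^{d-1}}$ with witnessing vector field $[\CZ_1,\CZ_2]$. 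The crucial inputs are (i) that $\mathbb{L}_{\CZ_i}\Omega_{\CL,\Sigma^{d-1}} = \delta\iota_{\CZ_i}\Omega_{\CL,\Sigma^{d-1}} = -\delta^2 \CH_{\CZ_i,\Sigma^{d-1}} = 0$, so every Hamiltonian vector field is symplectic, and (ii) that the standard Cartan calculus identities for $\iota, \delta, \mathbb{L}$ along local vector fields on $\CF$ hold on the subalgebra of transgressed functionals, which is established through the extension argument in Rem.~\ref{CaveatOnTransgressedVerticalDifferential} combined with the local Cartan calculus of Lem.~\ref{LocalCartanCalculus} applied pre-transgression.

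\emph{Step 3 (Jacobi identity).} For three Hamiltonian pairs $(\CH_{\CZ_i,\Sigma^{d-1}}, \CZ_i)$, $i=1,2,3$, Step 2 immediately gives
\[
\bigl\{\bigl\{\CH_{\CZ_1,\Sigma^{d-1}},\CH_{\CZ_2,\Sigma^{d-1}}\bigr\}_{\Sigma^{d-1}},\CH_{\CZ_3,\Sigma^{d-1}}\bigr\}_{\Sigma^{d-1}}
\;=\; -\iota_{[\CZ_1,\CZ_2]}\iota_{\CZ_3}\Omega_{\CL,\Sigma^{d-1}}.
\]
Summing cyclically over $(1,2,3)$, I would identify the resulting expression with the standard Koszul formula for $\delta\Omega_{\CL,\Sigma^{d-1}}$ evaluated on $(\CZ_1,\CZ_2,\CZ_3)$, plus contributions that cancel pairwise because each $\CZ_i$ is a Hamiltonian (hence symplectic) vector field. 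Since $\delta\Omega_{\CL,\Sigma^{d-1}}=0$, the Jacobiator vanishes, and extension to products of Hamiltonian functionals follows from the bi-derivation property together with the already established Leibniz rule.

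\emph{Main obstacle.} The serious technical point is not the formal manipulation of the Cartan identities but the verification that those identities genuinely make sense on the subalgebra of transgressed functionals. Lem.~\ref{TransgressedVerticalDifferential} only gives $\delta$, so I must carefully check that $\iota_\CZ$ for local vector fields and the induced $\mathbb{L}_\CZ$ descend along the transgression compatibly, and that $\mathbb{L}_\CZ\Omega_{\CL,\Sigma^{d-1}}=0$ in the transgressed setting rather than merely up to an exact form (as in Lem.~\ref{InfinitesimalSymmetryPreservesPresymplecticCurrent}). The boundaryless assumption on $\Sigma^{d-1}$ (needed so that $\Omega_{\CL,\Sigma^{d-1}}$ and $\Theta_{\CL,\Sigma^{d-1}}$ are uniquely defined) is what makes the exact-form ambiguities drop out upon integration, and this is where I would focus most of the care.
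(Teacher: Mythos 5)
Your proposal is correct and follows essentially the same route as the paper's proof: well-definedness via the fact that the transgressed vertical differential of a charge depends only on the functional (Lem.~\ref{TransgressedVerticalDifferential}), closure under the bracket via $\delta\big(\iota_{\CZ_1}\iota_{\CZ_2}\Omega_{\CL,\Sigma^{d-1}}\big)=\pm\,\iota_{[\CZ_1,\CZ_2]}\Omega_{\CL,\Sigma^{d-1}}$, and the Jacobi identity from applying $\delta$ to the degree-reasons vanishing of $\iota_{\CZ_1}\iota_{\CZ_2}\iota_{\CZ_3}\Omega_{\CL,\Sigma^{d-1}}$ (equivalently your Koszul expansion of $\delta\Omega_{\CL,\Sigma^{d-1}}=0$), all computed with the local Cartan calculus under the integral. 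The only cosmetic difference is that in Step 1 you emphasize independence of the chosen Hamiltonian vector field while the paper emphasizes independence of the representing current; both checks rest on the same two facts and together give full well-definedness.
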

\begin{proof}
We show the statement for transgressed Hamiltonian functionals, with the general case following by the biderivation property.
To see that the bracket is independent of the chosen representative currents, recall by Lem. \ref{TransgressedVerticalDifferential}
that for any two local currents  
$\CH_{\CZ_2}, \, \hat{\CH}_{\hat{\CZ}_2}$ such that their transgression/integral along $\Sigma^{d-1}$ agrees 
$\CH_{\CZ_2, \Sigma^{d-1}} =  \hat{\CH}_{\hat{\CZ}_2, \Sigma^{d-1}}$,
it follows that their differentials also agree $\delta \CH_{\CZ_2, \Sigma^{d-1}} =  \delta \hat{\CH}_{\hat{\CZ}_2, \Sigma^{d-1}}$. 
Assuming now the transgressed functionals are both Hamiltonian with respect to the vector fields $\CZ_2, \hat{\CZ}_2$,
respectively, it follows that
\vspace{-2mm}
\begin{align*}
\big\{\CH_{\CZ_1,\Sigma^{d-1}} \, ,\, \hat{\CH}_{\hat{\CZ}_2, \Sigma^{d-1}}\big\}_{\Sigma^{d-1}} 
:&= -\iota_{\CZ_1} \iota_{\hat{\CZ}_2} \Omega_{\CL,\Sigma^{d-1}} = \iota_{\CZ_1} \delta \hat{\CH}_{\hat{\CZ_2},\Sigma^{d-1}} \\
&= \iota_{\CZ_1} \delta \CH_{\CZ_2,\Sigma^{d-1}} =-\iota_{\CZ_1} \iota_{\CZ_2} \Omega_{\CL,\Sigma^{d-1}} \\
&= \big\{\CH_{\CZ_1,\Sigma^{d-1}} \,,\, \CH_{\CZ_2, \Sigma^{d-1}}\big\}_{\Sigma^{d-1}}\, .
\end{align*}

\vspace{-2mm}
\noindent
The fact that the image of the bracket sits inside Hamiltonian functionals follows by applying the local Cartan calculus 
under the integral, giving
\vspace{-1mm}
$$
\delta\, \big\{\CH_{\CZ_1,\Sigma^{d-1}}\, ,\, \CH_{\CZ_2, \Sigma^{d-1}}\big\}_{\Sigma^{d-1}}
= \delta \iota_{\CZ_1} \iota_{\CZ_2} \Omega_{\CL,\Sigma^{d-1}}= \iota_{[\CZ_1,\CZ_2]} \Omega_{\CL,\Sigma^{d-1}} \, .
$$

\vspace{0mm}
\noindent Finally, the Jacobi identity follows from the same calculation as that of Hamiltonian currents \eqref{HamiltonianBracketJacobiator} 
(where the Jacobi identity however fails). That is, for any three transgressed Hamiltonian functionals 
$\CH_{\CZ_1, \Sigma^{d-1}},\,  \CH_{\CZ_2, \Sigma^{d-1}},\,  \CH_{\CZ_3, \Sigma^{d-1}}$, we have the trivial vanishing 
$\iota_{\CZ_1}\iota_{\CZ_2} \iota_{\CZ_3} \Omega_{\CL,\Sigma^{d-1}}$ by degree reasons. Applying the (transgressed) 
vertical differential (Lem. \ref{TransgressedVerticalDifferential}) on both sides and computing using the local Cartan
calculus under the integral yields 
\vspace{-2mm}
$$
0 =  \iota_{\CZ_1} \iota_{[\CZ_2,\CZ_3]}\Omega_{\CL,\Sigma^{d-1}} + \mathrm{cyc}(1,2,3) \, ,
$$

\vspace{-2mm}
\noindent which yields exactly the Jacobi identity for the Poisson bracket via Def. \ref{OffShellPoissonBracket}.
\end{proof}
Even though (transgressed) Hamiltonian functionals are not exhausted by transgressions of Hamiltonian currents
(Rem. \ref{HamCurrentsVsHamFunctionals}), the local `Poisson' Lie algebras of Hamiltonian pairs of 
Lem. \ref{PoissonBracketIsLie} and  \eqref{LieAlgebraOfStrictHamiltonianPairs} do map into 
the Poisson Lie algebra of Hamiltonian functionals upon transgression, providing plenty of particular instances 
of Hamiltonian functionals and their Poisson brackets.

\begin{corollary}[\bf  Brackets of Hamiltonian currents and functionals] \label{PoissonVsCurrentBrackets}
Let $\Sigma^{d-1}\hookrightarrow M$ be a compact oriented submanifold.  

\noindent {\bf (i)} If $\Sigma^{d-1}$ is without boundary, then  transgression $\tau_{\Sigma^{d-1}}: \mathrm{HamPrs}(\CF\times M, \om_\CL)\rightarrow 
\mathrm{HamAlg}(\CF, \Omega_{\CL,\Sigma^{d-1}})$ intertwines the bracket of Hamiltonian pairs (Def. \ref{BracketOfHamiltonianPairs})
and the corresponding Poisson bracket 
\vspace{-1mm}
\begin{align*}
\tau_{\Sigma^{d-1}}\big(  \{\CH_{\CZ_1}, \,  \CH_{\CZ_2}\}_H \big) = \big\{ \CH_{\CZ_1,\Sigma^{d-1}}, \, \CH_{\CZ_2, \Sigma^{d-1}} \big\}   
\quad : \quad \CF \longrightarrow \FR \;.
\end{align*}

\vspace{-1mm}
\noindent
In particular, the transgression over $\Sigma^{d-1}$ maps the central
extension local Poisson Lie algebra of Lem. \ref{LocalPoissonLieAlgebraOfHamiltonianCurrents} into the Poisson Lie algebra of Hamiltonian functionals 
over $\Sigma^{d-1}$ of Lem \ref{PoissonBracketIsLie}.

\noindent {\bf (ii)} If $\Sigma^{d-1}$ is with boundary, then the same holds for the subset transgressed strict Hamiltonian currents. In particular,
the transgression over $\Sigma^{d-1}$ maps the Lie algebra of strict Hamiltonian Pairs of  \eqref{LieAlgebraOfStrictHamiltonianPairs} 
into the Poisson Lie algebra of Hamiltonian functionals over $\Sigma^{d-1}$. 
\end{corollary}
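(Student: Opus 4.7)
The plan is to reduce both statements to two elementary observations: that transgression along a submanifold interchanges the $\dd_M$-integration with the Stokes boundary integral, and that contraction by a \emph{local} vector field $\CZ$ on $\CF$ commutes with transgression along submanifolds of $M$ (because $\CZ$ has no horizontal component along $TM$, so $\iota_\CZ$ passes under $\int_{\Sigma^{d-1}}$). Concretely, I would first verify that for any local $(d-1,q)$-form current $\varpi$ and any local vector field $\CZ \in \CX_{\loc}(\CF)$, one has $\tau_{\Sigma^{d-1}}(\iota_\CZ \varpi) = \iota_\CZ \tau_{\Sigma^{d-1}}(\varpi)$, and that Lem.~\ref{TransgressedVerticalDifferential} gives $\tau_{\Sigma^{d-1}}(\delta \varpi) = \delta\, \tau_{\Sigma^{d-1}}(\varpi)$; together with ordinary Stokes $\tau_{\Sigma^{d-1}}(\dd_M \CT) = \int_{\partial \Sigma^{d-1}} \CT$, these three facts are the only ingredients needed.

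With these in hand, step one is to check that the transgression lands in Hamiltonian functionals. Starting from the Hamiltonian current condition $\iota_\CZ \omega_\CL + \delta \CH_\CZ = \dd_M \CT_\CZ$ (Eq.~\eqref{HamiltonianCurrentCondition}), transgressing along $\Sigma^{d-1}$ and using the three facts above gives
\[
\iota_\CZ \Omega_{\CL,\Sigma^{d-1}} + \delta\, \CH_{\CZ,\Sigma^{d-1}} \;=\; \int_{\partial \Sigma^{d-1}} \CT_\CZ.
\]
When $\partial \Sigma^{d-1} = \emptyset$ the right-hand side vanishes, establishing the Hamiltonian functional condition of Def.~\ref{HamiltonianFunctionals} and hence $\tau_{\Sigma^{d-1}} \CH_\CZ \in \mathrm{HamAlg}(\CF, \Omega_{\CL,\Sigma^{d-1}})$; this proves the first part of (i). For (ii) the right-hand side need not vanish, but it does for \emph{strict} Hamiltonian pairs where $\CT_\CZ$ may be taken to be zero, giving the same conclusion.

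Step two is the bracket intertwining, which is now essentially tautological: applying $\tau_{\Sigma^{d-1}}$ to the defining formula of the Hamiltonian pair bracket (Def.~\ref{BracketOfHamiltonianPairs}) and commuting the two local contractions past the integral yields
\[
\tau_{\Sigma^{d-1}}\bigl(\{(\CH_{\CZ_1},\CZ_1),(\CH_{\CZ_2},\CZ_2)\}_H\bigr) \;=\; -\iota_{\CZ_1}\iota_{\CZ_2}\Omega_{\CL,\Sigma^{d-1}} \;=\; \{\CH_{\CZ_1,\Sigma^{d-1}},\CH_{\CZ_2,\Sigma^{d-1}}\},
\]
where the last equality is Def.~\ref{OffShellPoissonBracket}. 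In case (i) one must furthermore check that transgression descends to the quotient by $\dd_M$-exact $(d-2,0)$-currents appearing in Lem.~\ref{LocalPoissonLieAlgebraOfHamiltonianCurrents}; but by Stokes again $\tau_{\Sigma^{d-1}}(\dd_M \tilde{\CT}) = \int_{\partial \Sigma^{d-1}} \tilde{\CT} = 0$, so the quotient map is well defined. Combined with the bracket intertwining and the fact that $\tau_{\Sigma^{d-1}}$ is $\FR$-linear, this shows that transgression is a Lie algebra homomorphism into $(\mathrm{HamAlg}(\CF,\Omega_{\CL,\Sigma^{d-1}}), \{-,-\}_{\Sigma^{d-1}})$, completing (i); case (ii) follows identically, restricted to strict Hamiltonian pairs where no boundary terms appear.

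The only step requiring any real care is the auxiliary lemma that $\iota_\CZ$ commutes with $\tau_{\Sigma^{d-1}}$ for $\CZ \in \CX_\loc(\CF)$; I expect the obstacle here is purely bookkeeping, since unpacking Def.~\ref{Transgression} at the level of tangent vectors immediately reduces the claim to the fact that a local vector field has no $TM$-component, so the inner contraction commutes with integration over fibers of $\pr_M : \CF \times M \to M$ restricted to $\Sigma^{d-1}$. No deeper analytic input is needed beyond what is already established in Lem.~\ref{TransgressedVerticalDifferential}.
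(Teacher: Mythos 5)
Your proposal is correct and follows essentially the same route as the paper: transgress the Hamiltonian current condition (the $\dd_M$-exact term dies by Stokes when $\partial\Sigma^{d-1}=\emptyset$, or is absent for strict pairs), then commute the two contractions by local (hence vertical) vector fields past the integral to match the defining formula of the Poisson bracket. The paper's proof is terser but contains exactly these two steps; your added remarks on the well-definedness of the quotient map and on why $\iota_\CZ$ commutes with $\tau_{\Sigma^{d-1}}$ only make explicit what the paper leaves implicit.
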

\begin{proof}
This follows immediately since the Poisson bracket of Hamiltonian functionals is defined via $\Omega_{\CL,\Sigma^{d-1}}$ which is the transgression 
of $\om_\CL$, while the bracket of Hamiltonian currents is defined via the latter presymplectic current. Explicitly, for a submanifold without boundary, 
the Hamiltonian current condition (Eq. \eqref{HamiltonianCurrentCondition}) transgresses to the Hamiltonian functional condition
(Def. \ref{HamiltonianFunctionals}) since the $\dd_M-$exact terms vanish upon integration. Furthermore, 
\vspace{-2mm}
\begin{align*}
 \tau_{\Sigma^{d-1}}\big(  \{\CH_{\CZ_1}, \,  \CH_{\CZ_2}\}_H \big) &= -\int_{\Sigma^{d-1}} \iota_{\CZ_1} \iota_{\CZ_2} \om_\CL  
 =- \iota_{\CZ_1} \iota_{\CZ_2} \int_{\Sigma^{d-1}} \om_\CL \\
 &= - \iota_{\CZ_1} \iota_{\CZ_2} \Omega_{\CL,\Sigma^{d-1}} =  \big\{ \CH_{\CZ_1,\Sigma^{d-1}}, \, \CH_{\CZ_2, \Sigma^{d-1}} \big\} 
\end{align*}

\vspace{-2mm}
\noindent The case of transgression over manifolds with boundary follows identically, since no $\dd_M$-exact terms appear in the strict Hamiltonian current condition.
\end{proof} 

\begin{remark}[\bf Poisson algebra of the covariant phase space]\label{PoissonAlgebraOfTheCovariantPhaseSpace}
Consider the case of a product space-time $M=N\times \FR$, whereby there exists a canonical on-shell presymplectic 2-form making up 
the covariant phase space  (Def. \ref{CovariantPhaseSpace})
\vspace{-2mm}
$$(\CF_{\CE \CL}, \Omega_\CL)\, .$$

\vspace{-2mm}
\noindent Assuming the requirements of Rem. \ref{OnShellCartanCalculusCaveats} are met, so that the local Cartan calculus descends to $\CF_{\CE \CL}\times M$ 
and hence further transgresses to $\CF_{\CE \CL}$ as in Lem. \ref{TransgressedVerticalDifferential}, then the above discussion follows verbatim for the 
case of on-shell Hamiltonian functionals of $(\CF_{\CE \CL}, \Omega_\CL)$, and the canonical Poisson algebra of functionals they generate, determined
by the canonical presymplectic $2$-form. In more detail, these are generated by on-shell transgressed functionals 
$\CH_{\CZ,N} |_{\CE \CL} \in C^\infty_\loc(\CF_{\CE \CL})$ that satisfy the on-shell Hamiltonian condition
$\iota_{\CZ} \Omega_{\CL}|_{\CE \CL} + \delta \CH_{\CZ, N}|_{\CE \CL} = 0$ for some on-shell vector field 
$\CZ|_{\CE \CL} \in \CX_{\loc}(\CF_{\CE \CL})$. Modulo Rem. \ref{OnShellCartanCalculusCaveats}, these are represented by off-shell functionals
$\CH_{\CZ,N}\in C^\infty(\CF)$ that satisfy the off-shell Hamiltonian functional condition, up to the transgression a differential operator 
applied to the Euler--Lagrange form 
\vspace{-2mm}
$$
\iota_{\CZ} \Omega_{\CL} + \delta \CH_{\CZ, N} = \int_{N} \CJ^I \partial_I \CE \CL \, .
$$

\vspace{-2mm}
\end{remark}

\begin{example}[\bf Free particle covariant Poisson algebra]\label{Ex-last} Along the lines of Ex. \ref{Ex-free-cov},
recall the free particle description of the off-shell presymplectic current 
\vspace{-2mm} 
$$
\om_\CL = \delta \gamma^a \wedge \delta (\partial_t \gamma_a) \quad \in \;\; \Omega_{\loc}^{0,2}\big(\mathbold{P}(\FR^d) \times \FR^1_t\big)\,.
$$

\vspace{-2mm} 
\noindent and induced on-shell symplectic $2$-form 
\vspace{-2mm} 
$$ 
\Omega_{\CL}= \delta \gamma^a \wedge \delta (\partial_t \gamma_a) \big\vert_{t=t_0} \quad \in \;\; \Omega^2_\loc\big(\mathrm{Lines}(\FR^d)\big) \, ,
$$

\vspace{-2mm} 
\noindent
which is independent of the chosen $\{t_0\}\hookrightarrow \FR^1_t$. Using the identification of the on-shell field space with the symplectic manifold 
$(T\FR^d, \, \dd x^a\wedge \dd p_a)$, via the diffeomorphism described in Ex. \ref{Ex-free-cov}, and the straightforward description of Hamiltonian
functions on the latter, it is often claimed that the on-shell functionals 
\vspace{-2mm}
$$ \Gamma^a(t_0):= \gamma^a |_{t=t_0}\quad , \quad  \dot{\Gamma}_a(t_0):= \partial_t \gamma_a |_{t=t_0} 
\quad  \in \quad  C^\infty_\loc \big(\mathrm{Lines}(\FR^d)\big)
$$

\vspace{-2mm}
\noindent 
are both Hamiltonian (conjugate variables) for any $t_0\in \FR^1_t$, with Hamiltonian vector fields  
\vspace{-2mm} 
$$ 
\CZ_{\Gamma^a(t_0)}\overset{?}{=} \frac{\delta}{\delta(\partial_t \gamma^a)} \quad , \quad  \CZ_{\dot{\Gamma}_a(t_0)}
:= - \frac{\delta}{\delta \gamma^a} \quad  \in \quad  \CX_\loc \big(\mathrm{Lines}(\FR^d)\big) \, ,
$$ 
respectively. The question mark on the former is intentional since this symbol does \textit{not} define a local vector field on the field space.
Of course, this is the naive formula substitution of the canonical Hamiltonian vector fields corresponding to the non-covariant Hamiltonian
conjugate variables $x^a, p^a \in C^\infty(T\FR^d)$ on the non-covariant phase space at $\{0\}\hookrightarrow \FR^1_t$. 
We now make precise what this is  actually supposed to mean, thus justifying its formal manipulation.

Firstly, we note the contraction of the (off-shell) symplectic current $\om_\CL$ with the latter well-defined local vector field
$-\frac{\delta}{\delta \gamma^a}$ gives 
\vspace{-4mm}
\begin{align*}
\iota_{\CZ_{\dot{\Gamma}_a(t_0)}} \om_\CL &= \iota_{-\frac{\delta}{\delta \gamma^a}} \big( \delta \gamma^b \wedge \delta (\partial_t \gamma_b)\big)\\
&= - \delta^{a}_{\, b}\cdot  \delta(\partial_t \gamma_b) + \delta \gamma^a \cdot \partial_t (\delta^{a}_{\,b}) = - \delta(\partial_t \gamma_a) \, ,    
\end{align*}

\vspace{-2mm}
\noindent and so $\partial_t \gamma_a\in \Omega^{0,0}\big(\mathbold{P}(\FR^d)\times \FR^1_t\big)$ is an off-shell Hamiltonian current
(Def. \ref{HamiltonianCurrentCondition}). This immediately transgresses to the off-shell Hamiltonian functional 
$\partial_t \gamma_a |_{t=t_0}\in C^\infty_\loc\big(\mathbold{P}(\FR^d)\big)$. Since $-\frac{\delta}{\delta \gamma^a}$ is tangent to the 
on-shell fields $\mathrm{Lines}(\FR^d)$, this indeed restricts to give the \textit{Hamiltonian} functional
\vspace{-2mm}
$$
\dot{\Gamma}_a(t_0):= \partial_t \gamma_a |_{t=t_0} \quad  \in \quad  C^\infty_\loc \big(\mathrm{Lines}(\FR^d)\big)
$$

\vspace{-2mm}
\noindent which corresponds to the observable that measures the velocity (momentum) of the particle moving on a line at $t=t_0$. 

Next, consider the well-defined off-shell local vector field given by 
\vspace{-2mm}
$$
t\cdot \frac{\delta}{\delta \gamma_a} \;\; \in \CX_\loc\big(\mathbold{P}(\FR^d)\big) \, 
$$

\vspace{-2mm}
\noindent 
which is the vector field $\mathbold{P}(\FR^d)\longrightarrow T \mathbold{P}(\FR^d)$ given by $\gamma \rightarrow (\gamma, t \cdot e_a)$.
In other words, it is the (constant) vector field on the space of paths that assigns (to any path $\gamma$) the tangent vector path 
$b= 1\cdot t \cdot e_a + 0$ of constant speed $v=1$ in the direction $e_a\in \FR^d$, passing through the origin. Since tangent vectors 
to the on-shell fields $\mathrm{Lines}(\FR^d)$ are also lines, it follows that this vector field does restrict to an on-shell vector field. 
Its contraction with the off-shell presymplectic current gives 
\vspace{-2mm}
\begin{align*}
\iota_{t\cdot \frac{\delta}{\delta \gamma_a}} \om_\CL 
&= t \cdot \delta^{a}_{\, b}\cdot  \delta(\partial_t \gamma_b)
+ \delta \gamma_a \cdot \partial_t (t \cdot \delta^{a}_{\,b})\\
&= t\cdot \delta(\partial_t \gamma_a) - \delta \gamma^a  \, ,    
\end{align*}

\vspace{-2mm}
\noindent which, however, still does \textit{not} exhibit $\delta \gamma^a$ as a Hamiltonian current. Nevertheless, the 
transgression along $t=0$ does exhibit 
\vspace{-2mm} 
$$
\gamma^a |_{t=0} \quad \in \quad C^\infty_\loc\big(\mathbold{P}(\FR^d)\big)
$$ 

\vspace{-2mm} 
\noindent
as an \textit{off-shell} Hamiltonian functional for $\Omega_{\CL,0}\in \Omega^2_\loc\big(\mathbold{P}(\FR^d)\big)$, which restricts to 
an on-shell Hamiltonian  functional for the canonical symplectic $2$-form, i.e.,
\vspace{-2mm}
\begin{align*}
\iota_{t\cdot \frac{\delta}{\delta \gamma_a}} \Omega_{\CL} = \big( t\cdot \delta(\partial_t \gamma_a) - \delta \gamma^a\big)|_{t=0}
= - \delta \gamma^a |_{t=0}\, .
\end{align*}

\vspace{-2mm} 
\noindent Consequently,  
$$\Gamma^a(0)= \gamma^a |_{t=0} \quad \in \quad C^\infty_\loc\big(\mathrm{Lines}(\FR^d)\big)
$$ 
is indeed an on-shell Hamiltonian functional for the on-shell vector field 
\vspace{-2mm} 
$$ 
\CZ_{\Gamma^a(0)}:= t\cdot \frac{\delta}{\delta \gamma^a} \, .
$$

\vspace{-2mm} 
\noindent
Analogously
$$\Gamma^a(t_0)= \gamma^a |_{t=t_0}
$$ 
is Hamiltonian for any $t_0$, with corresponding vector field
\vspace{-3mm} 
$$ 
\CZ_{\Gamma^a(t_0)}:= (t-t_0)\cdot \frac{\delta}{\delta \gamma^a} \, .
$$

\vspace{-2mm}
\noindent Consequently, the induced Poisson bracket of the two Hamiltonian functionals is given by
\vspace{-2mm}
\begin{align*}
\big\{\Gamma^a(t_0)\, ,\, \dot{\Gamma}_b(t_0) \big\}  
: &= -\iota_{\CZ_{\Gamma^a(t_0)}} \iota_{\CZ_{\dot{\Gamma}_a(t_0)}} \Omega_{\CL} \\
&=   +\iota_{\CZ_{\Gamma^a(t_0)}}  \delta(\partial_t \gamma_a) \big|_{t=t_0} 
= \delta^a_{\, b}
\end{align*}

\vspace{-2mm}
\noindent as expected by the non-covariant Poisson algebra structure. Since any local functional with higher jet dependence than $1$ vanishes 
on the space of on-shell fields, it follows that the
Poisson algebra of Hamiltonian functionals is generated by these two (families) of transgressed functionals. 
We leave it to the reader to verify that, under the Cauchy isomorphism of Ex. \ref{Ex-free-cov}, the local vector field 
$t\frac{\delta}{\delta \gamma^a}$ maps to the canonical Hamiltonian vector field $\frac{\partial}{\partial p_a}$ on $T\FR^d$, 
which exhibits $x^a$ as a Hamiltonian function, conjugate to $p_a$. This shows that, even though the notation used is mathematically ill-defined, 
the resulting statement of the position and velocity observables being Hamiltonian functionals and `conjugate' is nevertheless correct. 

\smallskip 
A similar story can be repeated in the case of the ${\rm O}(n)$-model's (e.g. scalar field) and electromagnetism's conjugate variables
viewed as covariant phase space Hamiltonian functionals, and their corresponding Hamiltonian vector fields.

\end{example}

\newpage 

\addtocontents{toc}{\protect\vspace{-10pt}}
\section{Outlook}
\label{outlook} 

In conclusion, we have shown that smooth sets constitute an excellent context for formulating classical field theory 
in its vanilla form for plain bosonic fields. In follow-up articles in 
this series \cite{GSS-2}\cite{GSS-3}, we will generalize smooth sets in various natural ways enhancing this correspondence to 

{\bf (a)} infinitesimal structure / perturbative field theory,

{\bf (b)} fermionic field theory, and 

{\bf (c)} higher gauge theory by invoking, in turn, synthetic differential supergeometry and higher topos theory.

\medskip 
\noindent We briefly outline some of the basic ideas (cf. 
\cite{FSS14}\cite{Alfonsi}\cite{Schreiber24}\cite{GSS-3}):

\medskip 
\noindent
\begin{itemize}[leftmargin=.4cm]
\item 
{\bf Synthetic geometry and infinitesimal structure.}
A remarkable (if maybe underappreciated) fact of differential geometry says that passage from smooth manifolds to their ordinary 
real algebras of smooth functions is a full embedding into the opposite category $\mathrm{CAlg}_{{}_{\FR}}$ of commutative real algebras 
(``Milnor's exercise'', cf. \cite[\S 35.8-10]{KMS}):
\vspace{-3mm} 
\begin{equation}
  \label{EmbeddingOfSmoothManifoldsIntoCAlgOp}
  \begin{tikzcd}[
    row sep=-3pt,
    column sep=20pt
  ]
    \SmoothManifolds
    \ar[
      rr,
      hook,
      "{
        C^\infty(-)
      }"
    ]
    &&
    \mathrm{CAlg}_{{}_{\FR}}^{\mathrm{op}}
    \\
    M 
    \ar[
      rr,
      phantom,
      "{
        \longmapsto
      }"
    ]
    &&
    C^\infty(M)
  \end{tikzcd}
\end{equation}

\vspace{-2mm} 
\noindent This means that smooth manifolds behave more like affine schemes in algebraic geometry than the usual definitions may suggest, 
thus providing access to analogous constructions. In particular, we may regard any full subcategory of $\mathrm{CAlg}_{{}_{\FR}}$
larger than the image of \eqref{EmbeddingOfSmoothManifoldsIntoCAlgOp} as {\it being} a category of ``generalized manifolds'' which 
do not exist as point-set models but (only) through their would-be algebras of functions.

\medskip 
Concretely, since the algebra of smooth functions on the $r$-th order {\it infinitesimal neighborhood} of the origin in $\FR^m$ 
should be the real polynomials in $m$ variables modulo the relation that any $r+1$st power of them vanishes, we may take this to 
be the dual definition of the {\it infinitesimal halo} $\DD^m_r \hookrightarrow \FR^m$ around the origin and take the category 
of {\it infinitesimally thickened} Cartesian spaces to be the full subcategory on the objects of the following form:
\vspace{-2mm} 
\[
  \begin{tikzcd}[
    row sep=-4pt
  ]
    \ThickenedCartesianSpaces
    \ar[
      rr,
      hook,
      "{
        C^\infty(-)
      }"
    ]
    &&
    \mathrm{CAlg}_{{}_{\FR}}^{\mathrm{op}}
    \\
    \FR^k
    \times
    \DD^m_r
    \ar[
      rr,
      phantom,
      "{ \longmapsto }"
    ]
    &&
    C^\infty(\FR^k)
    \otimes_{{}_{\FR}}
    \FR[\epsilon_1, ... \epsilon_m]
    /
    (\epsilon^{r+1})
    \,.
  \end{tikzcd}
\]

\vspace{-3mm} 
\noindent This category becomes a site by declaring the open covers to be of the form $\big\{ U_i \times \DD^m_r \xhookrightarrow{ \iota_i \times \mathrm{id} } \FR^k \times \DD^m_r \big\}_{i \in I}$ for the $\iota_i$ constituing a (differentiably) good open cover of ordinary Cartesian spaces as before. The resulting sheaf
topos of {\it (infinitesimally) thickened smooth sets}
\[
  \FormalSmoothSets
  \;:=\;
  \mathrm{Sh}(\ThickenedCartesianSpaces)
\]
is known as the {\it Cahiers topos}, a ``well-adapted model'' for ``synthetic differential geometry''. Here smooth manifolds (and generally smooth sets) 
co-exist with actual infinitesimal objects such as the first-order infinitesimal interval $\DD^1_1$. Note that the latter contains just a single point
\vspace{-2mm} 
\[
  \ast \simeq \FR^0
  \xhookrightarrow[  \exists ! ]{\quad \iota \quad}
  \DD^1_1
\]

\vspace{-1mm} 
\noindent and yet is larger than that point (receives more plots, namely from other infinitesimal objects).

For instance, the tangent bundle $TM \xrightarrow{p} M$ of a smooth manifold $M$ (itself a smooth manifold) appears now literally as the space 
of images of the infinitesimal interval in $M$, in that 
\vspace{-3mm} 
$$
  \begin{tikzcd}[row sep=12pt, column sep=5pt]
    {[\DD^1_1, M]}
    \ar[r, phantom, "{ \simeq }"]
    \ar[
      d,
      "{
        [\iota, \DD^1_1] \;\;
      }"{swap}
    ]
    &
    T M
    \ar[d, "{ p }"]
    \\
    {[\ast, M]}
    \ar[r, phantom, "{ \simeq }"]
    &
    M
    \mathrlap{\,.}
  \end{tikzcd}
$$

\vspace{-2mm} 
\noindent 
Proceeding along these lines, one finds a convenient and powerful realization of plenty of field-theoretic concepts in $\FormalSmoothSets$ (cf. \cite{KS17}).
In particular, \textit{every} tangent bundle appearing in this first part of the series is in fact recovered as an example of the synthetic 
tangent bundle construction, with the target being instead the field space $\CF$, a subspace thereof, or the infinite jet bundle $J^\infty_M F$ 
(viewed as thickened smooth sets). Similarly, the set of infinite jets of the field bundle $F\rightarrow M$ arises as sections over infinitesimal
neighborhoods $\DD_p \hookrightarrow M$ of the base and, in fact, 
the infinite jet bundle as a (thickened) smooth set arises via a natural categorical construction. Moreover, it should be the case that the perturbation theory of a local field theory $(\CF,\CL)$ around a point $\phi_0 \in \CF$ in field space may be rigorously defined and described as the \textit{actual} restriction of the Lagrangian to the infinitesimal neighborhood   $\DD_{\phi_0} \hookrightarrow \CF$ of the chosen field.
\\


\newpage 
\vspace{-2mm} 
\item
{\bf Supergeometry and fermionic fields.}
In this vein, also fermionic supergeometry finds its natural home simply by further enlarging the ambient category of commutative algebras \eqref{EmbeddingOfSmoothManifoldsIntoCAlgOp} to that of $\mathbb{Z}/2$-graded commutative algebras, to be denoted $\mathrm{sCAlg}_{{}_{\FR}}$.
Here we want the ``smooth functions'' on a purely odd super-space $\FR^{0\vert q}$ to be the Grassmann-algebra $\wedge^\bullet\big( (\FR^q)^\ast \big)$ 
on $q$ generators, and hence we may define {\it super-Cartesian spaces} to be those which constitute the following full subcategory:
\vspace{-2mm} 
\[
  \begin{tikzcd}[row sep=-4pt]
    \SuperCartesianSpaces
    \ar[
      rr,
      "{ C^\infty(-) }"
    ]
    &&
    \mathrm{sCAlg}_{{}_{\FR}}^{\mathrm{op}}
    \\
    \FR^{k \vert q}
    \times
    \DD^m_r
    \ar[
      rr,
      phantom,
      "{ \longmapsto }"
    ]
    &&
    C^\infty(\FR^k)
    \otimes
    \wedge^\bullet\big(
      (\FR^q)^\ast
    \big)
    \otimes
    \FR[\epsilon_1,...\epsilon_m]
    /
    (
     \epsilon^{r+1}
    )
    \,.
  \end{tikzcd}
\]

\vspace{-1mm} 
\noindent With the open covers of such super-Cartesian spaces defined in the same 
manner as before, we obtain the topos of  {\it super smooth sets}
\vspace{-2mm} 
\[
  \SuperSmoothSets
  \;:=\;
  \mathrm{Sh}\big(
    \SuperCartesianSpaces
  \big)
  \,.
\]
This topos is the home of classical fermionic fields (cf. \cite{KonechnySchwarz98}\cite{Sachse08}).  In particular, it is the internal hom,
mapping space construction that encodes fermionic field spaces as used in the physics literature. Such spaces often have \textit{no underlying points} at all, and the non-trivial information is instead encoded via higher $\FR^{0|q}$-plots. 
\\

\item
{\bf Higher geometry and gauge fields.}
On the other hand, the characteristic property of (gauge) bosons is that their field spaces have a relaxed notion of {\it equality}, 
in that two (plots/families of) gauge fields
\vspace{-2mm} 
\[
  \Phi,
  \Phi'
  \,:\,
  \FR^{k}
  \xrightarrow{\phantom{---}}
  \mathrm{Fields}
\]

\vspace{-2mm} 
\noindent 
may be nominally distinct and yet identified via gauge transformations $\begin{tikzcd}\!\!\! \Phi \ar[r, "{g}", "{\sim}"{swap}] & \Phi'
\!\!\!\end{tikzcd}$,    
that are invertible and may be associatively composed:

\vspace{-4mm} 
\begin{equation*}
  \label{CompositionOfGaugeTransformations}
  \begin{tikzcd}[row sep=-50pt, column sep=large]
    \FR^{k}
    \ar[
      rr,
      bend left=60,
      "{ \Phi }"{description, name=s}
    ]
    \ar[
      rr,
      bend right=60,
      "{ \Phi'' }"{description, name=t}
    ]
    \ar[
      from=s,
      to=t,
      bend left=20,
      Rightarrow,thick, color=greenii,
      "{ h \circ g }"{sloped, swap, pos=.76},
      "{ \sim }"{sloped, pos=.2},
    ]
    \ar[
      rr,
      crossing over,
      "{ 
        \Phi'
      }"{description, name=i, pos=.27}
    ]
    \ar[
      from=s,
      to=i,
      shorten=2pt,
      Rightarrow, thick, color=greenii,
      "{ g }"{swap, pos=.6},
      "{ \sim }"{sloped, pos=.4, yshift=-.5pt, swap}
    ]
    \ar[
      from=i,
      to=t,
      shorten=2pt,
      Rightarrow, thick, color=greenii,
      "{ h }"{swap, pos=.2},
      "{ \sim }"{sloped, swap, pos=.35, yshift=-.5pt, swap},
    ]
    &&
    \mathrm{Fields}
    \mathrlap{\,.}
  \end{tikzcd}
\end{equation*}
\vspace{-2mm} 

\noindent 
This means that the plots of gauge field spaces no longer form plain sets, but {\it groupoids} (exposition of groupoids includes \cite{Weinstein96}).

\smallskip

Furthermore, for {\it higher} gauge fields, two such gauge transformations, in turn, may be nominally distinct and yet identified by ``gauge-of-gauge transformations''
\vspace{-2mm} 
\begin{equation*}
  \begin{tikzcd}[scale=1.5]
    \FR^{k}
    \ar[
      rr,
      bend left=40,
      "{ \Phi }"{description, name=s}
    ]
    \ar[
      rr,
      bend right=40,
      "{ \Phi' }"{description, name=t}
    ]
    \ar[
      from=s,
      to=t,
      shorten=2pt,
      bend left=50,
      Rightarrow, thick, color=greenii,
      "{ \sim }"{sloped, yshift=-.5pt},
      "{\ }"{name=tt, swap}
    ]
    \ar[
      from=s,
      to=t,
      shorten=2pt,
      bend right=50,
      Rightarrow, thick, color=greenii,
      "{ \sim }"{sloped, yshift=.5pt},
      "{\ }"{name=ss}
    ]
    \ar[
      from=ss,
      to=tt,
      phantom, thick, color=orangeii,
      "{
        \Rightarrow
      }"{scale=1.5},
      "{ \sim }"{yshift=4.5pt, scale=.7, pos=.2}
    ]
    \ar[
      rr,
      -,
      color=orangeii,
      shorten <=20pt,
      shorten >=16pt,
      line width=.5pt,
      shift left=1.2pt
    ]
    &&
    \mathrm{Fields}
  \end{tikzcd}
\end{equation*}

\vspace{-2mm} 
\noindent
satisfying a 2-dimensional analog of composition and associativity, as schematically indicated here:
\vspace{-3mm} 
\begin{equation*}
  \label{ThreeSimplex}
  \begin{tikzcd}[
    row sep=25pt,
    column sep=23
  ]
    \mathllap{
      \scalebox{.7}{
        \color{gray}
        $(\Delta^0)$
      }
    }
    \Phi
    &&
    \phantom{\Phi'}
    \\[-25pt]
    \mathllap{
      \scalebox{.7}{
        \color{gray}
        $(\Delta^1)$
      }
    }
    \Phi
    \ar[
      rr,
      "{ g }"
    ]
    &&
    \Phi'
    \\[-25pt]
    &
    |[alias=two]|
    \Phi'
    \ar[
      dr,
      "{ h }"
    ]
    \\
    \Phi
    \ar[
      ur,
      "{ 
        \mathllap{
          \scalebox{.9}{
            \color{gray}
            $(\Delta^2)$
            \;\;\;\,
          }
        }
        g 
      }"
    ]
    \ar[
      rr,
      "{ h \circ g }"{swap},
      "{\ }"{name=onethree}
    ]
    &&
    \Phi''
    \ar[
      from=two,
      to=onethree,
      Rightarrow, thick, color=greenii,
      shorten <=5pt,
      "{
        \mu(g,h)
      }"{description, pos=.6}
    ]
  \end{tikzcd}
  \begin{tikzcd}[column sep=24pt]
    &[-10pt]&
    &[-50pt] 
    |[alias=two]|
    \Phi'
    \ar[
      ddr,
      "{\ }"{name=twofour, swap}
    ]
    &[-5pt]
    \\
    \\
    |[alias=one]|
    \Phi
    \ar[
      rrrr,
      "{\ }"{name=onefour}
    ]
    \ar[
      drr,
      "{\  }"{name=onethree}
    ]
    \ar[
      uurrr,
      "{
        \mathllap{
          \scalebox{.9}{
            \color{gray}
            $(\Delta^3)$
          }
        }
      }"{yshift=15pt}
    ]
    &&&&
    |[alias=four]|
    \Phi'''
    \mathrlap{\,,}
    \\[-3pt]
    &&
    |[alias=three]|
    \Phi''
    \ar[
      urr,
      "{\ }"{name=one}
    ]
    \ar[
      from=two, 
      to=onefour,
      Rightarrow, thick, color=greenii,
      crossing over,
      shorten=5pt
    ]
    \ar[
      from=two, 
      to=onethree,
      Rightarrow, thick, color=greenii,
      shorten <=10pt,
      shorten >=1pt,
      crossing over
    ]
    \ar[
      from=three, 
      to=onefour,
      Rightarrow, thick, color=greenii,
      crossing over,
      shorten=5pt
    ]
    \ar[
      from=two,
      to=three,
      crossing over,
      "{\ }"{name=twothree}
    ]
    \ar[
      from=three, 
      to=twofour,
      shorten <= 10pt,
      shorten >= 2pt,
      Rightarrow, thick, color=greenii,
      crossing over,
    ]
  \end{tikzcd}
  \hspace{-3pt}
\end{equation*}

\vspace{-3mm} 
\noindent and so on to ever higher order gauge transformations, now making the plots form {\it higher groupoids}, which may be thought of as Kan simplicial sets, for exposition see \cite[\S 7]{Friedman12}).

Hence the plots of a higher gauge field spaces form generally not a plain set but a Kan simplicial set, organizing generally not into a plain presheaf but into a (Kan-){\it simplicial presheaf} (cf. \cite{Jardine87}).
$$
  \mathrm{PSh}\big(
    \SuperCartesianSpaces
    ,\,
    \mathrm{sSet}_{\mathrm{Kan}}
  \big)
  \;=\;
  \Bigg\{\!\!\!
  \begin{tikzcd}[
    row sep=-4pt
  ]
    \SuperCartesianSpaces^{\mathrm{op}}
    \ar[
      rr,
      "{
        \CF
      }"
    ]
    &&
    \mathrm{sSet}_{\mathrm{Kan}}
    \\
    \FR^{k}
    \ar[
      rr,
      phantom,
      "{ \longmapsto }"
    ]
    &&
    \mathrm{Plots}(-,\mathcal{G})
  \end{tikzcd}
\!\!\!  \Bigg\}.
$$
For example, the moduli classifying stack of $G$-Yang-Mills fields, cf. \eqref{CharPolynomials}, 
has groupoids of plots given as follows (\cite[Ex. 2.2.4]{FSS13}\cite[\S 2.4]{FSS14}\cite[Ex. 2.11]{BSS18}):
\begin{equation}
  \label{BGConn}
  \def\arraystretch{1.6}
  \begin{array}{l}
      \mbox{$\mathbold{Plots}$}\big(
        \FR^{k}
        ,
        \mbox{$\mathbold{B}$}G_{\mathrm{conn}}
      \big)_2
    \;:=\;
    \left\{\!\!\!\!\!
    \adjustbox{raise=4pt}{
    \def\arraystretch{1.4}
    \begin{tikzcd}[
      column sep=15pt
    ]
      &
      |[alias=two]|
      A_1
      \ar[
        dr,
        "{ g_2 }"
      ]
      \\
      A_0
      \ar[
        rr,
        "{
          g_2
          \cdot
          g_1
        }"{swap},
        "{\ }"{name=onethree}
      ]
      \ar[
        ur,
        "{
          g_1
        }"
      ]
      &&
      A_2
      \ar[
        from=two,
        to=onethree,
        shorten <=4pt,
        Rightarrow, thick, color=greenii,
        "{ \exists! }"
      ]
    \end{tikzcd}
    }
    \,\middle\vert\,
    \def\arraystretch{1.2}
    \begin{array}{l}
      A_i \in
      \Omega^1(\FR^{k})
      \otimes \mathfrak{g},
      \\
      g_i 
      \in
      C^\infty(\FR^{k}, G)
      \\
      A_i
      =
      g_i A_{i-1} g_i^{-1}
      +
      g_i 
      \mathrm{d}
      g_i^{-1}
    \end{array}
  \!\!\!  \right\}.
  \end{array}
\end{equation}
Here a plot of $\mathbf{B}G_{\mathrm{conn}}$ is a $G$-gauge field configuration on the probe chart $\FR^{n}$, given by a $\mathfrak{g} = \mathrm{Lie}(G)$-valued 
1-form $A$ (a gauge potential) and an equivalence of plots is a smooth $G$-valued function $g$ on the probe chart relating two such 1-forms by the usual 
gauge transformations \eqref{GaugeTransformations}. As hinted in Rem. \ref{OnGaugeEquivalencyandRedundancy}, the \textit{set} of $G$-gauge fields on a 
spacetime manifold $M$ is then given by maps $M\rightarrow \mathbf{B}G_{\mathrm{conn}}$, with the internal hom object $\big[M,\mathbf{B}G_{\mathrm{conn}}\big]$
encoding the gauge equivalency transformations between any two gauge fields via its $\Delta^1$-plots.

\smallskip 
As before, this means that the naive formulas known from physics apply {\it on probes} and the (now: higher) topos theory takes care of producing 
global structures from these probes. We will further explain how this works in \cite{GSS-3}.

\medskip 
Generally, we emphasize that even without considering gauge fields and/or gauge symmetries, higher structures are implicit throughout the treatment 
of a local field theory. In particular, the finite (and infinitesimal) symmetries of a Lagrangian field theory (Def. \ref{FiniteSymmetryofLagrangianFieldTheory}) 
can be interpreted as saying that the
Lagrangian is preserved \textit{up to homotopy}, in a precise sense. Similarly, a careful discussion of the corresponding conserved and Hamiltonian
currents shows that higher structures in the guise of infinitesimal versions of $\infty$-groupoids, i.e., $L_\infty$-algebras, are secretly 
dwelling within local classical field theories (Rem. \ref{HigherPoissonAlgebrasOfLocalObservables}). 
\end{itemize}

\vspace{1cm} 
\noindent {\bf Acknowledgements.} 
 We thank Urs Schreiber for detailed and helpful discussions 
on various aspects of this paper. We also thank Lukas M\"uller and Dmitri Pavlov for useful comments on an earlier draft. 

\vspace{1cm} 
\noindent {\bf Data availability.} 
No additional research data beyond the data included and cited in this work
are needed to validate the research findings presented.

\newpage 


\end{document}